\colorlet{shadecolor}{yellow!20}
 \newtheorem{theorem}{Theorem}
 \newtheorem{proposition}{Proposition}
 \newtheorem{lemma}{Lemma}
 \newtheorem{definition}{Definition}
 \newtheorem{corollary}{Corollary}
\begin{document}

\title{Last-use Opacity: A Strong Safety Property for Transactional Memory~with Early Release Support}

\author{Konrad Siek,
Pawe\l{} T. Wojciechowski\\
Institute of Computing Science\\
Pozna\'n University of Technology
}

\maketitle
\begin{abstract}

\emph{Transaction Memory} (TM) is a concurrency control abstraction that allows
the programmer to specify blocks of code to be executed atomically as
transactions. However, since transactional code can contain just about any
operation attention must be paid to the state of shared variables at any given
time. E.g., contrary to a database transaction, if a TM transaction reads a
stale value it may execute dangerous operations, like attempt to divide by
zero, access an illegal memory address, or enter an infinite loop. Thus
serializability is insufficient, and stronger safety properties are required in
TM, which regulate what values can be read, even by transactions that abort.
Hence, a number of TM safety properties were developed, including opacity, and
TMS1 and TMS2. However, such strong properties preclude using early release as
a technique for optimizing TM, because they virtually forbid reading from live
transactions. On the other hand, properties that do allow early release are
either not strong enough to prevent any of the problems mentioned above
(recoverability), or add additional conditions on transactions with early
release that limit their applicability (elastic opacity, live opacity, virtual
world consistency). This paper introduces last-use opacity, a new TM safety
property that is meant to be a compromise between strong properties like
opacity and serializability. The property eliminates all but a small class of
inconsistent views and poses no stringent conditions on transactions.  For
illustration, we present a last-use opaque TM algorithm and show that it
satisfies the new safety property.

\end{abstract}

\section{Introduction}

Writing concurrent programs using the low-level synchronization primitives is
notoriously difficult and error-prone. Over the past decade, there has been a
growing interest in alternatives to lock-based synchronization by turning to the
idea of software \emph{transactional memory} (TM) \cite{HME93,ST95}.
Basically, TM transplants the transaction abstraction from database systems and
uses it to hide the details of synchronization. In particular, TM uses
speculative execution to ensure that transactions in danger of reading
inconsistent state abort and retry. This is a fairly universal solution and
means that the programmer must only specify where transactions begin and end,
and TM manages the execution so that the transactional code executes correctly
and efficiently. Thus, the programmer avoids having to solve the problem of
synchronization herself, and can rely on any one of a plethora of TM systems
(e.g., \cite{DSS06,HF03,HLMS03,HMJH04,NWA+08,RG05}).

Since TM allows transactional code to be mixed with non-transactional code and
to contain virtually any operation, rather than just reads and writes like in
its database predecessors, greater attention must be paid to the state of
shared variables at any given time. For instance, if a database transaction
reads a stale value, it must simply abort and retry, and no harm is done.
Whereas, if a TM transaction reads a stale value it may execute an
unanticipated dangerous operation, like dividing by zero, accessing an illegal
memory address, or entering an infinite loop. 
Thus, TM systems must restrict the ability of transactions to view
inconsistent state.

To that end, the safety property called opacity \cite{GK08,GK10} was
introduced, which includes the condition that transactions do not read values
written by other live (not completed) transactions alongside serializability
\cite{Pap79} and real-time order conditions.
Opacity became the gold standard of TM safety properties, and most TM systems
found in the literature are, in fact, opaque.
However opacity precludes \emph{early release}, an important programming
technique, where two transactions technically conflict but nevertheless both
commit correctly, and still produce a history that is intuitively correct.
Systems employing early release (e.g.
\cite{HLMS03,RRHW09,FGG09,BMT10,SW14-hlpp}) show that this yields a significant
and worthwhile performance benefit.
This is particularly (but not exclusively) true with pessimistic concurrency
control, where early release is vital to increased parallelism between
transactions, and therefore essential for achieving high efficiency in
applications with high contention.

Since opacity is a very restrictive property, a number of more relaxed
properties were introduced that tweaked opacity's various aspects to achieve a
more practical property. These properties include virtual world consistency
(VWC) \cite{IMR08}, transactional memory specification (TMS1 and TMS2)
\cite{DGLM13}, elastic opacity \cite{FGG09}, and live opacity \cite{DFK14}.
The first contribution of this paper is to examine these properties and
determine whether or not they allow the use of early release in TM, and, if so,
what compromises they make with respect to consistency, and what additional
assumptions they require.
We then consider the applicability of these properties to TM systems that rely
on early release. 
In addition to TM properties, we similarly examine common database
consistency conditions: serializability \cite{Pap79}, recoverability
\cite{Had88}, avoiding cascading aborts (ACA) \cite{BHG87} , strictness
\cite{BHG87}, and rigorousness \cite{BGRS91}.

The second contribution of this paper is to introduce a new TM safety property
called last-use opacity that allows early release without requiring stringent
assumptions but nevertheless eliminates inconsistent views or restricts them to
a manageable minimum. 
We give a formal definition, discuss example last-use opaque histories, and
compare the new property with existing TM properties, specifically showing that
it is stronger than serializability but weaker than opacity. 
We also describe the guarantees given by last-use opacity and consider the
applicability of the property in system models that either allow, deny, or
restrict the explicit programmatic abort operation.
Whereas, last-use opacity eliminates inconsistent views in system models
that forbid explicitly aborting transactions or restricts this to particular
scenarios, we show that allowing free use of explicit aborts can lead to
inconsistent views in last-use opaque histories. Thus, we also introduce a
stronger variant of the property called $\beta$--last-use opacity that
precludes them.

Finally, we give SVA \cite{SW15-atomicrmi}, a TM concurrency control algorithm
with early release and demonstrate that it satisfies last-use opacity.

The paper is structured as follows. We present the definitions of basic terms
in \rsec{sec:preliminaries}. We follow by an examination of the TM property
space in \rsec{sec:properties}. Next, we define and discuss last-use opacity
in \rsec{sec:lopacity}.  
Then, we present \SVA{} and demonstrate its correctness in
\rsec{sec:lopaque-sva}.
Finally, we present the related work in \rsec{sec:rw} and
conclude in \rsec{sec:conclusions}.
We also include an appendix containing additional proofs.
      
\section{Preliminaries}
\label{sec:preliminaries}

Before discussing properties and their relation to early release, let us
provide definitions of the relevant ancillary concepts.  

{
Let $\processes = \{\proc_1, \proc_2, ..., \proc_n\}$ be a set of processes.
Then, let program $\prog$ be defined as a set of subprograms $\prog = \{
\subprog_1, \subprog_2, ..., \subprog_n \}$ 
such that for each process $\proc_k$ in $\processes$
there is exactly one corresponding subprogram $\subprog_k$ in $\prog$ and \emph{vice versa}.   
Each subprogram $\subprog_k \in \prog$ is a finite sequence of statements in
some language $\lang$.
The definition of $\lang$ can be whatsoever, 
as long as it
provides constructs to execute 
operations on shared variables 
in accordance with the interface and assumptions described further in this section.
In particular, $\lang$
can allow local computations whose effects
are not visible outside a single processes.

Given program $\prog$ and a set of processes $\processes$, we denote an
execution of $\prog$ by $\processes$ as $\exec{\prog}{\processes}$.
An execution entails each process $\proc_k \in \processes$ evaluating some
prefix of subprogram $\subprog_k \in \prog$.
The evaluation of each statement by a process is deterministic
and follows the semantics of $\lang$. 
$\exec{\prog}{\processes}$ is concurrent, 
i.e. while the statements in subprogram $\subprog_k$ are evaluated sequentially by
a single process, the evaluation of statements by different processes can be
arbitrarily interleaved.
We call $\exec{\prog}{\processes}$ a \emph{complete} execution if each process
$\proc_k$ in $\processes$ evaluates all of the statements in $\subprog_k$.
Otherwise, we call $\exec{\prog}{\processes}$ a \emph{partial} execution.
}

\paragraph{Variables}
Let $\objects$ be a set of \emph{shared variables} (or \emph{variables}, in short).
Each variable, denoted as $\objx, \objy, \objz$ etc.,
supports the following \emph{operations}, denoted $o$, that allow to retrieve or modify its state:
\begin{enumerate}[a) ]
     \item \emph{write} operation $\wop{\obj}{\val}$ 
     that sets the
     state of $\obj$ to value $\val$; the operation's \emph{return value} is
     the constant $\ok$,
     \item \emph{read} operation $\rop{\obj}{}$ whose \emph{return value} 
     is the current state of $\obj$.
\end{enumerate}
In order to execute some operation $o$ on variable $\obj$, process $\proc_k$
issues an \emph{invocation event} %
denoted
 $\inv{}{k}{o}$, and receives a
\emph{response event} %
denoted $\res{}{k}{\valu}$, where
$\valu$ is the return value of $o$.  %
The pair of these events is called a
\emph{complete operation execution} and it is denoted 
$o^k \rightarrow \valu$, 
whereas an invocation event $\inv{}{k}{o}$
without the corresponding response event is called a 
\emph{pending operation execution}. 
Specifically, a complete execution of a read operation by process $\proc_k$ is denoted
$\pfrop{}{k}{\obj}{\val}$ and a complete execution of a write operation is denoted
$\pfwop{}{k}{\obj}{\val}{\ok}$.
We refer to complete and pending 
operation executions as \emph{operation executions}, denoted by $op$.

Each event is atomic and instantaneous, but the execution of the entire
operation composed of two events is not. 

\paragraph{Transactions}
\emph{Transactional memory (TM)} is a programming paradigm that uses transactions to
control concurrent execution of operations on shared variables by parallel
processes. 
A \emph{transaction} $\tr_i \in \transactions$ is some piece of code executed
by process $\proc_k$,
as part of subprogram $\subprog_k$.
Hence, we say that $\proc_k$ executes $\tr_i$. 
Process $\proc_k$ can execute
local computations as well as operations on shared variables as part of the
transaction. In particular, the processes can execute the following operations
as part of transaction $\tr_i$: 
\begin{enumerate}[a) ]
     \item $\init_i$ which initializes transaction $\tr_i$, and
     whose return value is the constant $\ok_i$,
     \item $\twop{i}{\obj}{\val}$ and $\trop{i}{\obj}{}$ 
     which respectively write a value $\val$ to variable $\obj$ 
     and read $\obj$ within transaction $\tr_i$, and return either 
     the operation's return value or the constant $\ab_i$,
     \item $\tryC_i$ which attempts to
     commit $\tr_i$ and returns either the constant $\co_i$ or the constant
     $\ab_i$.
\end{enumerate}
There is also another operation allowed in some TM systems and not in
others, and we wish to discuss it separately. Namely, some TMs allow for a
transaction to programmatically roll back by executing the operation:
\begin{enumerate}[a) ]
    \setcounter{enumi}{3}
    \item $\tryA_i$ which aborts $\tr_i$ and returns $\ab_i$.
\end{enumerate}
The constant $\ab_i$ indicates that transaction $\tr_i$ has been aborted, as
opposed to the constant $\co_i$ which signifies a successful commitment of the
transaction.

By analogy to processes executing operations on %
 variables, if process
$\proc_k$ executes some operation as part of transaction $\tr_i$ it issues an
invocation event of the form $\inv{i}{k}{\init_i}$,
$\inv{i}{k}{o}$ %
for some $\obj$, or $\inv{i}{k}{\tryC_i}$, (or
possibly $\inv{i}{k}{\tryA_i}$) and receives a response of the form
$\res{i}{k}{\valu_i}$, where $\valu_i$ is a value, or 
the constant $\ok_i$, $\co_i$, or $\ab_i$.  
The superscript always denotes which process executes the operation, and the
subscript denotes of which transaction the operation is a part.
We denote operation executions by process $\proc_k$ within transaction
$\tr_i$ as:
\begin{enumerate}[a) ]
    \item $\init_i^k \rightarrow \ok_i$,
    \item $\pfrop{i}{k}{\obj}{\val}\,$ or $\,\pfrop{i}{k}{\obj}{\ab_i}$,
    \item $\pfwop{i}{k}{\obj}{\val}{\ok_i}\,$ or $\,\pfwop{i}{k}{\obj}{\val}{\ab_i}$,
    \item $\tryC_i^k \rightarrow \co_i\,$ or $\,\tryC_i^k \rightarrow \ab_i$.
    \item $\tryA_i^k \rightarrow \ab_i$.
\end{enumerate}

TM assumes that processes execute operations on shared variables only as part
of a transaction. Furthermore, we assume that any
transaction $\tr_i$ is executed by exactly one process $\proc_k$ and that each
process executes transactions sequentially. 

Even though transactions are subprograms evaluated by processes, it is 
convenient to talk about them as separate and independent entities. 
Thus, rather than saying $\proc_k$ executes some operation as part of 
transaction $\tr_i$, we will simply say that $\tr_i$ executes (or performs) 
some operation. 
Hence we will also forgo the distinction of processes in transactional 
operation executions, and write simply:
    $\init_i \rightarrow \ok_i$,
    $\frop{i}{\obj}{\val}$, %
    $\fwop{i}{\obj}{\val}{\ok_i}$, %
    $\tryC_i \rightarrow \co_i$, etc. %
By analogy, we also drop the superscript indicating processes in the notation
of invocation and response events, unless the distinction is needed.

\paragraph{Sequential Specification}

Given variable $\obj$, let \emph{sequential specification} of $\obj$, denoted
$\mathit{Seq}(\obj)$, be a prefix-closed set of sequences containing %
invocation events and response events which specify the semantics of
shared variables.
(A set $Q$ of sequences is
\emph{prefix-closed} if, whenever a sequence $S$ is in $Q$, every prefix of $S$ is
also in $Q$.)
Intuitively, a sequential specification enumerates all possible correct
sequences of operations that can be performed on a variable in a sequential
execution. 
Specifically, 
given $D$, the domain of variable $\obj$, and $\val_0 \in D$, an
initial state of $\obj$, we denote by $\mathit{Seq}(\obj)$ 
the sequential specification of $\obj$ s.t., $\mathit{Seq}(\obj)$ is a set of
sequences of the form $[\alpha_1 \rightarrow \val_1, \alpha_2 \rightarrow
\val_2, ..., \alpha_m \rightarrow \val_m]$, where each $\alpha_j \rightarrow
\val_j$ ($j=1..m$) is either:
\begin{enumerate}[a) ]
    \item $\fwop{i}{\objx}{\val_j}{\ok_i}$, where $\val_j \in D$,
    or 
    \item $\frop{i}{\objx}{\val_j}$, and either the most recent
    preceding write operation is $\fwop{l}{\objx}{\val_j}{\ok_l}$ ($l<i$) or there are no preceding
    writes and $\val_j = \val_0$.
\end{enumerate}

From this point on we assume that the domain $D$  of all transactional
variables is the set of natural numbers $\mathbb{N}_0$ and that the initial
value $\val_0$ of each variable is $0$.

{
Even though we describe the interface and sequential specification of variables
to represent the behavior of registers, we do so out of convenience and our
conclusions can be trivially extended to other types of objects e.g. compare
and swap objects or stacks.
}

{

\paragraph{Histories}
A TM \emph{history} $\hist$ is a sequence of invocation and response events
issued by the execution of
 transactions $\transactions_\hist = \{ \tr_1, \tr_2, ..., \tr_t \} $.
The occurrence and order of events in $\hist$ is dictated by a given (possibly
partial) execution of some program $\prog$ by processes $\processes$. 
We denote by $\hist \models \exec{\prog}{\processes}$ that history $\hist$ is
produced by $\exec{\prog}{\processes}$.
Note, that different interleavings of processes in $\exec{\prog}{\processes}$
can produce different histories.
A \emph{subhistory} of a history H is a subsequence of H. 
}

The sequence of events in a history $\hist_j$ can be denoted as $\hist_j = [
e_1, e_2, ..., e_m ]$. 
For instance, some history $\hist_1$ below is a history of a run of some
program that executes transactions $\tr_1$ and $\tr_2$:
\begin{equation*}
\begin{split}
\hist_1 = [~&
              \inv{1}{}{\init_1}, \res{1}{}{\ok_1},
              \inv{2}{}{\init_2}, \res{2}{}{\ok_2},  \\
            & \inv{1}{}{\twop{1}{\obj}{\val}},
              \inv{2}{}{\trop{2}{\obj}{}}, 
              \res{1}{}{\ok_1}, 
              \res{2}{}{\val},                       \\
            & \inv{1}{}{\tryC_1},\res{1}{}{\co_1},
              \inv{2}{}{\tryC_2},\res{2}{}{\co_2}    ~].
\end{split}
\end{equation*}

Given any history $\hist$, let $\hist|\tr_i$ be the longest 
subhistory of $\hist$ consisting only of invocations and responses executed 
by transaction $\tr_i$.
For example,  $\hist_1|\tr_2$ is defined as:
\begin{equation*}
\begin{split}
\hist_1|\tr_2 = [~ &
                     \inv{2}{}{\init_2}, \res{2}{}{\ok_2},  
                     \inv{2}{}{\trop{2}{\obj}{}}, \res{2}{}{\val},                       
                    \inv{2}{}{\tryC_2},\res{2}{}{\co_2}    ~].
\end{split}
\end{equation*}
We say transaction $\tr_i$ \emph{is in} $\hist$, which we denote $\tr_i \in \hist$, if $\hist|\tr_i \neq \varnothing$.
Let $\hist|\proc_k$ be the longest subhistory of $\hist$ consisting only of
invocations and responses executed by process $\proc_k$.
Let $\hist|\obj$ be the longest subhistory of $\hist$ consisting only of
invocations and responses executed on variable $\obj$, but only those that 
form complete operation executions.

Given complete operation execution $op$ that consists of an invocation
event $e'$ and a response event $e''$, we say $op$ \emph{is in} $\hist$ 
($\op \in \hist$) if $e' \in \hist$ and $e'' \in \hist$.
Given a pending operation execution $op$ consisting of an invocation $e'$, 
we say $op$ \emph{is in} $\hist$ ($\op \in \hist$) if $e' \in \hist$ and 
there is no other operation execution $op'$ consisting of an invocation 
event $e'$ and a response event $e''$ s.t. $\op' \in \hist$.      

Given two complete operation executions $\op'$ and $\op''$ in some history 
$\hist$, where $\op'$ contains the response event $\mathit{res}'$ and 
$\op''$ contains the invocation event $\mathit{inv}''$, 
we say $\op'$ \emph{precedes} $\op''$ in $\hist$ if
$\mathit{res}'$ precedes $\mathit{inv}''$ in $\hist$.

A history whose all operation executions are complete is a \emph{complete}
history.

Most of the time it will be convenient to denote any two adjoining events in a
history that represent the invocation and response of a complete execution of
an operation as that operation execution, using the syntax $e \rightarrow e'$. 
Then, an alternative representation of $\hist_1|\tr_2$ is denoted as follows: 
$$\hist_1|\tr_2 = [~
        \init_2 \rightarrow \ok_2,~
        \frop{2}{\obj}{\val},~
        \tryC_2 \rightarrow \co_2~].$$

History $\hist$ is \emph{well-formed} if, for every transaction $\tr_i$ in
$\hist$, $\hist|\tr_i$ is an alternating sequence of invocations and responses
s.t., 
\begin{enumerate}[a) ] 
    \item $\hist|\tr_i$ starts with an invocation $\inv{i}{}{\init_i}$, 
    \item no events in  $\hist|\tr_i$ follow $\res{i}{}{\co_i}$ or $\res{i}{}{\ab_i}$,
    \item no invocation event in $\hist|\tr_i$ follows $\inv{i}{}{\tryC_i}$ or $\inv{i}{}{\tryA_i}$,
    \item for any two transactions $\tr_i$ and $\tr_j$ s.t., $\tr_i$ and
    $\tr_j$ are executed by the same process $\proc_k$, the last event of
    $\hist|\tr_i$ precedes the first event of $\hist|\tr_j$ in $\hist$ or \emph{vice
    versa}.
\end{enumerate}
In the remainder of the paper we assume that all histories are well-formed.

\paragraph{History Completion}
Given history $\hist$ and transaction $\tr_i$, $\tr_i$ is \emph{committed} if
$\hist|\tr_i$ contains operation execution $\tryC_i \rightarrow \co_i$.
Transaction $\tr_i$ is \emph{aborted} if $\hist|\tr_i$ contains response
$\res{i}{}{\ab_i}$ to any invocation. Transaction $\tr_i$ is
\emph{commit-pending} if $\hist|\tr_i$ contains invocation $\tryC_i$ but it
does not contain $\res{i}{}{\ab_i}$ nor $\res{i}{}{\co_i}$.  Finally,
$\tr_i$ is \emph{live} if it is neither committed, aborted, nor
commit-pending.

Given two histories $\hist' = [e_1', e_2', ..., e_m']$ and $\hist'' = [e_1'',
e_2'', ..., e_m'']$, we define their concatenation as $\hist' \cdot \hist'' =
[e_1', e_2', ..., e_m', e_1'', e_2'', ..., e_m'']$.
We say $P$ is a prefix of $\hist$ if $\hist = P \cdot \hist'$. 
Then, let a \emph{completion} $\mathit{Compl}(\hist)$ of history $\hist$ be any
complete history s.t., $\hist$ is a prefix of
$\mathit{Compl}(\hist)$ and for every transaction $\tr_i \in \hist$ subhistory
$\mathit{Compl}(\hist)|\tr_i$ equals one of the following:
\begin{enumerate}[a) ]
    \item $\hist|\tr_i$, if $\tr_i$ finished committing or aborting,
    \item $\hist|\tr_i \cdot [\res{i}{}{\co_i}]$, if $\tr_i$ is live and contains a
            pending $\tryC_i$,
    \item $\hist|\tr_i \cdot [\res{i}{}{\ab_i}]$, if $\tr_i$ is live and contains some 
            pending operation,
    \item $\hist|\tr_i \cdot [\tryC_i \rightarrow \ab_i]$, if $\tr_i$  is live and
            contains no pending operations.
\end{enumerate}
Note that, if all transactions in $\hist$ are committed or aborted then
$\mathit{Compl}(\hist)$ and $\hist$ are identical.

Two histories $\hist'$ and $\hist''$ are \emph{equivalent} (denoted $\hist'
\equiv \hist''$) if for every $\tr_i \in \transactions$ it is true that
$\hist'|\tr_i = \hist''|\tr_i$.
When we write $\hist'$ is equivalent to $\hist''$ we mean that $\hist'$ and $\hist''$ are equivalent.

\paragraph{Sequential and Legal Histories} 

A \emph{real-time order} $\prec_\hist$ is an order over history $\hist$ s.t., given two
transactions $\tr_i, \tr_j \in \hist$, if the last event in $\hist|\tr_i$
precedes in $\hist$ the first event of $\hist|\tr_j$, then 
$\tr_i$ \emph{precedes} $\tr_j$ in $\hist$, denoted $\tr_i \prec_\hist
\tr_j$.
We then say that two transactions  $\tr_i, \tr_j \in \hist$ are \emph{concurrent} if neither
$\tr_i \prec_\hist \tr_j$ nor $\tr_j \prec_\hist \tr_i$.
We say that history $\hist'$ \emph{preserves the real-time order} of $\hist$ if
$\prec_\hist \subseteq \prec_{\hist'}$. 
A \emph{sequential history} $S$ is a history, s.t. no two transactions in $S$
are concurrent in $S$. 
Some sequential history $S$ is a \emph{sequential extension} of $\hist$ if $S$
is equivalent to $\hist$ and $S$ preserves the real time order of $\hist$.

We analogously define a real-time order $\prec_\hist$ of operation executions over history $H$.

Let $S'$ be a sequential history that only contains committed transactions,
with the possible exception of the last transaction, which can be aborted.
We say that sequential history $S'$ is \emph{legal} if for every $\objx \in
\objects$, $S'|\objx \in \mathit{Seq}(\objx)$.

Using the definitions above allows us to formulate the central concept that
defines consistency in opacity: \emph{transaction legality}.
Intuitively, we can say a transaction is legal in a sequential history if it
only reads values of variables that were written by committed transactions or by itself.
More formally, given a sequential history $S$ and a transaction $\tr_i \in S$,
we then say that transaction $\tr_i$ is \emph{legal in} $S$ if $\vis{S}{\tr_i}$ is
legal, where $\vis{S}{\tr_i}$ is the longest subhistory $S'$ of $S$ s.t., for
every transaction $\tr_j \in S'$, either $i=j$ or $\tr_j$ is committed in $S'$
and $\tr_j\prec_S \tr_i$.

\paragraph{Unique Writes}

History $\hist$ has \emph{unique writes} if, given transactions $\tr_i$ and
$\tr_j$ (where $i\neq j$ or $i=j$), for any two write operation executions
$\fwop{i}{\obj}{\val'}{\ok_i}$ and $\fwop{j}{\obj}{\val''}{\ok_j}$ it is true
that $\val' \neq \val''$ and neither $\val' = \val_0$ nor $\val'' = \val_0$.

For the remainder of the paper we focus exclusively on histories with unique writes.
This assumption does not reduce generality, in that any history without unique
writes trivially can be transformed into a history with unique writes (for
instance, by appending a timestamp to each written value).

\paragraph{Accesses}

Given a history $\hist$ and a transaction $\tr_i$ in $\hist$, we say that $\tr_i$
\emph{reads} variable $\obj$ in $\hist$ if there exists an invocation
$\inv{i}{}{\trop{i}{\obj}{}}$ in $\hist|\tr_i$.
By analogy, we say that $\tr_i$ \emph{writes} to $\obj$ in $\hist$ if there
exists an invocation $\inv{i}{}{\twop{i}{\obj}{\val}}$ in $\hist|\tr_i$.
If $\tr_i$ reads $\obj$ or writes to $\obj$ in $\hist$, we say $\tr_i$
\emph{accesses} $\obj$ in $\hist$.
In addition, 
let $\tr_i$'s \emph{read set} be a set that contains every variable $\obj$,
s.t.  $\tr_i$ reads $\obj$. By analogy, $\tr_i$'s \emph{write set} contains
every $\obj$, s.t. $\tr_i$ writes to $\obj$. A transaction's \emph{access set},
denoted $\accesses{\tr_i}$, is the union of its read set and its write set.

Given a history $\hist$ and a pair of transactions $\tr_i, \tr_j \in \hist$, we
say $\tr_i$ and $\tr_j$ \emph{conflict} on variable $\obj$ in $\hist$ if
$\tr_i$ and $\tr_j$ are concurrent, both $\tr_i$ and $\tr_j$ access $\obj$, and
one or both of $\tr_i$ and $\tr_j$ write to $\obj$.

Given a history $\hist$ (with unique writes) and a pair of transactions $\tr_i,
\tr_j \in \hist$, we say $\tr_i$ \emph{reads from} $\tr_j$ if there is some
variable $\obj$, for which 
there is a complete operation execution $\fwop{j}{\obj}{\val}{\ok_j}$ in $\hist|\tr_j$
and another complete operation execution $\frop{i}{\obj}{\valu}$ in
$\hist|\tr_i$, s.t. $\val = \valu$.

Given any transaction $\tr_i$ in some history $\hist$ (with unique writes) any
operation execution on a variable $\obj$ within $\hist|\tr_i$ is either
\emph{local} or \emph{non-local}. Read operation execution
$\frop{i}{\obj}{\val}$ in $\hist|\tr_i$ is local if it is preceded in
$\hist|\tr_i$ by a write operation execution on $\obj$, and it is non-local
otherwise. Write operation execution $\fwop{i}{\obj}{\val}{\ok_i}$ in
$\hist|\tr_i$ is local if it is followed in $\hist|\tr_i$ by an invocation of a
write operation on $\obj$, and non-local otherwise.

\paragraph{Safety Properties}

A \emph{property} $\property$ is a condition that stipulates correct behavior.
In relation to histories, a given history satisfies $\property$
if the condition is met for that history.
In relation to programs, program $\prog$ satisfies $\property$ if all histories
produced by $\prog$ satisfy $\property$. 

Safety properties \cite{Lam77} are properties which guarantee that
"something [bad] will not happen." In the case of TM this means that,
transactions will not observe concurrency of other transactions.
Property $\property$ is a safety property if it meets the following definition
(adapted from \cite{AHKR13}):
\begin{definition} \label{def:safety-property}
    A property $\property$ is a \emph{safety property} if, given the set
    $\mathbb{H}_\property$ of all histories that satisfy $\property$:
    \begin{enumerate}[a) ]
        \item\emph{Prefix-closure}: every prefix $H'$ of a history $\hist \in
            \mathbb{H}_\property$ is also in $\mathbb{H}_\property$,
        \item\emph{Limit-closure}: for any infinite sequence of finite histories
            $\hist_0, \hist_1, ...$, s.t. for every $\hist_h \in
            \mathbb{H}_\property$ and $\hist_h$ is a prefix of $\hist_{h+1}$,
            the infinite history that is the \emph{limit} of the sequence is also in
            $\mathbb{H}_\property$.
    \end{enumerate}
\end{definition}
For distinction, in the remainder of the paper we refer to properties that are
not safety properties as \emph{consistency conditions}.
  
\section{Early Release}
\label{sec:properties}

In this section we discuss whether existing safety properties and 
consistency conditions allow for early release (extending our work in
\cite{SW14-wttm}) and to what extent. 
The aim of the analysis is to find properties that describe the guarantees of
TM systems with early release that can be applied in practice. 
That is, we seek a safety property that allows early release but reduces or
eliminates undesired behaviors.

Early release pertains to a situation where conflicting transactions execute
partially in parallel while accessing the same variable. The implied intent is
for all such transactions to access these variables without losing consistency
and thus for them all to finally commit.
We define the concept of early release as follows:

\begin{definition}[Early Release] \label{def:early-release}
    Given history $\hist$
    (with unique writes), transaction $\tr_i \in \hist$
    \emph{releases variable} $\obj$ \emph{early} in $\hist$ iff there is some
    prefix $P$ of $\hist$, such that $\tr_i$ is live in $P$ and there exists
    some transaction $\tr_j \in P$ such that there is a 
    complete non-local read operation execution $\op_j = \frop{j}{\obj}{\val}$ in $P|\tr_j$ 
    and a 
    write operation execution $\op_i = \fwop{i}{\obj}{\val}{\ok_i}$ in $P|\tr_i$ 
    such that $\op_i$ precedes $\op_j$ in $P$.
\end{definition}

We begin our analysis by defining its key questions. The first and the most
obvious is whether a particular property supports early release at all. This is
defined as follows:

\begin{definition}[Early Release Support] \label{def:early-release-support}
    \label{def:release-support}
    Property $\property$ supports early release iff
    given some history $\hist$ that satisfies $\property$
    there exists some transaction $\tr_i \in \hist$, s.t.
    $\tr_i$ releases some variable $\obj$ early in $\hist$.
\end{definition}

If a property allows early release, it allows a significant performance boost (e.g.
\cite{RRHW09,SW14-hlpp}) as transactions are executed with a higher degree of
parallelism.
However, early release can give rise to some unwanted or unintuitive scenarios
with respect to consistency. The most egregious of these is \emph{overwriting},
where one transaction releases some variable early, but proceeds to modify it
afterward. In that case, any transaction that started executing operations on
the released variable will observe an intermediate value with respect to the
execution of the other transaction, ie., \emph{view inconsistent state}.

\begin{figure}
\begin{center}
\begin{tikzpicture}
     \draw
           (0,2)        node[tid]       {$\tr_i$}
                        node[aop]       {$\init_i$} %
                        node[dot]       {} 

      -- ++(1.25,0)     node[aop]       {$\twop{i}{\obj}{1}$}
                        node[dot]       {}

      -- ++(2.0,0)      node[aop]       {$\twop{i}{\obj}{2}$}
                        node[dot] (wi)  {}

      -- ++(1.25,0)     node[aop]       {$\tryC_i\!\to\!\co_i$}
                        node[dot]       {}         
                        ;

     \draw
           (1.5,1)      node[tid]       {$\tr_j$}
                        node[aop]       {$\init_j$} %
                        node[dot]       {} 

      -- ++(1.25,0)     node[aop]       {$\trop{j}{\obj}{1}$}
                        node[dot]       {}

      -- ++(1.25,0)     node[aop]       {$\twop{j}{\obj}{3}\!\to\!\ab_j$}
                        node[dot] (wj)  {}
                        ;

     \draw[hb] (wi) \squiggle (wj);

     \draw
           (7,1)        node[tid]       {$\tr_{j'}$}
                        node[aop]       {$\init_{j'}$} %
                        node[dot] (sk)  {} 

      -- ++(1.25,0)     node[aop]       {$\trop{j'}{\obj}{2}$}
                        node[dot]       {}

      -- ++(1.25,0)     node[aop]       {$\twop{j'}{\obj}{4}$}
                        node[dot]       {}

      -- ++(1.25,0)     node[aop]       {$\tryC_{j'}\!\to\!\co_{j'}$}
                        node[dot]       {}
                        ;

     \draw[retry] (wj) -- (sk);   

\end{tikzpicture}
\end{center}
\caption{\label{fig:serializable-history} \label{fig:overwriting-history}
History with early release and overwriting.
The diagram depicts some history $\hist$ presented as operations executed by
transactions on a time axis. Every line depicts the operations executed by a
particular transaction, e.g., the line marked $\tr_i$ depicts subhistory
$\hist|\tr_i$.
The symbol
\protect\tikz{
    \protect\draw[] (0,0) -- ++(0.25,0) node[dot] {} -- ++(0.25,0);
} 
denotes a complete operation execution (an invocation event immediately
followed by a response event).
For brevity, whenever the response event of some operation
execution is $\ok_i$ we omit it, eg., we write $\twop{i}{\obj}{1}$ rather than
$\twop{i}{\obj}{1}\!\to\!{\ok_i}$.
We also shorten the representation of complete read operation
executions, so that eg. $\frop{j}{\obj}{1}$ is represented as
$\trop{j}{\obj}{1}$.
The arrow 
\protect\tikz{
    \protect\draw[hb] (0,0.2) .. controls +(270:.25) and +(90:0.25) .. (0.5,0.0);
} 
is used to emphasize a happens before relation, and
 \protect\tikz{
    \protect\draw[retry] (0,0) -- (1,0.0);
} 
denotes that the preceding transaction aborts (here, $\tr_j$) and a new
transaction ($\tr_{j'}$) is spawned.
}
\end{figure}
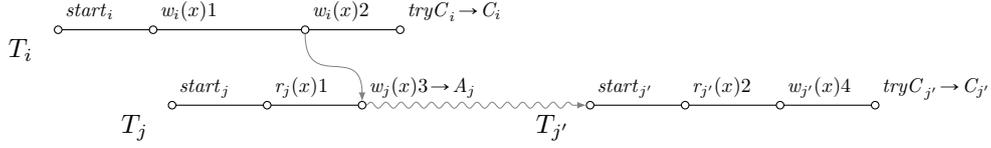

An example of overwriting is shown in \rfig{fig:overwriting-history}, where
transaction $\tr_i$ releases variable $\obj$ early but continues to write to
$\obj$ afterward. As a consequence, $\tr_j$ first reads the value of $\obj$
that is later modified. When $\tr_j$ detects it is in conflict while executing
a write operation it is aborted. This is a way for the TM to attempt to
mitigate the consequences of viewing inconsistent state. The transaction is
then restarted as a new transaction $\tr_{j'}$.

However, as argued in \cite{GK10}, simply aborting a transaction that views
inconsistent state is not enough, since the transaction can potentially act
in an unpredictable way on the basis of using an inconsistent value to perform
local operations. 
For instance, if the value is used in pointer arithmetic it is possible for the
transaction to 
access an unexpected memory location and crash the process.
Alternatively, if the transaction uses the value within a loop condition, it
can enter an infinite loop and become parasitic.

Thus, in our analysis of existing properties we ask the question whether, apart from
allowing early release, the properties also forbid overwriting. In the light of the
potential dangerous behaviors that can be caused by it, we consider properties
that allow overwriting to be too weak to be practical.

\begin{definition}[Overwriting Support] \label{def:overwriting-support}
    Property $\property$ supports overwriting iff 
    $\property$ supports early release, and 
    given some history $\hist$ (with early release) that satisfies $\property$,
    for some pair of transactions $\tr_i, \tr_j \in \hist$ 
    s.t., 
    \begin{enumerate}[a) ]
    \item $\tr_i$ releases some variable $\obj$ early,
    \item $\hist|\tr_i$ contains two write operation executions:
        $\fwop{i}{\obj}{\val}{\ok_i}$ and
        $\fwop{i}{\obj}{\val'}{\ok_i}$, s.t. the former precedes the latter in
        $\hist|\tr_i$,
    \item $\hist|\tr_j$ contains a read operation execution
        $\frop{j}{\obj}{\val}$ that precedes $\fwop{i}{\obj}{\val'}{\ok_i}$ in
        $\hist.$
    \end{enumerate}
\end{definition}

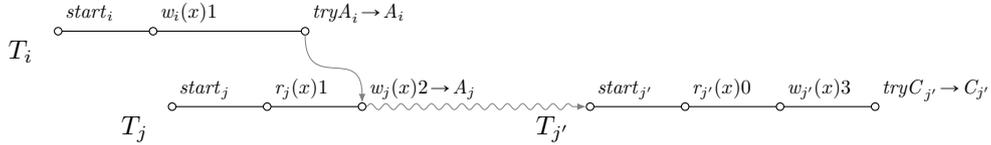
\begin{figure}
\begin{center}
\begin{tikzpicture}
     \draw
           (0,2)        node[tid]       {$\tr_i$}
                        node[aop]       {$\init_i$} %
                        node[dot]       {} 

      -- ++(1.25,0)     node[aop]       {$\twop{i}{\obj}{1}$}
                        node[dot]       {}

      -- ++(2.00,0)     node[aop]       {$\tryA_i\!\to\!\ab_i$}
                        node[dot] (ab)  {}         
                        ;

     \draw
           (1.5,1)      node[tid]       {$\tr_j$}
                        node[aop]       {$\init_j$} %
                        node[dot]       {} 

      -- ++(1.25,0)     node[aop]       {$\trop{j}{\obj}{1}$}
                        node[dot]       {}

      -- ++(1.25,0)     node[aop]       {$\twop{j}{\obj}{2}\!\to\!\ab_j$}
                        node[dot] (wj)  {}
                        ;

     \draw[hb] (ab) \squiggle (wj);

     \draw
           (7,1)        node[tid]       {$\tr_{j'}$}
                        node[aop]       {$\init_{j'}$} %
                        node[dot] (sk)  {} 

      -- ++(1.25,0)     node[aop]       {$\trop{j'}{\obj}{0}$}
                        node[dot]       {}

      -- ++(1.25,0)     node[aop]       {$\twop{j'}{\obj}{3}$}
                        node[dot]       {}

      -- ++(1.25,0)     node[aop]       {$\tryC_{j'}\!\to\!\co_{j'}$}
                        node[dot]       {}
                        ;

     \draw[retry] (wj) -- (sk);   

\end{tikzpicture}
\end{center}
\caption{\label{fig:cascading-abort-history}
History with early release and cascading abort.
}
\end{figure}

In addition, we look at whether or not a particular property forbids a transaction
that releases some variable early to abort. This is a precaution taken by many
properties to prevent \emph{cascading aborts}, another type of scenario leading to
inconsistent views. 
An example of this is shown in \rfig{fig:cascading-abort-history}.
In such a case a transaction, here $\tr_i$, releases a variable early and
subsequently aborts. This can cause another transaction $\tr_j$ that executed
operations on that variable in the meantime to observe inconsistent state.  In
order to maintain consistency, a TM will then typically force $\tr_j$ to abort
and restart as a result. 

However, while the condition that no transaction that releases early can abort,
solves the problem of cascading aborts, it significantly limits the usefulness
of any TM that satisfies it, since TM systems typically cannot predict whether
any particular transaction eventually commits or aborts.
In particular, there are important applications for
TM, where a transaction can arbitrarily and uncontrollably abort at any time.
Such applications include distributed TM and hardware TM, where aborts can be
caused by outside stimuli, such as machine crashes.

An exception to this may be found in systems making special provisions to
ensure that irrevocable transactions eventually commit (see e.g.,
\cite{WSA08}).  In such systems, early release transactions could be ensured
never to abort. However, case in point, these take drastic measures to ensure
that, e.g., at most a single irrevocable transaction is present in the system
at one time.
Therefore, the requirement may be difficult to enforce.
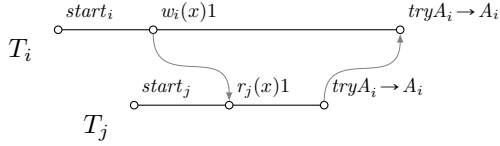
\begin{figure}
\begin{center}
\begin{tikzpicture}
     \draw
           (0,2)        node[tid]       {$\tr_i$}
                        node[aop]       {$\init_i$} %
                        node[dot]       {} 

      -- ++(1.25,0)     node[aop]       {$\twop{i}{\obj}{1}$}
                        node[dot] (wi)  {}

      -- ++(3.25,0)     node[aop]       {$\tryA_i\!\to\!\ab_i$}
                        node[dot] (ai)  {}         
                        ;

     \draw
           (1,1)        node[tid]       {$\tr_j$}
                        node[aop]       {$\init_j$} %
                        node[dot]       {} 

      -- ++(1.25,0)     node[aop]       {$\trop{j}{\obj}{1}$}
                        node[dot] (rj)  {}

      -- ++(1.25,0)     node[aop]       {$\tryA_i\!\to\!\ab_i$}
                        node[dot] (aj)  {}
                        ;

     \draw[hb] (wi) \squiggle (rj);
     \draw[hb] (aj) \squiggleup (ai);

\end{tikzpicture}
\end{center}
\caption{\label{fig:nonaborting-precluded-history}
History with an aborting early release transaction.
}
\end{figure}

Finally, the requirement that transactions which released early must not abort
precludes some scenarios that are intuitively correct. 
For instance, take the example in \rfig{fig:nonaborting-precluded-history}.
Here, $\tr_i$ writes $1$ to $\obj$ and releases it early. $\tr_j$ reads 
$1$
from $\obj$  and then aborts by executing the $\tryA$
operation, which also causes $\tr_i$ to abort.
Since $\tr_j$ reads from $\tr_i$ while the latter is live, $\tr_i$ releases
early in this history.
Then, if there is a requirement that transactions which release early not
abort, then this history is an incorrect one.
However, since $\tr_j$ aborted on its own accord, there are no transactions
that would be affected by $\tr_i$ aborting later on. Hence, intuitively, the
history is actually correct.
Thus, we consider the requirement that transactions which release early must
not abort to be overstrict.

Hence, we seek properties that allow aborts in transactions that release early.

\begin{definition}[Aborting Early Release Support] \label{def:aborting-support}
    Property $\property$ supports aborting early release iff 
    $\property$ supports early release, and 
    given some history $\hist$ that satisfies $\property$,
    for some transaction $\tr_i \in \hist$ that releases some variable $\obj$ early, 
    $\hist|\tr_i$ contains $\ab_i$.
\end{definition}

The properties under consideration are the typical TM safety properties:
serializability, opacity, transactional memory
specification, virtual world consistency, and elastic opacity.
Furthermore, we examine some of the family of live properties from
\cite{DFK14}, since this recent work introduces a number of relaxed versions of
TM safety properties with the view of accommodating early release.
Finally, we consider some strong database consistency conditions that pertain to
transactional processing: recoverability, avoiding cascading aborts,
strictness, and rigorousness.

\subsection{Serializability}
The first property we consider is serializability,
which can be regarded as a baseline TM
safety property. It is defined in \cite{Pap79} in three variants: conflict
serializability, view serializability, and final-state serializability. We
follow a more general version of serializability defined in \cite{Wei89} (as
\emph{global atomicity}), which we adjust to account for non-atomicity of
commits in our model. 

\begin{definition} [Serializability] \label{def:serializability}
    History $\hist$ is serializable iff there exists some sequential history
    $S$ equivalent to a completion $\compl{\hist}$ such that any committed
    transaction in $S$ is legal in $S$.
\end{definition}

This definition does not preclude early release, as long as illegal
transactions are aborted. Serializability also 
permits overwriting and cascading aborts.

\begin{theorem} \label{thm:serializability-early-release}
    Serializability supports early release.
\end{theorem}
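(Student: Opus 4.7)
The plan is to prove the theorem by exhibiting a concrete witness history that (i) satisfies Definition~\ref{def:early-release} (early release occurs) and (ii) satisfies Definition~\ref{def:serializability} (is serializable). Since Definition~\ref{def:release-support} is existential, a single witness suffices.

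For the witness, I would take the simplest possible two-transaction history in which a read passes a value from a still-live writer. Concretely, I would construct $\hist$ such that $\hist|\tr_i = [\init_i \to \ok_i,\, \fwop{i}{\obj}{1}{\ok_i},\, \tryC_i \to \co_i]$ and $\hist|\tr_j = [\init_j \to \ok_j,\, \frop{j}{\obj}{1},\, \tryC_j \to \co_j]$, interleaved so that $\inv{j}{}{\trop{j}{\obj}{}}$ and $\res{j}{}{1}$ occur after $\tr_i$'s write response but before $\inv{i}{}{\tryC_i}$. This way, there is a prefix $P$ of $\hist$ ending immediately after $\res{j}{}{1}$ in which $\tr_i$ is live (it has issued neither $\tryC_i$ nor any $\ab$-response), and in $P|\tr_j$ the non-local read $\frop{j}{\obj}{1}$ is preceded by the write $\fwop{i}{\obj}{1}{\ok_i}$ in $P|\tr_i$. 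This exactly matches Definition~\ref{def:early-release}, so $\tr_i$ releases $\obj$ early.

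Next I would show $\hist$ is serializable. Since both transactions are already committed, $\compl{\hist} = \hist$. I would propose the sequential extension $S = \hist|\tr_i \cdot \hist|\tr_j$, which is equivalent to $\hist$ (same per-transaction subhistories) and preserves $\prec_\hist$ trivially since $\tr_i$ and $\tr_j$ are concurrent in $\hist$. To show legality of each committed transaction, I would verify that $\vis{S}{\tr_i}|\obj$ is just $[\fwop{i}{\obj}{1}{\ok_i}]$, which is in $\mathit{Seq}(\obj)$, and that $\vis{S}{\tr_j}|\obj = [\fwop{i}{\obj}{1}{\ok_i},\, \frop{j}{\obj}{1}]$, which is also in $\mathit{Seq}(\obj)$ because the read returns the value of the most recent preceding committed write.

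I do not anticipate any real obstacle here, since the statement is purely existential and the definitions of serializability and early release are essentially orthogonal in this regime: serializability only constrains the values read by committed transactions, and in the witness the read value is exactly what a sequential order $\tr_i$-before-$\tr_j$ would produce. The only minor care needed is to ensure the witness is well-formed per the conditions listed earlier and that it has unique writes (which it does, since $\obj$ is written only once with value $1 \neq \val_0 = 0$). The same construction will be reusable later to argue that weaker properties than serializability also support early release.
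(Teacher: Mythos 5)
Your proof is correct and takes essentially the same approach as the paper: exhibit a concrete witness history in which a transaction reads from a live writer, serialize it with the writer first, and check legality of the committed transactions directly against $\mathit{Seq}(\obj)$. The only difference is the choice of witness --- the paper uses the more elaborate history of \rfig{fig:overwriting-history} (with a second write and an aborted reader) purely so that the same history can be reused in the subsequent theorems on overwriting and aborting early release, whereas your minimal two-transaction history is an equally valid (and cleaner) witness for this particular statement.
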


\begin{proof} %
    Let $\hist$ be a transactional history as shown in
    \rfig{fig:serializable-history}.
    Note that since all transactions in $\hist$ are committed or aborted then
    $\hist = \compl{\hist}$.  Then, let there be a sequential history
    $S = \hist|\tr_i \cdot \hist|\tr_j \cdot \hist|\tr_{j'}$. Note that $S
    \equiv \hist$.
    Trivially, all the committed transactions in $S$, i.e. $\tr_i$ and $\tr_j'$, are
    legal in $S$, so $\hist$ is serializable.
    Since, by \rdef{def:early-release}, $\tr_i$ releases early in $\hist$, then,
    by \rdef{def:release-support}, serializability supports early release.
\end{proof}

\begin{theorem} \label{thm:serializability-overwriting}
    Serializability supports overwriting.
\end{theorem}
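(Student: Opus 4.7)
The plan is to reuse essentially the same witness history as in the proof of \rthm{thm:serializability-early-release}, namely $\hist$ from \rfig{fig:overwriting-history}, and verify each of the three clauses of \rdef{def:overwriting-support} in turn. Since \rdef{def:overwriting-support} requires that $\property$ supports early release, and \rthm{thm:serializability-early-release} has already established this fact for serializability using exactly this $\hist$, I would open by invoking that result so that the only remaining obligation is to exhibit the overwriting structure inside this concrete $\hist$.

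Next I would verify serializability of $\hist$ in the same way as before: since every transaction in $\hist$ is either committed or aborted, $\hist = \compl{\hist}$, and the sequential history $S = \hist|\tr_i \cdot \hist|\tr_j \cdot \hist|\tr_{j'}$ is equivalent to $\compl{\hist}$. The only committed transactions in $S$ are $\tr_i$ and $\tr_{j'}$, and I would check that $\vis{S}{\tr_i}$ and $\vis{S}{\tr_{j'}}$ are each legal: $\tr_i$ sees no prior committed writer and performs writes of $1$ then $2$ to $\obj$; $\tr_{j'}$ reads $2$ from $\obj$, which matches the last committed write to $\obj$ in $\vis{S}{\tr_{j'}}$ (namely $\fwop{i}{\obj}{2}{\ok_i}$). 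Hence every committed transaction is legal in $S$, so $\hist$ is serializable.

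Finally, I would read off the overwriting witness from $\hist$ directly: pick $\tr_i$ and $\tr_j$. Clause (a) follows from the early-release verification already carried out. For clause (b), $\hist|\tr_i$ contains the two write executions $\fwop{i}{\obj}{1}{\ok_i}$ and $\fwop{i}{\obj}{2}{\ok_i}$, with the former preceding the latter. For clause (c), $\hist|\tr_j$ contains $\frop{j}{\obj}{1}$, and in $\hist$ this read is placed before $\fwop{i}{\obj}{2}{\ok_i}$, matching the value written by the first write execution. By \rdef{def:overwriting-support}, serializability therefore supports overwriting.

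I do not expect any real obstacle here; the main thing to be careful about is making sure the chosen $S$ genuinely serves as a legal serial order in the sense of the paper's definition of $\vis{S}{\cdot}$ (so the order must be $\tr_i$ before $\tr_{j'}$ so that $\tr_{j'}$'s read of $2$ is justified by a committed writer), and making sure that the indices and values named in the overwriting clauses line up with those in the figure. The aborted transaction $\tr_j$ is allowed to read a stale-in-the-end value $1$ precisely because legality is only demanded of committed transactions in \rdef{def:serializability}, which is the whole reason serializability tolerates overwriting.
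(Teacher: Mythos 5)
Your proposal is correct and follows essentially the same route as the paper: reuse the history of \rfig{fig:overwriting-history} already shown serializable in \rthm{thm:serializability-early-release}, and read off the overwriting witness from the two writes of $\tr_i$ to $\obj$ with $\tr_j$'s read of $1$ interposed. The paper's own proof is just a terser version of this; your extra verification of the clauses of \rdef{def:overwriting-support} and of the legality of $\vis{S}{\tr_{j'}}$ is sound but not a different argument.
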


\begin{proof} %
    Let $\hist$ be a serializable history as in the proof of
    \rthm{thm:serializability-early-release} above.
    Transaction $\tr_i$ writes $1$ to $\obj$ in $\hist$ 
    prior to $\tr_j$
    reading $1$ from $\obj$, and  subsequently $\tr_i$ writes $2$ to $\obj$.
    Thus, according to \rdef{def:overwriting-support}, serializability supports
    overwriting.
\end{proof}

\begin{theorem} \label{thm:serializability-aborting}
    Serializability supports aborting early release. 
\end{theorem}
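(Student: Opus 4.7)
The plan is to reuse the exhibition strategy of the previous two theorems, but now supplying a witness history in which the early-releasing transaction actually aborts. The natural candidate is the history $\hist$ depicted in \rfig{fig:cascading-abort-history}, since it is precisely engineered so that $\tr_i$ writes to $\obj$, is read from by a concurrent $\tr_j$, and then executes $\tryA_i \rightarrow \ab_i$.

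First I would observe that every transaction in $\hist$ has finished (is committed or aborted), so $\compl{\hist} = \hist$. Next I would exhibit the sequential extension $S = \hist|\tr_i \cdot \hist|\tr_j \cdot \hist|\tr_{j'}$ and note $S \equiv \hist$. The only committed transaction in $S$ is $\tr_{j'}$, so by \rdef{def:serializability} it suffices to check that $\tr_{j'}$ is legal in $S$. Since $\tr_i$ and $\tr_j$ are both aborted, $\vis{S}{\tr_{j'}}$ contains no other transactions beyond $\tr_{j'}$ itself, so the read $\trop{j'}{\obj}{0}$ sees the initial value $\val_0 = 0$, which belongs to $\mathit{Seq}(\obj)$. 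Hence $\hist$ is serializable.

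Next I would invoke \rdef{def:early-release} on the prefix $P$ of $\hist$ ending at $\trop{j}{\obj}{1}$: in $P$, transaction $\tr_i$ is still live (it has not yet issued $\tryA_i$), $\tr_j$ performs the non-local read $\frop{j}{\obj}{1}$, and $\tr_i$ has already performed the matching write $\fwop{i}{\obj}{1}{\ok_i}$ earlier in $P$. Therefore $\tr_i$ releases $\obj$ early in $\hist$. Finally, $\hist|\tr_i$ contains the response $\res{i}{}{\ab_i}$, so by \rdef{def:aborting-support} serializability supports aborting early release.

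The only potentially subtle step is verifying legality of $\tr_{j'}$: one must be careful that $\vis{S}{\tr_{j'}}$ restricts to committed predecessors only, so that the aborted writes of $\tr_i$ (and the aborted attempted write of $\tr_j$) drop out and the read of $0$ remains consistent with the sequential specification. Once that observation is made, the rest is bookkeeping against the three definitions already in play.
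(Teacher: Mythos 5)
Your proposal is correct and follows essentially the same route as the paper's own proof: the same witness history from \rfig{fig:cascading-abort-history}, the same sequential extension $S = \hist|\tr_i \cdot \hist|\tr_j \cdot \hist|\tr_{j'}$, and the same appeal to \rdef{def:early-release} and \rdef{def:aborting-support}. The extra care you take in spelling out why $\tr_{j'}$ is legal (its $\visf$ excludes the aborted transactions, so the read of $0$ matches the initial value) is a detail the paper dismisses as trivial, but it is the right justification.
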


\begin{proof} %
    Let $\hist$ be a history such as the one in
    \rfig{fig:cascading-abort-history}.
    Since all transactions in $\hist$ are committed or aborted then $\hist =
    \compl{\hist}$. Then, let $S$ be a sequential history equivalent to
    $\hist$ such that $S = \hist|\tr_i \cdot \hist|\tr_j \cdot \hist|\tr_{j'}$.
    $S$ contains only one committed transaction $\tr_{j'}$, which is trivially
    legal in $S$.
    Thus $\hist$ is serializable.
    In addition, transaction $\tr_i$ in $S$ both releases $\obj$ early
    (\rdef{def:early-release})    and contains an abort ($\ab_i \in
    \hist|\tr_i$).  Thus, by \rdef{def:aborting-support}, serializability
    supports aborting early release.
\end{proof}

\subsection{Opacity}
\label{sec:opacity}
Opacity \cite{GK08,GK10} can be considered the standard TM safety property that
guarantees serializability and preservation of real-time order, and prevents
reading from live transactions. It is defined by the following two
definitions. The first definition specifies \emph{final state opacity} that
ensures the appropriate guarantees for a complete transactional history. The
second definition uses final state opacity to define a safety property
that is prefix closed. Both definitions follow those in \cite{GK10}.

\begin{definition} [Final state opacity] \label{def:final-state-opacity} \label{def:fs-opacity}
    A finite TM history $\hist$ is final-state opaque if, and only if, there
    exists a sequential history $S$ equivalent to any completion of $\hist$
    s.t., 
    \begin{enumerate}[(a)] 
        \item $S$ preserves the real-time order of $\hist$,
        \item every transaction $\tr_i$
        in $S$ is legal in $S$. 
    \end{enumerate}
\end{definition}

\begin{definition} [Opacity] \label{def:opacity}
    A TM history $\hist$ is opaque if, and only if, every finite prefix of
    $\hist$ is final-state opaque.
\end{definition}

\begin{theorem} \label{thm:opacity-early-release}
    Opacity does not support early release.
\end{theorem}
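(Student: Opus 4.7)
The plan is a proof by contradiction. Suppose opacity supports early release, so by \rdef{def:release-support} there is an opaque history $\hist$ in which some transaction $\tr_i$ releases some variable $\obj$ early. Unpacking \rdef{def:early-release} supplies a prefix $P$ of $\hist$ in which $\tr_i$ is live, together with a write $\op_i = \fwop{i}{\obj}{\val}{\ok_i}$ in $P|\tr_i$ and a non-local read $\op_j = \frop{j}{\obj}{\val}$ in $P|\tr_j$ such that $\op_i$ precedes $\op_j$ in $P$. Without loss of generality I would take $P$ to be finite, for instance the finite prefix of $\hist$ ending at the response event of $\op_j$, so that opacity gives final-state opacity of $P$ via \rdef{def:opacity}.

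From final-state opacity I would obtain a sequential history $S$ equivalent to some completion $\compl{P}$ in which every transaction is legal. The key observation is that $\tr_i$ is live in $P$, which means $P|\tr_i$ contains no $\tryC_i$ invocation, so only the completion cases that append $\res{i}{}{\ab_i}$ or $\tryC_i \to \ab_i$ can apply to $\tr_i$. Hence $\tr_i$ is aborted in $\compl{P}$, and therefore also in $S$.

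The contradiction then follows from the legality of $\tr_j$ in $S$. Because $\op_j$ is non-local in $P|\tr_j$ and completion only appends terminating events after the read, $\op_j$ remains non-local in $S|\tr_j$, and hence in $\vis{S}{\tr_j}|\tr_j$. For $\vis{S}{\tr_j}|\obj$ to belong to $\mathit{Seq}(\obj)$, the read must then be preceded there by a committed write producing $\val$ on $\obj$, since the unique writes assumption gives $\val \neq \val_0 = 0$. Unique writes further force that write to be exactly $\op_i$, so $\tr_i$ would have to be committed in $S$, contradicting the previous paragraph.

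The main technical point requiring care is that non-locality of $\op_j$ survives into $\vis{S}{\tr_j}$: this is what rules out the sequential specification being satisfied by some local write inside $\tr_j$, and it is what, together with unique writes, pins the required producer of $\val$ to be exactly the instance of $\tr_i$ that the completion rules have already forced to abort.
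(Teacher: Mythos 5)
Your proposal is correct and follows essentially the same route as the paper's proof: assume early release, extract the prefix $P$ with the live writer $\tr_i$ and the non-local read in $\tr_j$, observe that any completion must abort $\tr_i$ so $\op_i$ cannot appear in $\vis{S}{\tr_j}$, and conclude that $\tr_j$ cannot be legal, contradicting final-state opacity of the prefix. Your additional care about non-locality of the read and the unique-writes pinning of the producer makes explicit what the paper leaves implicit, but the argument is the same.
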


\begin{proof} %
By contradiction let us assume that opacity supports early release. Then,
from \rdef{def:early-release-support}, there
exists some history $\hist$ (with unique writes), s.t. $\hist$ is opaque and
there exists some transaction $\tr_i \in \hist$ that releases some variable
$\obj$ early in $\hist$.

From \rdef{def:early-release}, this implies that there exists some prefix $P$
of $\hist$ s.t. 
\begin{enumerate}[a) ]
    \item there is an operation execution $\op_i = \fwop{i}{\obj}{\val}{\ok_i}$
        and $\op_i \in P|\tr_i$,
    \item there exists a transaction $\tr_j \in P$ ($i\neq j$) and an operation
        execution $\op_j = \frop{j}{\obj}{\val}$, s.t. $\op_j \in P|\tr_j$ and
        $\op_i$ precedes $\op_j$ in $P$,
    \item $\tr_i$ is live in $P$.
\end{enumerate}
Let $P_c$ be any completion of $P$. Since $\tr_i$ is live in $P$, by definition
of completion, it is necessarily aborted in $P_c$ (ie. $A_i \in P_c|\tr_i$).
Given any sequential history $S$ equivalent to $P_c$, since $\tr_i$ is aborted
in $P_c$ and $\vis{S}{\tr_j}$ only contains operations of committed
transactions, then $P_c|\tr_i \nsubseteq \vis{S}{\tr_j}$.
This means that $\op_j \in \vis{S}{\tr_j}$ but $\op_i \not\in \vis{S}{\tr_j}$,
so $\vis{S}{\tr_j} \nsubseteq \mathit{Seq}(\obj)$ and therefore $\vis{S}{
\tr_j}$ is not legal.

On the other hand, \rdef{def:opacity} implies that any prefix $P$ of $\hist$ is
final state opaque, which, by \rdef{def:final-state-opacity}, implies that
there exists some completion $P_c$ of $P$ for which there exists an equivalent
sequential history $S$ s.t., any $\tr_j$ in $S$ is legal in $S$.
Since any $\tr_j$ is legal then for any $\tr_j$, $\vis{S}{\tr_j}$ is legal.
This is a contradiction with the paragraph above.
Thus, there cannot exist a history like $\hist$ that is both opaque and contains a
transaction that releases some variable early.
\end{proof}

Since both \rdef{def:overwriting-support} and \rdef{def:aborting-support}
require early release support, then:

\begin{corollary} \label{thm:opacity-overwriting}
    Opacity does not support overwriting.
\end{corollary}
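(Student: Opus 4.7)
The plan is to derive this corollary immediately from Theorem~\ref{thm:opacity-early-release} together with the structure of Definition~\ref{def:overwriting-support}. Specifically, I would argue by contradiction: suppose opacity supports overwriting. Then, by the first clause of Definition~\ref{def:overwriting-support}, opacity must support early release. But Theorem~\ref{thm:opacity-early-release} has just established the opposite, giving the contradiction.

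The proof is essentially a one-line logical deduction, so the main work is simply spelling out that the definition of overwriting support is a strict strengthening of early release support. I would phrase it as: Definition~\ref{def:overwriting-support} conjoins three requirements, the first of which is that the property supports early release in the sense of Definition~\ref{def:early-release-support}; hence any property that fails to support early release cannot \emph{a fortiori} support overwriting. Applying this observation to opacity, which by Theorem~\ref{thm:opacity-early-release} does not support early release, yields the corollary.

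There is no real obstacle here since no new construction is required and no new history needs to be exhibited; the corollary is purely syntactic in the sense that it follows from the form of the definitions. The only thing to be careful about is not accidentally trying to prove something stronger (e.g., producing a witness history), since by Theorem~\ref{thm:opacity-early-release} no opaque history with early release exists at all, so in particular no opaque history exhibiting the overwriting pattern of Definition~\ref{def:overwriting-support} exists. I would keep the proof to two or three sentences and flag it explicitly as an immediate consequence of the preceding theorem.
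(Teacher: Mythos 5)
Your proposal is correct and matches the paper's own argument exactly: the paper derives this corollary by noting that Definition~\ref{def:overwriting-support} requires early release support, which opacity lacks by Theorem~\ref{thm:opacity-early-release}. Nothing further is needed.
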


\begin{corollary} \label{thm:opacity-aborting}
    Opacity does not support aborting early release.
\end{corollary}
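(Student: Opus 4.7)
The plan is to derive this corollary as a direct logical consequence of \rthm{thm:opacity-early-release} together with the structure of \rdef{def:aborting-support}. The key observation, already flagged in the paragraph preceding the corollary, is that the definition of aborting early release support is itself conditioned on early release support: a property $\property$ supports aborting early release only if $\property$ supports early release in the first place. So the proof is essentially a one-line contrapositive argument.

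Concretely, I would proceed by contradiction. Suppose opacity supports aborting early release. Then by \rdef{def:aborting-support} (clause requiring $\property$ to support early release), opacity must also support early release. But this directly contradicts \rthm{thm:opacity-early-release}. Hence the supposition is false and opacity does not support aborting early release.

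There is no real obstacle here: the entire content of the corollary is encoded in the implication built into \rdef{def:aborting-support}, and the only nontrivial fact needed has already been established in \rthm{thm:opacity-early-release}. The proof I would write would therefore be a two-sentence derivation, essentially mirroring the argument used for \rthm{thm:opacity-overwriting}, and could even be stated without invoking a contradiction (simply: since aborting early release support implies early release support by definition, and opacity lacks the latter by \rthm{thm:opacity-early-release}, it lacks the former).
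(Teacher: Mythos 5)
Your proposal is correct and matches the paper exactly: the paper derives this corollary from the one-line observation that \rdef{def:aborting-support} presupposes early release support, which opacity lacks by \rthm{thm:opacity-early-release}. Nothing further is needed.
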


\subsection{TMS1 and TMS2}

In \cite{DGLM13} the authors argue that some scenarios, such as sharing
variables between transactional and non-transactional code, require additional
safety properties. Thus, they propose and rigorously define two consistency
conditions for TM: \emph{transactional memory specification 1 (TMS1)} and
\emph{transactional memory specification 2 (TMS2)}.

TMS1 follows a set of design principles including a requirement for observing
consistent behavior that can be justified by some serialization. Among others,
TMS1 also requires that partial effects of transactions are hidden from other
transactions. These principles are reflected in the definition of the TMS1
automaton, and we paraphrase the relevant parts of the condition for the
correctness of an operation's response in the following definitions (see the
definitions of \textit{extConsPrefix} and \textit{validResp} for TMS1 in
\cite{DGLM13}).

Given a history $\hist$ and some response event $r$ in $H$, let $\hist\upto r$ denote
a subhistory of $\hist$ s.t. for every operation execution $\op \in \hist$, $\op
\in \hist\upto r$ iff $\op \prec_H r$ and $op$ is complete. This represents all
operations executed ,,thus far,'' when considering the legality of $r$.

Let $\transactions^d_\hist$ be the set of all transactions in $\hist$ s.t. $\tr_k \in
\transactions^d_\hist$ iff $\tr_k \in \hist$ and 
$\inv{k}{}{\tryC_k} \in \hist|\tr_k$.
Given response event $r$, let $\transactions^d_\hist\upto r$ be the set of all
transactions in $\hist$ s.t. $\tr_k \in \transactions^d_\hist\upto r$ if $\tr_k \in
\transactions^d_\hist$ and 
$\inv{k}{}{\tryC_k} \prec_{\hist} r$.
These sets represent transactions which committed or aborted (but are not live)
and the set of all such transactions that did so before response event $r$.

Given some history $\hist$, let $\transactions_\hist'$ by any subset of transactions
in $\hist$.  Let $\sigma$ be a sequence of transactions.
Let $\mathit{ser}(\transactions_\hist', \prec_{\hist})$
be a set of all sequences of transactions s.t. $\sigma \in
\mathit{ser}(\transactions_\hist', \prec_{\hist})$ if 
$\sigma$ contains every element of $\transactions_\hist'$ exactly once and
for any $\tr_i, \tr_j \in \transactions_\hist'$, if $\tr_i \prec_{\hist} \tr_j$ then
$\tr_i$ precedes $\tr_j$ in $\sigma$.

Given a history H and some response event r in H, 
let $\mathit{ops}(\sigma, r)$ be a sequence of operations s.t. if $\sigma = [
\tr_1, \tr_2, ..., \tr_n]$ then $\mathit{ops}(\sigma, r) = \hist\upto r|\tr_1
\cdot \hist\upto r|\tr_2 \cdot ... \cdot \hist\upto r|T_n$. 
This represents the sequential history prior to response event $r$ that
respects a specific order of transactions defined by $\sigma$.

The most relevant condition in TMS1 with respect to early release checks the
validity of individual response operations. A prequisite for checking validity
is to check whether a response event can be justified by some \emph{externally
consistent prefix}. This prefix consists of operations from all transactions
that precede the response event and whose effects are visible to other
transactions.
Specifically, if a transaction precedes another transaction in the real time order, then it must be
both committed and included in the prefix, or both not committed and excluded
from the prefix.
However, if a transaction does not precede another transaction, it can be in
the prefix regardless of whether it committed or aborted.

\begin{definition} [Extended Consistent Prefix] \label{def:extcp}
    Given a history $\hist$ and a response event $r$,
    let the set of transactions $\transactions^r_\hist$ be any subset of all transactions in $\hist$ s.t.
    for any $\tr_i, \tr_j \in \transactions^r_\hist$, 
    if $\tr_i \prec_\hist \tr_j$ 
    then $\tr_i$ is in $\transactions^r_\hist$ 
         iff $\res{i}{}{\co_i} \in \hist\upto r|\tr_i$.
\end{definition}

TMS1 specifies that each response to an operation invocation in a safe history must be
\emph{valid}. Intuitively, a valid response event is one for which there exists
a sequential prefix that is both legal and meets the conditions of an
externally consistent prefix. More precisely, the following condition must be met.

\begin{definition} [Valid Response] \label{def:valid-response}
    Given a transaction $\tr_i$ in $\hist$,
    we say the response $r$ to some operation invocation $e$ is valid 
    if there exists set $\transactions^r_\hist \subseteq \transactions_d\upto r$
        and sequence $\sigma \in \mathit{ser}(\transactions^r_\hist, \prec_{\hist})$
        s.t. $\transactions^r_\hist$ satisfies \rdef{def:extcp} and 
        $\mathit{ops}(\sigma \cdot \tr_i, r) \cdot [e \rightarrow r]$ is legal.
\end{definition}

\begin{theorem} \label{thm:tms1-early-release}
    TMS1 does not support early release. 
\end{theorem}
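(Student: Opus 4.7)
My plan is to follow the same contradiction pattern used for \rthm{thm:opacity-early-release}, but exploiting \rdef{def:valid-response} in place of legality of $\vis{S}{\tr_j}$. The crux is that the externally consistent prefix may only be drawn from $\transactions^d_\hist$, the set of transactions that have already invoked $\tryC$, whereas the offending writer in any early release scenario is, by definition, still live.

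More concretely, I would assume for contradiction that TMS1 supports early release. By \rdef{def:early-release-support} and \rdef{def:early-release}, there is a TMS1 history $\hist$ (with unique writes) and a prefix $P$ of $\hist$ containing transactions $\tr_i \neq \tr_j$ such that $\tr_i$ is live in $P$, a write execution $\fwop{i}{\obj}{\val}{\ok_i}$ lies in $P|\tr_i$, and a non-local read execution $\frop{j}{\obj}{\val}$ lies in $P|\tr_j$, with the write preceding the read in $P$. Let $e = \inv{j}{}{\trop{j}{\obj}{}}$ and let $r = \res{j}{}{\val}$ be the corresponding response event in $P$.

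Since TMS1 is a safety property, the prefix $P$ is itself a TMS1 history, so by \rdef{def:valid-response} there must exist $\transactions^r_\hist \subseteq \transactions^d_\hist\upto r$ and $\sigma \in \mathit{ser}(\transactions^r_\hist, \prec_\hist)$ such that $\mathit{ops}(\sigma \cdot \tr_j, r) \cdot [e \rightarrow r]$ is legal. Because $\tr_i$ is live in $P$, it has not yet issued $\inv{i}{}{\tryC_i}$, so $\tr_i \notin \transactions^d_\hist\upto r$ and therefore $\tr_i \notin \transactions^r_\hist$. Consequently $\tr_i$ does not appear in $\sigma$, and since $i \neq j$, the subhistory $P\upto r|\tr_i$ contributes nothing to $\mathit{ops}(\sigma \cdot \tr_j, r)$.

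The reading transaction $\tr_j$ itself has not written to $\obj$ before $r$ (its read is non-local), and by the unique writes assumption, only $\tr_i$ has ever written $\val$ to $\obj$ in $\hist$ and $\val \neq \val_0$. Thus in the sequence $\mathit{ops}(\sigma \cdot \tr_j, r) \cdot [e \rightarrow r]$ the final read $\frop{j}{\obj}{\val}$ is not preceded by any write of $\val$ to $\obj$, violating clause (b) of the sequential specification of $\obj$. Hence $\mathit{ops}(\sigma \cdot \tr_j, r) \cdot [e \rightarrow r]$ is not legal, contradicting \rdef{def:valid-response}. Therefore no such $\hist$ exists, and TMS1 does not support early release.

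The only real obstacle I anticipate is pinning down whether $\transactions^d_\hist\upto r$ is the correct "pool" from which $\transactions^r_\hist$ is selected (the paper writes $\transactions_d\upto r$); this is a notational check rather than a mathematical one, and it is central because the whole argument collapses if live transactions could be used to justify a response.
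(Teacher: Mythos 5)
Your proposal is correct and follows essentially the same argument as the paper: the live writer $\tr_i$ has not invoked $\tryC_i$, so it cannot belong to $\transactions^d_\hist$ nor to any $\sigma$, and by unique writes the read $\frop{j}{\obj}{\val}$ is then unpreceded by a write of $\val$ in $\mathit{ops}(\sigma \cdot \tr_j, r) \cdot [e \rightarrow r]$, contradicting \rdef{def:valid-response}. Your added remarks (prefix-closure of TMS1, non-locality of the read, $\val \neq \val_0$) only make explicit steps the paper leaves implicit.
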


\begin{proof} %
    Assume by contradiction that TMS1 supports early release.
    Then by \rdef{def:release-support},
    there exists some TMS1 history $\hist$ s.t. $\tr_i,\tr_j \in \hist$   
    and there is a prefix $P$ of $\hist$ s.t. 
    $\op_i = \fwop{i}{\obj}{\val}{\ok_i} \in P|\tr_i$,
    $\op_j = \frop{j}{\obj}{\val} \in P|\tr_j$, and $\tr_i$ is live in $\hist$.
    This implies that 
    $\inv{i}{}{\tryC_i} \not\in P\upto \res{j}{}{\val}|\tr_i$.
    This means that $\tr_i \not\in \transactions_d$ and therefore not in any
    $\transactions' \subseteq \transactions_d$ or, by extension, any $\sigma
    \in \mathit{ser}(\transactions', \prec_{\hist})$. Therefore, there is
    no $\op_i$ in $\mathit{ops}(\sigma,  \res{j}{}{\val})$, so, assuming unique writes, $\op_j$
    is not preceded by a write of $\val$ to $\obj$ in $\mathit{ops}(\sigma
    \cdot \tr_j,  \res{j}{}{\val}) \cdot [
    \frop{j}{\obj}{\val}
    ]$. Therefore,  $\mathit{ops}(\sigma
    \cdot \tr_j,  \res{j}{}{\val}) \cdot [
    \frop{j}{\obj}{\val}
    ]$ is not legal, which contradicts
    \rdef{def:valid-response}.
\end{proof}

Since both \rdef{def:overwriting-support} and \rdef{def:aborting-support}
require early release support, then:

\begin{corollary} \label{thm:tms1-overwriting}
    TMS1 does not support overwriting.
\end{corollary}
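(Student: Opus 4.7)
The plan is to derive this corollary directly from \rthm{thm:tms1-early-release} by exploiting the fact that overwriting support is defined as a strictly stronger condition than early release support. First I would recall \rdef{def:overwriting-support}, which begins with the clause ``$\property$ supports early release, and...''; this means that in order for any property $\property$ to support overwriting, it must necessarily support early release in the sense of \rdef{def:early-release-support}.

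Next I would invoke \rthm{thm:tms1-early-release}, which already establishes that TMS1 does not support early release. The contrapositive of the implication ``$\property$ supports overwriting $\Rightarrow$ $\property$ supports early release'' then immediately yields that TMS1 cannot support overwriting. The proof is essentially a one-line deduction and requires no further construction of witness histories.

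Since there is no real work beyond chaining two definitions with a prior theorem, there is no obstacle to speak of. The only care needed is to explicitly cite the conjunctive structure of \rdef{def:overwriting-support} so that the appeal to \rthm{thm:tms1-early-release} is transparently justified, rather than leaving the reader to re-derive the implication. This parallels exactly the pattern used for \rthm{thm:opacity-overwriting} following \rthm{thm:opacity-early-release}, so I would keep the wording uniform with that earlier corollary for consistency.
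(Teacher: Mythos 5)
Your proposal is correct and matches the paper's own argument: the paper derives this corollary (together with the one on aborting early release) directly from the remark that \rdef{def:overwriting-support} requires early release support, combined with \rthm{thm:tms1-early-release}. No further commentary is needed.
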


\begin{corollary} \label{thm:tms1-aborting}
    TMS1 does not support aborting early release.
\end{corollary}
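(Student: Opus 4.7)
The plan is to observe that this corollary follows immediately from Theorem \ref{thm:tms1-early-release} by unpacking Definition \ref{def:aborting-support}. That definition stipulates that a property $\property$ supports aborting early release \emph{only if} $\property$ supports early release in the sense of Definition \ref{def:early-release-support} (it is the first conjunct of the definition). Hence ``supports aborting early release'' is strictly stronger than ``supports early release,'' and the non-support result for the latter immediately lifts to the former.

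Concretely, I would argue by contradiction. Assume TMS1 supports aborting early release. Then by Definition \ref{def:aborting-support}, there is a TMS1 history $\hist$ containing a transaction $\tr_i$ that releases some variable $\obj$ early in $\hist$ (and additionally $\ab_i \in \hist|\tr_i$). But the existence of such an $\hist$ already witnesses, via Definition \ref{def:early-release-support}, that TMS1 supports early release, contradicting Theorem \ref{thm:tms1-early-release}. The abort condition is not even needed to derive the contradiction; it is subsumed.

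There is no real obstacle: this is a purely structural consequence of the way Definition \ref{def:aborting-support} was phrased as a strengthening of Definition \ref{def:early-release-support}, exactly parallel to the way Corollary \ref{thm:tms1-overwriting} is derived from the same theorem. The only care needed is to cite the correct labels and to note explicitly that the required conjunction in Definition \ref{def:aborting-support} cannot be satisfied when its first conjunct already fails for TMS1.
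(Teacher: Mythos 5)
Your proposal is correct and matches the paper's own justification exactly: the paper derives this corollary from Theorem~\ref{thm:tms1-early-release} precisely by noting that Definition~\ref{def:aborting-support} requires early release support as a precondition. Nothing further is needed.
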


TMS2 is a stricter, but more intuitive version of TMS1.
Since the authors show in \cite{DGLM13} that TMS2 is strictly stronger than
TMS1 (TMS2 implements TMS1), the conclusions above equally apply to TMS2.
Hence, from \rthm{thm:tms1-early-release}:

\begin{corollary}
    TMS2 does not support early release. 
\end{corollary}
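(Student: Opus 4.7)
The plan is to derive this corollary directly from \rthm{thm:tms1-early-release} by exploiting the inclusion between the two properties that the authors of \cite{DGLM13} have established. Concretely, since TMS2 is strictly stronger than TMS1, the set $\mathbb{H}_{\mathrm{TMS2}}$ of TMS2 histories is a subset of $\mathbb{H}_{\mathrm{TMS1}}$. Thus, every history satisfying TMS2 also satisfies TMS1, and any structural witness of early release in a TMS2 history (a prefix $P$, transactions $\tr_i, \tr_j$, and operation executions $\op_i = \fwop{i}{\obj}{\val}{\ok_i}$, $\op_j = \frop{j}{\obj}{\val}$ as in \rdef{def:early-release}) is at the same time a witness of early release in a TMS1 history.

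The proof then proceeds by contradiction. Suppose TMS2 supports early release in the sense of \rdef{def:early-release-support}. Then there exists a TMS2 history $\hist$ with some transaction $\tr_i \in \hist$ releasing a variable $\obj$ early. By the inclusion above, $\hist$ is a TMS1 history as well, and the same $\tr_i$ releases $\obj$ early in $\hist$. Hence TMS1 satisfies \rdef{def:early-release-support}, i.e., TMS1 supports early release. This directly contradicts \rthm{thm:tms1-early-release}, yielding the claim.

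The only real obstacle here is ensuring that the notion of ``TMS2 implements TMS1'' cited from \cite{DGLM13} indeed translates into history-set inclusion in the sense used by \rdef{def:early-release-support} in this paper. This is a definitional check rather than a genuinely technical step: since both properties in \cite{DGLM13} are specified by I/O automata accepting sets of histories, ``TMS2 implements TMS1'' means every trace of the TMS2 automaton is also a trace of the TMS1 automaton, which is exactly the set-theoretic inclusion the argument requires.
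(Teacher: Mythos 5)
Your proposal is correct and follows the same route as the paper: the paper likewise derives the corollary from Theorem~\ref{thm:tms1-early-release} via the fact that TMS2 is strictly stronger than TMS1 (TMS2 implements TMS1), so the TMS1 conclusion carries over to TMS2. Your extra remark that ``implements'' amounts to history-set inclusion is a reasonable elaboration of what the paper leaves implicit.
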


\begin{corollary} \label{thm:tms2-overwriting}
    TMS2 does not support overwriting.
\end{corollary}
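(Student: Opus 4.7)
The plan is to derive this corollary as an immediate consequence of two facts already in place, without constructing or inspecting any particular history. The first fact is the unnumbered corollary stated immediately above, which asserts that TMS2 does not support early release. That corollary in turn rests on the \cite{DGLM13} result that TMS2 implements TMS1 (so $\mathbb{H}_{\mathrm{TMS2}} \subseteq \mathbb{H}_{\mathrm{TMS1}}$), combined with \rthm{thm:tms1-early-release}. The second fact is the structure of \rdef{def:overwriting-support}, whose opening clause requires that $\property$ support early release as a precondition for $\property$ to support overwriting at all.

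Given these two ingredients, the argument is purely syllogistic: the precondition in \rdef{def:overwriting-support} fails for $\property$ being TMS2 by the preceding corollary, and hence the conjunctive definition cannot be satisfied regardless of whether any history exhibits the remaining clauses (a)--(c). Therefore TMS2 does not support overwriting.

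There is no real obstacle here; the only care required is to invoke the preceding corollary rather than re-deriving the TMS2--TMS1 inclusion from scratch, and to point out explicitly that \rdef{def:overwriting-support} is formulated as a conjunction whose first conjunct already fails. The proof should be at most a sentence or two long, mirroring the style already used for \rcor{thm:tms1-overwriting} and \rcor{thm:tms1-aborting}.
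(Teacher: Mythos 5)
Your proposal is correct and matches the paper's own reasoning: the paper derives this corollary exactly by noting that TMS2 implements TMS1, hence does not support early release, and that \rdef{def:overwriting-support} requires early release support as its first conjunct. Nothing further is needed.
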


\begin{corollary} \label{thm:tms2-aborting}
    TMS2 does not support aborting early release.
\end{corollary}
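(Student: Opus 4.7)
The plan is to derive this corollary directly from the immediately preceding Corollary stating that TMS2 does not support early release, combined with the structural form of \rdef{def:aborting-support}. That definition explicitly builds in ``$\property$ supports early release'' as a conjunct of the condition for $\property$ supporting aborting early release, so non-support of early release is automatically inherited.

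Concretely, I would argue by contradiction. Suppose TMS2 supports aborting early release. Then by \rdef{def:aborting-support}, there exists a TMS2 history $\hist$ containing a transaction $\tr_i$ that releases some variable $\obj$ early in $\hist$ and such that $\hist|\tr_i$ contains $\ab_i$. The very existence of such a transaction $\tr_i$, by \rdef{def:early-release-support}, already witnesses that TMS2 supports early release. This contradicts the preceding corollary (obtained by transferring \rthm{thm:tms1-early-release} through the fact from \cite{DGLM13} that TMS2 implements TMS1, so every TMS2 history is a TMS1 history).

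There is essentially no obstacle here beyond observing that the two ``support'' definitions are nested: aborting early release support is strictly stronger than plain early release support. The same one-line argument already justified \rthm{thm:opacity-aborting} and \rthm{thm:tms1-aborting}, and is applied here in exactly the same manner, with the sole additional ingredient being the TMS2-implements-TMS1 lifting that was already invoked for the preceding corollary. Hence the proof can be stated in a single sentence pointing back to the previous corollary and \rdef{def:aborting-support}.
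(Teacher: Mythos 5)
Your argument is correct and matches the paper's own reasoning: the corollary follows immediately because \rdef{def:aborting-support} has early-release support as a conjunct, and TMS2 inherits non-support of early release from TMS1 via the TMS2-implements-TMS1 result of \cite{DGLM13}. Nothing further is needed.
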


\subsection{Virtual World Consistency}

The requirements of opacity, while very important in the context of TM's
ability to execute any operation transactionally, can \mbox{often} be
excessively stringent. On the other hand serializability is considered too weak
for many TM applications. Thus, a weaker TM consistency condition called \emph{virtual world consistency} (\emph{VWC})
was introduced in \cite{IMR08}.
The definition of VWC depends on \emph{causal past}. The causal past $C(\hist,
\tr_i)$ of some transaction $\tr_i$ in some history $\hist$ is the set that
contains $\tr_i$ and all aborted or committed transactions that precede $\tr_i$
in $\hist$.  
A causal past $C(\hist, \tr_i)$ is legal, if for every $\tr_j \in
C(\hist, \tr_i)$, s.t. $i \neq j$, $\tr_j$ is committed in $\hist$.

\begin{definition} [Virtual World Consistency] \label{def:vwc-short}
    History $H$ is virtual world consistent iff all committed transactions are
    serializable and preserve real-time order, and for each aborted transaction
    there exists a linear extension of its causal past that is legal.
\end{definition}

This property allows a limited support for early release as follows.

\begin{theorem} \label{thm:vwc-early-release}
    VWC supports early release. 
\end{theorem}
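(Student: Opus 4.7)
The plan is to imitate the witness-history strategy used for serializability in Theorem~\ref{thm:serializability-early-release}: exhibit one concrete history $\hist$ that simultaneously satisfies VWC and contains an early-releasing transaction, which by Definition~\ref{def:early-release-support} is all that is required. A natural candidate is a two-transaction history in which $\tr_i$ executes $\init_i$ and then $\twop{i}{\obj}{1}$; after the write response, $\tr_j$ executes $\init_j$ and then $\trop{j}{\obj}{1}$; finally $\tr_i$ and $\tr_j$ both commit successfully. Crucially, the ordering is arranged so that $\tr_j$'s read response occurs while $\tr_i$ is still live, but both transactions go on to commit so that no aborted transactions appear in $\hist$.

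First I would verify early release directly from Definition~\ref{def:early-release}: take the prefix $P$ of $\hist$ ending at $\tr_j$'s read response. In $P$, $\tr_i$ is live, $P|\tr_i$ contains $\fwop{i}{\obj}{1}{\ok_i}$, $P|\tr_j$ contains the non-local read $\frop{j}{\obj}{1}$, and the write precedes the read in $P$. Next I would verify VWC via Definition~\ref{def:vwc-short}. The aborted-transaction clause is vacuous because $\hist$ has no aborted transactions, so only the committed-transaction clause needs to be discharged. For this I exhibit the sequential extension $S = \hist|\tr_i \cdot \hist|\tr_j$. Since $\tr_i$ and $\tr_j$ are concurrent in $\hist$ (neither precedes the other in $\prec_\hist$), $S$ trivially preserves real-time order. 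In $S$, $\tr_i$ performs one write and is legal, while $\vis{S}{\tr_j}$ equals $\hist|\tr_i$ followed by $\tr_j$'s read of $1$, which matches $\mathit{Seq}(\obj)$ because the preceding committed write installed $1$.

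The main---indeed only---subtlety is reading Definition~\ref{def:vwc-short} carefully: the causal-past clause applies \emph{only} to aborted transactions, so by choosing a witness without aborts I bypass it entirely and reduce the verification to a serializability-style argument that is essentially identical to the one in Theorem~\ref{thm:serializability-early-release}. Beyond this, no further calculation is needed, and the theorem follows.
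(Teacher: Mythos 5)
Your proposal is correct and follows essentially the same route as the paper: the paper's witness (Fig.~\ref{fig:vwc-history}) is likewise a two-transaction history in which $\tr_j$ reads $\obj$ from a still-live $\tr_i$, both transactions commit so the causal-past clause for aborted transactions is vacuous, and VWC is discharged by the legal sequential extension $S = \hist|\tr_i \cdot \hist|\tr_j$. The only difference is cosmetic (the paper's $\tr_i$ also performs a couple of extra reads), so no further comment is needed.
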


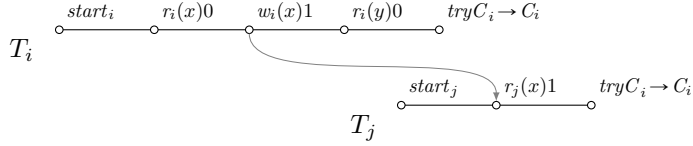
\begin{figure}
\begin{center}
\begin{tikzpicture}
     \draw
           (0,2)        node[tid]       {$\tr_i$}
                        node[aop]       {$\init_i$} %
                        node[dot]       {} 

      -- ++(1.25,0)     node[aop]       {$\trop{i}{\obj}{0}$}
                        node[dot] (wi0) {}

      -- ++(1.25,0)     node[aop]       {$\twop{i}{\obj}{1}$}
                        node[dot] (wi1) {}

      -- ++(1.25,0)     node[aop]       {$\trop{i}{\objy}{0}$}
                        node[dot]       {}

      -- ++(1.25,0)     node[aop]       {$\tryC_i\!\to\!\co_i$}
                        node[dot]       {}         
                        ;

     \draw
           (4.5,1)      node[tid]       {$\tr_j$}
                        node[aop]       {$\init_j$} %
                        node[dot]       {} 

      -- ++(1.25,0)     node[aop]       {$\trop{j}{\obj}{1}$}
                        node[dot] (rj1) {}

      -- ++(1.25,0)     node[aop]       {$\tryC_i\!\to\!\co_i$}
                        node[dot]       {} 
                        ;

     \draw[hb] (wi1) \squiggle (rj1);
\end{tikzpicture}
\end{center}
\caption{\label{fig:vwc-history} VWC history with early
    release.}
\end{figure}

\begin{proof} %
    Let $\hist$ be a transactional history as shown in \rfig{fig:vwc-history}.
    Here, $\tr_i$ performs two operations on $\obj$ and one on $\objy$, while
    $\tr_j$ reads $\obj$.  
    The linear extension of $\hist$ is
    $S = \hist|\tr_i \cdot \hist|\tr_j$ 
    wherein both transactions are trivially legal. Thus $\hist$ is VWC.
    Since, by \rdef{def:early-release}, $\tr_i$ releases early in $\hist$, then,
    by \rdef{def:release-support}, VWC supports early release.
\end{proof}

\begin{theorem} \label{thm:vwc-lu-release}
    VWC does not support overwriting.
\end{theorem}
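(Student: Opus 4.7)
The plan is to prove the theorem by contradiction: assume VWC supports overwriting, so some VWC history $\hist$ contains transactions $\tr_i$ and $\tr_j$ satisfying \rdef{def:overwriting-support}. That is, $\tr_i$ writes $\val$ and later $\val'$ to $\obj$, and $\tr_j$ reads $\val$ from $\obj$ between these two writes; by unique writes, $\val \neq \val'$ and no other transaction ever writes $\val$. I would then case-split on the final status of $\tr_j$ in $\hist$.

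In the first case, suppose $\tr_j$ is committed. Then by \rdef{def:vwc-short}, the subhistory of committed transactions (in some completion of $\hist$) is equivalent to a legal, real-time-order-preserving sequential history $S$. Since $\tr_j$ reads $\val$ in $S$, legality forces some committed transaction's most recent write to $\obj$ before $\tr_j$'s read to be of $\val$, and by unique writes that transaction must be $\tr_i$. But $\tr_i$'s operations form a contiguous block in $S$ (sequentiality forbids interleaving of transactions), and within that block $\fwop{i}{\obj}{\val}{\ok_i}$ precedes $\fwop{i}{\obj}{\val'}{\ok_i}$. Hence the latest write to $\obj$ preceding $\tr_j$'s read in $S$ is $\val'$, not $\val$, contradicting the legality of $S$.

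In the second case, suppose $\tr_j$ is aborted. Then VWC requires a legal linear extension of the causal past $C(\hist, \tr_j)$. I would show that $\tr_i \notin C(\hist, \tr_j)$: by \rdef{def:overwriting-support}, $\tr_j$'s read of $\val$ precedes $\tr_i$'s subsequent write of $\val'$ in $\hist$, so $\tr_i$ has events strictly after $\tr_j$'s first event $\init_j$, which means $\tr_i$'s last event does not precede $\init_j$, so $\tr_i \not\prec_\hist \tr_j$. Thus $\tr_i$ is excluded from $\tr_j$'s causal past. By unique writes, no other transaction in $\hist$ ever wrote $\val$ either, so no linear extension of $C(\hist, \tr_j)$ contains a write of $\val$ to $\obj$, and $\tr_j$'s read of $\val$ cannot be legal.

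The main obstacle I anticipate is making the causal-past argument airtight in the second case: I must be careful that the causal past can include aborted transactions whose writes still appear in the linearization, so the conclusion really hinges on the unique-writes assumption to rule out any other transaction supplying $\val$. Once that is pinned down, invoking \rdef{def:overwriting-support} to derive $\tr_i \not\prec_\hist \tr_j$ is routine, and the two cases together contradict the assumption that $\hist$ satisfies VWC.
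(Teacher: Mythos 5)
Your committed case is the same as the paper's case (a), just spelled out in more detail (the contiguous-block argument inside the serialization of committed transactions), and it is fine; it also silently covers the subcase where $\tr_i$ itself aborts, since then no committed transaction writes $\val$ at all. The substantive divergence is in the aborted case, and there your argument rests on a claim that contradicts how the paper uses its own definition. You conclude $\tr_i \notin C(\hist,\tr_j)$ because $\tr_i \not\prec_\hist \tr_j$, i.e.\ you read ``precede'' in the causal-past definition as real-time precedence. But VWC's causal past (following the original definition in \cite{IMR08}, which this paper paraphrases loosely) is built from causal dependencies --- process order plus the read-from relation --- so the fact that $\tr_j$ reads $\obj$ from $\tr_i$ is precisely what puts $\tr_i$ \emph{into} $C(\hist,\tr_j)$. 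The paper's own proofs of both \rthm{thm:vwc-lu-release} and \rthm{thm:vwc-no-abort} depend on this: case (b) of the paper's proof says the causal past of $\tr_j$ \emph{contains} $\tr_i$ and derives illegality from the stale read, which would be impossible under your reading.

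The good news is that your contradiction survives either way, so this is a repairable step rather than a fatal one: if $\tr_i \in C(\hist,\tr_j)$, then in any linear extension $\tr_i$'s operations form a contiguous block containing the write of $\val$ immediately overwritten by $\val'$, so $\tr_j$'s read of $\val$ is never directly preceded by a write of $\val$ among operations on $\obj$; if $\tr_i \notin C(\hist,\tr_j)$, then by unique writes no write of $\val$ appears at all. You should either adopt the causal reading outright (matching the paper) or explicitly argue that both dispositions of $\tr_i$ render the linear extension illegal. As written, the single-branch exclusion argument is the one step a referee working from the intended VWC semantics would reject.
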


\begin{proof} %
    Since VWC requires that aborting transactions view a legal causal past,
    then if a transaction reading $\obj$ is aborted, it must read a legal
    (i.e. "final") value of $\obj$. 
    Thus, let us consider some history $\hist$ where some $\tr_i$ releases
    $\obj$ early, and some $\tr_j$ reads $\obj$ from $\tr_i$. 
    \begin{compactenum}[a)]
        \item If $\tr_i$ writes to $\obj$ after releasing it, and $\tr_j$
        commits, then $\tr_j$ is not legal, and therefore $\hist$ does not
        satisfy VWC.
        \item If $\tr_i$ writes to $\obj$ after releasing it, and $\tr_j$
        aborts, then the causal past of $\tr_j$ contains $\tr_i$, and $\tr_j$
        reads an illegal (stale) value of $\obj$ from $\tr_i$, so $\hist$ does
        not satisfy VWC.
    \end{compactenum}
    Therefore, any history $\hist$ containing $\tr_i$, such that $\tr_i$
    releases $\obj$ early and modifies it after release does not satisfy VWC.
    Hence, by \rdef{def:overwriting-support}, VWC does not support overwriting.
\end{proof}

While VWC supports early release, there are severe
limitations to this capability. That is, VWC does not allow a transaction that
released early to subsequently abort for any reason.

\begin{theorem} \label{thm:vwc-no-abort}
    VWC does not support aborting early release
\end{theorem}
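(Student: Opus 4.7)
The plan is to proceed by contradiction, mirroring the structure of the proof of \rthm{thm:vwc-lu-release}. Assume VWC supports aborting early release. Then by \rdef{def:aborting-support} together with \rdef{def:early-release}, there is a VWC history $\hist$ containing transactions $\tr_i, \tr_j$ and a prefix $P$ of $\hist$ such that $\op_i = \fwop{i}{\obj}{\val}{\ok_i} \in P|\tr_i$, $\op_j = \frop{j}{\obj}{\val} \in P|\tr_j$ with $\op_i$ preceding $\op_j$ in $P$, $\tr_i$ is live in $P$, and additionally $\ab_i \in \hist|\tr_i$. So $\tr_j$ reads $\val$ from $\obj$, and by the unique writes assumption the only transaction in $\hist$ that wrote $\val$ to $\obj$ is $\tr_i$, which is aborted in $\hist$.

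Next I would split on the status of $\tr_j$ in $\hist$. In the first case, $\tr_j$ commits. Since VWC requires that committed transactions be serializable and preserve real-time order, there must exist a sequential witness in which $\tr_j$ is legal; but legality of $\frop{j}{\obj}{\val}$ demands that some committed transaction preceding $\tr_j$ have written $\val$ to $\obj$. By unique writes, no transaction other than $\tr_i$ wrote $\val$ to $\obj$, and $\tr_i$ is aborted, so no such witness can exist, contradicting VWC. In the second case, $\tr_j$ is aborted (the commit-pending and live cases reduce to this via $\mathit{Compl}(\hist)$). Then VWC requires a legal linear extension of the causal past $C(\hist, \tr_j)$. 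Either $\tr_i \in C(\hist, \tr_j)$, in which case the causal past fails the legality condition since $\tr_i$ is aborted and $i \neq j$; or $\tr_i \notin C(\hist, \tr_j)$, in which case no transaction in the causal past wrote $\val$ to $\obj$ (again by unique writes), so no linear extension can make $\op_j$'s read legal.

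Both cases yield a contradiction, so no such VWC history $\hist$ can exist, and by \rdef{def:aborting-support} VWC does not support aborting early release. The main obstacle is handling the case analysis around $\tr_j$'s causal past cleanly: the definition of causal past given in the excerpt appeals to ``precede in $\hist$,'' yet $\tr_i$ is live (hence concurrent with $\tr_j$) at the moment of the read, so I need to argue symmetrically that whether or not VWC's causal past happens to include $\tr_i$, the legality requirement fails. The preceding proof of \rthm{thm:vwc-lu-release} treats the analogous situation by placing $\tr_i$ in $C(\hist, \tr_j)$, which suggests the intended reading, and I would invoke that same convention for consistency.
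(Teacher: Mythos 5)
Your proposal is correct and follows essentially the same route as the paper's proof: a contradiction argument with a case split on whether $\tr_j$ commits (sequential witness cannot be legal) or aborts (causal past cannot be legal). The only difference is that you explicitly handle both possibilities for whether $\tr_i$ lies in $C(\hist,\tr_j)$ and invoke unique writes, which fills in a detail the paper's proof asserts without elaboration.
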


\begin{proof}
    Given a history $\hist$ that satisfies VWC and a transaction $\tr_i \in
    \hist$ that releases variable $\obj$ in $\hist$, let us assume for the sake
    of contradiction that
    $\tr_i$ eventually aborts.  By
    \rdef{def:early-release}, there is some $\tr_j$ in $\hist$ that reads from
    $\tr_i$. If $\tr_i$ eventually aborts, then $\tr_j$ reads from an aborted
    transaction.
    \begin{compactenum}[a)]
        \item If $\tr_j$ eventually aborts, then its causal past contains two
        aborted transactions ($\tr_i$ and $\tr_j$) and is, therefore, illegal.
        Hence $\hist$ does not satisfy VWC, which is a contradiction.
        \item If $\tr_j$ eventually commits, then the sequential witness
        history is also illegal. Hence $\hist$ does not satisfy VWC, which is a
        contradiction.
    \end{compactenum}
    Therefore, if $\tr_i$ eventually aborts, $\hist$ does not satisfy VWC,
    which is a contradiction. Thus, since a VWC history cannot contain an
    abortable transaction that releases a variable early.
    Hence, by \rdef{def:aborting-support}, VWC does not support aborting early
    release. 
\end{proof}

VWC does not allow for transactions that release early to abort, which we
consider to be an impractical assumption in some TM systems and an overstrict
requirement in general.

\subsection{Live Opacity}
\label{sec:live-opacity}

\emph{Live opacity} was introduced in \cite{DFK14} as part of a set of
consistency conditions and safety properties that were meant to regulate the
ability of transactions to read from live transactions. The work analyzes a
number of properties and for each one presents a commit oriented variant that
forbids early release and a live variant that allows it. Here, we concentrate
on live opacity, since it best fits alongside the other properties presented
here, however our conclusions will apply to the remainder of live properties.

Let $\hist|(\tr_i, r)$ be the longest subsequence of $\hist|\tr_i$ containing
only read operation executions (possibly pending), with the exclusion of the
last read operation if its response event is $\ab_i$. Let $\hist|(\tr_i, gr)$
be a subsequence of $\hist|(\tr_i, r)$ that contains only non-local operation
executions.
Let $\tr_i^r$ be a transaction that invokes the same transactional operations
as those invoked in $\hist|(\tr_i, r) \cdot [\inv{i}{}{\tryC_i}]$ if
$\hist|(\tr_i, r) \neq \varnothing$, or $\varnothing$ otherwise.
Let $\tr_i^{\mathit{gr}}$ be a transaction that invokes the same transactional
operations as those invoked in $[\init_i\to\ok_i] \cdot \hist|(\tr_i,
\mathit{gr}) \cdot [\tryC_i \to \co_i]$ if $\hist|(\tr_i, \mathit{gr}) \neq
\varnothing$, or $\varnothing$ otherwise.

Given a history $\hist$, a transaction $\tr_i \in \hist$, and a complete local
operation execution $\op = \frop{i}{\obj}{\val}$, we say the latter's response
event $\res{i}{}{\val}$ is \emph{legal} if the last preceding write operation
in $\hist|\tr_i$ writes $\val$ to $\obj$. 
We say sequential history $S$ justifies the serializability of history $\hist$
when there exists a history $\hist'$ that is a subsequence of $\hist$ s.t.
$\hist$' contains invocation and response events issued and received by
transactions committed in $\hist$, and $S$ is a legal history equivalent to
$\hist'$. 

\begin{definition}[Live Opacity]\label{def:live-opacity}
    A history $\hist$ is live opaque iff, there
    exists a sequential history $S$ that preserves the real time order of
    $\hist$ and justifies that $\hist$ is serializable
    and all of the following hold:

    \begin{enumerate}[a) ]
        \item We can extend history $S$ to get a sequential history $S'$ such that:
            \begin{itemize}[--]
                \item for each transaction $\tr_i \in \hist$ s.t. $\tr_i
                \not\in S$, $\tr_i^{\mathit{gr}} \in S'$,
                \item if $<$ is a partial order induced by the real time order
                of $S$ in such a way that for each transaction $\tr_i \in
                \hist$ s.t. $\tr_i \not\in S$ we replace each instance of
                $\tr_i$ in the set of pairs of the real time order of $\hist$
                with transaction $\tr_i^{\mathit{gr}}$, then $S'$ respects $<$,
                \item S' is legal.
            \end{itemize}
        \item For each transaction $\tr_i \in \hist$ s.t. $\tr_i \not\in S$ and
        for each operation $\op$ in $\tr_i^{r}$ that is not in $\tr_i^{\mathit{gr}}$,
        the response for $\op$ is legal.
    \end{enumerate}
\end{definition}

\begin{figure}
\begin{center}
\begin{tikzpicture}
     \draw
           (0,2)        node[tid]       {$\tr_i$}
                        node[aop]       {$\init_i$} %
                        node[dot]       {} 

      -- ++(1.25,0)     node[aop]       {$\twop{i}{\obj}{1}$}
                        node[dot] (wi1) {}

      -- ++(1.25,0)     node[aop]       {$\tryC_i\!\to\!\co_i$}
                        node[dot]       {}         
                        ;

     \draw
           (1.00,1)     node[tid]       {$\tr_j$}
                        node[aop]       {$\init_j$} %
                        node[dot]       {} 

      -- ++(1.25,0)     node[aop]       {$\trop{j}{\obj}{1}$}
                        node[dot] (rj1) {}

      -- ++(1.25,0)     node[aop]       {$\tryC_i\!\to\!\co_i$}
                        node[dot]       {} 
                        ;

     \draw[hb] (wi1) \squiggle (rj1);
\end{tikzpicture}
\end{center}
 \caption{\label{fig:live-opaque-history} Live opaque history with early
          release.}
\end{figure}
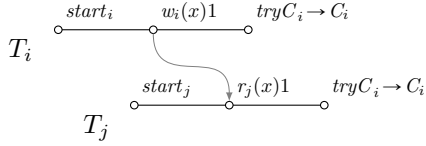

\begin{theorem} \label{thm:live-opacity-early-release}
    Live opacity supports early release. 
\end{theorem}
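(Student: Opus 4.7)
The plan is to follow the same template used in the proofs for \rthm{thm:serializability-early-release} and \rthm{thm:vwc-early-release}: exhibit a concrete witness history, verify it satisfies live opacity, and then observe that the witness exhibits early release per \rdef{def:early-release}. The natural candidate is the history $\hist$ depicted in \rfig{fig:live-opaque-history}, where $\tr_i$ writes $1$ to $\obj$ and commits while $\tr_j$ concurrently reads $1$ from $\obj$ (with the read happening after $\tr_i$'s write but before $\tr_i$'s $\tryC_i$) and then commits.

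First I would set $S = \hist|\tr_i \cdot \hist|\tr_j$ and argue that $S$ justifies serializability of $\hist$: both $\tr_i$ and $\tr_j$ are committed in $\hist$, so the relevant subsequence $\hist'$ contains all of their operation events, $S \equiv \hist'$ trivially, and $S$ is legal since $\tr_j$'s read returns $1$, which is exactly the value written by the immediately preceding committed transaction $\tr_i$ in $S$. Preservation of the real-time order is vacuous because $\tr_i$ and $\tr_j$ are concurrent in $\hist$ (neither's last event precedes the other's first event), so $\prec_\hist$ is empty on $\{\tr_i, \tr_j\}$.

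Next I would verify clauses (a) and (b) of \rdef{def:live-opacity}. Because every transaction in $\hist$ is already in $S$, the set of transactions $\tr_i \in \hist$ with $\tr_i \not\in S$ is empty. Consequently, condition (a) can be discharged simply by taking $S' = S$: there are no $\tr_i^{\mathit{gr}}$ to add, the induced partial order $<$ coincides with the (empty) real-time order on $S$, and legality of $S'$ was already established. Condition (b) is vacuously true for the same reason, since it quantifies only over transactions absent from $S$. Thus $\hist$ is live opaque.

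Finally, to invoke \rdef{def:release-support}, I would point to the prefix $P$ of $\hist$ ending just after the response event $\res{j}{}{1}$ of $\tr_j$'s read: in $P$, $\tr_i$ has not yet invoked $\tryC_i$ and is therefore live, while $P|\tr_i$ contains the complete write $\fwop{i}{\obj}{1}{\ok_i}$ which precedes the complete non-local read $\frop{j}{\obj}{1}$ in $P|\tr_j$. Hence $\tr_i$ releases $\obj$ early in $\hist$, so live opacity supports early release. I do not anticipate a real obstacle; the only subtle point is ensuring that the quantifiers in clauses (a) and (b) of \rdef{def:live-opacity} are vacuous, which is why I deliberately chose a witness in which both transactions commit.
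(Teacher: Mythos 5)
Your proposal is correct and follows essentially the same route as the paper's proof: the same witness history from \rfig{fig:live-opaque-history}, the same sequential history $S = \hist|\tr_i \cdot \hist|\tr_j$, the observation that the real-time order is empty, the choice $S'=S$, and the same prefix argument for early release. The only cosmetic difference is that you discharge clause (b) of \rdef{def:live-opacity} by noting the quantifier over transactions absent from $S$ is empty, whereas the paper notes there are no local reads; both are valid.
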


\begin{proof} %
    Let history $\hist$ be that represented in \rfig{fig:live-opaque-history}.
    Since there is a transaction $\tr_i \in \hist$ that writes $1$ to $\obj$
    and a transaction $\tr_j$ that reads $1$ from $\obj$ before $\tr_i$
    commits, then there is a prefix $P$ of $\hist$ that meets
    \rdef{def:early-release}. Therefore $\tr_i$ releases $\obj$ early in
    $\hist$.

    Let $S$ be a sequential history s.t. $S = \hist|\tr_i \cdot \hist|\tr_j$.
    Since the real-time order of $H$ is $\varnothing$, then, trivially, $S$
    preserves the real-time order of $\hist$.
    Since $\vis{S}{\tr_i}$ contains only $\hist|\tr_i$ and therefore only a
    single write operation execution and no reads, then it is legal and $\tr_i$
    in $S$ is legal in $S$.
    Furthermore, $\vis{S}{\tr_j}$ is such that $\vis{S}{\tr_j} = \hist|\tr_i
    \cdot \hist|\tr_j$ and contains a read operation $\frop{j}{\obj}{1}$
    preceded by the only write operation
    $\fwop{i}{\obj}{1}{\ok_i}$, so
    $\vis{S}{\tr_j}$ is legal, and, consequently, $\tr_j$ in $S$ is legal in
    $S$.
    Thus, all transactions in $S$ are legal in $S$, so $\hist$ is serializable.
    
    Let $S'$ be a sequential history that extends $S$ in accordance to
    \rdef{def:live-opacity}. Since there are no transactions in $S'$ that are
    not in $S$, then $S'=S$. Thus, since every transaction in $S$ is legal in
    $S$, then every transaction in $S'$ is legal in $S'$. Trivially, $S'$ also
    preserves the real time order of $S$. Therefore, the condition
    \rdef{def:live-opacity}a is met.
    Since there are no local read operations in $S$, then condition
    \rdef{def:live-opacity}b is trivially met as well.
    Therefore, $\hist$ is live opaque.

    Since $\hist$ is both live opaque and contains a transaction that releases
    early, then the theorem holds.
\end{proof}

\begin{theorem} \label{thm:live-opacity-overwriting}
    Live opacity does not support overwriting.
\end{theorem}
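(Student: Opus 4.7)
The plan is to argue by contradiction. Assume live opacity supports overwriting, so there exists a live opaque history $\hist$ containing transactions $\tr_i, \tr_j$ that realise the pattern of \rdef{def:overwriting-support}: $\tr_i$ executes $\fwop{i}{\obj}{\val}{\ok_i}$ then $\fwop{i}{\obj}{\val'}{\ok_i}$ and releases $\obj$ early, while $\tr_j$ executes $\frop{j}{\obj}{\val}$ between those two writes. Because writes are unique and $\val \neq \val_0$, the read $\frop{j}{\obj}{\val}$ in $\hist|\tr_j$ cannot be preceded by any write on $\obj$ by $\tr_j$ itself (such a write would force the read to return that written value by the sequential specification), so the read is non-local and therefore appears inside $\tr_j^{\mathit{gr}}$.

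Let $S$ and $S'$ be any pair of sequential histories that witness live opacity of $\hist$ as in \rdef{def:live-opacity}; I would then split on whether $\tr_j$ is committed in $\hist$. If $\tr_j$ is committed, $\tr_j \in S$ and legality of $S$ forces the read $\frop{j}{\obj}{\val}$ to be preceded in $S$ by a write of $\val$ to $\obj$; unique writes pin this write down as $\fwop{i}{\obj}{\val}{\ok_i}$, which in turn forces $\tr_i$ into $S$ as a contiguous block containing both of its writes on $\obj$ in their original order. If the $\tr_i$ block precedes $\tr_j$ in $S$, the most recent write of $\obj$ before $\tr_j$'s read is $\val'$ rather than $\val$, contradicting legality; if it follows, no write on $\obj$ precedes the read in $S$, so sequential specification forces the returned value to be $\val_0 \neq \val$, again a contradiction.

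If $\tr_j$ is not committed, then $\tr_j \not\in S$ and \rdef{def:live-opacity} requires $\tr_j^{\mathit{gr}} \in S'$ with $S'$ legal. The non-local read of $\val$ from $\obj$ inside $\tr_j^{\mathit{gr}}$ must still be preceded in $S'$ by a write $\fwop{\cdot}{\obj}{\val}{\ok}$, and by unique writes this can only be $\tr_i$'s first write. If $\tr_i$ is committed, $\tr_i$ appears as a block in $S \subseteq S'$ and the block argument from the previous case applies verbatim; if $\tr_i$ is not committed, only $\tr_i^{\mathit{gr}}$ joins $S'$, and by construction $\tr_i^{\mathit{gr}}$ contains only $\tr_i$'s non-local read operations and no writes at all, so no justifying write of $\val$ exists in $S'$. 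The hard step is precisely this last subcase: one must exploit the fact that the $\mathit{gr}$-extension discards every write of a live or aborted transaction, so that, combined with unique writes, no sequential extension $S'$ admitted by \rdef{def:live-opacity} can supply the required witness, for any choice of $S$ and $S'$ permitted by that definition.
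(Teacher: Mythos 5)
Your proof is correct and follows essentially the same route as the paper's: assume overwriting occurs in a live opaque history, show $\tr_j$ cannot be legal in any serialization $S$ (so it must be aborted and replaced by $\tr_j^{\mathit{gr}}$ in $S'$), and then derive a contradiction from the legality requirement on $S'$ using unique writes. The only substantive difference is that you explicitly handle the subcase where $\tr_i$ is itself uncommitted (so $\tr_i^{\mathit{gr}}$ contributes no writes and the read of $\val$ has no witness at all), which the paper's proof glosses over by assuming both of $\tr_i$'s writes appear in $S'|\tr_i$; this makes your argument slightly more complete without changing its structure.
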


\begin{proof} %
    For the sake of contradiction, let us assume that there is a history (with
    unique writes) $\hist$ that is live opaque and, from
    \rdef{def:overwriting-support}, contains some transaction $\tr_i$ that
    writes value $\val$ to some variable $\obj$ and releases $\obj$ early and
    subsequently executes another write operation writing $\val'$ to $\obj$
    where the second write follows a read operation executed by transaction
    $\tr_j$ reading $\val$ from $\obj$.

    Since $\hist$ is live opaque there exists a sequential history $S$ that
    justifies the serializability of $\hist$.
    There cannot exist a sequential history $S$ where $\tr_j$ reads from $\obj$
    between two writes to $\obj$ executed by $\tr_i$, because there cannot
    exist a legal $\vis{S}{\tr_j}$, so $\tr_j$ would not
    be legal in $S$. Therefore, $\tr_j$ must be aborted in $\hist$ and
    therefore $\tr_j$ is not in any sequential history $S$ that justifies the
    serializability of $\hist$.
    
    Since $\tr_j$ is in $\hist$ but not in $S$, then given any sequential
    extension $S'$ of $S$ in accordance to \rdef{def:live-opacity} $\tr_j$ is
    replaced in $S'$ by $\tr_j^{gr}$ which reads $\val$ from $\obj$ and finally
    commits.
    However, since the write operation execution writing $\val$ to $\obj$ in 
    $\tr_i$ is followed in $S'|\tr_i$ by another write operation execution that writes
    $\val'$ to $\obj$, then there cannot exist a $\vis{S'}{\tr_j^{\mathit{gr}}}$ that is
    legal. Thus $\tr_j^{\mathit{gr}}$ in $S'$ cannot be legal in $S'$, which contradicts
    \rdef{def:live-opacity}a. Thus, $\hist$ is not live opaque, which is a
    contradiction.

    Therefore $\hist$ cannot simultaneously be live opaque and contain a
    transaction with early release and overwriting.
\end{proof}

\begin{theorem} \label{thm:live-opacity-aborting}
    Live opacity does not support aborting early release.
\end{theorem}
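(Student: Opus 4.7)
The plan is to argue by contradiction along the same lines as the VWC case (\rthm{thm:vwc-no-abort}), exploiting the two escape hatches live opacity offers for uncommitted transactions and showing neither suffices. Assume some live opaque history $\hist$ contains a transaction $\tr_i$ that releases some variable $\obj$ early and has $\ab_i \in \hist|\tr_i$. By \rdef{def:early-release} applied to an appropriate prefix $P$ of $\hist$, there is some transaction $\tr_j$ with a complete non-local read $\op_j = \frop{j}{\obj}{\val}$ preceded in $P$ by the write $\op_i = \fwop{i}{\obj}{\val}{\ok_i}$ of $\tr_i$. By the unique-writes assumption, $\op_i$ is the only write of $\val$ to $\obj$ in all of $\hist$.

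The structural observation I would then pin everything on is that this unique write of $\val$ to $\obj$ cannot appear in any witness admitted by \rdef{def:live-opacity}. Since $\tr_i$ is aborted, it is not among the transactions committed in $\hist$, so it cannot belong to any sequential history $S$ that justifies the serializability of $\hist$. Moreover, when $S$ is extended to $S'$ per condition (a), $\tr_i$ is represented only by $\tr_i^{\mathit{gr}}$, which by its very construction from $\hist|(\tr_i, \mathit{gr})$ contains only non-local read operations, never $\op_i$. Hence no write of $\val$ to $\obj$ occurs in $S$ or in any admissible $S'$.

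I would then split on the fate of $\tr_j$ to land the contradiction. If $\tr_j$ is committed in $\hist$, then $\tr_j \in S$, so $\vis{S}{\tr_j}$ must be legal and $S|\obj \in \mathit{Seq}(\obj)$; but the non-local read $\op_j$ needs a preceding committed write of $\val$ to $\obj$, and no such write exists in $S$, violating serializability. If $\tr_j$ is not committed, then $\tr_j \notin S$, so $\tr_j^{\mathit{gr}}$ appears in $S'$; because $\op_j$ is non-local it survives into $\tr_j^{\mathit{gr}}$, which commits in $S'$ and must be legal, again requiring a preceding write of $\val$ to $\obj$ in $S'$ that cannot exist. Either branch contradicts \rdef{def:live-opacity}, so no such $\hist$ exists, and by \rdef{def:aborting-support} live opacity does not support aborting early release.

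The main obstacle is really just keeping the bookkeeping of $\tr_i^{\mathit{gr}}$ and $\tr_j^{\mathit{gr}}$ straight: one has to verify that the construction in \rdef{def:live-opacity} never reinstates a write issued by an aborted transaction, and that the non-locality clause of \rdef{def:early-release} guarantees $\op_j$ is preserved in $\tr_j^{\mathit{gr}}$. Once those two facts are read off the definitions, the case analysis is routine.
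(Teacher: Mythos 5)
Your proposal is correct and follows essentially the same route as the paper's proof: the aborted releasing transaction $\tr_i$ is excluded from $S$ and replaced in $S'$ by the write-free $\tr_i^{\mathit{gr}}$, so the unique write of $\val$ to $\obj$ vanishes and the reader's non-local read of $\val$ cannot be legalized. Your explicit case split on whether $\tr_j$ commits merely spells out what the paper compresses into the observation that the read survives into $S'$ either way.
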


\begin{proof} %
    For the sake of contradiction, let us assume that there is a history (with
    unique writes) $\hist$ that is live opaque and, from
    \rdef{def:aborting-support}, contains some transaction $\tr_i$ that
    writes value $\val$ to some variable $\obj$ and releases $\obj$  and
    subsequently aborts in $\hist$.

    Let $S$ be any sequential history that justifies the serializability of
    $\hist$, and let $S'$ be any sequential extension $S'$ of $S$ in accordance
    to \rdef{def:live-opacity}.
    Since $\tr_i$ aborts in $\hist$, then it is not in $S$, and therefore it is
    replaced in $S'$ by $\tr_i^{\mathit{gr}}$.
    Since, by construction, $\tr_i^{\mathit{gr}}$ does not contain any write operation executions, there
    is no write operation execution  writing $\val$ to $\obj$ in $S'$.
    Since $\tr_i$ released $\obj$ early in $\hist$ there is a transaction
    $\tr_j$ in $\hist$ that executes a read operation reading $\val$ from
    $\obj$ and the same read operation is in $S'$. 
    But since there  is no write operation execution writing $\val$ to $\obj$
    in $S'$, no transaction containing a read operation execution reading
    $\val$ from $\obj$ can be legal in $S'$.
    Thus, $\hist$ is not live opaque, which is a contradiction.

    Therefore $\hist$ cannot be simultaneously live opaque and contain a
    transaction with early release that aborts.
\end{proof}

Like with VWC, live opacity does not allow transactions that release early to
abort, which we consider too strict a condition.

\subsection{Elastic Opacity}
\label{sec:elastic-opacity}

\emph{Elastic opacity} is a safety property based on opacity, that was introduced to
describe the safety guarantees of elastic transactions \cite{FGG09}. The
property allows to relax the atomicity %
 requirement of transactions to allow
each of them to execute as a series of smaller transactions. 
An \emph{elastic transaction} $\tr_i$ is split into a sequence of subhistories called a
\emph{cut} denoted $\cut{\hist}{i}$, where each subhistory represents a "subtransaction." In brief, a cut that contains more than one operation execution is
\emph{well-formed} if all subhistories are longer than one operation execution,
all the write operations are in the same subhistory, and the first operation execution on any variable in every subhistory 
is not a write operation, except possibly in the first subhistory. 
A well-formed cut of some transaction $\tr_i$ is consistent in some history
$\hist$, if given any two operation executions $\op_i$ and $\op_i'$ on $\obj$
in any subhistories of the cut, no transaction $\tr_j$ ($i\neq j$) executes a
write operation $\op_j$ on $\obj$ s.t.  $\op_i \prec_\hist \op_j \prec_\hist
\op_i'$.  In addition,  given any two operation executions $\op_i$ and $\op_i'$
on $\obj, \objy$ respectively, no two transactions $\tr_k, \tr_l$ ($l\neq i$,
$k\neq i$) execute writes $\op_k$ on $\obj$ and $\op_l$ on $\objy$, s.t.  $\op_i
\prec_\hist \op_k \prec_\hist \op_i'$ and $\op_i \prec_\hist \op_l \prec_\hist
\op_i'$.
A \emph{cutting function} $f_\cutf$ takes a history $\hist$ as an argument and
produces a new history $\hist_f$ where for each transaction $\tr_i \in \hist$
declared as elastic,
$\tr_i$ is replaced in $\hist_f$ with the transactions resulting from the cut $\cut{\hist}{i}$
of $\tr_i$. 
If some transaction is committed (aborted) in $\hist$, then all transactions
resulting from its cut are committed (aborted) in $f_\cutf(\hist)$.
Then, elastic opacity is defined as follows:

\begin{definition} [Elastic Opacity] \label{def:elastic-opacity}
    History $\hist$ is elastic opaque iff there exists a cutting function
    $f_\cutf$ that replaces each elastic transaction $\tr_i$ in $\hist$
    with its consistent cut $\cut{\hist}{i}$, such that history $f_\cutf(\hist)$ is opaque.
\end{definition}

\begin{figure}
\begin{center}
\begin{tikzpicture}
     \draw
           (0,2)        node[tid]       {$\tr_i$}
                        node[aop]       {$\init_i$} %
                        node[dot]       {} 

      -- ++(1.25,0)     node[aop]       {$\trop{i}{\objy}{0}$}
                        node[dot]       {}

      -- ++(1.25,0)     node[aop]       {$\twop{i}{\obj}{1}$}
                        node[dot] (wi1) {}

      -- ++(5.50,0)     node[aop]       {$\trop{i}{\obj}{1}$}
                        node[dot] (ri1) {}

      -- ++(1.25,0)     node[aop]       {$\trop{i}{\objy}{0}$}
                        node[dot]       {}

      -- ++(1.25,0)     node[aop]       {$\tryC_i\!\to\!\co_i$}
                        node[dot]       {}         
                        ;

     \draw
           (4.00,1)     node[tid]       {$\tr_j$}
                        node[aop]       {$\init_j$} %
                        node[dot]       {} 

      -- ++(1.25,0)     node[aop]       {$\trop{j}{\obj}{1}$}
                        node[dot] (rj1) {}

      -- ++(1.25,0)     node[aop]       {$\tryC_j\!\to\!\co_j$}
                        node[dot]       {} 
                        ;

     \draw[hb] (wi1) \squiggle (rj1);
     \draw[hb] (rj1) \squiggleup (ri1);
\end{tikzpicture}
\end{center}
 \caption{\label{fig:elastic-opaque-history} Elastic opaque history with early
          release.}
\end{figure}
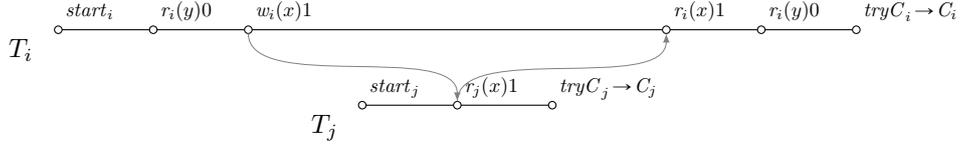
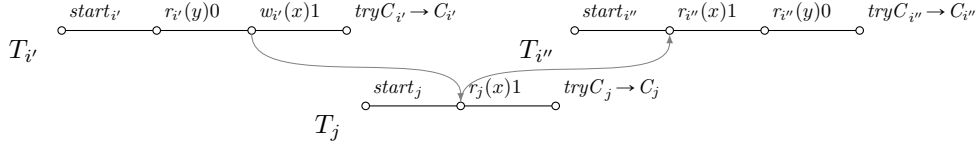
\begin{figure}
\begin{center}
\begin{tikzpicture}
     \draw
           (0,2)        node[tid]       {$\tr_{i'}$}
                        node[aop]       {$\init_{i'}$} %
                        node[dot]       {} 

      -- ++(1.25,0)     node[aop]       {$\trop{i'}{\objy}{0}$}
                        node[dot]       {}

      -- ++(1.25,0)     node[aop]       {$\twop{i'}{\obj}{1}$}
                        node[dot] (wi1) {}
                        
      -- ++(1.25,0)     node[aop]       {$\tryC_{i'}\!\to\!\co_{i'}$}
                        node[dot]       {} 
                        ;

     \draw
           (6.75,2)        node[tid]       {$\tr_{i''}$}
                        node[aop]       {$\init_{i''}$} %
                        node[dot]       {}
                         
      -- ++(1.25,0)     node[aop]       {$\trop{i''}{\obj}{1}$}
                        node[dot] (ri1) {}

      -- ++(1.25,0)     node[aop]       {$\trop{i''}{\objy}{0}$}
                        node[dot]       {}

      -- ++(1.25,0)     node[aop]       {$\tryC_{i''}\!\to\!\co_{i''}$}
                        node[dot]       {}         
                        ;

     \draw
           (4.0,1)      node[tid]       {$\tr_j$}
                        node[aop]       {$\init_j$} %
                        node[dot]       {} 

      -- ++(1.25,0)     node[aop]       {$\trop{j}{\obj}{1}$}
                        node[dot] (rj1) {}

      -- ++(1.25,0)     node[aop]       {$\tryC_j\!\to\!\co_j$}
                        node[dot]       {} 
                        ;

     \draw[hb] (wi1) \squiggle (rj1);
     \draw[hb] (rj1) \squiggleup (ri1);
\end{tikzpicture}
\end{center}
\caption{\label{fig:elastic-opaque-cut} History after applying a cutting function.}
\end{figure}

\begin{theorem} \label{thm:elastic-opacity-early-release}
    Elastic opacity supports early release.
\end{theorem}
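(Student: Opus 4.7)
The plan is to exhibit the history $\hist$ shown in \rfig{fig:elastic-opaque-history} as a witness and construct a cutting function that produces the opaque history of \rfig{fig:elastic-opaque-cut}. First, I would verify that $\tr_i$ releases $\obj$ early in $\hist$: taking the prefix $P$ ending at the response to $\tr_j$'s read, $\tr_i$ is live in $P$, the complete write $\fwop{i}{\obj}{1}{\ok_i}$ is in $P|\tr_i$, and the complete non-local read $\frop{j}{\obj}{1}$ is in $P|\tr_j$ and preceded by the write in $P$, so \rdef{def:early-release} is satisfied.

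Next, I would define the cutting function $f_\cutf$ that declares $\tr_i$ elastic and splits it along $\cut{\hist}{i}$ into two subtransactions $\tr_{i'}$ and $\tr_{i''}$ as depicted in \rfig{fig:elastic-opaque-cut}: $\tr_{i'}$ covers the prefix through $\twop{i}{\obj}{1}$ and is closed with $\tryC_{i'}\!\to\!\co_{i'}$, while $\tr_{i''}$ begins with a fresh $\init_{i''}\!\to\!\ok_{i''}$ and covers the suffix from $\trop{i}{\obj}{1}$ through the original commit. I would then verify well-formedness: each subhistory contains more than one operation execution, the unique write $\twop{i'}{\obj}{1}$ sits inside $\tr_{i'}$, and the first operation of $\tr_{i''}$ on any variable is a read. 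Consistency is immediate because the only other transaction $\tr_j$ merely reads $\obj$, so no concurrent writer interleaves between $\tr_i$'s accesses on $\obj$ or $\objy$.

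To finish, I would show that $f_\cutf(\hist)$ is opaque by using the sequential witness $S = f_\cutf(\hist)|\tr_{i'} \cdot f_\cutf(\hist)|\tr_j \cdot f_\cutf(\hist)|\tr_{i''}$. This order respects the real-time order of $f_\cutf(\hist)$ since $\tr_{i'}$ completes before $\tr_j$ starts, which in turn completes before $\tr_{i''}$ starts. For legality: $\tr_{i'}$ reads the initial value $0$ of $\objy$ and writes $1$ to $\obj$; $\tr_j$ reads $1$ from $\obj$ written by the committed $\tr_{i'}$; $\tr_{i''}$ reads $1$ from $\obj$ (last committed writer is $\tr_{i'}$) and $0$ from $\objy$ (no intervening write). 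The same sequence legalises every prefix, so $f_\cutf(\hist)$ is opaque, and by \rdef{def:elastic-opacity} $\hist$ is elastic opaque; combined with early release of $\tr_i$, this satisfies \rdef{def:release-support}. The main obstacle I anticipate is verifying consistency of the cut — in general one must check that no concurrent write falls between any two of $\tr_i$'s accesses on a single variable, nor between accesses on two different variables — but in this witness the check is essentially vacuous because $\tr_j$ writes nothing.
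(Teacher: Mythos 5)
Your proposal is correct and follows essentially the same route as the paper's own proof: the same witness history, the same two-piece cut of $\tr_i$ into $\tr_{i'}$ and $\tr_{i''}$, and the same sequential witness $S = f_\cutf(\hist)|\tr_{i'} \cdot f_\cutf(\hist)|\tr_j \cdot f_\cutf(\hist)|\tr_{i''}$. Your additional explicit checks (the prefix witnessing early release, and legality of prefixes of $f_\cutf(\hist)$) only make the argument slightly more thorough than the paper's version.
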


\begin{proof} %
    Let $\hist$ be a transactional history with unique writes as shown in 
    \rfig{fig:elastic-opaque-history}. Let $\tr_i$ be an elastic transaction. 
    Let $\cut{\hist}{i}$ be a cut of subhistory $\hist|\tr_i$, such that: 
    \begin{equation*}
    \begin{split}
    \cut{\hist}{i} =  \{ 
     & [\init_{i'}\to\ok_{i'},\frop{i'}{\objy}{0},
        \fwop{i'}{\objx}{1}{\ok_{i'}},\tryC_{i'}\to\co_{i'}], \\
     & [\init_{i''}\to\ok_{i''},\frop{i''}{\objx}{1},
        \frop{i''}{\objy}{0},\tryC_{i''}\to\co_{i''}]\}.
    \end{split}
    \end{equation*}
    All subhistories of $\cut{\hist}{i}$ are longer than one operation, all the writes are
    in the first subhistory, and no subhistory starts with a write, so $\cut{\hist}{i}$ is 
    well-formed.
    Since there are no write operations outside of $\tr_i$, then it follows
    that $\cut{\hist}{i}$ is a consistent cut in $\hist$. Let $f_\cutf$ be any cutting
    function such that it cuts $\tr_i$ according to $\cut{\hist}{i}$, in which case
    $f_\cutf(\hist)$ is defined as in \rfig{fig:elastic-opaque-cut}. 
    Let $S$ be a sequential history s.t. $S = f_\cutf(\hist)|\tr_{i'} \cdot
    f_\cutf(\hist)|\tr_{j} \cdot f_\cutf(\hist)|\tr_{i''}$.
    Since $\tr_{i'}$ precedes $\tr_{i''}$ in $S$ as well as in $f_\cutf(\hist)$,
    and all other transactions are not real time ordered, $S$ preserves the
    real time order of $f_\cutf(\hist)$.
    Trivially, each transaction in $S$ is legal in $S$.
    Thus, $f_\cutf(\hist)$ is opaque by \rdef{def:opacity}, and in effect $\hist$
    is elastic opaque by \rdef{def:elastic-opacity}.
    Since in $\hist$ transaction $\tr_j$ reads $\obj$ from $\tr_i$ while
    $\tr_i$ is live, then, by \rdef{def:early-release}, $\tr_i$ releases $\obj$
    early in $\obj$. Hence, since $\hist$ is elastic opaque, elastic opacity
    supports early release, by \rdef{def:release-support}. 
\end{proof}

\begin{theorem} \label{thm:elastic-opacity-overwriting}
    Elastic opacity does not support overwriting.
\end{theorem}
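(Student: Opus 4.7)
The plan is to proceed by contradiction. Suppose there is an elastic opaque history $\hist$ exhibiting overwriting, so by \rdef{def:overwriting-support} there are transactions $\tr_i, \tr_j \in \hist$ and writes $\fwop{i}{\obj}{\val}{\ok_i}$ and $\fwop{i}{\obj}{\val'}{\ok_i}$ in $\hist|\tr_i$ with a read $\frop{j}{\obj}{\val}$ in $\hist|\tr_j$ that lies between them in $\hist$. By \rdef{def:elastic-opacity} there is a cutting function $f_\cutf$ such that $f_\cutf(\hist)$ is opaque.

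The key observation I would leverage is the well-formedness requirement on cuts, which demands that \emph{all write operations of a cut transaction lie in the same subhistory}. Therefore, regardless of how $\cut{\hist}{i}$ partitions $\tr_i$, the two writes $\fwop{i}{\obj}{\val}{\ok_i}$ and $\fwop{i}{\obj}{\val'}{\ok_i}$ must appear in a single subtransaction---call it $\tr_{i^*}$---of $f_\cutf(\hist)$. Consequently, in $f_\cutf(\hist)$, $\tr_{i^*}$ writes $\val$ to $\obj$, then the (possibly cut) transaction containing $\tr_j$'s read---call it $\tr_{j^*}$---reads $\val$ from $\obj$, then $\tr_{i^*}$ writes $\val'$ to $\obj$ and only afterward invokes $\tryC_{i^*}$.

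Hence, at the moment $\tr_{j^*}$ reads $\val$, the subtransaction $\tr_{i^*}$ has not yet invoked its commit and is live in the appropriate prefix of $f_\cutf(\hist)$. By \rdef{def:early-release}, $\tr_{i^*}$ releases $\obj$ early in $f_\cutf(\hist)$, so by \rdef{def:release-support}, $f_\cutf(\hist)$ is an opaque history supporting early release. This directly contradicts \rthm{thm:opacity-early-release}. Therefore, no elastic opaque history can contain overwriting, and, by \rdef{def:overwriting-support}, elastic opacity does not support overwriting.

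The main obstacle is simply being precise about the cutting of $\tr_j$: $\tr_j$ itself may be elastic and partitioned by $f_\cutf$, so I need to pick out the specific subtransaction $\tr_{j^*}$ whose subhistory contains the read of $\val$ and argue that the real-time order between $\tr_{j^*}$'s read and $\tr_{i^*}$'s second write (and thus $\tr_{i^*}$'s commit invocation) is inherited from $\hist$. Once that bookkeeping is in place, the reduction to \rthm{thm:opacity-early-release} is immediate, and no separate analysis of legality in sequential witnesses is required.
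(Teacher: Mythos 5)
Your proof is correct, but it takes a genuinely different route from the paper's. The paper, after also forcing both writes of $\tr_i$ into a single subtransaction $\tr_i'$ of the cut (it splits into cases (a) and (b), where (a) is immediately killed by well-formedness), proceeds by a direct legality analysis: it considers both possible serialization orders of $\tr_i'$ and the reading subtransaction $\tr_j'$ in any sequential witness $S$ equivalent to $f_\cutf(\hist)$, and shows that in either order $\vis{S}{\tr_j'}$ cannot be legal (the read of $\val$ is either not preceded by the write of $\val$, or is separated from it by the write of $\valu$), so $f_\cutf(\hist)$ is not final-state opaque. You instead observe that the subtransaction $\tr_{i^*}$ containing both writes cannot invoke $\tryC_{i^*}$ until after its second write, hence is still live when $\tr_{j^*}$'s non-local read of $\val$ completes, so $\tr_{i^*}$ releases $\obj$ early in $f_\cutf(\hist)$; this contradicts \rthm{thm:opacity-early-release} directly. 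Your reduction is shorter and avoids re-deriving the legality argument that the paper already carried out once for opacity; the paper's version is self-contained and does not depend on the earlier theorem. The bookkeeping you flag at the end (that $\tr_{j^*}$'s read stays non-local after cutting, and that the real-time relation between the read and $\tr_{i^*}$'s second write and commit invocation is inherited from $\hist$) is exactly the right thing to check and does go through, since the cutting function only re-attributes events to subtransactions and inserts commit/init events at subhistory boundaries.
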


\begin{proof} %
    For the sake of contradiction, let us assume that there is an elastic
    opaque history $\hist$ s.t. transaction $\tr_i$ writes value $\val$ to  some variable
    $\obj$ and releases it early in $\hist$. Furthermore, let us assume that there is overwriting, so after some
    transaction  $\tr_j$ reads $\val$ from $\obj$, $\tr_i$ writes $\valu$ to
    $\obj$.
    Since only elastic transactions can release early in elastic opaque
    histories, and $\tr_i$ releases early, $\tr_i$ is necessarily elastic.
    Thus, in any $f_\cutf(\hist)$ $\tr_i$ is replaced by a cut $\cut{\hist}{i}$. 
    
    The two writes on $\obj$ in $\tr_i$ are either
    \begin{inparaenum}[a) ]
        \item in two different subhistories in $\cut{i}{\hist}$, or
        \item in the same subhistory in $\cut{i}{\hist}$.
    \end{inparaenum}
    Since the definition of a consistent cut requires all writes on a single
    variable are within one subhistory of the cut, then in case (a),
    $\cut{i}{\hist}$ is inconsistent. Since by \rdef{def:elastic-opacity}
    elastic opaque histories are created using consistent cuts, then $\hist$ is
    not elastic opaque, which is a contradiction.

    In the case of (b), let us say that both writes are in a subhistory that is
    converted into transaction $\tr_i'$ in $f_\cutf(\hist)$. 
    Since $\tr_i$ releases $\obj$ early, then by \rdef{def:early-release},
    there is a transaction $\tr_j'$ in $f_\cutf(\hist)$ which executes a read
    on $\obj$ reading the value written by $\tr_i'$ in $f_\cutf(\hist)$. 
    Since we assume overwriting, the read operation on $\obj$ in $\tr_j'$ reads
    the value written by the first of the two writes in $\tr_i'$ and does
    so before the other write on $\obj$ is performed within $\cut{i}{\hist}$.
    Then, in any sequential history $S$ equivalent to $f_\cutf(\hist)$ either
    $\tr_j' \prec_S \tr_i'$ or $\tr_i' \prec_S \tr_j'$. In the former case
    $\tr_j'$ in $S$ is not legal in $S$, since the read on $\obj$ that yields
    value $\val$ will not be preceded by any operation that writes $\val$ to
    $\obj$ in any possible $\vis{S,\tr_j'}$.
    In the latter case $\tr_j'$ in $S$ is also not legal in $S$, since there
    will be a write operation writing $\valu$ to $\obj$ between the read on
    $\obj$ that yields value $\val$ and any operation that writes $\val$ to
    $\obj$ in $\vis{S,\tr_j'}$.
    Since $\tr_j'$ in $S$ is not legal in any $S$ equivalent to
    $f_\cutf(\hist)$, then, by \rdef{def:final-state-opacity}, $f_\cutf(\hist)$
    is not final-state opaque, and hence, by \rdef{def:opacity}, not opaque. In
    effect, by \rdef{def:elastic-opacity}, $\hist$ is not opaque, which is a
    contradiction.

    Thus, there cannot be an elastic opaque history $\hist$ with overwriting.
\end{proof}

\begin{theorem} \label{thm:elastic-opacity-overwriting}
    Elastic opacity does not support early release aborting.
\end{theorem}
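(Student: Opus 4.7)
The plan is to mirror the structure of the previous theorem (\rthm{thm:elastic-opacity-overwriting}), reducing aborting early release in an elastic opaque history to a legality failure in the opacity of the cut history. Assume for contradiction there is an elastic opaque $\hist$ containing a transaction $\tr_i$ that writes some value $\val$ to $\obj$, releases $\obj$ early, and eventually aborts. By \rdef{def:early-release} there exists a transaction $\tr_j$ in $\hist$ executing a non-local read $\frop{j}{\obj}{\val}$. Since only elastic transactions release early in elastic opaque histories, $\tr_i$ is elastic, so in any $f_\cutf$ producing an opaque $f_\cutf(\hist)$, $\tr_i$ is replaced by a consistent cut $\cut{\hist}{i}$, and by the convention on cutting functions all subtransactions resulting from $\cut{\hist}{i}$ are aborted in $f_\cutf(\hist)$.

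Next I would identify the particular subtransaction $\tr_{i'}$ in $\cut{\hist}{i}$ that contains $\fwop{}{\obj}{\val}{\ok}$; this is unambiguous because well-formed cuts keep all writes to $\obj$ in a single subhistory, and because of unique writes no other transaction writes $\val$ to $\obj$ anywhere in $f_\cutf(\hist)$. Similarly, if $\tr_j$ is elastic, let $\tr_{j'}$ be the subtransaction of its cut containing the read $\frop{}{\obj}{\val}$; otherwise let $\tr_{j'} = \tr_j$. The crucial observation is that $\tr_{i'}$ is aborted in $f_\cutf(\hist)$, while $\tr_{j'}$ still contains a non-local read that observes $\val$.

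Then I would invoke opacity of $f_\cutf(\hist)$. By \rdef{def:opacity} and \rdef{def:final-state-opacity}, some real-time-preserving sequential $S$ equivalent to a completion of $f_\cutf(\hist)$ makes every transaction (committed or aborted) legal in $S$. Consider $\vis{S}{\tr_{j'}}$: it contains $\tr_{j'}$ itself together with committed transactions preceding it in $S$, so the aborted $\tr_{i'}$ is excluded (since $i'\neq j'$). Because $\tr_{i'}$ was the sole writer of $\val$ to $\obj$ in $f_\cutf(\hist)$, the read $\frop{}{\obj}{\val}$ of $\tr_{j'}$ has no matching preceding write in $\vis{S}{\tr_{j'}}$, violating $\mathit{Seq}(\obj)$ and thus legality of $\tr_{j'}$. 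Hence $f_\cutf(\hist)$ is not opaque, so by \rdef{def:elastic-opacity} $\hist$ is not elastic opaque, contradicting the assumption. Together with \rdef{def:aborting-support}, this yields the theorem.

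The main obstacle is handling the case where $\tr_j$ is itself elastic: I must ensure that after cutting, the responsible read of $\val$ from $\obj$ still resides in a single subtransaction $\tr_{j'}$ that cannot be made legal by the exclusion of $\tr_{i'}$ from its visible prefix. This follows from well-formedness of the cut (an individual operation execution is not split across subhistories) and from unique writes (ruling out any alternative writer of $\val$ to $\obj$ that might rescue legality). Once these two points are established, the contradiction via the legality of $\tr_{j'}$ in $S$ goes through verbatim as in the overwriting proof.
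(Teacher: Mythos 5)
Your proof is correct and follows essentially the same route as the paper's: assume an elastic opaque history with an aborting early-release transaction, observe that all subtransactions of its cut abort and are therefore excluded from $\vis{S}{\cdot}$ of the reader, so the read of $\val$ has no justifying write and legality fails, contradicting opacity of $f_\cutf(\hist)$. Your additional care in locating the write within a single subtransaction of the cut and in handling the case where the reader is itself elastic is a refinement the paper glosses over, but it does not change the argument.
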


\begin{proof} %
    For the sake of contradiction, let us assume that there is an elastic
    opaque history $\hist$ s.t. transaction $\tr_i$ releases some variable
    $\obj$ early in $\hist$ and aborts. 
    Since $\tr_i$ releases early then it
    writes $\val$ to $\obj$, and there is another $\tr_j$ that executes a read
    on $\obj$ that returns $\val$ before $\tr_i$ aborts.
    Since only elastic transactions can release early in elastic opaque
    histories, and $\tr_i$ releases early, $\tr_i$ is necessarily elastic.
    If $\tr_i$ aborts in $\hist$, then all of the transactions resulting from
    its cut $\cut{\hist}{i}$ in $f_\cutf(\hist)$ also abort (by construction of
    $f_\cutf(\hist)$). 
    Therefore, for any sequential history $S$ equivalent to $f_\cutf(\hist)$, there
    is no subhistory $\hist' \in \cut{\hist}{i}$ s.t. $\hist' \subseteq \vis{S}{\tr_j}$,
    and in effect the read operation in $\tr_j$ on $\obj$ reading $\val$ is not
    preceded by a write operation writing $\val$ to $\obj$. Therefore,
    $\vis{S}{\tr_j}$ is illegal, so $\tr_j$ in $S$ is not legal in $S$, and
    thus, by \rdef{def:opacity} $f_\cutf(\hist)$ is not opaque.  
    Since $f_\cutf(\hist)$ is not opaque, then by \rdef{def:elastic-opacity},
    $\hist$ is not elastic opaque, which is a contradiction.
\end{proof}

Elastic opacity supports early release, but, since it does not guarantee
serializability (as shown in \cite{FGG09}), we consider it to be a relatively
weak property. This is contrary to our premise of finding a property that
allows early release and provides stronger guarantees than serializability.
Elastic transactions were proposed as an alternative to traditional transactions for implementing search structures, but we submit
that the restrictions placed on the composition of
elastic transactions 
and the need for transactions with early release to
be non-aborting 
put an unnecessary burden on general-purpose TM. In particular, for a cut to be
well-formed, it is necessary that all writes are executed in the same
subtransaction, and that no subtransaction starts with a write, which severely
limits how early release can be used and precludes scenarios that are
nevertheless intuitively correct.
In addition, elastic opacity requires that transactions which release early do
not subsequently abort.

\subsection{Database Properties}
\label{sec:properties-database}

We follow the discussion of TM safety properties with a brief foray into
database properties that deal with transaction consistency. Given that TM
properties tend not to be very helpful when describing the behavior of early
release, these consistency properties may be used to supplement that.

\emph{Recoverability} is a database property defined as below (following
\cite{Had88}):
\begin{definition}[Recoverability]\label{def:recoverability-short}
    History $\hist$ is \emph{recoverable} iff for any $\tr_i, \tr_j \in H$,
    s.t. $i\neq j$ and  $\tr_j$ reads from $\tr_i$, $\tr_i$ commits in $H$ before
    $\tr_j$ commits.
\end{definition}

Recoverability does not make requirements about values read by transactions, so
it necessarily supports early release, overwriting, and aborting early release.
It also allows histories that are not even serializable.
As such, it is too weak for application in TM.
Recoverability can be combined with serializability to restrict the order on
commits and aborts in serializable histories. The resulting consistency
condition is therefore stronger than serializability. However, it still allows
unrestricted early release, overwriting, and aborting early release, and thus is not
suitable for TM.

\emph{Avoiding cascading aborts} (ACA) \cite{BHG87} is a database property
defined as:
\begin{definition}[Avoiding Cascading Aborts]\label{def:aca-short}
    History $\hist$ Avoids Cascading Aborts iff for any $\tr_i, \tr_j \in
    \hist$ s.t. $i\neq j$ and $\tr_j$ reads from $\tr_i$, $\tr_i$ commits before the read.
\end{definition}

As with recoverability, ACA restricts reading from live transactions.
Therefore, ACA clearly removes all the scenarios encompassed by
\rdef{def:early-release}.
Since this is the only provision of ACA, the property forbids early release,
without giving any additional guarantees. 
Hence, it also does not support overwriting nor aborting early release.

\emph{Strictness} \cite{BHG87} is a database property defined as:

\begin{definition}[Strictness] \label{def:strict-short}
    History $\hist$ is \emph{strict} iff for any $\tr_i, \tr_j \in H$ ($i\neq j$) 
    and given any
    operation execution $\op_i = \pfrop{i}{}{\obj}{v}$ or $\pfwop{i}{}{\obj}{v'}{\ok_i}$ in $\hist|\tr_i$, and
    any operation execution $\op_j = \pfwop{j}{}{\obj}{v}{\ok_j}$ in $\hist|\tr_j$, if $\op_i$
    follows $\op_j$, then $\tr_j$ commits or aborts before $\op_i$.
\end{definition}

The definition unequivocally states that a transaction cannot read from 
another transaction, until the latter is committed or aborted.
Thus, strictness precludes early release altogether. 
Hence, it also does not support overwriting nor aborting early release.

\emph{Rigorousness} is defined (following \cite{BGRS91}) as:

\begin{definition}[Rigorousness] \label{def:rigorous-short}
    History $\hist$ is \emph{rigorous} iff it is strict and for any $\tr_i, \tr_j \in
    \hist$ ($i\neq j$) such that $\tr_i$ writes to variable $\obj$, i.e., 
    $\op_i = \fwop{i}{\obj}{v}{\ok_i} \in \hist|\tr_i$ 
    after $\tr_j$ reads $\obj$, then $\tr_j$
    commits or aborts before $\op_i$.
\end{definition}

Since in \cite{AH14} the authors demonstrate that rigorous histories are
opaque, and since we show in \rthm{thm:opacity-early-release} that opaque
histories do not support early release, then neither does rigorousness. 
Hence, it also does not support overwriting nor aborting early release.

\subsection{Discussion}
\label{sec:properties-discussion}

The survey of properties shows that, while there are many safety properties for
TM with a wide range of guarantees they provide, with respect to early
release they fall into three basic groups.

The first group consists of properties that allow early release but do
not prevent overwriting: serializability and recoverability.
These properties do not control what can be seen by aborting transactions. As
argued in \cite{GK10}, this is insufficient for TM in general, because operating
on inconsistent state may lead to uncontrollable errors, whose consequences include crashing the
process.

The second group consists of properties that preclude the dangerous situations
allowed by the first group. This group includes opacity, TMS1, TMS2, ACA,
strictness, and rigorousness. The properties in this group forbid early release
altogether and obviously are not suited for TM systems that employ that mechanism.

The third group allows early release and precludes overwriting but also precludes aborting in transactions that release early. It includes live opacity, elastic opacity, and VWC.
These properties seem to provide a reasonable middle ground between allowing
early release and eliminating inconsistent views.
However, these properties effectively require that transactions that
release early become irrevocable. That is, once a live transaction is read
from, it can never abort.
The need to deal with irrevocable transactions is detrimental, because
irrevocable transactions introduce additional complexity to a TM (see e.g.,
\cite{WSA08}).
In addition, in applications like distributed computing, transaction aborts may
be induced by external stimuli, so it can be completely impossible to prevent
transactions from aborting \cite{SW13}.
Finally, the requirement to have transactions that release early eventually
commit unnecessarily precludes some intuitively correct histories (see
\rfig{fig:nonaborting-precluded-history}).

In summary, properties from the first group are not adequate for \emph{any} TM
and those from the second group do not allow any form of early release. 
The third group imposes an overstrict restriction that transactions which
release early be irrevocable. 
None of the properties provide a satisfactory, strong safety property that
could be used for a TM with early release, where aborts cannot be arbitrarily
restricted.
Therefore, a property expressing the guarantees of such systems is lacking.
Hence, we introduce a property in \rsec{sec:lopacity} to fill this niche.

\section{Last-use Opacity} \label{sec:lopacity}

\begin{figure}
\begin{minipage}[t]{.08\linewidth}
$\subprog_1$:
\end{minipage}
\begin{minipage}[t]{.4\linewidth}
\vspace{-\baselineskip}
\begin{lstlisting}
transaction { // spawns as $T_1$
    x = 1;
    if (y > 0)
        x = x + y;
    y = x + 1;
}
\end{lstlisting}
\end{minipage}
\begin{minipage}[t]{.08\linewidth}
$\subprog_2$:
\end{minipage}
\begin{minipage}[t]{.4\linewidth}
\vspace{-\baselineskip}
\begin{lstlisting}
transaction { // spawns as $T_2$
    y = y + 1;
}
\end{lstlisting}
\end{minipage}
\caption{\label{fig:last-use-code}Transactional program with \last{} write.}
\end{figure}

We present \emph{last-use opacity}, a new TM safety property that provides
strong consistency guarantees and allows early release without compromising on
the ability of transactions to abort.
The property is based on the preliminary work in \cite{SW14-disc,SW14-wttm}.

The idea of last-use opacity hinges on identifying the \emph{\last{} write}
operation execution on a given variable in individual transactions. 
Informally, a \last{} write on some variable is such, that the 
transaction which executed it will not subsequently execute another write
operation on the same variable in any \emph{possible} extension of the history.
What is possible is determined by the program that is being evaluated to create
that history.
Knowing the program, it is possible to infer (to an extent) what operations a
particular transaction will execute.
Hence, knowing the program, we can determine whether a particular operation on
some variable is the last possible such operation on that variable within a
given transaction.
Thus, we can determine whether a given operation is the \last{} write operation
in a transaction.

Take, for instance, the program in \rfig{fig:last-use-code}, where subprogram
$\subprog_1$ spawns transaction $\tr_1$, and $\subprog_2$ spawns $\tr_2$.
Let us assume that initially {\tt x} and {\tt y} are set to {\tt 0}.
Depending on the semantics of the TM, as these subprograms interweave during
the execution, a number of histories can be produced. We can divide all of
among them into two cases.
In the first case $\tr_2$ writes {\tt 1} to {\tt y} in line 2 of $\subprog_2$
and this value is then read by
$\tr_1$ in line 3 of $\subprog_1$. As a consequence, $\tr_1$ will execute the
write operation in line 4.
The second case assumes that $\tr_1$ reads {\tt 0} in line 3 of $\subprog_1$
(e.g., because $\tr_2$ executed line 2 much later). In this case, $\tr_1$ will
not execute the write operation in line 4.
We can see, however, that in either of the above cases, once $\tr_1$ executes
the write to {\tt x} on line 4, then no further writes to {\tt x} will follow
in $\tr_1$ in any conceivable history. Thus, the write operation execution
generated by line 4 of $\subprog_1$ is the \last{} write on {\tt x} in $\tr_1$.
On the other hand, the write operation execution generated by line 2 of
$\subprog_1$ is never the \last{} write on {\tt x} in $\tr_1$, because there
exists a conceivable history where another write operation execution will
appear (i.e., once line 4 is evaluated). This is true even in the second of
the cases because line 4 can be executed \emph{in potentia}, even if it is not
executed \emph{de facto}.

Note that once any transaction $\tr_i$ completes executing its \last{} write on some
variable $\obj$, it is certain that no further modifications to that variable are
intended by the programmer as part of $\tr_i$.
This means, from the perspective of $\tr_i$ (and assuming no other transaction modifies $\obj$), that the
state of $\obj$ would be the same at the time of the \last{} write as
if the transaction attempted to commit. Hence, with respect to $\obj$,
we can treat $\tr_i$ as if it had attempted to commit.

Last use opacity uses the concept of a \last{} write to dictate one transaction
can read from another transaction. 
We give a formal definition in \rsec{sec:definition}, but, in short, given any
two transactions, $\tr_i$ and $\tr_j$, last-use opacity allows $\tr_i$ to read
variable $\obj$ from $\tr_j$ if the latter is either committed or
commit-pending, or, if $\tr_j$ is live and it already executed its \last{}
write on $\obj$.
This has the benefit of allowing early release while excluding overwriting
completely.
However, last-use opacity does allow cascading aborts to occur.
We discuss their implications in \rsec{sec:implications}, as well as ways of
mitigating them.  That section also describes the guarantees given by last-use
opacity.

\subsection{Definition}
\label{sec:definition}

{
First, we define the concept of a \last{} write to some variable by a
particular transaction. We do this by first defining a \last{} write operation
invocation, and then extend the definition to complete operation executions.
}

Given program $\prog$ and a set of processes $\processes$
executing $\prog$, since different interleavings of $\processes$ cause an
execution $\exec{\prog}{\processes}$ to produce different histories, then let
$\evalhist{\prog}{\processes}$ be the set of all possible histories that can be
produced by $\exec{\prog}{\processes}$, i.e., $\evalhist{\prog}{\processes}$ is
the largest possible set s.t. $\evalhist{\prog}{\processes} = \{ \hist \mid
\hist \models \exec{\prog}{\processes} \}$.

\begin{definition}[\Last{} Write Invocation] \label{def:last-write-inv}
Given a program $\prog$, a set of processes $\processes$ executing $\prog$ and
a history $\hist$ s.t. $\hist \models \exec{\prog}{\processes}$, i.e.  $\hist \in
\evalhist{\prog}{\processes}$, an invocation $\inv{i}{}{\wop{\obj}{\val}}$ is
the \emph{\last{} write invocation} on some variable $\obj$ by transaction $\tr_i$
in $\hist$, if for any history $\hist' \in \evalhist{\prog}{\processes}$ for
which $\hist$ is a prefix (i.e., $\hist' = \hist \cdot R$) there is no
operation invocation $\inv{i}{}{\wop{\obj}{\valu}}$ s.t.
$\inv{i}{}{\wop{\obj}{\val}}$ precedes $\inv{i}{}{\wop{\obj}{\valu}}$ in
$\hist'|\tr_i$.
\end{definition}

\begin{definition}[\Last{} Write] \label{def:last-write-op}
Given a program $\prog$, a set of processes $\processes$ executing $\prog$ and
a history $\hist$ s.t. $\hist \models \exec{\prog}{\processes}$, an operation
execution is the \emph{\last{} write} on some variable $\obj$ by transaction
$\tr_i$ in $\hist$ if it comprises of an invocation and a response other than
$\ab_i$, and the invocation is the \emph{\last{} write invocation} on $\obj$ by
$\tr_i$ in $\hist$.
\end{definition}

The \emph{\last{} read invocation} and the \emph{\last{} read} 
are defined analogously.

If a transaction executes its \last{} write on some variable, we say that
the transaction \emph{decided on} $\obj$.

\begin{definition} [Transaction Decided on $\obj$] \label{def:decided} 
Given a program $\prog$, a set of processes $\processes$ and a history $\hist$
s.t. $\hist \models \exec{\prog}{\processes}$, we say transaction $\tr_i \in
\hist$ \emph{decided on} variable $\obj$ in $\hist$ iff $\hist|\tr_i$ contains
a complete write operation execution $\fwop{i}{\obj}{\val}{\ok_i}$ that is the
\last{} write on $\obj$.
\end{definition}

{
Given some history $\hist$, let $\cpetrans{\hist}$ be a set of transactions
s.t. $\tr_i \in \cpetrans{\hist}$ iff there is some variable $\obj$
s.t. $\tr_i$ decided on $\obj$ in $\hist$.

Given any $\tr_i \in \hist$, a \emph{decided transaction subhistory}, denoted 
$\hist\cpe\tr_i$, is the longest subsequence of $\hist|\tr_i$ s.t.:
\begin{enumerate}[a) ]
    \item $\hist\cpe\tr_i$ contains $\init_i\to\valu$, and
    \item for any variable $\obj$, if 
    $\tr_i$ decided on $\obj$ in $\hist$, 
    then
    $\hist\cpe\tr_i$ contains $(\hist|\tr_i)|\obj$.
\end{enumerate}
In addition, a \emph{decided transaction subhistory completion}, denoted
$\hist\cpeC\tr_i$, is a sequence s.t. $\hist\cpeC\tr_i =
\hist\cpe\tr_i \cdot [\tryC_i\to\co_i]$.

Given a sequential history $S$ s.t. $S\equiv\hist$,
$\luvis{S}{\tr_i}$ is the longest subhistory of $S$, s.t. for each $\tr_j
\in S$:
\begin{enumerate}[a) ]
    \item if $i=j$ or $\tr_j$ is committed in $S$ and $\tr_j \prec_S \tr_i$,
        then $S|\tr_j \subseteq \luvis{S}{\tr_i}$,
    \item  if  $\tr_j$ is not committed in $S$ but $\tr_j \in \cpetrans{\hist}$
        and $\tr_j \prec_S \tr_i$, and it is not true that $\tr_j \prec_\hist
        \tr_i$, then either $S\cpeC\tr_j \subseteq \luvis{S}{\tr_i}$ or not.       
\end{enumerate}

Given a sequential history $S$ and a transaction $\tr_i \in S$, we then say
that transaction $\tr_i$ is \emph{last-use legal in} $S$ if $\luvis{S}{\tr_i}$
is legal.
Note that if $S$ is legal, then it is also last-use legal (see appendix for proof).

} %

\begin{definition} [Final-state Last-use Opacity] \label{def:final-state-lopacity} \label{def:fs-lopacity}
    A finite history $\hist$ is \emph{final-state last-use opaque} if, and only if,
    there exists a sequential history $S$ equivalent to any completion of
    $\hist$ s.t., 
    \begin{enumerate}[a) ] 
        \item $S$ preserves the real-time order of $\hist$,
        \item every transaction in $S$ that is committed in $S$ is
        legal in $S$,
        \item every transaction in $S$ that is not committed in $S$ is
        last-use legal in $S$. 
    \end{enumerate}
\end{definition}

\begin{definition} [Last-use Opacity] \label{def:lopacity}
    A history $\hist$ is \emph{last-use opaque} if, and only if, every finite prefix of
    $\hist$ is final-state last-use opaque.
\end{definition}

{

\begin{theorem} \label{thm:lopacity-safety-property}
    Last-use opacity is a safety property.
\end{theorem}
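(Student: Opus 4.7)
The plan is to verify the two conditions of Definition~\ref{def:safety-property} directly against Definition~\ref{def:lopacity}. The crucial observation is that last-use opacity is defined exactly as a prefix-closure of final-state last-use opacity ("$H$ is last-use opaque iff every finite prefix of $H$ is final-state last-use opaque"), so both properties should follow from routine manipulation of prefixes rather than from a detailed argument about the witness sequential histories $S$.

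For prefix-closure, I would proceed as follows. Let $H \in \mathbb{H}_\property$ and let $H'$ be any prefix of $H$. I need to show $H' \in \mathbb{H}_\property$, i.e., every finite prefix $P'$ of $H'$ is final-state last-use opaque. But every prefix of $H'$ is also a prefix of $H$ (since the prefix relation is transitive), so $P'$ is a finite prefix of $H$, and by assumption $H$ is last-use opaque, meaning every finite prefix of $H$ is final-state last-use opaque. Hence $P'$ is final-state last-use opaque, and $H'$ is last-use opaque.

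For limit-closure, let $H_0, H_1, H_2, \ldots$ be an infinite sequence of finite histories such that each $H_h$ is last-use opaque and $H_h$ is a prefix of $H_{h+1}$. Let $H^*$ denote the limit of the sequence. I need to show $H^*$ is last-use opaque, i.e., every finite prefix $P$ of $H^*$ is final-state last-use opaque. Since $P$ is finite and $H^*$ is the limit of a chain of histories each of which is a prefix of the next, there must exist some index $h$ such that $P$ is a prefix of $H_h$ (as $H_h$ eventually covers any finite initial segment of $H^*$). By the assumption that $H_h$ is last-use opaque, every finite prefix of $H_h$, including $P$, is final-state last-use opaque. Hence $H^*$ is last-use opaque.

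The only place where the argument is not entirely mechanical is justifying that any finite prefix of the limit $H^*$ is actually captured by some $H_h$; this requires only the elementary observation that the $H_h$ form an ascending chain of finite histories whose union (or limit) is $H^*$, so any finite initial segment of $H^*$ must already appear in some $H_h$. I expect this to be the only step requiring a brief argument, and it does not depend on any TM-specific reasoning. No properties of the witness $S$ in Definition~\ref{def:fs-lopacity} need be examined, because the safety property conditions only speak about which histories are in $\mathbb{H}_\property$, and the two-layer definition of last-use opacity was designed so that $\mathbb{H}_\property$ is manifestly prefix- and limit-closed.
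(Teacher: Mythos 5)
Your proof is correct and follows essentially the same route as the paper's: a direct verification of prefix-closure and limit-closure by unwinding \rdef{def:lopacity}. In fact your limit-closure step is slightly more careful than the paper's own, which only observes that the histories $\hist_h$ in the chain are themselves final-state last-use opaque, whereas you correctly note that an arbitrary finite prefix of the limit must first be located inside some $\hist_h$ before the last-use opacity of that $\hist_h$ can be invoked.
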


\begin{proof}
        By \rdef{def:lopacity}, \lopacity{} is trivially prefix-closed.
        
        Given $\hist_L$ that is an infinite limit
        of any sequence of finite histories $\hist_0, \hist_1, ...$, s.t every
        $\hist_h$ in the sequence is \lopaque{} and every $\hist_h$ is a prefix
        of $\hist_{h+1}$, since each prefix $\hist_h$ of $\hist_L$ is
        \lopaque{}, then, by extension, every prefix $\hist_h$ of $\hist_L$ is
        also final-state last-use opaque, so, by \rdef{def:lopacity}, $\hist_L$ is
        \lopaque{}. Hence, \lopacity{} is limit-closed.

        Since \lopacity{} is both prefix-closed and limit-closed, then, by
        \rdef{def:safety-property}, it is a safety property.
\end{proof}

} %

\subsection{Examples}
\label{sec:examples}

In order to aid understanding of the property we present examples of last-use
opaque histories in
\rfig{fig:example-early-release}--\ref{fig:example-abort-before-commit}.
These are contrasted by examples of histories that are not last-use opaque in
\rfig{fig:example-early-release-not-last}--\ref{fig:example-overwriting}. We
discuss the examples below and prove them in the appendix.

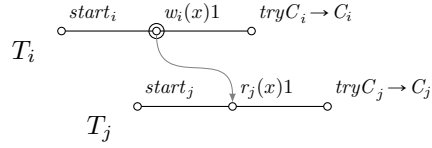
\begin{figure}
\begin{center}
\begin{tikzpicture}
     \draw
           (0,2)        node[tid]       {$\tr_i$}
                        node[aop]       {$\init_i$} %
                        node[dot]       {} 

      -- ++(1.25,0)     node[aop]       {$\twop{i}{\obj}{1}$}
                        node[dot] (wi)  {}
                        node[cir]       {}

      -- ++(1.25,0)     node[aop]       {$\tryC_i\!\to\!\co_i$}
                        node[dot]       {}         
                        ;

     \draw
           (1,1)        node[tid]       {$\tr_{j}$}
                        node[aop]       {$\init_{j}$} %
                        node[dot]       {} 

      -- ++(1.25,0)     node[aop]       {$\trop{j}{\obj}{1}$}
                        node[dot] (rj)  {}

      -- ++(1.25,0)     node[aop]       {$\tryC_j\!\to\!\co_j$}
                        node[dot]       {}
                        ;
     
     \draw[hb] (wi) \squiggle (rj);
\end{tikzpicture}
\end{center}
\caption{\label{fig:example-early-release}
Example satisfying last-use opacity: early release.
We mark a \last{} write operation execution in some history in the diagram as 
\protect\tikz{
    \protect\draw[] (0,0) -- ++(0.25,0) node[dot] {} node[cir] {} -- ++(0.25,0);
}.
Note that an operation can be the ultimate operation execution in some
transaction, but still not fit the definition of a \last{} operation execution.
}
\end{figure}

\begin{figure}
\begin{center}
\begin{tikzpicture}
     \draw
           (0,2)        node[tid]       {$\tr_i$}
                        node[aop]       {$\init_i$} %
                        node[dot]       {} 

      -- ++(1.25,0)     node[aop]       {$\twop{i}{\obj}{1}$}
                        node[dot] (wi)  {}
                        node[cir]       {}

      -- ++(1.25,0)     node[aop]       {$\tryC_i\!\to\!\co_i$}
                        node[dot]       {}         
                        ;

     \draw
           (0.75,1)     node[tid]       {$\tr_{j}$}
                        node[aop]       {$\init_{j}$} %
                        node[dot]       {} 

      -- ++(1.25,0)     node[aop]       {$\trop{j}{\obj}{1}$}
                        node[dot] (rj)  {}

      -- ++(1.25,0)     node[aop]       {$\tryA_j\!\to\!\ab_j$}
                        node[dot]       {}
                        ;
     
     \draw[hb] (wi) \squiggle (rj);
\end{tikzpicture}
\end{center}
\caption{\label{fig:example-commit-abort}
Example satisfying last-use opacity: early release to an aborting transaction.
}
\end{figure}
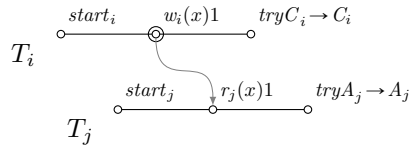

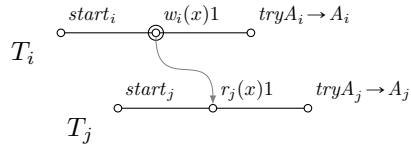
\begin{figure}
\begin{center}
\begin{tikzpicture}
     \draw
           (0,2)        node[tid]       {$\tr_i$}
                        node[aop]       {$\init_i$} %
                        node[dot]       {} 

      -- ++(1.25,0)     node[aop]       {$\twop{i}{\obj}{1}$}
                        node[dot] (wi)  {}
                        node[cir]       {}

      -- ++(1.25,0)     node[aop]       {$\tryA_i\!\to\!\ab_i$}
                        node[dot]       {}         
                        ;

     \draw
           (0.75,1)     node[tid]       {$\tr_{j}$}
                        node[aop]       {$\init_{j}$} %
                        node[dot]       {} 

      -- ++(1.25,0)     node[aop]       {$\trop{j}{\obj}{1}$}
                        node[dot] (rj)  {}

      -- ++(1.25,0)     node[aop]       {$\tryA_j\!\to\!\ab_j$}
                        node[dot]       {}
                        ;
     
     \draw[hb] (wi) \squiggle (rj);
\end{tikzpicture}
\end{center}
\caption{\label{fig:example-abort-abort}
Example satisfying last-use opacity: early release between two aborting
transactions.
}
\end{figure}

\begin{figure}
\begin{center}
\begin{tikzpicture}
     \draw
           (0,2)        node[tid]       {$\tr_i$}
                        node[aop]       {$\init_i$} %
                        node[dot]       {} 

      -- ++(1.25,0)     node[aop]       {$\twop{i}{\obj}{1}$}
                        node[dot] (wi)  {}
                        node[cir]       {}

      -- ++(2.5,0)      node[aop]       {$\tryC_i\!\to\!\co_i$}
                        node[dot]       {}         
                        ;

     \draw
           (0.75,1)     node[tid]       {$\tr_{j}$}
                        node[aop]       {$\init_{j}$} %
                        node[dot]       {} 

      -- ++(1.25,0)     node[aop]       {$\trop{j}{\obj}{1}$}
                        node[dot] (rj)  {}

      -- ++(1.25,0)     node[aop]       {$\tryA_j\!\to\!\ab_j$}
                        node[dot]       {}
                        ;
     
     \draw[hb] (wi) \squiggle (rj);
\end{tikzpicture}
\end{center}
\caption{\label{fig:example-abort-before-commit}
Example satisfying last-use opacity: early release to a prematurely aborting
transaction.
}
\end{figure}
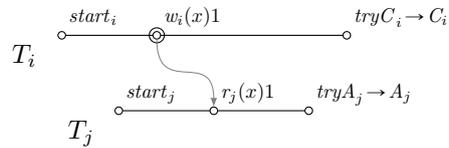

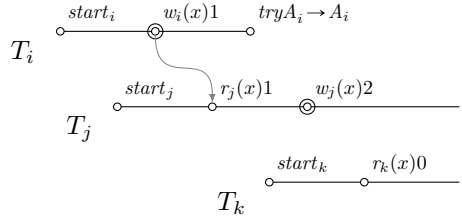
\begin{figure}
\begin{center}
\begin{tikzpicture}
     \draw
           (0,2)        node[tid]       {$\tr_i$}
                        node[aop]       {$\init_i$} %
                        node[dot]       {} 

      -- ++(1.25,0)     node[aop]       {$\twop{i}{\obj}{1}$}
                        node[dot] (wi)  {}
                        node[cir]       {}

      -- ++(1.25,0)      node[aop]       {$\tryA_i\!\to\!\ab_i$}
                        node[dot]       {}         
                        ;

     \draw
           (0.75,1)     node[tid]       {$\tr_{j}$}
                        node[aop]       {$\init_{j}$} %
                        node[dot]       {} 

      -- ++(1.25,0)     node[aop]       {$\trop{j}{\obj}{1}$}
                        node[dot] (rj)  {}

      -- ++(1.25,0)     node[aop]       {$\twop{j}{\obj}{2}$}
                        node[dot]       {}
                        node[cir]       {}                       

      -- ++(2,0)     node            {}       
                        ;

      \draw
           (2.75,0)   node[tid]       {$\tr_{k}$}
                        node[aop]       {$\init_{k}$} %
                        node[dot]       {} 

      -- ++(1.25,0)        node[aop]       {$\trop{k}{\obj}{0}$}
                        node[dot]       {}

      -- ++(1.25,0)     node            {}

                        ;                       
     
     \draw[hb] (wi) \squiggle (rj);
\end{tikzpicture}
\end{center}
\caption{\label{fig:example-freedom}
Example satisfying last-use opacity: freedom to read from or ignore an aborted transaction.
}
\end{figure}

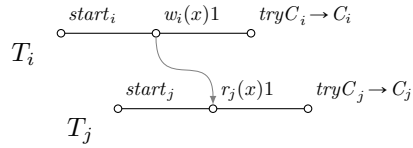
\begin{figure}
\begin{center}
\begin{tikzpicture}
     \draw
           (0,2)        node[tid]       {$\tr_i$}
                        node[aop]       {$\init_i$} %
                        node[dot]       {} 

      -- ++(1.25,0)     node[aop]       {$\twop{i}{\obj}{1}$}
                        node[dot] (wi)  {}

      -- ++(1.25,0)     node[aop]       {$\tryC_i\!\to\!\co_i$}
                        node[dot]       {}         
                        ;

     \draw
           (0.75,1)     node[tid]       {$\tr_{j}$}
                        node[aop]       {$\init_{j}$} %
                        node[dot]       {} 

      -- ++(1.25,0)     node[aop]       {$\trop{j}{\obj}{1}$}
                        node[dot] (rj)  {}

      -- ++(1.25,0)     node[aop]       {$\tryC_j\!\to\!\co_j$}
                        node[dot]       {}
                        ;
     
     \draw[hb] (wi) \squiggle (rj);
\end{tikzpicture}
\end{center}
\caption{\label{fig:example-early-release-not-last}Example breaking last-use
opacity: early release before \last{} write operation execution.}
\end{figure}

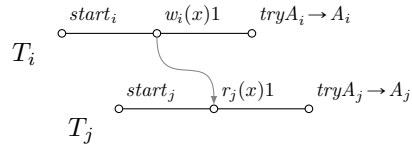
\begin{figure}
\begin{center}
\begin{tikzpicture}
     \draw
           (0,2)        node[tid]       {$\tr_i$}
                        node[aop]       {$\init_i$} %
                        node[dot]       {} 

      -- ++(1.25,0)     node[aop]       {$\twop{i}{\obj}{1}$}
                        node[dot] (wi)  {}

      -- ++(1.25,0)     node[aop]       {$\tryA_i\!\to\!\ab_i$}
                        node[dot]       {}         
                        ;

     \draw
           (0.75,1)     node[tid]       {$\tr_{j}$}
                        node[aop]       {$\init_{j}$} %
                        node[dot]       {} 

      -- ++(1.25,0)     node[aop]       {$\trop{j}{\obj}{1}$}
                        node[dot] (rj)  {}

      -- ++(1.25,0)     node[aop]       {$\tryA_j\!\to\!\ab_j$}
                        node[dot]       {}
                        ;
     
     \draw[hb] (wi) \squiggle (rj);
\end{tikzpicture}
\end{center}
\caption{\label{fig:example-abort-abort-not-last}
Example breaking last-use opacity: early release between two aborting
transactions before \last{} write operation execution.
}
\end{figure}

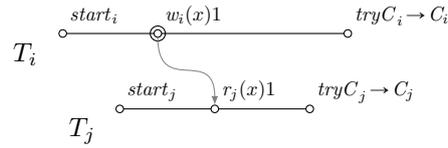
\begin{figure}
\begin{center}
\begin{tikzpicture}
     \draw
           (0,2)        node[tid]       {$\tr_i$}
                        node[aop]       {$\init_i$} %
                        node[dot]       {} 

      -- ++(1.25,0)     node[aop]       {$\twop{i}{\obj}{1}$}
                        node[dot] (wi)  {}
                        node[cir]       {}

      -- ++(2.5,0)      node[aop]       {$\tryC_i\!\to\!\co_i$}
                        node[dot]       {}         
                        ;

     \draw
           (0.75,1)     node[tid]       {$\tr_{j}$}
                        node[aop]       {$\init_{j}$} %
                        node[dot]       {} 

      -- ++(1.25,0)     node[aop]       {$\trop{j}{\obj}{1}$}
                        node[dot] (rj)  {}

      -- ++(1.25,0)     node[aop]       {$\tryC_j\!\to\!\co_j$}
                        node[dot]       {}
                        ;
     
     \draw[hb] (wi) \squiggle (rj);
\end{tikzpicture}
\end{center}
\caption{\label{fig:example-commit-before-commit}
Example breaking last-use opacity: commit order not respected.
}
\end{figure}

\begin{figure}
\begin{center}
\begin{tikzpicture}
     \draw
           (0,2)        node[tid]       {$\tr_i$}
                        node[aop]       {$\init_i$} %
                        node[dot]       {} 

      -- ++(1.25,0)     node[aop]       {$\twop{i}{\obj}{1}$}
                        node[dot] (wi)  {}

      -- ++(1.25,0)     node[aop]       {$\twop{i}{\obj}{2}$}
                        node[dot]       {}
                        node[cir]       {}

      -- ++(1.25,0)     node[aop]       {$\tryC_i\!\to\!\co_i$}
                        node[dot]       {}         
                        ;

     \draw
           (0.75,1)     node[tid]       {$\tr_{j}$}
                        node[aop]       {$\init_{j}$} %
                        node[dot]       {} 

      -- ++(1.25,0)     node[aop]       {$\trop{j}{\obj}{1}$}
                        node[dot] (rj)  {}

      -- ++(2.5,0)      node[aop]       {$\tryC_j\!\to\!\co_j$}
                        node[dot]       {}
                        ;
     
     \draw[hb] (wi) \squiggle (rj);
\end{tikzpicture}
\end{center}
\caption{\label{fig:example-overwriting}Example breaking last-use opacity:
early release with overwriting.}
\end{figure}
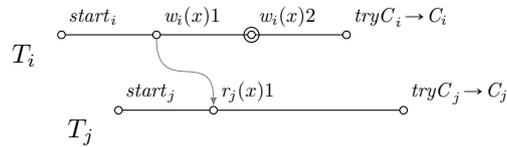

\begin{figure}
\begin{center}
\begin{tikzpicture}
     \draw
           (0,2)        node[tid]       {$\tr_i$}
                        node[aop]       {$\init_i$} %
                        node[dot]       {} 

      -- ++(1.25,0)     node[aop]       {$\twop{i}{\objx}{1}$}
                        node[dot] (wi)  {}
                        node[cir]       {}

      -- ++(2.25,0)     node[aop]       {$\trop{i}{\objy}{1}$}
                        node[dot] (ri)  {}

      -- ++(1.25,0)     node[]{} %
                        ;

     \draw
           (0,1)        node[tid]       {$\tr_{j}$}
                        node[aop]       {$\init_{j}$} %
                        node[dot]       {} 

      -- ++(1.75,0)      node[aop]       {$\trop{j}{\objx}{1}$}
                        node[dot] (rj)  {}

      -- ++(1.25,0)     node[aop]       {$\twop{j}{\objy}{1}$}
                        node[dot] (wj)  {}  
                        node[cir]       {}                     

     -- ++(1.75,0)      node[]{} %
                        ;
     
     \draw[hb] (wi) \squiggle (rj);
     \draw[hb] (wj) \squiggleup (ri);
\end{tikzpicture}
\end{center}
\caption{\label{fig:example-cycle}Example breaking last-use opacity:
dependency cycle.}
\end{figure}
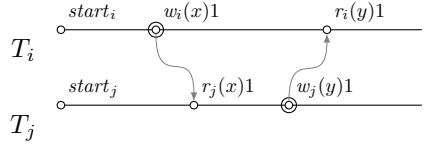

The example in \rfig{fig:example-early-release} shows $\tr_i$ executing a
write on $\obj$ once and releasing $\obj$ early to $\tr_j$. We assume
that the program generating the history is such, that the write operation
executed by $\tr_i$ is the \last{} write operation execution on $\obj$.
The history is intuitively correct, since both transactions commit, and $\tr_j$
reads a value written by $\tr_i$.
On the formal side, since both transactions are committed in this history,
the equivalent sequential history would consist of all the events in $\tr_i$
followed by the events in $\tr_j$ and both transactions would be legal, since
$\tr_i$ writes a legal value to $\obj$ and $\tr_j$ reads the last value written
by $\tr_i$ to $\obj$. Thus, the history is final-state last-use opaque.

Since last-use opacity requires prefix closedness, then all prefixes of the
history in \rfig{fig:example-early-release} also need to be final-state
last-use opaque. We present only two of the interesting prefixes, since the
remainder are either similar or trivial.
The first interesting prefix is created by removing the commit operation
execution from $\tr_j$, which means $\tr_j$ is aborted in any completion of the
history. We show such a completion in \rfig{fig:example-commit-abort}.
Still, $\tr_i$ writes a legal value to $\obj$ and $\tr_j$ reads the
last value written by $\tr_i$ to $\obj$, so that prefix is also final-state
last-use opaque.
Another interesting prefix is created by removing the commit operation
executions from both transactions. Then, in the completion of the history both
transactions are aborted, as in \rfig{fig:example-abort-abort}. Then, in an
equivalent sequential history $\tr_j$ would read a value written by an aborted
transaction.  
In order to show legality of a committed transaction, we use the subhistory
denoted $\visf$, which does not contain any transactions that were not
committed in the history from which it was derived. Thus, if $\tr_j$ were
committed, it would not be legal, since its $\visf$ would not contain a
write operation execution writing the value the transaction actually read.
However, since $\tr_j$ is aborted, the definition of final-state last-use opacity only requires that
$\luvisf$ rather than $\visf$ be legal, and $\luvisf$ can contain operation
executions on particular variables from an aborted transaction under the
condition that the transaction already executed its \last{} write on the variables
in question. Since, in the example $\tr_i$ executed its \last{} write on $\obj$,
then this write will be included in $\luvisf$ for $\tr_j$, so $\tr_j$ will be
last-use legal. In consequence the prefix is also final-state last-use opaque.
Indeed, all prefixes of example \rfig{fig:example-early-release} are
final-state last-use opaque, so the example is last-use opaque, and, by
extension, so are the examples in \rfig{fig:example-commit-abort} and
\rfig{fig:example-abort-abort}.

Contrast the example in \rfig{fig:example-early-release} with the one in
\rfig{fig:example-early-release-not-last}. The histories presented in both are
identical, with the exception that the write operation in
\rfig{fig:example-early-release} is considered to be the \last{} operation
execution, while in \rfig{fig:example-early-release-not-last} it is not. The
difference would stem from differences in the programs that produced these
histories. For instance, the program producing the history in
\rfig{fig:example-early-release-not-last} could conditionally execute another
operation on $\obj$, so, even though that condition was not met in this
history, the potential of another write on $\obj$ means that the existing write
cannot be considered a \last{} write operation execution.
The consequence of this is that while the example itself is final-state
last-use opaque, one of its prefixes is not, so the history is not last-use
opaque. The offending prefix is created by removing commit operations in both
transactions, so both transactions would abort in any completion, as in
\rfig{fig:example-abort-abort-not-last}.
Here, since $\tr_i$ does not execute the \last{} write operation on $\obj$, then
the write operation would not be included in $\luvisf$ for $\tr_j$, so the
value read by $\tr_j$ could not be justified. Thus, $\tr_j$ is not legal in
that history, and, therefore, the history in \rfig{fig:example-abort-abort-not-last} is not
final-state last-use opaque (so also not last-use opaque). 
\rfig{fig:example-abort-abort-not-last} represents the completion of a prefix of the history in \rfig{fig:example-early-release-not-last}, so
\rfig{fig:example-abort-abort-not-last} not being final-state last-use opaque, means that \rfig{fig:example-early-release-not-last} is not last-use opaque.

The examples in \rfig{fig:example-abort-before-commit} and
\rfig{fig:example-commit-before-commit}, show that recoverability is required, i.e.,
transactions must commit in order.  Last-use opacity
of the example in \rfig{fig:example-abort-before-commit} is analogous to the
one in \rfig{fig:example-commit-abort}, since their equivalent sequential
histories are identical, as are the sequential histories equivalent to their
prefixes.
Furthermore, intuitively, if $\tr_j$ reads a value of a variable
released early by $\tr_i$ and aborts before $\tr_i$ commits, this is
correct behavior.
On the other hand, the history in \rfig{fig:example-commit-before-commit} is
not last-use opaque, even though it is final-state last-use opaque (by analogy
to \rfig{fig:example-early-release}). However, a prefix of the history where the
commit operation execution is removed from $\tr_i$ is not final-state last-use
opaque. This is because a completion will require that $\tr_i$ be aborted, the
operations executed by $\tr_i$ are not going to be included in any $\visf$.
Since $\tr_j$ is committed, then its $\visf$ must be legal, but it is not,
because the read operation reading $1$ will not be preceded by any writes in
$\visf$. Since the prefix contains an illegal transaction, then it is not
final-state last-use opaque, and thus, the history in
\rfig{fig:example-commit-before-commit} is not last-use opaque.

The example in \rfig{fig:example-freedom} shows that a transaction is allowed
to read from a transaction that eventually aborts, or ignore that transaction,
because of the freedom left within the definition of $\luvisf$. 
I.e.  transactions $\tr_j$ is concurrent to $\tr_i$, but $\tr_k$ follows
$\tr_i$ in real time.  $\tr_i$ executes a \last{} write on $\objx$, so $\tr_j$
is allowed to include the write operation on in its $\luvisf$. Since $\tr_j$
sees the value written to $\objx$ by that write, $\tr_j$ includes the write in
$\luvisf$.
On the other hand, $\tr_k$ cannot include $\tr_i$'s write in $\luvis$, since it
aborted before $\tr_k$ even started, so the write should not be visible to
$\tr_k$. On the other hand $\tr_k$ is allowed to include $\tr_j$ in its
$\luvisf$. $\tr_k$ should not do so, however, since it ignores $\tr_j$ as well
as $\tr_i$ (which makes sense as $\tr_j$ is doomed to abort). Hence $\tr_k$
reads the value of $\obj$ to be 0. If $\tr_j$ is included in $\tr_k$'s
$\luvisf$, reading 0 would be incorrect. Hence, the definition of $\luvisf$
allows $\tr_j$ to be arbitrarily excluded. In effect all three transactions are
correct (so long as $\tr_j$ does not eventually commit).

\rfig{fig:example-overwriting} shows an example of overwriting, which
is not last-use opaque, since there is no equivalent sequential history where
the write operation in $\tr_i$ writing $1$ to $\obj$ would precede the read
operation in $\tr_j$ reading $1$ from $\obj$ without the other write operation
writing $2$ to $\obj$ also preceding the read. Thus, in all cases $\tr_j$ is
not legal, and the history is neither final-state last-use opaque, nor last-use opaque.

Finally, \rfig{fig:example-cycle} shows an example of a cyclic dependency,
where $\tr_j$ reads $\objx$ from $\tr_i$, and subsequently $\tr_i$ reads
$\objy$ from $\tr_j$. Both writes in the history are \last{} writes. This
example has unfinished transactions, which are thus aborted in any possible
completion of this history. There are two possible sequential histories
equivalent to that completion: one where $\tr_i$ precedes $\tr_j$ and one where
$\tr_j$ precedes $\tr_i$. In the former case, $\luvisf$ of $\tr_i$ does not
contain any operations from $\tr_j$, because $\tr_j$ follows $\tr_i$. Thus,
there is no write operation on $\objy$ preceding a read on $\objy$ returning
$1$ in $\tr_i$'s $\luvisf$, which does not conform to the sequential
specification, so $\tr_i$'s $\luvisf$ is not legal. Hence, $\tr_i$ is not legal
in that scenario.  The former case is analogous: $\tr_j$'s $\luvisf$ will not
contain a write operation from $\tr_i$, because $\tr_i$ follows $\tr_j$.
Therefore $\tr_j$'s $\luvisf$ contains a read on $\objx$ that returns $1$,
which is not preceded by any write on $\objx$, which causes the sequence not to
conform to the sequential specification and renders the transaction not legal.
Since either case contains a transaction that is not legal, then that history
is not final-state last-use opaque, and therefore not last-use opaque.

\subsection{Guarantees}
\label{sec:implications}

Last-use opacity gives the programmer the following guarantees:

\paragraph{Serializability}
If a transaction commits, then the value it reads can be explained by
operations executed by preceding or concurrent transactions. This guarantees
that a transaction that views inconsistent state will not commit.

\begin{lemma} 
\label{lemma:lop-serializability}
    Every \lopaque{} history is serializable.
\end{lemma}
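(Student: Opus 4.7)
The plan is to simply unfold both definitions and observe that the witness demanded by last-use opacity is automatically a witness for serializability, since serializability imposes strictly weaker requirements on the sequential history.

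More concretely, I would first reduce to the finite case. Assume $\hist$ is last-use opaque. By \rdef{def:lopacity}, every finite prefix of $\hist$ is final-state last-use opaque; in particular, if $\hist$ itself is finite, then $\hist$ is final-state last-use opaque. Applying \rdef{def:fs-lopacity} to $\hist$, we obtain a sequential history $S$ equivalent to some completion $\compl{\hist}$ of $\hist$ such that (a) $S$ preserves the real-time order of $\hist$, (b) every transaction committed in $S$ is legal in $S$, and (c) every transaction not committed in $S$ is last-use legal in $S$. Comparing against \rdef{def:serializability}, we see that serializability demands only the existence of an equivalent sequential history over a completion of $\hist$ in which every committed transaction is legal. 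Clauses (a) and (c) are additional guarantees that serializability does not require, so the same $S$ already witnesses that $\hist$ is serializable.

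For an infinite $\hist$, I would apply the above argument to each finite prefix $\hist_h$ of $\hist$ to conclude that every finite prefix is serializable, and then appeal to the standard prefix-/limit-closed structure of serializability to conclude that $\hist$ itself is serializable. Since the paper's definition of serializability (\rdef{def:serializability}) is phrased via completions, this step is routine and essentially bookkeeping.

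There is no real obstacle here: the only thing to be careful about is the matchup between the witness sequential history of last-use opacity and that of serializability. The key observation is that last-use opacity's extra demands, namely real-time order preservation and last-use legality of non-committed transactions, are strengthenings that can simply be forgotten, and the remaining content (existence of an equivalent sequential completion in which committed transactions are legal) is exactly serializability.
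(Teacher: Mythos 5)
Your proof is correct and follows essentially the same route as the paper's: both simply unfold \rdef{def:lopacity}, \rdef{def:fs-lopacity} and \rdef{def:serializability} and observe that the sequential witness $S$ guaranteed by final-state last-use opacity already satisfies the strictly weaker requirements of serializability (the paper merely phrases this as a proof by contradiction). The only place you go beyond the paper is the infinite-history case, where the appeal to limit-closure of serializability is the one step that would need separate justification --- the paper's own proof silently treats $\hist$ as finite --- but the core argument is identical.
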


We provide the proof for \rlemma{lemma:comp-serializability}. %

\paragraph{Real-time Order}
Successive transactions will not be rearranged to fit serializability, so a
correct history will agree with an external clock, or an external order of
events.

\begin{lemma} %
    \label{lemma:lop-rt-order}
    Every \lopaque{} history preserves real-time order.
\end{lemma}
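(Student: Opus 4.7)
The plan is to derive the real-time-order guarantee directly from \rdef{def:final-state-lopacity}(a) by unpacking the definition of last-use opacity. Since \rdef{def:lopacity} states that every finite prefix of a last-use opaque history is final-state last-use opaque, and since \rdef{def:final-state-lopacity}(a) explicitly requires the sequential witness to preserve the real-time order of the history, the result is essentially already baked into the definition; the only real work is to reorganize this into the statement that $H$ itself admits an equivalent sequential history of its completion that preserves $\prec_H$.

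I would first handle the case where $H$ is finite. Here $H$ is a prefix of itself, so by \rdef{def:lopacity}, $H$ is final-state last-use opaque. By \rdef{def:final-state-lopacity}(a), there exists a sequential history $S$ equivalent to some completion $\compl{H}$ with $\prec_H \subseteq \prec_S$; this $S$ directly witnesses that $H$ preserves real-time order.

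For the infinite case, I would argue that every finite prefix $P$ of $H$ already admits a sequential witness $S_P$ equivalent to some $\compl{P}$ with $\prec_P \subseteq \prec_{S_P}$. Since any two transactions $\tr_i \prec_H \tr_j$ are both present (and ordered) in some sufficiently long finite prefix $P$ of $H$, the real-time order relation $\prec_H$ is just the union of the relations $\prec_P$ over prefixes. A sequential witness for the infinite $H$ can then be constructed as a coherent limit of the $S_P$'s — for instance by a standard König's-lemma argument over finitely many transactions at a time, or, more cleanly, by invoking \rthm{thm:lopacity-safety-property} together with the fact that preservation of real-time order is itself a pointwise (pair-by-pair) condition that lifts from finite approximations to their limit.

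The main obstacle is the infinite case: concretely, the need to glue the per-prefix witnesses $S_P$ into a single sequential witness for all of $H$ without contradiction. I expect the cleanest route will be to exploit the pairwise character of $\prec_H$, so that preservation for each pair of transactions can be verified in the finite witness where both first appear, rather than attempting to construct the infinite witness by brute force. The finite case, in contrast, is immediate from \rdef{def:final-state-lopacity}(a) and requires no further work.
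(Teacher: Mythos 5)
Your proposal is correct and takes essentially the same route as the paper, whose entire proof reads ``Trivially from \rdef{def:lopacity} and \rdef{def:fs-lopacity}a'' --- i.e.\ exactly your finite-case argument of unpacking the two definitions. Your additional care about gluing witnesses in the infinite case goes beyond what the paper bothers to write down, but it does not change the approach.
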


\begin{proof}
    Trivially from \rdef{def:lopacity} and \rdef{def:fs-lopacity}a.
\end{proof}

\paragraph{Recoverability}
If one transaction reads from another transaction, the former will commit only
after the latter commits. This guarantees that transactions commit in order. 

\begin{lemma} %
\label{lemma:lop-recoverability}
    Every \lopaque{} history is recoverable.
\end{lemma}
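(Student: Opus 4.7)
The plan is to argue by contradiction, leveraging prefix-closure of last-use opacity (\rthm{thm:lopacity-safety-property}) and using the legality clause of \rdef{def:fs-lopacity}b to force the writer into $\vis{S}{\tr_j}$. Concretely, suppose $\hist$ is last-use opaque but fails to be recoverable: there are $\tr_i,\tr_j\in\hist$ with $i\neq j$ such that $\tr_j$ reads from $\tr_i$ and $\tr_j$ commits in $\hist$, yet $\tr_i$ does not commit before $\tr_j$ commits. Fix $\objx$ and $\val$ such that $\frop{j}{\objx}{\val}\in\hist|\tr_j$ and $\fwop{i}{\objx}{\val}{\ok_i}\in\hist|\tr_i$; by the unique-writes assumption this is the only write of $\val$ to $\objx$ in the whole of $\hist$.

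The first step is to take the finite prefix $P$ of $\hist$ that ends at the response event $\res{j}{}{\co_j}$. By \rthm{thm:lopacity-safety-property} last-use opacity is a safety property, so $P$ is itself last-use opaque, hence final-state last-use opaque. Thus \rdef{def:fs-lopacity} yields a sequential history $S\equiv\compl{P}$ that preserves real-time order and in which every transaction committed in $S$ is legal. Since $\tr_j$ is committed in $P$, it is committed in every completion and in $S$, so $\vis{S}{\tr_j}$ must be legal, i.e.\ $\vis{S}{\tr_j}|\objx\in\mathit{Seq}(\objx)$.

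The second step extracts the writer from legality. Because $\tr_j$'s read of $\val$ from $\objx$ lies in $\vis{S}{\tr_j}$, the sequential specification of $\objx$ forces a preceding complete write of $\val$ to $\objx$ in $\vis{S}{\tr_j}$; by unique writes that write is $\fwop{i}{\objx}{\val}{\ok_i}$, so $\tr_i\in\vis{S}{\tr_j}$. By the very definition of $\vis{S}{\tr_j}$, this means $\tr_i$ is committed in $S$, and therefore $\tr_i$ is committed in $\compl{P}$. Consulting the four clauses of the completion definition, the only two ways $\tr_i$ can be committed in $\compl{P}$ are that $\tr_i$ is already committed in $P$ (clause a), or $\tr_i$ is live with a pending $\inv{i}{}{\tryC_i}$ in $P$ (clause b); clauses (c) and (d) turn live transactions into aborted ones and are excluded.

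In the first sub-case, $\res{i}{}{\co_i}$ lies inside $P$, hence strictly precedes $\res{j}{}{\co_j}$ in $\hist$, contradicting our assumption. In the second sub-case $\inv{i}{}{\tryC_i}$ has already been issued before $\res{j}{}{\co_j}$ and the completion rule forces $\tr_i$ to commit; under the completion-based reading of ``commits before'' that is used throughout the paper (commit-pending transactions are counted as committing), this again contradicts the assumed violation of \rdef{def:recoverability-short}. The main obstacle in the argument is precisely this second sub-case: one has to be careful that reading from a commit-pending writer does not produce a counterexample to recoverability, and the only clean way to close it is to appeal to the completion convention (clause b), which treats a pending $\tryC_i$ as a commitment; the rest of the proof is a straightforward unique-writes legality argument analogous to the one used in \rthm{thm:opacity-early-release}.
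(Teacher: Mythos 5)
Your argument is essentially the paper's own: assume non-recoverability, pass to the prefix $P$ ending at $\res{j}{}{\co_j}$ (justified by prefix-closure), observe that the completion must abort the writer so that $\vis{S}{\tr_j}$ cannot contain the write of $\val$, and conclude that $\tr_j$ cannot be legal, contradicting final-state last-use opacity of $P$. The one place you go beyond the paper is the commit-pending sub-case: the paper's proof sketch simply asserts that \emph{any} completion of $P$ aborts $\tr_i$, which is false when $\tr_i$ has a pending $\tryC_i$ in $P$ (clause b of the completion definition may commit it), and your fix --- reading ``commits before'' in \rdef{def:recoverability-short} so that a pending $\tryC_i$ counts --- is not a convention the paper actually states, so that corner case remains a (shared) soft spot of both proofs rather than a defect specific to yours.
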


We provide the proof in \rappx{sec:property-comparison}.

\paragraph{Precluding Overwriting}
If transaction $\tr_i$ reads the value of some variable written by transaction
$\tr_j$, then $\tr_j$ will never subsequently modify that variable.

\begin{lemma}%
\label{lemma:lop-no-overwriting}
    Last-use opacity does not support overwriting.
\end{lemma}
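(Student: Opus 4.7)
The plan is a proof by contradiction: assume there is a last-use opaque history $\hist$ witnessing overwriting, with transactions $\tr_i, \tr_j$ as in \rdef{def:overwriting-support}, so that $\tr_i$ executes $\fwop{i}{\obj}{\val}{\ok_i}$ then releases $\obj$ early to $\tr_j$ (which reads $\val$), and $\tr_i$ later executes $\fwop{i}{\obj}{\val'}{\ok_i}$. The goal is to exhibit a finite prefix of $\hist$ that is \emph{not} final-state last-use opaque, contradicting \rdef{def:lopacity}.

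First, I would pick the prefix $P$ of $\hist$ that ends immediately after the response event of $\frop{j}{\obj}{\val}$ but strictly before the invocation of $\twop{i}{\obj}{\val'}$ (such a cut exists by condition (c) of \rdef{def:overwriting-support}). In $P$, transaction $\tr_i$ is live with no pending operation, so in any completion $\compl{P}$ it is aborted via $[\tryC_i \rightarrow \ab_i]$. Similarly, $\tr_j$ is aborted in $\compl{P}$. The central observation is that $\tr_i$ does not decide on $\obj$ in $P$: although the complete execution $\fwop{i}{\obj}{\val}{\ok_i}$ is in $P|\tr_i$, \rdef{def:last-write-inv} requires that no continuation $\hist' \in \evalhist{\prog}{\processes}$ extending $P$ contain a subsequent $\inv{i}{}{\wop{\obj}{\valu}}$; but $\hist$ itself is such a continuation and does contain the later invocation of $\twop{i}{\obj}{\val'}$. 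Hence $\tr_i \notin \cpetrans{P}$ with respect to $\obj$, and $(P\cpe\tr_i)|\obj = \varnothing$.

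Now take any sequential history $S$ equivalent to $\compl{P}$ and preserving real-time order. Since $\tr_i$ is aborted in $S$, the only way an operation of $\tr_i$ on $\obj$ could enter $\luvis{S}{\tr_j}$ is through the clause admitting $S\cpeC\tr_i$; but by the previous paragraph $S\cpeC\tr_i$ contains no operation on $\obj$. Therefore, in $\luvis{S}{\tr_j}$ the read operation $\frop{j}{\obj}{\val}$ is not preceded by any write of $\val$ to $\obj$, so $\luvis{S}{\tr_j}\!\mid\!\obj \notin \mathit{Seq}(\obj)$, i.e.\ $\tr_j$ is not last-use legal in $S$. Since this holds for every equivalent sequential history preserving real-time order, $P$ fails clause (c) of \rdef{def:fs-lopacity}, so $P$ is not final-state last-use opaque, contradicting last-use opacity of $\hist$. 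The claim then follows from \rdef{def:overwriting-support}.

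The main obstacle is making the last-write argument airtight: one must carefully invoke \rdef{def:last-write-inv} on the chosen prefix $P$ and argue that the presence of the second write in the extension $\hist$ is enough to disqualify the first write as a last write in $P$, which in turn is what severs $\tr_j$ from any legal justification of $\val$ once $\tr_i$ is forced to abort in the completion. Everything else --- the choice of prefix, the forced abort of $\tr_i$, and the fact that $S\cpeC\tr_i$ is the only admissible channel through which $\tr_i$'s operations may appear in $\luvis{S}{\tr_j}$ --- is essentially bookkeeping against the definitions.
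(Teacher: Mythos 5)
Your proof is correct, but it takes a genuinely different route from the paper's. The paper argues directly on the full history $\hist$: it observes that, by construction, $\vis{S}{\tr_j}$ and $\luvis{S}{\tr_j}$ admit $\tr_i$'s operations on $\obj$ on an all-or-nothing basis (either the whole of $S|\tr_i$, or the whole of $(S|\tr_i)|\obj$ via $S\cpeC\tr_i$, or nothing), so in any sequential equivalent the read $\frop{j}{\obj}{\val}$ is either preceded by \emph{both} writes --- in which case it is directly preceded by the write of $\val'$, not $\val$ --- or by neither; either way $\tr_j$ fails (last-use) legality, so $\hist$ is not even final-state last-use opaque, and no appeal to prefix-closure is needed. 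You instead cut a prefix $P$ between the read and the second write, note that the existence of the second write in the continuation $\hist$ disqualifies the first write from being a \last{} write in $P$ (so $\tr_i$ is not decided on $\obj$ in $P$ and $S\cpeC\tr_i$ carries nothing on $\obj$), force $\tr_i$ to abort in any completion of $P$, and conclude that $\tr_j$'s read is orphaned. Your route makes essential use of prefix-closure and of the \last{}-write machinery, and it nicely exposes \emph{why} the last-write condition is what rules out overwriting; the paper's route is shorter, avoids the (implicit) assumption that $\evalhist{\prog}{\processes}$ is prefix-closed so that \rdef{def:last-write-inv} can be applied to $P$, and establishes the slightly stronger fact that the overwriting history already fails final-state last-use opacity. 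Two small imprecisions in your write-up, neither fatal: $\tr_i$ may have a pending operation on another variable at your cut point (the completion then appends $\res{i}{}{\ab_i}$ rather than $\tryC_i \to \ab_i$, but $\tr_i$ is aborted either way), and $\compl{P}$ is not unique, so the final quantification should be over all completions and all equivalent sequential histories --- which your argument in fact covers.
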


\begin{proof}
    For the sake of contradiction let us assume that there exists $\hist$ that
    is a \lopaque{} history with overwriting, i.e. (from
    \rdef{def:overwriting-support}) there are transaction $\tr_i$ and $\tr_j$
    s.t.:
    \begin{enumerate}[a) ]
    \item $\tr_i$ releases some variable $\obj$ early,
    \item $\hist|\tr_i$ contains %
        $\fwop{i}{\obj}{\val}{\ok_i}$ and
        $\fwop{i}{\obj}{\val'}{\ok_i}$, s.t. the former precedes the latter in
        $\hist|\tr_i$,
    \item $\hist|\tr_j$ contains %
        $\frop{j}{\obj}{\val}$ that precedes $\fwop{i}{\obj}{\val'}{\ok_i}$ in
        $\hist.$
    \end{enumerate}
    Since $\hist$ is opaque, then there is a completion $C = \compl{\hist}$ and
    a sequential history $S$ s.t. $S \equiv \hist$, $S$ preserves the real-time
    order of $\hist$, and both $\tr_i$ and $\tr_j$ in $S$ are legal in $S$.
    In $S$, either $\tr_i \prec_S \tr_j$ or $\tr_j \prec_S \tr_i$.
    In either case, any $\vis{S}{\tr_j}$ or $\luvis{S}{\tr_j}$ by their
    definitions will contain either the sequence of both
    $\fwop{i}{\obj}{\val}{\ok_i}$ and $\fwop{i}{\obj}{\val'}{\ok_i}$ or neither
    of those write operation executions. %
    In either case, $\frop{j}{\obj}{\val}$ will not be directly preceded by
    $\fwop{i}{\obj}{\val}{\ok_i}$  among operations on $\obj$ in either
    $\vis{S}{\tr_j}$ or $\luvis{S}{\tr_j}$. Therefore, $\tr_j$ in $S$
    cannot be legal in $S$, which is a contradiction.    
\end{proof}

\paragraph{Aborting Early Release}
A transaction can release some variable early and subsequently abort.

\begin{lemma} %
\label{lemma:lop-aborting-early-release}
    Last-use opacity{} supports aborting early release.
\end{lemma}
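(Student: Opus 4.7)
The plan is to prove the lemma by exhibiting a witness: a concrete \lopaque{} history in which the releasing transaction aborts. A natural candidate is the history already depicted in \rfig{fig:example-abort-abort}, in which $\tr_i$ writes $1$ to $\obj$ (this being the \last{} write on $\obj$ in $\tr_i$, as dictated by the underlying program), then $\tr_j$ reads $1$ from $\obj$, and finally both transactions invoke $\tryA$ and abort.

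First I would verify, directly from \rdef{def:early-release}, that $\tr_i$ releases $\obj$ early in $\hist$: taking $P$ to be the prefix of $\hist$ ending with $\res{j}{}{1}$, $\tr_i$ is still live in $P$, $\op_j = \frop{j}{\obj}{1}$ is a complete non-local read in $P|\tr_j$, and it is preceded in $P$ by the complete write $\op_i = \fwop{i}{\obj}{1}{\ok_i}$ in $P|\tr_i$. Next I would observe that $\hist|\tr_i$ contains $\ab_i$ by construction, so once last-use opacity of $\hist$ is established, \rdef{def:aborting-support} is satisfied and the lemma follows.

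The substantive work is therefore to show that $\hist$ is \lopaque{}. By \rdef{def:lopacity} and \rdef{def:fs-lopacity} this amounts to exhibiting, for every finite prefix $P$ of $\hist$, a sequential $S \equiv \compl{P}$ that preserves $\prec_P$, makes every committed transaction legal, and makes every non-committed transaction last-use legal. Since all real-time ordering in $\hist$ is trivial (both transactions overlap), I would pick $S = \compl{P}|\tr_i \cdot \compl{P}|\tr_j$ in every case and then perform a short case analysis on $P$: prefixes in which $\tr_i$'s write has not yet occurred are vacuously last-use legal; prefixes in which $\tr_j$'s read has not occurred contain no non-trivial reads to justify; the only interesting case is any prefix whose completion contains $\tr_j$'s read of $1$ while $\tr_i$ is aborted. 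The main obstacle is precisely this case, and it is resolved by the \last{}-write machinery: because $\tr_i$ decided on $\obj$ in $\hist$ before the read, \rdef{def:decided} applies and $\hist\cpeC\tr_i$ contains $\fwop{i}{\obj}{1}{\ok_i}$, so by clause (b) of the definition of $\luvis{S}{\tr_j}$ we may include $S\cpeC\tr_i$ in $\luvis{S}{\tr_j}$. The resulting $\luvis{S}{\tr_j}|\obj$ consists of a write of $1$ to $\obj$ followed by a read of $1$ from $\obj$, which is in $\mathit{Seq}(\obj)$, so $\tr_j$ is last-use legal. Similarly $\tr_i$'s $\luvis$ contains only its own operations and is trivially legal.

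Having verified every prefix, $\hist$ is \lopaque{}, $\tr_i$ releases $\obj$ early, and $\hist|\tr_i$ contains $\ab_i$; hence by \rdef{def:aborting-support} last-use opacity supports aborting early release.
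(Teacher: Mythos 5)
Your proposal is correct and follows essentially the same route as the paper: the same witness history from \rfig{fig:example-abort-abort}, the same sequential history $S = \hist|\tr_i \cdot \hist|\tr_j$, and the same use of the decided-on/$\cpeC$ machinery to place $\tr_i$'s \last{} write into $\luvis{S}{\tr_j}$. Your prefix case analysis is somewhat more explicit than the paper's one-line dismissal of the prefixes, but the substance is identical.
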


\begin{proof} 
    Let $\hist$ be the history depicted in \rfig{fig:example-abort-abort}.
    Here, $\tr_i$ releases $\obj$ early to $\tr_j$ and subsequently aborts,
    which satisfies \rdef{def:aborting-support}.
    Since $\tr_i$ and $\tr_j$ are both aborted in $\hist$, $\hist$ has a
    completion $C = \compl{\hist} = \hist$. Let $S$ be a sequential history
    s.t. $S=\hist|\tr_i \cdot \hist|\tr_j$. $S$ vacuously preserves the
    real-time order of $\hist$ and trivially $S \equiv \hist$. 
    Transaction $\tr_i$ in $S$ is last-use legal in $S$, because
    $\luvis{S}{\tr_i} = \hist\cpeC\tr_i$ whose operations on $\obj$ are limited
    to a single write operation execution is within the sequential
    specification of $\obj$.
    Transaction  $\tr_j$ in $S$ is also last-use legal in $S$, since
    $\luvis{S}{\tr_j} = \hist\cpeC\tr_i \cdot \hist\cpeC\tr_j$ whose operations
    on $\obj$ consist of $\fwop{i}{\obj}{\val}{\ok_i}$ followed by
    $\frop{j}{\obj}{\val}$ is also within the sequential specification of
    $\obj$.
    Since both $\tr_i$ and $\tr_j$ in $S$ are last-use legal in $S$, $\hist$ is
    final-state last-use opaque.
    All prefixes of $\hist$ are trivially also final-state last-use opaque
    (since either their completion is the same as $\hist$'s, they contain only
    a single write operation execution on $\obj$, or contain no operation
    executions on variables), so $\hist$ is last-use opaque.
\end{proof}

\paragraph{Exclusive Access}
Any transaction has effectively exclusive
access to any variable it accesses, at minimum, from the first to the final
modification it performs, regardless of whether it eventually commits or aborts.

\begin{lemma} %
\label{lemma:lop-exclusive-access}
    Any transaction in any \lopaque{} history has exclusive access to any
    variable between first and last access to.
\end{lemma}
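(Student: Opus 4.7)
The plan is to exploit the fact that any sequential history equivalent to a completion of $\hist$ places all operations of a given transaction in one contiguous block. Since $\hist$ is last-use opaque, by \rdef{def:lopacity} together with \rdef{def:fs-lopacity} there is a sequential history $S$ equivalent to some completion of $\hist$ that preserves real-time order, with every committed transaction legal in $S$ and every other transaction last-use legal in $S$. Because $S$ is sequential, no two transactions are concurrent in $S$, hence the events of $\tr_i$ form a contiguous block in $S$. In particular, between $\tr_i$'s first access to $\obj$ and $\tr_i$'s last modification of $\obj$ in $S$, no operation of any other transaction appears. This is the precise sense in which $\tr_i$ has effectively exclusive access.

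To connect the serialization-level statement back to $\hist$ itself, I would argue by contradiction: suppose some $\tr_j \in \hist$ ($j\neq i$) has an operation execution $op_j$ on $\obj$ in $\hist$ strictly between $\tr_i$'s first access $a_F$ and $\tr_i$'s last modification $w_L$. The two blocks of $\tr_i$ and $\tr_j$ in $S$ must be linearly ordered (either $\tr_i \prec_S \tr_j$ or $\tr_j \prec_S \tr_i$), and I would case-split on the type of $op_j$. The cleanest case is when $op_j$ is a read returning the value $\val$ that $\tr_i$ wrote prior to $op_j$: since $w_L$ writes a different value (by unique writes) after $op_j$, the triple $(\tr_i, \tr_j, \obj)$ satisfies \rdef{def:overwriting-support}, directly contradicting \rlemma{lemma:lop-no-overwriting}.

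In the remaining subcases (where $op_j$ is a write, or a read that does not observe $\tr_i$'s earlier write), I would analyze the two possible placements of $\tr_j$ relative to $\tr_i$ in $S$ and show that in each placement either $op_j$'s value cannot be justified inside $\vis{S}{\tr_j}$ or $\luvis{S}{\tr_j}$, or some read $\tr_i$ issues on $\obj$ between $a_F$ and $w_L$ cannot be justified inside $\vis{S}{\tr_i}$ or $\luvis{S}{\tr_i}$. This uses the sequential specification of $\obj$, unique writes, and the contiguity of $\tr_i$'s operations in $S$: once $\tr_i$'s block is fixed, the writes on $\obj$ in $S|\obj$ from $\tr_i$ are delivered as a single run, and $\tr_j$'s operation must be consistent with one of the two resulting configurations of $S|\obj$.

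The main obstacle is the subcase where $op_j$ is a write by $\tr_j$: the overwriting lemma does not apply directly, and I must argue that no placement of $\tr_j$'s block in $S$ is simultaneously consistent with $\tr_i$'s reads and writes on $\obj$ around $op_j$. The crux is that $\tr_i$'s operations on $\obj$ between $a_F$ and $w_L$ are necessarily local in $\hist|\tr_i$ once $\tr_i$ first writes, so their values are dictated by $\tr_i$'s own prior writes, and $\tr_j$'s interposed write cannot alter them without breaking legality of either $\tr_i$ or $\tr_j$ in $S$. I expect to lean on \rlemma{lemma:lop-no-overwriting} as the pivotal tool, and on the unique writes assumption plus the structure of $\luvis{S}{\cdot}$ to rule out the remaining configurations.
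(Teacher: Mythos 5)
Your two pillars---the sequential witness and \rlemma{lemma:lop-no-overwriting}---are exactly what the paper uses: its entire proof is the one-line citation ``From \rlemma{lemma:lop-serializability} and \rlemma{lemma:lop-no-overwriting}.'' So in spirit you are on the paper's route, and your first case (an interposed read by $\tr_j$ that observes a value $\tr_i$ subsequently overwrites) is correctly dispatched via \rdef{def:overwriting-support} and \rlemma{lemma:lop-no-overwriting}.

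The gap is the case you yourself flag as the ``main obstacle,'' and it cannot be closed the way you hope: a blind interposed write by $\tr_j$ contradicts neither serializability nor the no-overwriting lemma. Consider $\hist$ in which $\tr_i$ executes $\frop{i}{\obj}{0}$, then $\tr_j$ executes $\fwop{j}{\obj}{5}{\ok_j}$, then $\tr_i$ executes $\fwop{i}{\obj}{1}{\ok_i}$, and both commit. The sequential history $S=\hist|\tr_i\cdot\hist|\tr_j$ is legal (the read of $0$ is justified by the initial value, and writes carry no precondition), the real-time order is empty, and every prefix is handled similarly, so $\hist$ is \lopaque{}---yet $\tr_j$ accesses $\obj$ strictly inside $\tr_i$'s first-to-last-access window. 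Hence the \emph{literal} exclusivity claim does not follow from the two lemmas, and your remaining subcases will not all yield contradictions; legality constrains only reads, so a write whose value is never observed can always be serialized after $\tr_i$. What the two cited lemmas actually deliver is the weaker, observational reading the surrounding prose intends (``effectively exclusive''): no other transaction can observe an intermediate value of $\obj$ written by $\tr_i$ in that window, and all observations must be explainable by a single serialization. To make your proof go through you would need to state the lemma in that observational form; as written, the stronger claim you are trying to establish fails on the example above.
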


\begin{proof}
    From \rlemma{lemma:lop-serializability} and
    \rlemma{lemma:lop-no-overwriting}.
\end{proof}

\begin{figure}
\hspace{.5cm}
\begin{minipage}[t]{.49\linewidth}
\begin{lstlisting}
// invariant: $\mathtt{x}\geq\mathtt{0}$
transaction {
    x = y - 1; 
    if (x < 0) 
        abort();
}
\end{lstlisting}
\end{minipage}
\begin{minipage}[t]{.49\linewidth}
\begin{lstlisting}
// invariant: $\mathtt{x}\geq\mathtt{0}$
transaction {
    *(_array + x);  
}
\end{lstlisting}
\end{minipage}

\begin{minipage}[t]{.49\linewidth}
\captionof{figure}{\label{fig:abort-example}Abort example.}
\end{minipage}
\begin{minipage}[t]{.49\linewidth}
\captionof{figure}{\label{fig:abort-example-c}Memory error example.}
\end{minipage}
\end{figure}

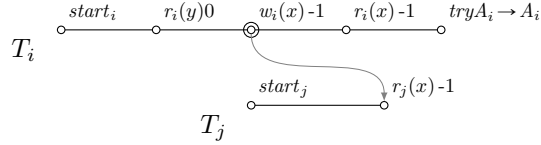
\begin{figure}
\begin{center}
\begin{tikzpicture}
     \draw
           (0,2)        node[tid]       {$\tr_i$}
                        node[aop]       {$\init_i$} %
                        node[dot]       {} 

      -- ++(1.25,0)     node[aop]       {$\trop{i}{\objy}{0}$}
                        node[dot]       {}

      -- ++(1.25,0)     node[aop]       {$\twop{i}{\objx}{\,\shortminus1}$}
                        node[dot] (wi)  {}
                        node[cir]       {}

      -- ++(1.25,0)     node[aop]       {$\trop{i}{\objx}{\,\shortminus1}$}
                        node[dot]       {}

      -- ++(1.25,0)     node[aop]       {$\tryA_i\!\to\!\ab_i$}
                        node[dot]       {}         
                        ;

     \draw
           (2.5,1)      node[tid]       {$\tr_{j}$}
                        node[aop]       {$\init_{j}$} %
                        node[dot]       {} 

      -- ++(1.75,0)     node[aop]       {$\trop{j}{\obj}{\,\shortminus1}$}
                        node[dot] (rj)  {}
                        ;
     
     \draw[hb] (wi) \squiggle (rj);
\end{tikzpicture}
\end{center}
\caption{\label{fig:abort-history}Last-use opaque history with inconsistent view.}
\end{figure}

However, last-use opacity does not preclude transactions from aborting after
releasing a variable early. As a consequence there may be instances of
cascading aborts, which have varying implications on consistency depending
on whether the TM model allows transactions to abort programmatically.
We distinguish three cases of models and discuss them below.

\paragraph{Only Forced Aborts}
Let us assume that transactions cannot arbitrarily abort, but only do so
as a result of receiving an abort response to invoking a read or write
operation, or while attempting to commit.
In other words, there is no $\tryA$ operation in the transactional API.
In that case, since overwriting is not allowed, the transaction never reveals
intermediate values of variables to other transactions. This means, that if a
transaction released a variable early, then the programmer did not intend to
change the value of that variable. So, if the transaction eventually committed,
the value of the variable would have been the same.
So, if the transaction is eventually forced to abort rather than committing, the
value of any variable released early would be the same 
regardless of whether the transaction committed or aborted. 
Therefore, we can consider the inconsistent state to be safe.
In other words, if the variable caused an error to occur, the error would be
caused regardless of whether the transaction finally aborts or commits.
Thus, we can say that with this set of assumptions, the programmer is
guaranteed that none of the inconsistent views will cause unexpected behavior,
even if cascading aborts are possible.
Note that the use of this model is not uncommon (see eg. \cite{FGG09,AGHR13,AGHR14}).

\paragraph{Programmer-initiated Aborts}
Alternatively, let us assume that transactions can arbitrarily abort (in
addition to forced aborts as described above) by executing the operation
$\tryA$ as a result of some instruction in the program.
In that case it is possible to imagine programs that use the abort instruction
to cancel transaction due to the "business logic" of the program. Therefore a
programmer explicitly specifies that the value of a variable is different
depending on whether the transaction finally commits or not.
An example of such a program is given in \rfig{fig:abort-example}. Here, the
programmer enforced an invariant that the value of $x$ should never be less
than zero. If the invariant is not fulfilled, the transaction aborts.  However,
writing a value to $x$ that breaks the invariant is the \last{} write operation
execution for this program, so it is possible that another transaction reads
the value of $x$ before the transaction aborts.
If the transaction that reads $x$ is like the one in 
\rfig{fig:abort-example-c}, where $x$ is used to index an array via pointer
arithmetic, a memory error is possible.
Nevertheless, the history from \rfig{fig:abort-history} that corresponds to a
problematic execution of these two transactions is clearly allowed by last-use
opacity (assuming that the domain of $x$ is $\mathbb{Z}$). 
Thus, if the abort operation is available to the programmer the guarantee  
that inconsistent views will not lead to unexpected effects is lost.
Therefore it is up to the programmer to use aborts wisely or to prevent
inconsistent views from causing problems, by prechecking invariants at the
outset of a transaction, or maintaining invariants also within a transaction
(in a similar way as with monitor invariants).
Alternatively, a mechanism can be built into the TM that prevents specific
transactions at risk from reading variables that were released early, while
other transactions are allowed to do so.
However, if these workarounds are not satisfactory, we present a stronger
variant of last-use opacity in \rsec{sec:blu-opacity} that deals specifically
with this model and eliminates its inconsistent views.

\paragraph{Arbitrary Aborts}
We present a third alternative to aborts in transactions: a compromise
between only forced aborts and programmer-initiated aborts. This option assumes
that the $\tryA$ operation is not available to the programmer, so it cannot be
used to implement business logic. However, we allow the TM system to somehow
inject $\tryA$ operations in the code in response to external stimuli, such as
crashes or exceptions and use aborts as a fault tolerance mechanism.
However, since the programmer cannot use the operation, the programs must be
coded as in the forced aborts case, and therefore the same guarantees are
given.

\subsection{Last-use Opacity in Context}
\label{sec:lopacity-comparison}

We compare last-use opacity with other safety properties with respect to their
relative strength. Given two properties $\property_1$ and $\property_2$ and
the set of all histories that satisfy each property $\mathbb{\hist}_1$,
$\mathbb{\hist}_2$, respectively. $\property_1$ is stronger than $\property_2$ if
$\mathbb{\hist}_1 \subset \mathbb{\hist}_2$ (so $\property_2$ is weaker than
$\property_1$). If neither $\mathbb{\hist}_1 \subset \mathbb{\hist}_2$ nor
$\mathbb{\hist}_1 \supset \mathbb{\hist}_2$, then the properties are
incomparable.

We present the result of the comparison in \rfig{fig:properties}.  We describe
the comparison with opacity and serializability in particular below, and
provide proofs for the comparison of the remaining properties in the appendix.

\begin{figure}[t]
    \hspace{.09\textwidth}
    \begin{minipage}{.2\textwidth}
    {%
    $\begin{array}{lclclcl}
            \mathbb{H}_{tms2} &\subset&    \mathbb{H}_{lop}  \\ 
            \mathbb{H}_{op}   &\subset&    \mathbb{H}_{lop}  \\ 
            \mathbb{H}_{rig}  &\subset&    \mathbb{H}_{lop}  \\ 
            \mathbb{H}_{lvop} &\subset&    \mathbb{H}_{lop}  \\
            \mathbb{H}_{rec}  &\supset&    \mathbb{H}_{lop}  \\
            \mathbb{H}_{ser}  &\supset&    \mathbb{H}_{lop}  \\
    \end{array}$}
    \end{minipage}
    \hspace{.09\textwidth}
    \begin{minipage}{.49\textwidth}
    {%
     $\begin{array}{lclclcl}
            \mathbb{H}_{tms1} &\not\subset& \mathbb{H}_{lop} &\wedge& 
            \mathbb{H}_{tms1} &\not\supset& \mathbb{H}_{lop} \\
            \mathbb{H}_{eop}  &\not\subset& \mathbb{H}_{lop} &\wedge& 
            \mathbb{H}_{eop}  &\not\supset& \mathbb{H}_{lop} \\ 
            \mathbb{H}_{str}  &\not\subset& \mathbb{H}_{lop} &\wedge& 
            \mathbb{H}_{str}  &\not\supset& \mathbb{H}_{lop} \\ 
            \mathbb{H}_{vwc}  &\not\subset& \mathbb{H}_{lop} &\wedge& 
            \mathbb{H}_{vwc}  &\not\supset& \mathbb{H}_{lop} \\ 
            \mathbb{H}_{aca}  &\not\subset& \mathbb{H}_{lop} &\wedge& 
            \mathbb{H}_{aca}  &\not\supset& \mathbb{H}_{lop} \\      
            && \\    
    \end{array}$}
    \end{minipage}
    \caption{\label{fig:properties} Last-use opaque histories $\mathbb{H}_{lop}$ 
    in relation to:
    TMS2 and TMS1 histories $\mathbb{H}_{tms2}$, $\mathbb{H}_{tms1}$,
    opaque histories $\mathbb{H}_{op}$,
    elastic opaque histories $\mathbb{H}_{eop}$,
    rigorous histories $\mathbb{H}_{rig}$,
    strict histories $\mathbb{H}_{str}$,
    live opaque histories $\mathbb{H}_{lvop}$, 
    virtual world consistent histories $\mathbb{H}_{vwc}$,
    recoverable histories $\mathbb{H}_{rec}$, 
    histories avoiding cascading abort $\mathbb{H}_{aca}$, and
    serializable histories $\mathbb{H}_{ser}$.
}
\end{figure}

Opacity is strictly stronger than \lopacity{}.

\begin{lemma} \label{lemma:vis-legal-lvis-legal}
    For any history $S$  and transaction $\tr_i \in S$, if $\vis{S}{\tr_i}$ is
    legal, then $\luvis{S}{\tr_i}$ is legal.
\end{lemma}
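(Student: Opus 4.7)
The plan is to show that $\vis{S}{\tr_i}$ is itself a permissible instance of $\luvis{S}{\tr_i}$ under the definition, from which legality transfers immediately. First I would unpack the two definitions side by side. The subhistory $\vis{S}{\tr_i}$ comprises exactly two kinds of transactional contributions: $S|\tr_i$ itself, and $S|\tr_j$ for every $\tr_j$ that is committed in $S$ and satisfies $\tr_j \prec_S \tr_i$. Clause~(a) of the $\luvis{S}{\tr_i}$ definition requires exactly these contributions; clause~(b) only concerns additional contributions of the form $S\cpeC\tr_j$ for certain decided-but-uncommitted transactions preceding $\tr_i$ in $S$ (but not in real time), and phrases the inclusion as ``either $S\cpeC\tr_j \subseteq \luvis{S}{\tr_i}$ or not,'' i.e.\ the choice is left free.

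Second, I would argue that taking the ``or not'' branch for every instance of clause~(b) yields a subhistory that satisfies the $\luvis{S}{\tr_i}$ definition and is equal, event for event, to $\vis{S}{\tr_i}$: both consist precisely of $S|\tr_i$ together with $S|\tr_j$ for each committed preceding $\tr_j$, in the order induced by $S$. Consequently this particular realisation of $\luvis{S}{\tr_i}$ coincides with $\vis{S}{\tr_i}$, and since the latter is legal by hypothesis, this $\luvis{S}{\tr_i}$ is also legal, which is the statement we need.

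The only real subtlety, and the main step worth spelling out, is justifying that clause~(b) is genuinely permissive rather than forcing inclusion of $S\cpeC\tr_j$; the ``either \ldots or not'' wording in the definition is precisely what licenses this reading, and I would cite that formulation explicitly. If a stricter reading were intended (in which the word ``longest'' compels one to include every admissible $S\cpeC\tr_j$), the argument would instead have to verify that appending each decided-subhistory-completion preserves conformance to the sequential specification of each accessed variable, leveraging unique writes together with the fact that, by \rdef{def:decided}, $\tr_j$ has already executed its \last{} write on every variable in the access set of $S\cpeC\tr_j$; but the permissive reading suffices and keeps the proof a one-line consequence of the definitions.
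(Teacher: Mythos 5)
Your proposal is correct and follows essentially the same route as the paper's own proof: both observe that clause~(a) of the $\luvisf$ definition reproduces exactly the content of $\visf$, and that clause~(b) is permissive, so that $\vis{S}{\tr_i}$ itself is an admissible realisation of $\luvis{S}{\tr_i}$ and legality transfers directly. Your explicit discussion of why the ``either \ldots or not'' wording licenses declining every clause~(b) inclusion is in fact slightly more careful than the paper's sketch, which simply asserts $\luvis{S}{\tr_i} = \vis{S}{\tr_i}$ without addressing that subtlety.
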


\begin{proof}[sketch]
    By definition of $\vis{S}{\tr_i}$, if operation $\op \in \vis{S}{\tr_i}$,
    then $\op \in \vis{S}{\tr_i}$ only if $\op \in \hist|\tr_j$ and either
    $i=j$ or $\tr_j \prec_S \tr_i$ and $\tr_j$ is committed.
    By definition of $\luvis{S}{\tr_i}$, given transactions $\tr_i, \tr_j$ and
    operation $\op \in S|{\tr_j}$, if $i=j$ or $\tr_j \prec_S \tr_i$ and
    $\tr_j$ is committed, then $S|\tr_j \subseteq \luvis{S}{\tr_i}$. Therefore
    $\luvis{S}{\tr_i} \equiv \vis{S}{\tr_i}$. Since $\vis{S}{\tr_i}$ and
    $\luvis{S}{\tr_i}$ preserve the order of operations in $S$, then
    $\luvis{S}{\tr_i} = \vis{S}{\tr_i}$. 
    Hence, if  $\vis{S}{\tr_i}$ is legal, then $\luvis{S}{\tr_i}$ is legal.
\end{proof}

\begin{lemma} \label{lemma:all-fs-opaque-are-fs-lopaque}
    Any final-state last-use opaque history $\hist$ is final-state \lopaque{}.
\end{lemma}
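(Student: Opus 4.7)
The plan is to show that the sequential witness $S$ used to establish final-state opacity of $\hist$ can be re-used verbatim as the witness for final-state last-use opacity. Concretely, assume $\hist$ is final-state opaque. By \rdef{def:fs-opacity}, there is a completion $\compl{\hist}$ and a sequential history $S \equiv \compl{\hist}$ such that (i)~$S$ preserves the real-time order of $\hist$, and (ii)~every transaction $\tr_i \in S$ is legal in $S$, i.e.\ $\vis{S}{\tr_i}$ is legal. I will show the same $S$ satisfies clauses (a)--(c) of \rdef{def:fs-lopacity}.

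Clause (a) is immediate since $S$ already preserves the real-time order of $\hist$. For clause (b), note that for every transaction $\tr_i$ that is committed in $S$, legality in $S$ in the sense of \rdef{def:fs-opacity} is exactly the condition required by clause (b) of \rdef{def:fs-lopacity}; so no extra argument is needed. For clause (c), take any transaction $\tr_i \in S$ that is not committed in $S$. By final-state opacity, $\vis{S}{\tr_i}$ is legal. Then \rlemma{lemma:vis-legal-lvis-legal} applied to $S$ and $\tr_i$ yields that $\luvis{S}{\tr_i}$ is also legal, which is precisely the definition of $\tr_i$ being last-use legal in $S$.

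All three clauses of \rdef{def:fs-lopacity} therefore hold for the witness $S$, establishing that $\hist$ is final-state last-use opaque.

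The proof is essentially routine once \rlemma{lemma:vis-legal-lvis-legal} is in hand; the only mild subtlety is making sure that the completion used for opacity is an admissible completion for last-use opacity, but since \rdef{def:fs-opacity} and \rdef{def:fs-lopacity} quantify over the same notion of completion, this is automatic. Thus the main conceptual content really lives in \rlemma{lemma:vis-legal-lvis-legal}, and this lemma is a direct corollary.
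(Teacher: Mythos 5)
Your proposal is correct and follows essentially the same route as the paper's own proof: reuse the sequential witness $S$ from final-state opacity, observe that committed transactions already satisfy clause (b), and invoke \rlemma{lemma:vis-legal-lvis-legal} to convert legality of $\vis{S}{\tr_i}$ into last-use legality for the non-committed (aborted) transactions. No substantive differences.
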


\begin{proof}[sketch]
    From \rdef{def:fs-opacity}, for any final-state opaque history
    $\hist$, there is a sequential history $S \equiv \compl{\hist}$ s.t. $S$ preserves
    the real time order of $\hist$ and every transaction $\tr_i$ in $S$ is
    legal in $S$. Thus, for every transaction $\tr_i$ in $S$ $\vis{S}{\tr_i}$
    is legal.
    From the definition of completion, any $\tr_i$ is either committed or
    aborted in $\compl{\hist}$ and therefore likewise completed or aborted in
    $S$.
    If $\tr_i$ is committed in $S$, then it is legal in $S$, so
    $\vis{S}{\tr_i}$ is legal, and therefore $\tr_i$ is
    last-use legal in $S$.
    If $\tr_i$ is aborted in $S$, then it is legal in $S$, so
    $\vis{S}{\tr_i}$ is legal, and therefore, from
    \rlemma{lemma:vis-legal-lvis-legal}, $\luvis{S}{\tr_i}$ is also legal, so
    $\tr_i$ is last-use legal in $S$.
    Given that all transactions in $S$ are last-use legal in $S$, then,
    from \rdef{def:fs-lopacity}, $\hist$ is final-state last-use opaque. 
\end{proof}

\begin{lemma} \label{lemma:all-opaque-are-lopaque}
    Any opaque history $\hist$ is \lopaque{}.
\end{lemma}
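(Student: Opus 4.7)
The plan is to reduce this lemma almost immediately to the previous one (\rlemma{lemma:all-fs-opaque-are-fs-lopaque}), which states that every final-state opaque history is final-state last-use opaque. The definitions of opacity (\rdef{def:opacity}) and last-use opacity (\rdef{def:lopacity}) are structurally identical: both quantify over finite prefixes of $\hist$ and demand, respectively, final-state opacity or final-state last-use opacity of each such prefix. So the lemma is essentially a prefix-wise lifting of \rlemma{lemma:all-fs-opaque-are-fs-lopaque}.

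Concretely, I would let $\hist$ be an arbitrary opaque history and let $P$ be any finite prefix of $\hist$. By \rdef{def:opacity}, $P$ is final-state opaque. Applying \rlemma{lemma:all-fs-opaque-are-fs-lopaque} to $P$ yields that $P$ is final-state last-use opaque. Since $P$ was an arbitrary finite prefix of $\hist$, every finite prefix of $\hist$ is final-state last-use opaque, and hence by \rdef{def:lopacity} $\hist$ is last-use opaque.

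I do not anticipate any real obstacle: the heavy lifting (showing that the witness sequential history for final-state opacity also witnesses final-state last-use opacity, via legality implying last-use legality from \rlemma{lemma:vis-legal-lvis-legal}) was already done in the preceding lemma. The only thing to double-check is that there is no mismatch between the quantification over prefixes in the two definitions, but both are phrased identically (``every finite prefix''), so the lifting is immediate and the proof is a two-line invocation of the previous results.
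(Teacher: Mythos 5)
Your proposal is correct and follows exactly the same route as the paper's own proof: unfold \rdef{def:opacity} to get final-state opacity of each finite prefix, apply \rlemma{lemma:all-fs-opaque-are-fs-lopaque} prefix-wise, and reassemble via \rdef{def:lopacity}. Nothing further is needed.
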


\begin{proof}
    If $\hist$ is opaque, then, from \rdef{def:opacity}, any prefix $P$ of
    $\hist$ is final-state opaque. 
    Since any prefix $P$ of $\hist$ is final-state opaque, then, from
    \rlemma{lemma:all-fs-opaque-are-fs-lopaque}, any $P$ is also
    final-state last-use opaque. Then, by \rdef{def:lopacity} $\hist$ is
    last-use opaque.
\end{proof}

Last-use opacity is strictly stronger than serializability.

\begin{lemma} \label{lemma:comp-serializability}
    Any \lopaque{} history $\hist$ is serializable.
\end{lemma}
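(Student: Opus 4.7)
The plan is to unpack the two definitions and observe that final-state last-use opacity already supplies everything serializability asks for, plus strictly more. Concretely, I would start by letting $\hist$ be any last-use opaque history. By \rdef{def:lopacity}, every finite prefix of $\hist$ is final-state last-use opaque; in particular, taking the full (finite) history as its own prefix, $\hist$ itself is final-state last-use opaque.

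Next, I would invoke \rdef{def:fs-lopacity} to extract a sequential history $S$ equivalent to some completion $\compl{\hist}$ such that $S$ preserves the real-time order of $\hist$, every transaction that is committed in $S$ is legal in $S$, and every other transaction is last-use legal in $S$. The first and third conditions are stronger than what serializability requires; I would simply discard them. What remains — existence of $S \equiv \compl{\hist}$ in which every committed transaction is legal in $S$ — is exactly the condition in \rdef{def:serializability}. Hence $\hist$ is serializable.

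The only subtle point concerns the case of an infinite $\hist$. Here I would handle it via limit-closure: for each finite prefix $P$ of $\hist$, the argument above gives a sequential extension witnessing serializability of $P$, and since serializability (as a completion-based condition on finite histories) is the property that matters operationally, one can either restrict the statement of \rlemma{lemma:comp-serializability} to finite histories — consistent with the completion-based formulation of \rdef{def:serializability} — or lift it to infinite histories by the standard limit-closure argument used in \rthm{thm:lopacity-safety-property}.

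I do not expect any real obstacle: the lemma is essentially a definitional observation, since the conditions of final-state last-use opacity strictly include those of serializability. The only thing to be careful about is not accidentally invoking legality of aborted transactions (which serializability does not care about) and not confusing $\luvis{S}{\tr_i}$ with $\vis{S}{\tr_i}$ when quoting legality for committed transactions; for committed $\tr_i$ the definition of final-state last-use opacity uses ordinary legality, matching the wording of \rdef{def:serializability} verbatim.
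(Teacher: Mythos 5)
Your proposal is correct and is essentially the paper's own argument: both reduce the claim to the observation that condition (b) of \rdef{def:fs-lopacity} (applied to $\hist$ as a finite prefix of itself) directly supplies the witness $S$ required by \rdef{def:serializability}; the paper merely phrases this as a proof by contradiction rather than a direct implication. Your extra remark about the infinite case is a reasonable clarification the paper glosses over, but it does not change the substance.
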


\begin{proof}
    For the sake of contradiction let us assume that $\hist$ is \lopaque{}
    and not serializable. 
    Since $\hist$ is \lopaque{}, then from \rdef{def:lopacity} $\hist$ is also
    final-state last-use opaque.  Then, from \rdef{def:fs-lopacity} there
    exists a completion $\hist_C=\compl{\hist}$ such that there is a sequential
    history $S$ s.t. $S \equiv \hist_C$, $S$ preserves the real-time order of
    $\hist_C$, and any committed transaction in $S$ is legal in $S$.
    However, since $\hist$ is not serializable, then from
    \rdef{def:serializability} there does not exist a completion
    $\hist_C=\compl{\hist}$ such that there is a sequential history $S$ s.t. $S
    \equiv \hist_C$, and any committed transaction in $S$ is legal in $S$. This
    contradicts the previous statement.
\end{proof}

\subsection{Last-use Opacity Variant for the Programmer Initiated Abort Model}
\label{sec:blu-opacity}

Even though last-use opacity prevents inconsistent views in the only forced
aborts and arbitrary aborts models, it does not prevent inconsistent views in
the programmer initiated aborts model. 
Hence, we present a variant of last-use opacity called $\beta$--last-use
opacity (\bluopacity{}) that extends the definition of a \last{} write
operation to take $\tryA$ operations into account, as if it was an operation
that modifies a given variable.

\begin{definition}[\Blast{} Write Invocation] \label{def:blast-write-inv} Given
a program $\prog$, a set of processes $\processes$ executing $\prog$ and a
history $\hist$ s.t. $\hist \models \exec{\prog}{\processes}$, i.e.  $\hist \in
\evalhist{\prog}{\processes}$, an invocation $\inv{i}{}{\wop{\obj}{\val}}$ is
the \emph{\last{} write invocation} on some variable $\obj$ by transaction
$\tr_i$ in $\hist$, if for any history $\hist' \in
\evalhist{\prog}{\processes}$ for which $\hist$ is a prefix (i.e., $\hist' =
\hist \cdot R$) there is no operation invocation $\mathit{inv}'$ s.t.
$\inv{i}{}{\wop{\obj}{\val}}$ precedes $\mathit{inv}'$ in
$\hist'|\tr_i$ where 
\begin{inparaenum}[(a) ]
    \item $\mathit{inv}' = \inv{i}{}{\wop{\obj}{\valu}}$ or
    \item $\mathit{inv}' = \inv{i}{}{\tryA}$.
\end{inparaenum}
\end{definition}

The remainder of the definitions of \bluopacity{} are formed by analogy to
their counterparts in last-use opacity. We only summarize them here and give
their full versions in the appendix.

The definition of a \blast{} write operation execution is analogous to that of
\last{} write operation execution \rdef{def:last-write-op}.
The \blast{} write is used instead of the \last{} write to define a transaction
\emph{\bdecided{} on $\obj$} in analogy to \rdef{def:decided}.
Then, that definition is used to define  $\cpetrans{\hist}_\beta$, $\hist\cpe^\beta\tr_j$, and
$\hist\cpeC^\beta\tr_j$ by analogy to $\cpetrans{\hist}$, $\hist\cpe\tr_j$ and $\hist\cpeC\tr_j$.
Next, those definitions are used to define $\bluvisf$ by analogy to $\luvisf$.
Finally, we say a transaction $\tr_i$ is \emph{$\beta$--last-use legal} in some
sequential history $S$ if $\bluvis{S}{\tr_i}$ is legal. 
This allows us to
define \bluopacity{} as follows.

\begin{definition} [Final-state $\beta$--Last-use Opacity] \label{def:final-state-blu-opacity} \label{def:fs-blu-opacity}
    A finite history $\hist$ is \emph{final-state $\beta$--last-use opaque} if, and only if,
    there exists a sequential history $S$ equivalent to any completion of
    $\hist$ s.t., 
    \begin{enumerate}[a) ] 
        \item $S$ preserves the real-time order of $\hist$,
        \item every transaction in $S$ that is committed in $S$ is
        legal in $S$,
        \item every transaction in $S$ that is not committed in $S$ is
        $\beta$--last-use legal in $S$. 
    \end{enumerate}
\end{definition}

\begin{definition} [$\beta$--Last-use Opacity] \label{def:blu-opacity}
    A history $\hist$ is \emph{$\beta$--last-use opaque} if, and only if, every finite prefix of
    $\hist$ is final-state $\beta$--last-use opaque.
\end{definition}

In this variant of last use opacity a transaction is not allowed to
release a variable early if it is possible that the transaction may execute
a voluntary abort.
In effect, \bluopacity{} precludes inconsistent views in the programmer
initiated abort model.

The \bluopacity{} property is trivially equivalent to last-use opacity in the
only forced abort model (because there are no $\tryA$ operations in that
model), but it is stronger than \lopacity{} in the arbitrary abort model to the
point of being overstrict.

The \bluopacity{} property is strictly stronger than last-use opacity in the
arbitrary abort model, but it is too strong to be applicable to systems with
early release.
In the first place, even though the histories that are excluded by \bluopacity{} 
contain inconsistent views, these are harmless, because as we argue
in \rsec{sec:implications}, transactions always release variables with ``final''
values. These final values cannot be reverted by a programmer-initiated abort,
so if the programmer sets up a \last{} write to a variable in a transaction,
the value that was written was expected to both remain unchanged and be
committed. Hence, it is acceptable for these values to be read by other
transactions, even before the original transaction commits.

Secondly, the arbitrary abort model 
specifies
that the TM
system can inject a $\tryA$ operation into the transactional code to respond to
some outside stimuli, such as crashes. Such events are unpredictable, so it may
be possible for any transaction to abort at any time. Hence, it is necessary to
assume that a $\tryA$ operation can be produced as the next operation
invocation in any transaction at any time.
In effect, as the definition of \bluopacity{} does not allow a transaction to
release a variable early if a $\tryA$ is possible in the future, \bluopacity{}
actually prevents early release altogether in the arbitrary abort model. 

In summary, \bluopacity{} is a useful variant of last-use opacity to exclude
inconsistent views in the programmer initiated abort model (if workarounds
suggested in \rsec{sec:implications} are insufficient solutions). However
\bluopacity{} is too strict for TMs operating in the arbitrary abort model,
where it prevents early release altogether. For that reason, last-use opacity
remains our focus.

\section{Supremum Versioning Algorithm}
\label{sec:lopaque-sva}

In this section we discuss the Supremum Versioning Algorithm (\SVA{}), a
pessimistic concurrency control algorithm with early release and rollback
support, and demonstrate that it satisfies \lopacity{}. 
\SVA{} in its current form was introduced in \cite{SW15-atomicrmi,SW14-hlpp},
although the presentation here gives a more complete description. It builds on
our rollback-free variant in \cite{WRS04,Woj05b,Woj07} as well as an earlier
version in \cite{SW13}.
Even though the current implementation of \SVA{} as part of Atomic RMI is
distributed, the concurrency control algorithm itself was created for
multiprocessor TMs and is applicable to both distributed as well as
multiprocessor systems. 

The main aspect of \SVA{} is the ability to release variables early.
The early release mechanism in \SVA{} is based on \emph{a priori} knowledge
about the maximum number of accesses that a transaction can perform on
particular variables. 
We explored various methods of obtaining satisfactory upper bounds, including
static analysis \cite{SW12} and static typing \cite{Woj05b}. 
A transaction that knows it performed exactly as many operations on some
variable as the upper bound allows may then release that variable. \SVA{} does
not require the upper bounds to be precise, and can handle situation when they
are either too great (some variables are not released early) or too low
(transactions are aborted).
Since \SVA{} does not distinguish between reads and writes when releasing, but
instead treats all accesses uniformly, we will extend the definition of \last{}
write and \last{} read operation executions to a \last{} access execution which
is such a read or write operation, that is not followed by any other \last{}
read or \last{} write operation.

\begin{figure}
\begin{minipage}[t]{.52\linewidth}
\begin{lstlisting}
procedure start(Transaction $\tr_i$) {     $\label{sva:start}$
  for ($\obj$ : $\accesses{\tr_i}$ sorted by $\prec_{\glocks}$) $\label{sva:start-lock}$
    lock $\glocks(\obj)$ $\label{sva:lock-x}$
  for ($\obj$ : $\accesses{\tr_i}$ in parallel) { $\label{sva:start-main}$
    $\gv{\obj}$ ← $\gv{\obj}$ + 1 $\label{sva:start-set-gv}$
    $\pv{i}{\obj}$ ← $\gv{\obj}$ $\label{sva:start-set-pv}$ $\label{sva:start-main-end}$
  }
  for ($\obj$ : $\accesses{\tr_i}$ sorted by $\prec_{\glocks}$)
    unlock $\glocks(\obj)$ $\label{sva:start-end-unlock}$ $\label{sva:unlock-x}$
}

procedure access(Transaction $\tr_i$, Var $\obj$) { $\label{sva:access}$
  wait until $\pv{i}{\obj}$ - 1 = $\lv{\obj}$ $\label{sva:call-access-cond}$
  checkpoint($\tr_i$, $\obj$)
  if ($\rv{i}{\obj}$ ≠ $\cv{\obj}$) $\label{sva:call-rollback-cond}$
    abort($\tr_i$) and exit $\label{sva:call-rollback}$
  execute read or write $\label{sva:call-access}$
  $\cc{i}{\obj}$ ← $\cc{i}{\obj}$ + 1$\label{sva:call-cc-set}$ $\label{sva:call-inc-cc}$
  if ($\cc{i}{\obj}$ = $\supr{i}{\obj}$) {  $\label{sva:early-release-cond}$
    $\cv{\obj}$ ← $\pv{i}{\obj}$ $\label{sva:early-release-cv}$
    $\lv{\obj}$ ← $\pv{i}{\obj}$ $\label{sva:early-release-lv}$ $\label{sva:early-release-end}$
  }
}

procedure abort(Transaction $\tr_i$) { $\label{sva:rollback}$
  for ($\obj$ : $\accesses{\tr_i}$ in parallel) {
    wait until $\pv{i}{\obj}$ - 1 = $\ltv{\obj}$ $\label{sva:abort-dismiss-access-cond}$
    dismiss($\tr_i$, $\obj$) $\label{sva:abort-dismiss}$
    restore($\tr_i$, $\obj$)
    delete $\stored{i}{\obj}$ $\label{sva:restore-clear}$
    $\ltv{\obj}$ ← $\pv{i}{\obj}$ $\label{sva:restore-ltv-set}$
  }
}
\end{lstlisting}
\end{minipage}
\begin{minipage}[t]{0.48\linewidth}
\begin{lstlisting}[firstnumber=34]
procedure commit(Transaction $\tr_i$) { $\label{sva:commit}$
  for ($\obj$ : $\accesses{\tr_i}$ in parallel) {
    wait until $\pv{i}{\obj}$ - 1 = $\ltv{\obj}$ $\label{sva:commit-dismiss-access-cond}$
    dismiss($\tr_i$, $\obj$)
  }
  if (∃$\obj$ in $\accesses{\tr_i}$ such that $\rv{i}{\obj}$ > $\cv{\obj}$) $\label{sva:after-release}$ $\label{sva:commit-rollback-cond}$    
    abort($\tr_i$) and exit $\label{sva:commit-rollback}$
  for ($\obj$ : $\accesses{\tr_i}$ in parallel) {
    delete $\stored{i}{\obj}$ $\label{sva:commit-clear}$
    $\ltv{\obj}$ ← $\pv{i}{\obj}$         $\label{sva:commit-ltv-set}$
  }
}

procedure checkpoint(Transaction $\tr_i$, Var $\obj$) { $\label{sva:checkpoint}$
  if ($\cc{i}{\obj}$ = 0) { $\label{sva:checkpoint-cond}$
    copy $\obj$ to $\stored{i}{\obj}$ $\label{sva:checkpoint-store}$
    $\rv{i}{\obj}$ ← $\cv{\obj}$  $\label{sva:checkpoint-rv-set}$
  }
}

procedure dismiss(Transaction $\tr_i$, Var $\obj$) { $\label{sva:dismiss}$
  if ($\cc{i}{\obj}$ = 0 and $\rv{i}{\obj}$ = $\cv{\obj}$) $\label{sva:release-cond}$
    $\cv{\obj}$ ← $\pv{i}{\obj}$ $\label{sva:release-cv-set}$
  if ($\pv{i}{\obj}$ - 1 = $\lv{\obj}$) $\label{sva:dismiss-cond}$
    $\lv{\obj}$ ← $\pv{i}{\obj}$ $\label{sva:dismiss-release}$ $\label{sva:dismiss-end}$
} 

procedure restore(Transaction $\tr_i$, Var $\obj$) { $\label{sva:restore}$
  if ($\cc{i}{\obj}$ ≠ 0 and $\rv{i}{\obj}$ < $\cv{\obj}$) { $\label{sva:restore-cond}$
    revert $\obj$ from $\stored{i}{\obj}$ $\label{sva:restore-revert}$
    $\cv{\obj}$ ← $\rv{i}{\obj}$ $\label{sva:restore-cv-set}$
  }
}
    \end{lstlisting}
    \end{minipage}
    \caption{\label{fig:sva-pseudocode} \SVA{} pseudocode.}
\end{figure}

\subsection{Counters}

The basic premise of versioning algorithms is that counters are associated with
transactions and used to allow or deny access by these transactions to shared
objects (rather than only for recovery). 
\SVA{} uses several version counters. The \emph{private version} counter
$\pv{i}{\obj}$ uniquely defines the version of transaction $\tr_i$ with respect
to variable $\obj$.  The \emph{global version} counter $\gv{\obj}$ shows how
many transactions that have $\obj$ in their access set have started. 
The \emph{local version} counter $\lv{\obj}$ shows which transaction can
currently and exclusively access variable $\obj$.
Specifically, the transaction that can access $\obj$ is such $\tr_i$ whose
$\pv{i}{\obj}$ is one greater than $\lv{\obj}$ (we refer to this as the
\emph{access condition}).
The \emph{local terminal version} counter $\ltv{\obj}$ shows which of the
transactions that have $\obj$ in their access set can currently commit or
abort.  That is, $\tr_i$ such that $\obj \in \accesses{\tr_i}$ can commit or
abort if its $\pv{i}{\obj}$ is one greater than $\ltv{\obj}$.

In addition, the \emph{current version} counter $\cv{\obj}$ defines what state
the variable is in and transactions that operate on $\obj$ will increment its
$\cv{\obj}$ to indicate a change of state. It will also revert the counter to
indicate a rollback (abort). 
A \emph{recovery version} counter $\rv{i}{\obj}$
indicates what state variable $\obj$ was in prior to transaction $\tr_i$'s
modifications. These counters are used together to detect whether a variable
accessed by the current transaction was rolled back by some other
transaction, requiring that the current transaction roll back as well.
These counters also determine which
transaction is responsible for reverting the state of the variable and to what
state it should be reverted.
In order be able to revert the state of variable \SVA{} also
uses a \emph{variable store} $\storedf$, where transactions store copies of
variables before modifying them.

In order to detect the \last{} use of a variable, \SVA{} requires that suprema
on accesses be given for each variable used by a transaction. These are given
for each variable $\obj$ in transaction $\tr_i$'s access set as
    $\supr{i}{\obj}$. If the supremum is unknown $\supr{i}{\obj} = \infty$.
We assume that if a
supremum on accessing some variable $\obj$ is zero, then $\obj$ is excluded
from the transaction's access set. %
Then, \emph{access counter} $\cc{i}{obj}$ is used to track the actual number of
accesses by $\tr_i$ on $\obj$ and to check when the supremum is reached, to
release a variable early.

Finally, \SVA{} uses a map of locks $\glocks$, containing one lock for each
variable to make a globally consistent snapshots.  Locks are acquired and
released in order $\prec_{\glocks}$ to prevent deadlocks.
Initially, all locks are unlocked, counters are set to zero, and the variable
store is empty.

\subsection{Transactions}

The pseudocode for \SVA{} is shown in \rfig{fig:sva-pseudocode}. 
The life cycle of every \SVA{} transaction begins with procedure
$\ppprocedure{start}$ (we also refer to this part as initialization). Following
that, a transaction may execute one or more accesses (reads or writes) to
shared variables (procedure $\ppprocedure{access}$). After any
$\ppprocedure{access}$ or right after $\ppprocedure{start}$ a transaction may
then either proceed to $\ppprocedure{commit}$ or 
$\ppprocedure{abort}$, both
of which end a transaction's life cycle. \SVA{} transactions are prevented from
committing until all preceding transactions which released their variables
early and with which the current transaction shares variables also commit.
Accesses to shared variables can be interleaved with various
transaction-local operations, including accesses to non-shared structures or
variables. However, those are only visible to the transaction to which they are
local, so they do not influence other transactions. For the purpose of
clarity, but without loss of generality, we omit those operations here.
We also assume that transactions are executed in a single, fresh, dedicated
thread. We also omit the concepts of nested and recurrent transactions.

\paragraph{Start}
The initialization of a transaction is shown in $\ppprocedure{start}$ at
\pplineref{sva}{start}. When transaction $\tr_i$ starts it uses $\gv{\obj}$ to
assign itself a unique version $\pv{i}{\obj}$ for each variable $\obj$ in its
access set $\accesses{\tr_i}$. This must be done atomically and in isolation,
so these operations are guarded by locks---one lock $\glocks(\obj)$ for each
variable used.

\paragraph{Accesses}
Variables are accessed via procedure $\ppprocedure{access}$ at
\pplineref{sva}{access}. Before accessing a variable, transaction $\tr_i$ waits
for the access condition to be satisfied at \pplineref{sva}{call-access-cond}
(e.g. for the preceding transaction to release it). When this happens, $\tr_i$
makes a backup copy using $\ppprocedure{checkpoint}$
(\pplineref{sva}{checkpoint}). This procedure checks whether this is the first
access, and if so makes  a backup copy of $\obj$ to $\stored{i}{\obj}$ at
\pplineref{sva}{checkpoint-store} and sets the recovery counter to the current
version of $\obj$ at \pplineref{sva}{checkpoint-rv-set} to indicate which
version of $\obj$ the transaction first viewed and can revert to in case of
rollback (abort). Then, the access (either a write or read) is actually performed.

Afterward, transaction $\tr_i$ increments the access counter $\cc{i}{\obj}$ 
(\pplineref{sva}{call-inc-cc}) and proceeds to check whether this was the \last{} access on
$\obj$ by comparing $\cc{i}{\obj}$ to the appropriate
supremum $\supr{i}{\obj}$ (\pplineref{sva}{early-release-cond}).
If this is the
case, the variable is released early---i.e, $\lv{\obj}$ is set to the same value as
the transaction's private counter $\pv{i}{\obj}$
(\pplineref{sva}{early-release-lv}). At this point, some other transaction
$\tr_j$, whose $\pv{j}{\obj} - 1= \lv{\obj}$ can start accessing the variable
using procedure $\ppprocedure{access}$. Another provision made in \SVA{} is
that when variable $\obj$ is released by transaction $\tr_i$, $\cv{\obj}$ is
set to the transaction's $\pv{i}{\obj}$ version  (during early release at
\pplineref{sva}{early-release-cv}). This signifies both that there is a new
consistent version of $\obj$ and that $\tr_i$ modified $\obj$ the most
recently.

\paragraph{Commit}
Transaction $\tr_i$ can attempt to commit using procedure
$\ppprocedure{commit}$ at \pplineref{sva}{commit}. The variables in the
transaction's access set are committed independently (possibly in parallel).
First, transaction $\tr_i$ must pass the commit condition at
\pplineref{sva}{commit-dismiss-access-cond} for each variable in its access
set, so that a transaction is not allowed to commit before all the transactions
that accessed the same variables as $\tr_i$ before $\tr_i$ commit or abort.
The transaction then checks whether it has access to the variable and if this
is not the case, i.e., the variable was not accessed at all, waits until the
preceding transaction releases it at \pplineref{sva}{commit-ltv-set}. Then, 
the committing transaction executes procedure $\ppprocedure{dismiss}$ at
\pplineref{sva}{dismiss}. 
At this point, if the transaction did not access
the variable before commit, the current version counter is updated
(\pplineref{sva}{release-cv-set}).  Furthermore, if this transaction has not
previously released some variable $\obj$, $\lv{\obj}$ is set to $\pv{i}{\obj}$
to indicate the object is released (\pplineref{sva}{dismiss-release}).
Finally, the transaction erases backup copies from the store $\storedf$
(\pplineref{sva}{commit-clear}) and sets $\ltv{\obj}$ to its private version
counter's value (\pplineref{sva}{commit-ltv-set}) to indicate that a subsequent
transaction can now perform \ppprocedure{commit} or \ppprocedure{abort} on
$\obj$.

\paragraph{Abort}
Aborts are not performed by \SVA{} as part of its basic \emph{modus operandi},
but aborts can be triggered manually by the programmer. Furthermore, such
manually-triggered aborts can also cause other transactions to abort.
If the programmer decides to abort a transaction, or if an abort is forced,
procedure $\ppprocedure{abort}$ (\pplineref{sva}{rollback}) is used. 
As with commit, in on order for the abort to proceed, the transaction must pass
the commit condition at \pplineref{sva}{abort-dismiss-access-cond} for each
variable in its access set.
Then, previously unreleased variables are released using
$\ppprocedure{dismiss}$ (as described above). Afterward, the transaction
restores its variables using procedure $\ppprocedure{restore}$
(\pplineref{sva}{restore}). 
There, if $\tr_i$ accessed $\obj$ at least once and the version of $\obj$ that
it stored prior to accessing $\obj$ ($\rv{i}{\obj}$) is lower than the current
version of $\obj$ ($\cv{\obj}$), then $\tr_i$ is responsible for reverting
$\obj$ to the previous state (condition at \pplineref{sva}{restore-cond}).  In
that case, this procedure reverts $\obj$ to a copy from $\stored{i}{\obj}$
(\pplineref{sva}{restore-revert}). It also sets the current version to
$\rv{i}{\obj}$ (\pplineref{sva}{restore-cv-set}), indicating that variable
$\obj$ was restored to the state just from before $\tr_i$ modified it.  Note
that this means that $\cv{\obj}$ was set to a lower value than before.
Finally, the transaction cleans up $\stored{i}{\obj}$
(\pplineref{sva}{restore-clear}) and sets $\ltv{\obj}$ to its private version
counter's value (\pplineref{sva}{restore-ltv-set}). As with
$\ppprocedure{commit}$, this procedure may operate on each variable in
parallel.

If a manual abort is triggered by the user it may be the case that transaction
$\tr_i$ releases variable $\obj$ and aborts after some other transaction
$\tr_j$ already reads the value written to $\obj$ by $\tr_i$. In that case
$\tr_j$ must be forced to abort along with $\tr_i$.  
Hence, if subsequently the
situation arises that some other transaction $\tr_j$ tries to access $\obj$
after $\tr_i$ released it early and aborted, there is a condition at
\pplineref{sva}{call-rollback-cond} in procedure $\ppprocedure{access}$ which
will compare the values of $\cv{\obj}$ and $\rv{i}{\obj}$ for $\obj$. If
$\cv{\obj}$ and $\rv{j}{\obj}$ are equal, that implies $\tr_j$ currently has
access to a consistent state of $\obj$ and can access it.
However, if $\rv{j}{\obj}$ is greater than $\cv{\obj}$, this means that some
previous transaction (i.e., $\tr_i$) set $\cv{\obj}$ to a lower value than it was
before, which means it aborted and reverted $\obj$. 
This causes $\tr_j$ to be forced to abort (\pplineref{sva}{call-rollback})
instead of accessing $\obj$.
As an aside, if $\cv{\obj}$ is greater than $\rv{j}{\obj}$, then $\tr_j$ already
stopped modifying $\obj$ and released it, so there should not be any more
accesses to $\obj$. This means that $\tr_j$ violated its upper bound for $\obj$
and must also be forced to abort.

Similarly, if transaction $\tr_j$ tries to commit after $\tr_i$ aborted, then
it checks the condition at \pplineref{sva}{commit-rollback-cond} for each
variable, and if there is at least one variable $\obj$ for which $\rv{j}{\obj}$
is greater than $\cv{\obj}$, then again some previous $\tr_i$ must have aborted
and reverted $\obj$ and $\tr_j$ must then abort too.
Note that aborted transactions revert variables in the order imposed by 
the commit condition in wait (\pplineref{sva}{abort-dismiss-access-cond}), 
which ensures state consistency after rollback.

\subsection{Examples}

To illustrate further, the examples in
\rfig{fig:sva-sem-early}--\ref{fig:sva-sem-parallel} show some scenarios of
interacting \SVA{} transactions. 
In \rfig{fig:sva-sem-early}, transactions $\tr_i$ and $\tr_j$ both try to
increment variable $\obj$. Since $\tr_i$ starts before $\tr_j$, it has a lower
version for $\obj$ (e.g., $\mathtt{pv}_i(\obj)\!\gets\!1$) than $\tr_j$ (e.g.,
$\mathtt{pv}_j(\obj)\!\gets\!2$).  So, when accessing $\obj$, $\tr_i$ will pass
its access condition for $\obj$ sooner than $\tr_j$. Thus, $\tr_j$ has to wait,
while $\tr_i$ executes its operations on $\obj$. After its last operation on
$\obj$, $\tr_i$ releases it by setting $\obj$'s local counter to its own
version ($\mathtt{lv}(\obj)\!\gets\!1$). From this moment $\tr_1$ can no
longer access $\obj$. But since the local counter is now equal to $1$, which
is one lower than $\tr_j$'s version for $\obj$ of $2$, $\tr_j$ can now pass
the access condition for $\obj$ and start executing operations. In effect
$\tr_i$ and $\tr_j$ can execute in parallel in part. 
\rfig{fig:sva-sem-commit} shows a similar example of a transaction $\tr_i$
releasing $\objx$ early and operating on $\objy$ while a later transaction
$\tr_j$ operates on $\objx$ in parallel. The scenario differs from the last in
that $\tr_j$ finishes work before $\tr_i$, however it still waits with
committing until $\tr_i$ commits.
\rfig{fig:sva-sem-cascade} shows another similar scenario, except $\tr_i$
eventually aborts. Then, $\tr_j$ is also forced to abort (and retry) by \SVA{}.
Note that if meanwhile $\tr_j$ released $\obj$ early and some transaction
$\tr_k$ accessed it, $\tr_k$ would also be aborted. Thus, a cascading
abort including multiple transactions is a possibility.
Finally, \rfig{fig:sva-sem-parallel} shows two transactions operating on different
variables. Since \SVA{} performs synchronization per variable, if the access
sets of two transactions do not intersect, they can both execute in parallel,
without any synchronization.

\begin{figure}
\begin{center}
\begin{tikzpicture}
     \draw
           (0,2)        node[tid]       {$\tr_i$}
                        node[aop]       {$\init_i$} %
                        node[dot]       {} 
                        node[not]       {$\mathtt{pv}_i(x)\!\gets\!1$}

      -- ++(1.25,0)     node[aop]       {$\trop{i}{\objx}{0}$}
                        node[dot]       {}

      -- ++(1.25,0)     node[aop]       {$\twop{i}{\objx}{1}$}
                        node[dot] (wx)  {}
                        node[cir]       {}
                        node[not]       {$\mathtt{lv}(x)\!\gets\!1$}

      -- ++(1.25,0)     node[aop]       {$\twop{i}{\objy}{1}$}
                        node[dot] (wy)  {}

      -- ++(1.25,0)     node[aop]       {$\tryC_i\!\to\!\co_i$}
                        node[dot]       {}         
                        ;

     \draw
           (0.75,1)     node[tid]       {$\tr_{j}$}
                        node[aop]       {$\init_{j}$} %
                        node[dot]       {} 
                        node[not]       {$\mathtt{pv}_j(x)\!\gets\!2$}

      -- ++(1.25,0)     node[aop]       {$\trop{j}{\obj}{}$}
                        node[inv] (rx0) {}
                        node[not]       {$\mathtt{lv}(x)\!=\!1?$}
 
         ++(1.25,0)     node[aop]       {$\!\to\!1$}
                        node[res] (rx)  {}                       

      -- ++(1.25,0)     node[aop]       {$\twop{j}{\objx}{2}$}
                        node[dot]       {}

      -- ++(1.25,0)     node[aop]       {$\tryC_j\!\to\!\co_j$}
                        node[dot]       {}
                        ;
     
     \draw[hb] (wx) \squiggle (rx);
     \draw[wait] (rx0) -- (rx);
\end{tikzpicture}
\end{center}
\caption{\label{fig:sva-sem-early}\SVA{} early release example.
The symbol 
\protect\tikz{
    \protect\draw[] (0,0) -- ++(0.1,0) node[inv] (a) {} 
                             ++(0.4,0) node[res] (b) {}
                          -- ++(0.1,0);
    \protect\draw[wait] (a) -- (b);
}
denotes an operation execution split into the invocation event and the response
event to indicate waiting and 
\protect\tikz{
    \protect\draw[hb] (0,0.2) .. controls +(270:.25) and +(90:0.25) .. (0.5,0.0);
} 
indicates what caused the wait.
}
\end{figure}

\begin{figure}
\begin{center}
\begin{tikzpicture}
     \draw
           (0,2)        node[tid]       {$\tr_i$}
                        node[aop]       {$\init_i$} %
                        node[dot]       {} 
                        node[not]       {$\mathtt{pv}_i(x)\!\gets\!1$}

      -- ++(1.25,0)     node[aop]       {$\twop{i}{\objx}{1}$}
                        node[dot] (wx)  {}
                        node[cir]       {}

      -- ++(1.25,0)     node[aop]       {$\trop{i}{\objy}{0}$}
                        node[dot]       {}

      -- ++(1.25,0)     node[aop]       {$\twop{i}{\objy}{1}$}
                        node[dot] (wy)  {}
                       
      -- ++(1.25,0)     node[aop]       {$\tryC_i\!\to\!\co_i$}
                        node[dot] (ci)  {}         
                        node[not]       {$\mathtt{ltv}(x)\!\gets\!1$}
                        ;

     \draw
           (0.75,1)     node[tid]       {$\tr_{j}$}
                        node[aop]       {$\init_{j}$} %
                        node[dot]       {} 
                        node[not]       {$\mathtt{pv}_j(x)\!\gets\!2$}

      -- ++(1.25,0)     node[aop]       {$\trop{j}{\obj}{1}$}
                        node[dot] (rx)  {}

      -- ++(1.25,0)     node[aop]       {$\twop{j}{\objx}{2}$}
                        node[dot]       {}

      -- ++(1.25,0)     node[aop]       {$\tryC_j$}
                        node[inv] (cj0) {}
                        node[not]       {$\mathtt{ltv}(x)\!=\!1?$}
      
         ++(1.25,0)     node[aop]       {$\!\to\!\co_j$}
                        node[res] (cj)  {}
                        ;
     
     \draw[hb] (wx) \squiggle (rx);
     \draw[hb] (ci) \squiggle (cj);
     \draw[wait] (cj0) -- (cj);
\end{tikzpicture}
\end{center}
\caption{\label{fig:sva-sem-commit}\SVA{} wait on commit example.}
\end{figure}
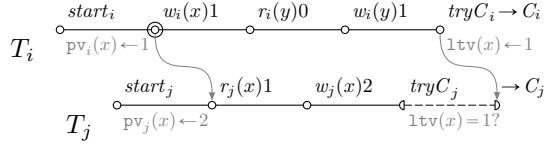

\begin{figure}
\begin{center}
\begin{tikzpicture}
     \draw
           (0,2)        node[tid]       {$\tr_i$}
                        node[aop]       {$\init_i$} %
                        node[dot]       {} 
                        node[not]       {$\mathtt{pv}_i(x)\!\gets\!1$}

      -- ++(1.25,0)     node[aop]       {$\twop{i}{\objx}{1}$}
                        node[dot] (wx)  {}
                        node[cir]       {}

      -- ++(1.25,0)     node[aop]       {$\trop{i}{\objy}{0}$}
                        node[dot]       {}

      -- ++(1.25,0)     node[aop]       {$\twop{i}{\objy}{1}$}
                        node[dot] (wy)  {}
                       
      -- ++(1.25,0)     node[aop]       {$\tryA_i\!\to\!\ab_i$}
                        node[dot] (ai)  {}         
                        node[not]       {$\mathtt{ltv}(x)\!\gets\!1$}
                        ;

     \draw
           (0.75,1)     node[tid]       {$\tr_{j}$}
                        node[aop]       {$\init_{j}$} %
                        node[dot]       {} 
                        node[not]       {$\mathtt{pv}_j(x)\!\gets\!2$}

      -- ++(1.25,0)     node[aop]       {$\trop{j}{\obj}{1}$}
                        node[dot] (rx)  {}

      -- ++(1.25,0)     node[aop]       {$\twop{j}{\objx}{2}$}
                        node[dot]       {}

      -- ++(1.25,0)     node[aop]       {$\tryC_j$}
                        node[inv] (aj0) {}
                        node[not]       {$\mathtt{ltv}(x)\!=\!1?$}
      
         ++(1.25,0)     node[aop]       {$\!\to\!\ab_j$}
                        node[res] (aj)  {}
                        ;
     
     \draw[hb] (wx) \squiggle (rx);
     \draw[hb] (ai) \squiggle (aj);
     \draw[wait] (aj0) -- (aj);
\end{tikzpicture}
\end{center}
\caption{\label{fig:sva-sem-cascade}\SVA{} cascading abort example.}
\end{figure}
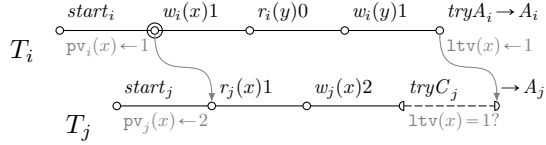

\begin{figure}
\begin{center}
\begin{tikzpicture}
     \draw
           (0,2)        node[tid]       {$\tr_i$}
                        node[aop]       {$\init_i$} %
                        node[dot]       {} 
                        node[not]       {$\mathtt{pv}_i(x)\!\gets\!1$}

      -- ++(1.25,0)     node[aop]       {$\trop{i}{\objx}{0}$}
                        node[dot] (rx)  {}
                        node[not]       {$\mathtt{lv}(x)\!=\!0?$}

      -- ++(1.25,0)     node[aop]       {$\twop{i}{\objx}{1}$}
                        node[dot] (wx)  {}
                       
      -- ++(1.25,0)     node[aop]       {$\tryC_i\!\to\!\co_i$}
                        node[dot]       {}         
                        ;

     \draw
           (0.2,1)      node[tid]       {$\tr_{j}$}
                        node[aop]       {$\init_{j}$} %
                        node[dot]       {} 
                        node[not]       {$\mathtt{pv}_i(y)\!\gets\!1$}

      -- ++(1.25,0)     node[aop]       {$\trop{j}{\objy}{0}$}
                        node[dot] (ry)  {}
                        node[not]       {$\mathtt{lv}(y)\!=\!0?$}

      -- ++(1.25,0)     node[aop]       {$\twop{j}{\objy}{1}$}
                        node[dot] (wy)  {}

      -- ++(1.25,0)     node[aop]       {$\tryC_j\!\to\!\co_j$}
                        node[dot]       {}   
                        ;

\end{tikzpicture}
\end{center}
\caption{\label{fig:sva-sem-parallel}\SVA{} parallel execution example.}
\end{figure}
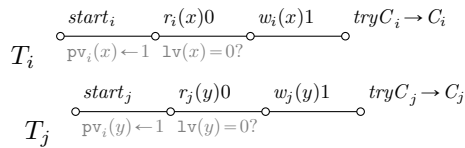

\subsection{Safety of \SVA{}}
\label{sec:lopaque-sva-safety-short}

We present a proof sketch showing that \SVA{} is last-use opaque. A complete
proof is in \rappx{sec:sva-proof}.
First, we make the following straightforward observations about \SVA{}.

\begin{observation}[Version Order]\label{e:version-order}
        Given the set $\transactions_\hist^\obj$ of all transactions that
        access $\obj$ in $\hist$ there is a total order called a version order
        $\prec_\obj$ on $\transactions_\hist^\obj$ s.t. for any $\tr_i, \tr_j
        \in \transactions_\hist^\obj$, $\tr_i \prec_\obj \tr_j$ if
        $\pv{i}{\obj} < \pv{j}{\obj}$.  
\end{observation}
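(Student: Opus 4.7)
The plan is to derive the total order on $\transactions_\hist^\obj$ directly from the uniqueness of the private version counters $\pv{i}{\obj}$. The key fact to establish is that, for any fixed variable $\obj$, no two transactions in $\transactions_\hist^\obj$ are ever assigned the same value of $\pv{\cdot}{\obj}$. Once uniqueness is in hand, the natural order on $\mathbb{N}_0$ lifts to a strict total order on the transactions via the map $\tr_i \mapsto \pv{i}{\obj}$.

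The uniqueness argument proceeds from the structure of $\ppprocedure{start}$. For every $\obj \in \accesses{\tr_i}$, transaction $\tr_i$ acquires $\glocks(\obj)$ before executing the two-line body that sets $\gv{\obj} \gets \gv{\obj}+1$ and $\pv{i}{\obj} \gets \gv{\obj}$, and releases it only after. Hence for any two transactions $\tr_i, \tr_j \in \transactions_\hist^\obj$, the respective critical sections on $\glocks(\obj)$ occur in some linear order; whichever transaction enters second observes a strictly larger value of $\gv{\obj}$ and therefore is assigned a strictly larger $\pv{\cdot}{\obj}$. Since $\gv{\obj}$ is initialized to $0$ and only ever incremented, this yields distinct positive integers for distinct transactions in $\transactions_\hist^\obj$.

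Given uniqueness, I would define $\prec_\obj$ by $\tr_i \prec_\obj \tr_j \iff \pv{i}{\obj} < \pv{j}{\obj}$ and simply observe that $\prec_\obj$ inherits irreflexivity, transitivity, and totality from the strict order $<$ on $\mathbb{N}_0$ (totality uses uniqueness: for $i \neq j$ the values differ, so one is smaller). This completes the proof.

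I do not anticipate a serious obstacle here: the whole statement is essentially a reading of the \ppprocedure{start} procedure. The only mildly delicate point is making explicit that the lock acquisition/release discipline on $\glocks(\obj)$ really does serialize the increment-and-assign block across all transactions in $\transactions_\hist^\obj$, rather than across all transactions in general; this is fine because every transaction in $\transactions_\hist^\obj$ is, by definition, one whose \ppprocedure{start} locks $\glocks(\obj)$.
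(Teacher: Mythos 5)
Your proposal is correct and follows essentially the same route as the paper: the appendix establishes uniqueness and monotonicity of the $\pv{\cdot}{\obj}$ counters (Lemmas ``Consecutive Versions,'' ``Unique Versions,'' ``Monotonic Versions'') by exactly the argument that the $\glocks(\obj)$-protected increment-and-assign block in \ppprocedure{start} serializes the assignment of versions per variable, and then defines $\prec_\obj$ by comparing these counters. Nothing is missing.
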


\begin{observation}[Access Order]\label{e:access-order}
        If $\tr_i \prec_\obj \tr_j$ and $\tr_i$ performs operation
        $\op_i$ on $\obj$, and $\tr_j$ performs operation $\op_j$ on $\obj$,
        then $\op_i$ is completed in $\hist$ before $\op_j$.
\end{observation}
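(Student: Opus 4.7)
The plan is to argue by induction on the version order that $\lv{\obj}$ advances monotonically through the sequence of assigned private versions on $\obj$, with the advance past $\pv{k}{\obj}$ occurring only after $\tr_k$ has completed all of its operations on $\obj$. Since any transaction $\tr_j$ must satisfy the access condition $\pv{j}{\obj} - 1 = \lv{\obj}$ before executing $\op_j$ (the wait at \pplineref{sva}{call-access-cond}), this will imply that $\op_i$ completes before $\op_j$ begins whenever $\pv{i}{\obj} < \pv{j}{\obj}$.

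First, I would establish that the private versions $\pv{\cdot}{\obj}$ assigned by \ppprocedure{start} to the transactions in $\transactions_\hist^\obj$ form a set of consecutive positive integers starting at $1$. This follows because $\ppprocedure{start}$ increments $\gv{\obj}$ and assigns $\pv{i}{\obj}$ while holding $\glocks(\obj)$, so no two transactions can interleave at this point, and each transaction consumes exactly one value. Next, by inspecting the pseudocode I would note that $\lv{\obj}$ is modified only at \pplineref{sva}{early-release-lv} (inside \ppprocedure{access}, after the last access when $\cc{i}{\obj} = \supr{i}{\obj}$) and at \pplineref{sva}{dismiss-release} (inside \ppprocedure{dismiss}, invoked from \ppprocedure{commit} or \ppprocedure{abort}), and each such modification by $\tr_k$ sets $\lv{\obj} \gets \pv{k}{\obj}$. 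Moreover, both modifications are preceded in $\hist|\tr_k$ by all of $\tr_k$'s operations on $\obj$: the early release site is after the last access, and the dismiss site is inside the termination procedure, which is only reached after all access operations have returned. The dismiss-case additionally requires the wait $\pv{k}{\obj} - 1 = \ltv{\obj}$ at \pplineref{sva}{commit-dismiss-access-cond} or \pplineref{sva}{abort-dismiss-access-cond}, and the guard $\pv{k}{\obj} - 1 = \lv{\obj}$ at \pplineref{sva}{dismiss-cond} ensures it does not overwrite a value of $\lv{\obj}$ already advanced by early release.

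I would then induct on the version $n \ge 1$: the claim is that $\lv{\obj}$ reaches value $n$ only after the transaction $\tr_k$ with $\pv{k}{\obj} = n$ has completed every operation on $\obj$ it performs in $\hist$. The base case $n = 1$ is immediate, since $\lv{\obj}$ starts at $0$ and only $\tr_k$ can set it to $1$ (from the previous paragraph), and only after completing its operations. The inductive step combines the inductive hypothesis for $n-1$ with the fact that $\tr_k$ itself cannot begin any operation on $\obj$ until $\lv{\obj} = n - 1$ (so by induction, until the $\tr_{k'}$ with $\pv{k'}{\obj} = n-1$ has finished), and then must complete all its operations before executing either \pplineref{sva}{early-release-lv} or \pplineref{sva}{dismiss-release}. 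Applying this to the specific $\tr_i$ and $\tr_j$ with $\pv{i}{\obj} < \pv{j}{\obj}$: for $\tr_j$ to perform $\op_j$ it must see $\lv{\obj} = \pv{j}{\obj} - 1 \ge \pv{i}{\obj}$, which by the invariant only occurs after $\tr_i$ has completed $\op_i$, giving the desired ordering.

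The main obstacle will be the cascading-abort case, where $\tr_i$ releases $\obj$ early and then aborts: one must check that $\tr_j$'s access still happens after $\op_i$ completes. This follows because the relevant early-release site already set $\lv{\obj} = \pv{i}{\obj}$ after $\op_i$ completed, and the subsequent forced abort of $\tr_j$ (via the check at \pplineref{sva}{call-rollback-cond}) happens inside $\ppprocedure{access}$ only after the wait for $\lv{\obj}$ has been passed, so $\op_j$ is still temporally after $\op_i$ in $\hist$, even though $\tr_j$ ultimately does not execute a useful read or write.
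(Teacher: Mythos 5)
Your proof is correct and takes essentially the same route as the paper: the appendix establishes this as Lemma~\ref{l:access-order} by combining the access condition ($\lv{\obj}=\pv{j}{\obj}-1$ must hold before $\op_j$ executes) with version monotonicity, relying on the release lemmas to guarantee that $\lv{\obj}$ is advanced to $\pv{i}{\obj}$ only at $\tr_i$'s \last{} access or at commit/abort, i.e., after all of $\tr_i$'s operations on $\obj$ complete. Your explicit induction on version numbers merely spells out the chaining that the paper's one-line proof leaves implicit.
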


\begin{observation}[No Bufferring] \label{e:no-buffer} 
        Since transactions operate on variables rather than buffers, any
        read operation $\op = \frop{i}{\obj}{\val}$ in any transaction $\tr_i$
        is preceded in $\hist$ by some write operation $\fwop{j}{\obj}{\val}{\ok_j}$
        in some $\tr_j$ (possibly $i=j$).
\end{observation}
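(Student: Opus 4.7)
The plan is to establish Observation~\ref{e:no-buffer} by direct inspection of the SVA pseudocode in \rfig{fig:sva-pseudocode}, relying on the fact that SVA reads from and writes to the shared variable $\obj$ in place, without ever staging updates in a transaction-local buffer. Only two lines of the pseudocode actually modify $\obj$: the ``execute read or write'' step in \ppprocedure{access} and the ``revert $\obj$ from $\stored{i}{\obj}$'' step in \ppprocedure{restore} (invoked only from \ppprocedure{abort}). Consequently, every read operation on $\obj$ simply returns whatever in-memory value $\obj$ holds at the moment the read executes.

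The key step is to prove, by induction on the length of a prefix of $\hist$, the invariant: whenever $\obj$ currently stores a value $\val$, either $\val = \val_0$ or some complete write operation execution $\fwop{j}{\obj}{\val}{\ok_j}$ already appears in that prefix. The base case is immediate from the stipulated initial state. The inductive step splits into two sub-cases corresponding to the two modification sites. In sub-case (a), ``execute write'' in \ppprocedure{access} supplies the witnessing write directly. In sub-case (b), ``revert'' in \ppprocedure{restore} installs into $\obj$ the value previously copied into $\stored{j}{\obj}$ by \ppprocedure{checkpoint}; by the induction hypothesis applied at the time of that checkpoint, the same value already had a witnessing write earlier in the prefix, and that write persists as a witness now. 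Once the invariant is in hand, the observation follows: any $\frop{i}{\obj}{\val}$ returns the current in-memory value of $\obj$, to which the invariant applies, so a corresponding write precedes the read's response event in $\hist$.

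The main obstacle I anticipate is sub-case (b): the value being reverted may itself be the result of an earlier revert, so in principle a chain of restore/write/restore cycles could arise. I would discharge this by noting that \ppprocedure{checkpoint} copies the then-current state of $\obj$ into $\stored{j}{\obj}$ at $\tr_j$'s first access (the guard $\cc{i}{\obj} = 0$), so the restored value is literally what $\obj$ held at that earlier instant in $\hist$, and the induction hypothesis applied at that instant already supplies the witnessing write in the prefix, collapsing the chain. A smaller point concerns reads that return $\val_0$: under the unique-writes assumption of \rsec{sec:preliminaries}, no actual transaction writes $\val_0$, so the observation should be read either as implicitly restricted to $\val \neq \val_0$ (which is how it is used in the subsequent safety argument) or with initialization treated as a virtual write by a conceptual transaction $\tr_0$.
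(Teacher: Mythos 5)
Your proposal is correct, but it is worth noting that the paper does not actually prove this statement at all: in the main text it is asserted as an ``observation'' justified only by the clause ``since transactions operate on variables rather than buffers,'' and its restatement in the appendix (the lemma labelled \texttt{l:no-buffer}) is likewise given without proof. Your inductive invariant --- that at every point of the history the current in-memory value of $\obj$ is either $\val_0$ or has a witnessing complete write earlier in the prefix --- is exactly the right way to discharge the claim rigorously, and your handling of the two mutation sites ($\ppprocedure{access}$'s in-place write and $\ppprocedure{restore}$'s revert from $\stored{j}{\obj}$) is sound: the checkpointed value was the in-memory value at the time of $\tr_j$'s first access, so the induction hypothesis applied at that earlier instant supplies the witness, and the worry about chained reverts collapses as you say. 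You have also caught a genuine imprecision in the statement itself: a read returning $\val_0$ has no preceding write under the unique-writes convention, so the observation must either be restricted to $\val \neq \val_0$ or read with an implicit initializing transaction; the paper glosses over this, and the subsequent legality arguments (where a read of $\val_0$ with no preceding write is already permitted by $\mathit{Seq}(\obj)$) confirm that your restricted reading is the intended one. In short, your argument supplies a proof where the paper supplies only an assertion, and nothing in it would fail.
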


\begin{observation}[Read from Released]\label{e:read-released}
        If transaction $\tr_i$ executes a read operation or a write
        operation $\op$ on $\obj$ in $\hist$, then any transaction that
        previously executed a read or write operation on $\obj$ is either
        committed, aborted, or decided on $\obj$ before $\op$.
\end{observation}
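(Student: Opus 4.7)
The plan is to show that the wait condition at \pplineref{sva}{call-access-cond} (for reads/writes) and the analogous wait conditions on commit/abort force a strict ordering tied to the value of $\lv{\obj}$, and that $\lv{\obj}$ can only be advanced by a transaction that has decided on $\obj$, committed, or aborted.

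First I would establish two auxiliary monotonicity facts about $\lv{\obj}$. The key lemma is: $\lv{\obj}$ is assigned a new value in exactly three places in the pseudocode, namely at \pplineref{sva}{early-release-lv} inside \ppprocedure{access} (when $\cc{i}{\obj} = \supr{i}{\obj}$, which is precisely the situation where the executing transaction is decided on $\obj$ by \rdef{def:decided}), and at \pplineref{sva}{dismiss-release} inside \ppprocedure{dismiss}, which is invoked only from \ppprocedure{commit} and \ppprocedure{abort}. In each of these three cases the writing transaction $\tr_k$ sets $\lv{\obj} \gets \pv{k}{\obj}$, so the value of $\lv{\obj}$ always equals the $\pv{k}{\obj}$ of some transaction $\tr_k$ and, by the access condition guarding the write itself, the values assumed by $\lv{\obj}$ form the strictly increasing sequence $0, \pv{k_1}{\obj}, \pv{k_2}{\obj}, \dots$ where $\pv{k_1}{\obj} < \pv{k_2}{\obj} < \dots$ enumerates the private versions of transactions accessing $\obj$ in the order given by $\prec_\obj$ (\robserv{e:version-order}).

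Next I would fix $\tr_i$ executing the operation $\op$ on $\obj$ in $\hist$, and any transaction $\tr_j$ that previously executed an access on $\obj$. By \robserv{e:access-order}, $\tr_j \prec_\obj \tr_i$, hence $\pv{j}{\obj} < \pv{i}{\obj}$. For $\op$ to be executed, the wait at \pplineref{sva}{call-access-cond} (or the corresponding wait at \pplineref{sva}{commit-dismiss-access-cond}/\pplineref{sva}{abort-dismiss-access-cond} if $\op$ were the commit/abort step) must have been passed, so at some point before $\op$ we had $\lv{\obj} = \pv{i}{\obj} - 1$. By the monotonicity established above, $\lv{\obj}$ went through the value $\pv{j}{\obj}$ strictly before it reached $\pv{i}{\obj} - 1$. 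That single assignment of $\lv{\obj}$ to $\pv{j}{\obj}$ was performed by $\tr_j$ itself and, by the case analysis above, it occurred either at \pplineref{sva}{early-release-lv} (so $\tr_j$ is decided on $\obj$), or inside \ppprocedure{dismiss} called from \ppprocedure{commit} (so $\tr_j$ is committed), or inside \ppprocedure{dismiss} called from \ppprocedure{abort} (so $\tr_j$ is aborted). All of these occurrences precede $\op$ in $\hist$, which is the required conclusion.

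The only point requiring care is that in the commit and abort paths the final transition to $\co_j$ or $\ab_j$ happens \emph{after} the assignment to $\lv{\obj}$, so I would note that \rdef{def:decided} only depends on the write operation being the \last{} write (which has already returned $\ok_i$ by the time \pplineref{sva}{early-release-lv} runs), and that for the commit/abort cases it is the entire \ppprocedure{commit}/\ppprocedure{abort} call that precedes $\op$, not merely the $\lv{\obj}$ update, because the wait at \pplineref{sva}{call-access-cond} only releases once $\lv{\obj}$ has been raised, and the $\ltv{\obj}$-guarded tail of \ppprocedure{commit}/\ppprocedure{abort} completes before the next transaction is even allowed to proceed through its own commit/abort gate. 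This is the only subtle timing point; the rest is a direct reading of the pseudocode.
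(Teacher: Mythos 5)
Your proof is correct and follows essentially the same route as the paper's own argument (the ``Access Prefix'' lemma in the appendix): both identify the only assignment sites of $\lv{\obj}$ (the early-release branch of \ppprocedure{access}, reached exactly when the transaction is decided on $\obj$, and \ppprocedure{dismiss} as called from \ppprocedure{commit} or \ppprocedure{abort}), use the access condition as the gate, and chain backwards through the version order. The timing caveat you raise---that the $\co_j$/$\ab_j$ response is issued only after the $\lv{\obj}$ update inside \ppprocedure{dismiss}, so the predecessor may technically still be commit-pending when $\op$ fires---is a real subtlety, but the paper's own proof glosses over exactly the same point, so you are at least as careful as the source.
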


\begin{observation}[Do not Read Aborted]\label{e:read-written}
        Assuming unique writes, if transaction $\tr_i$ executes
        $\fwop{i}{\obj}{\val}{\valu}$ and aborts in $\hist$, then $\obj$ will
        be reverted to a previous value. In consequence, no other transaction
        can read $\val$ from $\obj$.
\end{observation}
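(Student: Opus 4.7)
The plan is to trace through the SVA pseudocode for a transaction $\tr_i$ that writes $\val$ to $\obj$ and subsequently aborts, showing both conclusions from the interplay of the checkpoint/restore mechanism with the access and commit condition guards.

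First I would establish that $\tr_i$'s write is checkpointed. At $\tr_i$'s initial invocation of $\ppprocedure{access}$ on $\obj$, $\cc{i}{\obj} = 0$, so $\ppprocedure{checkpoint}$ copies the current contents of $\obj$ into $\stored{i}{\obj}$ and records $\rv{i}{\obj} \gets \cv{\obj}$. After the write $\cc{i}{\obj} \geq 1$, satisfying the first conjunct of the $\ppprocedure{restore}$ guard at abort time. Then I would split on whether $\tr_i$ released $\obj$ early before aborting. In the principal case, $\cc{i}{\obj}$ reached $\supr{i}{\obj}$, so the early-release branch in $\ppprocedure{access}$ lifted $\cv{\obj}$ to $\pv{i}{\obj}$, strictly above $\rv{i}{\obj}$. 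Inside $\ppprocedure{abort}$, $\ppprocedure{restore}$ therefore fires, writes $\stored{i}{\obj}$ back into $\obj$ and resets $\cv{\obj} \gets \rv{i}{\obj}$; this gives the first assertion. For the second assertion I would reason by contradiction: if some $\tr_j \neq \tr_i$ completes $\frop{j}{\obj}{\val}$, then by unique writes and Observation~\ref{e:no-buffer} the read must occur after $\tr_i$'s release and before $\ppprocedure{restore}$ rewinds $\cv{\obj}$. At that first access $\tr_j$ sets $\rv{j}{\obj} \gets \cv{\obj} = \pv{i}{\obj}$; once $\ppprocedure{restore}$ completes, $\cv{\obj} = \rv{i}{\obj} < \pv{i}{\obj} = \rv{j}{\obj}$, so any further $\ppprocedure{access}$ by $\tr_j$ triggers the rollback check at \pplineref{sva}{call-rollback-cond} and any $\ppprocedure{commit}$ attempt fires \pplineref{sva}{commit-rollback-cond}. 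In either case $\tr_j$ is forced to abort rather than commit its read of $\val$, and unique writes preclude any later transaction from producing $\val$ again.

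The main obstacle is the degenerate subcase in which $\tr_i$ writes $\val$ but never reaches $\supr{i}{\obj}$ before aborting: then the second conjunct $\rv{i}{\obj} < \cv{\obj}$ of the $\ppprocedure{restore}$ guard fails, so the memory cell holding $\obj$ is not physically rewound. I would handle this by exploiting the access condition $\pv{i}{\obj} - 1 = \lv{\obj}$, which isolates $\obj$ from every other transaction until $\tr_i$'s $\ppprocedure{dismiss}$ relaxes $\lv{\obj}$; the first subsequent transaction then re-checkpoints $\obj$ relative to the unchanged $\cv{\obj}$, and any latent exposure of $\val$ still cascades to an abort through the recovery-version rollback checks at \pplineref{sva}{call-rollback-cond} and \pplineref{sva}{commit-rollback-cond}. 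Making this ``effective revert'' argument fully rigorous, and reconciling it with the literal wording of the observation, is the most delicate step.
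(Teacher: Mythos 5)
Your principal case is correct and considerably more careful than the paper's own argument: the appendix counterpart of this observation, \rlemma{l:revert-on-abort}, simply asserts that $\ppprocedure{abort}$ runs $\ppprocedure{restore}$, which rewinds $\obj$ to the checkpointed value, and concludes by unique writes --- it never examines the guard at \pplineref{sva}{restore-cond} at all. Your trace of the early-release case and the forced-abort analysis for a reader that catches $\val$ in the window between release and restore both check out (though note that the latter establishes something weaker than the wording suggests: that reader \emph{does} obtain $\val$ and is merely doomed; the content actually used in \rlemma{lemma:sva-fs-lopacity-short} is only that no read returns $\val$ \emph{after} the abort, which the revert argument alone supplies).

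The genuine gap is exactly the degenerate subcase you flag, and your proposed repair does not close it. If $\tr_i$ writes $\val$ but never reaches $\supr{i}{\obj}$, then $\cv{\obj}$ is never raised to $\pv{i}{\obj}$: neither \pplineref{sva}{early-release-cv} nor \pplineref{sva}{release-cv-set} fires, the latter because $\cc{i}{\obj}\neq 0$. Hence at abort time $\rv{i}{\obj}=\cv{\obj}$, $\ppprocedure{restore}$ is a no-op, and $\obj$ physically retains $\val$ while $\cv{\obj}$ is left untouched. The successor $\tr_j$ then checkpoints $\rv{j}{\obj}\gets\cv{\obj}$, so $\rv{j}{\obj}=\cv{\obj}$ and neither \pplineref{sva}{call-rollback-cond} nor \pplineref{sva}{commit-rollback-cond} can ever fire on account of $\tr_i$: the recovery-version checks you invoke only detect that a predecessor has \emph{decreased} $\cv{\obj}$, which never happens here. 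So $\tr_j$ reads $\val$ from the aborted $\tr_i$ and may even commit --- a counterexample to the observation relative to the pseudocode as literally written. The statement can only be salvaged by reading the guard at \pplineref{sva}{restore-cond} as $\rv{i}{\obj}\leq\cv{\obj}$ (equivalently, by letting $\ppprocedure{restore}$ revert whenever $\cc{i}{\obj}\neq 0$), which is what the paper's proof tacitly assumes; you should either adopt that reading explicitly or record that the observation fails under the literal guard.
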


\begin{observation}[Commit Order]\label{e:commit-order}
        If transaction $\tr_i$ accesses $\obj$ in $\hist$ and commits or
        aborts in $\hist$, any transaction that
        previously executed a read or write operation on $\obj$ is either
        committed or aborted before $\tr_i$ commits or aborts.
\end{observation}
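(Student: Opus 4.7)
The plan is to exploit the wait conditions on $\ltv{\obj}$ that SVA enforces before any transaction is allowed to finish $\ppprocedure{commit}$ or $\ppprocedure{abort}$. Let $\tr_j$ be any transaction whose read or write on $\obj$ precedes $\tr_i$'s first access to $\obj$ in $\hist$. Since both transactions access $\obj$, each was assigned a private version in $\ppprocedure{start}$; because $\ppprocedure{start}$ atomically increments $\gv{\obj}$ under the lock $\glocks(\obj)$, these versions are distinct natural numbers forming a prefix of $\mathbb{N}^+$. Observation~\ref{e:access-order} then gives $\pv{j}{\obj} < \pv{i}{\obj}$.

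The key step is to argue that $\ltv{\obj}$ advances strictly one unit at a time, in the order of private versions. The counter starts at $0$ and is written only at \pplineref{sva}{commit-ltv-set} or \pplineref{sva}{restore-ltv-set}, where a completing transaction $\tr_k$ sets $\ltv{\obj} \gets \pv{k}{\obj}$; every such write is preceded by the guard $\pv{k}{\obj} - 1 = \ltv{\obj}$ at \pplineref{sva}{commit-dismiss-access-cond} or \pplineref{sva}{abort-dismiss-access-cond}. A straightforward induction on $\pv{k}{\obj}$ then shows that $\ltv{\obj}$ takes the values $0, 1, 2, \ldots$ in order, each installed exactly by the transaction holding the corresponding private version and only after that transaction has completed its commit or abort.

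Applying this to $\tr_i$: before $\tr_i$ can finish $\ppprocedure{commit}$ or $\ppprocedure{abort}$, it must pass the wait that requires $\ltv{\obj} = \pv{i}{\obj} - 1$. By the induction above, every transaction with private version in $\{1, \ldots, \pv{i}{\obj} - 1\}$---in particular $\tr_j$---must already have completed its commit or abort by that point. Hence $\tr_j$ is committed or aborted before $\tr_i$ is.

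The main obstacle is formalizing the inductive step under concurrent semantics: I need to ensure that no private version in $\{1, \ldots, \pv{i}{\obj} - 1\}$ is ``skipped'', which follows from the contiguous assignment of $\pv{\cdot}{\obj}$ from $\gv{\obj}$ under $\glocks(\obj)$ in $\ppprocedure{start}$, and that updates to $\ltv{\obj}$ are genuinely serialized by the wait guards. Both points follow from the atomic increments in $\ppprocedure{start}$ together with the fact that each transaction reaches the $\ltv{\obj}$-wait at most once, so the argument should go through without complication.
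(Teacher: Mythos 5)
Your proof is correct and follows essentially the same route as the paper's: the paper establishes this observation (as Lemma~\ref{o21} in the appendix) by combining the commit/abort wait on $\ltv{\obj}$ (Lemmas~\ref{o7} and~\ref{o8}) with the fact that $\ltv{\obj}$ is set to $\pv{l}{\obj}$ only during $\tr_l$'s commit or abort (Lemma~\ref{o11}), and then extends down the chain of consecutive versions exactly as in your induction. The contiguity of private versions that your inductive step relies on is the paper's Consecutive Versions lemma, so no new idea is needed to close that obstacle.
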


\begin{observation}[Forced Abort]\label{e:forced-abort}
        If transaction $\tr_i$ reads $\obj$ from $\tr_j$ and $\tr_j$
        subsequently aborts, then $\tr_i$ also aborts.
\end{observation}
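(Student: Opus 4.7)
The plan is to trace through the \SVA{} pseudocode and show that the version counter bookkeeping forces $\tr_i$ to detect $\tr_j$'s abort and abort itself, either on its next access to any shared variable or at commit time.

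First I would establish the state of the counters at the moment $\tr_i$ reads from $\tr_j$. By Observation~\ref{e:no-buffer}, $\tr_i$ reads $\val$ from $\obj$ only if some write $\fwop{j}{\obj}{\val}{\ok_j}$ occurred earlier; by Observation~\ref{e:read-released}, $\tr_j$ must be committed, aborted, or decided on $\obj$ before $\tr_i$'s read. The committed case is excluded by hypothesis (since $\tr_j$ later aborts), and the aborted case is excluded by Observation~\ref{e:read-written} (a reverted $\obj$ would not hold $\val$). Hence $\tr_j$ has executed an early release on $\obj$, which means that at line~\ref{sva:early-release-cv} it set $\cv{\obj} \gets \pv{j}{\obj}$. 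When $\tr_i$ first accesses $\obj$, it invokes $\ppprocedure{checkpoint}$, which at line~\ref{sva:checkpoint-rv-set} snapshots $\rv{i}{\obj} \gets \cv{\obj} = \pv{j}{\obj}$.

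Next I would analyze what happens when $\tr_j$ aborts. In $\ppprocedure{abort}$, $\tr_j$ calls $\ppprocedure{restore}$, and since $\tr_j$ itself accessed $\obj$ (so $\cc{j}{\obj}\neq 0$) and $\rv{j}{\obj} < \cv{\obj} = \pv{j}{\obj}$, the condition at line~\ref{sva:restore-cond} holds, so $\obj$ is reverted and $\cv{\obj} \gets \rv{j}{\obj}$ at line~\ref{sva:restore-cv-set}. Therefore, immediately after $\tr_j$'s abort completes, $\rv{i}{\obj} = \pv{j}{\obj} > \rv{j}{\obj} = \cv{\obj}$, so in particular $\rv{i}{\obj} \neq \cv{\obj}$.

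Finally I would argue that $\tr_i$ is forced to notice this discrepancy. There are two cases. If $\tr_i$ performs any subsequent $\ppprocedure{access}$ on $\obj$ (or any variable that has been similarly reverted), the test at line~\ref{sva:call-rollback-cond} evaluates to true and $\tr_i$ is aborted at line~\ref{sva:call-rollback}. Otherwise, $\tr_i$ proceeds to $\ppprocedure{commit}$, where, by the commit condition at line~\ref{sva:commit-dismiss-access-cond} together with Observation~\ref{e:commit-order}, $\tr_i$ must wait for $\tr_j$ to finish aborting (since $\pv{i}{\obj} - 1 = \ltv{\obj}$ holds only after $\tr_j$'s abort sets $\ltv{\obj} \gets \pv{j}{\obj}$ at line~\ref{sva:restore-ltv-set}). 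Once past the wait, the check at line~\ref{sva:commit-rollback-cond} finds that $\rv{i}{\obj} > \cv{\obj}$ and $\tr_i$ is routed to $\ppprocedure{abort}$ at line~\ref{sva:commit-rollback}. Either way, $\tr_i$ aborts.

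The main obstacle will be carefully establishing the ordering: specifically that $\rv{i}{\obj}$ is snapshotted \emph{after} $\tr_j$'s early-release write to $\cv{\obj}$ (guaranteed by the access condition at line~\ref{sva:call-access-cond}, which makes $\tr_i$ wait for $\lv{\obj}$ to reach $\pv{j}{\obj}$), and that $\tr_j$'s abort-induced write to $\cv{\obj}$ is visible to $\tr_i$ before $\tr_i$ can commit (guaranteed by the commit-wait on $\ltv{\obj}$). Once these synchronizations are pinned down, the counter arithmetic is routine.
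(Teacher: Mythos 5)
Your proposal is correct and follows essentially the same route as the paper's own proof (Lemma~\ref{l:forced-abort} in the appendix): you establish that $\tr_i$'s recovery counter is snapshotted to $\tr_j$'s version via the early-release update of $\cv{\obj}$, that $\tr_j$'s abort reverts $\cv{\obj}$ below that value, and that the resulting discrepancy $\rv{i}{\obj} > \cv{\obj}$ trips the forced-abort checks on the next access or at commit. Your explicit treatment of the commit-wait on $\ltv{\obj}$ is slightly more detailed than the paper's, but the argument is the same.
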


Then, the main lemma follows, showing that \SVA{} produces final-state opaque
histories.
For convenience, we assume that the \SVA{} program always writes values to
variables that are unique and in the domain of the variable.

\begin{lemma} \label{lemma:sva-fs-lopacity-short}
    Any \SVA{} history $\hist$ is final-state last-use opaque.
\end{lemma}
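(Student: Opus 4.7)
The plan is to exhibit, for each completion $C = \compl{\hist}$, a witness sequential history $S \equiv C$ that preserves $\prec_\hist$ and in which committed transactions are legal and non-committed transactions are last-use legal. The natural candidate order comes from SVA's own versioning mechanism: by Observation~\ref{e:version-order} each variable $\obj$ induces a total order $\prec_\obj$ on transactions accessing $\obj$, and by Observation~\ref{e:access-order} this order agrees with the temporal order of accesses in $\hist$. I would combine these per-variable orders with the real-time order $\prec_\hist$ into a single order $\prec_S$ and show it is acyclic; hence it extends to a total order that defines $S$. Acyclicity follows because $\ppprocedure{start}$ acquires locks on the whole access set atomically in the fixed order $\prec_{\glocks}$, so two transactions sharing any variable receive consistent private versions, i.e.\ $\tr_i \prec_\obj \tr_j$ and $\tr_j \prec_{\objy} \tr_i$ cannot happen for overlapping access sets; and real-time-ordered transactions always receive monotone private versions for every shared variable.

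Once $S$ is fixed, equivalence $S \equiv C$ holds by construction (we only permute whole per-transaction subhistories). Real-time-order preservation follows from the start protocol as above: if $\tr_i \prec_\hist \tr_j$ then for every variable they share we have $\pv{i}{\obj} < \pv{j}{\obj}$, so $\tr_i \prec_S \tr_j$, and if they share no variable the order can be chosen freely to respect $\prec_\hist$.

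For legality of a committed transaction $\tr_i$ in $S$, I will look at each non-local read $\frop{i}{\obj}{\val}$ and identify the transaction $\tr_j$ that wrote $\val$ to $\obj$ (unique by unique writes and Observation~\ref{e:no-buffer}). Observation~\ref{e:read-released} gives that $\tr_j$ is either committed, aborted, or at least decided on $\obj$ before the read; Observation~\ref{e:read-written} rules out the aborted case for committed readers because forced-abort cascades via \pplineref{sva}{call-rollback-cond}/\pplineref{sva}{commit-rollback-cond} guarantee that $\tr_i$ would itself be aborted (Observation~\ref{e:forced-abort}). Thus $\tr_j$ is committed, $\tr_j \prec_\obj \tr_i$, and Observation~\ref{e:commit-order} ensures no intervening committed writer sits between $\tr_j$ and $\tr_i$ on $\obj$ in $\vis{S}{\tr_i}$, so the $\obj$-projection is in $\mathit{Seq}(\obj)$.

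For last-use legality of a non-committed transaction $\tr_i$ in $S$, the argument is the same except that when the writer $\tr_j$ is an aborted transaction that decided on $\obj$ (i.e.\ $\tr_j \in \cpetrans{\hist}$ on $\obj$), Observation~\ref{e:read-released} says $\tr_j$ executed its \last{} write on $\obj$ before $\tr_i$'s read, so clause (b) of the definition of $\luvis{S}{\tr_i}$ allows us to include $\hist\cpeC\tr_j$; this provides exactly the write of $\val$ to $\obj$ needed to justify the read while not importing any later overwrite (there is none, because \emph{decided on $\obj$} forbids a subsequent write in any extension). The remaining obstacle, and the main technical point to handle carefully, is ensuring that when we add several decided subhistories $\hist\cpeC\tr_j$ to $\luvis{S}{\tr_i}$ we do not create conflicts between them on a variable shared by multiple aborted decided writers; here Observation~\ref{e:access-order} plus the fact that $\prec_\obj$ agrees with $\prec_S$ guarantees that the writes on each $\obj$ appear in $\luvis{S}{\tr_i}$ in the same order in which they were physically executed, preserving sequential specification on every variable.
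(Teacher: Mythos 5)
Your proposal is correct and follows essentially the same route as the paper's own proof: construct the completion that aborts live transactions, order the witness $\seqH$ by the per-variable version orders together with real-time order, and then case-split on committed versus non-committed readers using the same observations (no buffering, read-from-released, do-not-read-aborted, commit order, forced abort) to show legality of $\visf$ and last-use legality of $\luvisf$ respectively. The only material you add beyond the paper is the explicit acyclicity argument for combining $\prec_\obj$ with $\prec_\hist$ (which the paper leaves implicit in its construction of $\seqH$), and that addition is sound given SVA's atomic, globally ordered lock acquisition at start.
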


\begin{proof}[sketch]
    Let $\hist_C = \compl{\hist}$ be a completion of $\hist$ if for
    every $\tr_i \in \hist$, if $\tr_i$ is live or commit-pending in $\hist$,
    then $\tr_i$ is aborted in $\hist_C$.
    Given $\hist_C$ we can construct $\seqH$, a sequential history
    s.t. $\seqH \equiv \hist_C$, where for any two transactions
    $\tr_i, \tr_j \in \hist_C$:
    \begin{enumerate}[a) ]
        \item if $\tr_i \prec_{\hist_C} \tr_j$, then $\tr_i \prec_{\seqH} \tr_j$,
        \item if $\tr_i \prec_{\obj}$ $\tr_j$ for any variable $\obj$, then
        $\tr_i \prec_{\seqH} \tr_j$.
    \end{enumerate}
    Note that if some transaction $\tr_i$ commits in $\hist$, then it commits
    in $\seqH$ (and \emph{vice versa}). Otherwise $\tr_i$ aborts in $\seqH$.

    Let $\tr_i$ be any transaction committed in $\hist$. Thus, $\tr_i$ also
    commits in $\seqH$. 
    From \robs{e:no-buffer}, any read operation execution $\op_i = \frop{i}{\obj}{\val}$ in
    $\hist|\tr_i$ is preceded in $\hist$  by $\op_j = \fwop{j}{\obj}{\val}{\ok_j}$. 
    If $\op_i$ is local, then $i=j$, so $\op_j$ is in a committed transaction.
    If $\op_i$ is not local, then $i\neq j$. In that case, from
    \robs{e:read-written}, $\tr_j$ cannot be aborted before $\op_i$ in $\hist$.
    Consequently, $\tr_j$ is either committed before $\op_i$ in $\hist$, live in $\hist$,
    or committed or aborted after $\op_i$. 
    In the former case $\tr_i$ reads from a committed transaction.
    In the latter case, since $\tr_i$ is committed, then from
    \robs{e:read-released} and \robs{e:commit-order} we know that $\tr_j$ commits
    or aborts in $\hist$ before $\tr_i$ commits.
    In addition, from \robs{e:forced-abort} we know that $\tr_j$ cannot abort in
    $\hist$, because it would have caused $\tr_i$ to also abort. 
    Thus, any committed $\tr_i$ reads only from committed transactions.

    From \robs{e:access-order}, if $\tr_i$ reads from the value written by an
    operation in $\tr_j$ then the write in $\tr_j$ completes before the read in
    $\tr_i$, which implies $\tr_j \prec_\obj \tr_i$. Hence, $\tr_j \prec_{\seqH}
    \tr_i$.
    Thus, if $\tr_i$ is committed in $\seqH$ and reads from some $\tr_j$, then 
    any such $\tr_j$ is committed and precedes $\tr_i$, so $\seqH|\tr_j
    \subseteq \vis{\seqH}{\tr_i}$. 
    Since all reads in committed transactions read from preceding committed
    transactions, then for each read in $\vis{\seqH}{\tr_i}$ reading $\val$
    from $\obj$ there will be a write operation execution writing $\val$ to
    $\obj$ in $\vis{\seqH}{\tr_i}$. 
    Since, from \robs{e:access-order} all accesses on $\obj$ operations
    follow $\prec_\obj$, then $\vis{\seqH}{\tr_i}$ is legal for any
    committed $\tr_i$.
    Thus, any $\tr_i$ that is committed in $\seqH$ is legal in $\seqH$. 

    Let $\tr_i$ be a transaction that is live or aborts in $\hist$, so it
    aborts in $\seqH$.
    From \robs{e:no-buffer} any read operation execution $\op_i =
    \frop{i}{\obj}{\val}$ in $\hist|\tr_i$ is preceded in $\hist$  by $\op_j =
    \fwop{j}{\obj}{\val}{\ok_j}$. 
    If $\op_i$ is local, then $i=j$, so $\op_j$ is always in
    $\vis{\seqH}{\tr_i}$ where $\op_j$ precedes $\op_i$.
    If $\op_i$ is not local, then $i\neq j$. In that case, from
    \robs{e:read-written}, $\tr_j$ cannot be aborted before $\op_i$ in $\hist$.
    Consequently, $\tr_j$ is either committed before $\op_i$ in $\hist$, live in $\hist$,
    or committed or aborted after $\op_i$. 
    In the former case $\tr_i$ reads from a committed transaction.
    In the latter case, from
    \robs{e:read-released} we know that either $\tr_j$ commits
    in $\hist$ or $\tr_j$ is decided on $\obj$ in $\hist$.
    Thus, any committed $\tr_i$ reads $\obj$ only from committed transactions
    or transactions that are decided on $\obj$.

    From \robs{e:access-order}, if $\tr_i$ reads from the value written by an
    operation in $\tr_j$ then the write in $\tr_j$ completes before the read in
    $\tr_i$, which implies $\tr_j \prec_\obj \tr_i$. Hence, $\tr_j \prec_{\seqH}
    \tr_i$.
    Thus, if $\tr_i$ is aborted in $\seqH$ and reads from some $\tr_j$, then
    any such $\tr_j$ is either committed and precedes $\tr_i$, so $\seqH|\tr_j
    \subseteq \luvis{\seqH}{\tr_i}$, or $\tr_j$ is decided on any $\obj$ if
    $\tr_i$ reads from $\obj$, so $\seqH\cpeC\tr_j \subseteq
    \luvis{\seqH}{\tr_i}$.
    Since all reads in aborted transactions read $\obj$ from preceding
    committed transactions or transactions decided on $\obj$, then for each
    read in $\luvis{\seqH}{\tr_i}$ reading $\val$ from $\obj$ there will be a
    write operation execution writing $\val$ to $\obj$. Since, from
    \robs{e:access-order} all accesses on $\obj$ operations follow
    $\prec_\obj$, then $\luvis{\seqH}{\tr_i}$ is legal for any aborted
    $\tr_i$.
    Thus, any $\tr_i$ that is aborted in $\seqH$ is last-use legal in $\seqH$. 

    Since any committed $\tr_i$ in $\seqH$ is legal in $\seqH$, and any
    aborted $\tr_i$ in $\seqH$ is last-use legal in $\seqH$, and since 
    $\seqH$ trivially follows the real time order of $\hist$, then from
    \rdef{def:fs-lopacity} $\hist$ is final-state last-use opaque.
\end{proof}

\begin{theorem} \label{thm:sva-lopacity-short}
    Any \SVA{} history $\hist$ is last-use opaque.
\end{theorem}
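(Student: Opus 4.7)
The plan is to reduce the theorem to the lemma immediately preceding it (\rlemma{lemma:sva-fs-lopacity-short}), which already established that any \SVA{} history is final-state last-use opaque. By \rdef{def:lopacity}, it suffices to show that every finite prefix of an \SVA{} history is itself final-state last-use opaque. Since the lemma applies to arbitrary \SVA{} histories, the task reduces to arguing that any prefix $P$ of an \SVA{} history $\hist$ is itself an \SVA{} history, so that the lemma may be invoked on $P$.

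First, I would argue the prefix-closure of \SVA{} executions: since \SVA{} produces events incrementally through the procedures $\ppprocedure{start}$, $\ppprocedure{access}$, $\ppprocedure{commit}$, and $\ppprocedure{abort}$, and each procedure only adds events based on the state constructed from previously emitted events, any prefix $P$ of $\hist$ corresponds to an execution of \SVA{} in which some transactions have been suspended mid-operation or have not yet begun. Formally, $P$ can be viewed as the output of the same \SVA{} program run under a scheduler that simply stops issuing further events after the last event of $P$. Hence $P$ itself is an \SVA{} history in the sense required by \rlemma{lemma:sva-fs-lopacity-short}.

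Next, I would apply \rlemma{lemma:sva-fs-lopacity-short} to $P$, concluding that $P$ is final-state last-use opaque. Since this holds for every finite prefix $P$ of $\hist$, \rdef{def:lopacity} gives that $\hist$ is last-use opaque.

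The main obstacle I anticipate is making the ``prefix of an \SVA{} history is an \SVA{} history'' step fully rigorous: one has to verify that completing a truncated prefix via \rdef{def:fs-lopacity}'s completion mechanism (which turns live or commit-pending transactions into aborted ones by appending synthetic $\res{i}{}{\ab_i}$ or $\tryC_i\to\ab_i$ events) yields a structure on which the reasoning of \rlemma{lemma:sva-fs-lopacity-short} is still valid. In particular, the observations \robs{e:version-order}--\robs{e:forced-abort} must remain applicable to $P$, which they do because they are statements about the relative order and effect of already-issued events, and truncation does not affect the events that do appear. Once this truncation argument is spelled out, the theorem follows at once from \rlemma{lemma:sva-fs-lopacity-short} and \rdef{def:lopacity}.
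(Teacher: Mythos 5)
Your proposal is correct and follows essentially the same route as the paper: invoke \rlemma{lemma:sva-fs-lopacity-short} on each finite prefix, justified by the observation that any prefix of an \SVA{} history is itself an \SVA{} history, and conclude via \rdef{def:lopacity}. The paper states the prefix-closure step without elaboration, so your additional discussion of why truncation preserves the applicability of the observations is a reasonable (if optional) strengthening of the same argument.
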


\begin{proof}
    Since by \rlemma{lemma:sva-fs-lopacity-short} any \SVA{} history $\hist$ is
    final-state last-use opaque, and any prefix $P$ of $\hist$ is also an
    \SVA{} history, then every prefix of $\hist$ is also final-state last-use
    opaque. Thus, by \rdef{def:lopacity}, $\hist$ is last-use opaque.
\end{proof}
         
\section{Related Work} \label{sec:rw}

Ever since opacity \cite{GK08,GK10} was introduced, it seems,
there were attempts to weaken its stringent requirements, while retaining some
guarantees over what serializability \cite{BSW79,Pap79} provides. We explore
the most pertinent examples in \rsec{sec:properties}: TMS1, TMS2 \cite{DGLM13}, elastic opacity \cite{FGG09}, live opacity \cite{DFK14},
and VWC \cite{IMR08}, as well as some apposite consistency criteria:
recoverability \cite{Had88}, ACA \cite{BHG87}, strictness \cite{BHG87}, and
rigorousness \cite{BGRS91}. 
Other attempts were more specialized and include virtual time opacity
\cite{IMR08}, where the real-time order condition is relaxed. 
Similarly, the $\diamond$ opacity family of properties \cite{KKW15} relax the
time ordering requirements of opacity to provide properties applicable to
deffered update replication.
Another example is view transactions \cite{AMT10}, where it is only required
that a transaction commits on any snapshot, that can be different than the one
the transaction viewed initially, provided that operating on either snapshot
produced externally indistinguishable results.  
While these properties have specific applications, none weaken the consistency
to allow variable access before commit.

Although algorithms  and systems are not the focus of this paper, some systems
research that explores relaxed consistency should be noted. 
We already mention
our own \SVA{} \cite{SW14-hlpp} in \rsec{sec:lopaque-sva}.
Dynamic STM \cite{HLMS03} is another system with early release, and it can be
credited with introducing the concept of early release in the TM context.
Dynamic STM allows transactions that only perform read operations on particular
variables to (manually) release them for use by other transactions. However, it
left the assurance of safety to the programmer, and, as the authors state, even
linearizability cannot be guaranteed by the system.
The authors of \cite{SK06} expanded on the work above and evaluated the concept
of early release with respect to read-only variables on several concurrent data
structures. The results showed that this form of early release does not provide
a significant advantage in most cases, although there are scenarios where it
would be advantageous if it were automated. We use a different approach in \SVA{}, where
early release is not limited to read-only variables.
Twilight STM \cite{BMT10} relaxes isolation to allow transactions to read
variables that are used by other transactions, and allow them to re-read their
values as they change in order to maintain the correctness of the computation. If
inconsistencies arise, a reconciliation is attempted on commit, and aborts are
induced if this is impossible. Since it allows operating on variables that were
released early, but potentially before \last{} write, Twilight STM will not
satisfy the consistency requirements of \lopacity{}, but it is likely to
guarantee serializability and recoverability.

DATM  \cite{RRHW09} is yet another noteworthy system with an early release mechanism.
DATM is an optimistic multicore-oriented TM based on TL2 \cite{DSS06}, augmented
with early-release support.
It allows a transaction $\tr_i$ to write to a variable that was accessed by
some uncommitted transaction $\tr_j$, as long as $\tr_j$ commits before
$\tr_i$.
DATM also allows transaction $\tr_i$ to read a speculative value, one written
by $\tr_j$ and accessed by $\tr_i$ before $\tr_j$ commits. 
DATM detects if $\tr_j$ overwrites the data
or aborts, in which case $\tr_i$ is forced to restart. 
DATM allows all schedules alowed by conflict-serializability. This means that
DATM allows overwritting, as well as cascading aborts. It also means that it
does not satisfy \lopacity{}.

\section{Conclusions} \label{sec:conclusions}

This paper explored the space of TM safety properties in terms of whether or
not, and to what extent, they allow a transaction to release a variable early,
or, in other words, for a transaction to read a value written by a live
transaction.
We showed that existing properties are either strong, but prevent early release
altogether (opacity, TMS1 and TMS2), or pose impractical restrictions on the
ability of transactions to abort (VWC and live opacity).  The remainder of the
properties are not strong enough for TM applications (serializability and
recoverability) since they allow a large range of inconsistent views, including
overwriting.
Hence, we presented a new TM safety property called \lopacity{} that strikes a
reasonable compromise. It allows early release without a requirement for
transactions that release early not to abort, but one that is nevertheless
strong enough to prevent most inconsistent views and make others
inconsequential. The resulting property may be a useful practical criterion for
reasoning about TMs with early release support.

We discussed the histories that are allowed by \lopacity{} and examined the
guarantees the property gives to the programmer. Last-use opacity always allows
for potential inconsistent views to occur due to cascading aborts. However, no
    other inconsistent views are allowed. The inconsistent views that can occur
    can be made harmless by taking away the programmer's ability to execute
    arbitrary aborts by either removing that operation completely or by
    removing it from the programmer's toolkit, but allowing it to be used by
    the TM system, e.g. for fault tolerance. Allowing the programmer to abort
    a transaction at will means that they will need to eliminate dangerous situations
    (possible division by zero, invalid memory accesses, etc.) on a
    case-by-case basis. Nevertheless, we predict that inconsistent views of
    this sort will be relatively rare in practical TM applications, and
    typically result from using the abort operation to program business logic.
Alternatively, a variant of last-use opacity called $\beta$--last-use opacity
can be used instead, which eliminates the inconsistent views by preventing
early release in transactions where a programmatic abort is possible.

Finally, we discussed \SVA{}, a pessimistic concurrency control TM algorithm with early
release, which we show satisfies \lopacity{}.

\paragraph{Acknowledgements}
The project was funded from National Science Centre funds granted by decision
No. DEC-2012/06/M/ST6/\\00463.

\bibliographystyle{abbrv}
\bibliography{top-arxiv}%

\begin{thebibliography}{10}

\bibitem{AMT10}
Y.~Afek, A.~Morrison, and M.~Tzafrir.
\newblock {Brief Announcement: View Transactions: Transactional Model with
  Relaxed Consistency Checks}.
\newblock In {\em {Proceedings of~}{PODC'10: the 29th~}{ACM SIGACT-SIGOPS
  Symposium on Principles of Distributed Computing}}, July 2010.

\bibitem{AGHR13}
H.~Attiya, A.~Gotsman, S.~Hans, and N.~Rinetzky.
\newblock A programming language perspective on transactional memory
  consistency.
\newblock In {\em {Proceedings of~}PODC'13: the 32nd~{ACM SIGACT-SIGOPS
  Symposium on Principles of Distributed Computing}}, 2013.

\bibitem{AGHR14}
H.~Attiya, A.~Gotsman, S.~Hans, and N.~Rinetzky.
\newblock Safety of live transactions in transactional memory: Tms is necessary
  and sufficient.
\newblock In {\em {Proceedings of~}DISC'14: the 28th~{International Symposium
  on Distributed Computing}}, 2014.

\bibitem{AH14}
H.~Attiya and S.~Hans.
\newblock {Transactions are Back---but How Different They Are?}
\newblock In {\em {Proceedings of~}{TRANSACT'14: the 7th~}{ACM SIGPLAN Workshop
  on Transactional Computing}}, Feb. 2014.

\bibitem{AHKR13}
H.~Attiya, S.~Hans, P.~Kuznetsov, and S.~Ravi.
\newblock {Safety of Deferred Update in Transactional Memory}.
\newblock In {\em {Proceedings of~}{ICDCS'13: the 33rd~}{International
  Conference on Distributed Computing Systems}}, July 2013.

\bibitem{BSW79}
P.~Bernstein, D.~Shipman, and W.~Wong.
\newblock {Formal Aspects of Serializability in Database Concurrency Control}.
\newblock {\em {IEEE Transactions on Software Engineering}}, SE-5(3):203--216,
  May 1979.

\bibitem{BHG87}
P.~A. Bernstein, V.~Hadzilacos, and N.~Goodman.
\newblock {\em Concurrency control and recovery in database systems}.
\newblock Addison-Wesley, 1987.

\bibitem{BMT10}
A.~Bieniusa, A.~Middelkoop, and P.~Thiemann.
\newblock {Brief Announcement: Actions in the Twilight---Concurrent Irrevocable
  Transactions and Inconsistency Repair}.
\newblock In {\em {Proceedings of~}{PODC'10: the 29th~}{ACM SIGACT-SIGOPS
  Symposium on Principles of Distributed Computing}}, July 2010.

\bibitem{BGRS91}
Y.~Breitbart, D.~Georgakopoulos, M.~Rusinkiewicz, and A.~Silberschatz.
\newblock On rigorous transaction scheduling.
\newblock {\em {IEEE Transactions on Software Engineering}}, 17, Sept. 1991.

\bibitem{DSS06}
D.~Dice, O.~Shalev, and N.~Shavit.
\newblock {Transactional Locking II}.
\newblock In {\em {Proceedings of~}{DISC'06: the 20th~}{International Symposium
  on Distributed Computing}}, Sept. 2006.

\bibitem{DGLM13}
S.~Doherty, L.~Groves, V.~Luchangco, and M.~Moir.
\newblock Towards formally specifying and verifying transactional memory.
\newblock {\em {Formal Aspects of Computing}}, 25:769--799, Sept. 2013.

\bibitem{DFK14}
D.~Dziuma, P.~Fatourou, and E.~Kanellou.
\newblock Consistency for transactional memory computing.
\newblock {\em Bulletin of the EATCS}, 113, 2014.

\bibitem{FGG09}
P.~Felber, V.~Gramoli, and R.~Guerraoui.
\newblock {Elastic Transactions}.
\newblock In {\em {Proceedings of~}{DISC'09: the 23rd~}{International Symposium
  on Distributed Computing}}, Sept. 2009.

\bibitem{GK08}
R.~Guerraoui and M.~Kapa{\l}ka.
\newblock {On the Correctness of Transactional Memory}.
\newblock In {\em {Proceedings of~}{PPoPP'08: the 13th~}{ACM SIGPLAN Symposium
  on Principles and Practice of Parallel Programming}}, pages 175--184, Feb.
  2008.

\bibitem{GK10}
R.~Guerraoui and M.~Kapa{\l}ka.
\newblock {\em {Principles of Transactional Memory}}.
\newblock Morgan \& Claypool, 2010.

\bibitem{Had88}
V.~Hadzilacos.
\newblock A theory of reliability in database systems.
\newblock {\em {Journal of the ACM}}, 35, Jan. 1988.

\bibitem{HF03}
T.~Harris and K.~Fraser.
\newblock {Language Support for Lightweight Transactions}.
\newblock In {\em {Proceedings of~}OOPSLA'03: the 18th ACM SIGPLAN Conference
  on Object-Oriented Programming, Systems, Languages, and Applications}, Oct.
  2003.

\bibitem{HMJH04}
T.~Harris, S.~Marlow, S.~{Peyton Jones}, and M.~Herlihy.
\newblock {Composable memory transactions}.
\newblock In {\em {Proceedings of~}PPoPP'05: the {ACM SIGPLAN Symposium on
  Principles and Practice of Parallel Programming}}, June 2005.

\bibitem{HLMS03}
M.~Herlihy, V.~Luchangco, M.~Moir, and I.~W.~N. Scherer.
\newblock {Software Transactional Memory for Dynamic-sized Data Structures}.
\newblock In {\em {Proceedings of~}{PODC'03: the 22nd~}{ACM SIGACT-SIGOPS
  Symposium on Principles of Distributed Computing}}, pages 92--101, July 2003.

\bibitem{HME93}
M.~Herlihy and J.~E.~B. Moss.
\newblock {Transactional Memory: Architectural Support for Lock-free Data
  Structures}.
\newblock In {\em {Proceedings of~}ISCA'93: the 20th~{International Symposium
  on Computer Architecture}}, pages 289--300, May 1993.

\bibitem{IMR08}
D.~Imbs, J.~R. de~Mendivil, and M.~Raynal.
\newblock {On the Consistency Conditions or Transactional Memories}.
\newblock Technical Report 1917, {IRISA}, Dec. 2008.

\bibitem{KKW15}
T.~Kobus, M.~Kokoci\'nski, and P.~T. Wojciechowski.
\newblock The correctness criterion for deferred update replication.
\newblock In {\em Proceedings of TRANSACT '15 : the 10th ACM SIGPLAN Workshop
  on Transactional Computing}, 2015.

\bibitem{Lam77}
L.~Lamport.
\newblock Proving the correctness of multiprocess programs.
\newblock {\em IEEE Transactions on Software Engineering}, SE-3(2), Mar. 1977.

\bibitem{NWA+08}
Y.~Ni, A.~Welc, A.-R. Adl-Tabatabai, M.~Bach, S.~Berkowits, J.~Cownie, R.~Geva,
  S.~Kozhukow, R.~Narayanaswamy, J.~Olivier, S.~Preis, B.~Saha, A.~Tal, and
  X.~Tian.
\newblock {Design and implementation of transactional constructs for {C/C++}}.
\newblock In {\em Proceedings of OOPSLA'08: the 23rd ACM SIGPLAN Conference on
  Object-oriented Programming, Systems Languages and Applications}, 2008.

\bibitem{Pap79}
C.~H. Papadimitrou.
\newblock {The Serializability of Concurrent Database Updates}.
\newblock {\em {Journal of the ACM}}, 26(4):631--653, 1979.

\bibitem{RRHW09}
H.~E. Ramadan, I.~Roy, M.~Herlihy, and E.~Witchel.
\newblock {Committing Conflicting Transactions in an STM}.
\newblock In {\em {Proceedings of~}{PPoPP'09: the 14th~}{ACM SIGPLAN Symposium
  on Principles and Practice of Parallel Programming}}, Feb. 2009.

\bibitem{RG05}
M.~F. Ringenburg and D.~Grossman.
\newblock {{AtomCaml}: first-class atomicity via rollback}.
\newblock In {\em Proceedings of ICFP'05: the 10th ACM SIGPLAN International
  Conference on Functional Programming}, Sept. 2005.

\bibitem{ST95}
N.~Shavit and D.~Touitou.
\newblock {Software Transactional Memory}.
\newblock In {\em {Proceedings of~}{PODC'95: the 14th~}{ACM SIGACT-SIGOPS
  Symposium on Principles of Distributed Computing}}, pages 204--213, Aug.
  1995.

\bibitem{SW12}
K.~Siek and P.~T. Wojciechowski.
\newblock {A Formal Design of a Tool for Static Analysis of Upper Bounds on
  Object Calls in {Java}}.
\newblock In {\em {Proceedings of~}{FMICS'12: the 17th~}{International Workshop
  on Formal Methods for Industrial Critical Systems}}, number 7437 in {Lecture
  Notes in Computer Science}, pages 192--206, Aug. 2012.

\bibitem{SW13}
K.~Siek and P.~T. Wojciechowski.
\newblock {Brief announcement: Towards a Fully-Articulated Pessimistic
  Distributed Transactional Memory}.
\newblock In {\em {Proceedings of~}{SPAA'13: the 25th~}{ACM Symposium on
  Parallelism in Algorithms and Architectures}}, pages 111--114, July 2013.

\bibitem{SW14-hlpp}
K.~Siek and P.~T. Wojciechowski.
\newblock {Atomic RMI: a Distributed Transactional Memory Framework}.
\newblock In {\em {Proceedings of~}{HLPP'14: the 7th International Symposium on
  High-level Parallel Programming and Applications}}, pages 73--94, July 2014.

\bibitem{SW14-disc}
K.~Siek and P.~T. Wojciechowski.
\newblock Brief announcement: Relaxing opacity in pessimistic transactional
  memory.
\newblock In {\em {Proceedings of~}DISC'14: the 28th~{International Symposium
  on Distributed Computing}}, 2014.

\bibitem{SW14-wttm}
K.~Siek and P.~T. Wojciechowski.
\newblock {Zen and the Art of Concurrency Control: An Exploration of TM Safety
  Property Space with Early Release in Mind}.
\newblock In {\em {Proceedings of~}{WTTM'14: the 6th Workshop on the Theory of
  Transactional Memory}}, July 2014.

\bibitem{SW15-atomicrmi}
K.~Siek and P.~T. Wojciechowski.
\newblock {Atomic RMI: A Distributed Transactional Memory Framework}.
\newblock {\em International Journal of Parallel Programming}, 2015.

\bibitem{SK06}
T.~Skare and C.~Kozyrakis.
\newblock Early release: Friend or foe?
\newblock In {\em {Proceedings of~}WTW'06: the Workshop on Transactional Memory
  Workloads}, June 2006.

\bibitem{Wei89}
W.~E. Weihl.
\newblock Local atomicity properties: modular concurrency control for abstract
  data types.
\newblock {\em ACM Transactions on Programming Languages and Systems},
  11(2):249--282, Apr. 1989.

\bibitem{WSA08}
A.~Welc, B.~Saha, and A.-R. Adl-Tabatabai.
\newblock {Irrevocable Transactions and their Applications}.
\newblock In {\em {Proceedings of~}{SPAA'08: the 20th~}{ACM Symposium on
  Parallelism in Algorithms and Architectures}}, June 2008.

\bibitem{Woj05b}
P.~T. Wojciechowski.
\newblock {Isolation-only Transactions by Typing and Versioning}.
\newblock In {\em {Proceedings of~}{PPDP'05: the 7th~}{ACM SIGPLAN
  International Symposium on Principles and Practice of Declarative
  Programming}}, July 2005.

\bibitem{Woj07}
P.~T. Wojciechowski.
\newblock {\em {Language Design for Atomicity, Declarative Synchronization, and
  Dynamic Update in Communicating Systems}}.
\newblock Publishing House of Pozna{\'n} University of Technology, 2007.

\bibitem{WRS04}
P.~T. Wojciechowski, O.~R{\"u}tti, and A.~Schiper.
\newblock {SAMOA}: A framework for a synchronisation-augmented microprotocol
  approach.
\newblock In {\em Proceedings of IPDPS '04: the 18th IEEE International
  Parallel and Distributed Processing Symposium}, Apr. 2004.

\end{thebibliography}

\clearpage
\appendix

\section{Last-use Opacity History Examples}

\def\so{\Longrightarrow}
\def\pref#1{(\ref{#1})}

By $A \sqsubseteq B$ we denote that sequence $A$ is a substring of $B$.

Note the following about $\sspec{\obj}$ for any $\obj, \tr_i, \tr_j$:

    \begin{align}
    & \label{eq:seq-r}
      [\frop{i}{\obj}{0}] \in \sspec{\obj} \\
    & \label{eq:seq-w}
      [\fwop{i}{\obj}{1}{\ok_i}] \in \sspec{\obj} \\
    & \label{eq:seq-wa}
      [\fwop{i}{\obj}{1}{\ab_i}] \in \sspec{\obj} \\
    & \label{eq:seq-wr}
      [\fwop{i}{\obj}{1}{\ok_i}, \frop{j}{\obj}{1}] \in \sspec{\obj} \\
    & \label{eq:seq-wra}
      [\fwop{i}{\obj}{1}{\ok_i}, \frop{j}{\obj}{\ab_i}] \in \sspec{\obj} \\
    & \label{eq:seq-wrw}
      [\fwop{i}{\obj}{1}{\ok_i}, \frop{j}{\obj}{1}, \fwop{j}{\obj}{2}{\ok_j}] \in \sspec{\obj} \\
    & \label{eq:seq-wrwa}
      [\fwop{i}{\obj}{1}{\ok_i}, \frop{j}{\obj}{1}, \fwop{j}{\obj}{2}{\ab_j}] \in \sspec{\obj} \\
    & \label{eq:seq-empty}
      \varnothing \in \sspec{\obj} \\ 
    & \label{eq:ill-seq-r}
      [\frop{i}{\obj}{1}] \not\in \sspec{\obj} \\ 
    & \label{eq:ill-seq-wwr}
      [\fwop{i}{\obj}{1}{\ok_i},
       \fwop{i}{\obj}{2}{\ok_i},\frop{i}{\obj}{1}] \not\in \sspec{\obj} \\
    & \label{eq:ill-seq-wrrw}
      [\fwop{i}{\objx}{1}{\ok_i},\frop{i}{\objy}{1},
       \frop{j}{\objx}{1},\fwop{j}{\objy}{1}{\ok_j}] \not\in \sspec{\obj}  
    \end{align}

\begin{figure}
\begin{center}
\begin{tikzpicture}
     \draw
           (0,2)        node[tid]       {$\tr_i$}
                        node[aop]       {$\init_i$} %
                        node[dot]       {} 

      -- ++(1.25,0)     node[aop]       {$\twop{i}{\obj}{1}$}
                        node[dot] (wi)  {}
                        node[cir]       {}

      -- ++(1.25,0)     node[aop]       {$\tryC_i\!\to\!\co_i$}
                        node[dot]       {}         
                        ;

     \draw
           (1,1)        node[tid]       {$\tr_{j}$}
                        node[aop]       {$\init_{j}$} %
                        node[dot]       {} 

      -- ++(1.25,0)     node[aop]       {$\trop{j}{\obj}{1}$}
                        node[dot] (rj)  {}

      -- ++(1.25,0)     node[aop]       {$\tryC_j\!\to\!\co_j$}
                        node[dot]       {}
                        ;
     
     \draw[hb] (wi) \squiggle (rj);
\end{tikzpicture}
\end{center}
\caption{\label{fig:proof-h1} History $\hist_1$, last-use opaque.}
\end{figure}

\begin{lemma} \label{lemma:h1-fslop}
    $\hist_1$ is final-state last-use opaque.    
\end{lemma}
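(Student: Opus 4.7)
The plan is to verify the three conditions of \rdef{def:fs-lopacity} directly, since $\hist_1$ is small enough that an explicit witness sequential history suffices. First, I would observe that both $\tr_i$ and $\tr_j$ are committed in $\hist_1$, so $\compl{\hist_1} = \hist_1$, which removes any ambiguity about which completion to examine. I would also note that $\tr_i$ and $\tr_j$ are concurrent in $\hist_1$ (the invocation $\inv{j}{}{\init_j}$ precedes $\res{i}{}{\co_i}$), so $\prec_{\hist_1} = \varnothing$.

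Next, I would propose the witness $S := \hist_1|\tr_i \cdot \hist_1|\tr_j$. This $S$ is sequential by construction, is equivalent to $\hist_1$ (each transaction's projection is preserved), and trivially preserves the empty real-time order of $\hist_1$, so condition \rdef{def:fs-lopacity}a holds.

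For condition \rdef{def:fs-lopacity}b, I would check legality of each committed transaction. For $\tr_i$, $\vis{S}{\tr_i}|\obj = [\fwop{i}{\obj}{1}{\ok_i}]$, which lies in $\sspec{\obj}$ by \pref{eq:seq-w}. For $\tr_j$, since $\tr_i$ is committed and $\tr_i \prec_S \tr_j$, we have $\vis{S}{\tr_j}|\obj = [\fwop{i}{\obj}{1}{\ok_i}, \frop{j}{\obj}{1}]$, which lies in $\sspec{\obj}$ by \pref{eq:seq-wr}. Hence both transactions are legal in $S$. Condition \rdef{def:fs-lopacity}c is vacuous because no transaction in $S$ is non-committed.

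There is no real obstacle here: the lemma is essentially a sanity check on the definitions, and the only subtlety is to keep in mind that legality of a committed $\tr_j$ is checked against $\vis{S}{\tr_j}$ rather than $\luvis{S}{\tr_j}$, so the fact that the write in $\tr_i$ is marked as a \last{} write does not actually enter the argument at the final-state level --- it becomes essential only when considering prefixes in which $\tr_i$ or $\tr_j$ aborts (treated in subsequent lemmas). Assembling the three conditions then yields the conclusion by \rdef{def:fs-lopacity}.
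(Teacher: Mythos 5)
Your proof is correct and takes essentially the same route as the paper's: the same trivial completion $\compl{\hist_1}=\hist_1$, the same witness $S=\hist_1|\tr_i\cdot\hist_1|\tr_j$, and the same legality computations $\vis{S}{\tr_i}|\obj=[\fwop{i}{\obj}{1}{\ok_i}]$ and $\vis{S}{\tr_j}|\obj=[\fwop{i}{\obj}{1}{\ok_i},\frop{j}{\obj}{1}]$. Your closing observation that the \last{}-write annotation plays no role at the final-state level and only matters for the aborting prefixes matches how the paper structures the subsequent prefix lemmas.
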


\begin{proof}
    \begin{align}
    & \text{let}~ C_1 = \compl{\hist_1} = \hist_1 \\    
    & \text{let}~ S_1 = C_1|\tr_i \cdot C_1|\tr_j \label{eq:let-s1} \\
    & %
      S_1 \equiv C_1 \label{eq:s1-equiv}\\
    & \text{real time order}~ \prec_{\hist_1} = \varnothing 
      \label{eq:h1-rt-h1} \\
    & \text{real time order}~ \prec_{S_1} = 
      \{ \tr_i \prec_{S_1} \tr_j \} \label{eq:h1-rt-s1} \\
    & \pref{eq:h1-rt-h1} \wedge \pref{eq:h1-rt-s1}
      \so \prec_{S_1} \subseteq \prec_{\hist_1} \label{eq:h1-rt-sub} \\
    & i = i \so S_1|\tr_i \subseteq \vis{S_1}{\tr_i} \label{eq:h1-vis-ti-ti} \\
    & \tr_i \prec_{S_1} \tr_j 
      \so S_1|\tr_j \nsubseteq \vis{S_1}{\tr_i} \label{eq:h1-vis-ti-tj} \\
    & \pref{eq:h1-vis-ti-ti} \wedge \pref{eq:h1-vis-ti-tj} 
      \so \vis{S_1}{\tr_i} = S_1|\tr_i \label{eq:s1-ti-vis} \\
    & \pref{eq:s1-ti-vis} \so \vis{S_1}{\tr_i}|\obj = [\fwop{i}{\obj}{1}{\ok_i}] 
      \label{eq:s1-ti-vis-x} \\
    & \pref{eq:s1-ti-vis-x} \wedge \pref{eq:seq-w} 
      \so \vis{S_1}{\tr_i} ~\text{is legal} \label{eq:h1-vis-ti-legal} \\
    & \pref{eq:h1-vis-ti-legal} \so  
      \tr_i ~\text{in}~ S_1 ~\text{is legal in}~ S_1 \label{eq:h1-ti-legal} \\
    & \tr_i \prec_{S_1} \tr_j \wedge \res{i}{}{\co_i} \in S_1|\tr_i 
      \so S_1|\tr_i \subseteq \vis{S_1}{\tr_i} \label{eq:h1-vis-tj-ti} \\
    & j = j \so  S_1|\tr_j \subseteq \vis{S_1}{\tr_i} \label{eq:h1-vis-tj-tj} \\
    & \pref{eq:h1-vis-tj-ti} \wedge \pref{eq:h1-vis-tj-tj} 
      \so \vis{S_1}{\tr_j} = S_1|\tr_i \cdot S_1|\tr_j \label{eq:s1-tj-vis} \\
    & \pref{eq:s1-tj-vis} 
      \so \vis{S_1}{\tr_j}|\obj = [\fwop{i}{\obj}{1}{\ok_i}, \frop{j}{\obj}{1}] 
      \label{eq:s1-tj-vis-x} \\
    & \pref{eq:s1-tj-vis-x} \wedge \pref{eq:seq-wr} 
      \so \vis{S_1}{\tr_j} ~\text{is legal} \label{eq:h1-vis-tj-legal} \\
    & \pref{eq:h1-vis-tj-legal} \so  
      \tr_j ~\text{in}~ S_1 ~\text{is legal in}~ S_1 \label{eq:h1-tj-legal} \\
    & \pref{eq:s1-equiv} \wedge \pref{eq:h1-rt-sub} 
      \wedge \pref{eq:h1-ti-legal} \wedge \pref{eq:h1-tj-legal} \so 
      \hist_1 ~\text{is final-state last-use opaque}
    \end{align}
\end{proof}

Let $P^1_1$ be a prefix s.t. $\hist_1 = P^1_1 \cdot [\res{j}{}{\co_j}]$.

\begin{lemma} \label{lemma:p11-fslop}
    $P^1_1$ is final-state last-use opaque.    
\end{lemma}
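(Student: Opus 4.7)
The plan is to observe that $P^1_1$ differs from $\hist_1$ only by the missing trailing event $\res{j}{}{\co_j}$, so that the completion machinery forces $\compl{P^1_1}$ to coincide with $\hist_1$, allowing me to re-use the witness from \rlemma{lemma:h1-fslop} essentially verbatim. Specifically, in $P^1_1$ transaction $\tr_i$ has already committed, while $P^1_1|\tr_j$ ends with a pending $\inv{j}{}{\tryC_j}$; clause (b) of the completion definition then dictates $\compl{P^1_1}|\tr_j = P^1_1|\tr_j \cdot [\res{j}{}{\co_j}]$, so $\compl{P^1_1} = \hist_1$.

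First I would take $S = P^1_1|\tr_i \cdot \compl{P^1_1}|\tr_j$, which is literally the $S_1$ used in the previous lemma. Equivalence $S \equiv \compl{P^1_1}$ is immediate by construction, and since $\prec_{P^1_1} = \varnothing$, preservation of real-time order is vacuous. Both $\tr_i$ and $\tr_j$ are committed in $\compl{P^1_1}$, so by \rdef{def:fs-lopacity} it suffices to verify that each is legal (rather than merely last-use legal) in $S$.

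Next I would transplant the legality arguments of \rlemma{lemma:h1-fslop}: $\vis{S}{\tr_i}|\obj = [\fwop{i}{\obj}{1}{\ok_i}]$, which lies in $\mathit{Seq}(\obj)$ by \pref{eq:seq-w}, and $\vis{S}{\tr_j}|\obj = [\fwop{i}{\obj}{1}{\ok_i}, \frop{j}{\obj}{1}]$, which lies in $\mathit{Seq}(\obj)$ by \pref{eq:seq-wr}. Combining these with the equivalence and real-time observations above, \rdef{def:fs-lopacity} concludes the proof.

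The only non-routine step is identifying the right clause of the completion definition for the commit-pending $\tr_j$; once the completion is pinned down as $\hist_1$, the whole argument collapses onto \rlemma{lemma:h1-fslop}, so there really is no hard part here.
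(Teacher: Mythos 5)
Your proposal is correct and takes essentially the same route as the paper: the paper's proof is just the two-line observation that $\compl{P^1_1} = P^1_1 \cdot [\res{j}{}{\co_j}] = \hist_1$ and then an appeal to \rlemma{lemma:h1-fslop}, which is exactly your argument with the legality verification re-spelled out rather than cited. (Your word ``dictates'' for the completion clause is slightly too strong, since a commit-pending $\tr_j$ could also be completed by an abort, but the definition only requires \emph{some} completion to admit a witness, so this does not affect correctness.)
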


\begin{proof}
    \begin{align}
    & \text{let}~ C^1_1 = \compl{P^1_1} = P^1_1 \cdot [\res{j}{}{\co_j}] \\    
    & \hist_1 = C^1_1 \wedge ~\text{\rlemma{lemma:h1-fslop}}~ 
      \so P^1_1 ~\text{is final-state last-use opaque}
    \end{align}
\end{proof}

Let $P^1_2$ be a prefix s.t. $\hist_1 = P^1_2 \cdot [\tryC_j\to\co_j]$.

\begin{lemma} \label{lemma:p12-fslop}
    $P^1_2$ is final-state last-use opaque.    
\end{lemma}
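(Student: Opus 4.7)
The plan is to follow the same pattern used in the proofs of Lemmas \ref{lemma:h1-fslop} and \ref{lemma:p11-fslop}, constructing an explicit completion and equivalent sequential history, then verifying the three conditions of Definition \ref{def:fs-lopacity} one transaction at a time. Since $P^1_2$ is obtained from $\hist_1$ by removing the final two events $\inv{j}{}{\tryC_j}$ and $\res{j}{}{\co_j}$, transaction $\tr_j$ is live in $P^1_2$ and contains no pending operation, so by clause (d) of the completion definition I would set $C^1_2 = \compl{P^1_2} = P^1_2 \cdot [\tryC_j \to \ab_j]$, in which $\tr_i$ is committed and $\tr_j$ is aborted.

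Next I would let $S^1_2 = C^1_2|\tr_i \cdot C^1_2|\tr_j$ and observe that $S^1_2 \equiv C^1_2$ by construction. Since the two transactions are concurrent in $P^1_2$, the real-time order $\prec_{P^1_2}$ is empty, so $S^1_2$ trivially preserves it. For the legality of $\tr_i$, which is committed in $S^1_2$, the argument mirrors \pref{eq:h1-vis-ti-ti}--\pref{eq:h1-ti-legal}: $\vis{S^1_2}{\tr_i} = S^1_2|\tr_i$ and its projection onto $\obj$ is $[\fwop{i}{\obj}{1}{\ok_i}]$, which lies in $\sspec{\obj}$ by \pref{eq:seq-w}.

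The main step, and the only one that differs substantively from the proof of Lemma \ref{lemma:h1-fslop}, is showing that $\tr_j$ is last-use legal in $S^1_2$. Here I would exploit the fact that, although $\tr_j$ is aborted in $S^1_2$, $\tr_i$ is still committed with $\tr_i \prec_{S^1_2} \tr_j$, so clause (a) of the definition of $\luvisf$ gives $S^1_2|\tr_i \subseteq \luvis{S^1_2}{\tr_j}$, and together with $S^1_2|\tr_j \subseteq \luvis{S^1_2}{\tr_j}$ this yields $\luvis{S^1_2}{\tr_j} = S^1_2|\tr_i \cdot S^1_2|\tr_j$. Its projection onto $\obj$ is $[\fwop{i}{\obj}{1}{\ok_i}, \frop{j}{\obj}{1}]$, which is in $\sspec{\obj}$ by \pref{eq:seq-wr}, so $\tr_j$ is last-use legal in $S^1_2$.

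Combining these three facts with Definition \ref{def:fs-lopacity} completes the argument. The only subtle point is that the \emph{decided on $\obj$} clause of $\luvisf$ is not needed here: legality of $\tr_j$ follows purely from $\tr_i$ being committed in $S^1_2$, because removing $\tryC_j \to \co_j$ from $\hist_1$ leaves $\tr_i$'s commit intact. Consequently no appeal to the fact that $\twop{i}{\obj}{1}$ is a \last{} write is required for this particular prefix, which is why the proof remains essentially a routine adaptation of Lemma \ref{lemma:h1-fslop} rather than posing any real obstacle.
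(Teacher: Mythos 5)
Your proposal is correct and follows essentially the same route as the paper's own proof: the same completion (the paper writes the appended event as $\tryA_j\to\ab_j$ where the completion definition and your version give $\tryC_j\to\ab_j$, an immaterial discrepancy), the same sequential history $S^1_2 = C^1_2|\tr_i \cdot C^1_2|\tr_j$, and the same observation that $\tr_j$'s last-use legality follows from $\tr_i$ being committed rather than from the \last{}-write clause of $\luvisf$. Nothing further is needed.
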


\begin{proof}
    \begin{align}
    & \text{let}~ C^1_2 = \compl{P^1_2} = P^1_2 \cdot [\tryA_j\to\ab_j] \\    
    & \text{let}~ S^1_2 = C^1_2|\tr_i \cdot C^1_2|\tr_j \label{eq:let-s12} \\
    & %
      S^1_2 \equiv C^1_2 \label{eq:s12-equiv}\\
    & \text{real time order}~ \prec_{P^1_2} = \varnothing \label{eq:s12-rt-p12} \\
    & \text{real time order}~ \prec_{S^1_2} = 
      \{ \tr_i \prec_{S^1_2} \tr_j \} \label{eq:p12-rt-s12} \\
    & \pref{eq:s12-rt-p12} \wedge \pref{eq:p12-rt-s12}
      \so \prec_{S^1_2} \subseteq \prec_{P^1_2} \label{eq:p12-rt-sub} \\
    & i = i \so S^1_2|\tr_i \subseteq \vis{S^1_2}{\tr_i} \label{eq:p12-vis-ti-ti} \\
    & \tr_i \prec_{S^1_2} \tr_j 
      \so S^1_2|\tr_j \nsubseteq \vis{S^1_2}{\tr_i} \label{eq:p12-vis-ti-tj} \\
    & \pref{eq:p12-vis-ti-ti} \wedge \pref{eq:p12-vis-ti-tj} 
      \so \vis{S^1_2}{\tr_i} = S^1_2|\tr_i \label{eq:p12-ti-vis} \\
    & \pref{eq:p12-ti-vis} \so \vis{S^1_2}{\tr_i}|\obj = [\fwop{i}{\obj}{1}{\ok_i}] 
      \label{eq:p12-ti-vis-x} \\
    & \pref{eq:p12-ti-vis-x} \wedge \pref{eq:seq-w} 
      \so \vis{S^1_2}{\tr_i} ~\text{is legal} \label{eq:p12-vis-ti-legal} \\
    & \pref{eq:p12-vis-ti-legal} \so  
      \tr_i ~\text{in}~ S^1_2 ~\text{is legal in}~ S^1_2 \label{eq:p12-ti-legal} \\
    & \tr_i \prec_{S^1_2} \tr_j \wedge \res{i}{}{\co_i} \in S^1_2|\tr_i 
      \so S^1_2|\tr_i \subseteq \luvis{S^1_2}{\tr_j} \label{eq:p12-lvis-tj-ti} \\
    & j = j \so  S^1_2|\tr_j \subseteq \luvis{S^1_2}{\tr_j} \label{eq:p12-lvis-tj-tj} \\
    & \pref{eq:p12-lvis-tj-ti} \wedge \pref{eq:p12-lvis-tj-tj} 
      \so \luvis{S^1_2}{\tr_j} = S^1_2|\tr_i \cdot S^1_2|\tr_j \label{eq:p12-tj-lvis} \\
    & \pref{eq:p12-tj-lvis} 
      \so \luvis{S^1_2}{\tr_j}|\obj = [\fwop{i}{\obj}{1}{\ok_i}, \frop{j}{\obj}{1}] 
      \label{eq:p12-tj-lvis-x} \\
    & \pref{eq:p12-tj-lvis-x} \wedge \pref{eq:seq-wr} 
      \so \luvis{S^1_2}{\tr_j} ~\text{is legal} \label{eq:p12-lvis-tj-legal} \\
    & \pref{eq:p12-lvis-tj-legal} \so  
      \tr_j ~\text{in}~ S^1_2 ~\text{is last-use legal in}~ S^1_2 \label{eq:p12-tj-legal} \\
    & \pref{eq:s12-equiv} \wedge \pref{eq:p12-rt-sub} 
      \wedge \pref{eq:p12-ti-legal} \wedge \pref{eq:p12-tj-legal} \so 
      P^1_2 ~\text{is final-state last-use opaque}
    \end{align}
\end{proof}

Let $P^1_3$ be a prefix s.t. $\hist_1 = P^1_3 \cdot [\res{i}{}{\co_i}, \tryC_j\to\co_j]$.

\begin{lemma} \label{lemma:p13-fslop}
    $P^1_3$ is final-state last-use opaque.    
\end{lemma}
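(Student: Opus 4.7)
The plan is to mirror the structure of the proofs of Lemma~\ref{lemma:h1-fslop} and Lemma~\ref{lemma:p12-fslop}, adjusting only the completion step to reflect the different status of the two transactions at the end of $P^1_3$. In $P^1_3$, transaction $\tr_i$ is commit-pending (its $\tryC_i$ invocation appears but its $\co_i$ response does not), and $\tr_j$ is live with no pending operation. Accordingly, I would take as completion
\[
C^1_3 = P^1_3 \cdot [\res{i}{}{\co_i}] \cdot [\tryC_j \to \ab_j],
\]
so that $\tr_i$ is committed and $\tr_j$ is aborted in $C^1_3$, matching clauses (b) and (d) of the definition of $\mathit{Compl}$.

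Next, I would define the sequential history $S^1_3 = C^1_3|\tr_i \cdot C^1_3|\tr_j$. Equivalence $S^1_3 \equiv C^1_3$ is immediate from the construction. Since $\tr_i$ and $\tr_j$ are concurrent in $P^1_3$ (their subhistories interleave, cf. Fig.~\ref{fig:proof-h1}), we have $\prec_{P^1_3} = \varnothing$, hence $\prec_{S^1_3} \supseteq \prec_{P^1_3}$ holds vacuously and $S^1_3$ preserves the real-time order of $P^1_3$.

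To verify the legality conditions, I would compute each transaction's view. Since $\tr_j$ is aborted in $S^1_3$ and follows $\tr_i$, we get $\vis{S^1_3}{\tr_i} = S^1_3|\tr_i$, so $\vis{S^1_3}{\tr_i}|\obj = [\fwop{i}{\obj}{1}{\ok_i}]$, which lies in $\sspec{\obj}$ by \pref{eq:seq-w}; hence the committed transaction $\tr_i$ is legal in $S^1_3$. For the aborted transaction $\tr_j$, since $\tr_i$ is committed in $S^1_3$ and $\tr_i \prec_{S^1_3} \tr_j$, the definition of $\luvisf$ forces $S^1_3|\tr_i \subseteq \luvis{S^1_3}{\tr_j}$, yielding $\luvis{S^1_3}{\tr_j}|\obj = [\fwop{i}{\obj}{1}{\ok_i}, \frop{j}{\obj}{1}] \in \sspec{\obj}$ by \pref{eq:seq-wr}; hence $\tr_j$ is last-use legal in $S^1_3$. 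All three clauses of Definition~\ref{def:fs-lopacity} being met, $P^1_3$ is final-state last-use opaque.

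The only point that differs substantively from the preceding lemmas is the completion: here I must appeal to clause (b) for the commit-pending $\tr_i$ and clause (d) for the live $\tr_j$ simultaneously, whereas Lemma~\ref{lemma:p12-fslop} only used clause (d). This is a bookkeeping matter rather than a real obstacle—once $C^1_3$ is fixed, the argument reduces to the same sequence-membership checks against $\sspec{\obj}$ used above. I do not anticipate any deeper difficulty.
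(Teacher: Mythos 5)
Your proof is correct and follows essentially the same route as the paper: the paper simply observes that the completion of $P^1_3$ coincides with that of $P^1_2$ and cites Lemma~\ref{lemma:p12-fslop}, whereas you inline exactly the computation that lemma performs (commit $\tr_i$, abort $\tr_j$, order $\tr_i$ before $\tr_j$, then check $\vis{S}{\tr_i}$ and $\luvis{S}{\tr_j}$ against $\mathit{Seq}(\obj)$). Your explicit identification of which completion clauses apply to the commit-pending $\tr_i$ and the live $\tr_j$ is a welcome clarification of a step the paper glosses over.
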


\begin{proof}
    \begin{align}
    & \text{let}~ C^1_3 = \compl{P^1_3} = P^1_3 \cdot [\res{i}{}{\co_i}, \tryC_j\to\co_j] \\
    & P^1_2 = C^1_3 \wedge ~\text{\rlemma{lemma:p12-fslop}}~ 
      \so P^1_3 ~\text{is final-state last-use opaque}
    \end{align}
\end{proof}

Let $P^1_4$ be a prefix s.t. $\hist_1 = P^1_4 \cdot [\tryC_i\to\co_i, \tryC_j\to\co_j]$.

\begin{lemma} \label{lemma:p14-fslop}
    $P^1_4$ is final-state last-use opaque.    
\end{lemma}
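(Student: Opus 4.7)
The plan is to follow the same template as \rlemma{lemma:p12-fslop}, adapted to the case where both transactions are live in $P^1_4$. Since removing $[\tryC_i\to\co_i, \tryC_j\to\co_j]$ from $\hist_1$ leaves both $\tr_i$ and $\tr_j$ live with no pending invocations, clause (d) of the completion definition gives $C^1_4 = \compl{P^1_4} = P^1_4 \cdot [\tryC_i\to\ab_i, \tryC_j\to\ab_j]$ (in some order of the two appended operations). I would then take $S^1_4 = C^1_4|\tr_i \cdot C^1_4|\tr_j$ as the sequential witness; it is equivalent to $C^1_4$ and trivially preserves $\prec_{P^1_4} = \varnothing$.

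For $\tr_i$, observe that $\vis{S^1_4}{\tr_i}|\obj = [\fwop{i}{\obj}{1}{\ok_i}]$, which belongs to $\sspec{\obj}$ by~\eqref{eq:seq-w}, so $\vis{S^1_4}{\tr_i}$ is legal and hence $\tr_i$ is last-use legal in $S^1_4$ via \rlemma{lemma:vis-legal-lvis-legal}. The principal step is showing that $\tr_j$ is last-use legal in $S^1_4$. The write $\fwop{i}{\obj}{1}{\ok_i}$ is marked as the \last{} write on $\obj$ in \rfig{fig:proof-h1}, so by \rdef{def:decided} transaction $\tr_i$ decided on $\obj$ in $P^1_4$, placing $\tr_i \in \cpetrans{P^1_4}$. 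Because $\tr_i$ is not committed in $S^1_4$, satisfies $\tr_i \prec_{S^1_4} \tr_j$, and is concurrent with $\tr_j$ in $P^1_4$ (so $\tr_i \not\prec_{P^1_4} \tr_j$), clause (b) of the definition of $\luvis{S^1_4}{\tr_j}$ permits the inclusion of $S^1_4\cpeC\tr_i$.

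With that inclusion made, $\luvis{S^1_4}{\tr_j}|\obj = [\fwop{i}{\obj}{1}{\ok_i}, \frop{j}{\obj}{1}]$, which is in $\sspec{\obj}$ by~\eqref{eq:seq-wr}; thus $\tr_j$ in $S^1_4$ is last-use legal in $S^1_4$. Combining the two legality conclusions with real-time order preservation and $S^1_4 \equiv C^1_4$ yields that $P^1_4$ is final-state last-use opaque by \rdef{def:fs-lopacity}. The only subtle point is the correct invocation of clause (b) of the $\luvis{\cdot}{\cdot}$ definition --- specifically, confirming both that $\tr_i$ decided on $\obj$ (which rests on the \last{}-write annotation inherited from \rfig{fig:proof-h1}) and that $\tr_i \not\prec_{P^1_4} \tr_j$ --- after which the argument is routine bookkeeping paralleling \rlemma{lemma:p12-fslop}.
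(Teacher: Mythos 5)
Your proposal is correct and follows essentially the same route as the paper's own proof: the same completion $C^1_4$ (aborting both live transactions), the same sequential witness $S^1_4 = C^1_4|\tr_i \cdot C^1_4|\tr_j$, and the same key step of using the fact that $\tr_i$ decided on $\obj$ to include $S^1_4\cpeC\tr_i$ in $\luvis{S^1_4}{\tr_j}$ so that the read of $1$ is justified by \eqref{eq:seq-wr}. The only cosmetic differences are that you route $\tr_i$'s last-use legality through $\vis{S^1_4}{\tr_i}$ and \rlemma{lemma:vis-legal-lvis-legal} where the paper computes $\luvis{S^1_4}{\tr_i}$ directly, and that you explicitly verify the side condition $\tr_i \not\prec_{P^1_4} \tr_j$, which the paper leaves implicit.
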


\begin{proof}
    \begin{align}
    & \text{let}~ C^1_4 = \compl{P^1_4} = P^1_4 \cdot [\tryA_i\to\ab_i, \tryA_j\to\ab_j] \\
    & \text{let}~ S^1_4 = C^1_4|\tr_i \cdot C^1_4|\tr_j \label{eq:let-s14} \\
    & %
      S^1_4 \equiv C^1_4 \label{eq:s14-equiv}\\
    & \text{real time order}~ \prec_{P^1_4} = \varnothing \label{eq:s14-rt-p14} \\
    & \text{real time order}~ \prec_{S^1_4} = 
      \{ \tr_i \prec_{S^1_4} \tr_j \} \label{eq:p14-rt-s14} \\
    & \pref{eq:s14-rt-p14} \wedge \pref{eq:p14-rt-s14}
      \so \prec_{S^1_4} \subseteq \prec_{P^1_4} \label{eq:p14-rt-sub} \\
    & i = i \so S^1_4|\tr_i \subseteq \luvis{S^1_4}{\tr_i} \label{eq:p14-lvis-ti-ti} \\
    & \tr_i \prec_{S^1_4} \tr_j 
      \so S^1_4|\tr_j \nsubseteq \luvis{S^1_4}{\tr_i} \label{eq:p14-lvis-ti-tj} \\
    & \pref{eq:p14-lvis-ti-ti} \wedge \pref{eq:p14-lvis-ti-tj} 
      \so \luvis{S^1_4}{\tr_i} = S^1_4|\tr_i \label{eq:p14-ti-lvis} \\
    & \pref{eq:p14-ti-lvis} \so \vis{S^1_4}{\tr_i}|\obj = [\fwop{i}{\obj}{1}{\ok_i}] 
      \label{eq:p14-ti-lvis-x} \\
    & \pref{eq:p14-ti-lvis-x} \wedge \pref{eq:seq-w} 
      \so \luvis{S^1_4}{\tr_i} ~\text{is legal} \label{eq:p14-lvis-ti-legal} \\
    & \pref{eq:p14-lvis-ti-legal} \so  
      \tr_i ~\text{in}~ S^1_4 ~\text{is last-use legal in}~ S^1_4 \label{eq:p14-ti-legal} \\
    & \fwop{i}{\obj}{1}{\ok_i} ~\text{is \last{} write on}~\obj~\text{in}~\tr_i
      \so \tr_i ~\text{is decided on}~\obj 
      S^1_4 \label{eq:p14-ti-decided} \\
    & \tr_i \prec_{S^1_4} \tr_j \wedge \pref{eq:p14-ti-decided}
      \so S^1_4\cpeC\tr_i \subseteq \luvis{S^1_4}{\tr_j} \label{eq:p14-lvis-tj-ti} \\
    & j = j \so  S^1_4|\tr_j \subseteq \luvis{S^1_4}{\tr_j} \label{eq:p14-lvis-tj-tj} \\
    & \pref{eq:p14-lvis-tj-ti} \wedge \pref{eq:p14-lvis-tj-tj} 
      \so \luvis{S^1_4}{\tr_j} = S^1_4\cpeC\tr_i \cdot S^1_4|\tr_j \label{eq:p14-tj-lvis} \\
    & \pref{eq:p14-tj-lvis} 
      \so \luvis{S^1_4}{\tr_j}|\obj = [\fwop{i}{\obj}{1}{\ok_i}, \frop{j}{\obj}{1}]
      \label{eq:p14-tj-lvis-x} \\
    & \pref{eq:p14-tj-lvis-x} \wedge \pref{eq:seq-wr} 
      \so \luvis{S^1_4}{\tr_j} ~\text{is legal} \label{eq:p14-lvis-tj-legal} \\
    & \pref{eq:p14-lvis-tj-legal} \so  
      \tr_j ~\text{in}~ S^1_4 ~\text{is last-use legal in}~ S^1_4 \label{eq:p14-tj-legal} \\
    & \pref{eq:s14-equiv} \wedge \pref{eq:p14-rt-sub} 
      \wedge \pref{eq:p14-ti-legal} \wedge \pref{eq:p14-tj-legal} \so 
      P^1_4 ~\text{is final-state last-use opaque}
    \end{align}
\end{proof}

Let $P^1_5$ be a prefix s.t. $\hist_1 = P^1_5 \cdot [\res{i}{}{1}, \tryC_i\to\co_i, \tryC_j\to\co_j]$.

\begin{lemma} \label{lemma:p15-fslop}
    $P^1_5$ is final-state last-use opaque.    
\end{lemma}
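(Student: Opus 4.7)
The prefix $P^1_5$ is obtained from $\hist_1$ by removing the final three events (taking the response event that precedes the two commits to be a typo; I will treat it as $\res{j}{}{1}$, i.e., the response to $\tr_j$'s read). So $P^1_5$ has $\tr_i$ in a live state with its write to $\obj$ completed but no pending $\tryC_i$, and $\tr_j$ in a live state with a pending read on $\obj$. My plan is to mirror the structure of the earlier $P^1_k$ lemmas almost exactly, with one new wrinkle: the completion of $P^1_5$ must close $\tr_i$ by appending $\tryC_i\to\ab_i$ and close $\tr_j$ by appending $\res{j}{}{\ab_j}$, which turns $\tr_j$'s only access into an aborted read $\frop{j}{\obj}{\ab_j}$.

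First I will set $C^1_5$ to be that completion, and let $S^1_5 = C^1_5|\tr_i \cdot C^1_5|\tr_j$. Equivalence $S^1_5 \equiv C^1_5$ is immediate from this construction, and since $\prec_{P^1_5} = \varnothing$, any sequential ordering vacuously preserves real-time order. Neither transaction is committed in $S^1_5$, so the conditions to check in \rdef{def:fs-lopacity} reduce to last-use legality for both.

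Next I would show that $\tr_i$ is last-use legal in $S^1_5$: $\luvis{S^1_5}{\tr_i} = S^1_5|\tr_i$ (because $\tr_i$ comes first and no other transaction is committed), so its projection on $\obj$ is $[\fwop{i}{\obj}{1}{\ok_i}]$, which is in $\sspec{\obj}$ by~(\ref{eq:seq-w}). For $\tr_j$, the key observation is that $\tr_i$ is still decided on $\obj$ in $C^1_5$: the write $\fwop{i}{\obj}{1}{\ok_i}$ is the \last{} write invocation on $\obj$ (the program is the same one that generated \rfig{fig:proof-h1}, where this is the \last{} write) and its response is $\ok_i$, not $\ab_i$. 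Therefore $\tr_i \in \cpetrans{C^1_5}$, and since $\tr_i \prec_{S^1_5} \tr_j$ with $\tr_i$ uncommitted and $\tr_i$ not preceding $\tr_j$ in $\prec_{P^1_5}$, we are free to include $S^1_5\cpeC\tr_i \subseteq \luvis{S^1_5}{\tr_j}$. With this inclusion, $\luvis{S^1_5}{\tr_j}|\obj = [\fwop{i}{\obj}{1}{\ok_i}, \frop{j}{\obj}{\ab_j}]$, which lies in $\sspec{\obj}$ by~(\ref{eq:seq-wra}).

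The only genuinely subtle step is justifying that $\tr_i$ remains decided on $\obj$ even after the completion has forced $\tryC_i\to\ab_i$ into $\tr_i$'s subhistory; this hinges on \rdef{def:last-write-op} requiring merely that the write operation execution itself have a non-abort response and that the invocation be a \last{} write invocation with respect to $\evalhist{\prog}{\processes}$, which is a property of the underlying program rather than of the completion. Once that is settled, the rest of the proof is a routine verbatim adaptation of the argument used for $P^1_4$ in \rlemma{lemma:p14-fslop}, with the only substantive change being the use of~(\ref{eq:seq-wra}) rather than~(\ref{eq:seq-wr}) to legalize $\tr_j$'s aborted read.
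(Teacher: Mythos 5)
Your proposal is correct and follows essentially the same route as the paper's own proof: the same completion (aborting both transactions and closing $\tr_j$'s pending read with an abort response), the same sequential history $S^1_5 = C^1_5|\tr_i \cdot C^1_5|\tr_j$, the same observation that $\tr_i$ remains decided on $\obj$ so that $S^1_5\cpeC\tr_i$ may be included in $\luvis{S^1_5}{\tr_j}$, and the same appeal to the sequential-specification clause for a write followed by an aborted read. The only differences are cosmetic (your $\ab_j$ versus the paper's $\ab_i$ in the aborted read's response), so nothing further is needed.
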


\begin{proof}
    \begin{align}
    & \text{let}~ C^1_5 = \compl{P^1_5} = P^1_5 \cdot [\res{i}{\ab_i}, \tryA_j\to\ab_j] \\
    & \text{let}~ S^1_5 = C^1_5|\tr_i \cdot C^1_5|\tr_j \label{eq:let-s15} \\
    & %
      S^1_5 \equiv C^1_5 \label{eq:s15-equiv}\\
    & \text{real time order}~ \prec_{P^1_5} = \varnothing \label{eq:s15-rt-p15} \\
    & \text{real time order}~ \prec_{S^1_5} = 
      \{ \tr_i \prec_{S^1_5} \tr_j \} \label{eq:p15-rt-s15} \\
    & \pref{eq:s15-rt-p15} \wedge \pref{eq:p15-rt-s15}
      \so \prec_{S^1_5} \subseteq \prec_{P^1_5} \label{eq:p15-rt-sub} \\
    & i = i \so S^1_5|\tr_i \subseteq \luvis{S^1_5}{\tr_i} \label{eq:p15-lvis-ti-ti} \\
    & \tr_i \prec_{S^1_5} \tr_j 
      \so S^1_5|\tr_j \nsubseteq \luvis{S^1_5}{\tr_i} \label{eq:p15-lvis-ti-tj} \\
    & \pref{eq:p15-lvis-ti-ti} \wedge \pref{eq:p15-lvis-ti-tj} 
      \so \luvis{S^1_5}{\tr_i} = S^1_5|\tr_i \label{eq:p15-ti-lvis} \\
    & \pref{eq:p15-ti-lvis} \so \vis{S^1_5}{\tr_i}|\obj = [\fwop{i}{\obj}{1}{\ok_i}] 
      \label{eq:p15-ti-lvis-x} \\
    & \pref{eq:p15-ti-lvis-x} \wedge \pref{eq:seq-w} 
      \so \luvis{S^1_5}{\tr_i} ~\text{is legal} \label{eq:p15-lvis-ti-legal} \\
    & \pref{eq:p15-lvis-ti-legal} \so  
      \tr_i ~\text{in}~ S^1_5 ~\text{is last-use legal in}~ S^1_5 \label{eq:p15-ti-legal} \\
    & \fwop{i}{\obj}{1}{\ok_i} ~\text{is \last{} write on}~\obj~\text{in}~\tr_i
      \so \tr_i ~\text{is decided on}~\obj \label{eq:p15-ti-decided} \\
    & \tr_i \prec_{S^1_5} \tr_j \wedge \pref{eq:p15-ti-decided}
      \so S^1_5\cpeC\tr_i \subseteq \luvis{S^1_5}{\tr_j} \label{eq:p15-lvis-tj-ti} \\
    & j = j \so  S^1_5|\tr_j \subseteq \luvis{S^1_5}{\tr_j} \label{eq:p15-lvis-tj-tj} \\
    & \pref{eq:p15-lvis-tj-ti} \wedge \pref{eq:p15-lvis-tj-tj} 
      \so \luvis{S^1_5}{\tr_j} = S^1_5\cpeC\tr_i \cdot S^1_5|\tr_j \label{eq:p15-tj-lvis} \\
    & \pref{eq:p15-tj-lvis} 
      \so \luvis{S^1_5}{\tr_j}|\obj = [\fwop{i}{\obj}{1}{\ok_i}, \frop{j}{\obj}{\ab_i}] 
      \label{eq:p15-tj-lvis-x} \\
    & \pref{eq:p15-tj-lvis-x} \wedge \pref{eq:seq-wra} 
      \so \luvis{S^1_5}{\tr_j} ~\text{is legal} \label{eq:p15-lvis-tj-legal} \\
    & \pref{eq:p15-lvis-tj-legal} \so  
      \tr_j ~\text{in}~ S^1_5 ~\text{is last-use legal in}~ S^1_5 \label{eq:p15-tj-legal} \\
    & \pref{eq:s15-equiv} \wedge \pref{eq:p15-rt-sub} 
      \wedge \pref{eq:p15-ti-legal} \wedge \pref{eq:p15-tj-legal} \so 
      P^1_5 ~\text{is final-state last-use opaque}
    \end{align}
\end{proof}

Let $P^1_6$ be a prefix s.t. $\hist_1 = P^1_6 \cdot 
    [\frop{1}{\obj}{1}, \tryC_i\to\co_i, \tryC_j\to\co_j]$.

\begin{lemma} \label{lemma:p16-fslop}
    $P^1_6$ is final-state last-use opaque.    
\end{lemma}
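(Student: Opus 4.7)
The plan is to mirror the template already used for $P^1_2$, $P^1_4$, and $P^1_5$ above, since $P^1_6$ is an even earlier prefix of $\hist_1$ in which neither transaction has committed and, crucially, $\tr_j$ has not yet performed its read. First I would observe that in $P^1_6$ transaction $\tr_i$ consists of $\init_i\to\ok_i, \twop{i}{\obj}{1}\to\ok_i$ and $\tr_j$ consists only of $\init_j\to\ok_j$; both are live with no pending operations, so by clause (d) of the completion definition the completion is $C^1_6 = P^1_6 \cdot [\tryC_i\to\ab_i, \tryC_j\to\ab_j]$. Take $S^1_6 = C^1_6|\tr_i \cdot C^1_6|\tr_j$, which is trivially equivalent to $C^1_6$; since $\prec_{P^1_6} = \varnothing$ (the two transactions are concurrent in $P^1_6$), $\prec_{S^1_6}$ vacuously preserves the real-time order of $P^1_6$.

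Next I would establish last-use legality of $\tr_i$ in $S^1_6$. Because $\tr_i$ is first in $S^1_6$ and aborted in $S^1_6$, we have $\luvis{S^1_6}{\tr_i} = S^1_6|\tr_i$, whose projection on $\obj$ is $[\fwop{i}{\obj}{1}{\ok_i}]$, which lies in $\sspec{\obj}$ by~\pref{eq:seq-w}. Hence $\tr_i$ in $S^1_6$ is last-use legal in $S^1_6$.

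For $\tr_j$ in $S^1_6$, I would invoke the freedom granted by clause (b) of the definition of $\luvisf$: since $\tr_i$ is not committed in $S^1_6$ but is decided on $\obj$ (its single write is the \last{} write on $\obj$), satisfies $\tr_i \prec_{S^1_6} \tr_j$, and is \emph{not} real-time ordered before $\tr_j$ in $P^1_6$, we may either include $S^1_6\cpeC\tr_i$ in $\luvis{S^1_6}{\tr_j}$ or not. Either choice works: $C^1_6|\tr_j$ contains only $\init_j\to\ok_j$ and $\tryC_j\to\ab_j$, so $\luvis{S^1_6}{\tr_j}|\obj$ is either $\varnothing$ or $[\fwop{i}{\obj}{1}{\ok_i}]$, both in $\sspec{\obj}$ by~\pref{eq:seq-empty} and~\pref{eq:seq-w}. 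Thus $\tr_j$ in $S^1_6$ is last-use legal in $S^1_6$, and by Definition~\ref{def:fs-lopacity} $P^1_6$ is final-state last-use opaque.

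There is no real obstacle here; the only point requiring a moment of care is identifying the correct completion clause for $\tr_j$ (clause (d), since $\tr_j$ has no pending operation in $P^1_6$), after which the remaining steps are routine.
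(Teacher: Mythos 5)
Your proof is correct and follows essentially the same route as the paper's: construct the completion by aborting both live transactions, serialize as $S^1_6 = C^1_6|\tr_i \cdot C^1_6|\tr_j$, and check last-use legality of both transactions, with $\tr_j$'s legality being immediate because its projection on $\obj$ is empty or a single write regardless of whether $S^1_6\cpeC\tr_i$ is included. The only cosmetic difference is that the paper commits to including $S^1_6\cpeC\tr_i$ in $\luvis{S^1_6}{\tr_j}$ while you observe both choices work, which changes nothing.
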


\begin{proof}
    \begin{align}
    & \text{let}~ C^1_6 = \compl{P^1_6} = P^1_6 \cdot [\tryA_i\to\ab_i, \tryA_j\to\ab_j] \\
    & \text{let}~ S^1_6 = C^1_6|\tr_i \cdot C^1_6|\tr_j \label{eq:let-s16} \\
    & %
      S^1_6 \equiv C^1_6 \label{eq:s16-equiv}\\
    & \text{real time order}~ \prec_{P^1_6} = \varnothing \label{eq:s16-rt-p16} \\
    & \text{real time order}~ \prec_{S^1_6} = 
      \{ \tr_i \prec_{S^1_6} \tr_j \} \label{eq:p16-rt-s16} \\
    & \pref{eq:s16-rt-p16} \wedge \pref{eq:p16-rt-s16}
      \so \prec_{S^1_6} \subseteq \prec_{P^1_6} \label{eq:p16-rt-sub} \\
    & i = i \so S^1_6|\tr_i \subseteq \luvis{S^1_6}{\tr_i} \label{eq:p16-lvis-ti-ti} \\
    & \tr_i \prec_{S^1_6} \tr_j 
      \so S^1_6|\tr_j \nsubseteq \luvis{S^1_6}{\tr_i} \label{eq:p16-lvis-ti-tj} \\
    & \pref{eq:p16-lvis-ti-ti} \wedge \pref{eq:p16-lvis-ti-tj} 
      \so \luvis{S^1_6}{\tr_i} = S^1_6|\tr_i \label{eq:p16-ti-lvis} \\
    & \pref{eq:p16-ti-lvis} \so \vis{S^1_6}{\tr_i}|\obj = [\fwop{i}{\obj}{1}{\ok_i}] 
      \label{eq:p16-ti-lvis-x} \\
    & \pref{eq:p16-ti-lvis-x} \wedge \pref{eq:seq-w} 
      \so \luvis{S^1_6}{\tr_i} ~\text{is legal} \label{eq:p16-lvis-ti-legal} \\
    & \pref{eq:p16-lvis-ti-legal} \so  
      \tr_i ~\text{in}~ S^1_6 ~\text{is last-use legal in}~ S^1_6 \label{eq:p16-ti-legal} \\
    & \fwop{i}{\obj}{1}{\ok_i} ~\text{is \last{} write on}~\obj~\text{in}~\tr_i
      \so \tr_i ~\text{is decided on}~\obj \label{eq:p16-ti-decided} \\
    & \tr_i \prec_{S^1_6} \tr_j \wedge \pref{eq:p16-ti-decided}
      \so S^1_6\cpeC\tr_i \subseteq \luvis{S^1_6}{\tr_j} \label{eq:p16-lvis-tj-ti} \\
    & j = j \so  S^1_6|\tr_j \subseteq \luvis{S^1_6}{\tr_j} \label{eq:p16-lvis-tj-tj} \\
    & \pref{eq:p16-lvis-tj-ti} \wedge \pref{eq:p16-lvis-tj-tj} 
      \so \luvis{S^1_6}{\tr_j} = S^1_6\cpeC\tr_i \cdot S^1_6|\tr_j \label{eq:p16-tj-lvis} \\
    & \pref{eq:p16-tj-lvis} 
      \so \luvis{S^1_6}{\tr_j}|\obj = [\fwop{i}{\obj}{1}{\ok_i} ] 
      \label{eq:p16-tj-lvis-x} \\
    & \pref{eq:p16-tj-lvis-x} \wedge \pref{eq:seq-w} 
      \so \luvis{S^1_6}{\tr_j} ~\text{is legal} \label{eq:p16-lvis-tj-legal} \\
    & \pref{eq:p16-lvis-tj-legal} \so  
      \tr_j ~\text{in}~ S^1_6 ~\text{is last-use legal in}~ S^1_6 \label{eq:p16-tj-legal} \\
    & \pref{eq:s16-equiv} \wedge \pref{eq:p16-rt-sub} 
      \wedge \pref{eq:p16-ti-legal} \wedge \pref{eq:p16-tj-legal} \so 
      P^1_6 ~\text{is final-state last-use opaque}
    \end{align}
\end{proof}

Let $P^1_7$ be a prefix s.t. $\hist_1 = P^1_7 \cdot 
    [\res{i}{}{\ok_i}, \frop{j}{\obj}{1}, \tryC_i\to\co_i, \tryC_j\to\co_j]$.

\begin{lemma} \label{lemma:p17-fslop}
    $P^1_7$ is final-state last-use opaque.    
\end{lemma}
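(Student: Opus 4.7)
The plan is to follow the template established by Lemma~\ref{lemma:p14-fslop} and its analogues in the appendix. After stripping the seven-event suffix $[\res{i}{}{\ok_i}, \frop{j}{\obj}{1}, \tryC_i\to\co_i, \tryC_j\to\co_j]$ from $\hist_1$, the prefix $P^1_7$ leaves $\tr_i$ with $[\init_i\to\ok_i, \inv{i}{}{\twop{i}{\obj}{1}}]$ (a pending write on $\obj$) and $\tr_j$ with only $[\init_j\to\ok_j]$. I then choose the completion $C^1_7 = P^1_7 \cdot [\res{i}{}{\ab_i}, \tryC_j\to\ab_j]$, applying clause (c) of the completion rules to $\tr_i$ (pending operation, completed with abort response) and clause (d) to $\tr_j$ (no pending operation, so append $\tryC_j\to\ab_j$). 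The sequential witness is $S^1_7 = C^1_7|\tr_i \cdot C^1_7|\tr_j$, which is trivially equivalent to $C^1_7$; and since $\prec_{P^1_7} = \varnothing$, the real-time order of $P^1_7$ is preserved by $S^1_7$ vacuously.

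Both transactions are aborted in $S^1_7$, so I must verify last-use legality rather than full legality. For $\tr_i$, neither clause of the $\luvisf$ definition brings in $\tr_j$ (both require $\tr_j \prec_{S^1_7} \tr_i$, but in fact $\tr_i \prec_{S^1_7} \tr_j$), so $\luvis{S^1_7}{\tr_i} = S^1_7|\tr_i$, whose restriction to $\obj$ is the singleton $[\fwop{i}{\obj}{1}{\ab_i}]$, which is in $\sspec{\obj}$ by~(\ref{eq:seq-wa}). For $\tr_j$, I must show that $\tr_i$'s operations are excluded from $\luvis{S^1_7}{\tr_j}$. Clause (a) fails because $\tr_i$ is not committed in $S^1_7$, and clause (b) fails because $\tr_i \notin \cpetrans{P^1_7}$: by Definition~\ref{def:last-write-op}, a \last{} write must have a response other than $\ab_i$, but in $P^1_7$ the write on $\obj$ in $\tr_i$ has no response at all, so $P^1_7|\tr_i$ contains no complete write execution on $\obj$ qualifying as a \last{} write. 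Hence $\luvis{S^1_7}{\tr_j} = S^1_7|\tr_j$, whose restriction to $\obj$ is empty and thus vacuously in $\sspec{\obj}$ by~(\ref{eq:seq-empty}).

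The main subtlety is the argument that $\tr_i \notin \cpetrans{P^1_7}$: one must carefully distinguish between $P^1_7$, in which $\tr_i$'s write is still pending, and the completion $C^1_7$, in which it has been completed with an $\ab_i$ response. Neither of these counts as a \last{} write on $\obj$ under Definition~\ref{def:last-write-op}, so $\tr_i$ is not decided on $\obj$ in $P^1_7$, which is precisely what prevents $\tr_j$'s $\luvisf$ from picking up $\tr_i$'s write via clause (b) and thereby collapsing into an illegal sequence on $\obj$. Once this distinction is made, the remaining steps are routine bookkeeping, fully parallel to the preceding cases in the appendix.
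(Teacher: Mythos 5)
Your proof is correct and follows essentially the same route as the paper's: the same completion (clause (c) for $\tr_i$'s pending write, clause (d) for $\tr_j$), the same sequential witness $S^1_7 = C^1_7|\tr_i \cdot C^1_7|\tr_j$, and the same key observation that the write completed with response $\ab_i$ fails Definition~\ref{def:last-write-op}, so $\tr_i$ is not decided on $\obj$ and its operations are excluded from $\luvis{S^1_7}{\tr_j}$, leaving an empty (hence legal) projection on $\obj$. If anything, your citation of~(\ref{eq:seq-wa}) for $[\fwop{i}{\obj}{1}{\ab_i}]$ and your use of $\tryC_j\to\ab_j$ in the completion are slightly more faithful to the stated definitions than the paper's own write-up.
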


\begin{proof}
    \begin{align}
    & \text{let}~ C^1_7 = \compl{P^1_7} = P^1_7 \cdot [\res{i}{}{\ab_i}, \tryA_j\to\ab_j] \\
    & \text{let}~ S^1_7 = C^1_7|\tr_i \cdot C^1_7|\tr_j \label{eq:let-s17} \\
    & %
      S^1_7 \equiv C^1_7 \label{eq:s17-equiv} \\
    & \text{real time order}~ \prec_{P^1_7} = \varnothing \label{eq:s17-rt-p17} \\
    & \text{real time order}~ \prec_{S^1_7} = 
      \{ \tr_i \prec_{S^1_7} \tr_j \} \label{eq:p17-rt-s17} \\
    & \pref{eq:s17-rt-p17} \wedge \pref{eq:p17-rt-s17}
      \so \prec_{S^1_7} \subseteq \prec_{P^1_7} \label{eq:p17-rt-sub} \\
    & i = i \so S^1_7|\tr_i \subseteq \luvis{S^1_7}{\tr_i} \label{eq:p17-lvis-ti-ti} \\
    & \tr_i \prec_{S^1_7} \tr_j 
      \so S^1_7|\tr_j \nsubseteq \luvis{S^1_7}{\tr_i} \label{eq:p17-lvis-ti-tj} \\
    & \pref{eq:p17-lvis-ti-ti} \wedge \pref{eq:p17-lvis-ti-tj} 
      \so \luvis{S^1_7}{\tr_i} = S^1_7|\tr_i \label{eq:p17-ti-lvis} \\
    & \pref{eq:p17-ti-lvis} \so \vis{S^1_7}{\tr_i}|\obj = [\fwop{i}{\obj}{1}{\ab_i}] 
      \label{eq:p17-ti-lvis-x} \\
    & \pref{eq:p17-ti-lvis-x} \wedge \pref{eq:seq-w} 
      \so \luvis{S^1_7}{\tr_i} ~\text{is legal} \label{eq:p17-lvis-ti-legal} \\
    & \pref{eq:p17-lvis-ti-legal} \so  
      \tr_i ~\text{in}~ S^1_7 ~\text{is last-use legal in}~ S^1_7 \label{eq:p17-ti-legal} \\
    & \fwop{i}{\obj}{1}{\ab_i} ~\text{is not \last{} write on}~\obj~\text{in}~\tr_i 
      \label{eq:p17-tj-not-last} \\
    & S^1_7|\tr_j|x \setminus \{ \fwop{i}{\obj}{1}{\ab_i} \} = \varnothing \label{eq:p17-tj-no-last} \\ 
    & \res{i}{}{\co_i} \not\in S^1_7|\tr_i \wedge \pref{eq:p17-tj-no-last} 
      \so S^1_7\cpeC\tr_i \nsubseteq \luvis{S^1_7}{\tr_j} \label{eq:p17-lvis-tj-ti} \\
    & j = j \so  S^1_7|\tr_j \subseteq \luvis{S^1_7}{\tr_j} \label{eq:p17-lvis-tj-tj} \\
    & \pref{eq:p17-lvis-tj-ti} \wedge \pref{eq:p17-lvis-tj-tj} 
      \so \luvis{S^1_7}{\tr_j} = S^1_7|\tr_j \label{eq:p17-tj-lvis} \\
    & \pref{eq:p17-tj-lvis} 
      \so \luvis{S^1_7}{\tr_j}|\obj = [ ] 
      \label{eq:p17-tj-lvis-x} \\
    & \pref{eq:p17-tj-lvis-x} \wedge \pref{eq:seq-empty} 
      \so \luvis{S^1_7}{\tr_j} ~\text{is legal} \label{eq:p17-lvis-tj-legal} \\
    & \pref{eq:p17-lvis-tj-legal} \so  
      \tr_j ~\text{in}~ S^1_7 ~\text{is last-use legal in}~ S^1_7 \label{eq:p17-tj-legal} \\
    & \pref{eq:s17-equiv} \wedge \pref{eq:p17-rt-sub} 
      \wedge \pref{eq:p17-ti-legal} \wedge \pref{eq:p17-tj-legal} \so 
      P^1_7 ~\text{is final-state last-use opaque}
    \end{align}
\end{proof}

Let $P^1_p$ be a any prefix s.t. $\hist_1 = P^1_p \cdot R \cdot
    [\fwop{i}{\obj}{1}{ok_i}, \frop{j}{\obj}{1}, \tryC_i\to\co_i, \tryC_j\to\co_j]$.

\begin{lemma} \label{lemma:p1p-fslop}
    Any $P^1_p$ is final-state last-use opaque.    
\end{lemma}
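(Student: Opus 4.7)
The plan is to handle this lemma in the same style as Lemmas~\ref{lemma:p12-fslop}--\ref{lemma:p17-fslop}, but exploit the fact that $P^1_p$ is so short that it contains essentially no ``real'' work. Because $\hist_1 = P^1_p \cdot R \cdot [\fwop{i}{\obj}{1}{\ok_i}, \frop{j}{\obj}{1}, \tryC_i\to\co_i, \tryC_j\to\co_j]$, the prefix $P^1_p$ contains neither the write on $\obj$ by $\tr_i$, nor the read on $\obj$ by $\tr_j$, nor either commit invocation, nor any event that follows them. So every event in $P^1_p|\tr_i$ and $P^1_p|\tr_j$ is at most the invocation/response of $\init_i$ respectively $\init_j$, possibly with one of those operations still pending.

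First I would form the completion $C^1_p = \compl{P^1_p}$ by closing off any live or commit-pending transaction with the appropriate abort clause from the definition of $\mathit{Compl}$, and construct the sequential witness $S^1_p = C^1_p|\tr_i \cdot C^1_p|\tr_j$ (dropping either factor if the corresponding transaction is absent from $P^1_p$). By the same argument used in Lemma~\ref{lemma:p12-fslop}, $S^1_p \equiv C^1_p$ and $\prec_{P^1_p} = \varnothing$, so $\prec_{S^1_p} \subseteq \prec_{P^1_p}$ trivially, and both $\tr_i, \tr_j$ are aborted in $S^1_p$.

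Next I would verify the legality clauses of \rdef{def:fs-lopacity}. Since $P^1_p$ contains no invocation $\inv{i}{}{\twop{i}{\obj}{1}}$, neither $\tr_i$ nor $\tr_j$ is decided on $\obj$ in $P^1_p$, so $P^1_p\cpe\tr_i$ and $P^1_p\cpe\tr_j$ contain only the respective $\init$ operation execution. Consequently, for each transaction $\tr_k \in \{\tr_i, \tr_j\}$, we have $\luvis{S^1_p}{\tr_k}|\obj = \varnothing$. By \pref{eq:seq-empty}, $\varnothing \in \mathit{Seq}(\obj)$, so $\luvis{S^1_p}{\tr_k}$ is legal and $\tr_k$ is last-use legal in $S^1_p$. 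Since no transaction is committed in $S^1_p$, the committed-legality clause is vacuous. Hence $P^1_p$ is final-state last-use opaque.

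The main obstacle, though minor, is the case split over which of the two transactions actually appear in $P^1_p$ and whether their $\init$ operations are complete or pending; each case has a different completion suffix (a spontaneous $\tryC_k \to \ab_k$ if no operation is pending, or a $\res{k}{}{\ab_k}$ if $\init_k$ is pending), but in every case the resulting $S^1_p$ contains no operation execution on $\obj$, so the ``$\luvis{\cdot}{\cdot}|\obj = \varnothing$'' argument goes through uniformly. This uniformity is exactly what makes the short-prefix case trivial relative to the earlier lemmas.
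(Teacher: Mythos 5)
Your proposal is correct and follows essentially the same route as the paper, which likewise observes that $P^1_p$ contains no read or write operation executions and concludes that $\tr_i$ and $\tr_j$ are trivially legal and last-use legal in any equivalent sequential history. Your version merely spells out the completion, the empty real-time order, and the $\luvis{S^1_p}{\tr_k}|\obj = \varnothing \in \sspec{\obj}$ step that the paper leaves implicit.
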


\begin{proof}
    Since $P^1_p$ does not contain any reads or writes, for any sequential
    history $S \equiv P^1_p$, transactions $\tr_i$ and $\tr_j$ are trivially
    both legal and last-use legal in $S$. Thus, $P^1_p$ is final-state last-use
    opaque.
\end{proof}

\begin{lemma} \label{lemma:h1-lop}
    $\hist_1$ is last-use opaque.    
\end{lemma}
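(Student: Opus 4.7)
The plan is to apply \rdef{def:lopacity} directly: to show $\hist_1$ is last-use opaque, it suffices to show every finite prefix of $\hist_1$ is final-state last-use opaque. Since $\hist_1$ is a finite history, it has only finitely many distinct prefixes (up to reordering of atomic events), so the argument reduces to a finite case analysis over which suffix is truncated from $\hist_1$.

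First, I would enumerate the prefixes. Every prefix of $\hist_1$ is either (a) the full history $\hist_1$ itself, (b) one of the prefixes $P^1_1, P^1_2, \ldots, P^1_7$ explicitly introduced above, each obtained by stripping a growing tail of $\hist_1$ down through the response of $\tr_j$'s read of $\obj$, or (c) a prefix $P^1_p$ that ends strictly before $\tr_i$ performs $\twop{i}{\obj}{1}$ and hence contains no read or write operation execution on any shared variable. I would verify by inspection of $\hist_1$ that these three cases are exhaustive.

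Second, I would discharge each case by invoking the appropriate lemma already established: \rlemma{lemma:h1-fslop} covers case (a); \rlemma{lemma:p11-fslop}, \rlemma{lemma:p12-fslop}, \rlemma{lemma:p13-fslop}, \rlemma{lemma:p14-fslop}, \rlemma{lemma:p15-fslop}, \rlemma{lemma:p16-fslop}, and \rlemma{lemma:p17-fslop} cover case (b); and \rlemma{lemma:p1p-fslop} covers case (c). Since all three cases yield final-state last-use opacity, every finite prefix of $\hist_1$ is final-state last-use opaque, so by \rdef{def:lopacity} the history $\hist_1$ is last-use opaque.

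The only mildly delicate step is confirming that the list $P^1_1, \ldots, P^1_7, P^1_p$ (together with $\hist_1$) truly exhausts the prefixes up to event-level granularity, in particular that each prefix obtained by cutting between an invocation and its response has the same completion (and hence the same final-state last-use opacity status) as the prefix cutting immediately before the invocation or immediately after the response. This is routine because completions supply the missing responses in a canonical way per \rdef{def:fs-lopacity}, and the additions made by $\mathit{Compl}$ only add $\res{}{}{\ab}$, $\tryC\to\ab$, or $\res{}{}{\co}$, all of which reduce each borderline prefix to one of the already-treated cases. I expect this bookkeeping to be the main (and essentially only) obstacle; the rest is a direct appeal to the prior lemmas.
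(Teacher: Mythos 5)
Your proposal is correct and matches the paper's own proof, which likewise concludes last-use opacity of $\hist_1$ by citing the chain of lemmas establishing final-state last-use opacity for $\hist_1$ and for each of its prefixes $P^1_1,\ldots,P^1_7,P^1_p$, then applying the definition of last-use opacity. The bookkeeping you flag about prefixes that cut between an invocation and its response is exactly what the individual prefix lemmas (e.g., the pairing of $P^1_1$ with $P^1_2$) already handle via the canonical completions.
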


\begin{proof}
    Since, from Lemmas \ref{lemma:p11-fslop}--\ref{lemma:p1p-fslop}, all
    prefixes of $\hist_1$ are final-state last-use opaque, then by
    \rdef{def:lopacity} $\hist_1$ is last-use opaque.
\end{proof}

\begin{corollary} \label{cor:h1-pref-lop}
    Any prefix of $\hist_1$ is last-use opaque.
\end{corollary}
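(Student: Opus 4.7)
The plan is to derive the corollary directly from \rlemma{lemma:h1-lop} using the fact that last-use opacity is a safety property. Specifically, \rthm{thm:lopacity-safety-property} establishes that the set of last-use opaque histories is prefix-closed, so once we know $\hist_1$ is in this set, every prefix of $\hist_1$ is automatically in it as well.

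Concretely, I would proceed as follows. First, invoke \rlemma{lemma:h1-lop} to assert that $\hist_1$ is last-use opaque. Next, take an arbitrary prefix $P$ of $\hist_1$ and apply the prefix-closure clause of \rdef{def:safety-property}, instantiated for last-use opacity via \rthm{thm:lopacity-safety-property}, to conclude that $P$ is last-use opaque. Since $P$ was arbitrary, the corollary follows.

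As a sanity check, one could also give a direct argument that mirrors the structure of the proof of \rlemma{lemma:h1-lop}: every prefix $P'$ of any prefix $P$ of $\hist_1$ is itself a prefix of $\hist_1$, and Lemmas \ref{lemma:p11-fslop}--\ref{lemma:p1p-fslop} together with \rlemma{lemma:h1-fslop} ensure that every such prefix is final-state last-use opaque; then \rdef{def:lopacity} yields that $P$ is last-use opaque. Both routes are essentially routine.

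There is no real obstacle here --- the statement is a one-line consequence of prefix-closure, and the hardest part is simply ensuring we cite the right earlier results (\rlemma{lemma:h1-lop}, \rthm{thm:lopacity-safety-property}, and \rdef{def:safety-property}) rather than redoing the case analysis that already appears in the preceding lemmas.
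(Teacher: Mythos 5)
Your proposal is correct and matches the paper's treatment: the paper states this corollary without proof as an immediate consequence of \rlemma{lemma:h1-lop}, and both of your routes (prefix-closure of last-use opacity via \rthm{thm:lopacity-safety-property}, or directly from the fact that every prefix of a prefix of $\hist_1$ is itself a prefix of $\hist_1$ and hence final-state last-use opaque by Lemmas \ref{lemma:p11-fslop}--\ref{lemma:p1p-fslop}) are valid justifications of that step.
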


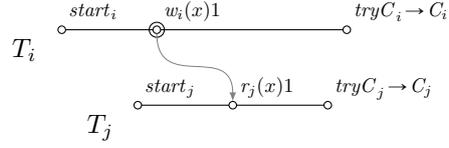
\begin{figure}
\begin{center}
\begin{tikzpicture}
     \draw
           (0,2)        node[tid]       {$\tr_i$}
                        node[aop]       {$\init_i$} %
                        node[dot]       {} 

      -- ++(1.25,0)     node[aop]       {$\twop{i}{\obj}{1}$}
                        node[dot] (wi)  {}
                        node[cir]       {}

      -- ++(2.5,0)      node[aop]       {$\tryC_i\!\to\!\co_i$}
                        node[dot]       {}         
                        ;

     \draw
           (1,1)        node[tid]       {$\tr_{j}$}
                        node[aop]       {$\init_{j}$} %
                        node[dot]       {} 

      -- ++(1.25,0)     node[aop]       {$\trop{j}{\obj}{1}$}
                        node[dot] (rj)  {}

      -- ++(1.25,0)     node[aop]       {$\tryC_j\!\to\!\co_j$}
                        node[dot]       {}
                        ;
     
     \draw[hb] (wi) \squiggle (rj);
\end{tikzpicture}
\end{center}
\caption{\label{fig:proof-h2} History $\hist_2$, not last-use opaque.}
\end{figure}

\begin{lemma} \label{lemma:h2-fslop}
    $\hist_2$ is final-state last-use opaque.    
\end{lemma}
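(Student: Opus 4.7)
The plan is to mirror the structure of the proof of \rlemma{lemma:h1-fslop}, since $\hist_2$ differs from $\hist_1$ only in the placement of the two $\tryC$ events relative to one another, and in $\hist_2$ all transactions are still either committed or aborted, so $\compl{\hist_2} = \hist_2$. First I would set $C_2 = \hist_2$ and choose the sequential witness $S_2 = \hist_2|\tr_i \cdot \hist_2|\tr_j$, noting that $S_2 \equiv C_2$ trivially from the definition of equivalence.

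Next I would verify the real-time order condition. Observe that in $\hist_2$ the first event of $\tr_j$ occurs before the last event of $\tr_i$ and vice versa (the commit of $\tr_j$ is interleaved between the write and commit of $\tr_i$), so $\tr_i$ and $\tr_j$ are concurrent in $\hist_2$, giving $\prec_{\hist_2} = \varnothing$. Hence any total order on $\{\tr_i,\tr_j\}$, including the one induced by $S_2$, preserves $\prec_{\hist_2}$.

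Then I would check legality of the two committed transactions in $S_2$. For $\tr_i$, since $\tr_i$ is first in $S_2$, $\vis{S_2}{\tr_i} = S_2|\tr_i$, so $\vis{S_2}{\tr_i}|\obj = [\fwop{i}{\obj}{1}{\ok_i}] \in \mathit{Seq}(\obj)$ by \pref{eq:seq-w}. For $\tr_j$, since $\tr_i$ precedes $\tr_j$ in $S_2$ and $\tr_i$ is committed in $S_2$, we get $\vis{S_2}{\tr_j} = S_2|\tr_i \cdot S_2|\tr_j$, so $\vis{S_2}{\tr_j}|\obj = [\fwop{i}{\obj}{1}{\ok_i}, \frop{j}{\obj}{1}] \in \mathit{Seq}(\obj)$ by \pref{eq:seq-wr}. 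Concluding by \rdef{def:fs-lopacity}, $\hist_2$ is final-state last-use opaque.

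There is no real obstacle here; the key observation is simply that final-state last-use opacity only constrains the given history (not its prefixes), and the fact that $\tr_j$'s commit event precedes $\tr_i$'s commit event within $\hist_2$ does not affect the construction of an equivalent sequential witness because the two transactions are concurrent and both ultimately commit. The discrepancy that makes $\hist_2$ fail full \emph{last-use} opacity will only appear when considering the prefix ending right after $\tr_j$ commits but before $\tr_i$ commits, whose completion forces $\tr_i$ to abort and thereby makes the committed transaction $\tr_j$ illegal; that, however, is outside the scope of this lemma.
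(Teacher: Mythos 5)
Your proposal is correct and follows essentially the same route as the paper: the paper's own proof simply observes that the witness $S_2 = C_2|\tr_i \cdot C_2|\tr_j$ coincides with the witness $S_1$ used for $\hist_1$ and invokes \rlemma{lemma:h1-fslop}, whereas you inline exactly that argument (empty real-time order, legality of $\vis{S_2}{\tr_i}$ and $\vis{S_2}{\tr_j}$ via \pref{eq:seq-w} and \pref{eq:seq-wr}). Your closing remark correctly locates where $\hist_2$ actually fails, namely at the prefix $P^2_2$ handled in \rlemma{lemma:p22-fslop}, which is indeed outside the scope of this lemma.
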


\begin{proof}
    \begin{align}
    & \text{let}~ C_2 = \compl{\hist_2} = \hist_2 \\    
    & \text{let}~ S_2 = C_1|\tr_i \cdot C_2|\tr_j \label{eq:let-s2} \\
    & %
      S_2 \equiv C_2 \label{eq:s2-equiv}\\
    & S_2 = S_1  \wedge ~\text{\rlemma{lemma:h1-fslop}}~ \so 
      \hist_2 ~\text{is final-state last-use opaque}
    \end{align}
\end{proof}

Let $P^2_1$ be a prefix s.t. $\hist_2 = P^2_1 \cdot [\res{i}{}{\co_i}]$.

\begin{lemma} \label{lemma:p21-fslop}
    $P^2_1$ is final-state last-use opaque.    
\end{lemma}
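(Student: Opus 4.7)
The plan is to observe that $P^2_1$ differs from $\hist_2$ only by the absence of the terminal response event $\res{i}{}{\co_i}$, and then to reduce the claim directly to \rlemma{lemma:h2-fslop}. Concretely, in $P^2_1$ the subhistory $P^2_1|\tr_i$ ends with the pending invocation $\inv{i}{}{\tryC_i}$ (no matching response), so $\tr_i$ is commit-pending, while $\tr_j$ is already committed in $P^2_1$. Applying the completion clause for a commit-pending transaction, I close $\tr_i$ by appending $\res{i}{}{\co_i}$, obtaining $\compl{P^2_1} = P^2_1 \cdot [\res{i}{}{\co_i}] = \hist_2$.

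Having identified the completion with $\hist_2$, I would reuse the sequential witness $S_2 = \compl{P^2_1}|\tr_i \cdot \compl{P^2_1}|\tr_j$ furnished by the proof of \rlemma{lemma:h2-fslop}. Since $S_2 \equiv \compl{P^2_1}$, since the real-time order of $\hist_2$ (and hence of $P^2_1$) is vacuous, and since both $\tr_i$ and $\tr_j$ are committed and legal in $S_2$ (the single write $\fwop{i}{\obj}{1}{\ok_i}$ immediately precedes the matching read $\frop{j}{\obj}{1}$, yielding a sequence in $\mathit{Seq}(\obj)$), all three clauses of \rdef{def:fs-lopacity} hold. This establishes that $P^2_1$ is final-state last-use opaque.

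The main obstacle is essentially bookkeeping: the result is structurally identical to \rlemma{lemma:p11-fslop}, which handles the analogous prefix of $\hist_1$ obtained by removing the trailing $\res{j}{}{\co_j}$. The only point deserving care is confirming that the commit-pending clause of the completion definition applies to $P^2_1|\tr_i$ and fixes $\compl{P^2_1}$ so that it coincides exactly with $\hist_2$, rather than with some aborting variant (which would invalidate the direct reuse of $S_2$ and require a fresh witness).
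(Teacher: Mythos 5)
Your proposal is correct and matches the paper's own argument: the paper likewise completes the commit-pending $\tr_i$ with $\res{i}{}{\co_i}$ so that $\compl{P^2_1}$ coincides with $\hist_2$, and then invokes \rlemma{lemma:h2-fslop} (whose witness is exactly $S_2$). Your final worry about an "aborting variant" is moot because \rdef{def:fs-lopacity} only requires \emph{some} completion with a suitable sequential witness, so the committing completion may simply be chosen.
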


\begin{proof}
    \begin{align}
    & \text{let}~ C^2_1 = \compl{P^2_1} = P^2_1 \cdot [\res{j}{}{\co_j}] \\    
    & \hist_2 = C^2_1 \wedge ~\text{\rlemma{lemma:h2-fslop}}~ 
      \so P^2_1 ~\text{is final-state last-use opaque}
    \end{align}
\end{proof}

Let $P^2_2$ be a prefix s.t. $\hist_2 = P^2_2 \cdot [\tryC_i\to\co_i]$.

\begin{lemma} \label{lemma:p22-fslop}
    $P^2_2$ is not final-state last-use opaque.    
\end{lemma}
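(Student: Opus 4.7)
The plan is to construct the unique-up-to-transaction-order completion of $P^2_2$ and show that no equivalent sequential history can make $\tr_j$ legal, which is required because $\tr_j$ is committed in $P^2_2$.

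First I would observe that $P^2_2$ is obtained from $\hist_2$ by removing the operation execution $\tryC_i\to\co_i$, so in $P^2_2$ the transaction $\tr_i$ is live (it has no pending operation: no $\tryC_i$ invocation either), while $\tr_j$ is committed. By case (d) of the completion definition, the only completion is $C^2_2 = P^2_2 \cdot [\tryC_i\to\ab_i]$, in which $\tr_i$ is aborted and $\tr_j$ is committed.

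Next I would compute the real-time order. Since $\tr_i$ starts before $\tr_j$ commits, and $\tr_j$ starts before $\tr_i$'s appended abort, neither transaction precedes the other; so $\prec_{C^2_2} = \varnothing$ and any sequential arrangement of the two transactions preserves real-time order. There are only two candidate sequential histories $S \equiv C^2_2$, namely $S_a = C^2_2|\tr_i \cdot C^2_2|\tr_j$ and $S_b = C^2_2|\tr_j \cdot C^2_2|\tr_i$.

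Then I would argue that in either arrangement $\tr_j$ fails to be legal, which is required by \rdef{def:fs-lopacity}b since $\tr_j$ is committed in $S$. The key point is that $\vis{S}{\tr_j}$ contains operations only from $\tr_j$ itself and from transactions committed in $S$ preceding $\tr_j$; since $\tr_i$ is aborted in $C^2_2$, the write $\fwop{i}{\obj}{1}{\ok_i}$ is excluded from $\vis{S}{\tr_j}$ in both $S_a$ and $S_b$. Hence $\vis{S}{\tr_j}|\obj = [\frop{j}{\obj}{1}]$, which by \pref{eq:ill-seq-r} is not in $\sspec{\obj}$, so $\vis{S}{\tr_j}$ is not legal and $\tr_j$ is not legal in $S$. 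The main (but minor) obstacle is confirming that no permissive choice for non-committed transactions can rescue legality of $\tr_j$: since $\tr_j$ is committed, clause (b) of \rdef{def:fs-lopacity} rigidly requires full legality of $\vis{S}{\tr_j}$, not just last-use legality, so the freedom that would allow decided-but-aborted writes to appear in $\luvisf$ is unavailable here. Since this holds for the only candidate sequential histories, $P^2_2$ is not final-state last-use opaque.
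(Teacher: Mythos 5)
Your proposal is correct and follows essentially the same route as the paper's proof: form the (unique) completion in which $\tr_i$ is aborted and $\tr_j$ remains committed, enumerate the two equivalent sequential orderings, and observe that in either one $\vis{S}{\tr_j}|\obj = [\frop{j}{\obj}{1}]$ is not in $\sspec{\obj}$, so the committed transaction $\tr_j$ cannot be legal. Your explicit remark that clause (b) of the final-state definition demands full legality for the committed $\tr_j$ (so the permissiveness of $\luvisf$ is unavailable) is a welcome clarification of a point the paper's line-by-line derivation leaves implicit.
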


\begin{proof}
    \begin{align}
    & \text{let}~ C^2_2 = \compl{P^2_2} = P^2_2 \cdot [\tryA_j\to\ab_j] \\    
    & \text{let}~ S^2_2 = C^2_2|\tr_i \cdot C^2_2|\tr_j \label{eq:let-s22} \\
    & %
      S^2_2 \equiv C^2_2 \label{eq:s22-equiv}\\
    & \text{real time order}~ \prec_{P^2_2} = \varnothing \label{eq:s22-rt-p22} \\
    & \text{real time order}~ \prec_{S^2_2} = 
      \{ \tr_i \prec_{S^2_2} \tr_j \} \label{eq:p22-rt-s22} \\
    & \pref{eq:s22-rt-p22} \wedge \pref{eq:p22-rt-s22}
      \so \prec_{S^2_2} \subseteq \prec_{P^2_2} \label{eq:p22-rt-sub} \\
    & \res{i}{}{\ab_i} \in S^2_2|\tr_i 
      \so S^2_2|\tr_i \nsubseteq \luvis{S^2_2}{\tr_j} \label{eq:p22-vis-tj-ti} \\
    & j = j \so  S^2_2|\tr_j \subseteq \luvis{S^2_2}{\tr_j} \label{eq:p22-vis-tj-tj} \\
    & \pref{eq:p22-vis-tj-ti} \wedge \pref{eq:p22-vis-tj-tj} 
      \so \luvis{S^2_2}{\tr_j} = S^2_2|\tr_j \label{eq:p22-tj-vis} \\
    & \pref{eq:p22-tj-vis} 
      \so \luvis{S^2_2}{\tr_j}|\obj = [\frop{j}{\obj}{1}] 
      \label{eq:p22-tj-lvis-x} \\
    & \pref{eq:p22-tj-lvis-x} \wedge \pref{eq:ill-seq-r} 
      \so \luvis{S^2_2}{\tr_j} ~\text{is not legal} \label{eq:p22-vis-tj-not-legal} \\
    & \pref{eq:p22-vis-tj-not-legal} \so  
      \tr_j ~\text{in}~ S^2_2 ~\text{is not legal in}~ S^2_2 \label{eq:p22-tj-not-legal} \\
    & \text{let}~ \dot{S}^2_2 = C^2_2|\tr_j \cdot C^2_2|\tr_i \label{eq:let-s22b} \\
    & \dot{S}^2_2 \equiv C^2_2 \label{eq:s22b-equiv}\\
    & \text{real time order}~ \prec_{P^2_2} = \varnothing \label{eq:s22b-rt-p22b} \\
    & \text{real time order}~ \prec_{\dot{S}^2_2} = 
      \{ \tr_i \prec_{\dot{S}^2_2} \tr_j \} \label{eq:p22b-rt-s22b} \\
    & \pref{eq:s22b-rt-p22b} \wedge \pref{eq:p22b-rt-s22b}
      \so \prec_{\dot{S}^2_2} \subseteq \prec_{P^2_2} \label{eq:p22b-rt-sub} \\
    & \tr_j \prec_{\dot{S}^2_2} \tr_i 
      \so \dot{S}^2_2|\tr_i \nsubseteq \vis{\dot{S}^2_2}{\tr_j} \label{eq:p22b-vis-tj-ti} \\
    & j = j \so  \dot{S}^2_2|\tr_j \subseteq \vis{\dot{S}^2_2}{\tr_i} \label{eq:p22b-vis-tj-tj} \\
    & \pref{eq:p22b-vis-tj-ti} \wedge \pref{eq:p22b-vis-tj-tj} 
      \so \vis{\dot{S}^2_2}{\tr_j} = \dot{S}^2_2|\tr_j \label{eq:p22b-tj-vis} \\
    & \pref{eq:p22b-tj-vis} 
      \so \vis{\dot{S}^2_2}{\tr_j}|\obj = [\frop{j}{\obj}{1}] 
      \label{eq:p22b-tj-vis-x} \\
    & \pref{eq:p22b-tj-vis-x} \wedge \pref{eq:ill-seq-r} 
      \so \luvis{\dot{S}^2_2}{\tr_j} ~\text{is not legal} \label{eq:p22b-vis-tj-not-legal} \\
    & \pref{eq:p22b-vis-tj-not-legal} \so  
      \tr_j ~\text{in}~ \dot{S}^2_2 ~\text{is not legal in}~ \dot{S}^2_2 
      \label{eq:p22b-tj-not-legal} \\
    & \pref{eq:p22-tj-not-legal} 
      \wedge \pref{eq:p22b-tj-not-legal} \so 
      P^2_2 ~\text{is not final-state last-use opaque}
    \end{align}
\end{proof}

\begin{lemma} \label{lemma:h2-lop}
    $\hist_2$ is not last-use opaque.    
\end{lemma}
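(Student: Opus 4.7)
The plan is to invoke \rlemma{lemma:p22-fslop} directly and then apply the contrapositive of \rdef{def:lopacity}. Recall that last-use opacity requires every finite prefix of a history to be final-state last-use opaque. So to show $\hist_2$ is not last-use opaque, it suffices to exhibit one finite prefix that fails to be final-state last-use opaque. The prefix $P^2_2$, defined earlier by $\hist_2 = P^2_2 \cdot [\tryC_i\to\co_i]$, is exactly such a prefix, and \rlemma{lemma:p22-fslop} has already established that $P^2_2$ is not final-state last-use opaque.

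Thus the structure of the proof will be essentially one line: since $P^2_2$ is a prefix of $\hist_2$ and $P^2_2$ is not final-state last-use opaque by \rlemma{lemma:p22-fslop}, the condition in \rdef{def:lopacity} fails for $\hist_2$, hence $\hist_2$ is not last-use opaque.

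There is no real obstacle here because the heavy lifting lives in \rlemma{lemma:p22-fslop}. That lemma embodies the intuition behind why $\hist_2$ breaks the property: in $\hist_2$, $\tr_j$ commits after reading $1$ from $\obj$ while $\tr_i$ is still live, so in the prefix $P^2_2$ any completion is forced to abort $\tr_i$, leaving the committed transaction $\tr_j$ reading a value written by an aborted transaction. Since committed transactions must be \emph{legal} (not merely last-use legal) in the sequential witness, and $\vis{S^2_2}{\tr_j}$ cannot include the aborted $\tr_i$'s write, the read $\frop{j}{\obj}{1}$ cannot be justified under $\mathit{Seq}(\obj)$ in either ordering $\tr_i \prec \tr_j$ or $\tr_j \prec \tr_i$. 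This is precisely the ``commit order not respected'' pathology highlighted by \rfig{fig:example-commit-before-commit}, and it is exactly what distinguishes $\hist_2$ from the last-use opaque history $\hist_1$ of \rlemma{lemma:h1-lop}.
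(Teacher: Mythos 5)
Your proposal is correct and matches the paper's own proof exactly: both cite \rlemma{lemma:p22-fslop} to exhibit the prefix $P^2_2$ that is not final-state last-use opaque and then conclude via \rdef{def:lopacity}. No gaps.
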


\begin{proof}
    Even though, from \rlemma{lemma:h2-fslop}, $\hist_2$ is final-state
    last-use opaque, from \rlemma{lemma:p22-fslop}, prefix $P^2_2$ of $\hist_2$
    is not final-state last-use opaque, so, from \rdef{def:lopacity} $\hist_2$
    is not last-use opaque.
\end{proof}

\begin{figure}
\begin{center}
\begin{tikzpicture}
     \draw
           (0,2)        node[tid]       {$\tr_i$}
                        node[aop]       {$\init_i$} %
                        node[dot]       {} 

      -- ++(1.25,0)     node[aop]       {$\twop{i}{\obj}{1}$}
                        node[dot] (wi)  {}
                        node[cir]       {}

      -- ++(1.25,0)     node[aop]       {$\tryA_i\!\to\!\ab_i$}
                        node[dot]       {}         
                        ;

     \draw
           (1,1)        node[tid]       {$\tr_{j}$}
                        node[aop]       {$\init_{j}$} %
                        node[dot]       {} 

      -- ++(1.25,0)     node[aop]       {$\trop{j}{\obj}{1}$}
                        node[dot] (rj)  {}

      -- ++(1.25,0)     node[aop]       {$\tryC_j\!\to\!\ab_j$}
                        node[dot]       {}
                        ;
     
     \draw[hb] (wi) \squiggle (rj);
\end{tikzpicture}
\end{center}
\caption{\label{fig:proof-h3} History $\hist_3$, last-use opaque.}
\end{figure}
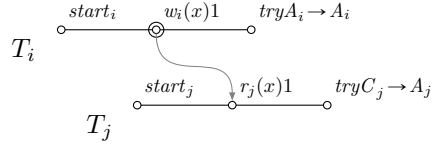

\begin{lemma} \label{lemma:h3-fslop}
    $\hist_3$ is final-state last-use opaque.    
\end{lemma}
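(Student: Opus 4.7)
The plan is to mirror the structure of the proof of \rlemma{lemma:p14-fslop} (the completion of $P^1_4$), since $\hist_3$ has the same structural shape: both transactions abort, and $\tr_i$ releases early after what is marked as its \last{} write on $\obj$. The key ingredient will be that $\tr_i$ is \emph{decided on} $\obj$ (\rdef{def:decided}), which is what unlocks the use of $\hist_3\cpeC\tr_i$ inside $\luvis{S_3}{\tr_j}$.

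First, I would observe that both $\tr_i$ and $\tr_j$ are already aborted in $\hist_3$, so $C_3 = \compl{\hist_3} = \hist_3$, and take as witness the sequential history $S_3 = C_3|\tr_i \cdot C_3|\tr_j$, which is equivalent to $C_3$. Since $\prec_{\hist_3} = \varnothing$, real-time order is preserved trivially. Because neither transaction is committed in $S_3$, only clause (c) of \rdef{def:fs-lopacity} needs to be checked, i.e., both transactions must be last-use legal.

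For $\tr_i$, nothing in $S_3$ precedes it and no other committed or decided-on-$\obj$ transaction is available, so $\luvis{S_3}{\tr_i} = S_3|\tr_i$ and $\luvis{S_3}{\tr_i}|\obj = [\fwop{i}{\obj}{1}{\ok_i}]$, which is in $\sspec{\obj}$ by \pref{eq:seq-w}. For $\tr_j$, the marked \last{}-write annotation tells us that $\fwop{i}{\obj}{1}{\ok_i}$ is the \last{} write on $\obj$ by $\tr_i$, so $\tr_i$ is decided on $\obj$. Since $\tr_i$ is not committed in $S_3$, $\tr_i \prec_{S_3} \tr_j$, and $\tr_i \not\prec_{\hist_3} \tr_j$, clause (b) of the definition of $\luvis{\cdot}{\cdot}$ lets us include $S_3\cpeC\tr_i$ in $\luvis{S_3}{\tr_j}$. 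Making that choice yields $\luvis{S_3}{\tr_j} = S_3\cpeC\tr_i \cdot S_3|\tr_j$, whose projection onto $\obj$ is $[\fwop{i}{\obj}{1}{\ok_i}, \frop{j}{\obj}{1}]$, which is in $\sspec{\obj}$ by \pref{eq:seq-wr}.

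Combining these with $S_3 \equiv C_3$ and the real-time-order check, \rdef{def:fs-lopacity} is satisfied and $\hist_3$ is final-state last-use opaque. There is no significant obstacle here; the only subtlety worth flagging explicitly is the motivated choice to include $\hist_3\cpeC\tr_i$ rather than discard it in the construction of $\luvis{S_3}{\tr_j}$, since clause (b) leaves this up to us and the legality of $\tr_j$ depends on making the inclusive choice.
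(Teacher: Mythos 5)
Your proof is correct and follows essentially the same route as the paper's: the same completion $C_3=\hist_3$, the same witness $S_3 = C_3|\tr_i \cdot C_3|\tr_j$, and the same key step of using the fact that $\tr_i$ is decided on $\obj$ to include $S_3\cpeC\tr_i$ in $\luvis{S_3}{\tr_j}$ so that its projection onto $\obj$ becomes $[\fwop{i}{\obj}{1}{\ok_i}, \frop{j}{\obj}{1}]$. Your explicit remark about the optional inclusion in clause (b) of the $\luvisf$ definition is a fair observation, but otherwise there is nothing to add.
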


\begin{proof}
    \begin{align}
    & \text{let}~ C_3 = \compl{\hist_3} = \hist_3 \\
    & \text{let}~ S_3 = C_3|\tr_i \cdot C_3|\tr_j \label{eq:let-s3} \\
    & %
      S_3 \equiv C_3 \label{eq:s3-equiv}\\
    & \text{real time order}~ \prec_{\hist_3} = \varnothing \label{eq:s3-rt-h3} \\
    & \text{real time order}~ \prec_{S_3} = 
      \{ \tr_i \prec_{S_3} \tr_j \} \label{eq:h3-rt-s3} \\
    & \pref{eq:s3-rt-h3} \wedge \pref{eq:h3-rt-s3}
      \so \prec_{S_3} \subseteq \prec_{\hist_3} \label{eq:h3-rt-sub} \\
    & i = i \so S_3|\tr_i \subseteq \luvis{S_3}{\tr_i} \label{eq:h3-lvis-ti-ti} \\
    & \tr_i \prec_{S_3} \tr_j 
      \so S_3|\tr_j \nsubseteq \luvis{S_3}{\tr_i} \label{eq:h3-lvis-ti-tj} \\
    & \pref{eq:h3-lvis-ti-ti} \wedge \pref{eq:h3-lvis-ti-tj} 
      \so \luvis{S_3}{\tr_i} = S_3|\tr_i \label{eq:h3-ti-lvis} \\
    & \pref{eq:h3-ti-lvis} \so \vis{S_3}{\tr_i}|\obj = [\fwop{i}{\obj}{1}{\ok_i}] 
      \label{eq:h3-ti-lvis-x} \\
    & \pref{eq:h3-ti-lvis-x} \wedge \pref{eq:seq-w} 
      \so \luvis{S_3}{\tr_i} ~\text{is legal} \label{eq:h3-lvis-ti-legal} \\
    & \pref{eq:h3-lvis-ti-legal} \so  
      \tr_i ~\text{in}~ S_3 ~\text{is last-use legal in}~ S_3 \label{eq:h3-ti-legal} \\
    & \fwop{i}{\obj}{1}{\ok_i} ~\text{is \last{} write on}~\obj~\text{in}~\tr_i
      \so \tr_i ~\text{is decided on}~\obj 
      S_3 \label{eq:h3-ti-decided} \\
    & \tr_i \prec_{S_3} \tr_j \wedge \pref{eq:h3-ti-decided}
      \so S_3\cpeC\tr_i \subseteq \luvis{S_3}{\tr_j} \label{eq:h3-lvis-tj-ti} \\
    & j = j \so  S_3|\tr_j \subseteq \luvis{S_3}{\tr_j} \label{eq:h3-lvis-tj-tj} \\
    & \pref{eq:h3-lvis-tj-ti} \wedge \pref{eq:h3-lvis-tj-tj} 
      \so \luvis{S_3}{\tr_j} = S_3\cpeC\tr_i \cdot S_3|\tr_j \label{eq:h3-tj-lvis} \\
    & \pref{eq:h3-tj-lvis} 
      \so \luvis{S_3}{\tr_j}|\obj = [\fwop{i}{\obj}{1}{\ok_i}, \frop{j}{\obj}{1}]
      \label{eq:h3-tj-lvis-x} \\
    & \pref{eq:h3-tj-lvis-x} \wedge \pref{eq:seq-wr} 
      \so \luvis{S_3}{\tr_j} ~\text{is legal} \label{eq:h3-lvis-tj-legal} \\
    & \pref{eq:h3-lvis-tj-legal} \so  
      \tr_j ~\text{in}~ S_3 ~\text{is last-use legal in}~ S_3 \label{eq:h3-tj-legal} \\
    & \pref{eq:s3-equiv} \wedge \pref{eq:h3-rt-sub} 
      \wedge \pref{eq:h3-ti-legal} \wedge \pref{eq:h3-tj-legal} \so 
      \hist_3 ~\text{is final-state last-use opaque}
    \end{align}
\end{proof}

Let $P^3_1$ be a prefix s.t. $\hist_3 = P^3_1 \cdot [\res{j}{}{\ab_j}]$.

\begin{lemma} \label{lemma:p31-fslop}
    $P^3_1$ is final-state last-use opaque.    
\end{lemma}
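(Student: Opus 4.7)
The plan is to recognize that $P^3_1$ differs from $\hist_3$ only by the final response event $\res{j}{}{\ab_j}$, so in $P^3_1$ transaction $\tr_j$ is commit-pending (it contains $\inv{j}{}{\tryC_j}$ but no matching response). By the completion definition, since $\tr_j$ has a pending operation, we may pick completion clause (c) and append $\res{j}{}{\ab_j}$, yielding $C^3_1 = P^3_1 \cdot [\res{j}{}{\ab_j}] = \hist_3$. This reduces the claim directly to \rlemma{lemma:h3-fslop}, mirroring the pattern used in \rlemma{lemma:p11-fslop} and \rlemma{lemma:p21-fslop}.

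Concretely, I would proceed as follows. First, define $C^3_1 = P^3_1 \cdot [\res{j}{}{\ab_j}]$ and observe that this is a valid completion of $P^3_1$, noting that although clause (b) of the completion definition would alternately allow appending $\res{j}{}{\co_j}$, the definition says ``equals one of the following,'' so we are free to choose the aborting completion. Second, observe that $C^3_1 = \hist_3$ syntactically. Third, invoke \rlemma{lemma:h3-fslop}: it already produced a sequential history $S_3 = C_3|\tr_i \cdot C_3|\tr_j$ equivalent to $C_3 = \hist_3$, preserving real-time order, with $\tr_i$ last-use legal (justified by the \last{} write on $\obj$) and $\tr_j$ last-use legal via $S_3\cpeC\tr_i$ providing the required write of $1$ to $\obj$. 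Fourth, conclude by \rdef{def:fs-lopacity} that $P^3_1$ is final-state last-use opaque.

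The only subtlety — and the point worth stating explicitly — is the choice between completion clauses (b) and (c) for the commit-pending $\tr_j$. If one were forced to complete $\tr_j$ via (b) as committed, the resulting completion would have $\tr_j$ committed and reading $1$ from $\obj$, while $\tr_i$ would still have to be aborted (no $\tryC_i$ appears anywhere); then $\vis{}{\tr_j}$ would contain an unjustified read and no sequential history would satisfy the committed-transaction legality clause. So the proof hinges on exploiting the ``any'' quantifier in the completion definition to select the aborting branch, which makes the remainder a one-line reduction. After this observation, the proof is entirely routine and essentially a copy of \rlemma{lemma:p11-fslop}.
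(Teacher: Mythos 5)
Your proof is correct and follows exactly the paper's route: choose the completion $C^3_1 = P^3_1\cdot[\res{j}{}{\ab_j}]$, observe it coincides with $\hist_3$, and reduce to \rlemma{lemma:h3-fslop}. The extra remark on why the aborting branch of the completion definition may (and must) be selected is a sound clarification of a point the paper leaves implicit.
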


\begin{proof}
    \begin{align}
    & \text{let}~ C^3_1 = \compl{P^3_1} = P^3_1 \cdot [\res{j}{}{\ab_j}] \\    
    & \hist_3 = C^3_1 \wedge ~\text{\rlemma{lemma:h3-fslop}}~ 
      \so P^3_1 ~\text{is final-state last-use opaque}
    \end{align}
\end{proof}

Let $P^3_2$ be a prefix s.t. $\hist_3 = P^3_2 \cdot [\tryC_j\to\ab_j]$.

\begin{lemma} \label{lemma:p32-fslop}
    $P^3_2$ is final-state last-use opaque.    
\end{lemma}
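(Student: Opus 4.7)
The plan is to mirror the short proofs for $P^1_1$ (\rlemma{lemma:p11-fslop}) and $P^3_1$ (\rlemma{lemma:p31-fslop}): rather than redoing the sequential witness construction from scratch, reuse \rlemma{lemma:h3-fslop} by exhibiting a completion of $P^3_2$ that coincides with $\hist_3$.

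First I would inspect what $P^3_2$ looks like: removing $[\tryC_j \to \ab_j]$ from the end of $\hist_3$ leaves $\tr_j$ in a state where it has executed $\init_j \to \ok_j$ and $\frop{j}{\obj}{1}$, with no pending invocation (neither $\tryC_j$ nor $\tryA_j$ has been issued). By clause (d) of the definition of completion, any completion of $P^3_2$ must extend $P^3_2|\tr_j$ with $[\tryC_j \to \ab_j]$, while $\tr_i$ is already aborted in $P^3_2$ and so left untouched by clause (a).

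Thus I would define $C^3_2 = \compl{P^3_2} = P^3_2 \cdot [\tryC_j \to \ab_j]$ and observe that $C^3_2 = \hist_3$. Since \rlemma{lemma:h3-fslop} already supplied a sequential history $S_3$ equivalent to $\compl{\hist_3}$ that preserves the real-time order of $\hist_3$ and in which $\tr_i$ and $\tr_j$ are last-use legal, the very same $S_3$ witnesses final-state last-use opacity of $P^3_2$ (its real-time order is a subset of $\hist_3$'s, so preservation is immediate).

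There is no real obstacle here: the only thing to be careful about is matching the completion clause correctly, namely verifying that $\tr_j$ has no pending invocation in $P^3_2$ so that clause (d) applies rather than clause (b) or (c). Once that observation is in place, the proof reduces to a one-line citation of \rlemma{lemma:h3-fslop}, exactly as in \rlemma{lemma:p11-fslop} and \rlemma{lemma:p31-fslop}.
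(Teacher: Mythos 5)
Your proof is correct, and it takes a shorter route than the paper does for this particular lemma. You reduce $P^3_2$ to $\hist_3$ via the completion identity $\compl{P^3_2} = P^3_2 \cdot [\tryC_j\to\ab_j] = \hist_3$ (clause (d) applies to $\tr_j$, clause (a) to the already-aborted $\tr_i$) and then cite \rlemma{lemma:h3-fslop}; this is exactly the technique the paper itself uses for $P^1_1$ and $P^3_1$. The paper's own proof of \rlemma{lemma:p32-fslop}, by contrast, re-derives the full witness from scratch: it builds $S^3_2 = C^3_2|\tr_i \cdot C^3_2|\tr_j$, checks the real-time order, and verifies last-use legality of $\tr_i$ and $\tr_j$ step by step --- which is the same sequential witness you obtain for free from \rlemma{lemma:h3-fslop}. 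Your packaging is more economical and arguably more in keeping with the surrounding lemmas; the paper's version is self-contained but redundant. (Incidentally, the paper writes its completion as $P^3_2 \cdot [\tryA_j\to\ab_j]$, which does not match clause (d) of the completion definition; your $\tryC_j\to\ab_j$ is the one the definition actually prescribes.)

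One small caution on your real-time-order remark: it is not true in general that a prefix's real-time order is a subset of the full history's --- truncating trailing events of a transaction can \emph{create} precedence pairs that are absent from the longer history. What you actually need is $\prec_{P^3_2} \subseteq \prec_{S_3}$, and here $\prec_{P^3_2} = \varnothing$ (the two transactions overlap already in $P^3_2$), so the condition holds trivially. Your conclusion is therefore fine; just justify it by inspecting $\prec_{P^3_2}$ directly rather than by appeal to a general subset claim.
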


\begin{proof}
    \begin{align}
    & \text{let}~ C^3_2 = \compl{P^3_2} = P^3_2 \cdot [\tryA_j\to\ab_j] \\
    & \text{let}~ S^3_2 = C^3_2|\tr_i \cdot C^3_2|\tr_j \label{eq:let-s32} \\
    & %
      S^3_2 \equiv C^3_2 \label{eq:s32-equiv}\\
    & \text{real time order}~ \prec_{P^3_2} = \varnothing \label{eq:s32-rt-p32} \\
    & \text{real time order}~ \prec_{S^3_2} = 
      \{ \tr_i \prec_{S^3_2} \tr_j \} \label{eq:p32-rt-s32} \\
    & \pref{eq:s32-rt-p32} \wedge \pref{eq:p32-rt-s32}
      \so \prec_{S^3_2} \subseteq \prec_{P^3_2} \label{eq:p32-rt-sub} \\
    & i = i \so S^3_2|\tr_i \subseteq \luvis{S^3_2}{\tr_i} \label{eq:p32-lvis-ti-ti} \\
    & \tr_i \prec_{S^3_2} \tr_j 
      \so S^3_2|\tr_j \nsubseteq \luvis{S^3_2}{\tr_i} \label{eq:p32-lvis-ti-tj} \\
    & \pref{eq:p32-lvis-ti-ti} \wedge \pref{eq:p32-lvis-ti-tj} 
      \so \luvis{S^3_2}{\tr_i} = S^3_2|\tr_i \label{eq:p32-ti-lvis} \\
    & \pref{eq:p32-ti-lvis} \so \vis{S^3_2}{\tr_i}|\obj = [\fwop{i}{\obj}{1}{\ok_i}] 
      \label{eq:p32-ti-lvis-x} \\
    & \pref{eq:p32-ti-lvis-x} \wedge \pref{eq:seq-w} 
      \so \luvis{S^3_2}{\tr_i} ~\text{is legal} \label{eq:p32-lvis-ti-legal} \\
    & \pref{eq:p32-lvis-ti-legal} \so  
      \tr_i ~\text{in}~ S^3_2 ~\text{is last-use legal in}~ S^3_2 \label{eq:p32-ti-legal} \\
    & \fwop{i}{\obj}{1}{\ok_i} ~\text{is \last{} write on}~\obj~\text{in}~\tr_i
      \so \tr_i ~\text{is decided on}~\obj 
      S^3_2 \label{eq:p32-ti-decided} \\
    & \tr_i \prec_{S^3_2} \tr_j \wedge \pref{eq:p32-ti-decided}
      \so S^3_2\cpeC\tr_i \subseteq \luvis{S^3_2}{\tr_j} \label{eq:p32-lvis-tj-ti} \\
    & j = j \so  S^3_2|\tr_j \subseteq \luvis{S^3_2}{\tr_j} \label{eq:p32-lvis-tj-tj} \\
    & \pref{eq:p32-lvis-tj-ti} \wedge \pref{eq:p32-lvis-tj-tj} 
      \so \luvis{S^3_2}{\tr_j} = S^3_2\cpeC\tr_i \cdot S^3_2|\tr_j \label{eq:p32-tj-lvis} \\
    & \pref{eq:p32-tj-lvis} 
      \so \luvis{S^3_2}{\tr_j}|\obj = [\fwop{i}{\obj}{1}{\ok_i}, \frop{j}{\obj}{1}]
      \label{eq:p32-tj-lvis-x} \\
    & \pref{eq:p32-tj-lvis-x} \wedge \pref{eq:seq-wr} 
      \so \luvis{S^3_2}{\tr_j} ~\text{is legal} \label{eq:p32-lvis-tj-legal} \\
    & \pref{eq:p32-lvis-tj-legal} \so  
      \tr_j ~\text{in}~ S^3_2 ~\text{is last-use legal in}~ S^3_2 \label{eq:p32-tj-legal} \\
    & \pref{eq:s32-equiv} \wedge \pref{eq:p32-rt-sub} 
      \wedge \pref{eq:p32-ti-legal} \wedge \pref{eq:p32-tj-legal} \so 
      P^3_2 ~\text{is final-state last-use opaque}
    \end{align}
\end{proof}

Let $P^3_3$ be a prefix s.t. $\hist_3 = P^3_3 \cdot [\tryC_j\to\ab_j]$.

\begin{lemma} \label{lemma:p33-fslop}
    $P^3_3$ is final-state last-use opaque.    
\end{lemma}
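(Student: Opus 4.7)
The plan is to follow the same template used in Lemmas~\ref{lemma:p31-fslop} and \ref{lemma:p32-fslop}. First I would define a completion $C^3_3 = \compl{P^3_3}$ by appending the appropriate abort responses to any operations left pending in $P^3_3$ and a $\tryA \to \ab$ suffix to any otherwise live transaction with no pending operation. Since $\tr_i$ already aborts in $\hist_3$, the completion will have both $\tr_i$ and $\tr_j$ aborted in $C^3_3$. I would then take $S^3_3 = C^3_3|\tr_i \cdot C^3_3|\tr_j$ as the candidate equivalent sequential history; equivalence is by construction, and since the real-time order of $P^3_3$ is empty (both transactions overlap), $\prec_{S^3_3}$ trivially extends $\prec_{P^3_3}$.

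The core of the argument is the last-use legality check. For $\tr_i$, I would show that $\luvis{S^3_3}{\tr_i} = S^3_3|\tr_i$ and that its projection on $\obj$ reduces to a single write operation execution, so legality follows from \pref{eq:seq-w} (or \pref{eq:seq-wa}, depending on exactly how much $P^3_3$ truncates). For $\tr_j$, the key observation is that the write $\fwop{i}{\obj}{1}{\ok_i}$ is the \last{} write on $\obj$ in $\tr_i$ and remains a complete operation execution in $P^3_3$; hence $\tr_i \in \cpetrans{P^3_3}$ and $\tr_i$ is decided on $\obj$ in $S^3_3$. Using the flexibility in the definition of $\luvisf$ (item b), I may include $S^3_3\cpeC\tr_i$ in $\luvis{S^3_3}{\tr_j}$, yielding the projection $[\fwop{i}{\obj}{1}{\ok_i}, \frop{j}{\obj}{1}]$ on $\obj$, whose legality follows from \pref{eq:seq-wr}. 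Combining these, both transactions are last-use legal in $S^3_3$, and \rdef{def:fs-lopacity} gives the conclusion.

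The main obstacle will be correctly identifying the shape of $P^3_3$ (the definition in the excerpt appears syntactically identical to that of $P^3_2$, so I would treat $P^3_3$ as the next step in the truncation sequence, namely the prefix of $\hist_3$ obtained by additionally dropping $\res{i}{}{\ab_i}$ or $\tryA_i \to \ab_i$ from the tail) and handling the edge case where enough of $\tr_i$'s tail is removed that its write is still present but $\tr_i$ is now live in $P^3_3$ instead of aborted. In that situation, the completion still delivers an $\ab_i$ response, the write remains a complete operation execution, and $\tr_i$ remains decided on $\obj$, so the argument above goes through verbatim. If instead the truncation reaches past $\tr_i$'s write response (making the write pending), then $\tr_i$ is no longer decided on $\obj$, in which case I would mirror the strategy of Lemma~\ref{lemma:p17-fslop}: drop $\tr_j$'s read from $\luvis{S^3_3}{\tr_j}$ as forced by well-formedness at the prefix boundary, reducing the $\obj$-projection to the empty sequence, which is legal by \pref{eq:seq-empty}.
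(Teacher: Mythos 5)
Your proposal is correct and follows essentially the same argument as the paper: the paper simply observes that $P^3_3$ (i.e., $\hist_3$ with the tail $[\tryA_i\to\ab_i,\ \tryC_j\to\ab_j]$ removed) coincides with $P^1_4$ and cites \rlemma{lemma:p14-fslop}, whose proof is exactly the completion/decided-on-$\obj$/$\luvisf$ argument you spell out. Your reading of the (evidently garbled) definition of $P^3_3$ as the next truncation step matches what the paper's own proof presupposes, and your extra edge-case handling, while unnecessary here, is harmless.
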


\begin{proof}
    \begin{align}
    & P^3_3 = P^1_4 \wedge ~\text{\rlemma{lemma:p14-fslop}}~ 
      \so P^3_3 ~\text{is final-state last-use opaque}
    \end{align}
\end{proof}

Let $P^3_p$ be a any prefix s.t. $P^3_3$.

\begin{lemma} \label{lemma:p3p-fslop}
    Any $P^3_p$ is final-state last-use opaque.    
\end{lemma}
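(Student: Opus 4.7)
The plan is to proceed exactly in parallel with the proof of \rlemma{lemma:p1p-fslop} (the analogous result about $P^1_p$), which is the same trivial argument reused for $\hist_3$. Since $P^3_p$ is taken to be any prefix of $P^3_3$ that is even shorter than $P^3_3$ (i.e.\ a prefix of $\hist_3$ that does not yet include the write $\fwop{i}{\obj}{1}{\ok_i}$, the read $\frop{j}{\obj}{1}$, or the terminating events of $\tr_i$ and $\tr_j$), the key fact is that $P^3_p$ contains no complete read or write operation executions on any shared variable.

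First I would let $C^3_p = \compl{P^3_p}$ and pick the sequential witness $S^3_p = C^3_p|\tr_i \cdot C^3_p|\tr_j$ (dropping whichever of $\tr_i,\tr_j$ is absent from $P^3_p$). Since $\prec_{P^3_p} = \varnothing$, the containment $\prec_{S^3_p} \subseteq \prec_{P^3_p}$ holds vacuously, so real-time order is trivially preserved, satisfying condition (a) of \rdef{def:fs-lopacity}.

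Next I would argue legality for each transaction in $S^3_p$. Because $P^3_p$ has no read/write operation executions on $\obj$, for every $\tr_k \in S^3_p$ and every variable $\obj$ we have $\vis{S^3_p}{\tr_k}|\obj = \varnothing$ and $\luvis{S^3_p}{\tr_k}|\obj = \varnothing$. Since $\varnothing \in \sspec{\obj}$ by \pref{eq:seq-empty}, both $\vis{S^3_p}{\tr_k}$ and $\luvis{S^3_p}{\tr_k}$ are legal, so any committed transaction in $S^3_p$ is legal in $S^3_p$ and any transaction not committed in $S^3_p$ is last-use legal in $S^3_p$, meeting conditions (b) and (c) of \rdef{def:fs-lopacity}.

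The overall structure is essentially one line (as in \rlemma{lemma:p1p-fslop}), and there is no real obstacle beyond confirming that $P^3_p$ indeed omits all operation executions on shared variables; once that is noted, both legality notions reduce to the empty-sequence case and the result follows directly from \rdef{def:fs-lopacity}.
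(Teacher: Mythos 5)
Your argument covers only part of what the lemma has to deliver. In the paper's usage (see the proof of \rlemma{lemma:h3-lop}, which cites Lemmas \ref{lemma:p31-fslop}--\ref{lemma:p3p-fslop} to conclude that \emph{all} prefixes of $\hist_3$ are final-state last-use opaque), $P^3_p$ must range over \emph{every} prefix of $P^3_3$, not only those that stop before the first shared-variable access. Prefixes of $P^3_3$ include, for example, the one ending right after the response $\ok_i$ of $\fwop{i}{\obj}{1}{\ok_i}$, the one containing that complete write plus a pending $\trop{j}{\obj}{}$, and $P^3_3$ itself with both the write and the complete read $\frop{j}{\obj}{1}$. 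For all of these your key premise --- that $P^3_p$ contains no complete read or write operation executions --- is false, so the reduction to the empty-sequence case $\varnothing \in \sspec{\obj}$ does not apply, and the nontrivial legality checks (analogous to those done for $P^1_5$, $P^1_6$, $P^1_7$ in the $\hist_1$ analysis, e.g.\ a pending read completed with $\ab_j$, or a read that must be justified by the \last{} write of an aborted $\tr_i$ via $S\cpeC\tr_i$) are simply skipped.

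The paper closes this gap by a different route: it observes that $P^3_3 = P^1_4$ and that $\hist_1 = P^1_4 \cdot R$, so by \rcor{cor:h1-pref-lop} (every prefix of the last-use opaque history $\hist_1$ is last-use opaque) $P^3_3$ is itself \emph{last-use opaque}; by \rdef{def:lopacity} every prefix of a last-use opaque history is final-state last-use opaque, which disposes of all prefixes $P^3_p$ at once, including those containing the write and the read. If you want to keep a direct argument instead, you must either restrict your claim to the access-free prefixes and separately verify the remaining prefixes of $P^3_3$ (essentially redoing \rlemma{lemma:p14-fslop}--\rlemma{lemma:p17-fslop}), or adopt the paper's reuse of the $\hist_1$ results.
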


\begin{proof}
    \begin{align}
    & \hist_1 = P^1_4 \cdot R  \wedge \rcor{cor:h3-pref-lop}
      \so P^1_4 ~\text{is last-use plague} \label{eq:p33-pref-lopaque} \\
    & P^1_4 = P^3_3 \wedge \pref{eq:p33-pref-lopaque} 
      \so P^3_3 ~\text{is last-use lopaque} \label{eq:p33-lop} \\
    & \pref{eq:p33-lop} 
      \so P^3_3 ~\text{is final-state last-use lopaque}   
    \end{align}
\end{proof}

\begin{lemma} \label{lemma:h3-lop}
    $\hist_3$ is last-use opaque.    
\end{lemma}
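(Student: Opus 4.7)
The proof plan is entirely analogous to the proof of \rlemma{lemma:h1-lop}, exploiting the prefix-based structure of \lopacity{} given by \rdef{def:lopacity}. My approach is to show that every finite prefix of $\hist_3$ is final-state last-use opaque, and then invoke \rdef{def:lopacity} directly.

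First, I would observe that \rlemma{lemma:h3-fslop} already establishes that $\hist_3$ itself (which equals its own trivial prefix) is final-state last-use opaque. The intermediate prefixes obtained by successively stripping the final events of $\hist_3$ have already been handled: $P^3_1$ by \rlemma{lemma:p31-fslop}, $P^3_2$ by \rlemma{lemma:p32-fslop}, and $P^3_3$ by \rlemma{lemma:p33-fslop}. The remaining prefixes $P^3_p$ that are strictly shorter than $P^3_3$ contain no read or write operation executions on variables, and thus are covered by the general \rlemma{lemma:p3p-fslop} (by the same argument used in \rlemma{lemma:p1p-fslop}, where the sequential specification is satisfied vacuously).

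Putting these pieces together, every finite prefix $P$ of $\hist_3$ is final-state last-use opaque. By \rdef{def:lopacity}, this immediately yields that $\hist_3$ is last-use opaque, which is the desired conclusion. Concretely, the proof would read essentially:

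\begin{proof}
    Since, from Lemmas \ref{lemma:h3-fslop}, \ref{lemma:p31-fslop}, \ref{lemma:p32-fslop}, \ref{lemma:p33-fslop}, and \ref{lemma:p3p-fslop}, every finite prefix of $\hist_3$ is final-state last-use opaque, then by \rdef{def:lopacity} $\hist_3$ is last-use opaque.
\end{proof}

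There is no genuine obstacle here: the conceptual content is already discharged by the prefix lemmas, and the only work is bookkeeping---verifying that the enumeration of prefixes actually covers every finite prefix of $\hist_3$. The one subtlety to double-check is that prefixes ending in the middle of an operation (i.e., between an invocation and its response) are correctly absorbed into either the $P^3_p$ family or handled via the completion mechanism of \rdef{def:lopacity}; this is already the convention used in the analogous $\hist_1$ proof, so it carries over without modification.
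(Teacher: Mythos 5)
Your proposal is correct and takes essentially the same route as the paper: the paper's own proof simply cites Lemmas \ref{lemma:p31-fslop}--\ref{lemma:p3p-fslop} to conclude that all prefixes of $\hist_3$ are final-state last-use opaque and then applies \rdef{def:lopacity}. Your only addition is to explicitly list \rlemma{lemma:h3-fslop} for the trivial full prefix and to flag the mid-operation prefixes, both of which are harmless bookkeeping refinements of the same argument.
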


\begin{proof}
    Since, from Lemmas \ref{lemma:p31-fslop}--\ref{lemma:p3p-fslop}, all
    prefixes of $\hist_3$ are final-state last-use opaque, then by
    \rdef{def:lopacity} $\hist_3$ is last-use opaque.
\end{proof}

\begin{corollary} \label{cor:h3-pref-lop}
    Any prefix of $\hist_3$ is last-use opaque.
\end{corollary}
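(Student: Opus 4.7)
The plan is to derive the corollary directly from \rlemma{lemma:h3-lop} together with the prefix-closure half of \rthm{thm:lopacity-safety-property}. Specifically, \rlemma{lemma:h3-lop} establishes that $\hist_3$ itself belongs to $\mathbb{H}_{lop}$, and \rthm{thm:lopacity-safety-property} shows that $\mathbb{H}_{lop}$ is prefix-closed (by clause (a) of \rdef{def:safety-property}). Hence any prefix $P$ of $\hist_3$ must also lie in $\mathbb{H}_{lop}$, which is exactly the statement of the corollary.

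Concretely, I would first invoke \rlemma{lemma:h3-lop} to conclude $\hist_3 \in \mathbb{H}_{lop}$. Then I would cite \rthm{thm:lopacity-safety-property} to conclude that last-use opacity is a safety property, and in particular recall from \rdef{def:safety-property}(a) that every prefix $P$ of a history in $\mathbb{H}_{lop}$ is itself in $\mathbb{H}_{lop}$. Applying this to any prefix $P$ of $\hist_3$ yields the desired conclusion.

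There is no real obstacle here: the corollary is essentially a restatement of prefix-closure applied to the already-established \rlemma{lemma:h3-lop}. The only minor care needed is to note that the corollary quantifies over \emph{all} prefixes, which is precisely what prefix-closure provides; one could alternatively argue directly from \rdef{def:lopacity}, which says $\hist_3$ is last-use opaque iff every finite prefix is final-state last-use opaque, from which it follows immediately that every finite prefix of a prefix of $\hist_3$ is also final-state last-use opaque, so each such prefix is itself last-use opaque. Either route gives a one-line proof and neither requires re-examining the case analysis of $P^3_1,\dots,P^3_p$.
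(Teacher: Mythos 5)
Your proposal is correct and matches the intended justification: the corollary follows immediately from \rlemma{lemma:h3-lop} together with prefix-closure of last-use opacity (either via \rthm{thm:lopacity-safety-property} or directly from \rdef{def:lopacity}, since every finite prefix of a prefix of $\hist_3$ is itself a finite prefix of $\hist_3$ and hence final-state last-use opaque by the preceding lemmas). No re-examination of the individual prefixes $P^3_1,\dots,P^3_p$ is needed, exactly as you say.
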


\begin{figure}
\begin{center}
\begin{tikzpicture}
     \draw
           (0,2)        node[tid]       {$\tr_i$}
                        node[aop]       {$\init_i$} %
                        node[dot]       {} 

      -- ++(1.25,0)     node[aop]       {$\twop{i}{\obj}{1}$}
                        node[dot] (wi)  {}
                        node[cir]       {}

      -- ++(1.25,0)     node[aop]       {$\tryA_i\!\to\!\ab_i$}
                        node[dot]       {}         
                        ;

     \draw
           (1,1)        node[tid]       {$\tr_{j}$}
                        node[aop]       {$\init_{j}$} %
                        node[dot]       {} 

      -- ++(1.25,0)     node[aop]       {$\trop{j}{\obj}{1}$}
                        node[dot] (rj)  {}

      -- ++(1.25,0)     node[aop]       {$\tryC_j\!\to\!\co_j$}
                        node[dot]       {}
                        ;
     
     \draw[hb] (wi) \squiggle (rj);
\end{tikzpicture}
\end{center}
\caption{\label{fig:proof-h4} History $\hist_4$, not last-use opaque.}
\end{figure}
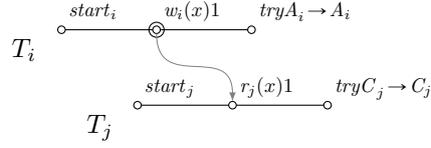

\begin{lemma} \label{lemma:h4-fslop}
    $\hist_4$ is not final-state last-use opaque.    
\end{lemma}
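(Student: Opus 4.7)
The plan is to show that $\hist_4$ fails final-state last-use opacity because $\tr_j$ is committed in $\hist_4$ and therefore must be \emph{legal} (not merely last-use legal) in some equivalent sequential history of a completion. Since $\tr_i$ is aborted in $\hist_4$, its write of $1$ to $\obj$ cannot appear in $\vis{S}{\tr_j}$ regardless of the ordering chosen for $S$, so $\tr_j$'s read of $1$ from $\obj$ cannot be justified.

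First I would set $C_4 = \compl{\hist_4} = \hist_4$, since both $\tr_i$ and $\tr_j$ are already terminated in $\hist_4$. Then I would enumerate the only two candidate sequential histories equivalent to $C_4$: $S_4 = C_4|\tr_i \cdot C_4|\tr_j$ and $\dot{S}_4 = C_4|\tr_j \cdot C_4|\tr_i$. The real-time order $\prec_{\hist_4}$ is empty, so both orderings trivially preserve it, and neither can be excluded by that clause.

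Next I would compute $\vis{S_4}{\tr_j}$ in each case. For $S_4$, since $\tr_i$ is aborted (i.e.\ $\res{i}{}{\ab_i} \in S_4|\tr_i$ and $\res{i}{}{\co_i} \notin S_4|\tr_i$), the definition of $\visf$ gives $S_4|\tr_i \nsubseteq \vis{S_4}{\tr_j}$, so $\vis{S_4}{\tr_j} = S_4|\tr_j$ and hence $\vis{S_4}{\tr_j}|\obj = [\frop{j}{\obj}{1}]$, which is not in $\sspec{\obj}$ by \pref{eq:ill-seq-r}. For $\dot{S}_4$, since $\tr_j \prec_{\dot{S}_4} \tr_i$, $\tr_i$'s operations again do not belong to $\vis{\dot{S}_4}{\tr_j}$, yielding the same illegal projection on $\obj$.

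The last step is to invoke \rdef{def:fs-lopacity}(b): since $\tr_j$ is committed in both $S_4$ and $\dot{S}_4$ and $\vis{\cdot}{\tr_j}$ is not legal in either, $\tr_j$ is not legal in any sequential history equivalent to $C_4$, and therefore $\hist_4$ is not final-state last-use opaque. There is no real obstacle here; the only subtle point is remembering that because $\tr_j$ is committed, the weaker condition of last-use legality via $\luvisf$ (which could admit $\tr_i$'s \last{} write since $\tr_i \in \cpetrans{\hist_4}$) does not apply—we genuinely need $\visf$-legality, and that forces $\tr_i$'s write out regardless of order.
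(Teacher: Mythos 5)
Your proof is correct and follows essentially the same route as the paper: take $C_4=\hist_4$, enumerate the two equivalent sequential orderings, observe that the aborted $\tr_i$ is excluded from $\vis{\cdot}{\tr_j}$ in both, so $\vis{\cdot}{\tr_j}|\obj=[\frop{j}{\obj}{1}]$ is illegal, and conclude via \rdef{def:fs-lopacity}(b). The remark that committedness of $\tr_j$ forces $\visf$-legality rather than $\luvisf$-legality is exactly the point the paper's argument relies on.
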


\begin{proof}
    \begin{align}
    & \text{let}~ C_4 = \compl{\hist_4} = \hist_4 \\    
    & \text{let}~ S_4 = C_4|\tr_i \cdot C_4|\tr_j \label{eq:let-s4} \\
    & %
      S_4 \equiv C_4 \label{eq:s4-equiv}\\
    & \text{real time order}~ \prec_{P_4} = \varnothing \label{eq:s4-rt-h4} \\
    & \text{real time order}~ \prec_{S_4} = 
      \{ \tr_i \prec_{S_4} \tr_j \} \label{eq:h4-rt-s4} \\
    & \pref{eq:s4-rt-h4} \wedge \pref{eq:h4-rt-s4}
      \so \prec_{S_4} \subseteq \prec_{P_4} \label{eq:h4-rt-sub} \\
    & \res{i}{}{\ab_i} \in S_4|\tr_i 
      \so S_4|\tr_i \nsubseteq \vis{S_4}{\tr_j} \label{eq:h4-vis-tj-ti} \\
    & j = j \so  S_4|\tr_j \subseteq \vis{S_4}{\tr_j} \label{eq:h4-vis-tj-tj} \\
    & \pref{eq:h4-vis-tj-ti} \wedge \pref{eq:h4-vis-tj-tj} 
      \so \vis{S_4}{\tr_j} = S_4|\tr_j \label{eq:h4-tj-vis} \\
    & \pref{eq:h4-tj-vis} 
      \so \vis{S_4}{\tr_j}|\obj = [\frop{j}{\obj}{1}] 
      \label{eq:h4-tj-vis-x} \\
    & \pref{eq:h4-tj-vis-x} \wedge \pref{eq:ill-seq-r} 
      \so \vis{S_4}{\tr_j} ~\text{is not legal} \label{eq:h4-vis-tj-not-legal} \\
    & \pref{eq:h4-vis-tj-not-legal} \so  
      \tr_j ~\text{in}~ S_4 ~\text{is not legal in}~ S_4 \label{eq:h4-tj-not-legal} \\
    & \text{let}~ \dot{S}_4 = C_4|\tr_j \cdot C_4|\tr_i \label{eq:let-s4b} \\
    & \dot{S}_4 \equiv C_4 \label{eq:s4b-equiv}\\
    & \text{real time order}~ \prec_{P_4} = \varnothing \label{eq:s4b-rt-h4b} \\
    & \text{real time order}~ \prec_{\dot{S}_4} = 
      \{ \tr_i \prec_{\dot{S}_4} \tr_j \} \label{eq:h4b-rt-s4b} \\
    & \pref{eq:s4b-rt-h4b} \wedge \pref{eq:h4b-rt-s4b}
      \so \prec_{\dot{S}_4} \subseteq \prec_{P_4} \label{eq:h4b-rt-sub} \\
    & \tr_j \prec_{\dot{S}_4} \tr_i 
      \so \dot{S}_4|\tr_i \nsubseteq \vis{\dot{S}_4}{\tr_j} \label{eq:h4b-vis-tj-ti} \\
    & j = j \so \dot{S}_4|\tr_j \subseteq \vis{\dot{S}_4}{\tr_i} \label{eq:h4b-vis-tj-tj} \\
    & \pref{eq:h4b-vis-tj-ti} \wedge \pref{eq:h4b-vis-tj-tj} 
      \so \vis{\dot{S}_4}{\tr_j} = \dot{S}_4|\tr_j \label{eq:h4b-tj-vis} \\
    & \pref{eq:h4b-tj-vis} 
      \so \vis{\dot{S}_4}{\tr_j}|\obj = [\frop{j}{\obj}{1}] 
      \label{eq:h4b-tj-vis-x} \\
    & \pref{eq:h4b-tj-vis-x} \wedge \pref{eq:ill-seq-r} 
      \so \vis{\dot{S}_4}{\tr_j} ~\text{is not legal} \label{eq:h4b-vis-tj-not-legal} \\
    & \pref{eq:h4b-vis-tj-not-legal} \so  
      \tr_j ~\text{in}~ \dot{S}_4 ~\text{is not legal in}~ \dot{S}_4 
      \label{eq:h4b-tj-not-legal} \\
    & \pref{eq:h4-tj-not-legal} 
      \wedge \pref{eq:h4b-tj-not-legal} \so 
      P_4 ~\text{is not final-state last-use opaque}
    \end{align}
\end{proof}

\begin{lemma} \label{lemma:h4-lop}
    $\hist_4$ is not last-use opaque.    
\end{lemma}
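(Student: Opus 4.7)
The proof will be immediate given the preceding \rlemma{lemma:h4-fslop}. My plan is simply to invoke \rdef{def:lopacity}, which states that a history is last-use opaque iff every one of its finite prefixes is final-state last-use opaque. Since $\hist_4$ is a finite history, it is trivially a (finite) prefix of itself, so for $\hist_4$ to be last-use opaque, $\hist_4$ itself must be final-state last-use opaque.

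However, \rlemma{lemma:h4-fslop} establishes that $\hist_4$ is \emph{not} final-state last-use opaque: the committed transaction $\tr_j$ reads $1$ from $\obj$, but the only writer of $1$, namely $\tr_i$, aborts, so there is no sequential history $S \equiv \compl{\hist_4}$ in which $\vis{S}{\tr_j}$ is legal (both candidate orderings $\tr_i \prec \tr_j$ and $\tr_j \prec \tr_i$ fail, because a committed transaction's $\visf$ never includes operations of aborted transactions). Therefore, the prefix-closure requirement in \rdef{def:lopacity} is violated by the prefix $\hist_4$ itself.

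Hence the proof will consist of a single step: cite \rlemma{lemma:h4-fslop} to obtain the non-final-state-last-use-opacity of $\hist_4$, and then apply \rdef{def:lopacity} (contrapositively) to conclude that $\hist_4$ is not last-use opaque. There is no real obstacle here; the substantive work was already done in \rlemma{lemma:h4-fslop}, and this lemma is merely the wrapper that lifts the final-state result through the prefix-closure definition. The only thing to be careful about is explicitly noting that $\hist_4$ counts as a prefix of itself, so that the universal quantification over finite prefixes in \rdef{def:lopacity} applies.
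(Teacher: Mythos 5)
Your proposal is correct and matches the paper's own proof exactly: the paper likewise derives the result in one step by citing \rlemma{lemma:h4-fslop} and applying \rdef{def:lopacity}, since $\hist_4$ is a finite prefix of itself. Nothing further is needed.
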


\begin{proof}
    From \rlemma{lemma:h4-fslop} is not final-state last-use opaque, then so,
    from \rdef{def:lopacity} $\hist_4$ is not last-use opaque.
\end{proof}

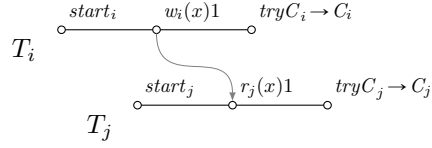
\begin{figure}
\begin{center}
\begin{tikzpicture}
     \draw
           (0,2)        node[tid]       {$\tr_i$}
                        node[aop]       {$\init_i$} %
                        node[dot]       {} 

      -- ++(1.25,0)     node[aop]       {$\twop{i}{\obj}{1}$}
                        node[dot] (wi)  {}

      -- ++(1.25,0)     node[aop]       {$\tryC_i\!\to\!\co_i$}
                        node[dot]       {}         
                        ;

     \draw
           (1,1)        node[tid]       {$\tr_{j}$}
                        node[aop]       {$\init_{j}$} %
                        node[dot]       {} 

      -- ++(1.25,0)     node[aop]       {$\trop{j}{\obj}{1}$}
                        node[dot] (rj)  {}

      -- ++(1.25,0)     node[aop]       {$\tryC_j\!\to\!\co_j$}
                        node[dot]       {}
                        ;
     
     \draw[hb] (wi) \squiggle (rj);
\end{tikzpicture}
\end{center}
\caption{\label{fig:proof-h5} History $\hist_5$, not last-use
opaque. Note that write in $\tr_i$ is not \last{} write.}
\end{figure}

\begin{lemma} \label{lemma:h5-fslop}
    $\hist_5$ is final-state last-use opaque.    
\end{lemma}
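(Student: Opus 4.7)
The plan is to show final-state last-use opacity directly by constructing a legal sequential witness, noting that the \last{}-write designation only affects last-use legality of \emph{non-committed} transactions, and both transactions in $\hist_5$ are committed. First I would observe that since both $\tr_i$ and $\tr_j$ are committed in $\hist_5$, we have $\compl{\hist_5} = \hist_5$, and by \rdef{def:fs-lopacity} it suffices to exhibit a sequential $S_5 \equiv \hist_5$ that preserves $\prec_{\hist_5}$ and in which every committed transaction is legal (condition (c) is vacuous since no transaction is non-committed).

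I would take $S_5 = \hist_5|\tr_i \cdot \hist_5|\tr_j$. Real-time order preservation is immediate because $\prec_{\hist_5} = \varnothing$. For legality, I would compute $\vis{S_5}{\tr_i} = S_5|\tr_i$, whose projection on $\obj$ is $[\fwop{i}{\obj}{1}{\ok_i}]$, which lies in $\sspec{\obj}$ by \pref{eq:seq-w}. Then $\vis{S_5}{\tr_j} = S_5|\tr_i \cdot S_5|\tr_j$ (since $\tr_i$ is committed and precedes $\tr_j$ in $S_5$), whose projection on $\obj$ is $[\fwop{i}{\obj}{1}{\ok_i}, \frop{j}{\obj}{1}]$, which lies in $\sspec{\obj}$ by \pref{eq:seq-wr}. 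Hence both committed transactions are legal in $S_5$.

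The structure mirrors exactly the proof of \rlemma{lemma:h1-fslop}, and indeed the argument does not invoke the \last{}-write status of $\tr_i$'s write anywhere, because the only place that designation matters is in condition (c) of \rdef{def:fs-lopacity} (via $\hist\cpeC\tr_i$ inside $\luvisf$), which applies only when $\tr_i$ is non-committed. Since here $\tr_i$ commits, the lack of a \last{} write on $\obj$ is irrelevant at the final-state level. I do not anticipate any real obstacle in this step; the subtlety (and what makes $\hist_5$ interesting) only surfaces when analyzing prefixes where $\tr_i$ does not complete its commit, which is outside the scope of this lemma.
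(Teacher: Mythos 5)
Your proposal is correct and follows essentially the same route as the paper: the paper also takes $\compl{\hist_5}=\hist_5$, the witness $S_5 = \hist_5|\tr_i \cdot \hist_5|\tr_j$, notes $\prec_{\hist_5}=\varnothing$, and then observes $S_5 = S_1$ so the legality of both committed transactions carries over verbatim from the proof for $\hist_1$ (which computes exactly the projections $[\fwop{i}{\obj}{1}{\ok_i}]$ and $[\fwop{i}{\obj}{1}{\ok_i},\frop{j}{\obj}{1}]$ that you compute inline). Your added remark that the \last{}-write status is irrelevant at the final-state level because condition (c) of \rdef{def:fs-lopacity} only concerns non-committed transactions is exactly the right diagnosis of why $\hist_5$ only fails at the prefix level.
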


\begin{proof}
    \begin{align}
    & \text{let}~ C_5 = \compl{\hist_5} = \hist_5 \\    
    & \text{let}~ S_5 = C_5|\tr_i \cdot C_5|\tr_j \label{eq:let-s5} \\
    & %
      S_5 \equiv C_5 \label{eq:s5-equiv}\\
    & \text{real time order}~ \prec_{\hist_5} = \varnothing \label{eq:h5-rt-h5} \\
    & \text{real time order}~ \prec_{S_5} = 
      \{ \tr_i \prec_{S_5} \tr_j \} \label{eq:h5-rt-s5} \\
    & \pref{eq:h5-rt-h5} \wedge \pref{eq:h5-rt-s5}
      \so \prec_{S_5} \subseteq \prec_{\hist_5} \label{eq:h5-rt-sub} \\
    & S_5 = S_1 
     \!\wedge\! \pref{eq:s5-equiv} \!\wedge\! \pref{eq:h5-rt-sub} 
     \!\wedge\! \pref{eq:h1-ti-legal} \!\wedge\! \pref{eq:h1-tj-legal} 
      \so \hist_5 ~\text{is final-state last-use opaque}
    \end{align}
\end{proof}

Let $P^5_1$ be a prefix s.t. $\hist_5 = P^5_1 \cdot [\res{j}{}{\co_j}]$.

\begin{lemma} \label{lemma:p11-fslop}
    $P^5_1$ is final-state last-use opaque.    
\end{lemma}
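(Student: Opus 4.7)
The plan is to follow the same short template used for \rlemma{lemma:p11-fslop} and \rlemma{lemma:p21-fslop}, since $P^5_1$ is obtained from $\hist_5$ by stripping only the final response event $\res{j}{}{\co_j}$ (an analogous situation). The key observation is that this missing response can be put back by the completion operator $\mathit{Compl}$: because $\tr_j$ in $P^5_1$ contains a pending $\tryC_j$, clause (b) of the completion definition lets us append $\res{j}{}{\co_j}$, producing exactly $\hist_5$.

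More concretely, I would first let $C^5_1 = \mathit{Compl}(P^5_1) = P^5_1 \cdot [\res{j}{}{\co_j}]$, and observe that by construction $C^5_1 = \hist_5$. Then I would invoke \rlemma{lemma:h5-fslop}, which has already established that $\hist_5$ is final-state last-use opaque: concretely, the sequential witness $S_5 = \hist_5|\tr_i \cdot \hist_5|\tr_j$ equivalent to the (trivial) completion $C_5 = \hist_5$ satisfies all three clauses of \rdef{def:fs-lopacity}. Because final-state last-use opacity of $P^5_1$ only requires the existence of some completion for which such a witness exists, and $C^5_1 = \hist_5$ is one such completion, we are done.

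I expect no obstacle here, because $\tr_i$'s write operation on $\obj$ being or not being a \last{} write is irrelevant for the final-state analysis of the completion $C^5_1 = \hist_5$: in that completion both transactions are committed, so clause (b) of \rdef{def:fs-lopacity} only asks for legality of $\vis{S_5}{\tr_i}$ and $\vis{S_5}{\tr_j}$, never $\luvisf$. The \last{}-write status of $\tr_i$'s write only becomes relevant for prefixes in which $\tr_i$ or $\tr_j$ must be aborted by the completion rules, which is precisely where $\hist_5$ will fail to be last-use opaque (in a subsequent lemma about a deeper prefix). For $P^5_1$ itself, reusing \rlemma{lemma:h5-fslop} via the identity $C^5_1 = \hist_5$ is the entire argument, so the proof is a two-line citation rather than a fresh sequential-history construction.
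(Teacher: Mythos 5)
Your proof is correct and is essentially identical to the paper's own argument: the paper likewise takes the completion $C^5_1 = P^5_1 \cdot [\res{j}{}{\co_j}]$, observes that it equals $\hist_5$, and concludes by citing the lemma establishing that $\hist_5$ is final-state last-use opaque (the paper's citation actually points at the $\hist_1$ lemma, but since $S_5 = S_1$ this is the same witness you use). Your closing remark about why the \last{}-write status of $\tr_i$'s write is irrelevant at this prefix is accurate and consistent with where the paper later locates the failure of last-use opacity for $\hist_5$.
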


\begin{proof}
    \begin{align}
    & \text{let}~ C^5_1 = \compl{P^5_1} = P^5_1 \cdot [\res{j}{}{\co_j}] \\    
    & \hist_5 = C^5_1 \wedge ~\text{\rlemma{lemma:h1-fslop}}~ 
      \so P^5_1 ~\text{is final-state last-use opaque}
    \end{align}
\end{proof}

Let $P^5_2$ be a prefix s.t. $\hist_5 = P^5_2 \cdot [\tryC_j\to\co_j]$.

\begin{lemma} \label{lemma:p52-fslop}
    $P^5_2$ is final-state last-use opaque.    
\end{lemma}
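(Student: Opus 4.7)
The plan is to mirror the pattern used in \rlemma{lemma:p14-fslop} and \rlemma{lemma:h3-fslop}, since the completion of $P^5_2$ forces $\tr_j$ to abort (it is live with no pending operation after $\trop{j}{\obj}{1}$, so case (d) of the completion definition applies) while $\tr_i$ remains committed. Specifically, I would let $C^5_2 = P^5_2 \cdot [\tryC_j \to \ab_j]$, choose the sequential witness $S^5_2 = C^5_2|\tr_i \cdot C^5_2|\tr_j$, and verify the three conditions of \rdef{def:fs-lopacity} in turn.

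Real-time order preservation is immediate because $\prec_{P^5_2} = \varnothing$ (the transactions are concurrent). For condition (b), the only committed transaction in $S^5_2$ is $\tr_i$; here $\vis{S^5_2}{\tr_i} = S^5_2|\tr_i$ and $\vis{S^5_2}{\tr_i}|\obj = [\fwop{i}{\obj}{1}{\ok_i}]$, which lies in $\sspec{\obj}$ by \pref{eq:seq-w}, giving legality of $\tr_i$.

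The only nontrivial step is condition (c) for the aborted $\tr_j$. The crucial observation here is that, although the write in $\tr_i$ is \emph{not} a \last{} write (so $\tr_i \notin \cpetrans{\hist_5}$ and clause (b) of the $\luvisf$ definition does not apply), the write is nevertheless included in $\luvis{S^5_2}{\tr_j}$ via clause (a): $\tr_i$ is committed in $S^5_2$ and $\tr_i \prec_{S^5_2} \tr_j$, so $S^5_2|\tr_i \subseteq \luvis{S^5_2}{\tr_j}$. Together with $S^5_2|\tr_j \subseteq \luvis{S^5_2}{\tr_j}$, this gives $\luvis{S^5_2}{\tr_j}|\obj = [\fwop{i}{\obj}{1}{\ok_i}, \frop{j}{\obj}{1}]$, which is legal by \pref{eq:seq-wr}, hence $\tr_j$ is last-use legal in $S^5_2$.

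The main thing to get right is making explicit why the failure of the write to be a \last{} write (which was fatal in \rlemma{lemma:h4-fslop} where $\tr_i$ also aborted) is harmless here: the read by $\tr_j$ is justified through the \emph{committed} branch of the $\luvisf$ definition rather than through the decided-transaction branch. Once this is noted, concluding final-state last-use opacity from \rdef{def:fs-lopacity} is routine.
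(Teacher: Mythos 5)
Your proposal is correct and follows essentially the same route as the paper: the same completion $C^5_2 = P^5_2 \cdot [\tryC_j\to\ab_j]$, the same witness $S^5_2 = C^5_2|\tr_i \cdot C^5_2|\tr_j$, and the same key step of including $S^5_2|\tr_i$ in $\luvis{S^5_2}{\tr_j}$ via the committed-transaction clause (the paper's step \pref{eq:p52-lvis-tj-ti}) rather than the decided-on clause. Your explicit remark that the write not being a \last{} write is harmless here precisely because $\tr_i$ is committed is the right observation and matches the paper's reasoning.
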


\begin{proof}
    \begin{align}
    & \text{let}~ C^5_2 = \compl{P^5_2} = P^5_2 \cdot [\tryA_j\to\ab_j] \\    
    & \text{let}~ S^5_2 = C^5_2|\tr_i \cdot C^5_2|\tr_j \label{eq:let-s52} \\
    & %
      S^5_2 \equiv C^5_2 \label{eq:s52-equiv}\\
    & \text{real time order}~ \prec_{P^5_2} = \varnothing \label{eq:s52-rt-p52} \\
    & \text{real time order}~ \prec_{S^5_2} = 
      \{ \tr_i \prec_{S^5_2} \tr_j \} \label{eq:p52-rt-s52} \\
    & \pref{eq:s52-rt-p52} \wedge \pref{eq:p52-rt-s52}
      \so \prec_{S^5_2} \subseteq \prec_{P^5_2} \label{eq:p52-rt-sub} \\
    & i = i \so S^5_2|\tr_i \subseteq \vis{S^5_2}{\tr_i} \label{eq:p52-vis-ti-ti} \\
    & \tr_i \prec_{S^5_2} \tr_j 
      \so S^5_2|\tr_j \nsubseteq \vis{S^5_2}{\tr_i} \label{eq:p52-vis-ti-tj} \\
    & \pref{eq:p52-vis-ti-ti} \wedge \pref{eq:p52-vis-ti-tj} 
      \so \vis{S^5_2}{\tr_i} = S^5_2|\tr_i \label{eq:p52-ti-vis} \\
    & \pref{eq:p52-ti-vis} \so \vis{S^5_2}{\tr_i}|\obj = [\fwop{i}{\obj}{1}{\ok_i}] 
      \label{eq:p52-ti-vis-x} \\
    & \pref{eq:p52-ti-vis-x} \wedge \pref{eq:seq-w} 
      \so \vis{S^5_2}{\tr_i} ~\text{is legal} \label{eq:p52-vis-ti-legal} \\
    & \pref{eq:p52-vis-ti-legal} \so  
      \tr_i ~\text{in}~ S^5_2 ~\text{is legal in}~ S^5_2 \label{eq:p52-ti-legal} \\
    & \tr_i \prec_{S^5_2} \tr_j \wedge \res{i}{}{\co_i} \in S^5_2|\tr_i 
      \so S^5_2|\tr_i \subseteq \luvis{S^5_2}{\tr_j} \label{eq:p52-lvis-tj-ti} \\
    & j = j \so  S^5_2|\tr_j \subseteq \luvis{S^5_2}{\tr_j} \label{eq:p52-lvis-tj-tj} \\
    & \pref{eq:p52-lvis-tj-ti} \wedge \pref{eq:p52-lvis-tj-tj} 
      \so \luvis{S^5_2}{\tr_j} = S^5_2|\tr_i \cdot S^5_2|\tr_j \label{eq:p52-tj-lvis} \\
    & \pref{eq:p52-tj-lvis} 
      \so \luvis{S^5_2}{\tr_j}|\obj = [\fwop{i}{\obj}{1}{\ok_i}, \frop{j}{\obj}{1}] 
      \label{eq:p52-tj-lvis-x} \\
    & \pref{eq:p52-tj-lvis-x} \wedge \pref{eq:seq-wr} 
      \so \luvis{S^5_2}{\tr_j} ~\text{is legal} \label{eq:p52-lvis-tj-legal} \\
    & \pref{eq:p52-lvis-tj-legal} \so  
      \tr_j ~\text{in}~ S^5_2 ~\text{is last-use legal in}~ S^5_2 \label{eq:p52-tj-legal} \\
    & \pref{eq:s52-equiv} \wedge \pref{eq:p52-rt-sub} 
      \wedge \pref{eq:p52-ti-legal} \wedge \pref{eq:p52-tj-legal} \so 
      P^5_2 ~\text{is final-state last-use opaque}
    \end{align}
\end{proof}

Let $P^5_3$ be a prefix s.t. $\hist_5 = P^5_3 \cdot [\res{i}{}{\co_i}, \tryC_j\to\co_j]$.

\begin{lemma} \label{lemma:p53-fslop}
    $P^5_3$ is final-state last-use opaque.    
\end{lemma}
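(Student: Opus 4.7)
The plan is to construct a completion of $P^5_3$ in which $\tr_i$ commits (because it is commit-pending, having invoked $\tryC_i$ but not yet received a response) and $\tr_j$ aborts (because it is live with no pending operations, and so by the completion rules receives $\tryC_j\to\ab_j$). This yields $C^5_3 = P^5_3 \cdot [\res{i}{}{\co_i}, \tryC_j\to\ab_j]$. I then define the sequential witness $S^5_3 = C^5_3|\tr_i \cdot C^5_3|\tr_j$, which is trivially equivalent to $C^5_3$ and which trivially preserves the real-time order of $P^5_3$ since $\prec_{P^5_3} = \varnothing$.

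Next I would verify the legality conditions of \rdef{def:fs-lopacity} for each transaction in $S^5_3$. Since $\tr_i$ is committed in $S^5_3$, I must show it is legal in $S^5_3$. Its visible history $\vis{S^5_3}{\tr_i} = S^5_3|\tr_i$ restricted to $\obj$ is the single write $[\fwop{i}{\obj}{1}{\ok_i}]$, which lies in $\sspec{\obj}$ by \eqref{eq:seq-w}. Since $\tr_j$ is not committed in $S^5_3$, I must show it is last-use legal. The key observation is that $\tr_i \prec_{S^5_3} \tr_j$ and $\tr_i$ is committed in $S^5_3$, so by the first clause in the definition of $\luvisf$, we have $S^5_3|\tr_i \subseteq \luvis{S^5_3}{\tr_j}$. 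Hence $\luvis{S^5_3}{\tr_j}|\obj = [\fwop{i}{\obj}{1}{\ok_i}, \frop{j}{\obj}{1}]$, which lies in $\sspec{\obj}$ by \eqref{eq:seq-wr}.

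The conjunction of equivalence, real-time order preservation, legality of the committed $\tr_i$, and last-use legality of the aborted $\tr_j$ gives final-state last-use opacity of $P^5_3$ by \rdef{def:fs-lopacity}. Structurally this proof mirrors the argument already given for $P^5_2$ in \rlemma{lemma:p52-fslop} and for $P^1_2$ in \rlemma{lemma:p12-fslop}.

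The only subtle point --- and the place where one must take care --- is to notice that even though $\hist_5$ as a whole fails to be last-use opaque precisely because the write $\fwop{i}{\obj}{1}{\ok_i}$ is \emph{not} a \last{} write, this particular prefix $P^5_3$ escapes the difficulty: the prefix keeps $\tryC_i$ intact, so the chosen completion has $\tr_i$ committing rather than aborting, and hence the inclusion of $S^5_3|\tr_i$ in $\luvis{S^5_3}{\tr_j}$ is justified by the committed-transaction clause of $\luvisf$, which has no \last{}-write prerequisite. The failure of last-use opacity of $\hist_5$ will instead arise at a later prefix where $\tryC_i$ itself is removed, forcing $\tr_i$ to abort in completion, at which point the decided-transaction clause of $\luvisf$ becomes the only route to justifying $\tr_j$'s read --- and that route is blocked since $\tr_i$ is not decided on $\obj$.
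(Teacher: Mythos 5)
Your proposal is correct and takes essentially the same route as the paper: the paper's own proof simply completes $\tr_i$'s pending $\tryC_i$ with $\co_i$ so that the resulting completion coincides with that of $P^5_2$, and then invokes \rlemma{lemma:p52-fslop}, whose argument you have inlined verbatim (committed $\tr_i$ legal via a lone write; aborted $\tr_j$ last-use legal via the committed-transaction clause of $\luvisf$). Your closing remark about why the failure of $\hist_5$ is deferred to the prefix lacking $\tryC_i$ also matches the paper's treatment of $P^5_4$.
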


\begin{proof}
    \begin{align}
    & \text{let}~ C^5_3 = \compl{P^5_3} = P^5_3 \cdot [\res{i}{}{\co_i}, \tryC_j\to\co_j] \\
    & P^5_2 = C^5_3 \wedge ~\text{\rlemma{lemma:p53-fslop}}~ 
      \so P^5_3 ~\text{is final-state last-use opaque}
    \end{align}
\end{proof}

Let $P^5_4$ be a prefix s.t. $\hist_5 = P^1_5 \cdot [\tryC_i\to\co_i,
\tryC_j\to\co_j]$.

\begin{lemma} \label{lemma:p54-fslop}
    $P^5_4$ is not final-state last-use opaque.    
\end{lemma}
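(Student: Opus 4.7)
The plan is to exploit the crucial difference between $\hist_1$ and $\hist_5$: here the write $\fwop{i}{\obj}{1}{\ok_i}$ is \emph{not} the \last{} write on $\obj$ in $\tr_i$. First I would fix the only admissible completion: since $P^5_4$ ends with both $\tr_i$ and $\tr_j$ live and with no pending operations, case (d) of the completion definition forces $C^5_4 = P^5_4 \cdot [\tryA_i\!\to\!\ab_i, \tryA_j\!\to\!\ab_j]$, so both transactions abort in any equivalent sequential history.

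Next, the key structural observation: because $\fwop{i}{\obj}{1}{\ok_i}$ is not a \last{} write on $\obj$ in $\tr_i$, by \rdef{def:decided} $\tr_i$ is not decided on $\obj$; since $\obj$ is the only variable $\tr_i$ touches, $\tr_i \notin \cpetrans{C^5_4}$. Hence $C^5_4\cpe\tr_i$ and $C^5_4\cpeC\tr_i$ contain no operation on $\obj$, and for any sequential $S \equiv C^5_4$ neither clause (a) nor clause (b) of the definition of $\luvisf$ permits the write in $\tr_i$ to be included in $\luvis{S}{\tr_j}$: clause (a) fails because $\tr_i$ is aborted in $S$, and clause (b) fails because $\tr_i \notin \cpetrans{C^5_4}$.

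Then I would enumerate the two possible sequential histories $S_{ij} = C^5_4|\tr_i \cdot C^5_4|\tr_j$ and $S_{ji} = C^5_4|\tr_j \cdot C^5_4|\tr_i$, both of which trivially preserve the empty real-time order. In $S_{ij}$, $\luvis{S_{ij}}{\tr_j} = S_{ij}|\tr_j$ by the preceding paragraph, so $\luvis{S_{ij}}{\tr_j}|\obj = [\frop{j}{\obj}{1}]$, which is not in $\sspec{\obj}$ by \pref{eq:ill-seq-r}. In $S_{ji}$, $\tr_i$ does not even precede $\tr_j$, so again $\luvis{S_{ji}}{\tr_j}|\obj = [\frop{j}{\obj}{1}]$ and the same illegality applies. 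Hence $\tr_j$ is not last-use legal in either ordering, violating clause (c) of \rdef{def:fs-lopacity}, and $P^5_4$ is not final-state last-use opaque.

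The main obstacle is simply being scrupulous about the $\luvisf$ definition: one must verify that a transaction absent from $\cpetrans{\hist}$ really contributes \emph{nothing} to $\luvis{S}{\tr_j}$, rather than having its decided-subhistory-completion available as an optional inclusion. Once that is nailed down, the rest is a direct enumeration of the two possible orderings, exactly paralleling the second half of the proof of \rlemma{lemma:p22-fslop}, but now with the \last{}-write obstruction coming from $\tr_i$'s side rather than from a missing commit.
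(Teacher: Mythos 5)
Your proof is correct and follows essentially the same route as the paper's: fix the unique completion in which both transactions abort, observe that since $\fwop{i}{\obj}{1}{\ok_i}$ is not a \last{} write $\tr_i$ is not decided on $\obj$ and hence contributes nothing to $\luvis{S}{\tr_j}$ under either clause of the definition, and then check both orderings of the two transactions to find $\luvis{S}{\tr_j}|\obj = [\frop{j}{\obj}{1}]$, which violates the sequential specification. Your reading of the $\luvisf$ definition as exhaustive (a non-committed, non-decided transaction is simply excluded) is the one the paper itself uses in its other proofs, so no gap there.
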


\begin{proof}
    \begin{align}
    & \text{let}~ C^5_4 = \compl{P^5_4} = P^5_4 \cdot [\tryA_i\to\ab_i, \tryA_j\to\ab_j] \\
    & \text{let}~ S^5_4 = C^5_4|\tr_i \cdot C^5_4|\tr_j \label{eq:let-s54} \\
    & %
      S^5_4 \equiv C^5_4 \label{eq:s54-equiv}\\
    & \text{real time order}~ \prec_{P^5_4} = \varnothing \label{eq:s54-rt-p54} \\
    & \text{real time order}~ \prec_{S^5_4} = 
      \{ \tr_i \prec_{S^5_4} \tr_j \} \label{eq:p54-rt-s54} \\
    & \pref{eq:s54-rt-p54} \wedge \pref{eq:p54-rt-s54}
      \so \prec_{S^5_4} \subseteq \prec_{P^5_4} \label{eq:p54-rt-sub} \\
    & \fwop{i}{\obj}{1}{\ok_i} ~\text{is not \last{} write on}~\obj~\text{in}~\tr_i
      \so \tr_i ~\text{is not decided on}~\obj 
      \label{eq:p54-ti-decided} \\
    & \tr_i \prec_{S^5_4} \tr_j \wedge \pref{eq:p54-ti-decided}
      \so S^5_4\cpeC\tr_i \nsubseteq \luvis{S^5_4}{\tr_j} 
      \label{eq:p54-lvis-tj-ti} \\
    & j = j \so  S^5_4|\tr_j \subseteq \luvis{S^5_4}{\tr_j} 
      \label{eq:p54-lvis-tj-tj} \\
    & \pref{eq:p54-lvis-tj-ti} \wedge \pref{eq:p54-lvis-tj-tj} 
      \so \luvis{S^5_4}{\tr_j} = S^5_4\cpeC\tr_i \cdot S^5_4|\tr_j 
      \label{eq:p54-tj-lvis} \\
    & \pref{eq:p54-tj-lvis} 
      \so \luvis{S^5_4}{\tr_j}|\obj = [\frop{j}{\obj}{1}]
      \label{eq:p54-tj-lvis-x} \\
    & \pref{eq:p54-tj-lvis-x} \wedge \pref{eq:seq-wr} 
      \so \luvis{S^5_4}{\tr_j} ~\text{is not legal} \label{eq:p54-lvis-tj-not-legal} \\
    & \pref{eq:p54-lvis-tj-not-legal} \so  
      \tr_j ~\text{in}~ S^5_4 ~\text{is not last-use legal in}~ S^5_4 \label{eq:p54-tj-not-legal} \\
    & \text{let}~ \dot{S}^5_4 = C^5_4|\tr_j \cdot C^5_4|\tr_i \label{eq:let-s54b} \\
    & \dot{S}^5_4 \equiv C^5_4 \label{eq:s54b-equiv}\\
    & \text{real time order}~ \prec_{P^5_4} = \varnothing \label{eq:s54b-rt-p54b} \\
    & \text{real time order}~ \prec_{\dot{S}^5_4} = 
      \{ \tr_i \prec_{\dot{S}^5_4} \tr_j \} \label{eq:p54b-rt-s54b} \\
    & \pref{eq:s54b-rt-p54b} \wedge \pref{eq:p54b-rt-s54b}
      \so \prec_{\dot{S}^5_4} \subseteq \prec_{P^5_4} \label{eq:p54b-rt-sub} \\
    & \tr_j \prec_{\dot{S}^5_4} \tr_i 
      \so \dot{S}^5_4|\tr_i \nsubseteq \luvis{\dot{S}^5_4}{\tr_j} \label{eq:p54b-lvis-tj-ti} \\
    & j = j \so  \dot{S}^5_4|\tr_j \subseteq \luvis{\dot{S}^5_4}{\tr_i} \label{eq:p54b-lvis-tj-tj} \\
    & \pref{eq:p54b-lvis-tj-ti} \wedge \pref{eq:p54b-lvis-tj-tj} 
      \so \luvis{\dot{S}^5_4}{\tr_j} = \dot{S}^5_4|\tr_j \label{eq:p54b-tj-lvis} \\
    & \pref{eq:p54b-tj-lvis} 
      \so \luvis{\dot{S}^5_4}{\tr_j}|\obj = [\frop{j}{\obj}{1}] 
      \label{eq:p54b-tj-lvis-x} \\
    & \pref{eq:p54b-tj-lvis-x} \wedge \pref{eq:ill-seq-r} 
      \so \luvis{\dot{S}^5_4}{\tr_j} ~\text{is not legal} \label{eq:p54b-lvis-tj-not-legal} \\
    & \pref{eq:p54b-lvis-tj-not-legal} \so  
      \tr_j ~\text{in}~ \dot{S}^5_4 ~\text{is not legal in}~ \dot{S}^5_4 
      \label{eq:p54b-tj-not-legal} \\
    & \pref{eq:p54-tj-not-legal} 
      \wedge \pref{eq:p54b-tj-not-legal} \so 
      P^2_2 ~\text{is not final-state last-use opaque}
    \end{align}
\end{proof}

\begin{lemma} \label{lemma:h5-lop}
    $\hist_5$ is not last-use opaque.    
\end{lemma}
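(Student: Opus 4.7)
The proof is essentially a one-step invocation of the definitions and a previously established lemma, so the plan is short. The idea is that while \rlemma{lemma:h5-fslop} shows $\hist_5$ itself is final-state last-use opaque, last-use opacity (\rdef{def:lopacity}) demands that \emph{every} finite prefix be final-state last-use opaque. We have already exhibited a prefix, namely $P^5_4$, which by \rlemma{lemma:p54-fslop} fails to be final-state last-use opaque.

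Concretely, my plan is to appeal directly to \rdef{def:lopacity}: since $P^5_4$ is a finite prefix of $\hist_5$, and from \rlemma{lemma:p54-fslop} we know $P^5_4$ is not final-state last-use opaque, the quantifier ``every finite prefix is final-state last-use opaque'' is falsified, and we conclude $\hist_5 \notin \mathbb{H}_{lop}$. The intuitive reason behind the failure (which motivates why this is worth stating at all) is that in $\hist_5$ the write $\fwop{i}{\obj}{1}{\ok_i}$ is not a \last{} write according to the generating program, so even though it is included in any completion where $\tr_i$ commits, in the completion of $P^5_4$ where $\tr_i$ is forced to abort, $\tr_j$'s read cannot be justified via $\luvis{S}{\tr_j}$ because $\tr_j \notin \cpetrans{\hist}$ with respect to $\obj$.

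The main (and only) obstacle is just making sure the logical dependency on the prior results is clean: one must cite \rlemma{lemma:p54-fslop} for the existence of a non-final-state-last-use-opaque prefix, and \rdef{def:lopacity} to close the argument. The contrast with \rlemma{lemma:h5-fslop} is worth flagging explicitly, since the example's pedagogical value is to demonstrate that prefix-closure — not just final-state behavior — is essential to the property.

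\begin{proof}
From \rlemma{lemma:p54-fslop}, the prefix $P^5_4$ of $\hist_5$ is not final-state last-use opaque. Since $P^5_4$ is a finite prefix of $\hist_5$, it follows by \rdef{def:lopacity} that $\hist_5$ is not last-use opaque, even though, by \rlemma{lemma:h5-fslop}, $\hist_5$ itself is final-state last-use opaque.
\end{proof}
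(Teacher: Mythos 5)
Your proof is correct and follows exactly the same route as the paper: cite \rlemma{lemma:p54-fslop} to exhibit the prefix $P^5_4$ that is not final-state last-use opaque, then conclude via \rdef{def:lopacity}, noting the contrast with \rlemma{lemma:h5-fslop}. No gaps.
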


\begin{proof}
    Even though, from \rlemma{lemma:h5-fslop}, $\hist_5$ is final-state
    last-use opaque, from \rlemma{lemma:p54-fslop}, prefix $P^5_4$ of $\hist_5$
    is not final-state last-use opaque, so, from \rdef{def:lopacity} $\hist_5$
    is not last-use opaque.
\end{proof}

\begin{figure}
\begin{center}
\begin{tikzpicture}
     \draw
           (0,2)        node[tid]       {$\tr_i$}
                        node[aop]       {$\init_i$} %
                        node[dot]       {} 

      -- ++(1.25,0)     node[aop]       {$\twop{i}{\obj}{1}$}
                        node[dot] (wi)  {}

      -- ++(1.25,0)     node[aop]       {$\twop{i}{\obj}{2}$}
                        node[dot]       {}

      -- ++(1.25,0)     node[aop]       {$\tryC_i\!\to\!\co_i$}
                        node[dot]       {}         
                        ;

     \draw
           (1,1)        node[tid]       {$\tr_{j}$}
                        node[aop]       {$\init_{j}$} %
                        node[dot]       {} 

      -- ++(1.25,0)     node[aop]       {$\trop{j}{\obj}{1}$}
                        node[dot] (rj)  {}

      -- ++(1.25,0)     node[aop]       {$\tryA_j\!\to\!\ab_j$}
                        node[dot]       {}
                        ;
     
     \draw[hb] (wi) \squiggle (rj);
\end{tikzpicture}
\end{center}
\caption{\label{fig:proof-h6} History $\hist_6$, not last-use
opaque.}
\end{figure}
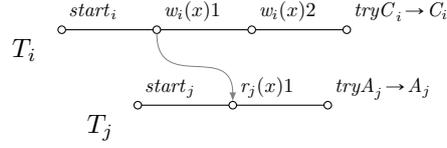

\begin{lemma} \label{lemma:h6-fslop}
    $\hist_6$ is not final-state last-use opaque.    
\end{lemma}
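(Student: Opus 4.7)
The plan is to show that for every sequential history equivalent to the completion of $\hist_6$, the aborted transaction $\tr_j$ fails to be last-use legal, thereby falsifying clause (c) of Definition~\ref{def:fs-lopacity}. Since both $\tr_i$ is committed and $\tr_j$ is aborted in $\hist_6$, we have $\compl{\hist_6} = \hist_6$, so there are only two sequential histories equivalent to the completion, namely $S = C_6|\tr_i \cdot C_6|\tr_j$ and $\dot S = C_6|\tr_j \cdot C_6|\tr_i$. I would enumerate these two cases and refute each.

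For the ordering $S = C_6|\tr_i \cdot C_6|\tr_j$, since $\tr_i$ is committed in $S$ and $\tr_i \prec_S \tr_j$, we get $S|\tr_i \subseteq \luvis{S}{\tr_j}$, so $\luvis{S}{\tr_j}|\obj = [\fwop{i}{\obj}{1}{\ok_i}, \fwop{i}{\obj}{2}{\ok_i}, \frop{j}{\obj}{1}]$. This matches the shape of~(\ref{eq:ill-seq-wwr}), so it is not in $\sspec{\obj}$, meaning $\luvis{S}{\tr_j}$ is not legal. Hence $\tr_j$ is not last-use legal in $S$.

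For the reversed ordering $\dot S = C_6|\tr_j \cdot C_6|\tr_i$, since $\tr_i$ is not earlier than $\tr_j$ in $\dot S$, none of $\tr_i$'s operations are included in $\luvis{\dot S}{\tr_j}$, so $\luvis{\dot S}{\tr_j}|\obj = [\frop{j}{\obj}{1}]$, which by~(\ref{eq:ill-seq-r}) is not in $\sspec{\obj}$. Thus $\tr_j$ is not last-use legal in $\dot S$ either.

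Since $\tr_j$ is not last-use legal in either of the two sequential histories equivalent to $\compl{\hist_6}$, Definition~\ref{def:fs-lopacity}(c) fails, so $\hist_6$ is not final-state last-use opaque. The only mildly subtle step is the case $S = C_6|\tr_i \cdot C_6|\tr_j$, where the overwriting pattern inside the single committed transaction $\tr_i$ is what breaks legality — this is precisely the situation that motivates excluding overwriting, and it mirrors the reasoning in Lemma~\ref{lemma:lop-no-overwriting}. I expect no genuine obstacle; the argument is a direct case split mirroring Lemma~\ref{lemma:h4-fslop}, with the added wrinkle that the obstruction in the first case comes from an intervening second write in $\tr_i$ rather than from $\tr_i$'s aborting.
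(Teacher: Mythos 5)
Your proposal is correct and follows essentially the same route as the paper's own proof: the same enumeration of the two sequential orderings of $\compl{\hist_6}=\hist_6$, the same observation that the committed $\tr_i$ must be included in full in $\luvis{S}{\tr_j}$ when it precedes $\tr_j$ (yielding the illegal write--write--read pattern of~(\ref{eq:ill-seq-wwr})), and the same appeal to~(\ref{eq:ill-seq-r}) for the reversed ordering. No gaps.
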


\begin{proof}
    \begin{align}
    & \text{let}~ C_6 = \compl{\hist_6} = \hist_6 \\
    & \text{let}~ S_6 = C_6|\tr_i \cdot C_6|\tr_j \label{eq:let-s6} \\
    & %
      S_6 \equiv C_6 \label{eq:s6-equiv}\\
    & \text{real time order}~ \prec_{\hist_6} = \varnothing \label{eq:s6-rt-h6} \\
    & \text{real time order}~ \prec_{S_6} = 
      \{ \tr_i \prec_{S_6} \tr_j \} \label{eq:h6-rt-s6} \\
    & \pref{eq:s6-rt-h6} \wedge \pref{eq:h6-rt-s6}
      \so \prec_{S_6} \subseteq \prec_{\hist_6} \label{eq:h6-rt-sub} \\
    & \tr_i \prec_{S_6} \tr_j \wedge \res{i}{}{\co_i} \in S_6|\tr_i 
      \so S_6|\tr_i \subseteq \luvis{S_6}{\tr_i} \label{eq:h6-lvis-tj-ti} \\
    & j = j \so  S_6|\tr_j \subseteq \luvis{S_6}{\tr_i} \label{eq:h6-lvis-tj-tj}\\
    & \pref{eq:h6-lvis-tj-ti} \wedge \pref{eq:h6-lvis-tj-tj} 
      \so \vis{S_6}{\tr_j} = S_6|\tr_i \cdot S_6|\tr_j \label{eq:s6-tj-lvis} \\
    & \pref{eq:s6-tj-lvis} 
      \so \vis{S_6}{\tr_j}|\obj = 
      [\fwop{i}{\obj}{1}{\ok_i},\fwop{i}{\obj}{2}{\ok_i}, \frop{j}{\obj}{1}] 
      \label{eq:s6-tj-lvis-x} \\
    & \pref{eq:s6-tj-lvis-x} \wedge \pref{eq:ill-seq-wwr} 
      \so \vis{S_6}{\tr_j} ~\text{is not legal} \label{eq:h6-lvis-tj-not-legal} \\
    & \pref{eq:h6-lvis-tj-not-legal} \so  
      \tr_j ~\text{in}~ S_6 ~\text{is not legal in}~ S_6 
      \label{eq:h6-tj-not-legal} \\
    & \text{let}~\dot{S}_6 = C_6|\tr_j \cdot C_6|\tr_i \label{eq:let-s6b}\\
    & \dot{S}_6 \equiv C_6 \label{eq:s6b-equiv}\\
    & \text{real time order}~ \prec_{\hist_6} = \varnothing 
      \label{eq:s6b-rt-h6b} \\
    & \text{real time order}~ \prec_{\dot{S}_6} = 
      \{ \tr_i \prec_{\dot{S}_6} \tr_j \} \label{eq:h6b-rt-s6b} \\
    & \pref{eq:s6b-rt-h6b} \wedge \pref{eq:h6b-rt-s6b}
      \so \prec_{\dot{S}_6} \subseteq \prec_{\hist_6} 
      \label{eq:h6b-rt-sub} \\
    & \tr_j \prec_{\dot{S}_6} \tr_i 
      \so \dot{S}_6|\tr_i \nsubseteq \luvis{\dot{S}_6}{\tr_j} 
      \label{eq:h6b-lvis-tj-ti} \\
    & j = j \so  \dot{S}_6|\tr_j \subseteq \luvis{\dot{S}_6}{\tr_i} 
      \label{eq:h6b-lvis-tj-tj} \\
    & \pref{eq:h6b-lvis-tj-ti} \wedge \pref{eq:h6b-lvis-tj-tj} 
      \so \luvis{\dot{S}_6}{\tr_j} = \dot{S}_6|\tr_j \label{eq:h6b-tj-lvis} \\
    & \pref{eq:h6b-tj-lvis} 
      \so \luvis{\dot{S}_6}{\tr_j}|\obj = [\frop{j}{\obj}{1}] 
      \label{eq:h6b-tj-lvis-x} \\
    & \pref{eq:h6b-tj-lvis-x} \wedge \pref{eq:ill-seq-r} 
      \so \luvis{\dot{S}_6}{\tr_j} ~\text{is not legal} 
      \label{eq:h6b-lvis-tj-not-legal} \\
    & \pref{eq:h6b-lvis-tj-not-legal} \so  
      \tr_j ~\text{in}~ \dot{S}_6 ~\text{is not legal in}~ \dot{S}_6 
      \label{eq:h6b-tj-not-legal} \\
    & \pref{eq:h6-tj-not-legal} 
      \wedge \pref{eq:h6b-tj-not-legal} \so 
      \hist_6 ~\text{is not final-state last-use opaque}
    \end{align}
\end{proof}

\begin{lemma} \label{lemma:h6-lop}
    $\hist_6$ is not last-use opaque.    
\end{lemma}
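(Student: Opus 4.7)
The plan is straightforward and essentially a one-line deduction from the preceding lemma. Since Lemma \ref{lemma:h6-fslop} has already established that $\hist_6$ itself is not final-state last-use opaque, I would simply invoke Definition \ref{def:lopacity}, which requires that every finite prefix of a last-use opaque history be final-state last-use opaque. Observing that $\hist_6$ is trivially a (finite) prefix of itself, the existence of a prefix that fails final-state last-use opacity is immediate.

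Concretely, I would structure the proof in two short steps. First, note that $\hist_6$ is a prefix of $\hist_6$. Second, apply Lemma \ref{lemma:h6-fslop} to conclude that this particular prefix is not final-state last-use opaque, which by the contrapositive of \rdef{def:lopacity} immediately yields that $\hist_6$ is not last-use opaque.

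There is no real obstacle here, since all the work has already been discharged by \rlemma{lemma:h6-fslop}, which handled the case analysis over the two possible serial orderings of $\tr_i$ and $\tr_j$ and showed that $\tr_j$ fails to be legal in either. The only thing to be careful about is that the negation of \rdef{def:lopacity} is used correctly: it is not necessary to exhibit a nontrivial proper prefix that fails, because the definition quantifies over \emph{every} finite prefix, including $\hist_6$ itself. This matches the style of the companion arguments (e.g.\ \rlemma{lemma:h2-lop}, \rlemma{lemma:h4-lop}, \rlemma{lemma:h5-lop}), so no new machinery is required.
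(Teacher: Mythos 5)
Your proposal is correct and matches the paper's own proof exactly: both deduce the result directly from \rlemma{lemma:h6-fslop} together with \rdef{def:lopacity}, using the fact that $\hist_6$ is a finite prefix of itself. No further comment is needed.
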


\begin{proof}
    From \rlemma{lemma:h6-fslop} is not final-state last-use opaque, then so,
    from \rdef{def:lopacity} $\hist_6$ is not last-use opaque.
\end{proof}

\begin{figure}
\begin{center}
\begin{tikzpicture}
     \draw
           (0,2)        node[tid]       {$\tr_i$}
                        node[aop]       {$\init_i$} %
                        node[dot]       {} 

      -- ++(1.25,0)     node[aop]       {$\twop{i}{\obj}{1}$}
                        node[dot] (wi)  {}
                        node[cir]       {}

      -- ++(1.25,0)     node[aop]       {$\tryC_i\!\to\!\co_i$}
                        node[dot]       {}         
                        ;

     \draw
           (1,1)        node[tid]       {$\tr_{j}$}
                        node[aop]       {$\init_{j}$} %
                        node[dot]       {} 

      -- ++(1.25,0)     node[aop]       {$\trop{j}{\obj}{1}$}
                        node[dot] (rj)  {}

      -- ++(1.25,0)     node[aop]       {$\tryA_j\!\to\!\ab_j$}
                        node[dot]       {}
                        ;
     
     \draw[hb] (wi) \squiggle (rj);
\end{tikzpicture}
\end{center}
\caption{\label{fig:proof-h7} History $\hist_7$, last-use
opaque.}
\end{figure}

\begin{lemma} \label{lemma:h7-fslop}
    $\hist_7$ is final-state last-use opaque.    
\end{lemma}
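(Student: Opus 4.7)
The plan is to mirror the argument used for $\hist_1$ in \rlemma{lemma:h1-fslop}, since $\hist_7$ differs from $\hist_1$ only in that $\tr_j$ aborts via $\tryA_j$ rather than committing. Because every transaction in $\hist_7$ has already terminated (either via commit or abort), the completion satisfies $C_7 = \compl{\hist_7} = \hist_7$, so I only need to exhibit a single sequential witness equivalent to $\hist_7$ itself and verify the three conditions of \rdef{def:fs-lopacity}.

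First I would let $S_7 = C_7|\tr_i \cdot C_7|\tr_j$, giving $S_7 \equiv C_7$ trivially. Since $\tr_i$ and $\tr_j$ are concurrent in $\hist_7$ (neither's last event precedes the other's first event), $\prec_{\hist_7} = \varnothing$, so $\prec_{S_7}$ vacuously preserves the real-time order, satisfying \rdef{def:fs-lopacity}(a).

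Next I would address condition (b), the legality of committed transactions in $S_7$. Only $\tr_i$ is committed, and by the definition of $\vis{S_7}{\tr_i}$, no later transaction contributes operations, so $\vis{S_7}{\tr_i} = S_7|\tr_i$ and $\vis{S_7}{\tr_i}|\obj = [\fwop{i}{\obj}{1}{\ok_i}]$, which belongs to $\sspec{\obj}$ by \pref{eq:seq-w}. Hence $\tr_i$ is legal in $S_7$.

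Finally, for condition (c), I need $\tr_j$ (aborted in $S_7$) to be last-use legal. The key observation is that the write $\fwop{i}{\obj}{1}{\ok_i}$ is marked as the \last{} write on $\obj$ by $\tr_i$ (shown by the circled node in \rfig{fig:proof-h7}), and additionally $\tr_i$ is committed in $S_7$ and precedes $\tr_j$. Either clause of the definition of $\luvis{S_7}{\tr_j}$ places $S_7|\tr_i$ inside $\luvis{S_7}{\tr_j}$, yielding $\luvis{S_7}{\tr_j}|\obj = [\fwop{i}{\obj}{1}{\ok_i}, \frop{j}{\obj}{1}]$, which is in $\sspec{\obj}$ by \pref{eq:seq-wr}. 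Thus $\tr_j$ is last-use legal in $S_7$, and all three conditions of \rdef{def:fs-lopacity} hold, making $\hist_7$ final-state last-use opaque. I do not anticipate any significant obstacle; the argument is a direct adaptation of the $\hist_1$ and $\hist_3$ cases, and the only subtle point is noting that the \last{}-write mark is what makes the alternative (aborted, decided) branch of $\luvisf$ applicable, ensuring the reasoning goes through regardless of which branch one selects.
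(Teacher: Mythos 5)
Your proof is correct and follows essentially the same route as the paper's: the same completion $C_7=\compl{\hist_7}=\hist_7$, the same sequential witness $S_7 = C_7|\tr_i \cdot C_7|\tr_j$, the vacuous real-time order check, legality of the committed $\tr_i$ via $\vis{S_7}{\tr_i}|\obj = [\fwop{i}{\obj}{1}{\ok_i}]$, and last-use legality of the aborted $\tr_j$ via $\luvis{S_7}{\tr_j}|\obj = [\fwop{i}{\obj}{1}{\ok_i}, \frop{j}{\obj}{1}]$. One small nit: since $\tr_i$ is committed in $S_7$, only the committed-predecessor clause of $\luvisf$ is actually applicable here (the decided-on clause is reserved for transactions \emph{not} committed in $S$), so the claim that ``either clause'' works is a slight overstatement; the \last{}-write mark becomes essential only for the prefixes of $\hist_7$ in which $\tr_i$'s commit is missing, not for $\hist_7$ itself.
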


\begin{proof}
    \begin{align}
    & \text{let}~ C_7 = \compl{\hist_7} = \hist_7 \\
    & \text{let}~ S_7 = C_7|\tr_i \cdot C_7|\tr_j \label{eq:let-s7} \\
    & %
      S_7 \equiv C_7 \label{eq:s7-equiv}\\
    & \text{real time order}~\prec_{\hist_7}=\varnothing \label{eq:s7-rt-h7} \\
    & \text{real time order}~ \prec_{S_7} = 
      \{ \tr_i \prec_{S_7} \tr_j \} \label{eq:h7-rt-s7} \\
    & \pref{eq:s7-rt-h7} \wedge \pref{eq:h7-rt-s7}
      \so \prec_{S_7} \subseteq \prec_{\hist_7} \label{eq:h7-rt-sub} \\
    & i = i \so S_7|\tr_i \subseteq \vis{S_7}{\tr_i} \label{eq:h7-vis-ti-ti} \\
    & \tr_i \prec_{S_7} \tr_j 
      \so S_7|\tr_j \nsubseteq \vis{S_7}{\tr_i} \label{eq:h7-vis-ti-tj} \\
    & \pref{eq:h7-vis-ti-ti} \wedge \pref{eq:h7-vis-ti-tj} 
      \so \vis{S_7}{\tr_i} = S_7|\tr_i \label{eq:h7-ti-vis} \\
    & \pref{eq:h7-ti-vis} \so \vis{S_7}{\tr_i}|\obj = [\fwop{i}{\obj}{1}{\ok_i}] 
      \label{eq:h7-ti-vis-x} \\
    & \pref{eq:h7-ti-vis-x} \wedge \pref{eq:seq-w} 
      \so \luvis{S_7}{\tr_i} ~\text{is legal} \label{eq:h7-lvis-ti-legal} \\
    & \pref{eq:h7-lvis-ti-legal} \so  
      \tr_i ~\text{in}~ S_7 ~\text{is last-use legal in}~ S_7 
      \label{eq:h7-ti-legal} \\
    & \tr_i \prec_{S_7} \tr_j \wedge \res{i}{}{\co_i} \in S_7|\tr_i 
      \so S_7|\tr_i \subseteq \vis{S_7}{\tr_i} \label{eq:h7-lvis-tj-ti} \\
    & j = j \so  S_7|\tr_j \subseteq \vis{S_7}{\tr_j} \label{eq:h7-lvis-tj-tj} \\
    & \pref{eq:h7-lvis-tj-ti} \wedge \pref{eq:h7-lvis-tj-tj} 
      \so \vis{S_7}{\tr_j} = S_7|\tr_i \cdot S_7|\tr_j 
      \label{eq:h7-tj-lvis} \\
    & \pref{eq:h7-tj-lvis} 
      \so \luvis{S_7}{\tr_j}|\obj = [\fwop{i}{\obj}{1}{\ok_i}, \frop{j}{\obj}{1}]
      \label{eq:h7-tj-lvis-x} \\
    & \pref{eq:h7-tj-lvis-x} \wedge \pref{eq:seq-wr} 
      \so \luvis{S_7}{\tr_j} ~\text{is legal} \label{eq:h7-lvis-tj-legal} \\
    & \pref{eq:h7-lvis-tj-legal} \so  
      \tr_j ~\text{in}~ S_7 ~\text{is last-use legal in}~ S_7 
      \label{eq:h7-tj-legal} \\
    & \pref{eq:s7-equiv} \wedge \pref{eq:h7-rt-sub} 
      \wedge \pref{eq:h7-ti-legal} \wedge \pref{eq:h7-tj-legal} \so 
      \hist_7 ~\text{is final-state last-use opaque}
    \end{align}
\end{proof}

Let $P^7_1$ be a prefix s.t. $\hist_7 = P^7_1 \cdot [\res{i}{}{\co_i}]$.

\begin{lemma} \label{lemma:p71-fslop}
    $P^7_1$ is final-state last-use opaque.    
\end{lemma}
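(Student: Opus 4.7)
The plan is to observe that $P^7_1$ is obtained from $\hist_7$ by removing only the final response event $\res{i}{}{\co_i}$. After this removal, $\tr_i$ in $P^7_1$ contains the invocation $\inv{i}{}{\tryC_i}$ but not its response, and $\tr_j$ remains fully terminated; hence $\tr_i$ is commit-pending in $P^7_1$ while $\tr_j$ is aborted.

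Next, I would apply the completion rule for commit-pending transactions (case (b) of the definition of $\mathit{Compl}$): the canonical completion appends $\res{i}{}{\co_i}$ to $P^7_1|\tr_i$, and leaves the already-aborted $\tr_j$ alone. Therefore $\compl{P^7_1} = P^7_1 \cdot [\res{i}{}{\co_i}] = \hist_7$.

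Finally, I would invoke \rlemma{lemma:h7-fslop}, which already provides a sequential witness $S_7$ equivalent to $\hist_7 = \compl{P^7_1}$ that preserves real-time order and in which $\tr_i$ is legal and $\tr_j$ is last-use legal. That witness directly certifies that $P^7_1$ is final-state last-use opaque via \rdef{def:fs-lopacity}. The proof therefore reduces to a one-step citation, following exactly the template used in \rlemma{lemma:p11-fslop}, \rlemma{lemma:p21-fslop}, and \rlemma{lemma:p31-fslop}. There is no substantive obstacle; the only thing to be careful about is confirming that completion rule (b) is the applicable case (rather than (c) or (d)), which holds because the only pending event in $P^7_1$ is a $\tryC_i$ with no outstanding $\tryA$ or read/write invocation.
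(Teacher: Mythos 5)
Your proof matches the paper's own argument exactly: the paper also picks the completion $\compl{P^7_1} = P^7_1 \cdot [\res{i}{}{\co_i}] = \hist_7$ and then cites \rlemma{lemma:h7-fslop} to conclude. The reasoning about why the commit-response completion rule applies is correct, so this is the same one-step reduction the paper uses.
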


\begin{proof}
    \begin{align}
    & \text{let}~ C^7_1 = \compl{P^7_1} = P^7_1 \cdot [\res{i}{}{\co_i}] \\    
    & \hist_7 = C^7_1 \wedge ~\text{\rlemma{lemma:h7-fslop}}~ 
      \so P^7_1 ~\text{is final-state last-use opaque}
    \end{align}
\end{proof}

Let $P^7_2$ be a prefix s.t. $\hist_7 = P^7_2 \cdot [\tryC_i\to\co_i]$.

\begin{proof}
    \begin{align}
    & \text{let}~ C^7_2 = \compl{P^7_2} = P^7_2 \cdot [\tryA_i\to\ab_i] \\
    & \text{let}~ S^7_2 = C^7_2|\tr_i \cdot C^7_2|\tr_j \label{eq:let-s72} \\
    & %
      S^7_2 \equiv C^7_2 \label{eq:s72-equiv}\\
    & \text{real time order}~ \prec_{P^7_2} = \varnothing \label{eq:s72-rt-p72} \\
    & \text{real time order}~ \prec_{S^7_2} = 
      \{ \tr_i \prec_{S^7_2} \tr_j \} \label{eq:p72-rt-s72} \\
    & \pref{eq:s72-rt-p72} \wedge \pref{eq:p72-rt-s72}
      \so \prec_{S^7_2} \subseteq \prec_{P^7_2} \label{eq:p72-rt-sub} \\
    & i = i \so S^7_2|\tr_i \subseteq \luvis{S^7_2}{\tr_i} \label{eq:p72-lvis-ti-ti} \\
    & \tr_i \prec_{S^7_2} \tr_j 
      \so S^7_2|\tr_j \nsubseteq \luvis{S^7_2}{\tr_i} \label{eq:p72-lvis-ti-tj} \\
    & \pref{eq:p72-lvis-ti-ti} \wedge \pref{eq:p72-lvis-ti-tj} 
      \so \luvis{S^7_2}{\tr_i} = S^7_2|\tr_i \label{eq:p72-ti-lvis} \\
    & \pref{eq:p72-ti-lvis} \so \vis{S^7_2}{\tr_i}|\obj = [\fwop{i}{\obj}{1}{\ok_i}] 
      \label{eq:p72-ti-lvis-x} \\
    & \pref{eq:p72-ti-lvis-x} \wedge \pref{eq:seq-w} 
      \so \luvis{S^7_2}{\tr_i} ~\text{is legal} \label{eq:p72-lvis-ti-legal} \\
    & \pref{eq:p72-lvis-ti-legal} \so  
      \tr_i ~\text{in}~ S^7_2 ~\text{is last-use legal in}~ S^7_2 \label{eq:p72-ti-legal} \\
    & \fwop{i}{\obj}{1}{\ok_i} ~\text{is \last{} write on}~\obj~\text{in}~\tr_i
      \so \tr_i ~\text{is decided on}~\obj \label{eq:p72-ti-decided} \\
    & \tr_i \prec_{S^7_2} \tr_j \wedge \pref{eq:p72-ti-decided}
      \so S^7_2\cpeC\tr_i \subseteq \luvis{S^7_2}{\tr_j} \label{eq:p72-lvis-tj-ti} \\
    & j = j \so  S^7_2|\tr_j \subseteq \luvis{S^7_2}{\tr_j} \label{eq:p72-lvis-tj-tj} \\
    & \pref{eq:p72-lvis-tj-ti} \wedge \pref{eq:p72-lvis-tj-tj} 
      \so \luvis{S^7_2}{\tr_j} = S^7_2\cpeC\tr_i \cdot S^7_2|\tr_j \label{eq:p72-tj-lvis} \\
    & \pref{eq:p72-tj-lvis} 
      \so \luvis{S^7_2}{\tr_j}|\obj = [\fwop{i}{\obj}{1}{\ok_i} ] 
      \label{eq:p72-tj-lvis-x} \\
    & \pref{eq:p72-tj-lvis-x} \wedge \pref{eq:seq-w} 
      \so \luvis{S^7_2}{\tr_j} ~\text{is legal} \label{eq:p72-lvis-tj-legal} \\
    & \pref{eq:p72-lvis-tj-legal} \so  
      \tr_j ~\text{in}~ S^7_2 ~\text{is last-use legal in}~ S^7_2 \label{eq:p72-tj-legal} \\
    & \pref{eq:s72-equiv} \wedge \pref{eq:p72-rt-sub} 
      \wedge \pref{eq:p72-ti-legal} \wedge \pref{eq:p72-tj-legal} \so 
      P^7_2 ~\text{is final-state last-use opaque}
    \end{align}
\end{proof}

\begin{lemma} \label{lemma:p73-fslop}
    $P^7_3$ is final-state last-use opaque.    
\end{lemma}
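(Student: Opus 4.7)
The plan is to follow the structure established in the preceding prefix lemmas for $\hist_7$ (and mirroring the pattern used for $\hist_1$'s prefixes). Extrapolating the sequence $P^7_1$ (which drops $[\res{i}{}{\co_i}]$) and $P^7_2$ (which drops $[\tryC_i\to\co_i]$), the prefix $P^7_3$ should strip one more trailing event, so I expect $\hist_7 = P^7_3 \cdot [\tryC_i\to\co_i,\, \res{j}{}{\ab_j}]$ (or the adjacent permutation $[\tryA_j\to\ab_j, \tryC_i\to\co_i]$). In either case the essential structure is: $\tr_i$ has performed its \last{} write $\fwop{i}{\obj}{1}{\ok_i}$ but is now live, while $\tr_j$ has completed the read $\frop{j}{\obj}{1}$ and possibly invoked $\tryA_j$ but has no response.

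First I would fix a completion $C^7_3 = \compl{P^7_3}$. Since $\tr_i$ is live without a pending commit, the completion appends $[\tryA_i\to\ab_i]$; $\tr_j$ is either commit-pending (append $[\res{j}{}{\ab_j}]$) or live (append $[\tryA_j\to\ab_j]$). In both cases $\tr_i$ and $\tr_j$ are aborted in $C^7_3$. I would then take $S^7_3 = C^7_3|\tr_i \cdot C^7_3|\tr_j$; equivalence and real-time-order preservation are immediate, exactly as in \rlemma{lemma:p72-fslop}, because $\prec_{P^7_3} = \varnothing$.

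Next I would verify clauses (b) and (c) of \rdef{def:fs-lopacity}. Clause (b) is vacuous since no transaction is committed in $S^7_3$. For clause (c), I would show $\tr_i$ is last-use legal in $S^7_3$ by computing $\luvis{S^7_3}{\tr_i} = S^7_3|\tr_i$, whose projection onto $\obj$ is a single write $[\fwop{i}{\obj}{1}{\ok_i}] \in \sspec{\obj}$ by \pref{eq:seq-w}. For $\tr_j$, the key move is identical to \pref{eq:p72-ti-decided}--\pref{eq:p72-lvis-tj-legal}: since $\fwop{i}{\obj}{1}{\ok_i}$ is the \last{} write on $\obj$ in $\tr_i$, transaction $\tr_i \in \cpetrans{S^7_3}$, so clause (b) in the definition of $\luvisf$ allows $S^7_3\cpeC\tr_i \subseteq \luvis{S^7_3}{\tr_j}$. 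Then $\luvis{S^7_3}{\tr_j}|\obj = [\fwop{i}{\obj}{1}{\ok_i}, \frop{j}{\obj}{1}]$, which is in $\sspec{\obj}$ by \pref{eq:seq-wr}, so $\tr_j$ is last-use legal.

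The proof is essentially mechanical once the correct $P^7_3$ is pinned down, so the main obstacle is purely bookkeeping: correctly identifying what the completion appends (which depends on whether $\tryA_j$ is pending or not in $P^7_3$), and being careful that when $\tr_j$'s $\frop{j}{\obj}{1}$ is itself pending in $P^7_3$ the completion should abort $\tr_j$ \emph{after} a response---in which case I would either discard this sub-case or observe that the read response can be completed as $\res{j}{}{\ab_j}$, giving $\luvis{S^7_3}{\tr_j}|\obj = [\fwop{i}{\obj}{1}{\ok_i}, \frop{j}{\obj}{\ab_i}]$, which is still in $\sspec{\obj}$ by \pref{eq:seq-wra}. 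Either way, the \last{}-write status of $\tr_i$'s write on $\obj$ is what carries the argument, and the conclusion then follows from \rdef{def:fs-lopacity}.
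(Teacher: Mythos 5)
Your proof is correct, but it takes a genuinely different route from the paper's. The paper disposes of $P^7_3$ with a one-line reduction: it names a completion $C^7_3=\compl{P^7_3}$ that coincides with a history whose final-state last-use opacity was already established, and appeals to the earlier lemma --- the same pattern it uses for $P^1_1$, $P^1_3$ and $P^7_1$. (As printed that one-liner is defective: it cites \rlemma{lemma:p73-fslop} itself rather than the earlier lemma, asserts $P^7_3=C^7_3$ even though $C^7_3$ strictly extends $P^7_3$, and writes a ``completion'' that appends $\tryC_i\to\co_i$ to a live $\tr_i$, which none of the four clauses of the completion definition permits; the paper also never states which prefix $P^7_3$ actually is, so your reconstruction of it was unavoidable.) Your direct verification --- fix the all-aborted completion, take $S^7_3=C^7_3|\tr_i\cdot C^7_3|\tr_j$, note that the empty real-time order makes clause (a) of \rdef{def:fs-lopacity} trivial and the absence of committed transactions makes clause (b) vacuous, observe that $\tr_i$ is decided on $\obj$ because $\fwop{i}{\obj}{1}{\ok_i}$ is the \last{} write, include $S^7_3\cpeC\tr_i$ in $\luvis{S^7_3}{\tr_j}$, and check membership in $\sspec{\obj}$ via \pref{eq:seq-w} and \pref{eq:seq-wr} (or \pref{eq:seq-wra} when the read's response is completed as $\ab_j$) --- is exactly the argument the paper spells out for the neighbouring prefixes $P^7_2$ and $P^1_4$/$P^1_5$. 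It costs more writing than the intended reduction, but it is sound, it covers both possibilities for which event is pending in $P^7_3$, and it does not lean on the garbled cross-reference, so it is the more reliable of the two.
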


\begin{proof}
    \begin{align}
    & \text{let}~ C^7_3 = \compl{P^7_3} = 
      P^7_3 \cdot [\res{j}{}{\ab_j},\tryC_i\to\co_i] \\    
    & P^7_3 = C^7_3 \wedge ~\text{\rlemma{lemma:p73-fslop}}~ 
      \so P^7_3 ~\text{is final-state last-use opaque}
    \end{align}
\end{proof}

Let $P^7_4$ be a prefix s.t. $\hist_3 = P^7_4 \cdot [\tryC_j\to\ab_j,\tryC_i\to\co_i]$.

\begin{lemma} \label{lemma:p74-fslop}
    $P^7_4$ is final-state last-use opaque.    
\end{lemma}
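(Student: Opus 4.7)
The plan is to exploit the fact that $P^7_4$ consists of exactly the same sequence of events as $P^1_4$. Concretely, $\hist_7$ and $\hist_1$ differ only in the final transaction-closing operation of $\tr_j$ ($\tryA_j\!\to\!\ab_j$ versus $\tryC_j\!\to\!\co_j$) and in the presence of the \last{}-write marker on $\tr_i$'s write. Once the closing operations of both transactions are stripped off, the residual histories coincide, so I would first write the one-line observation $P^7_4 = P^1_4$ and then invoke \rlemma{lemma:p14-fslop} to conclude that $P^7_4$ is final-state last-use opaque, in the same terse style as \rlemma{lemma:p71-fslop} and \rlemma{lemma:p73-fslop}.

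If a self-contained argument is preferred (matching the granular style of \rlemma{lemma:p14-fslop}), the plan would instead be to (i) take the completion $C^7_4 = P^7_4 \cdot [\tryA_i\!\to\!\ab_i,\, \tryA_j\!\to\!\ab_j]$, since both $\tr_i$ and $\tr_j$ are live in $P^7_4$ with no pending operation, (ii) let $S^7_4 = C^7_4|\tr_i \cdot C^7_4|\tr_j$, which is equivalent to $C^7_4$ and trivially preserves the (empty) real-time order of $P^7_4$, and (iii) verify last-use legality of both aborted transactions. For $\tr_i$, $\luvis{S^7_4}{\tr_i} = S^7_4|\tr_i$ gives $\luvis{S^7_4}{\tr_i}|\obj = [\fwop{i}{\obj}{1}{\ok_i}]$, which is in $\sspec{\obj}$ by \pref{eq:seq-w}. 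For $\tr_j$, since $\fwop{i}{\obj}{1}{\ok_i}$ is the \last{} write on $\obj$ in $\tr_i$, $\tr_i$ is decided on $\obj$, so $S^7_4\cpeC\tr_i \subseteq \luvis{S^7_4}{\tr_j}$, yielding $\luvis{S^7_4}{\tr_j}|\obj = [\fwop{i}{\obj}{1}{\ok_i}, \frop{j}{\obj}{1}]$, which is in $\sspec{\obj}$ by \pref{eq:seq-wr}.

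The main thing to get right is the bookkeeping of which events constitute $P^7_4$ and the choice of completion: both transactions are live in $P^7_4$ with no pending invocation, so by \rdef{def:final-state-lopacity} each is completed with a $[\tryC \to \ab]$ closing pair. There is no real obstacle beyond confirming this completion matches the one used in \rlemma{lemma:p14-fslop}; once it does, the computation of $\luvisf$ is mechanical and the legality checks reduce to the same two facts from the $\sspec{\obj}$ enumeration that were used earlier. I would favor the one-line proof by reduction to \rlemma{lemma:p14-fslop}, as it mirrors the style of \rlemma{lemma:p73-fslop} and avoids duplicating a calculation that is already in the appendix.
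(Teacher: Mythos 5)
Your proposal is correct and matches the paper's intended one-line reduction: $P^7_4$ coincides, event for event (including the \last{}-write marker on $\tr_i$'s write), with $P^1_4$, so \rlemma{lemma:p14-fslop} applies directly, and your fallback self-contained calculation is exactly the one carried out there. Note that the paper's printed proof instead asserts $P^7_4 = P^5_4$ and cites \rlemma{lemma:p54-fslop}, which is evidently an error --- $P^5_4$ differs from $P^7_4$ precisely in that its write is not a \last{} write, and \rlemma{lemma:p54-fslop} establishes that $P^5_4$ is \emph{not} final-state last-use opaque --- so your identification with $P^1_4$ (also used by the paper in \rlemma{lemma:p7p-fslop}) is the correct repair.
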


\begin{proof}
    \begin{align}
    & P^7_4 = P^5_4 \wedge ~\text{\rlemma{lemma:p54-fslop}}~ 
      \so P^7_4 ~\text{is final-state last-use opaque}
    \end{align}
\end{proof}

Let $P^7_p$ be a any prefix of $P^7_4$.

\begin{lemma} \label{lemma:p7p-fslop}
    Any $P^7_p$ is final-state last-use opaque.    
\end{lemma}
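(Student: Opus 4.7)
The plan is to mimic the template already used for Lemma \ref{lemma:p1p-fslop}, since $P^7_p$ plays the analogous role for $\hist_7$ as $P^1_p$ did for $\hist_1$. Observe that $P^7_4$ is $\hist_7$ with the two terminating operations $\tryC_j\!\to\!\ab_j$ and $\tryC_i\!\to\!\co_i$ removed; therefore any strict prefix $P^7_p$ of $P^7_4$ contains neither any commit nor any abort response. In any completion $C = \compl{P^7_p}$, both transactions are then extended to aborts (either via a pending operation or via an injected $\tryC\!\to\!\ab$). In particular, no transaction is committed in $C$, so condition (b) of \rdef{def:fs-lopacity} is satisfied vacuously.

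For the witness sequential history I would pick $S = C|\tr_i \cdot C|\tr_j$. Because $\tr_i$ and $\tr_j$ are concurrent in $P^7_p$, the real-time order $\prec_{P^7_p}$ is empty and $S$ trivially preserves it, so condition (a) is satisfied. It then remains to verify condition (c): last-use legality of $\tr_i$ and of $\tr_j$ in $S$.

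For $\tr_i$, since $\tr_i \prec_S \tr_j$, we have $\luvis{S}{\tr_i} = S|\tr_i$, whose projection on $\obj$ is either empty or $[\fwop{i}{\obj}{1}{\ok_i}]$, both of which lie in $\sspec{\obj}$ by \pref{eq:seq-empty} and \pref{eq:seq-w}. For $\tr_j$, I would split on whether the write in $\tr_i$ has occurred in $P^7_p$. If it has not, then $\luvis{S}{\tr_j}|\obj$ is empty (the read in $\tr_j$ cannot precede the write by the happens-before arrow in \rfig{fig:proof-h7}), which is legal. If it has, then the write is the \last{} write on $\obj$ in $\tr_i$, so $\tr_i$ is decided on $\obj$ and hence $S\cpeC\tr_i \subseteq \luvis{S}{\tr_j}$ is permitted; depending on whether $\tr_j$'s read is present, $\luvis{S}{\tr_j}|\obj$ is either $[\fwop{i}{\obj}{1}{\ok_i}]$ or $[\fwop{i}{\obj}{1}{\ok_i},\frop{j}{\obj}{1}]$, legal by \pref{eq:seq-w} and \pref{eq:seq-wr} respectively.

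The main obstacle will be taming the combinatorial explosion of possible interleavings for $P^7_p$ without listing each by hand. I expect this to be manageable because the invariant that actually matters is simple: in every completion, any read of $1$ on $\obj$ by $\tr_j$ has been preceded by the \last{} write of $\tr_i$ on $\obj$, and this single fact is exactly what makes $\luvis{S}{\tr_j}$ legal. Nothing in the argument relies on the commit of $\tr_i$, which is why dropping the final commits (as happens in every $P^7_p$) cannot break the property.
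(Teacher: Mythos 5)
Your proposal is correct, but it takes a different route than the paper. The paper's proof of this lemma is a two-line reduction: it observes that $P^7_4$ is syntactically the same event sequence as $P^1_4$ (the prefix of $\hist_1$ obtained by dropping both commits), invokes Corollary~\ref{cor:h1-pref-lop} to conclude that $P^7_4$ is last-use opaque, and then reads off final-state last-use opacity of every prefix $P^7_p$ directly from Definition~\ref{def:lopacity}. You instead re-run the direct verification: fix the completion, take $S = C|\tr_i \cdot C|\tr_j$, note that condition (b) of Definition~\ref{def:fs-lopacity} is vacuous since no transaction commits in any completion, and check last-use legality of both transactions by cases on which of the write/read have occurred -- which is exactly the template the paper uses for $P^1_4$ (Lemma~\ref{lemma:p14-fslop}) but applied one level down. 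Both arguments are sound; the paper's buys brevity by exploiting the coincidence of prefixes across examples, while yours is self-contained and does not depend on having already catalogued the prefixes of $\hist_1$. One small case you elide: a prefix of $P^7_4$ may end with a \emph{pending} read invocation in $\tr_j$, whose completion yields $\frop{j}{\obj}{\ab_j}$; then $\luvis{S}{\tr_j}|\obj = [\fwop{i}{\obj}{1}{\ok_i}, \frop{j}{\obj}{\ab_j}]$, which is still legal by \pref{eq:seq-wra}, so the conclusion is unaffected.
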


\begin{proof}
    \begin{align}
    & \hist_1 = P^1_4 \cdot R  \wedge \rcor{cor:h1-pref-lop}
      \so P^1_4 ~\text{is last-use lopaque} \label{eq:p74-pref-lopaque} \\
    & P^1_4 = P^7_4 \wedge \pref{eq:p74-pref-lopaque} 
      \so P^7_4 ~\text{is last-use lopaque} \label{eq:p74-lop} \\
    & \pref{eq:p74-lop} 
      \so P^7_p ~\text{is final-state last-use lopaque}   
    \end{align}
\end{proof}

\begin{lemma} \label{lemma:h3-lop}
    $\hist_7$ is last-use opaque.    
\end{lemma}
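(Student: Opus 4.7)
The plan is to invoke \rdef{def:lopacity} directly: since last-use opacity is defined as ``every finite prefix is final-state last-use opaque,'' it suffices to verify that every prefix of $\hist_7$ falls into one of the cases already settled by the preceding lemmas. The full history $\hist_7$ itself is covered by \rlemma{lemma:h7-fslop}. The prefixes $P^7_1$, $P^7_2$, $P^7_3$, $P^7_4$ are handled by \rlemma{lemma:p71-fslop} through \rlemma{lemma:p74-fslop}, and all strictly shorter prefixes are subsumed by \rlemma{lemma:p7p-fslop}.

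Concretely, I would write: let $P$ be any finite prefix of $\hist_7$. Then either $P = \hist_7$, in which case \rlemma{lemma:h7-fslop} applies, or $P$ coincides with one of $P^7_1,\ldots,P^7_4$, in which case the corresponding Lemmas \ref{lemma:p71-fslop}--\ref{lemma:p74-fslop} apply, or $P$ is a proper prefix of $P^7_4$, in which case \rlemma{lemma:p7p-fslop} applies. In every case $P$ is final-state last-use opaque, so $\hist_7$ is last-use opaque by \rdef{def:lopacity}.

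I do not anticipate any real obstacle here: the work has already been done in the prefix lemmas, and this statement is just their aggregation under the safety-property definition. The only thing to be careful about is that the enumeration of named prefixes together with ``any shorter prefix'' really does exhaust all finite prefixes of $\hist_7$. Since the named prefixes are obtained by successively removing the trailing events ($\res{i}{}{\co_i}$, then $\tryC_i\to\co_i$, then $\res{j}{}{\ab_j}$, then $\tryA_j\to\ab_j$), and $P^7_p$ ranges over every prefix of $P^7_4$, the coverage is complete, and the argument collapses to one short paragraph citing \rdef{def:lopacity}.
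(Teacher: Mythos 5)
Your proposal is correct and follows essentially the same route as the paper: the paper's proof likewise observes that Lemmas \ref{lemma:p71-fslop}--\ref{lemma:p7p-fslop} (together with \rlemma{lemma:h7-fslop} for the full history) cover every finite prefix of $\hist_7$, and then concludes by \rdef{def:lopacity}. Your extra remark checking that the named prefixes plus the proper prefixes of $P^7_4$ exhaust all finite prefixes is a welcome, if implicit, clarification of the same argument.
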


\begin{proof}
    Since, from Lemmas \ref{lemma:p71-fslop}--\ref{lemma:p7p-fslop}, all
    prefixes of $\hist_7$ are final-state last-use opaque, then by
    \rdef{def:lopacity} $\hist_7$ is last-use opaque.
\end{proof}

\begin{figure}
\begin{center}
\begin{tikzpicture}
     \draw
           (0,2)        node[tid]       {$\tr_i$}
                        node[aop]       {$\init_i$} %
                        node[dot]       {} 

      -- ++(1.25,0)     node[aop]       {$\twop{i}{\objx}{1}$}
                        node[dot] (wi)  {}
                        node[cir]       {}

      -- ++(2.25,0)     node[aop]       {$\trop{i}{\objy}{1}$}
                        node[dot] (ri)  {}

      -- ++(1.25,0)     node[]{} %
                        ;

     \draw
           (0,1)        node[tid]       {$\tr_{j}$}
                        node[aop]       {$\init_{j}$} %
                        node[dot]       {} 

      -- ++(1.75,0)      node[aop]       {$\trop{j}{\objx}{1}$}
                        node[dot] (rj)  {}

      -- ++(1.25,0)     node[aop]       {$\twop{j}{\objy}{1}$}
                        node[dot] (wj)  {}  
                        node[cir]       {}                     

     -- ++(1.75,0)      node[]{} %
                        ;
     
     \draw[hb] (wi) \squiggle (rj);
     \draw[hb] (wj) \squiggleup (ri);
\end{tikzpicture}
\end{center}
\caption{\label{fig:proof-h8}History $\hist_8$, not last-use opaque.}
\end{figure}

\begin{lemma} \label{lemma:h8-fslop}
    $\hist_8$ is not final-state last-use opaque.    
\end{lemma}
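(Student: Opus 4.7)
The plan is to argue by exhaustion over the two possible sequential orderings of the two concurrent transactions in any candidate witness $S$. First, I would observe that both $\tr_i$ and $\tr_j$ in $\hist_8$ are live and contain no pending operations, so by the definition of completion every $C = \compl{\hist_8}$ must have $\tr_i$ and $\tr_j$ aborted via $\tryC\to\ab$. Consequently, for \rdef{def:fs-lopacity} to hold, I must find some sequential $S \equiv C$ that preserves the (empty) real-time order in which both $\tr_i$ and $\tr_j$ are last-use legal. Since neither transaction is committed in $S$, condition (b) of \rdef{def:fs-lopacity} imposes no constraints, but condition (c) must hold for both.

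Next, since $\tr_i$ and $\tr_j$ are concurrent in $\hist_8$, only two sequential orderings are possible: $S_a = C|\tr_i \cdot C|\tr_j$ and $S_b = C|\tr_j \cdot C|\tr_i$. I would analyze each separately. In $S_a$, unfolding the definition of $\luvis{S_a}{\tr_i}$: the only other transaction is $\tr_j$, which is neither committed nor preceding $\tr_i$ in $S_a$, so no operations of $\tr_j$ may enter $\luvis{S_a}{\tr_i}$. Hence $\luvis{S_a}{\tr_i}|\objy = [\frop{i}{\objy}{1}]$, which, by \pref{eq:ill-seq-r}, lies outside $\sspec{\objy}$; so $\tr_i$ is not last-use legal in $S_a$. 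Symmetrically, in $S_b$ the transaction $\tr_i$ neither commits nor precedes $\tr_j$, so $\luvis{S_b}{\tr_j}$ excludes all operations of $\tr_i$, yielding $\luvis{S_b}{\tr_j}|\objx = [\frop{j}{\objx}{1}] \notin \sspec{\objx}$, and $\tr_j$ fails to be last-use legal in $S_b$.

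Combining these two cases, no sequential extension of any completion of $\hist_8$ satisfies \rdef{def:fs-lopacity}c, so $\hist_8$ is not final-state last-use opaque. The proof is essentially a case analysis analogous to the one already used for $\hist_4$ in \rlemma{lemma:h4-fslop} and for $\hist_6$ in \rlemma{lemma:h6-fslop}. The only subtlety, and the one I would pay attention to, is the care needed around the decided-transaction clause in the definition of $\luvisf$: even though both writes are \last{} writes (so each transaction is decided on the variable it writes), the clause $S\cpeC\tr_j \subseteq \luvis{S}{\tr_i}$ is only available when $\tr_j \prec_S \tr_i$. Thus the cyclic read-dependency cannot be resolved in either direction, which is exactly the obstruction that makes the history fail. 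No additional technical difficulty arises beyond mechanically writing out each of the two orderings in the same numbered-equation style used in the preceding lemmas.
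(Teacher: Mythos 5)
Your proposal is correct and follows essentially the same route as the paper: a two-case analysis over the orderings $C|\tr_i\cdot C|\tr_j$ and $C|\tr_j\cdot C|\tr_i$, showing in each that the later transaction's read ($\frop{i}{\objy}{1}$ resp.\ $\frop{j}{\objx}{1}$) cannot be justified because the other, uncommitted transaction's operations are excluded from $\luvisf$ when it does not precede the reader in $S$. Your observation about the one-directional availability of the $S\cpeC\tr_j$ clause is exactly the obstruction the paper's discussion of the cyclic-dependency example relies on.
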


\begin{proof}
    \begin{align}
    & \text{let}~ C_8 = \compl{\hist_8} = \hist_8 \\
    & \text{let}~ S_8 = C_8|\tr_i \cdot C_8|\tr_j \label{eq:let-s8} \\
    & %
      S_8 \equiv C_8 \label{eq:s8-equiv}\\
    & \text{real time order}~ \prec_{\hist_8} = \varnothing \label{eq:s8-rt-h8} \\
    & \text{real time order}~ \prec_{S_8} = 
      \{ \tr_i \prec_{S_8} \tr_j \} \label{eq:h8-rt-s8} \\
    & \pref{eq:s8-rt-h8} \wedge \pref{eq:h8-rt-s8}
      \so \prec_{S_8} \subseteq \prec_{\hist_8} \label{eq:h8-rt-sub} \\
    & \tr_i \prec_{S_8} \tr_j \wedge \res{i}{}{\co_i} \in S_8|\tr_i 
      \so S_8|\tr_i \subseteq \luvis{S_8}{\tr_i} \label{eq:h8-lvis-tj-ti} \\
    & j = j \so  S_8|\tr_j \subseteq \luvis{S_8}{\tr_i} \label{eq:h8-lvis-tj-tj}\\
    & \pref{eq:h8-lvis-tj-ti} \wedge \pref{eq:h8-lvis-tj-tj} 
      \so \vis{S_8}{\tr_j} = S_8|\tr_i \cdot S_8|\tr_j \label{eq:s8-tj-lvis} \\
    & \pref{eq:s8-tj-lvis} 
      \so \vis{S_8}{\tr_j}|\obj = 
            [\fwop{i}{\objx}{1}{\ok_i},\frop{i}{\objy}{1},
             \frop{j}{\objx}{1},\fwop{j}{\objy}{1}{\ok_j}] 
      \label{eq:s8-tj-lvis-x} \\
    & \pref{eq:s8-tj-lvis-x} \wedge \pref{eq:ill-seq-wrrw} 
      \so \vis{S_8}{\tr_j} ~\text{is not legal} \label{eq:h8-lvis-tj-not-legal} \\
    & \pref{eq:h8-lvis-tj-not-legal} \so  
      \tr_j ~\text{in}~ S_8 ~\text{is not legal in}~ S_8 
      \label{eq:h8-tj-not-legal} \\
    & \text{let}~\dot{S}_8 = C_8|\tr_j \cdot C_8|\tr_i \label{eq:let-s8b}\\
    & \dot{S}_8 \equiv C_8 \label{eq:s8b-equiv}\\
    & \text{real time order}~ \prec_{\hist_8} = \varnothing 
      \label{eq:s8b-rt-h8b} \\
    & \text{real time order}~ \prec_{\dot{S}_8} = 
      \{ \tr_i \prec_{\dot{S}_8} \tr_j \} \label{eq:h8b-rt-s8b} \\
    & \pref{eq:s8b-rt-h8b} \wedge \pref{eq:h8b-rt-s8b}
      \so \prec_{\dot{S}_8} \subseteq \prec_{\hist_8} 
      \label{eq:h8b-rt-sub} \\
    & \tr_j \prec_{\dot{S}_8} \tr_i 
      \so \dot{S}_8|\tr_i \nsubseteq \luvis{\dot{S}_8}{\tr_j} 
      \label{eq:h8b-lvis-tj-ti} \\
    & j = j \so  \dot{S}_8|\tr_j \subseteq \luvis{\dot{S}_8}{\tr_i} 
      \label{eq:h8b-lvis-tj-tj} \\
    & \pref{eq:h8b-lvis-tj-ti} \wedge \pref{eq:h8b-lvis-tj-tj} 
      \so \luvis{\dot{S}_8}{\tr_j} = \dot{S}_8|\tr_j \label{eq:h8b-tj-lvis} \\
    & \pref{eq:h8b-tj-lvis} 
      \so \luvis{\dot{S}_8}{\tr_j}|\obj = 
            [\fwop{j}{\objy}{1}{\ok_i},\frop{j}{\objx}{1},
             \frop{i}{\objy}{1},\fwop{i}{\objx}{1}{\ok_i}] 
      \label{eq:h8b-tj-lvis-x} \\
    & \pref{eq:h8b-tj-lvis-x} \wedge \pref{eq:ill-seq-wrrw} 
      \so \luvis{\dot{S}_8}{\tr_j} ~\text{is not legal} 
      \label{eq:h8b-lvis-tj-not-legal} \\
    & \pref{eq:h8b-lvis-tj-not-legal} \so  
      \tr_j ~\text{in}~ \dot{S}_8 ~\text{is not legal in}~ \dot{S}_8 
      \label{eq:h8b-tj-not-legal} \\
    & \pref{eq:h8-tj-not-legal} 
      \wedge \pref{eq:h8b-tj-not-legal} \so 
      \hist_8 ~\text{is not final-state last-use opaque}
    \end{align}
\end{proof}

\begin{lemma} \label{lemma:h8-lop}
    $\hist_8$ is not last-use opaque.    
\end{lemma}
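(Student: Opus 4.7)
The plan is to leverage the definition of last-use opacity (\rdef{def:lopacity}), which states that a history is last-use opaque if and only if every finite prefix of it is final-state last-use opaque. Since every history is trivially a prefix of itself, if the history itself fails to be final-state last-use opaque, then one of its prefixes fails the required condition, and hence the history cannot be last-use opaque.

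First I would invoke \rlemma{lemma:h8-fslop}, which was just established, to obtain that $\hist_8$ is not final-state last-use opaque. Next I would observe that $\hist_8$ is a (improper) finite prefix of itself, so the universal quantification in \rdef{def:lopacity} is violated by this very prefix. Finally I would conclude by \rdef{def:lopacity} that $\hist_8$ is not last-use opaque.

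This is a one-line contrapositive argument and follows the same pattern used earlier in the paper (e.g., the proofs of \rlemma{lemma:h2-lop}, \rlemma{lemma:h4-lop}, and \rlemma{lemma:h6-lop}), so there is no real obstacle. The only thing to be careful about is simply citing \rlemma{lemma:h8-fslop} and \rdef{def:lopacity} correctly; no additional sequential history construction or legality argument is needed, as all the required work has already been done in the preceding lemma.
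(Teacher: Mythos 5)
Your proposal is correct and matches the paper's own proof: both simply cite \rlemma{lemma:h8-fslop} to get that $\hist_8$ is not final-state last-use opaque, note that $\hist_8$ is a finite prefix of itself, and conclude via \rdef{def:lopacity}. No further work is needed.
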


\begin{proof}
    From \rlemma{lemma:h8-fslop} is not final-state last-use opaque, then so,
    from \rdef{def:lopacity} $\hist_8$ is not last-use opaque.
\end{proof}

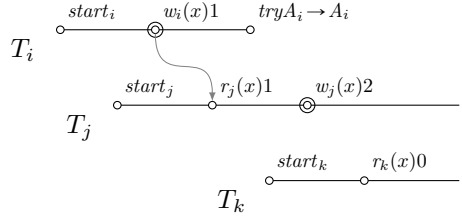
\begin{figure}
\begin{center}
\begin{tikzpicture}
     \draw
           (0,2)        node[tid]       {$\tr_i$}
                        node[aop]       {$\init_i$} %
                        node[dot]       {} 

      -- ++(1.25,0)     node[aop]       {$\twop{i}{\obj}{1}$}
                        node[dot] (wi)  {}
                        node[cir]       {}

      -- ++(1.25,0)      node[aop]       {$\tryA_i\!\to\!\ab_i$}
                        node[dot]       {}         
                        ;

     \draw
           (0.75,1)     node[tid]       {$\tr_{j}$}
                        node[aop]       {$\init_{j}$} %
                        node[dot]       {} 

      -- ++(1.25,0)     node[aop]       {$\trop{j}{\obj}{1}$}
                        node[dot] (rj)  {}

      -- ++(1.25,0)     node[aop]       {$\twop{j}{\obj}{2}$}
                        node[dot]       {}
                        node[cir]       {}                       

      -- ++(2,0)     node            {}       
                        ;

      \draw
           (2.75,0)   node[tid]       {$\tr_{k}$}
                        node[aop]       {$\init_{k}$} %
                        node[dot]       {} 

      -- ++(1.25,0)        node[aop]       {$\trop{k}{\obj}{0}$}
                        node[dot]       {}

      -- ++(1.25,0)     node            {}

                        ;                       
     
     \draw[hb] (wi) \squiggle (rj);
\end{tikzpicture}
\end{center}
\caption{\label{fig:h9-lop}
    $\hist_9$, last-use opaque.    
}
\end{figure}

\begin{lemma}
    $\hist_9$ is final-state last-use opaque.    
\end{lemma}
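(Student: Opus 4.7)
The plan is to construct a completion $C_9$ of $\hist_9$ together with an equivalent sequential history $S_9$, and then verify the three clauses of \rdef{def:fs-lopacity} directly. Since $\tr_i$ is already aborted in $\hist_9$, while $\tr_j$ and $\tr_k$ are live with no pending operations, the completion is obtained by appending $[\tryC_j\to\ab_j]$ and $[\tryC_k\to\ab_k]$ respectively (case (d) of the completion definition). I then take $S_9 = C_9|\tr_i \cdot C_9|\tr_j \cdot C_9|\tr_k$. The only nontrivial real-time order constraint is $\tr_i \prec_{\hist_9} \tr_k$, which is preserved in $S_9$; $\tr_j$ is concurrent with both $\tr_i$ and $\tr_k$ in $\hist_9$, so its placement is unconstrained.

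Since no transaction is committed in $S_9$, clause (b) of \rdef{def:fs-lopacity} is vacuous, and I only need to show that each of the three transactions is last-use legal in $S_9$. For $\tr_i$, $\luvis{S_9}{\tr_i} = S_9|\tr_i$, so $\luvis{S_9}{\tr_i}|\obj = [\fwop{i}{\obj}{1}{\ok_i}]$, legal by \pref{eq:seq-w}. For $\tr_j$, observe that $\tr_i$ decided on $\obj$ (since $\twop{i}{\obj}{1}$ is marked as its \last{} write), so $\tr_i \in \cpetrans{\hist_9}$; because $\tr_i \prec_{S_9} \tr_j$ but $\tr_i \not\prec_{\hist_9} \tr_j$, clause (b) of the definition of $\luvisf$ lets me choose to include $S_9\cpeC\tr_i$ in $\luvis{S_9}{\tr_j}$. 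Choosing to include it yields $\luvis{S_9}{\tr_j}|\obj = [\fwop{i}{\obj}{1}{\ok_i}, \frop{j}{\obj}{1}, \fwop{j}{\obj}{2}{\ok_j}]$, which is legal by analogy with \pref{eq:seq-wrw}.

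The interesting case is $\tr_k$. Here $\tr_i \prec_{\hist_9} \tr_k$, so the clause permitting optional inclusion of $\tr_i$'s decided subhistory in $\luvis{S_9}{\tr_k}$ does not apply (its ``it is not true that $\tr_j\prec_{\hist}\tr_i$'' precondition fails); since $\tr_i$ is not committed, clause (a) also fails, so $\tr_i$ is excluded from $\luvis{S_9}{\tr_k}$. For $\tr_j$, the concurrency with $\tr_k$ makes clause (b) applicable and I exercise the freedom to \emph{exclude} $\tr_j$ as well. Consequently $\luvis{S_9}{\tr_k} = S_9|\tr_k$, giving $\luvis{S_9}{\tr_k}|\obj = [\frop{k}{\obj}{0}]$, which is legal by \pref{eq:seq-r}.

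The key point, and the only delicate step, is the treatment of $\tr_k$: the real-time ordering $\tr_i \prec_{\hist_9} \tr_k$ prevents $\tr_k$ from seeing the aborted $\tr_i$'s write, which is exactly what allows $\tr_k$ to consistently read the initial value $0$. This is the same ``freedom to ignore an aborted transaction'' illustrated in \rfig{fig:example-freedom}. All three transactions being last-use legal in $S_9$, together with the preservation of real-time order, gives $\hist_9$ final-state last-use opacity by \rdef{def:fs-lopacity}.
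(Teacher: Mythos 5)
Your proof is correct and follows essentially the same route as the paper's: the same completion, the same sequential order $\tr_i \cdot \tr_j \cdot \tr_k$, the same choice to include $\tr_i$'s decided subhistory in $\luvis{S_9}{\tr_j}$ (yielding the legal sequence $[\fwop{i}{\obj}{1}{\ok_i}, \frop{j}{\obj}{1}, \fwop{j}{\obj}{2}{\ok_j}]$), and the same exclusion of both $\tr_i$ and $\tr_j$ from $\luvis{S_9}{\tr_k}$ so that $\tr_k$'s read of $0$ is legal. The only (immaterial) difference is that you treat the exclusion of $\tr_i$ from $\luvis{S_9}{\tr_k}$ as forced by the failed precondition of the optional clause (which matches the paper's informal discussion of \rfig{fig:example-freedom}), whereas the paper's formal derivation presents it as one of two permitted choices; both readings give $\luvis{S_9}{\tr_k}=S_9|\tr_k$ and the same conclusion.
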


\begin{proof}
    \begin{align}
    & \text{let}~ C_9 = \compl{\hist_9} =  \hist^9 \cdot [\tryA_j\to\ab_j, \tryA_k\to{\ab_k}] \\
    & \text{let}~ S_9 = C_9|\tr_i \cdot C_9|\tr_j \cdot C_9|\tr_k \label{eq:let-s9} \\
    & %
      S_9 \equiv C_9 \label{eq:s9-equiv}\\
    & \text{real time order}~ \prec_{\hist_9} =  \{ \tr_i \prec_{\hist_9} \tr_k \}\label{eq:s9-rt-h9} \\
    & \text{real time order}~ \prec_{S_9} = 
      \{ \tr_i \prec_{S_9} \tr_j, \tr_i  \prec_{S_9} \tr_k, \tr_j  \prec_{S_9} \tr_k  \} \label{eq:h9-rt-s9} \\
    & \pref{eq:s9-rt-h9} \wedge \pref{eq:h9-rt-s9}
      \so \prec_{S_9} \subseteq \prec_{\hist_9} \label{eq:h9-rt-sub} \\
    & i = i \so S_9|\tr_i \subseteq \luvis{S_9}{\tr_i} \label{eq:h9-lvis-ti-ti} \\
    & \tr_i \prec_{S_9} \tr_j 
      \so S_9|\tr_j \nsubseteq \luvis{S_9}{\tr_i} \label{eq:h9-lvis-ti-tj} \\
    & \tr_i \prec_{S_9} \tr_k 
      \so S_9|\tr_k \nsubseteq \luvis{S_9}{\tr_i} \label{eq:h9-lvis-ti-tk} \\
    & \pref{eq:h9-lvis-ti-ti} \wedge \pref{eq:h9-lvis-ti-tj} \wedge \pref{eq:h9-lvis-ti-tk} 
      \so \luvis{S_9}{\tr_i} = S_9|\tr_i \label{eq:h9-ti-lvis} \\
    & \pref{eq:h9-ti-lvis} \so \luvis{S_9}{\tr_i}|\obj = [\fwop{i}{\obj}{1}{\ok_i}] 
      \label{eq:h9-ti-lvis-x} \\
    & \pref{eq:h9-ti-lvis-x} \wedge \pref{eq:seq-w} 
      \so \luvis{S_9}{\tr_i} ~\text{is legal} \label{eq:h9-lvis-ti-legal} \\
    & \pref{eq:h9-lvis-ti-legal} \so  
      \tr_i ~\text{in}~ S_9 ~\text{is last-use legal in}~ S_9 \label{eq:h9-ti-legal} \\
    & \fwop{i}{\obj}{1}{\ok_i} ~\text{is \last{} write on}~\obj~\text{in}~\tr_i
      \so \tr_i ~\text{is decided on}~\obj ~\text{in}~
      S_9 \label{eq:h9-ti-decided} \\
    & \tr_i \prec_{S_9} \tr_j \wedge \pref{eq:h9-ti-decided}
      \so S_9\cpeC\tr_i \subseteq \luvis{S_9}{\tr_j} \label{eq:h9-lvis-tj-ti} \\
    & j = j \so  S_9|\tr_j \subseteq \luvis{S_9}{\tr_j} \label{eq:h9-lvis-tj-tj} \\
    & \tr_j \prec_{S_9} \tr_k 
      \so S_9|\tr_k \nsubseteq \luvis{S_9}{\tr_j} \label{eq:h9-lvis-tj-tk} \\
    & \pref{eq:h9-lvis-tj-ti} \wedge \pref{eq:h9-lvis-tj-tj} \wedge \pref{eq:h9-lvis-tj-tk} 
      \so \luvis{S_9}{\tr_j} = S_9\cpeC\tr_i \cdot S_9|\tr_j \label{eq:h9-tj-lvis} \\
    & \pref{eq:h9-tj-lvis} 
      \so \luvis{S_9}{\tr_j}|\obj = [\fwop{i}{\obj}{1}{\ok_i}, \frop{j}{\obj}{1}, \fwop{j}{\obj}{2}{\ok_j}]
      \label{eq:h9-tj-lvis-x} \\
    & \pref{eq:h9-tj-lvis-x} \wedge \pref{eq:seq-wrw} 
      \so \luvis{S_9}{\tr_j} ~\text{is legal} \label{eq:h9-lvis-tj-legal} \\
    & \pref{eq:h9-lvis-tj-legal} \so  
      \tr_j ~\text{in}~ S_9 ~\text{is last-use legal in}~ S_9 \label{eq:h9-tj-legal} \\
    \end{align}
    \begin{align}
      & \tr_i \prec_{S_9} \tr_k \wedge \pref{eq:h9-ti-decided}  
      \so S_9\cpeC\tr_i \subseteq \luvis{S_9}{\tr_k}~\text{or}~
          S_9\cpeC\tr_i \nsubseteq \luvis{S_9}{\tr_k} \label{eq:h9-lvis-tk-ti-option} \\
      & \pref{eq:h9-lvis-tk-ti-option} 
      \so S_9\cpeC\tr_i \nsubseteq \luvis{S_9}{\tr_k} \label{eq:h9-lvis-tk-ti} \\
      & \fwop{j}{\obj}{2}{\ok_j} ~\text{is \last{} write on}~\obj~\text{in}~\tr_j
      \so \tr_j ~\text{is decided on}~\obj~\text{in}~
        S_9 \label{eq:h9-tj-decided} \\
      & \tr_j \prec_{S_9} \tr_k \wedge \pref{eq:h9-tj-decided}  
      \so S_9\cpeC\tr_j \subseteq \luvis{S_9}{\tr_k}~\text{or}~
          S_9\cpeC\tr_j \nsubseteq \luvis{S_9}{\tr_k} \label{eq:h9-lvis-tk-tj-option} \\
      & \pref{eq:h9-lvis-tk-tj-option} 
      \so S_9\cpeC\tr_j \nsubseteq \luvis{S_9}{\tr_k} \label{eq:h9-lvis-tk-tj} \\
    & k = k \so  S_9|\tr_k \subseteq \luvis{S_9}{\tr_k} \label{eq:h9-lvis-tk-tk} \\
    & \pref{eq:h9-lvis-tk-ti} \wedge \pref{eq:h9-lvis-tk-tj} \wedge \pref{eq:h9-lvis-tk-tk} 
      \so \luvis{S_9}{\tr_k} = S_9|\tr_k \label{eq:h9-tk-lvis} \\
    & \pref{eq:h9-tk-lvis} 
      \so \luvis{S_9}{\tr_k}|\obj = [\frop{j}{\obj}{0}]
      \label{eq:h9-tj-lvis-x} \\
    & \pref{eq:h9-tj-lvis-x} \wedge \pref{eq:seq-r} 
      \so \luvis{S_9}{\tr_k} ~\text{is legal} \label{eq:h9-lvis-tk-legal} \\
    & \pref{eq:h9-lvis-tk-legal} \so  
      \tr_k ~\text{in}~ S_9 ~\text{is last-use legal in}~ S_9 \label{eq:h9-tk-legal} \\
    & \pref{eq:s9-equiv} \wedge \pref{eq:h9-rt-sub} 
      \wedge \pref{eq:h9-ti-legal} \wedge \pref{eq:h9-tj-legal}\wedge \pref{eq:h9-tk-legal} \so 
      \hist_9 ~\text{is final-state last-use opaque}
    \end{align}
\end{proof}

Let $P^9_1$ be a prefix s.t. $\hist_9 = P^9_1 \cdot [\res{k}{}{0}]$.

\begin{lemma} \label{lemma:p91-fslop}
    $P^9_1$ is final-state last-use opaque.    
\end{lemma}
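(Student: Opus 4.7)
The plan is to follow the mechanical template established by the preceding lemmas in the appendix (see for instance the proof of Lemma on $\hist_9$), adapting it to the pending read in $\tr_k$.

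First, I would construct the completion $C^9_1 = \compl{P^9_1}$. Since $\tr_i$ is already aborted in $P^9_1$, $\tr_j$ is live with no pending operation, and $\tr_k$ is live with a pending invocation $\inv{k}{}{\trop{k}{\obj}{}}$, the completion rules give
$$C^9_1 = P^9_1 \cdot [\res{k}{}{\ab_k}, \tryC_j\to\ab_j].$$
Next, I would pick the sequential witness $S^9_1 = C^9_1|\tr_i \cdot C^9_1|\tr_j \cdot C^9_1|\tr_k$ and verify $S^9_1 \equiv C^9_1$ and that $\prec_{S^9_1}$ extends $\prec_{P^9_1}$ (which only requires $\tr_i \prec \tr_k$).

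The bulk of the work is then the three legality checks. For $\tr_i$ and $\tr_j$, the argument is literally the same as in the proof for $\hist_9$: $\luvis{S^9_1}{\tr_i} = S^9_1|\tr_i$ yields the singleton write $[\fwop{i}{\obj}{1}{\ok_i}]$, which is in $\sspec{\obj}$ by \pref{eq:seq-w}; and, using the fact that $\tr_i$ is decided on $\obj$ (its \last{} write is complete), $\luvis{S^9_1}{\tr_j} = S^9_1\cpeC\tr_i \cdot S^9_1|\tr_j$ yields $[\fwop{i}{\obj}{1}{\ok_i},\frop{j}{\obj}{1},\fwop{j}{\obj}{2}{\ok_j}]$, legal by \pref{eq:seq-wrw}.

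The only genuinely new piece is $\tr_k$. Both $\tr_i$ and $\tr_j$ are decided on $\obj$ in $P^9_1$ and are not committed in $S^9_1$, so the definition of $\luvisf$ leaves us free to either include or exclude each of $S^9_1\cpeC\tr_i$ and $S^9_1\cpeC\tr_j$. I would exploit this freedom by excluding \emph{both}, yielding $\luvis{S^9_1}{\tr_k} = S^9_1|\tr_k$ and hence $\luvis{S^9_1}{\tr_k}|\obj = [\frop{k}{\obj}{\ab_k}]$. This is an aborted read with no preceding write, which matches the aborted-read pattern handled by $\sspec{\obj}$ (analogous to \pref{eq:seq-wra} but with an empty prefix, and consistent with the treatment of pending-and-aborted operations in the completion of Lemma on $P^1_7$). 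The main subtle step is this choice: one could alternatively include $S^9_1\cpeC\tr_i$ but then must \emph{exclude} $S^9_1\cpeC\tr_j$ to avoid a trailing $[\fwop{j}{\obj}{2}{\ok_j},\frop{k}{\obj}{\ab_k}]$ mismatch; the cleanest route is simply to exclude both. Once $\tr_k$ is last-use legal in $S^9_1$, combining with the clauses above and $\prec_{S^9_1} \subseteq \prec_{P^9_1}$ concludes, by \rdef{def:fs-lopacity}, that $P^9_1$ is final-state last-use opaque.
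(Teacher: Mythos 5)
Your proof is correct and follows essentially the same route as the paper's: the same completion (terminating $\tr_j$ with an abort and completing $\tr_k$'s pending read with $\ab_k$), the same witness $S^9_1 = C^9_1|\tr_i \cdot C^9_1|\tr_j \cdot C^9_1|\tr_k$, identical legality arguments for $\tr_i$ and $\tr_j$, and the same resolution for $\tr_k$, namely taking $\luvis{S^9_1}{\tr_k} = S^9_1|\tr_k$ so that $\luvis{S^9_1}{\tr_k}|\obj = [\frop{k}{\obj}{\ab_k}]$. One pedantic note: excluding $S^9_1\cpeC\tr_i$ from $\luvis{S^9_1}{\tr_k}$ is not actually a free choice, since $\tr_i \prec_{P^9_1} \tr_k$ in real time and $\tr_i$ is aborted, so clause (b) of the definition of $\luvisf$ does not even offer the inclusion option --- but this only forces the exclusion you chose anyway, and the paper's own derivation phrases it as a choice too.
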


\begin{proof}
    \begin{align}
    & \text{let}~ C^9_1 = \compl{P^9_1} =  P^9_1 \cdot [\tryA_j\to\ab_j, \res{k}{}{\ab_k}] \\
    & \text{let}~ S^9_1 = C^9_1|\tr_i \cdot C^9_1|\tr_j \cdot C^9_1|\tr_k \label{eq:let-s91} \\
    & %
      S^9_1 \equiv C^9_1 \label{eq:s91-equiv}\\
    & \text{real time order}~ \prec_{P^9_1} =  \{ \tr_i \prec_{P^9_1} \tr_k \}\label{eq:s91-rt-p91} \\
    & \text{real time order}~ \prec_{S^9_1} = 
      \{ \tr_i \prec_{S^9_1} \tr_j, \tr_i  \prec_{S^9_1} \tr_k, \tr_j  \prec_{S^9_1} \tr_k  \} \label{eq:p91-rt-s91} \\
    & \pref{eq:s91-rt-p91} \wedge \pref{eq:p91-rt-s91}
      \so \prec_{S^9_1} \subseteq \prec_{P^9_1} \label{eq:p91-rt-sub} \\
    & i = i \so S^9_1|\tr_i \subseteq \luvis{S^9_1}{\tr_i} \label{eq:p91-lvis-ti-ti} \\
    & \tr_i \prec_{S^9_1} \tr_j 
      \so S^9_1|\tr_j \nsubseteq \luvis{S^9_1}{\tr_i} \label{eq:p91-lvis-ti-tj} \\
    & \tr_i \prec_{S^9_1} \tr_k 
      \so S^9_1|\tr_k \nsubseteq \luvis{S^9_1}{\tr_i} \label{eq:p91-lvis-ti-tk} \\
    & \pref{eq:p91-lvis-ti-ti} \wedge \pref{eq:p91-lvis-ti-tj} \wedge \pref{eq:p91-lvis-ti-tk} 
      \so \luvis{S^9_1}{\tr_i} = S^9_1|\tr_i \label{eq:p91-ti-lvis} \\
    & \pref{eq:p91-ti-lvis} \so \luvis{S^9_1}{\tr_i}|\obj = [\fwop{i}{\obj}{1}{\ok_i}] 
      \label{eq:p91-ti-lvis-x} \\
    & \pref{eq:p91-ti-lvis-x} \wedge \pref{eq:seq-w} 
      \so \luvis{S^9_1}{\tr_i} ~\text{is legal} \label{eq:p91-lvis-ti-legal} \\
    & \pref{eq:p91-lvis-ti-legal} \so  
      \tr_i ~\text{in}~ S^9_1 ~\text{is last-use legal in}~ S^9_1 \label{eq:p91-ti-legal} \\
    \end{align}    
    \begin{align}
    & \fwop{i}{\obj}{1}{\ok_i} ~\text{is \last{} write on}~\obj~\text{in}~\tr_i
      \so \tr_i ~\text{is decided on}~\obj ~\text{in}~
      S^9_1 \label{eq:p91-ti-decided} \\
    & \tr_i \prec_{S^9_1} \tr_j \wedge \pref{eq:p91-ti-decided}
      \so S^9_1\cpeC\tr_i \subseteq \luvis{S^9_1}{\tr_j} \label{eq:p91-lvis-tj-ti} \\
    & j = j \so  S^9_1|\tr_j \subseteq \luvis{S^9_1}{\tr_j} \label{eq:p91-lvis-tj-tj} \\
    & \tr_j \prec_{S^9_1} \tr_k 
      \so S^9_1|\tr_k \nsubseteq \luvis{S^9_1}{\tr_j} \label{eq:p91-lvis-tj-tk} \\
    & \pref{eq:p91-lvis-tj-ti} \wedge \pref{eq:p91-lvis-tj-tj} \wedge \pref{eq:p91-lvis-tj-tk} 
      \so \luvis{S^9_1}{\tr_j} = S^9_1\cpeC\tr_i \cdot S^9_1|\tr_j \label{eq:p91-tj-lvis} \\
    & \pref{eq:p91-tj-lvis} 
      \so \luvis{S^9_1}{\tr_j}|\obj = [\fwop{i}{\obj}{1}{\ok_i}, \frop{j}{\obj}{1}, \fwop{j}{\obj}{2}{\ok_j}]
      \label{eq:p91-tj-lvis-x} \\
    & \pref{eq:p91-tj-lvis-x} \wedge \pref{eq:seq-wrw} 
      \so \luvis{S^9_1}{\tr_j} ~\text{is legal} \label{eq:p91-lvis-tj-legal} \\
    & \pref{eq:p91-lvis-tj-legal} \so  
      \tr_j ~\text{in}~ S^9_1 ~\text{is last-use legal in}~ S^9_1 \label{eq:p91-tj-legal} \\
      & \tr_i \prec_{S^9_1} \tr_k \wedge \pref{eq:p91-ti-decided}  
      \so S^9_1\cpeC\tr_i \subseteq \luvis{S^9_1}{\tr_k}~\text{or}~
          S^9_1\cpeC\tr_i \nsubseteq \luvis{S^9_1}{\tr_k} \label{eq:p91-lvis-tk-ti-option} \\
      & \pref{eq:p91-lvis-tk-ti-option} 
      \so S^9_1\cpeC\tr_i \nsubseteq \luvis{S^9_1}{\tr_k} \label{eq:p91-lvis-tk-ti} \\
      & \fwop{j}{\obj}{2}{\ok_j} ~\text{is \last{} write on}~\obj~\text{in}~\tr_j
      \so \tr_j ~\text{is decided on}~\obj~\text{in}~
        S^9_1 \label{eq:p91-tj-decided} \\
      & \tr_j \prec_{S^9_1} \tr_k \wedge \pref{eq:p91-tj-decided}  
      \so S^9_1\cpeC\tr_j \subseteq \luvis{S^9_1}{\tr_k}~\text{or}~
          S^9_1\cpeC\tr_j \nsubseteq \luvis{S^9_1}{\tr_k} \label{eq:p91-lvis-tk-tj-option} \\
      & \pref{eq:p91-lvis-tk-tj-option} 
      \so S^9_1\cpeC\tr_j \nsubseteq \luvis{S^9_1}{\tr_k} \label{eq:p91-lvis-tk-tj} \\
    & k = k \so  S^9_1|\tr_k \subseteq \luvis{S^9_1}{\tr_k} \label{eq:p91-lvis-tk-tk} \\
    & \pref{eq:p91-lvis-tk-ti} \wedge \pref{eq:p91-lvis-tk-tj} \wedge \pref{eq:p91-lvis-tk-tk} 
      \so \luvis{S^9_1}{\tr_k} = S^9_1|\tr_k \label{eq:p91-tk-lvis} \\
    & \pref{eq:p91-tk-lvis} 
      \so \luvis{S^9_1}{\tr_k}|\obj = [\frop{k}{\obj}{\ab_k}]
      \label{eq:p91-tj-lvis-x} \\
    & \pref{eq:p91-tj-lvis-x} \wedge \pref{eq:seq-ra} 
      \so \luvis{S^9_1}{\tr_k} ~\text{is legal} \label{eq:p91-lvis-tk-legal} \\
    & \pref{eq:p91-lvis-tk-legal} \so  
      \tr_k ~\text{in}~ S^9_1 ~\text{is last-use legal in}~ S^9_1 \label{eq:p91-tk-legal} \\
    & \pref{eq:s91-equiv} \wedge \pref{eq:p91-rt-sub} 
      \wedge \pref{eq:p91-ti-legal} \wedge \pref{eq:p91-tj-legal}\wedge \pref{eq:p91-tk-legal} \so 
      P^9_1 ~\text{is final-state last-use opaque}
    \end{align}
\end{proof}

Let $P^9_2$ be a prefix s.t. $\hist_9 = P^9_2 \cdot [\frop{k}{\obj}{0}]$.

\begin{proof}
    \begin{align}
    & \text{let}~ C^9_2 = \compl{P^9_2} =  P^9_2 \cdot [\res{j}{}{\ab_j}, \res{k}{}{\ab_j}] \\
    & \text{let}~ S^9_2 = C^9_2|\tr_i \cdot C^9_2|\tr_j \cdot C^9_2|\tr_k \label{eq:let-s92} \\
    & %
      S^9_2 \equiv C^9_2 \label{eq:s92-equiv}\\
    & \text{real time order}~ \prec_{P^9_2} =  \{ \tr_i \prec_{P^9_2} \tr_k \}\label{eq:s92-rt-p92} \\
    & \text{real time order}~ \prec_{S^9_2} = 
      \{ \tr_i \prec_{S^9_2} \tr_j, \tr_i  \prec_{S^9_2} \tr_k, \tr_j  \prec_{S^9_2} \tr_k  \} \label{eq:p92-rt-s92} \\
    & \pref{eq:s92-rt-p92} \wedge \pref{eq:p92-rt-s92}
      \so \prec_{S^9_2} \subseteq \prec_{P^9_2} \label{eq:p92-rt-sub} \\
    & i = i \so S^9_2|\tr_i \subseteq \luvis{S^9_2}{\tr_i} \label{eq:p92-lvis-ti-ti} \\
    & \tr_i \prec_{S^9_2} \tr_j 
      \so S^9_2|\tr_j \nsubseteq \luvis{S^9_2}{\tr_i} \label{eq:p92-lvis-ti-tj} \\
    & \tr_i \prec_{S^9_2} \tr_k 
      \so S^9_2|\tr_k \nsubseteq \luvis{S^9_2}{\tr_i} \label{eq:p92-lvis-ti-tk} \\
    & \pref{eq:p92-lvis-ti-ti} \wedge \pref{eq:p92-lvis-ti-tj} \wedge \pref{eq:p92-lvis-ti-tk} 
      \so \luvis{S^9_2}{\tr_i} = S^9_2|\tr_i \label{eq:p92-ti-lvis} \\
    & \pref{eq:p92-ti-lvis} \so \luvis{S^9_2}{\tr_i}|\obj = [\fwop{i}{\obj}{1}{\ok_i}] 
      \label{eq:p92-ti-lvis-x} \\
    & \pref{eq:p92-ti-lvis-x} \wedge \pref{eq:seq-w} 
      \so \luvis{S^9_2}{\tr_i} ~\text{is legal} \label{eq:p92-lvis-ti-legal} \\
    & \pref{eq:p92-lvis-ti-legal} \so  
      \tr_i ~\text{in}~ S^9_2 ~\text{is last-use legal in}~ S^9_2 \label{eq:p92-ti-legal} \\
    & \fwop{i}{\obj}{1}{\ok_i} ~\text{is \last{} write on}~\obj~\text{in}~\tr_i
      \so \tr_i ~\text{is decided on}~\obj ~\text{in}~
      S^9_2 \label{eq:p92-ti-decided} \\
    & \tr_i \prec_{S^9_2} \tr_j \wedge \pref{eq:p92-ti-decided}
      \so S^9_2\cpeC\tr_i \subseteq \luvis{S^9_2}{\tr_j} \label{eq:p92-lvis-tj-ti} \\
    & j = j \so  S^9_2|\tr_j \subseteq \luvis{S^9_2}{\tr_j} \label{eq:p92-lvis-tj-tj} \\
    & \tr_j \prec_{S^9_2} \tr_k 
      \so S^9_2|\tr_k \nsubseteq \luvis{S^9_2}{\tr_j} \label{eq:p92-lvis-tj-tk} \\
    & \pref{eq:p92-lvis-tj-ti} \wedge \pref{eq:p92-lvis-tj-tj} \wedge \pref{eq:p92-lvis-tj-tk} 
      \so \luvis{S^9_2}{\tr_j} = S^9_2\cpeC\tr_i \cdot S^9_2|\tr_j \label{eq:p92-tj-lvis} \\
    & \pref{eq:p92-tj-lvis} 
      \so \luvis{S^9_2}{\tr_j}|\obj = [\fwop{i}{\obj}{1}{\ok_i}, \frop{j}{\obj}{1}, \fwop{j}{\obj}{2}{\ok_j}]
      \label{eq:p92-tj-lvis-x} \\
    & \pref{eq:p92-tj-lvis-x} \wedge \pref{eq:seq-wrw} 
      \so \luvis{S^9_2}{\tr_j} ~\text{is legal} \label{eq:p92-lvis-tj-legal} \\
    & \pref{eq:p92-lvis-tj-legal} \so  
      \tr_j ~\text{in}~ S^9_2 ~\text{is last-use legal in}~ S^9_2 \label{eq:p92-tj-legal} \\
      & \tr_i \prec_{S^9_2} \tr_k \wedge \pref{eq:p92-ti-decided}  
      \so S^9_2\cpeC\tr_i \subseteq \luvis{S^9_2}{\tr_k}~\text{or}~
          S^9_2\cpeC\tr_i \nsubseteq \luvis{S^9_2}{\tr_k} \label{eq:p92-lvis-tk-ti-option} \\
      & \pref{eq:p92-lvis-tk-ti-option} 
      \so S^9_2\cpeC\tr_i \nsubseteq \luvis{S^9_2}{\tr_k} \label{eq:p92-lvis-tk-ti} \\
      & \fwop{j}{\obj}{2}{\ok_j} ~\text{is \last{} write on}~\obj~\text{in}~\tr_j
      \so \tr_j ~\text{is decided on}~\obj~\text{in}~
        S^9_2 \label{eq:p92-tj-decided} \\
      & \tr_j \prec_{S^9_2} \tr_k \wedge \pref{eq:p92-tj-decided}  
      \so S^9_2\cpeC\tr_j \subseteq \luvis{S^9_2}{\tr_k}~\text{or}~
          S^9_2\cpeC\tr_j \nsubseteq \luvis{S^9_2}{\tr_k} \label{eq:p92-lvis-tk-tj-option} \\
      & \pref{eq:p92-lvis-tk-tj-option} 
      \so S^9_2\cpeC\tr_j \nsubseteq \luvis{S^9_2}{\tr_k} \label{eq:p92-lvis-tk-tj} \\
    & k = k \so  S^9_2|\tr_k \subseteq \luvis{S^9_2}{\tr_k} \label{eq:p92-lvis-tk-tk} \\
    & \pref{eq:p92-lvis-tk-ti} \wedge \pref{eq:p92-lvis-tk-tj} \wedge \pref{eq:p92-lvis-tk-tk} 
      \so \luvis{S^9_2}{\tr_k} = S^9_2|\tr_k \label{eq:p92-tk-lvis} \\
    & \pref{eq:p92-tk-lvis} 
      \so \luvis{S^9_2}{\tr_k}|\obj = \varnothing
      \label{eq:p92-tj-lvis-x} \\
    & \pref{eq:p92-tj-lvis-x} \wedge \pref{eq:seq-empty} 
      \so \luvis{S^9_2}{\tr_k} ~\text{is legal} \label{eq:p92-lvis-tk-legal} \\
    & \pref{eq:p92-lvis-tk-legal} \so  
      \tr_k ~\text{in}~ S^9_2 ~\text{is last-use legal in}~ S^9_2 \label{eq:p92-tk-legal} \\
    & \pref{eq:s92-equiv} \wedge \pref{eq:p92-rt-sub} 
      \wedge \pref{eq:p92-ti-legal} \wedge \pref{eq:p92-tj-legal}\wedge \pref{eq:p92-tk-legal} \so 
      P^9_2 ~\text{is final-state last-use opaque} \\
    \end{align}
\end{proof}

Let $P^9_2$ be a prefix s.t. $\hist_9 = P^9_2 \cdot [\res{j}{}{\ok_j},\frop{k}{\obj}{0}]$.

\begin{proof}
    \begin{align}
    & \text{let}~ C^9_3 = \compl{P^9_3} =  P^9_3 \cdot [\res{j}{}{\ab_j}, \res{k}{}{\ab_j}] \\
    & \text{let}~ S^9_3 = C^9_3|\tr_i \cdot C^9_3|\tr_j \cdot C^9_3|\tr_k \label{eq:let-s93} \\
    & %
      S^9_3 \equiv C^9_3 \label{eq:s93-equiv}\\
    & \text{real time order}~ \prec_{P^9_3} =  \{ \tr_i \prec_{P^9_3} \tr_k \}\label{eq:s93-rt-p93} \\
    & \text{real time order}~ \prec_{S^9_3} = 
      \{ \tr_i \prec_{S^9_3} \tr_j, \tr_i  \prec_{S^9_3} \tr_k, \tr_j  \prec_{S^9_3} \tr_k  \} \label{eq:p93-rt-s93} \\
    & \pref{eq:s93-rt-p93} \wedge \pref{eq:p93-rt-s93}
      \so \prec_{S^9_3} \subseteq \prec_{P^9_3} \label{eq:p93-rt-sub} \\
    & i = i \so S^9_3|\tr_i \subseteq \luvis{S^9_3}{\tr_i} \label{eq:p93-lvis-ti-ti} \\
    & \tr_i \prec_{S^9_3} \tr_j 
      \so S^9_3|\tr_j \nsubseteq \luvis{S^9_3}{\tr_i} \label{eq:p93-lvis-ti-tj} \\
    & \tr_i \prec_{S^9_3} \tr_k 
      \so S^9_3|\tr_k \nsubseteq \luvis{S^9_3}{\tr_i} \label{eq:p93-lvis-ti-tk} \\
    & \pref{eq:p93-lvis-ti-ti} \wedge \pref{eq:p93-lvis-ti-tj} \wedge \pref{eq:p93-lvis-ti-tk} 
      \so \luvis{S^9_3}{\tr_i} = S^9_3|\tr_i \label{eq:p93-ti-lvis} \\
    & \pref{eq:p93-ti-lvis} \so \luvis{S^9_3}{\tr_i}|\obj = [\fwop{i}{\obj}{1}{\ok_i}] 
      \label{eq:p93-ti-lvis-x} \\
    & \pref{eq:p93-ti-lvis-x} \wedge \pref{eq:seq-w} 
      \so \luvis{S^9_3}{\tr_i} ~\text{is legal} \label{eq:p93-lvis-ti-legal} \\
    & \pref{eq:p93-lvis-ti-legal} \so  
      \tr_i ~\text{in}~ S^9_3 ~\text{is last-use legal in}~ S^9_3 \label{eq:p93-ti-legal} \\
    & \fwop{i}{\obj}{1}{\ok_i} ~\text{is \last{} write on}~\obj~\text{in}~\tr_i
      \so \tr_i ~\text{is decided on}~\obj ~\text{in}~
      S^9_3 \label{eq:p93-ti-decided} \\
    & \tr_i \prec_{S^9_3} \tr_j \wedge \pref{eq:p93-ti-decided}
      \so S^9_3\cpeC\tr_i \subseteq \luvis{S^9_3}{\tr_j} \label{eq:p93-lvis-tj-ti} \\
    & j = j \so  S^9_3|\tr_j \subseteq \luvis{S^9_3}{\tr_j} \label{eq:p93-lvis-tj-tj} \\
    & \tr_j \prec_{S^9_3} \tr_k 
      \so S^9_3|\tr_k \nsubseteq \luvis{S^9_3}{\tr_j} \label{eq:p93-lvis-tj-tk} \\
    & \pref{eq:p93-lvis-tj-ti} \wedge \pref{eq:p93-lvis-tj-tj} \wedge \pref{eq:p93-lvis-tj-tk} 
      \so \luvis{S^9_3}{\tr_j} = S^9_3\cpeC\tr_i \cdot S^9_3|\tr_j \label{eq:p93-tj-lvis} \\
    & \pref{eq:p93-tj-lvis} 
      \so \luvis{S^9_3}{\tr_j}|\obj = [\fwop{i}{\obj}{1}{\ok_i}, \frop{j}{\obj}{1}, \fwop{j}{\obj}{2}{\ab_j}]
      \label{eq:p93-tj-lvis-x} \\
    & \pref{eq:p93-tj-lvis-x} \wedge \pref{eq:seq-wrwa} 
      \so \luvis{S^9_3}{\tr_j} ~\text{is legal} \label{eq:p93-lvis-tj-legal} \\
    & \pref{eq:p93-lvis-tj-legal} \so  
      \tr_j ~\text{in}~ S^9_3 ~\text{is last-use legal in}~ S^9_3 \label{eq:p93-tj-legal} \\
      & \tr_i \prec_{S^9_3} \tr_k \wedge \pref{eq:p93-ti-decided}  
      \so S^9_3\cpeC\tr_i \subseteq \luvis{S^9_3}{\tr_k}~\text{or}~
          S^9_3\cpeC\tr_i \nsubseteq \luvis{S^9_3}{\tr_k} \label{eq:p93-lvis-tk-ti-option} \\
      & \pref{eq:p93-lvis-tk-ti-option} 
      \so S^9_3\cpeC\tr_i \nsubseteq \luvis{S^9_3}{\tr_k} \label{eq:p93-lvis-tk-ti} \\
      & \fwop{j}{\obj}{2}{\ok_j} ~\text{is \last{} write on}~\obj~\text{in}~\tr_j
      \so \tr_j ~\text{is decided on}~\obj~\text{in}~
        S^9_3 \label{eq:p93-tj-decided} \\
      & \tr_j \prec_{S^9_3} \tr_k \wedge \pref{eq:p93-tj-decided}  
      \so S^9_3\cpeC\tr_j \subseteq \luvis{S^9_3}{\tr_k}~\text{or}~
          S^9_3\cpeC\tr_j \nsubseteq \luvis{S^9_3}{\tr_k} \label{eq:p93-lvis-tk-tj-option} \\
      & \pref{eq:p93-lvis-tk-tj-option} 
      \so S^9_3\cpeC\tr_j \nsubseteq \luvis{S^9_3}{\tr_k} \label{eq:p93-lvis-tk-tj} \\
    & k = k \so  S^9_3|\tr_k \subseteq \luvis{S^9_3}{\tr_k} \label{eq:p93-lvis-tk-tk} \\
    & \pref{eq:p93-lvis-tk-ti} \wedge \pref{eq:p93-lvis-tk-tj} \wedge \pref{eq:p93-lvis-tk-tk} 
      \so \luvis{S^9_3}{\tr_k} = S^9_3|\tr_k \label{eq:p93-tk-lvis} \\
    & \pref{eq:p93-tk-lvis} 
      \so \luvis{S^9_3}{\tr_k}|\obj = \varnothing
      \label{eq:p93-tj-lvis-x} \\
    & \pref{eq:p93-tj-lvis-x} \wedge \pref{eq:seq-empty} 
      \so \luvis{S^9_3}{\tr_k} ~\text{is legal} \label{eq:p93-lvis-tk-legal} \\
    & \pref{eq:p93-lvis-tk-legal} \so  
      \tr_k ~\text{in}~ S^9_3 ~\text{is last-use legal in}~ S^9_3 \label{eq:p93-tk-legal} \\
    & \pref{eq:s93-equiv} \wedge \pref{eq:p93-rt-sub} 
      \wedge \pref{eq:p93-ti-legal} \wedge \pref{eq:p93-tj-legal}\wedge \pref{eq:p93-tk-legal} \so 
      P^9_3 ~\text{is final-state last-use opaque}
    \end{align}
\end{proof}

Let $P^9_4$ be a prefix s.t. $\hist_9 = P^9_4 \cdot [\fwop{j}{\obj}{2}{\ok_j}, \frop{k}{\obj}{0}]$.

\begin{lemma} \label{lemma:p94-fslop}
    $P^9_4$ is final-state last-use opaque.    
\end{lemma}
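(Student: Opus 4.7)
The plan is to mirror the structure of the preceding lemmas in this appendix (e.g., \rlemma{lemma:p91-fslop}, and the $P^9_2$, $P^9_3$ analyses), adapting them to the shorter prefix. First I would fix the completion $C^9_4 = \compl{P^9_4} = P^9_4 \cdot [\res{j}{}{\ab_j}, \res{k}{}{\ab_k}]$ (with appropriate $\tryA_h \to \ab_h$ appended where needed to handle pending operations or live transactions per the four cases in the definition of completion). I would then pick the sequential witness $S^9_4 = C^9_4|\tr_i \cdot C^9_4|\tr_j \cdot C^9_4|\tr_k$ and observe that $S^9_4 \equiv C^9_4$, and that since the only real-time constraint in $P^9_4$ is $\tr_i \prec_{P^9_4} \tr_k$ (because $\tr_i$ aborts before $\tr_k$'s first event, while $\tr_j$ is concurrent with both), the order chosen for $S^9_4$ preserves $\prec_{P^9_4}$.

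Next I would check last-use legality of each transaction in $S^9_4$ (all three are aborted in $C^9_4$, so none needs full legality, only last-use legality). For $\tr_i$: $\luvis{S^9_4}{\tr_i} = S^9_4|\tr_i$ projects on $\obj$ to $[\fwop{i}{\obj}{1}{\ok_i}]$, which is in $\sspec{\obj}$ by \pref{eq:seq-w}. For $\tr_j$: since $\fwop{i}{\obj}{1}{\ok_i}$ is the \last{} write of $\tr_i$ on $\obj$, $\tr_i \in \cpetrans{\hist}$ and $\tr_i \prec_{S^9_4} \tr_j$, so I may include $S^9_4\cpeC\tr_i \subseteq \luvis{S^9_4}{\tr_j}$, giving $\luvis{S^9_4}{\tr_j}|\obj = [\fwop{i}{\obj}{1}{\ok_i}, \frop{j}{\obj}{1}]$, which is legal by \pref{eq:seq-wr}. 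For $\tr_k$: by construction $\tr_k$ performed no operations on $\obj$ in $P^9_4$, so $\luvis{S^9_4}{\tr_k}|\obj$ can be made empty (exclude $\tr_j$ via the ``either\ldots or not'' clause of $\luvisf$; $\tr_i$ is excluded because it is aborted but, being in $\cpetrans{\hist}$, its inclusion is optional per the definition), hence trivially legal by \pref{eq:seq-empty}.

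Combining these three legality facts with the real-time preservation and equivalence above, \rdef{def:fs-lopacity} then gives that $P^9_4$ is final-state last-use opaque. I expect this proof to be almost mechanical, essentially a reduced version of the $P^9_1$ case, so there is no substantial obstacle; the only subtle point, which is the recurring one throughout these examples, is justifying that the permissive clause (b) of the definition of $\luvisf$ lets us exclude the doomed transaction $\tr_j$ from $\tr_k$'s visible set, which is legitimate precisely because $\tr_j$ is not in real-time order before $\tr_k$ in $P^9_4$ (no response of $\tr_j$ precedes any invocation of $\tr_k$ that would force the precedence), so the condition ``it is not true that $\tr_j \prec_{\hist} \tr_k$'' is satisfied and the exclusion is permitted.
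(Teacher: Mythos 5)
Your proof is correct and follows essentially the same route as the paper's: the same completion $C^9_4$, the same sequential witness $S^9_4 = C^9_4|\tr_i \cdot C^9_4|\tr_j \cdot C^9_4|\tr_k$, the same real-time-order observation, and the same per-transaction last-use-legality checks (including $S^9_4\cpeC\tr_i$ in $\tr_j$'s visible set via the decided-on-$\obj$ clause, and excluding both $\tr_i$ and $\tr_j$ from $\tr_k$'s visible set so that its projection on $\obj$ is empty). No gaps.
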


\begin{proof}
    \begin{align}
    & \text{let}~ C^9_4 = \compl{P^9_4} =  P^9_4 \cdot [\res{j}{}{\ab_j}, \res{k}{}{\ab_j}] \\
    & \text{let}~ S^9_4 = C^9_4|\tr_i \cdot C^9_4|\tr_j \cdot C^9_4|\tr_k \label{eq:let-p94} \\
    & %
      S^9_4 \equiv C^9_4 \label{eq:p94-equiv}\\
    & \text{real time order}~ \prec_{P^9_4} =  \{ \tr_i \prec_{P^9_4} \tr_k \}\label{eq:p94-rt-p94} \\
    & \text{real time order}~ \prec_{S^9_4} = 
      \{ \tr_i \prec_{S^9_4} \tr_j, \tr_i  \prec_{S^9_4} \tr_k, \tr_j  \prec_{S^9_4} \tr_k  \} \label{eq:p94-rt-p94} \\
    & \pref{eq:p94-rt-p94} \wedge \pref{eq:p94-rt-p94}
      \so \prec_{S^9_4} \subseteq \prec_{P^9_4} \label{eq:p94-rt-sub} \\
    & i = i \so S^9_4|\tr_i \subseteq \luvis{S^9_4}{\tr_i} \label{eq:p94-lvis-ti-ti} \\
    & \tr_i \prec_{S^9_4} \tr_j 
      \so S^9_4|\tr_j \nsubseteq \luvis{S^9_4}{\tr_i} \label{eq:p94-lvis-ti-tj} \\
    & \tr_i \prec_{S^9_4} \tr_k 
      \so S^9_4|\tr_k \nsubseteq \luvis{S^9_4}{\tr_i} \label{eq:p94-lvis-ti-tk} \\
    & \pref{eq:p94-lvis-ti-ti} \wedge \pref{eq:p94-lvis-ti-tj} \wedge \pref{eq:p94-lvis-ti-tk} 
      \so \luvis{S^9_4}{\tr_i} = S^9_4|\tr_i \label{eq:p94-ti-lvis} \\
    & \pref{eq:p94-ti-lvis} \so \luvis{S^9_4}{\tr_i}|\obj = [\fwop{i}{\obj}{1}{\ok_i}] 
      \label{eq:p94-ti-lvis-x} \\
    & \pref{eq:p94-ti-lvis-x} \wedge \pref{eq:seq-w} 
      \so \luvis{S^9_4}{\tr_i} ~\text{is legal} \label{eq:p94-lvis-ti-legal} \\
    & \pref{eq:p94-lvis-ti-legal} \so  
      \tr_i ~\text{in}~ S^9_4 ~\text{is last-use legal in}~ S^9_4 \label{eq:p94-ti-legal} \\
    & \fwop{i}{\obj}{1}{\ok_i} ~\text{is \last{} write on}~\obj~\text{in}~\tr_i
      \so \tr_i ~\text{is decided on}~\obj ~\text{in}~
      S^9_4 \label{eq:p94-ti-decided} \\
    & \tr_i \prec_{S^9_4} \tr_j \wedge \pref{eq:p94-ti-decided}
      \so S^9_4\cpeC\tr_i \subseteq \luvis{S^9_4}{\tr_j} \label{eq:p94-lvis-tj-ti} \\
    & j = j \so  S^9_4|\tr_j \subseteq \luvis{S^9_4}{\tr_j} \label{eq:p94-lvis-tj-tj} \\
    & \tr_j \prec_{S^9_4} \tr_k 
      \so S^9_4|\tr_k \nsubseteq \luvis{S^9_4}{\tr_j} \label{eq:p94-lvis-tj-tk} \\
    & \pref{eq:p94-lvis-tj-ti} \wedge \pref{eq:p94-lvis-tj-tj} \wedge \pref{eq:p94-lvis-tj-tk} 
      \so \luvis{S^9_4}{\tr_j} = S^9_4\cpeC\tr_i \cdot S^9_4|\tr_j \label{eq:p94-tj-lvis} \\
    & \pref{eq:p94-tj-lvis} 
      \so \luvis{S^9_4}{\tr_j}|\obj = [\fwop{i}{\obj}{1}{\ok_i}, \frop{j}{\obj}{1}]
      \label{eq:p94-tj-lvis-x} \\
    & \pref{eq:p94-tj-lvis-x} \wedge \pref{eq:seq-wr} 
      \so \luvis{S^9_4}{\tr_j} ~\text{is legal} \label{eq:p94-lvis-tj-legal} \\
    & \pref{eq:p94-lvis-tj-legal} \so  
      \tr_j ~\text{in}~ S^9_4 ~\text{is last-use legal in}~ S^9_4 \label{eq:p94-tj-legal} \\
    \end{align}    
    \begin{align}
      & \tr_i \prec_{S^9_4} \tr_k \wedge \pref{eq:p94-ti-decided}  
      \so S^9_4\cpeC\tr_i \subseteq \luvis{S^9_4}{\tr_k}~\text{or}~
          S^9_4\cpeC\tr_i \nsubseteq \luvis{S^9_4}{\tr_k} \label{eq:p94-lvis-tk-ti-option} \\
      & \pref{eq:p94-lvis-tk-ti-option} 
      \so S^9_4\cpeC\tr_i \nsubseteq \luvis{S^9_4}{\tr_k} \label{eq:p94-lvis-tk-ti} \\
      & \fwop{j}{\obj}{2}{\ok_j} ~\text{is \last{} write on}~\obj~\text{in}~\tr_j
      \so \tr_j ~\text{is decided on}~\obj~\text{in}~
        S^9_4 \label{eq:p94-tj-decided} \\
      & \tr_j \prec_{S^9_4} \tr_k \wedge \pref{eq:p94-tj-decided}  
      \so S^9_4\cpeC\tr_j \subseteq \luvis{S^9_4}{\tr_k}~\text{or}~
          S^9_4\cpeC\tr_j \nsubseteq \luvis{S^9_4}{\tr_k} \label{eq:p94-lvis-tk-tj-option} \\
      & \pref{eq:p94-lvis-tk-tj-option} 
      \so S^9_4\cpeC\tr_j \nsubseteq \luvis{S^9_4}{\tr_k} \label{eq:p94-lvis-tk-tj} \\
    & k = k \so  S^9_4|\tr_k \subseteq \luvis{S^9_4}{\tr_k} \label{eq:p94-lvis-tk-tk} \\
    & \pref{eq:p94-lvis-tk-ti} \wedge \pref{eq:p94-lvis-tk-tj} \wedge \pref{eq:p94-lvis-tk-tk} 
      \so \luvis{S^9_4}{\tr_k} = S^9_4|\tr_k \label{eq:p94-tk-lvis} \\
    & \pref{eq:p94-tk-lvis} 
      \so \luvis{S^9_4}{\tr_k}|\obj = \varnothing
      \label{eq:p94-tj-lvis-x} \\
    & \pref{eq:p94-tj-lvis-x} \wedge \pref{eq:seq-empty} 
      \so \luvis{S^9_4}{\tr_k} ~\text{is legal} \label{eq:p94-lvis-tk-legal} \\
    & \pref{eq:p94-lvis-tk-legal} \so  
      \tr_k ~\text{in}~ S^9_4 ~\text{is last-use legal in}~ S^9_4 \label{eq:p94-tk-legal} \\
    \end{align}    
    \begin{align}
    & \pref{eq:p94-equiv} \wedge \pref{eq:p94-rt-sub} 
      \wedge \pref{eq:p94-ti-legal} \wedge \pref{eq:p94-tj-legal}\wedge \pref{eq:p94-tk-legal} \so 
      P^9_4 ~\text{is final-state last-use opaque}
    \end{align}
\end{proof}

Let $P^9_5$ be a prefix s.t. $\hist_9 = P^9_5 \cdot [\init_k, \fwop{j}{\obj}{2}{\ok_j}, \frop{k}{\obj}{0}]$.

\begin{lemma} \label{lemma:p95-fslop}
    $P^9_4$ is final-state last-use opaque.    
\end{lemma}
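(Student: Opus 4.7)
The plan is to mirror the structure used in the preceding lemmas (especially the proofs for $P^9_1$, $P^9_2$, $P^9_3$), adapting to the fact that in $P^9_4$ transaction $\tr_j$ has performed its read of $1$ from $\obj$ but has not yet executed its write of $2$, and $\tr_k$ has only initialized. First I would form the completion $C^9_4 = P^9_4 \cdot [\res{j}{}{\ab_j}, \res{k}{}{\ab_k}]$, observing that $\tr_j$ is commit-pending (contains pending operations) and $\tr_k$ is live with no pending operation beyond init, so both get aborted in the completion. I then choose the sequential witness $S^9_4 = C^9_4|\tr_i \cdot C^9_4|\tr_j \cdot C^9_4|\tr_k$, which is trivially equivalent to $C^9_4$ and whose order $\tr_i \prec \tr_j \prec \tr_k$ is consistent with the only real-time constraint $\tr_i \prec_{P^9_4} \tr_k$.

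Next I would verify last-use legality transaction by transaction. For $\tr_i$, exactly as in the $P^9_1$ proof, $\luvis{S^9_4}{\tr_i} = S^9_4|\tr_i$ whose projection on $\obj$ is the single write $\fwop{i}{\obj}{1}{\ok_i}$, legal by \pref{eq:seq-w}. For $\tr_j$, the key observation is that $\tr_i$ is decided on $\obj$ in $P^9_4$ (its write of $1$ is its \last{} write in the program), so by clause (b) of the $\luvisf$ definition I may include $S^9_4\cpeC\tr_i$ in $\luvis{S^9_4}{\tr_j}$. Doing so gives $\luvis{S^9_4}{\tr_j}|\obj = [\fwop{i}{\obj}{1}{\ok_i}, \frop{j}{\obj}{1}]$, which is legal by \pref{eq:seq-wr}.

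For $\tr_k$ the argument is essentially trivial: in $P^9_4$ the subhistory $C^9_4|\tr_k$ contains only $\init_k \to \ok_k$ followed by $\res{k}{}{\ab_k}$, so $C^9_4|\tr_k|\obj = \varnothing$. Since $\tr_i \prec_{\hist_9} \tr_k$ (so clause (b) of $\luvisf$ does not apply to $\tr_i$) and $\tr_j$ is concurrent with $\tr_k$ in $P^9_4$ but I can freely choose not to include $S^9_4\cpeC\tr_j$ (and anyway $\tr_j$ is not yet decided on $\obj$ since it has not executed its \last{} write in $P^9_4$), one sees that $\luvis{S^9_4}{\tr_k}|\obj = \varnothing$, which is legal by \pref{eq:seq-empty}.

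Collecting these facts, all three transactions are last-use legal in $S^9_4$, the real-time order of $P^9_4$ is preserved, and $S^9_4 \equiv C^9_4$, so by \rdef{def:fs-lopacity} $P^9_4$ is final-state last-use opaque. The main obstacle is bookkeeping rather than substance: one has to invoke the correct clause of the $\luvisf$ definition for each pair $(\tr_m, \tr_n)$ — in particular exploiting the freedom in clause (b) to \emph{exclude} $\tr_j$'s decided completion from $\tr_k$'s $\luvisf$ (since $\tr_j$ is not decided on $\obj$ in $P^9_4$ anyway, this is automatic) — and to confirm that $\tr_i$'s \last{} write status with respect to $\obj$ survives in the prefix $P^9_4$, which it does because no additional write invocation on $\obj$ by $\tr_i$ is possible in any extension of the program after $\tr_i$ has already aborted.
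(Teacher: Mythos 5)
Your proof is correct and follows essentially the same route as the paper's: the same completion, the same sequential witness $S^9_4 = C^9_4|\tr_i \cdot C^9_4|\tr_j \cdot C^9_4|\tr_k$, and the same per-transaction legality checks (including $S^9_4\cpeC\tr_i$ in $\tr_j$'s $\luvisf$, and excluding both $\tr_i$'s and $\tr_j$'s operations from $\tr_k$'s $\luvisf$ so that its projection on $\obj$ is empty). The only slip is calling $\tr_j$ \emph{commit-pending} --- in $P^9_4$ it is live with no pending operation, so the completion appends $\tryC_j\to\ab_j$ rather than a bare response event --- but this does not affect the legality argument; indeed your handling of $\tr_k$ (noting that clause (b) of the $\luvisf$ definition is inapplicable to $\tr_i$ because $\tr_i \prec_{P^9_4} \tr_k$, and that $\tr_j$ is not decided on $\obj$ in $P^9_4$) is slightly more careful than the paper's.
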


\begin{proof}
    \begin{align}
    & P^9_5 = P^3_2 \wedge \rlemma{lemma:p32-fslop}
    \end{align}
\end{proof}

Let $P^9_p$ be any prefix of $P^9_5$.

\begin{lemma} \label{lemma:p9p-fslop}
    $P^9_p$ is final-state last-use opaque.    
\end{lemma}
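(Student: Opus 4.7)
The plan is to leverage the identity $P^9_5 = P^3_2$ established in \rlemma{lemma:p95-fslop}, combined with \rcor{cor:h3-pref-lop}, which asserts that every prefix of $\hist_3$ is last-use opaque. First I would observe that since $\hist_3 = P^3_2 \cdot [\tryC_j\to\ab_j]$ by construction, $P^3_2$ is itself a prefix of $\hist_3$; hence any prefix $P^9_p$ of $P^9_5 = P^3_2$ is, by transitivity of the prefix relation, also a prefix of $\hist_3$.

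Next I would apply \rcor{cor:h3-pref-lop} to conclude that $P^9_p$ is last-use opaque. Finally, I would unpack \rdef{def:lopacity}: a history is last-use opaque iff every finite prefix of it is final-state last-use opaque, and since $P^9_p$ is a (trivial, via $P \cdot \varnothing$) prefix of itself, this immediately yields that $P^9_p$ is final-state last-use opaque.

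The proof is essentially a short chain of prefix transitivities together with one application of an already-established corollary, so no significant obstacle is expected. The only mildly delicate point is confirming that the paper's prefix relation ($\hist = P \cdot \hist'$) admits $\hist' = \varnothing$, which is the standard convention used implicitly throughout \rsec{sec:preliminaries}. Should one wish to avoid this step, one could alternatively exhibit the same sequential witness $S^3_2$ from \rlemma{lemma:p32-fslop} (suitably truncated to the transactions present in $P^9_p$) and rerun the legality and last-use-legality checks; the relevant $\vis{}{}$ and $\luvis{}{}$ substrings on $\obj$ for each transaction in $P^9_p$ remain elements of $\sspec{\obj}$ by one of \pref{eq:seq-empty}, \pref{eq:seq-r}, \pref{eq:seq-w}, or \pref{eq:seq-wr}, so the routine calculation goes through uniformly for each possible $P^9_p$.
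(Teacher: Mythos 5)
Your argument is correct and takes essentially the same route as the paper's own proof: identify $P^9_p$ with a prefix of an already-verified history and invoke the corollary that every prefix of that history is last-use opaque, hence final-state last-use opaque. Your chain through $P^9_5 = P^3_2$, the prefix relation to $\hist_3$, and \rcor{cor:h3-pref-lop} is in fact cleaner than the paper's version, which garbles the citation by routing through $P^1_4$ and $\hist_1$ while still pointing at the $\hist_3$ corollary.
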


\begin{proof}
    \begin{align}
    & \hist_1 = P^1_4 \cdot R  \wedge \rcor{cor:h3-pref-lop}
      \so P^1_4 ~\text{is last-use plague} \label{eq:p9p-pref-lopaque} \\
    & P^1_4 = P^3_3 \wedge \pref{eq:p9p-pref-lopaque} 
      \so P^9_p ~\text{is last-use lopaque} \label{eq:p9p-lop} \\
    & \pref{eq:p9p-lop} 
      \so P^9_p ~\text{is final-state last-use lopaque}   
    \end{align}
\end{proof}

\begin{lemma} \label{lemma:h3-lop}
    $\hist_9$ is last-use opaque.    
\end{lemma}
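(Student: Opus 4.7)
The plan is to follow exactly the template used for the analogous lemmas earlier in the appendix (e.g., \rlemma{lemma:h1-lop}, the first \rlemma{lemma:h3-lop}, and \rlemma{lemma:h7-lop}): invoke \rdef{def:lopacity}, which requires showing that every finite prefix of $\hist_9$ is final-state last-use opaque, and then chain together the prefix lemmas already established in this subsection.

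First I would observe that $\hist_9$ itself has been shown to be final-state last-use opaque (in the unlabeled lemma preceding \rlemma{lemma:p91-fslop}). Next, I would note that the prefixes $P^9_1$, $P^9_2$, $P^9_3$, $P^9_4$, and $P^9_5$ obtained by successively stripping the tail of $\hist_9$ have each been shown to be final-state last-use opaque in \rlemma{lemma:p91-fslop} through \rlemma{lemma:p95-fslop}, and that any remaining (shorter) prefix $P^9_p$ is also final-state last-use opaque by \rlemma{lemma:p9p-fslop}. Together these cover every finite prefix of $\hist_9$.

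Then I would close by applying \rdef{def:lopacity}: since every finite prefix of $\hist_9$ is final-state last-use opaque, $\hist_9$ is last-use opaque. Concretely, the proof is a one-line citation, of the form:
\begin{quote}
\emph{Since, from Lemmas \ref{lemma:p91-fslop}--\ref{lemma:p9p-fslop}, all prefixes of $\hist_9$ are final-state last-use opaque, then by \rdef{def:lopacity} $\hist_9$ is last-use opaque.}
\end{quote}

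I do not expect any real obstacle here, since all the substantive content has already been discharged in the prefix lemmas; the only care needed is to make sure the cited range of lemma labels indeed exhausts the prefix cases (in particular, checking that \rlemma{lemma:p9p-fslop} is stated generically enough to absorb all prefixes strictly shorter than $P^9_5$, including trivial ones with no read/write events, analogously to how \rlemma{lemma:p1p-fslop} is used in the proof of \rlemma{lemma:h1-lop}).
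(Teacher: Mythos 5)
Your proposal is correct and follows essentially the same route as the paper: the paper's proof is precisely the one-line citation you wrote, namely that Lemmas \ref{lemma:p91-fslop}--\ref{lemma:p9p-fslop} establish final-state last-use opacity for all prefixes of $\hist_9$, whence \rdef{def:lopacity} gives last-use opacity. Your side remark about checking that the generic prefix lemma absorbs the remaining short prefixes is a reasonable sanity check that the paper handles the same way (via \rlemma{lemma:p9p-fslop}, analogous to \rlemma{lemma:p1p-fslop}).
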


\begin{proof}
    Since, from Lemmas \ref{lemma:p91-fslop}--\ref{lemma:p9p-fslop}, all
    prefixes of $\hist_9$ are final-state last-use opaque, then by
    \rdef{def:lopacity} $\hist_9$ is last-use opaque.
\end{proof}

\begin{corollary} \label{cor:h9-pref-lop}
    Any prefix of $\hist_9$ is last-use opaque.
\end{corollary}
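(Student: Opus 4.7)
The plan is to derive this corollary directly from the preceding lemma establishing that $\hist_9$ is last-use opaque, together with the prefix-closure portion of Theorem \ref{thm:lopacity-safety-property}. The argument is essentially a one-liner and there is no real combinatorial obstacle, so the main task is to make the bookkeeping explicit rather than to produce new content.

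Concretely, I would proceed as follows. First, I would invoke the (immediately preceding) lemma showing $\hist_9$ is last-use opaque. Next, I would let $P$ be an arbitrary prefix of $\hist_9$ and observe that, by transitivity of the prefix relation, every finite prefix $P'$ of $P$ is also a finite prefix of $\hist_9$. By Definition \ref{def:lopacity} applied to $\hist_9$, every such $P'$ is final-state last-use opaque. Applying Definition \ref{def:lopacity} now in the reverse direction to $P$, the fact that every finite prefix of $P$ is final-state last-use opaque gives that $P$ itself is last-use opaque. Since $P$ was arbitrary, the corollary follows.

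Alternatively, and more economically, I would simply appeal to Theorem \ref{thm:lopacity-safety-property}, which states that last-use opacity is a safety property and, in particular, prefix-closed. Since $\hist_9$ is last-use opaque and last-use opacity is prefix-closed, every prefix of $\hist_9$ is last-use opaque. I expect to present this short version as the main proof, perhaps noting the underlying unfolding through Definition \ref{def:lopacity} for the reader's convenience. There is no hard step here; the only thing to be careful about is not to conflate final-state last-use opacity (which is only prefix-closed in the sense inherited via Definition \ref{def:lopacity}) with last-use opacity itself, and to cite the correct statement (Theorem \ref{thm:lopacity-safety-property}) when invoking prefix-closure.
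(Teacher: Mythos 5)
Your proposal is correct and matches the paper's (implicit) reasoning: the corollary follows immediately from the preceding lemma that $\hist_9$ is last-use opaque together with prefix-closure of last-use opacity, whether one unfolds Definition~\ref{def:lopacity} directly or cites Theorem~\ref{thm:lopacity-safety-property}. The paper states the corollary without proof, and either of your two versions supplies exactly the intended one-line justification.
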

           
\section{Property Comparison} \label{sec:property-comparison}

VWC is incomparable to \lopacity{}.

\begin{lemma}
    There exists a \lopaque{} history $H$ that is not virtual world consistent.
\end{lemma}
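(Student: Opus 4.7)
The plan is to exhibit a concrete history that is last-use opaque but violates VWC, exploiting the fact that last-use opacity tolerates aborting early release (as shown in \rlemma{lemma:lop-aborting-early-release}) whereas VWC does not (as shown in \rthm{thm:vwc-no-abort}). The natural candidate is precisely the history $\hist_3$ depicted in \rfig{fig:example-abort-abort}, where $\tr_i$ performs a \last{} write $\fwop{i}{\obj}{1}{\ok_i}$ and then aborts, while $\tr_j$ reads $1$ from $\obj$ before aborting as well.

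First I would cite \rlemma{lemma:h3-lop} (or reprove it in a sentence) to establish that $\hist_3$ is last-use opaque: the key point is that $\tr_i$ decided on $\obj$ before aborting, so $\fwop{i}{\obj}{1}{\ok_i}$ may appear in $\luvis{S}{\tr_j}$ via the $\hist\cpeC\tr_i$ mechanism, making $\tr_j$ last-use legal despite $\tr_i$ being aborted.

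Next, I would show $\hist_3$ is not virtual world consistent. By \rdef{def:vwc-short}, every aborted transaction must have a legal linear extension of its causal past. Consider $\tr_j$, which is aborted in $\hist_3$. Since $\tr_i$ is aborted in $\hist_3$ and precedes the events of $\tr_j$ that depend on it (via the read-from relation on $\obj$), the causal past $C(\hist_3, \tr_j)$ contains $\tr_i$. But $\tr_i \neq \tr_j$ and $\tr_i$ is not committed in $\hist_3$, which by the definition of legality of a causal past means $C(\hist_3, \tr_j)$ is not legal; equivalently, no linear extension of the causal past can be legal because the only write of $1$ to $\obj$ belongs to the aborted $\tr_i$ and hence cannot justify $\frop{j}{\obj}{1}$ in any linear extension over committed transactions only. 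Consequently, $\hist_3$ does not satisfy VWC.

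The main (and only) subtlety will be to make the legality-of-causal-past argument airtight: I need to verify that the definition of VWC truly forces $\tr_i$ into $\tr_j$'s causal past (because $\tr_j$ reads from $\tr_i$, $\tr_i$ is aborted, and the read induces a causal dependency), and that no alternative serialization can legalize $\tr_j$'s read since the unique write producing the value $1$ on $\obj$ comes from an aborted transaction. Once this is established, combining the two facts gives $\hist_3 \in \mathbb{H}_{lop} \setminus \mathbb{H}_{vwc}$, proving the lemma.
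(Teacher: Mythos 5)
Your proposal is correct and follows essentially the same route as the paper: the paper's proof also rests on the history of \rfig{fig:example-abort-abort} (via \rlemma{lemma:h3-lop}) being \lopaque{} while containing an aborting early-release transaction, which \rthm{thm:vwc-no-abort} shows VWC cannot admit. The only difference is that you inline the causal-past illegality argument rather than citing \rthm{thm:vwc-no-abort} as a black box, which is a harmless elaboration.
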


\begin{proof}[sketch]
    Since, as an extension of by \rlemma{lemma:h1-lop}, \lopacity{} supports early
    release, then by \rdef{def:release-support} and
    (by \rdef{def:aborting-support}) from  \rlemma{lemma:h3-lop} there exists some \lopaque{} history where
    some transaction reads from a live transaction and aborts. Since, by
    \rthm{thm:vwc-no-abort} VWC, does not support aborting
    releasing transactions, then, by the same definitions, such a history is
    not VWC. Hence a history with a transaction releasing early may be
    \lopaque{} but not VWC.
\end{proof}

\begin{theorem}
    There exists a virtual world consistent history $H$ that is not \lopaque{}.
\end{theorem}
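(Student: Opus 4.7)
The plan is to reuse the history $\hist_5$ from \rfig{fig:proof-h5} (i.e., \rfig{fig:example-early-release-not-last}), where transaction $\tr_i$ executes $\twop{i}{\obj}{1}$ (which is \emph{not} its \last{} write, in the sense of \rdef{def:last-write-op}) and commits, while a concurrent transaction $\tr_j$ reads $1$ from $\obj$ and also commits. By \rlemma{lemma:h5-lop}, this history is already known to fail last-use opacity, because a prefix that omits $\tr_i$'s $\tryC_i \to \co_i$ forces $\tr_i$ to abort in any completion, and then $\tr_j$'s read cannot be justified in $\luvisf$ (since $\tr_i$ is not decided on $\obj$).

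The remaining step is therefore to argue that $\hist_5$ \emph{is} virtual world consistent according to \rdef{def:vwc-short}. I would proceed as follows. First, observe that $\hist_5$ contains no aborted transactions, so the clause of \rdef{def:vwc-short} concerning causal pasts of aborted transactions is vacuously satisfied. Second, to show serializability of the committed transactions with preservation of real-time order, take the sequential history $S = \hist_5|\tr_i \cdot \hist_5|\tr_j$. Since $\tr_i$ and $\tr_j$ are concurrent in $\hist_5$, the real-time order $\prec_{\hist_5}$ is empty and is trivially extended by $\prec_S$. Both transactions are legal in $S$: $\tr_i$'s only operation on $\obj$ is a write of $1$, and $\tr_j$'s read of $1$ is directly preceded in $S|\obj$ by $\fwop{i}{\obj}{1}{\ok_i}$, so $S|\obj \in \sspec{\obj}$. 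Hence the committed transactions of $\hist_5$ are serializable and preserve real-time order, fulfilling \rdef{def:vwc-short}.

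There is no genuine obstacle here; the only subtlety is making explicit the fact that VWC, unlike last-use opacity, is not required to be prefix-closed, which is precisely why $\hist_5$ can simultaneously be VWC and fail last-use opacity. The key asymmetry to highlight in the write-up is that last-use opacity (being a safety property, by \rthm{thm:lopacity-safety-property}) must control \emph{every} prefix, including those whose completions abort $\tr_i$ before its intended \last{} write on $\obj$ materialises, whereas VWC only constrains the final state plus aborted transactions' causal pasts. Combining this with \rlemma{lemma:h5-lop} yields a VWC history that is not last-use opaque, completing the proof.
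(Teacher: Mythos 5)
Your proposal is correct, but it takes a genuinely different route from the paper's own argument. The paper's proof sketch is abstract and non-constructive: it observes that VWC permits each aborted transaction to be justified by its own linear extension of its own causal past, so the transactions of a VWC history need not agree on a single order of operations, whereas last-use opacity demands one sequential witness $S$ in which all transactions are simultaneously (last-use) legal; it never exhibits a concrete history. You instead produce an explicit witness: the history $\hist_5$ of \rfig{fig:example-early-release-not-last}, already shown not to be \lopaque{} in \rlemma{lemma:h5-lop}, and you verify directly that it satisfies \rdef{def:vwc-short} (no aborted transactions, so the causal-past clause is vacuous; $S=\hist_5|\tr_i\cdot\hist_5|\tr_j$ serializes the committed transactions and trivially preserves the empty real-time order). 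This is consistent with the paper's own \rthm{thm:vwc-early-release}, which accepts an essentially identical early-release history as VWC. Your diagnosis of the underlying asymmetry is also different and, arguably, sharper: the separation you exploit is that last-use opacity is prefix-closed (\rthm{thm:lopacity-safety-property}) while VWC constrains only the history as given, so the offending prefix $P^5_4$ (in which $\tr_i$ must abort before reaching a \last{} write on $\obj$) kills last-use opacity without affecting VWC. What each approach buys: the paper's argument points at a structural reason (multiple causal pasts vs.\ one witness) that would also separate the properties on histories with several aborted transactions disagreeing on operation order, but as written it is too vague to check; your argument is fully verifiable from lemmas already proved in the appendix and yields a concrete separating history, at the cost of illustrating only one of the ways VWC is weaker.
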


\begin{proof}[sketch]
    Since each transaction in a VWC history can be explained by a different
    causal past from other transactions, it is possible that in a correct VWC
    history transactions do not agree on the order of operations in the
    sequential witness history.
    However, in order for $H$ to to be \lopaque{} the legality of transactions
    needs to be established using a single sequential history with a single
    order of operations.
    Thus, it is possible for a VWC history not to be \lopaque{}.
\end{proof}

Since TMS1 is incomparable to \lopacity{}. 

\begin{theorem} \label{thm:lopacity-not-tms1}
    There exists a \lopaque{} history $H$ that is not TMS1.
\end{theorem}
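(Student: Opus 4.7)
The plan is to exhibit the history $\hist_1$ of \rfig{fig:proof-h1} (already used throughout the appendix) as the witness. Recall that in $\hist_1$, transaction $\tr_i$ executes $\twop{i}{\obj}{1}$ as its \last{} write on $\obj$ and then commits, while the concurrent transaction $\tr_j$ reads $1$ from $\obj$ before $\tr_i$ invokes $\tryC_i$. We already know from \rlemma{lemma:h1-lop} that $\hist_1$ is last-use opaque, so all that remains is to show that $\hist_1$ does not satisfy TMS1.

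For the negative half, I would essentially replay the argument used in the proof of \rthm{thm:tms1-early-release}, specialized to $\hist_1$. First I would isolate the response event $r = \res{j}{}{1}$ attached to the invocation $e = \inv{j}{}{\trop{j}{\obj}{}}$ and observe that $\inv{i}{}{\tryC_i}$ does not precede $r$ in $\hist_1$, since in $\hist_1|\tr_i$ the invocation $\tryC_i$ comes strictly after the write on $\obj$, whereas $r$ occurs before that. Consequently $\tr_i \notin \transactions^d_{\hist_1}\upto r$, and therefore $\tr_i$ cannot be an element of any admissible $\transactions^r_{\hist_1} \subseteq \transactions^d_{\hist_1}\upto r$ used in \rdef{def:valid-response}.

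Next I would argue that the read response $r$ cannot be validated. For any sequence $\sigma \in \mathit{ser}(\transactions^r_{\hist_1}, \prec_{\hist_1})$, the sequence $\mathit{ops}(\sigma, r)$ contains no operation of $\tr_i$, so in particular it contains no write of $1$ to $\obj$. Under the standing unique-writes assumption, the only candidate write is the one in $\tr_i|\obj$, and it is absent. Hence in the sequence $\mathit{ops}(\sigma\cdot\tr_j, r)\cdot[e\rightarrow r]$ the trailing read $\frop{j}{\obj}{1}$ is not preceded by any write of $1$ on $\obj$, so the projection onto $\obj$ is not in $\mathit{Seq}(\obj)$ and the sequence is not legal. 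By \rdef{def:valid-response} this shows that $r$ is not a valid response, whence $\hist_1$ fails TMS1.

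I do not expect any serious obstacle: the only subtle point is ensuring I apply \rdef{def:extcp} correctly when restricting $\transactions^r_{\hist_1}$ — specifically, that the restriction $\transactions^r_{\hist_1} \subseteq \transactions^d_{\hist_1}\upto r$ is genuinely forced by TMS1 and is not something that can be relaxed by cleverly choosing a different prefix or by including $\tr_j$ itself in $\transactions^r_{\hist_1}$ (it cannot, since $\tr_j$ has not invoked $\tryC_j$ at $r$ either, and in any case $\tr_j$ is appended explicitly as $\sigma\cdot\tr_j$ in the definition). Once this is spelled out, combining the positive witness from \rlemma{lemma:h1-lop} with the negative argument above yields the theorem.
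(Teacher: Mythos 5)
Your proposal is correct and takes essentially the same route as the paper: the paper likewise exhibits a \lopaque{} history with early release (via \rlemma{lemma:h1-lop}) and concludes it cannot satisfy TMS1 because, by \rthm{thm:tms1-early-release}, no TMS1 history admits a transaction that releases a variable early. The only difference is that you inline the argument of \rthm{thm:tms1-early-release} for the concrete history $\hist_1$ (isolating the read response $r$ and showing it cannot be valid) rather than citing that theorem as a black box, which is a matter of presentation rather than substance.
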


\begin{proof}[sketch]
    Since, as an extension of by \rlemma{lemma:h1-lop}, \lopacity{} supports early release,
    then by \rdef{def:release-support} and \ref{def:early-release} there exist
    histories that are \lopaque{} where some
    transaction reads from a live transaction. Since, by
    \rthm{thm:tms1-early-release} TMS1, does not support early release, then, by
    the same definitions, histories containing early release are not TMS1. Hence a
    history with a transaction releasing early may be \lopaque{} but not TMS1.
\end{proof}

 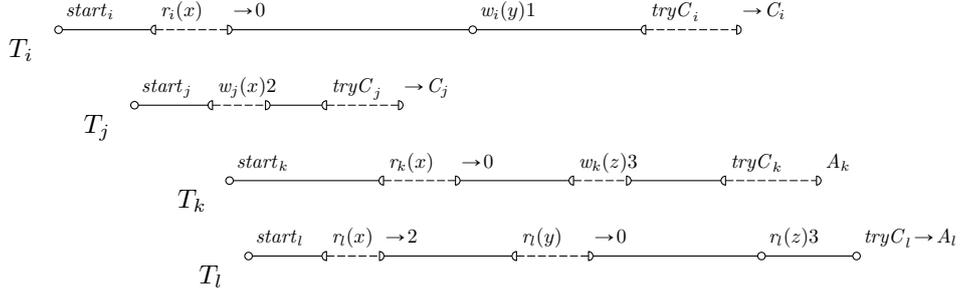
\begin{figure}
\begin{center}
\begin{tikzpicture}
     \draw
           (0,3)        node[tid]       {$\tr_i$}
                        node[aop]       {$\init_i$} %
                        node[dot]       {} 

      -- ++(1.25,0)     node[aop]       {$\trop{i}{\obj}{}$}
                        node[inv] (ii)  {}

         ++(1,0)        node[aop]       {$\!\to\!{0}$}
                        node[res] (ri)  {}

      -- ++(3.2,0)      node[aop]       {$\twop{i}{\objy}{1}$}
                        node[dot]       {}

      -- ++(2.25,0)     node[aop]       {$\tryC_i$}
                        node[inv] (iii) {}        

         ++(1.25,0)     node[aop]       {$\!\to\!\co_i$}
                        node[res] (rii) {}        
                        ;

     \draw[wait] (ii) -- (ri);
     \draw[wait] (iii) -- (rii);

     \draw
           (1,2)        node[tid]       {$\tr_{j}$}
                        node[aop]       {$\init_{j}$} %
                        node[dot]       {} 

      -- ++(1,0)        node[aop]       {$\twop{j}{\obj}{2}$}
                        node[inv] (ij)  {}

         ++(0.75,0)     node[aop]       {}
                        node[res] (rj)  {}

      -- ++(0.75,0)     node[aop]       {$\tryC_j$}
                        node[inv] (ijj) {}

         ++(1,0)        node[aop]       {$\!\to\!\co_j$}
                        node[res] (rjj) {}
                        ;

       \draw[wait] (ij) -- (rj);
       \draw[wait] (ijj) -- (rjj);
  
       \draw
           (2.25,1)     node[tid]       {$\tr_k$}
                        node[aop]       {$\init_k$} %
                        node[dot]       {} 

      -- ++(2,0)        node[aop]       {$\trop{k}{\obj}{}$}
                        node[inv] (ik)  {}

         ++(1,0)        node[aop]       {$\!\to\!{0}$}
                        node[res] (rk)  {}

      -- ++(1.5,0)      node[aop]       {$\twop{k}{\objz}{3}$}
                        node[inv] (ikk) {}
 
         ++(.75,0)      node[aop]       {}
                        node[res] (rkk) {}                       

      -- ++(1.25,0)     node[aop]       {$\tryC_k$}
                        node[inv] (ikkk){}     
                         
         ++(1.25,0)     node[aop]       {$\ab_k$}
                        node[res] (rkkk){}       
                        ;   

       \draw[wait] (ik) -- (rk);
       \draw[wait] (ikk) -- (rkk);
       \draw[wait] (ikkk) -- (rkkk);

       \draw
           (2.5,0)      node[tid]       {$\tr_l$}
                        node[aop]       {$\init_l$} %
                        node[dot]       {} 

      -- ++(1,0)        node[aop]       {$\trop{l}{\obj}{}$}
                        node[inv] (il)  {}

         ++(0.75,0)     node[aop]       {$\!\to\!{2}$}
                        node[res] (rl)  {}

      -- ++(1.75,0)     node[aop]       {$\trop{l}{\objy}{}$}
                        node[inv] (ill) {}

         ++(1.00,0)     node[aop]       {$\!\to\!{0}$}
                        node[res] (rll) {}

      -- ++(2.25,0)     node[aop]       {$\trop{l}{\objz}{3}$}
                        node[dot]       {}

      -- ++(1.25,0)     node[aop]       {$\tryC_l\!\to\!\ab_l$}
                        node[dot]       {}         
                        ;  
                        
       \draw[wait] (il) -- (rl);
       \draw[wait] (ill) -- (rll);

\end{tikzpicture}
\end{center}
\caption{\label{fig:tms1-history} TMS1 history example \cite{DGLM13}.}
\end{figure}

\begin{theorem} \label{thm:tms1-not-lopaque}
    There exists a TMS1 history $\hist$ that is not \lopaque{}.
\end{theorem}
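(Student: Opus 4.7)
The plan is to re-use the canonical TMS1-but-not-opaque history from \cite{DGLM13} depicted in \rfig{fig:tms1-history}. That example was specifically engineered to exploit the feature which distinguishes TMS1 from opacity: each individual response may be justified by its own externally consistent prefix rather than by a single global serialization. I expect the same feature to separate TMS1 from \lopacity{}, since \rdef{def:fs-lopacity} likewise demands a single sequential history $S$ witnessing legality (or last-use legality) of every transaction simultaneously.

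That $\hist$ satisfies TMS1 is established in \cite{DGLM13}, so the only remaining task is to show $\hist$ is not \lopaque{}. I would proceed by contradiction: assume there is a completion $\compl{\hist}$ and an equivalent sequential history $S$ preserving the real-time order of $\hist$ such that every committed transaction is legal in $S$ and every aborted transaction is last-use legal in $S$. Committed transactions in $\hist$ (namely $\tr_i$ and $\tr_j$) remain committed in $S$, while $\tr_k$ and $\tr_l$ remain aborted. By inspection of the diagram, no nontrivial real-time ordering is imposed on the triple $\{\tr_i, \tr_j, \tr_l\}$, so $\prec_S$ may order them freely, subject only to the legality constraints imposed by their reads.

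The heart of the argument is to extract three ordering constraints from the reads of $\tr_i$ and $\tr_l$ on $\obj$ and $\objy$ and show that they contradict each other. First, the non-local read $\frop{i}{\obj}{0}$ together with $\tr_j$'s committed write $\fwop{j}{\obj}{2}{\ok_j}$ and unique writes force $\tr_i \prec_S \tr_j$, for otherwise $\tr_j$ would lie in $\vis{S}{\tr_i}$ and $\vis{S}{\tr_i}$ could not be legal. Second, the non-local read $\frop{l}{\obj}{2}$ forces $\tr_j \prec_S \tr_l$, so that $\tr_j$'s write is the most recent write of $\obj$ in $\luvis{S}{\tr_l}$. Third, the non-local read $\frop{l}{\objy}{0}$ together with $\tr_i$'s committed write $\fwop{i}{\objy}{1}{\ok_i}$ forces $\tr_l \prec_S \tr_i$, for otherwise $\tr_i$ being committed and preceding $\tr_l$ would include $\fwop{i}{\objy}{1}{\ok_i}$ in $\luvis{S}{\tr_l}$, rendering $\frop{l}{\objy}{0}$ illegal. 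Composing these yields the cycle $\tr_i \prec_S \tr_j \prec_S \tr_l \prec_S \tr_i$, contradicting the fact that $S$ is sequential. Hence no such $S$ exists, so $\hist$ is not final-state \lopaque{} and, \emph{a fortiori}, not \lopaque{}.

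The main potential obstacle is verifying that the real-time order of $\hist$ does in fact leave $\tr_i$, $\tr_j$, $\tr_l$ freely reorderable in $S$; if any real-time edge among them were present it would have to coincide with one of the three constraints above, but not with the edge that would reverse the cycle. A careful reading of \rfig{fig:tms1-history} confirms that all three transactions overlap in real time, so $\prec_\hist$ imposes no restrictions on their relative order in $S$, and the cyclic contradiction carries through. The presence of $\tr_k$ (which also reads $0$ from $\obj$ and then aborts) does not interfere with any of the three constraints, so it can be placed anywhere in $S$ without affecting the argument.
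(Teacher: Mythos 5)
Your proposal is correct and follows essentially the same route as the paper: it uses the same history from \rfig{fig:tms1-history}, cites \cite{DGLM13} for TMS1 membership, and derives the identical cyclic ordering $\tr_i \prec_S \tr_j \prec_S \tr_l \prec_S \tr_i$ from the legality of the reads on $\obj$ and $\objy$. The only (immaterial) differences are that the paper also records the constraint $\tr_k \prec_S \tr_l$, which is not needed to close the cycle, and that your remarks on real-time order are superfluous since the contradiction already follows from the legality constraints alone.
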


\begin{proof}[sketch]
    Let history $\hist$ be the history presented in \rfig{fig:tms1-history}. In
    \cite{DGLM13} (Fig. 6 therein) the authors show that the history satisfies TMS1.
    The same history is not last-use opaque. Note that if $\vis{S}{\tr_i}$ is
    to be legal, in any $S$ equivalent to $\hist$, $\tr_i \prec_S \tr_j$,
    because $\tr_i$ reads $0$ from $\obj$ and $\tr_j$ writes $2$ to $\obj$ (and
    commits). In addition, $\tr_j \prec_S \tr_l$, because $\tr_l$ reads $2$
    from $\obj$ and $\tr_k \prec_S \tr_l$, because $\tr_l$ reads $\objz$ from
    $\tr_k$. Then, by extension $\tr_i \prec_S \tr_j \prec_S \tr_l$.
    However, note that in any $S$ it must be that $\tr_l \prec_S \tr_i$,
    because $\tr_l$ reads $\objy$ from $\tr_i$, which is a contradiction. Thus,
    $\hist$ is not last-use opaque.
\end{proof}

TMS2 is strictly stronger than \lopacity{}.

\begin{proposition} \label{prop:tms2-is-lopaque}
    All TMS2 histories are \lopaque{}.
\end{proposition}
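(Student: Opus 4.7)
The plan is to show that any TMS2 history is last-use opaque by leveraging the structural strength of TMS2 to directly construct a sequential witness that satisfies Definition \ref{def:fs-lopacity}, and then applying Definition \ref{def:lopacity} via prefix closure.

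First, I would fix a TMS2 history $\hist$ and an arbitrary finite prefix $P$ of $\hist$; since $P$ is also TMS2 (TMS2 is prefix-closed as a safety property), it suffices to show $P$ is final-state last-use opaque. The key leverage is the observation already in the excerpt that TMS2 does \emph{not} support early release (the Corollary after \rthm{thm:tms1-early-release} and the fact that ``TMS2 is strictly stronger than TMS1 (TMS2 implements TMS1)''). Consequently in any TMS2 history, every non-local read operation in a transaction $\tr_j$ is preceded by a $\tryC_i$ invocation of the writing transaction $\tr_i$ whose response is used. This means there is never a case in which $\tr_j$ reads from a live $\tr_i$, so the special ``decided transaction subhistory completion'' branch of $\luvis{S}{\tr_i}$ never needs to be invoked.

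Next, I would construct the sequential witness $S$ equivalent to a completion $C = \compl{P}$. TMS2's automaton essentially induces a single global serialization that respects real-time order (this is the main strengthening over TMS1, where each response can use its own order); I would take $S$ to be that serialization, extending it arbitrarily for transactions that were live in $P$ and hence aborted in $C$. Then $S \equiv C$ by construction, and $S$ preserves the real-time order of $\hist$ by the real-time order requirement in TMS2.

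Finally, I would verify the two legality conditions. For every $\tr_i$ committed in $S$, the TMS2 validity condition guarantees that every read in $\tr_i$ is justified by some consistent prefix consisting of committed transactions preceding $\tr_i$ in $S$; this prefix is exactly $\vis{S}{\tr_i}$, so $\vis{S}{\tr_i}$ is legal and $\tr_i$ is legal in $S$. For every $\tr_i$ aborted in $S$ (including those that were live in $P$), because TMS2 forbids early release, every read in $\tr_i$ reads from a committed transaction $\tr_j$ that precedes $\tr_i$ in $S$, so $S|\tr_j \subseteq \luvis{S}{\tr_i}$ by the first clause of the definition of $\luvis{S}{\tr_i}$; hence $\luvis{S}{\tr_i}$ is legal and $\tr_i$ is last-use legal in $S$. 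The conjunction then gives final-state last-use opacity of $P$ via \rdef{def:fs-lopacity}, and since $P$ was an arbitrary prefix, $\hist$ is last-use opaque by \rdef{def:lopacity}.

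The main obstacle will be the rigorous bridging step between the operation-centric ``valid response'' style definitions of TMS2 (Definitions \ref{def:extcp} and \ref{def:valid-response}, which are phrased in terms of per-response externally consistent prefixes $\transactions^r_\hist$ and orderings $\sigma$) and the transaction-centric, prefix-closed sequential witness required by \rdef{def:fs-lopacity}. In particular, TMS2's intuitive ``single global serialization'' is not explicit in the excerpt's reformulation, so the cleanest route may be the indirect one: establish $\mathbb{H}_{tms2} \subseteq \mathbb{H}_{op}$ as a separate lemma (using the strengthening of TMS2 over TMS1 together with the no-early-release property), and then invoke \rlemma{lemma:all-opaque-are-lopaque} to conclude $\mathbb{H}_{tms2} \subseteq \mathbb{H}_{lop}$.
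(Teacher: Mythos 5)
Your fallback route --- prove $\mathbb{H}_{tms2} \subseteq \mathbb{H}_{op}$ as a separate lemma and then invoke \rlemma{lemma:all-opaque-are-lopaque} --- is essentially the argument the paper itself gives for this proposition. Be aware, though, that the paper's own proof is only a conditional sketch: it defers to a relationship between opacity and TMS2 that the TMS authors ``believe (but do not demonstrate),'' and it explicitly concludes ``pending a demonstration.'' Worse, as printed the sketch quotes that belief as ``all opaque histories satisfy TMS2'' and then derives ``all \lopaque{} histories satisfy TMS2,'' which is neither the stated proposition nor a valid inference from the quoted premise; the containment actually needed is the reverse one, namely that every TMS2 history is opaque. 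So on the indirect route you are at the same level of rigor as the paper, and you have identified the missing lemma more cleanly than the paper does.

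Your primary, direct route is genuinely different and would in principle give a self-contained proof that does not pass through opacity, but it cannot be completed from the material in this paper: TMS2 is never formally defined here (only TMS1's per-response validity condition, \rdef{def:extcp} and \rdef{def:valid-response}, is paraphrased), so the ``single global serialization respecting real-time order'' on which your construction of $S$ rests is an appeal to the TMS2 automaton rather than something derivable from the given definitions --- you correctly flag this bridging step as the obstacle. Two smaller points if you pursue it: (i) the no-early-release argument in \rthm{thm:tms1-early-release} only shows that a non-local read comes from a transaction that has already invoked $\tryC$, i.e.\ one that is committed \emph{or commit-pending}, so in $\compl{P}$ you must use the completion clause that commits commit-pending transactions before you can claim $S|\tr_j \subseteq \vis{S}{\tr_i}$; and (ii) for transactions aborted in $\compl{P}$ because they were live in $P$, the justification of their reads again rests on the absence of early release rather than on the validity condition itself, which constrains only responses actually present in the history. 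Neither point is fatal, but both need to be said explicitly for the direct argument to close.
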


\begin{proof}[sketch]
The authors of \cite{DGLM13} believe (but do not demonstrate) that 
all opaque histories satisfy TMS2. If this
is the case, then, since all opaque histories are \lopaque{}
(\rlemma{lemma:all-opaque-are-lopaque}), then it is true that all \lopaque{}
histories satisfy TMS2. Thus, we believe the proposition is true, pending a
demonstration that all opaque histories satisfy TMS2.
\end{proof}

\Lopacity{} and elastic opacity are incomparable.

\begin{lemma} \label{lemma:elastic-opacity-not-lopaque}
    There exists an elastic opaque history $\hist$ that is not \lopaque{}.
\end{lemma}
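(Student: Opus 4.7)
The plan is to exhibit a concrete history $\hist$ that is elastic opaque but not serializable, and then invoke \rlemma{lemma:comp-serializability} (every \lopaque{} history is serializable) to conclude it cannot be \lopaque{}. The natural candidate is the classical non-serializable ``crossed'' pattern on two variables that elastic opacity accepts precisely because a cut can break each transaction at the point where its early read of a later-written location happens.

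Concretely, I would take $\hist$ consisting of two elastic transactions $\tr_i$ and $\tr_j$ with $\accesses{\tr_i} = \accesses{\tr_j} = \{\objx,\objy\}$, whose subhistories interleave so that $\hist|\tr_i = [\init_i{\to}\ok_i, \frop{i}{\objx}{0}, \fwop{i}{\objy}{1}{\ok_i}, \tryC_i{\to}\co_i]$ and $\hist|\tr_j = [\init_j{\to}\ok_j, \frop{j}{\objy}{0}, \fwop{j}{\objx}{2}{\ok_j}, \tryC_j{\to}\co_j]$, with the events interleaved so that both reads precede both writes. First I would define a cutting function $f_\cutf$ that splits $\tr_i$ into $\tr_{i'} = [\init_{i'}{\to}\ok_{i'}, \frop{i'}{\objx}{0}, \tryC_{i'}{\to}\co_{i'}]$ and $\tr_{i''} = [\init_{i''}{\to}\ok_{i''}, \fwop{i''}{\objy}{1}{\ok_{i''}}, \tryC_{i''}{\to}\co_{i''}]$, and analogously splits $\tr_j$ into $\tr_{j'}$ (reading $\objy$) and $\tr_{j''}$ (writing to $\objx$). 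I would verify both cuts are well-formed (each subhistory has more than one operation, all writes on a given variable lie in a single subhistory, no subhistory starts with a write after the first) and consistent in $\hist$ (no conflicting writes happen between the reads inside any subhistory, which holds vacuously since no other transaction writes to either variable before the relevant reads).

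Next I would exhibit a legal sequential extension of $f_\cutf(\hist)$: place $\tr_{i'}$ and $\tr_{j'}$ first (both read the initial value $0$, so $\vis{S}{\tr_{i'}}$ and $\vis{S}{\tr_{j'}}$ trivially satisfy the sequential specifications), then $\tr_{i''}$ and $\tr_{j''}$ (which are single-write transactions, always legal). Since all cut transactions commit in any completion of $f_\cutf(\hist)$, real-time order is preserved, so $f_\cutf(\hist)$ is opaque by \rdef{def:opacity}, hence $\hist$ is elastic opaque by \rdef{def:elastic-opacity}.

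Finally I would argue that $\hist$ itself is not serializable. In any sequential extension $S$ equivalent to $\compl{\hist} = \hist$ one must have $\tr_j \prec_S \tr_i$ for $\tr_i$'s read of $\objx{=}0$ to be legal (since $\tr_j$ writes $2$ to $\objx$), and symmetrically $\tr_i \prec_S \tr_j$ for $\tr_j$'s read of $\objy{=}0$ to be legal. This contradiction shows no such $S$ exists, so $\hist$ is not serializable by \rdef{def:serializability}. By \rlemma{lemma:comp-serializability} every \lopaque{} history is serializable, hence $\hist$ is not \lopaque{}, completing the proof. The only real subtlety is confirming that the chosen cut satisfies the well-formedness and consistency conditions of elastic opacity; the rest is straightforward bookkeeping.
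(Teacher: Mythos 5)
Your overall strategy---exhibit a non-serializable elastic opaque history and then invoke \rlemma{lemma:comp-serializability}---is exactly the route the paper takes (the paper simply cites \cite{FGG09} for the existence of such a history), so the reduction itself is sound. The gap is that your concrete witness is not elastic opaque, because the cut you propose is not well-formed. The paper requires that in every subhistory of a cut other than the first, the first operation execution on any variable is not a write; your second piece $\tr_{i''}=[\init_{i''}\to\ok_{i''},\ \fwop{i''}{\objy}{1}{\ok_{i''}},\ \tryC_{i''}\to\co_{i''}]$ has a write to $\objy$ as its only variable access, so it violates the very condition you claim to verify in your parenthetical. It also fails the requirement that each subhistory be longer than one operation execution: each of your pieces performs a single variable access (the paper's own example in \rsec{sec:elastic-opacity} counts only reads and writes for this purpose). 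Worse, the failure is not repairable for this access pattern: since all writes of an elastic transaction must lie in a single subhistory and only the first subhistory may begin with a write, a transaction whose accesses are $[\frop{i}{\objx}{0},\ \fwop{i}{\objy}{1}{\ok_i}]$ in that order admits only the trivial one-piece cut. Hence every admissible cutting function gives $f_\cutf(\hist)=\hist$, and since your $\hist$ is (by your own argument) not serializable, it is not final-state opaque, hence not opaque, and therefore not elastic opaque either.

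To repair the proof you need a witness in the style of \cite{FGG09}: an elastic transaction with enough operations that a non-trivial well-formed and consistent cut exists---typically a read-only ``search'' transaction whose reads are split across two subtransactions, interleaved with a committed writer so that the two pieces observe different snapshots. The cut history is then opaque while the uncut history is not serializable, and your final step via \rlemma{lemma:comp-serializability} goes through unchanged.
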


\begin{proof}[sketch]
    Since 
    elastic opaque histories may not be serializable \cite{FGG09}, and since, as
    all \lopaque{} histories trivially require serializability
    then some elastic opaque histories are not \lopaque{}.
\end{proof}

\begin{lemma} \label{thm:lopacity-not-elastic-opaque}
    There exists a \lopaque{} history $\hist$ that is not elastic opaque.
\end{lemma}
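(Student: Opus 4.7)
The plan is to exhibit a concrete last-use opaque history that is provably not elastic opaque. The candidate I would use is the history $\hist$ depicted in \rfig{fig:example-abort-abort}: transaction $\tr_i$ executes $\init_i$, a last write $\fwop{i}{\obj}{1}{\ok_i}$, and $\tryA_i \to \ab_i$, while transaction $\tr_j$ executes $\init_j$, a non-local read $\frop{j}{\obj}{1}$, and $\tryA_j \to \ab_j$, with the read happening after $\tr_i$'s write. That $\hist$ is last-use opaque is already established by the discussion of \rfig{fig:example-abort-abort} in \rsec{sec:examples} (the write is a \last{} write, so $\tr_j$'s aborted reading transaction is last-use legal via $\hist\cpeC\tr_i$).

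Next I would argue that $\hist$ is not elastic opaque by contradiction, assuming some cutting function $f_\cutf$ makes $f_\cutf(\hist)$ opaque. I would split the argument according to whether the transactions are declared elastic. If $\tr_i$ is not elastic, then $f_\cutf(\hist)$ still contains $\tr_i$ executing the write concurrently with $\tr_j$'s read of the same value, which constitutes early release in $f_\cutf(\hist)$ and by \rthm{thm:opacity-early-release} contradicts opacity of $f_\cutf(\hist)$. The only remaining case is that $\tr_i$ is elastic and is replaced by some well-formed consistent cut $\cut{\hist}{i}$. The key fact I would invoke is the paper's stipulation that ``if some transaction is committed (aborted) in $\hist$, then all transactions resulting from its cut are committed (aborted) in $f_\cutf(\hist)$.'' Since $\tr_i$ aborts in $\hist$, every subtransaction in $\cut{\hist}{i}$ aborts in $f_\cutf(\hist)$; in particular, the unique subtransaction $\tr_{i'}$ carrying the write $\fwop{i'}{\obj}{1}{\ok_{i'}}$ is aborted.

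From this I would derive a legality failure for $\tr_j$ (or, if $\tr_j$ is elastic, for the subtransaction $\tr_{j'}$ containing the read $\frop{j'}{\obj}{1}$) in any sequential witness $S$ for opacity of $f_\cutf(\hist)$. By \rdef{def:final-state-opacity} every transaction in $S$ must be legal, including aborted ones; but the only write of $1$ to $\obj$ in $f_\cutf(\hist)$ belongs to the aborted subtransaction $\tr_{i'}$, so it is excluded from $\vis{S}{\tr_{j'}}$ by construction (since $\vis{}{}$ only includes committed transactions other than the subject). Consequently $\vis{S}{\tr_{j'}}|\obj$ contains $\frop{j'}{\obj}{1}$ with no preceding write of $1$, violating $\mathit{Seq}(\obj)$. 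Hence no opaque witness exists for $f_\cutf(\hist)$, contradicting elastic opacity of $\hist$.

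The main obstacle I foresee is cleanly handling the freedom in choosing the cuts: I need to argue uniformly over every well-formed consistent cut of both $\tr_i$ and $\tr_j$. The crucial observation that neutralizes this freedom is that the ``abort everywhere'' propagation rule applies regardless of how the cut is chosen, so the write supporting $\tr_j$'s read always sits in an aborted subtransaction. Once that is in place, the legality obstruction is immediate and independent of the sequential ordering in $S$, so the argument does not require case analysis on $\prec_S$.
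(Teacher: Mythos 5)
Your proof is correct, but it takes a genuinely different route from the paper's. The paper's witness is the history of \rfig{fig:luopaque-history-not-elastic}, in which both transactions commit and $\tr_i$ writes to two variables $\obj$ and $\objy$; the well-formedness constraint that all writes of an elastic transaction lie in a single subhistory then forces the only admissible cut of $\tr_i$ to be trivial, so $f_\cutf(\hist)=\hist$, and the argument reduces to ordinary opacity failing on the prefix with the commit events removed (early release of $\obj$ to $\tr_j$). Your witness is instead \rfig{fig:example-abort-abort}, where the early-releasing transaction aborts, and you neutralize the adversary's freedom in choosing cuts not by forcing the cut to be trivial but by the abort-propagation rule: every subtransaction of $\cut{\hist}{i}$ aborts, so the unique write-carrying subtransaction is excluded from $\vis{S}{\tr_j}$ (or from the visible history of whichever subtransaction of $\tr_j$ carries the read), and the reader cannot be legal regardless of the cut or of $\prec_S$. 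This is essentially a concrete instantiation of the paper's own theorem that elastic opacity does not support aborting early release, so you could compress your argument to a one-line citation of that theorem together with \rlemma{lemma:lop-aborting-early-release}, which establishes last-use opacity of exactly this history. Your explicit case split on whether $\tr_i$ is declared elastic is a point of care the paper's proof glosses over. The one thing the paper's choice of witness buys that yours does not is independence from the abort model: its history contains no $\tryA$ operations, so the separation also holds in the only-forced-aborts setting, whereas your witness relies on programmer-initiated aborts.
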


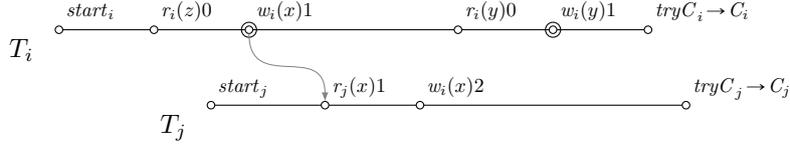
\begin{figure} 
\begin{center}
\begin{tikzpicture}
     \draw
           (0,2)        node[tid]       {$\tr_i$}
                        node[aop]       {$\init_i$} %
                        node[dot]       {} 

      -- ++(1.25,0)     node[aop]       {$\trop{i}{\objz}{0}$}
                        node[dot]       {}  

      -- ++(1.25,0)     node[aop]       {$\twop{i}{\obj}{1}$}
                        node[dot] (wi)  {}
                        node[cir]       {}

      -- ++(2.75,0)     node[aop]       {$\trop{i}{\objy}{0}$}
                        node[dot]       {}  

      -- ++(1.25,0)     node[aop]       {$\twop{i}{\objy}{1}$}
                        node[dot]       {}
                        node[cir]       {}

      -- ++(1.25,0)     node[aop]       {$\tryC_i\!\to\!\co_i$}
                        node[dot]       {}         
                        ;

     \draw
           (2.,1)       node[tid]       {$\tr_{j}$}
                        node[aop]       {$\init_{j}$} %
                        node[dot]       {} 

      -- ++(1.5,0)      node[aop]       {$\trop{j}{\obj}{1}$}
                        node[dot] (rj)  {}

      -- ++(1.25,0)     node[aop]       {$\twop{i}{\obj}{2}$}
                        node[dot]       {}

      -- ++(3.5,0)      node[aop]       {$\tryC_j\!\to\!\co_j$}
                        node[dot]       {}
                        ;
     
     \draw[hb] (wi) \squiggle (rj);
\end{tikzpicture}
\end{center}
\captionof{figure}{\label{fig:luopaque-history-not-elastic} \Lopaque{} history that does not satisfy elastic opacity.}
\end{figure}

\begin{proof}[sketch]
    Let history $\hist$ be the history presented in
    \rfig{fig:luopaque-history-not-elastic}.
    It should be straightforward to see that $\hist$ is \lopaque{} for an
    equivalent sequential history $S=\hist|\tr_i \cdot \hist|\tr_j$. Operations
    on $\objz$ are always justified in any sequential equivalent history since
    they are all within $\tr_i$ and their effects are not visible in $\tr_j$.
    The read operation on $\objy$ is expected to read $0$ since it is not
    preceded in $S$ by any write, and it does read $0$. Thus operations on
    $\objy$ and $\objz$ will not break legality of either $\tr_i$ or $\tr_j$.
    With that in mind, the history can be shown to be last-use opaque by
    analogy to \rlemma{lemma:h1-lop}.

    On the other hand, let $\tr_i$ be an elastic transaction.
    The only possible well-formed cut of $\hist|\tr_i$ is $C_i = \{ [
    \rop{z}0, \wop{x}1, \rop{y}0,$ $\wop{y}1 ] \}$. (In particular, the
    following cut is not well-formed, since $\wop{x}1$ and $\wop{y}0$ are in two
    different subhistories of the cut: $C_i' = \{ [ \rop{z}0, \wop{x}1
    ], [ \rop{y}0,$ $ \wop{y}1 ] \}$). Let $f_C(\hist)$ be a
    cutting function that applies cut $C$. Then, since the cut contains only one
    subhistory, it should be straightforward to see that $f_C(\hist) = \hist$.
    Then, we note that $\hist$ contains an operation in $\hist|\tr_j$ that reads
    the value of $\obj$ from $\hist|\tr_i$ and $\tr_i$ is live. 
    That means that in the prefix $P$ of $H$ s.t. $H = P \cdot
    [\tryC_i\!\to\!\co_i,\tryC_j\!\to\!\co_j]$ both transactions will be
    aborted in any completion of $P$, so for any sequential equivalent history
    $S$ $\vis{S}{\tr_i}$ will not contain $S|\tr_j$, since either $\tr_j$ is
    aborted in any $S$. Therefore $\vis{S}{\tr_i}$ will not justify reading $1$
    from $\obj$ and will not be legal, causing $P$ not to be final state
    opaque (\rdef{def:fs-opacity}), which in turn means that $\hist$ is not
    opaque (\rdef{def:opacity}).
\end{proof}

Last-use opacity is strictly stronger than recoverability.

\begin{lemma} \label{lem:lop-recoverability-apx}
    Any \lopaque{} history $H$ is recoverable.
\end{lemma}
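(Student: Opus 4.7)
I would proceed by contradiction. Assume $\hist$ is \lopaque{} yet fails \rdef{def:recoverability-short}, so there exist distinct $\tr_i,\tr_j\in\hist$ with $\tr_j$ reading value $\val$ from $\obj$ written by $\tr_i$ (i.e.\ $\fwop{i}{\obj}{\val}{\ok_i}\in\hist|\tr_i$ and $\frop{j}{\obj}{\val}\in\hist|\tr_j$), and yet $\tr_i$ does not commit in $\hist$ before $\tr_j$ does. The global unique-writes assumption guarantees that $\fwop{i}{\obj}{\val}{\ok_i}$ is the only write of $\val$ to $\obj$ anywhere in $\hist$, so no other transaction can be invoked to justify $\tr_j$'s read.

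Next I would pass to the prefix $P$ of $\hist$ ending exactly at the event $\res{j}{}{\co_j}$. Since \lopacity{} is prefix-closed (\rthm{thm:lopacity-safety-property}), $P$ is \lopaque{}, and therefore, via \rdef{def:lopacity}, final-state last-use opaque. Because $\tr_i$ has not committed by the end of $P$, it is live, commit-pending, or aborted in $P$. In the live and aborted subcases, the $\mathit{Compl}$ construction leaves $\tr_i$ aborted in $\compl{P}$, whereas $\tr_j$ remains committed in $\compl{P}$. Every sequential witness $S\equiv\compl{P}$ must then make $\tr_j$ legal by \rdef{def:fs-lopacity}(b), so $\vis{S}{\tr_j}$ must be legal. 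But $\vis{S}{\tr_j}$ contains only operations of $\tr_j$ itself and of committed transactions preceding $\tr_j$ in $S$, whence $\fwop{i}{\obj}{\val}{\ok_i}\notin\vis{S}{\tr_j}$; by unique writes no other write of $\val$ to $\obj$ is available to substitute, so $\vis{S}{\tr_j}|\obj\notin\mathit{Seq}(\obj)$, contradicting final-state last-use opacity of $P$.

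The main obstacle is the residual commit-pending subcase: if $\tr_i$ has invoked $\tryC_i$ inside $P$, clause (b) of $\mathit{Compl}$ commits $\tr_i$ in $\compl{P}$, the two transactions become concurrent in the real-time order of $\compl{P}$, and $S$ can be chosen with $\tr_i\prec_S\tr_j$, restoring legality. I plan to handle this by retreating to the smaller prefix $P'$ of $\hist$ ending immediately before $\inv{i}{}{\tryC_i}$, in which $\tr_i$ is live and therefore aborted in $\compl{P'}$, while $\tr_j$'s read of $\val$ has already occurred. A secondary case split on the status of $\tr_j$ in $P'$ then closes the argument: the committed and commit-pending subcases reprise the legality obstruction on $\vis{S}{\tr_j}$, whereas the live-$\tr_j$ subcase demands obstructing the \emph{last-use} legality of $\tr_j$ in the presence of an aborted $\tr_i$. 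This last step is where the delicate reasoning sits, as it hinges on the \last{}-write status of $\fwop{i}{\obj}{\val}{\ok_i}$ across the extension family $\evalhist{\prog}{\processes}$ (\rdef{def:last-write-inv}) and on the selective inclusion freedom built into the definition of $\luvisf$ for decided but uncommitted transactions.
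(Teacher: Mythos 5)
Your overall strategy---contradiction, passing to the prefix ending at $\res{j}{}{\co_j}$, and obstructing the legality of $\vis{S}{\tr_j}$ via unique writes---is exactly the paper's argument, which handles the matter by analogy to the history $\hist_2$ of \rlemma{lemma:h2-lop}. In the subcases where $\tr_i$ is live or aborted in that prefix, your reasoning goes through and coincides with the paper's sketch.

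The commit-pending subcase you isolate is, however, a genuine obstruction, and your patch does not close it. Retreating to the prefix $P'$ ending just before $\inv{i}{}{\tryC_i}$ only helps if $\tr_j$'s read of $\val$ has already occurred by then, and nothing in the hypotheses guarantees this. Consider the history in which $\tr_i$ executes $\fwop{i}{\obj}{\val}{\ok_i}$ as its \last{} write on $\obj$, invokes $\tryC_i$, and only afterwards does $\tr_j$ read $\val$ and commit, with $\res{i}{}{\co_i}$ arriving after $\res{j}{}{\co_j}$. Every prefix of this history admits a completion and serialization in which all transactions are (last-use) legal: in any prefix where $\tr_j$ is committed, $\tr_i$ is commit-pending and may be committed in the completion; in any prefix where $\tr_i$ must be aborted, $\tr_j$ is uncommitted and its read is justified either by committing the commit-pending $\tr_i$ or by including $S\cpeC\tr_i$ in $\luvis{S}{\tr_j}$, since $\tr_i$ has decided on $\obj$. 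Yet $\res{i}{}{\co_i}$ does not precede $\res{j}{}{\co_j}$, so \rdef{def:recoverability-short} is violated. Whether this is an outright counterexample or is excluded turns on how one reads the completion rule for commit-pending transactions and whether ``any completion'' in \rdef{def:fs-lopacity} quantifies existentially or universally; the paper's own sketch simply asserts that some prefix has $\tr_i$ aborted in \emph{every} completion while $\tr_j$ is committed, which is only guaranteed when $\res{j}{}{\co_j}$ precedes $\inv{i}{}{\tryC_i}$. To finish, you must either restrict to that situation (as the paper implicitly does) or argue the commit-pending case under the universal reading of completions; as written, your final case split does not terminate in a contradiction.
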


\begin{proof}[sketch]
    Let us assume that $H$ is not recoverable. Then there must be some
    transactions $\tr_i$  and $\tr_j$ s.t. $\tr_j$ reads from $\tr_i$ and then
    $\tr_j$ commits before $\tr_i$. 
    By analogy to \rlemma{lemma:h2-lop}, such a history will contain a prefix
    $P$ where any completion will contain an aborted $\tr_i$ and a committed
    $\tr_j$, so for any equivalent sequential history $S$ $\vis{S}{\tr_j}$ will
    not contain $S|\tr_i$. Since $\tr_i$ reads from $\tr_i$ then such
    $\vis{S}{\tr_j}$ will not be legal, so by \rdef{def:fs-lopacity} $P$ is not
    last-use opaque and thus, by \rdef{def:lopacity}, $\hist$ is not last-use
    opaque, which is a contradiction.
\end{proof}

ACA is incomparable to \lopacity{}.

\begin{lemma} \label{lemma:lop-not-aca}
    There exists a \lopaque{} history $H$ that does not avoid cascading aborts.
\end{lemma}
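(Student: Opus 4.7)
The plan is to exhibit a single concrete witness history that is \lopaque{} but violates the ACA condition. The natural candidate is $\hist_1$ from \rfig{fig:proof-h1}, which has already been shown to be last-use opaque in \rlemma{lemma:h1-lop}, so we can reuse that lemma directly and avoid redoing the sequential-history construction.

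First I would recall $\hist_1$, in which $\tr_i$ writes $1$ to $\obj$ (its \last{} write) and then commits, while $\tr_j$ concurrently reads $1$ from $\obj$ and commits. The crucial structural feature is that the response event of $\tr_j$'s read $\frop{j}{\obj}{1}$ precedes the invocation of $\tr_i$'s $\tryC_i$ in $\hist_1$. Hence, by the definition of reading-from (plus unique writes), $\tr_j$ reads from $\tr_i$, but $\tr_i$ has not yet committed at the moment of that read. By \rdef{def:aca-short}, this violates the requirement that $\tr_i$ commit before $\tr_j$'s read, so $\hist_1$ does not avoid cascading aborts.

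Second, I would invoke \rlemma{lemma:h1-lop} to conclude that $\hist_1$ is \lopaque{}. Combining the two observations gives a single history that is simultaneously \lopaque{} and not ACA, which is exactly the existential statement of the lemma.

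There is no real obstacle here: all the nontrivial work is packaged in \rlemma{lemma:h1-lop}, and the ACA violation is immediate from the definition once we point to the early-release read in $\hist_1$. The only thing to be careful about is to make explicit that ``reads from'' in \rdef{def:aca-short} is the same relation used throughout the paper (justified by unique writes and the matching value $1$), and to note explicitly that $\res{j}{}{1}$ strictly precedes $\inv{i}{}{\tryC_i}$ in $\hist_1$, so that the ACA condition is indeed falsified rather than vacuously satisfied.
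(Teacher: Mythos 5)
Your proposal is correct and matches the paper's own proof essentially verbatim: both exhibit $\hist_1$ as the witness, cite \rlemma{lemma:h1-lop} for last-use opacity, and observe that $\tr_j$'s read of $\obj$ from $\tr_i$ precedes $\tr_i$'s commit, contradicting \rdef{def:aca-short}. Your version is in fact slightly more careful than the paper's sketch (which reverses the roles of $\tr_i$ and $\tr_j$ in the reads-from relation), so nothing further is needed.
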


\begin{proof}[sketch]
    \rlemma{lemma:h1-lop} demonstrates that $H_1$ is \lopaque{}. However, since
    $\tr_i$ reads from $\tr_j$ in $\hist_1$ and $\frop{j}{\obj}{\val}
    \prec_{\hist_1} \tryC_i\!\to\!\co_i$ the history is not ACA, since it
    contradicts \rdef{def:aca-short}.
\end{proof}

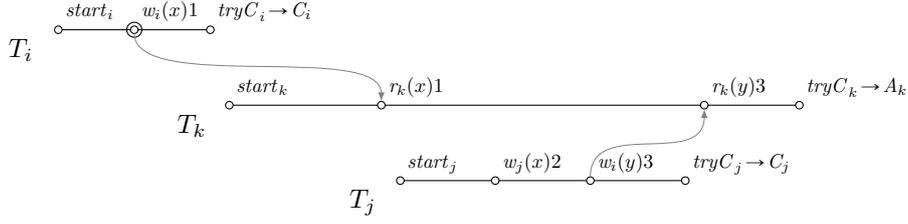
\begin{figure} 
\begin{center}
\begin{tikzpicture}
     \draw
           (0,3)        node[tid]       {$\tr_i$}
                        node[aop]       {$\init_i$} %
                        node[dot]       {} 

      -- ++(1,0)        node[aop]       {$\twop{i}{\objx}{1}$}
                        node[dot] (wix) {}
                        node[cir]       {}

      -- ++(1,0)        node[aop]       {$\tryC_i\!\to\!\co_i$}
                        node[dot]       {}         
                        ;

     \draw
           (2.25,2)        node[tid]       {$\tr_k$}
                        node[aop]       {$\init_k$} %
                        node[dot]       {} 

      -- ++(2,0)        node[aop]       {$\trop{k}{\objx}{1}$}
                        node[dot] (rkx) {}  

      -- ++(4.25,0)     node[aop]       {$\trop{k}{\objy}{3}$}
                        node[dot] (rky) {}

      -- ++(1.25,0)     node[aop]       {$\tryC_k\!\to\!\ab_k$}
                        node[dot]       {}         
                        ;

     \draw
            (4.5,1)    node[tid]       {$\tr_{j}$}
                        node[aop]       {$\init_{j}$} %
                        node[dot]       {} 

      -- ++(1.25,0)     node[aop]       {$\twop{j}{\obj}{2}$}
                        node[dot] (wjx) {}

      -- ++(1.25,0)     node[aop]       {$\twop{i}{\objy}{3}$}
                        node[dot] (wjy) {}

      -- ++(1.25,0)     node[aop]       {$\tryC_j\!\to\!\co_j$}
                        node[dot]       {}
                        ;
     
     \draw[hb] (wix) \squiggle (rkx);
     \draw[hb] (wjy) \squiggleup (rky);
\end{tikzpicture}
\end{center}
\captionof{figure}{\label{fig:aca-history-not-lopaque} ACA history that does not satisfy elastic \lopacity{}.}
\end{figure}

\begin{lemma} \label{lemma:aca-not-lop}
    There exists an ACA history $H$ that is not \lopaque{}.
\end{lemma}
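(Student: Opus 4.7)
The plan is to take the witness history $\hist$ depicted in \rfig{fig:aca-history-not-lopaque} and verify the two requirements separately: first that $\hist$ avoids cascading aborts, then that no sequential ordering can make $\tr_k$ last-use legal in any completion.

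First I would check ACA by reading off real-time positions from the diagram. There are exactly two inter-transactional read dependencies: $\tr_k$ reads $\objx$ from $\tr_i$, and $\tr_k$ reads $\objy$ from $\tr_j$. From the diagram $\tryC_i\to\co_i$ precedes $\trop{k}{\objx}{1}$, and $\tryC_j\to\co_j$ precedes $\trop{k}{\objy}{3}$. Thus every read-from satisfies \rdef{def:aca-short}, so $\hist$ avoids cascading aborts.

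Next I would establish the real-time order $\prec_\hist$: we have $\tr_i \prec_\hist \tr_k$ and $\tr_i \prec_\hist \tr_j$, while $\tr_k$ and $\tr_j$ are concurrent (their event intervals overlap). Since $\tr_k$ is commit-pending with response $\ab_k$, $\tr_k$ is aborted in any completion, while $\tr_i$ and $\tr_j$ are committed. Any sequential history $S \equiv \compl{\hist}$ that preserves $\prec_\hist$ must place $\tr_i$ first, so only two orderings remain: $S_1 = S|\tr_i \cdot S|\tr_j \cdot S|\tr_k$ and $S_2 = S|\tr_i \cdot S|\tr_k \cdot S|\tr_j$.

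For $S_1$, because both $\tr_i$ and $\tr_j$ are committed and precede $\tr_k$, the definition of $\luvis{S_1}{\tr_k}$ forces $S_1|\tr_i \cdot S_1|\tr_j \subseteq \luvis{S_1}{\tr_k}$. Restricted to $\objx$ this yields $[\fwop{i}{\objx}{1}{\ok_i}, \fwop{j}{\objx}{2}{\ok_j}, \frop{k}{\objx}{1}]$, which violates $\mathit{Seq}(\objx)$ since the read of $1$ is not preceded by the most recent write of $1$. For $S_2$, $\tr_j$ follows $\tr_k$ so $S_2|\tr_j$ is excluded from $\luvis{S_2}{\tr_k}$; restricted to $\objy$ we then obtain $[\frop{k}{\objy}{3}]$, which violates $\mathit{Seq}(\objy)$ because the initial value is $0$ and there is no preceding write of $3$. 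Hence in either ordering $\tr_k$ fails to be last-use legal, so $\hist$ is not final-state last-use opaque, and therefore not last-use opaque by \rdef{def:lopacity}.

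The only subtlety I expect is checking that the optional clause of \rdef{def:decided} for decided transactions does not rescue $\tr_k$ in $S_2$: in $S_2$ we have $\tr_k \prec_{S_2} \tr_j$, so the ``decided but not committed'' clause of $\luvisf$ simply does not apply to $\tr_j$, and the read of $\objy$ in $\tr_k$ remains unjustified. With that clarification the argument closes cleanly.
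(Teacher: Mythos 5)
Your proposal is correct and follows essentially the same route as the paper: the same witness history from \rfig{fig:aca-history-not-lopaque}, an ACA check on its two read-from edges, and the observation that real-time order pins $\tr_i$ first so that both remaining serializations leave $\tr_k$ without a legal $\luvisf$ (illegal on $\objx$ in one case, on $\objy$ in the other). Your explicit enumeration of both orderings is in fact slightly more careful than the paper's sketch, which dismisses the $\tr_i,\tr_k,\tr_j$ ordering implicitly; the only nit is that $\tr_k$ is aborted (it receives $\ab_k$), not commit-pending, though this does not affect the argument.
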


\begin{proof}[sketch]
    The history in \rfig{fig:aca-history-not-lopaque} is shown to be ACA in \cite{AH14}.
    However, note, that $\compl{H} = H$, and given any sequential $S \equiv
    \compl{\hist}$ $\tr_k$ $\tr_k$ must follow both $\tr_i$ and $\tr_k$ in $S$
    because $\tr_k$ reads form both transactions.
    Since $\tr_i \prec^{rt}_H \tr_j$ and $\tr_i \prec^{rt}_H \tr_k$, then
    $\tr_i$ must precede both other transactions in $S$.
    Hence, $S = H|\tr_i \cdot H|\tr_j \cdot H|\tr_k$, so $\vis{S}{\tr_k} = S$
    and therefore $\vis{S}{\tr_k}$ is illegal because $\frop{k}{\obj}{1}$ is
    preceded in $\vis{S}{\tr_k}|\obj$ by $\frop{k}{\obj}{2}$.
\end{proof}

Strictness and \lopacity{} are also incomparable.

\begin{theorem}
    There exists a \lopaque{} history $H$ that is not strict.
\end{theorem}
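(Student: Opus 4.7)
The plan is to exhibit a witness history that is last-use opaque but violates the strictness condition of Definition \ref{def:strict-short}. Since strictness forbids any transaction from reading or overwriting a value written by another live transaction, while last-use opacity explicitly permits early release (as demonstrated by the examples in Section \ref{sec:examples}), any last-use opaque history that exhibits early release will serve as such a witness.

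The natural candidate is $\hist_1$ from Figure \ref{fig:example-early-release}, which has already been shown to be last-use opaque in Lemma \ref{lemma:h1-lop}. Concretely, $\hist_1$ contains transactions $\tr_i$ and $\tr_j$ such that $\tr_i$ executes $\fwop{i}{\obj}{1}{\ok_i}$ (its \last{} write on $\obj$) and $\tr_j$ subsequently executes $\frop{j}{\obj}{1}$ before $\tr_i$ reaches $\tryC_i\!\to\!\co_i$. Thus I would first invoke Lemma \ref{lemma:h1-lop} to assert last-use opacity of $\hist_1$, and then instantiate Definition \ref{def:strict-short} with $\op_j = \fwop{i}{\obj}{1}{\ok_i}$ (a write in $\hist_1|\tr_i$) and $\op_i = \frop{j}{\obj}{1}$ (a read in $\hist_1|\tr_j$). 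Since $\op_i$ follows $\op_j$ in $\hist_1$, strictness would require $\tr_i$ to be committed or aborted before $\op_i$, but $\hist_1|\tr_i$ contains no commit or abort event preceding $\op_i$. Hence $\hist_1$ is not strict, and the theorem follows.

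There is no real obstacle here, as the work is already done: the construction simply chains together Lemma \ref{lemma:h1-lop} with a direct inspection of the order of events in $\hist_1$ against Definition \ref{def:strict-short}. The only care needed is to point explicitly at the offending pair $(\op_i, \op_j)$ and verify from the diagram of $\hist_1$ that no commit or abort event of $\tr_i$ appears in $\hist_1$ strictly before the read response $\res{j}{}{1}$, which is immediate from the figure.
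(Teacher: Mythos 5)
Your proposal is correct. The witness is the same one the paper ultimately relies on ($\hist_1$ from \rfig{fig:example-early-release}, shown \lopaque{} in \rlemma{lemma:h1-lop}), but your final step differs from the paper's: the paper argues indirectly, citing the fact that every strict history is ACA and then invoking \rlemma{lemma:lop-not-aca}, which already exhibits $\hist_1$ as a \lopaque{} history violating ACA; you instead unfold \rdef{def:strict-short} directly on $\hist_1$, pointing at the write $\fwop{i}{\obj}{1}{\ok_i}$ and the subsequent read $\frop{j}{\obj}{1}$ and observing that $\tr_i$ neither commits nor aborts before that read. Your route is more self-contained --- it avoids the appeal to the external implication ``strict $\Rightarrow$ ACA'' from the cited literature --- at the cost of re-doing a definition check that the paper gets for free by reusing its ACA lemma. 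Both are sound; the only care needed in your version (which you correctly flag) is the explicit verification that $\tr_i$'s commit event occurs after the read response in $\hist_1$, which is immediate from the figure and is precisely what makes the history an early-release history in the first place.
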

 
\begin{proof}[sketch]
    Since any strict history is also ACA \cite{AH14}, and since
    \rlemma{lemma:lop-not-aca} shows that not all \lopaque{} histories are ACA,
    then not all \lopaque{} histories are strict.
\end{proof}
 
\begin{theorem}
    There exists a strict history $H$ that is not \lopaque{}.
\end{theorem}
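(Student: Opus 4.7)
My plan is to build a minimal witness by exploiting the fact that the paper's definition of strictness, inherited from \cite{BHG87}, constrains only the temporal order of operations relative to commit and abort events and says nothing about whether read-return values match the sequential specification. I will exhibit a two-transaction history in which a read returns a value inconsistent with the real-time-ordered committed state, yet no operation ever precedes the relevant writer's commit.

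Concretely, let $\hist$ consist of $\tr_1$ executing $\fwop{1}{\obj}{1}{\ok_1}$ followed by $\tryC_1 \to \co_1$, and then (with no temporal overlap) $\tr_2$ executing $\frop{2}{\obj}{0}$ followed by $\tryC_2 \to \co_2$. To verify strictness via \rdef{def:strict-short}, observe that the only write event is $\fwop{1}{\obj}{1}{\ok_1}$ and the only subsequent operation on $\obj$ by a different transaction is $\frop{2}{\obj}{0}$; since $\res{1}{}{\co_1}$ precedes this read, the strictness requirement holds explicitly for this pair and vacuously for all others. Hence $\hist$ is strict.

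To show $\hist$ is not last-use opaque, I will apply \rdef{def:fs-lopacity} and \rdef{def:lopacity}. All transactions in $\hist$ are committed, so $\compl{\hist} = \hist$, and since $\tr_1 \prec_{\hist} \tr_2$, any sequential $S \equiv \hist$ preserving real-time order must place $\tr_1$ before $\tr_2$. As $\tr_1$ is committed in $S$ and $\tr_1 \prec_S \tr_2$, we have $S|\tr_1 \subseteq \vis{S}{\tr_2}$, so $\vis{S}{\tr_2}|\obj = [\fwop{1}{\obj}{1}{\ok_1}, \frop{2}{\obj}{0}]$. This sequence is not in $\mathit{Seq}(\obj)$ because the most recent preceding write assigns value $1$ rather than $0$, so $\vis{S}{\tr_2}$ is not legal and $\tr_2$ is not legal in $S$. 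No other sequential ordering preserves real-time order, so $\hist$ is not final-state last-use opaque and therefore not last-use opaque.

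The main obstacle is essentially nil: the construction works precisely because strictness is a purely order-theoretic property while last-use opacity additionally demands value-level consistency with $\mathit{Seq}(\obj)$. The only care needed is confirming that a read returning a stale initial value, in the absence of any real-time-concurrent writer, is not independently ruled out by the strictness definition; the definition makes this immediate since it quantifies over write operations $\op_j$ and subsequent operations $\op_i$, not over the values observed by $\op_i$.
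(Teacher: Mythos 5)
Your proof is correct, but it takes a genuinely different and more elementary route than the paper's. The paper does not construct a fresh witness: it reuses the three-transaction history of \rfig{fig:aca-history-not-lopaque} (with $\tr_i$, $\tr_j$, $\tr_k$), cites \cite{AH14} for the fact that this history is strict, and then invokes \rlemma{lemma:aca-not-lop}, where the same history was already shown not to be \lopaque{} because $\tr_k$ reads from two different writers and the real-time order forces a serialization in which $\vis{S}{\tr_k}|\obj$ contains an intervening write. Your two-transaction history ($\tr_1$ writes $1$ to $\obj$ and commits; afterwards $\tr_2$ reads $0$ and commits) instead exploits the fact that strictness, as stated in \rdef{def:strict-short}, constrains only the placement of commit/abort events relative to subsequent accesses and is silent about return values; the violation of last-use opacity is then immediate because the committed $\tr_2$ must be legal and $\vis{S}{\tr_2}|\obj = [\fwop{1}{\obj}{1}{\ok_1}, \frop{2}{\obj}{0}] \notin \mathit{Seq}(\obj)$ in the unique real-time-respecting serialization. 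What your approach buys is self-containment and minimality (no reliance on an external strictness proof and no case analysis over serialization orders); what the paper's approach buys is a witness in which every read returns a value actually written by some transaction, so the separation is exhibited by a serialization-order conflict rather than by a read that is blatantly inconsistent with any committed state---arguably a more informative, if less economical, counterexample. Both are valid proofs of the theorem.
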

 
\begin{proof}[sketch]
    The history in \rfig{fig:aca-history-not-lopaque} is shown to be strict in
    \cite{AH14}.
    However, as we show in \rlemma{lemma:aca-not-lop}, this history is not
    \lopaque{}.
\end{proof}

Rigorousness is strictly stronger than \lopacity{}.

\begin{lemma}
    Any rigorous history $H$ is \lopaque{}.
\end{lemma}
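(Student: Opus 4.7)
The plan is to observe that this lemma follows immediately by composing two results already established in the paper. First, the authors of \cite{AH14} demonstrated that every rigorous history is opaque; this fact was already invoked in the discussion of rigorousness in \rsec{sec:properties-database} (right before the statement of Corollary that rigorousness does not support early release). Second, \rlemma{lemma:all-opaque-are-lopaque} in \rsec{sec:lopacity-comparison} shows that every opaque history is \lopaque{}.

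Concretely, I would proceed as follows. Let $\hist$ be an arbitrary rigorous history. By \cite{AH14}, $\hist$ is opaque. Applying \rlemma{lemma:all-opaque-are-lopaque} then yields that $\hist$ is \lopaque{}. That is all the proof requires; no new reasoning about \last{} writes, legality, or $\luvisf$ is needed, since the bridge through opacity already captures the fact that rigorousness imposes all the ordering and legality constraints that last-use opacity demands (and in fact more, since rigorousness forbids early release entirely).

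I do not anticipate a real obstacle here: both ingredients are off-the-shelf. The only thing to double-check while writing is that the cited result from \cite{AH14} is stated for the same formal model of histories used in this paper (finite sequences of invocation/response events with the same transactional operations), so that the implication ``rigorous $\Rightarrow$ opaque'' transfers verbatim without needing any adaptation. Given that the paper has already relied on this implication earlier in \rsec{sec:properties-database} without qualification, this should be a non-issue, and the proof can be presented in two or three lines.
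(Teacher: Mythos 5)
Your proposal is correct and is exactly the argument the paper gives: cite \cite{AH14} for ``rigorous implies opaque'' and then apply \rlemma{lemma:all-opaque-are-lopaque} to conclude. No differences worth noting.
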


\begin{proof}[sketch]
Since \cite{AH14} demonstrates that rigorous histories are opaque, and since we
show in \rlemma{lemma:all-opaque-are-lopaque} that opaque histories are also
\lopaque{}, then all rigorous histories are \lopaque{}.
\end{proof}

Live opacity is stronger than \lopacity{}.

\begin{lemma}
    Any live opaque history $\hist$ is \lopaque{}.
\end{lemma}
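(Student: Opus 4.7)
The plan is to exploit the fact that live opacity is a safety property~\cite{DFK14}, so that every finite prefix $P$ of a live opaque history $\hist$ is itself live opaque; it then suffices to prove that any live opaque history admits a witness for final-state last-use opacity, after which \rdef{def:lopacity} concludes.

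Fix a live opaque history $\hist'$. By \rdef{def:live-opacity} there exist a legal sequential history $S$ consisting exactly of the committed transactions of $\hist'$, and an extension $S'$ of $S$ that is also legal, contains $\tr_i^{\mathit{gr}}$ for every non-committed $\tr_i \in \hist'$, and respects the modified real-time order. I will choose a completion $\compl{\hist'}$ in which every commit-pending transaction is completed with an abort (this is permitted by the definition of completion). The witness $W$ is then obtained by taking the order of transactions in $S'$ and replacing each $\tr_i^{\mathit{gr}}$ with the full $\compl{\hist'}|\tr_i$, while keeping committed transactions intact. By construction $W \equiv \compl{\hist'}$, $W$ is sequential, and $W$ preserves the real-time order of $\hist'$ because that order agrees with the modified real-time order preserved by $S'$.

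Next I would verify the two legality requirements of \rdef{def:fs-lopacity}. For $\tr_i$ committed in $W$, $\vis{W}{\tr_i}$ coincides with the analogous visible prefix of $S$, since $W$ and $S$ order committed transactions identically and a visible subhistory contains only committed transactions; hence legality in $S$ transfers to $W$. For $\tr_i$ non-committed in $W$, I would set $\luvis{W}{\tr_i}$ to include all committed transactions preceding $\tr_i$ together with $W|\tr_i$, omitting the optional concurrent decided transactions. For each variable $\obj$, the restriction $\luvis{W}{\tr_i}|\obj$ matches the corresponding prefix of $S'|\obj$ with $W|\tr_i|\obj$ appended in place of $\tr_i^{\mathit{gr}}|\obj$. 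Every non-local read of $\tr_i$ coincides with a read in $\tr_i^{\mathit{gr}}$; by legality of $S'$ and because no read-only skeleton carries a write, its matching write must sit in a committed transaction preceding $\tr_i^{\mathit{gr}}$ in $S'$, hence in $\luvis{W}{\tr_i}$. Local reads of $\tr_i$ are resolved by earlier writes inside $W|\tr_i$, which \rdef{def:live-opacity}b guarantees to be legal. Thus $\luvis{W}{\tr_i}$ is legal and $\tr_i$ is last-use legal in $W$.

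The main obstacle is reconciling the two distinct treatments of non-committed transactions: live opacity justifies them via a read-only skeleton wrapped in a virtual commit, whereas the last-use opacity witness must retain the original subhistory together with an abort. The crucial observation that unlocks the argument is that legality of $S'$, combined with the unique-writes assumption, forces every externally observed value to originate from a committed transaction; reinserting the writes and local reads of $\tr_i$ into its block $W|\tr_i$ neither introduces spurious external conflicts nor breaks the surrounding sequence, because blocks of a sequential history do not interleave.
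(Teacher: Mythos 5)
Your proposal is correct and follows essentially the same route as the paper's proof: both take the witness $S'$ from the live-opacity definition, splice the full (aborted) subhistories of non-committed transactions back in place of their read-only skeletons $\tr_i^{\mathit{gr}}$, argue that non-local reads are served only by committed writes (so the reinserted writes are never observed externally) while local reads are covered by clause (b) of \rdef{def:live-opacity}, and close via prefix-closure of live opacity. The only cosmetic difference is that the paper routes the ``no reading from aborted/live writers'' step through \rthm{thm:live-opacity-aborting}, whereas you derive it directly from legality of $S'$ and unique writes; both are adequate.
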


\begin{proof}[sketch]
    Since $\hist$ is live opaque there exists a sequential history $S$ that
    justifies serializability of $\hist$ and an extension $S'$ of $S$ where if
    transaction  $\tr_i$  is not in $S$ then it is replaced in $S'$ by
    $\tr_i^{gr}$ containing only non-local reads. 
    $S'$ is legal and preserves the real-time order of $\hist$ (accounting for
    replaced transactions).
    In addition, from \rthm{thm:live-opacity-aborting}, no transaction in
    $\hist$ reads from a live transaction (in any prefix of $\hist$). 
    Therefore, since $S'$ is legal, any read operation $\op_i =
    \frop{i}{\obj}{\val}$ in $\hist$ that is preceded
    $\fwop{j}{\obj}{\val}{\valu}$ in $\hist$,  $\tr_j$ is committed in $S$ and
    is included in $S'$ in full.

    Let $S''$ be a sequential history constructed by replacing the operations
    removed to create $S'$ where if $\tr_i \in \hist$ and $\tr_i \not\int S$
    then $\tr_i$ is aborted in $S''$. $S''$ preserves the real time order of
    $\hist$ and $S'' \equiv \hist$.
    Note that, since $S'$ is legal, if some write $\op^w$ is in $S''$ and not
    in $S'$, then there is no non-local read operation $\op^r$ reading the value
    written by $\op^w$. Hence any operation reading the value written by
    $\op^w$ is local, and since all local reads in transactions that are
    replaced in $S'$ read legal values (by \rdef{def:live-opacity}), then all
    reads reading from any $\op_w$ read legal values in $S''$.
    Since $S'$ is legal, then all reads reading from transactions that are in
    $S$ read legal values in $S'$. Since $S'' \equiv \hist$, then these read
    and write operations also read legal values in $S''$.
    Because of this, and since no transaction reads from another live
    transaction, $\vis{S''}{\tr_i}$ will be legal for any transaction in $S''$. 
    In addition, $\luvis{S''}{\tr_i}$ will be legal for any aborted transaction
    in $S''$.
    Therefore any live opaque $\hist$ will be final state last-use opaque. 
    Since any prefix of $\hist$ is also live opaque, then any prefix will also
    be final-state last-use opaque, hence $\hist$ is opaque.
\end{proof}

\section{$\beta$--last-use opacity}

\begin{definition}[\Blast{} Write Invocation] \label{def:blast-write-inv} Given
a program $\prog$, a set of processes $\processes$ executing $\prog$ and a
history $\hist$ s.t. $\hist \models \exec{\prog}{\processes}$, i.e.  $\hist \in
\evalhist{\prog}{\processes}$, an invocation $\inv{i}{}{\wop{\obj}{\val}}$ is
the \emph{\last{} write invocation} on some variable $\obj$ by transaction
$\tr_i$ in $\hist$, if for any history $\hist' \in
\evalhist{\prog}{\processes}$ for which $\hist$ is a prefix (i.e., $\hist' =
\hist \cdot R$) there is no operation invocation $\mathit{inv}'$ s.t.
$\inv{i}{}{\wop{\obj}{\val}}$ precedes $\mathit{inv}'$ in
$\hist'|\tr_i$ where 
\begin{inparaenum}[(a) ]
    \item $\mathit{inv}' = \inv{i}{}{\wop{\obj}{\valu}}$ or
    \item $\mathit{inv}' = \inv{i}{}{\tryA}$.
\end{inparaenum}
\end{definition}

\begin{definition}[\Blast{} Write] \label{def:blast-write-op}
Given program $\prog$, a set of processes $\processes$ executing $\prog$ and
history $\hist$ s.t. $\hist \models \exec{\prog}{\processes}$, an operation
execution is the \emph{\last{} write} on some variable $\obj$ by transaction
$\tr_i$ in $\hist$ if it comprises of an invocation and a response other than
$\ab_i$, and the invocation is the \emph{\blast{} write invocation} on $\obj$ by
$\tr_i$ in $\hist$.
\end{definition}

\begin{definition} [Transaction $\beta-$Decided on $\obj$] \label{def:bdecided} 
Given a program $\prog$, a set of processes $\processes$ and a history $\hist$
s.t. $\hist \models \exec{\prog}{\processes}$, we say transaction $\tr_i \in
\hist$ \emph{$\beta$-decided on} variable $\obj$ in $\hist$ iff $\hist|\tr_i$ contains
a complete write operation execution $\fwop{i}{\obj}{\val}{\ok_i}$ that is the
\blast{} write on $\obj$.
\end{definition}

Given some history $\hist$, let $\cpetrans{\hist}_\beta$ be a set of transactions
s.t. $\tr_i \in \cpetrans{\hist}_\beta$ iff there is some variable $\obj$
s.t. $\tr_i$ $\beta$-decided on $\obj$ in $\hist$.

Given any $\tr_i \in \hist$, $\hist\cpe^\beta\tr_i$ is the longest subsequence of
$\hist|\tr_i$ s.t.:
\begin{enumerate}[a) ]
    \item $\hist\cpe^\beta\tr_i$ contains $\init_i\to\valu$, and
    \item for any variable $\obj$, if $\tr_i$ $\beta$-decided on $\obj$ in $\hist$, then
    $\hist\cpe^\beta\tr_i$ contains $(\hist|\tr_i)|\obj$.
\end{enumerate}
In addition, $\hist\cpeC^\beta\tr_i$ is a sequence s.t. $\hist\cpeC^\beta\tr_i =
\hist\cpe^\beta\tr_i \cdot [\tryC_i\to\co_i]$.

Given a sequential history $S$ s.t. $S\equiv\hist$,
$\bluvis{S}{\tr_i}$ is the longest subhistory of $S$, s.t. for each $\tr_j
\in S$:
\begin{enumerate}[a) ]
    \item $S|\tr_j \subseteq \bluvis{S}{\tr_i}$ if $i=j$ or $\tr_j$ is committed
    in $S$, or 
    \item $S\cpeC^\beta\tr_j \subseteq \bluvis{S}{\tr_i}$ if  $\tr_j$ is not
    committed in $S$ but $\tr_j \in \cpetrans{\hist}_\beta.$
\end{enumerate}

Given a sequential history $S$ and a transaction $\tr_i \in S$, we then say
that transaction $\tr_i$ is \emph{$\beta$--last-use legal in} $S$ if $\bluvis{S}{\tr_i}$
is legal.

\begin{definition} [Final-state $\beta$--Last-use Opacity] \label{def:final-state-blu-opacity} \label{def:fs-blu-opacity}
    A finite history $\hist$ is final-state $\beta$--last-use opaque if, and only if,
    there exists a sequential history $S$ equivalent to any completion of
    $\hist$ s.t., 
    \begin{enumerate}[a) ] 
        \item $S$ preserves the real-time order of $\hist$,
        \item every transaction in $S$ that is committed in $S$ is
        legal in $S$,
        \item every transaction in $S$ that is not committed in $S$ is
        $\beta$--last-use legal in $S$. 
    \end{enumerate}
\end{definition}

\begin{definition} [$\beta$--Last-use Opacity] \label{def:blu-opacity}
    A history $\hist$ is $\beta$--last-use opaque if, and only if, every finite prefix of
    $\hist$ is final-state $\beta$--last-use opaque.
\end{definition}

\section{\SVA{} Correctness Proof}
\label{sec:sva-proof}

Since the values used within writes are under the control of the program
(rather than \SVA{}) we simply assume that they are within the domain of the
variables.

\begin{assumption} [Values Within Domain] \label{obs:within-domain}
    For any transaction $\tr_i$ in any \SVA{} history $\hist$ given variable
    $\obj$ with the domain of $\mathbb{D}$, if $\fwop{i}{\obj}{\val}{\valu} \in
    \hist|\tr_i$, then $\val \in \mathbb{D}$.
\end{assumption}

\begin{definition} [Direct Precedence]
    For operations $\op_i,\op_j \in \hist$, $\op_j \dprec_{\hist} \op_i$ iff
    $\op_j \prec_{\hist} \op_i$ and $\nexists{\op_k\in\hist}$ s.t. $\op_j
    \prec_{\hist} \op_k \prec_{\hist} \op_i$.
\end{definition}

\begin{definition} [Operation Execution Conditional]
    Given predicate $P$ and operation $\op$
    $P \hto \op$ denotes that $P$ is true only if $\op$ executes.
\end{definition}

\begin{definition} [Operation Execution Converse]
    Given predicate $P$ and operation $\op$
    $P \hgets \op$ denotes that $\op$ executes only if $P$ is true.
\end{definition}

Let there be any $\prog, \processes, \hist \models \exec{\prog}{\processes}$,
$\op_i \in \hist|\tr_i$.

\def\lacc#1#2{\ddot{\op}_{#1}^{#2}}

\begin{definition} \label{d:last-access} \label{d1}
    $\op_i$ is \last{} access on $\obj$ in $\tr_i$, denoted $\op_i = \lacc{i}{\obj}$ if both:
    \begin{enumerate}[a) ]
        \item $\op_i$ is \last{} read on $\obj$ in $\tr_i$ or 
              $\op_i$ is \last{} write on $\obj$ in $\tr_i$, and
        \item $\nexists{\op_i' \in \hist}$ s.t. $\op_i \prec_\hist \op_i'$ and
                $\op_i'$ is \last{} read on $\obj$ in $\tr_i$ or 
                $\op_i'$ is \last{} write on $\obj$ in $\tr_i$.
    \end{enumerate}
\end{definition}

Let there be any $\prog, \processes, \hist \models \exec{\prog}{\processes}$,
$\op_i \in \hist|\tr_i$, $\op_i = \begin{cases} \frop{i}{\obj}{\val}, \\
\fwop{i}{\obj}{\val}{\ok_i}. \end{cases}$

\begin{lemma} [Access Condition] \label{l:access-cond} \label{o5}
    $\lv{\obj} = \pv{i}{\obj} - 1 \hgets \op_i$.
\end{lemma}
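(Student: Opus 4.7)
My plan is to prove this directly by pointing to the \ppprocedure{access} procedure in \rfig{fig:sva-pseudocode}, since the lemma is essentially a syntactic observation about how \SVA{} gates read and write operations.

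First, I would note that by definition any read or write operation $\op_i$ on $\obj$ within transaction $\tr_i$ is executed by invoking \ppprocedure{access}($\tr_i$, $\obj$). The body of \ppprocedure{access} begins at \pplineref{sva}{call-access-cond} with a blocking wait on the condition $\pv{i}{\obj} - 1 = \lv{\obj}$. The actual read or write event (at \pplineref{sva}{call-access}) lies textually after this wait, and all intervening statements (\ppprocedure{checkpoint}, the rollback check) are local and cannot bypass the wait. Hence control reaches \pplineref{sva}{call-access} only after the waiting predicate has held at least once.

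Next, I would observe that the counters $\pv{i}{\obj}$ and $\lv{\obj}$ are not modified between the wait at \pplineref{sva}{call-access-cond} and the actual access at \pplineref{sva}{call-access}: $\pv{i}{\obj}$ is set exactly once, during \ppprocedure{start} at \pplineref{sva}{start-set-pv}, and nowhere else; $\lv{\obj}$ is modified only inside \ppprocedure{access} after the access event at \pplineref{sva}{early-release-lv} and inside \ppprocedure{dismiss} at \pplineref{sva}{dismiss-release}, both of which occur strictly after the wait point within any single invocation. Therefore the predicate $\pv{i}{\obj} - 1 = \lv{\obj}$ that held when the wait terminated is still in force when $\op_i$ actually executes.

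Combining these two observations yields the contrapositive-style implication $\hgets$: $\op_i$ executes only if $\lv{\obj} = \pv{i}{\obj} - 1$. I anticipate no real obstacle here beyond being careful about the atomicity granularity, namely that the predicate is checked in isolation with respect to writes by the same transaction (since \SVA{} requires $\pv{i}{\obj}$ to be stable post-\ppprocedure{start}), and that concurrent transactions only ever advance $\lv{\obj}$ further, never retract it, so once the wait succeeds no other transaction can invalidate the access condition before $\op_i$ completes.
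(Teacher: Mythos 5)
Your proof is correct and follows essentially the same route as the paper, which disposes of the lemma in one line by observing that the wait at \pplineref{sva}{call-access-cond} dominates the access at \pplineref{sva}{call-access}. Your additional care about the stability of the predicate between the wait and the access (monotonicity of $\lv{\obj}$ and immutability of $\pv{i}{\obj}$ after \ppprocedure{start}) is a sound elaboration of the same argument rather than a different approach.
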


\begin{proof}
    Condition at \pplineref{sva}{call-access-cond} dominates access
    at \pplineref{sva}{call-access}.
\end{proof}

\begin{lemma} [Abort Condition] \label{l:abort-cond} \label{o7}
    $\ltv{\obj} = \pv{i}{\obj} - 1 \hgets \res{i}{}{\ab_i}$.
\end{lemma}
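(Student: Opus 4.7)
The plan is to trace every way that the response event $\res{i}{}{\ab_i}$ can be produced by \SVA{} and show that in each case the abort condition $\ltv{\obj} = \pv{i}{\obj} - 1$ must already hold for every $\obj \in \accesses{\tr_i}$ at the moment of the response. This is an entirely syntactic argument over the pseudocode in \rfig{fig:sva-pseudocode}, analogous in style to the proof of \rlemma{l:access-cond}.

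First I would enumerate the sites at which an $\ab_i$ response can be emitted. Inspecting the pseudocode, $\ab_i$ is only ever returned as an outcome of procedure $\ppprocedure{abort}$ at \pplineref{sva}{rollback}. There are three ways that this procedure is entered: (a) directly by the programmer invoking $\tryA_i$, (b) from within $\ppprocedure{access}$ at \pplineref{sva}{call-rollback} when the recovery check at \pplineref{sva}{call-rollback-cond} fails, and (c) from within $\ppprocedure{commit}$ at \pplineref{sva}{commit-rollback} when the check at \pplineref{sva}{commit-rollback-cond} fails. In all three cases, the response $\res{i}{}{\ab_i}$ is only delivered after $\ppprocedure{abort}$ has finished executing for every $\obj \in \accesses{\tr_i}$.

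Second, I would observe that the body of $\ppprocedure{abort}$ begins, for each $\obj \in \accesses{\tr_i}$, with the blocking statement ``wait until $\pv{i}{\obj} - 1 = \ltv{\obj}$'' at \pplineref{sva}{abort-dismiss-access-cond}. Thus the procedure cannot progress past this wait for variable $\obj$ unless the equality $\ltv{\obj} = \pv{i}{\obj} - 1$ holds. Since the response $\res{i}{}{\ab_i}$ occurs strictly after all per-variable branches of $\ppprocedure{abort}$ have completed their work, the condition $\ltv{\obj} = \pv{i}{\obj} - 1$ must have been true at some point for every $\obj \in \accesses{\tr_i}$ before the response is issued, which establishes the converse direction $\hgets$ required by the lemma.

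The proof is almost routine, the main thing to be careful about is the semantics of ``$\hgets$'': it asserts that the predicate was true at least once along the execution that produced the response, not necessarily that it still holds when $\ab_i$ is observed (indeed, subsequent actions in $\ppprocedure{abort}$ overwrite $\ltv{\obj}$ at \pplineref{sva}{restore-ltv-set}). So the key step is just to argue that the wait at \pplineref{sva}{abort-dismiss-access-cond} lies on every control-flow path from the invocation of $\ppprocedure{abort}$ to the emission of $\res{i}{}{\ab_i}$, which is immediate from inspection. No subtlety arising from concurrency with other transactions enters here, because the lemma only constrains $\tr_i$'s own local history.
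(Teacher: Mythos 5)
Your proposal is correct and follows essentially the same argument as the paper: the wait at \pplineref{sva}{abort-dismiss-access-cond} dominates the rest of procedure $\ppprocedure{abort}$ for each variable, so the condition $\ltv{\obj} = \pv{i}{\obj} - 1$ must have held before the response $\res{i}{}{\ab_i}$ is delivered. Your additional enumeration of the three entry points into $\ppprocedure{abort}$ and your remark on the temporal semantics of $\hgets$ are sound elaborations of the paper's terser one-line justification, not a different route.
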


\begin{proof}
    Access condition at \pplineref{sva}{abort-dismiss-access-cond} dominates
    dismiss at \pplineref{sva}{abort-dismiss} in procedure abort %
    for each
    variable. Hence, all variables must pass
    \pplineref{sva}{abort-dismiss-access-cond} before abort %
    concludes.
\end{proof}

\begin{lemma} [Commit Condition] \label{l:commit-cond} \label{o8}
    $\ltv{\obj} = \pv{i}{\obj} - 1 \hgets \res{i}{}{\co_i}$.
\end{lemma}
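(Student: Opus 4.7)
The plan is to mirror the proof of \rlemma{l:abort-cond} almost verbatim, since the commit procedure has exactly the same structural guard as the abort procedure, just at a different line. I would first point to \pplineref{sva}{commit-dismiss-access-cond} inside the $\ppprocedure{commit}$ procedure: before performing $\ppprocedure{dismiss}(\tr_i,\obj)$ and any subsequent cleanup that leads to the response $\res{i}{}{\co_i}$, the transaction must pass the wait condition $\pv{i}{\obj}-1 = \ltv{\obj}$ for each $\obj \in \accesses{\tr_i}$. Since this loop runs for every variable in the access set (in parallel, but each branch must individually pass its wait), control cannot exit $\ppprocedure{commit}$ until the condition has been witnessed for every such $\obj$.

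Next I would argue that $\res{i}{}{\co_i}$ is only ever issued by the termination of $\ppprocedure{commit}$: inspecting the pseudocode, there is no other path in \SVA{} that produces $\co_i$. Moreover, an abort path (\pplineref{sva}{commit-rollback}) branches off before $\co_i$ would be returned, so any run that actually returns $\co_i$ must have completed the guarded wait at \pplineref{sva}{commit-dismiss-access-cond} for each $\obj \in \accesses{\tr_i}$. By the semantics of ``$\hgets$'' introduced in the definition of the operation execution converse, this is precisely the statement $\ltv{\obj} = \pv{i}{\obj} - 1 \hgets \res{i}{}{\co_i}$.

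The main (minor) obstacle is simply being careful about the parallel \texttt{for} loop: I need to note that ``in parallel'' does not weaken the guard, because each parallel branch individually blocks on its own $\ltv{\obj}$ wait, and the procedure as a whole only terminates when all branches terminate. Once this is observed, the lemma follows immediately, and the proof is essentially a one-line appeal to the pseudocode, exactly analogous to \rlemma{l:abort-cond}.
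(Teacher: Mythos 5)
Your proof is correct and matches the paper's approach: the paper proves this lemma simply ``by analogy to Lemma~\ref{l:abort-cond}'', whose proof is exactly your argument that the wait at \pplineref{sva}{commit-dismiss-access-cond} dominates the rest of the commit procedure for each variable in the access set. Your extra remarks about the parallel loop and the abort branch are fine but not needed beyond what the paper's one-line proof already implies.
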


\begin{proof}
    By analogy to \rlemma{l:abort-cond}.
\end{proof}

\begin{lemma} [Early Release] \label{l:early-release} \label{o9}
    If $\op_i = \lacc{i}{\obj}$ then $\lv{\obj} = \pv{i}{\obj} \hto \op_i$.
\end{lemma}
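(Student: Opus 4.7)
The plan is to prove Lemma \ref{l:early-release} by exhaustive inspection of the pseudocode in \rfig{fig:sva-pseudocode}, locating every point at which $\lv{\obj}$ can be assigned the value $\pv{i}{\obj}$ and showing that each such assignment is preceded by the execution of $\op_i=\lacc{i}{\obj}$. First I would observe the base case: initially $\lv{\obj}=0$ and $\pv{i}{\obj}\geq 1$ (since the initialization at \pplineref{sva}{start-set-gv}--\pplineref{sva}{start-set-pv} assigns $\pv{i}{\obj}\gets\gv{\obj}+1$ with $\gv{\obj}\geq 0$), so the equality cannot hold before some assignment to $\lv{\obj}$ takes place. By inspection, only two program points write to $\lv{\obj}$ with $\pv{i}{\obj}$ on the right-hand side: line \pplineref{sva}{early-release-lv} inside $\ppprocedure{access}$, and line \pplineref{sva}{dismiss-release} inside $\ppprocedure{dismiss}$ (which is invoked only from $\ppprocedure{commit}$ at \pplineref{sva}{commit-dismiss-access-cond}--\pplineref{sva}{commit-clear} and from $\ppprocedure{abort}$ at \pplineref{sva}{abort-dismiss}).

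For the early-release branch, I would argue as follows. The assignment at \pplineref{sva}{early-release-lv} is guarded by the condition $\cc{i}{\obj}=\supr{i}{\obj}$ at \pplineref{sva}{early-release-cond}, and $\cc{i}{\obj}$ is touched only at the single increment \pplineref{sva}{call-inc-cc} which fires immediately after a successful read or write at \pplineref{sva}{call-access}. Hence at the moment the guard holds, exactly $\supr{i}{\obj}$ accesses of $\tr_i$ on $\obj$ have been performed, the most recent being the one just completed. Since $\supr{i}{\obj}$ is by definition the upper bound on accesses by $\tr_i$ on $\obj$, no further read or write on $\obj$ can appear in $\tr_i$ in any extension of the history (any such subsequent access would cause an abort before completing, hence cannot be a completed operation in $\hist'|\tr_i$). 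Therefore the access performed at \pplineref{sva}{call-access} is the \last{} access on $\obj$ in $\tr_i$, i.e.\ $\lacc{i}{\obj}=\op_i$, and it executed before the assignment to $\lv{\obj}$.

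For the dismiss branch I would reason using the lifecycle of a transaction. The $\ppprocedure{dismiss}$ procedure is only reached from $\ppprocedure{commit}$ or $\ppprocedure{abort}$, both of which terminate $\tr_i$ by producing $\res{i}{}{\co_i}$ or $\res{i}{}{\ab_i}$ respectively. By well-formedness of histories, no invocation event of $\tr_i$ can follow these responses, so every access of $\tr_i$ on $\obj$ that appears in the history $\hist$, including $\op_i=\lacc{i}{\obj}$ (which is assumed to lie in $\hist|\tr_i$), has already completed before the write at \pplineref{sva}{dismiss-release} takes effect. In both branches the execution of $\op_i$ precedes the establishment of the equality $\lv{\obj}=\pv{i}{\obj}$, which is exactly the conditional $\lv{\obj}=\pv{i}{\obj}\hto\op_i$ from the lemma statement.

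The bulk of the argument is routine code-inspection; the only mildly delicate step is the claim in the early-release branch that $\cc{i}{\obj}$ exactly tracks the number of completed accesses on $\obj$ by $\tr_i$, and that reaching $\supr{i}{\obj}$ forbids any later access in extensions. I would record this once as a small auxiliary invariant (single increment site, single guard site, initial value $0$, upper bound enforced by the abort at \pplineref{sva}{call-rollback}) rather than re-derive it inline. With that invariant in hand, both branches collapse to a one-line implication, so no further grind is required.
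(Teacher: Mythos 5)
Your proposal is correct and follows essentially the same route as the paper's proof: identify the only two assignment sites for $\lv{\obj}$ (\pplineref{sva}{early-release-lv} guarded by \pplineref{sva}{early-release-cond}, and \pplineref{sva}{dismiss-release} reached only via commit/abort), and observe that the first fires exactly at the \last{} access while the second fires only after the transaction has finished all its accesses. Your version merely adds detail the paper leaves implicit (the initial-value base case, the single-increment invariant for $\cc{i}{\obj}$, and the abort path into $\ppprocedure{dismiss}$), so no substantive difference.
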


\begin{proof}
    $\lv{\obj}$ can be set by $\tr_i$ at \pplineref{sva}{early-release-lv} and
    at \pplineref{sva}{dismiss-release}. The former is set during the last
    access on some $\obj$ in $\tr_j$ (\pplineref{sva}{early-release-cond}
    dominates \pplineref{sva}{early-release-lv}). The latter is set during
    commit, which means that if any \last{} access was present, it was executed
    prior to commit. 
\end{proof}

Let $r_i = \begin{cases} 
                    \res{i}{}{\ab_i}, \\
                    \res{i}{}{\co_i}. 
               \end{cases}$

\begin{lemma} [Release] \label{l:release} \label{o10}
    If $\nexists \op_i'\in\hist|\tr_i$ s.t. $\op_i'= \lacc{i}{\obj}$ 
    and $\obj \in \aset{\tr_i}$
    then $\lv{\obj} = \pv{i}{\obj} \hto r_i$.
\end{lemma}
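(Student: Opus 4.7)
The plan is to inspect every line of the \SVA{} pseudocode at which $\lv{\obj}$ can be assigned and argue by cases that the only assignment consistent with the hypothesis is the one performed inside $\ppprocedure{dismiss}$ during $\ppprocedure{commit}$ or $\ppprocedure{abort}$. A scan of \rfig{fig:sva-pseudocode} shows there are exactly two writes to $\lv{\obj}$ issued by $\tr_i$: one at \pplineref{sva}{early-release-lv} inside $\ppprocedure{access}$, guarded by the supremum condition at \pplineref{sva}{early-release-cond}; the other at \pplineref{sva}{dismiss-release} inside $\ppprocedure{dismiss}$. I will show the first is ruled out by the hypothesis, so the assignment $\lv{\obj}=\pv{i}{\obj}$ can only originate from the second, which in turn can only be reached from $\ppprocedure{commit}$ or $\ppprocedure{abort}$, i.e. from issuing $r_i$.

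First I would handle the $\ppprocedure{dismiss}$ call sites. The body of $\ppprocedure{dismiss}$ is invoked by $\tr_i$ only from $\ppprocedure{abort}$ at \pplineref{sva}{abort-dismiss} and from $\ppprocedure{commit}$ at the inner loop near \pplineref{sva}{commit-dismiss-access-cond}. Both procedures are executed exactly as part of the termination sequence that yields either $\ab_i$ or $\co_i$; by inspection of the control flow no event of $\tr_i$ precedes the release at \pplineref{sva}{dismiss-release} except those lying on the path from the respective entry of $\ppprocedure{abort}$ or $\ppprocedure{commit}$. Hence any setting of $\lv{\obj}$ to $\pv{i}{\obj}$ at \pplineref{sva}{dismiss-release} is part of the issue of $r_i$, which gives the desired implication $\lv{\obj}=\pv{i}{\obj}\hto r_i$ on that branch.

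Next I would dispose of the $\ppprocedure{access}$ branch. The assignment at \pplineref{sva}{early-release-lv} fires only if the guard $\cc{i}{\obj}=\supr{i}{\obj}$ at \pplineref{sva}{early-release-cond} holds during an access procedure on $\obj$ by $\tr_i$. The key sub-argument is that \emph{if} this assignment ever fires, then the access currently being executed is the \last{} access on $\obj$ in $\tr_i$ in the sense of \rdef{d:last-access}. This is because, once $\lv{\obj}$ is set to $\pv{i}{\obj}$ at this line, the access condition at \pplineref{sva}{call-access-cond} (by \rlemma{l:access-cond}) and the version-ordering of \robs{e:version-order} together ensure that no further $\ppprocedure{access}$ call by $\tr_i$ on $\obj$ can proceed (its private version will no longer be $\lv{\obj}+1$), so no later read or write on $\obj$ can occur in $\hist|\tr_i$. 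Combined with the fact that this access \emph{did} occur, it is a last read or last write on $\obj$ in $\tr_i$ and no subsequent last access follows, so it witnesses $\lacc{i}{\obj}$, contradicting the hypothesis. Therefore the guard at \pplineref{sva}{early-release-cond} is never satisfied, and the assignment at \pplineref{sva}{early-release-lv} is never performed under the hypothesis.

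The main obstacle is the last sub-argument, namely matching the \emph{operational} event ``$\cc{i}{\obj}$ reaches $\supr{i}{\obj}$'' with the \emph{definitional} notion of \last{} access, which is phrased in \rdef{def:last-write-inv} and \rdef{d:last-access} in terms of \emph{every} possible continuation of the history. I would discharge it by arguing that after \pplineref{sva}{early-release-lv} no extension of $\hist$ produced by $\exec{\prog}{\processes}$ can append another invocation of a read or write on $\obj$ by $\tr_i$: any such extension would have to pass \pplineref{sva}{call-access-cond} once more, which is impossible since $\lv{\obj}=\pv{i}{\obj}$ and $\lv{\obj}$ is only ever non-decreasing in $\hist$ for the remainder of $\tr_i$'s lifetime. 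Combining the two cases yields the claim.
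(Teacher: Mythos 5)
Your proof is correct and follows essentially the same route as the paper's: enumerate the two sites at which $\tr_i$ assigns $\lv{\obj}$, rule out the one at \pplineref{sva}{early-release-lv} under the hypothesis, and observe that the remaining one at \pplineref{sva}{dismiss-release} is reached only from $\ppprocedure{commit}$ or $\ppprocedure{abort}$, i.e.\ only as part of issuing $r_i$. The only divergence is in how the early-release branch is excluded --- the paper simply asserts the contrapositive (no \last{} access on $\obj$ means the guard $\cc{i}{\obj}=\supr{i}{\obj}$ is never satisfied, appealing to the a priori semantics of the supremum), whereas you argue operationally that a firing of that guard would make the current access a \last{} access because no later access could pass the access condition; note that this does not perfectly match \rdef{d:last-access}, which quantifies over \emph{invocations} in all program extensions rather than over operations the runtime allows to complete, though the paper's own one-line justification is no more rigorous on this point.
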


\begin{proof}
    If $\op_i'$ is not \last{} access then \pplineref{sva}{early-release-cond}
    will not be passed, so only assignment of $\lv{\obj}$ is in
    \pplineref{sva}{dismiss-release} which execute only during commit or
    abort.
\end{proof}

\begin{lemma} [Terminal Release] \label{l:trelease} \label{o11}
    If $\obj \in \aset{\tr_i}$
    then $\lv{\obj} = \pv{i}{\obj} \hto r_i$.
\end{lemma}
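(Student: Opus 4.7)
The plan is to prove this lemma by a case split on whether $\hist|\tr_i$ contains a \last{} access operation execution on $\obj$, i.e., on the existence of some $\op_i' \in \hist|\tr_i$ with $\op_i' = \lacc{i}{\obj}$. This is the natural split because the two preceding lemmas, \rlemma{l:early-release} (Early Release) and \rlemma{l:release} (Release), together already cover both cases; Terminal Release is essentially their union.

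In the first case, a \last{} access $\op_i'$ on $\obj$ exists in $\hist|\tr_i$. Then \rlemma{l:early-release} gives $\lv{\obj} = \pv{i}{\obj} \hto \op_i'$, because the early-release branch at \pplineref{sva}{early-release-cond}--\pplineref{sva}{early-release-lv} fires exactly at the last access and sets $\lv{\obj} \gets \pv{i}{\obj}$. Since $\op_i'$ is an access of $\tr_i$ and $r_i$ is the terminal response of $\tr_i$, well-formedness of $\hist|\tr_i$ forces $\op_i'$ to precede $r_i$ in $\hist|\tr_i$; therefore the condition $\lv{\obj} = \pv{i}{\obj}$ is established within $\tr_i$'s execution no later than at $\op_i'$ and hence certainly by the time $r_i$ executes.

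In the second case, no such $\op_i' = \lacc{i}{\obj}$ exists in $\hist|\tr_i$, while still $\obj \in \aset{\tr_i}$. Here \rlemma{l:release} applies directly and gives $\lv{\obj} = \pv{i}{\obj} \hto r_i$: the only remaining location at which $\tr_i$ itself can assign $\pv{i}{\obj}$ to $\lv{\obj}$ is \pplineref{sva}{dismiss-release} inside $\ppprocedure{dismiss}$, which is invoked only from $\ppprocedure{commit}$ and $\ppprocedure{abort}$, i.e., as part of $r_i$.

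I do not expect any substantial obstacle; the main content is simply to observe that the two prior lemmas cover the disjoint cases and that in the first case the guarantee attached to $\op_i'$ can be promoted to one attached to $r_i$ using the strict intra-transaction order imposed by well-formedness. No reasoning about counter values set by other transactions is needed, since both prior lemmas already isolate the places in $\tr_i$'s own code where $\lv{\obj}$ can reach $\pv{i}{\obj}$.
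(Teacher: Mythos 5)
Your proof addresses the wrong counter and, in its first case, the wrong direction of the conditional. The lemma is named \emph{Terminal} Release and, as the paper's own one-line proof and its later use in \rlemma{o21} make clear, it is really a statement about the local \emph{terminal} version counter $\ltv{\obj}$ (the occurrence of $\lv{\obj}$ in the printed statement is a typo): the intended claim is $\ltv{\obj} = \pv{i}{\obj} \hto r_i$, and the intended proof is simply that $\ltv{\obj}$ is assigned only at \pplineref{sva}{restore-ltv-set} and \pplineref{sva}{commit-ltv-set}, which lie inside $\ppprocedure{abort}$ and $\ppprocedure{commit}$ respectively, so $\ltv{\obj}$ can only reach $\pv{i}{\obj}$ as part of $r_i$. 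Your case split on the existence of a \last{} access plays no role in that argument.

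More importantly, even reading the statement literally as being about $\lv{\obj}$, your first case proves the converse of what is claimed. $P \hto \op$ means that $P$ holds \emph{only if} $\op$ has executed; your argument that the condition ``is established no later than at $\op_i'$ and hence certainly by the time $r_i$ executes'' establishes $\lv{\obj} = \pv{i}{\obj} \hgets r_i$, i.e., that the condition is a precondition of $r_i$ --- a different (and here unhelpful) statement. Indeed, when a \last{} access exists, \rlemma{l:early-release} tells you that $\lv{\obj}$ is set to $\pv{i}{\obj}$ already at that access, while $\tr_i$ may still be live (and $r_i$ may never occur in a partial history); so for $\lv{\obj}$ the claim $\lv{\obj} = \pv{i}{\obj} \hto r_i$ is actually false in that case. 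This is precisely why the lemma must concern $\ltv{\obj}$, which is touched only at commit or abort time.
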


\begin{proof}
    $\ltv{\obj}$ can be set only in \pplineref{sva}{restore-ltv-set} or
    \pplineref{sva}{commit-ltv-set}, which are part of abort and commit,
    respectively.
\end{proof}

Let there be any $\hist$, $\tr_i \in \hist$, $\tr_j \in \hist$, $\op_i \in
\hist|\tr_i$, $\op_i = \frop{i}{\obj}{\valu}$, $\op_j \in \hist|\tr_j$, $\op_j =
\fwop{j}{\obj}{\val}{\ok_j}$.

\begin{lemma} [No Buffering] \label{l:no-buffer} \label{o1}
    If $\op_j \dprec_{\hist|\obj} \op_i$ and not $\op_j \prec_{\hist} \res{j}{}{\ab_j}
    \prec_{\hist} \op_i$ then $u=v$.
\end{lemma}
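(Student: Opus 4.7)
The plan is to argue directly from the SVA pseudocode, treating $\op_j$ and $\op_i$ as effecting their atomic changes at the moment line \pplineref{sva}{call-access} executes. First I would observe that, since SVA performs no per-transaction buffering, the response value $u$ of $\op_i$ equals the value of $\obj$ at the instant \pplineref{sva}{call-access} is reached for $\op_i$, and $\op_j$ (having response $\ok_j$, not $\ab_j$) sets the value of $\obj$ to $v$ at its own \pplineref{sva}{call-access}. Thus it suffices to show that the value of $\obj$ is not altered between these two events.

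Next I would enumerate the ways the state of $\obj$ can change. By inspection of the pseudocode, the only lines that assign to $\obj$ (as opposed to its counters) are the write at \pplineref{sva}{call-access} inside $\ppprocedure{access}$, and the revert at \pplineref{sva}{restore-revert} inside $\ppprocedure{restore}$. Case (i), an intervening read/write, is ruled out immediately by the hypothesis $\op_j \dprec_{\hist|\obj} \op_i$, since any such complete operation would appear strictly between $\op_j$ and $\op_i$ in $\hist|\obj$. For case (ii), I would distinguish the transaction executing $\ppprocedure{restore}$: if it is $\tr_j$ itself, the revert occurs inside $\ppprocedure{abort}$ of $\tr_j$, and so $\op_j \prec_\hist \res{j}{}{\ab_j} \prec_\hist \op_i$ would follow, contradicting the second hypothesis.

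The remaining case is a third transaction $\tr_k$ reverting $\obj$ between $\op_j$ and $\op_i$. Here I would use the restore guard at \pplineref{sva}{restore-cond}, which forces $\cc{k}{\obj} \neq 0$, so $\tr_k$ has executed at least one complete access on $\obj$ before entering $\ppprocedure{restore}$. By \robs{e:version-order} the accesses on $\obj$ are totally ordered by $\prec_\obj$, and by Lemmas~\ref{o5}, \ref{o7}, \ref{o8} the access/abort/commit conditions force termination on $\obj$ to respect that same order via monotonicity of $\ltv{\obj}$. A small case split on $\pv{k}{\obj}$ relative to $\pv{j}{\obj}$ then rules out $\tr_k$: if $\tr_k \prec_\obj \tr_j$ then $\tr_k$'s access and $\ppprocedure{restore}$ step both precede $\op_j$; if $\tr_j \prec_\obj \tr_k$ and $\tr_k$ has accessed $\obj$, that access lies in $\hist|\obj$ between $\op_j$ and any later action of $\tr_k$, contradicting $\op_j \dprec_{\hist|\obj} \op_i$.

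The main obstacle is precisely this last case analysis: one has to argue carefully that monotonicity of $\ltv{\obj}$ together with the access condition serializes terminations on $\obj$ tightly enough that no foreign transaction can sneak a \pplineref{sva}{restore-revert} into the window between $\op_j$ and $\op_i$ without also placing an access event between them. Once that is nailed down, the conclusion $u=v$ follows because $\obj$ retains the value $v$ written by $\op_j$ all the way up to $\op_i$.
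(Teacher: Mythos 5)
The paper itself states this lemma without proof (here and, in weaker form, as \robs{e:no-buffer} in the main text), so there is no official argument to compare against; your decomposition is the right kind of attempt, and cases (i) and (ii) are sound. The gap sits exactly where you flag the ``main obstacle'': for a predecessor $\tr_k \prec_\obj \tr_j$ you assert that $\tr_k$'s \ppprocedure{restore} step precedes $\op_j$, but nothing in the algorithm enforces this. Early release is precisely the mechanism by which $\tr_k$ can set $\lv{\obj}$ at \pplineref{sva}{early-release-lv} while still live, which is what lets $\tr_j$ perform $\op_j$ in the first place; the guard on \ppprocedure{abort} at \pplineref{sva}{abort-dismiss-access-cond} only waits for $\tr_k$'s \emph{predecessors} on $\obj$ to terminate, so $\tr_k$ may invoke $\tryA_k$ and reach \pplineref{sva}{restore-revert} at any later real-time point, in particular strictly between $\op_j$ and $\op_i$. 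Monotonicity of $\ltv{\obj}$ serializes terminations in version order, but it places no constraint on where $\tr_k$'s termination falls relative to \emph{accesses} by its successors, so the ordering argument you invoke does not close this case.

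Concretely: let $\tr_k$ (version $1$ on $\obj$) write $\obj$ and release it early; let $\tr_j$ (version $2$) execute $\op_j$ as its \last{} access, so that $\cv{\obj}$ and $\lv{\obj}$ advance to $2$; then let $\tr_k$ abort. The guard at \pplineref{sva}{restore-cond} is satisfied ($\cc{k}{\obj}\neq 0$ and $\rv{k}{\obj}<\cv{\obj}$), so $\obj$ is reverted to its pre-$\tr_k$ value and $\cv{\obj}$ is lowered. Note that $\op_i$ is necessarily $\tr_i$'s first access to $\obj$ (an earlier one would lie between $\op_j$ and $\op_i$ in $\hist|\obj$), so \ppprocedure{checkpoint} sets $\rv{i}{\obj}\gets\cv{\obj}$ and the forced-abort test at \pplineref{sva}{call-rollback-cond} does not fire; the read then returns the reverted value rather than $\val$, even though $\op_j \dprec_{\hist|\obj} \op_i$ holds and $\tr_j$ has not aborted. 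The revert is not an operation execution on $\obj$, so it does not disturb direct precedence in $\hist|\obj$. Any proof of the lemma as stated must therefore confront this cascading-abort window directly rather than rule it out by version-order reasoning; as sketched, your argument only yields the weaker \robs{e:no-buffer} (the read returns a value written by \emph{some} earlier write), not equality with the value of the directly preceding write.
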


\begin{lemma} [Revert On Abort] \label{o2} \label{l:revert-on-abort}
    If $\op_j \dprec_{\hist|\obj} \op_i$ and $\op_j \prec_{\hist} \res{j}{}{\ab_j}
    \prec_{\hist} \op_i$ then $u\neq v$. 
\end{lemma}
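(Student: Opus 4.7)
The plan is to follow the \SVA{} pseudocode from $\op_j$ through $\tr_j$'s \ppprocedure{abort} to $\op_i$ and show that the in-memory value of $\obj$ is reverted between the two operations. Since $\op_j=\fwop{j}{\obj}{\val}{\ok_j}$ was performed at \pplineref{sva}{call-access}, $\tr_j$ had previously run \ppprocedure{checkpoint} at its first access to $\obj$, so $\stored{j}{\obj}$ holds the value of $\obj$ that existed just before $\tr_j$ touched it and $\rv{j}{\obj}$ was set by \pplineref{sva}{checkpoint-rv-set} to the corresponding value of $\cv{\obj}$; after $\op_j$, $\cc{j}{\obj}\geq 1$. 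The hypothesis $\op_j\prec_\hist \res{j}{}{\ab_j}\prec_\hist \op_i$ implies that $\tr_j$ enters \ppprocedure{abort} between $\op_j$ and $\op_i$, which in turn invokes \ppprocedure{restore}$(\tr_j,\obj)$ because $\obj\in\accesses{\tr_j}$.

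The central step is to check that the guard of \ppprocedure{restore} at \pplineref{sva}{restore-cond} holds, so that the revert at \pplineref{sva}{restore-revert} actually fires. Its first conjunct $\cc{j}{\obj}\neq 0$ is immediate. For the second conjunct $\rv{j}{\obj}<\cv{\obj}$, I plan to combine the access lemma \rlemma{l:access-cond} with the release lemmas \rlemma{l:early-release}, \rlemma{l:release}, and \rlemma{l:trelease}: for $\op_i$ ever to execute, $\lv{\obj}$ must reach $\pv{i}{\obj}-1$, which in turn forces $\tr_j$'s private version $\pv{j}{\obj}$ to appear on $\lv{\obj}$, and this can happen only via the early-release branch at \pplineref{sva}{early-release-lv} or the dismiss branch at \pplineref{sva}{dismiss-release}. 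Both branches are aligned with updates that push $\cv{\obj}$ strictly above $\rv{j}{\obj}$: the former directly at \pplineref{sva}{early-release-cv}, and the latter via an earlier early-release step that the immediate successorship of $\op_j$ on $\obj$ forces, using the exclusivity of $\lv{\obj}$ established in \rlemma{l:access-cond}. Consequently $\rv{j}{\obj}<\cv{\obj}$ when \ppprocedure{restore} starts, the revert at \pplineref{sva}{restore-revert} copies $\stored{j}{\obj}$ back into $\obj$, and $\cv{\obj}$ drops to $\rv{j}{\obj}$ at \pplineref{sva}{restore-cv-set}.

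Once the revert is in place, the conclusion is short. By the unique-writes convention maintained throughout the paper together with Assumption~\ref{obs:within-domain}, the value sitting in $\stored{j}{\obj}$ differs from $\val$. The premise $\op_j\dprec_{\hist|\obj}\op_i$ rules out any other complete operation execution on $\obj$ between $\op_j$ and $\op_i$ in $\hist$, while the ordered structure of \ppprocedure{abort} (\pplineref{sva}{abort-dismiss-access-cond}--\pplineref{sva}{restore-ltv-set}) together with $\tr_i$'s wait at \pplineref{sva}{call-access-cond} ensures that $\tr_i$ executes \pplineref{sva}{call-access} only after $\tr_j$'s \ppprocedure{restore} has completed. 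The in-memory content of $\obj$ observed by $\tr_i$ is therefore the reverted one, so $\valu\neq\val$.

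The hard part will be the detailed bookkeeping of $\cv{\obj}$ in the second paragraph, specifically ruling out the pathological schedule where $\tr_j$ writes $\val$ to $\obj$, aborts without ever executing the early-release branch of \ppprocedure{access}, and the \ppprocedure{dismiss} inside \ppprocedure{abort} still allows $\op_i$ to proceed with $\cv{\obj}=\rv{j}{\obj}$. The argument must combine \robs{e:read-released} with the single-writer invariant enforced on $\obj$ by $\lv{\obj}$: between $\tr_j$'s first touch of $\obj$ and the moment \ppprocedure{restore} is entered, only $\tr_j$ can modify $\cv{\obj}$, so any advance of $\lv{\obj}$ that ultimately lets $\tr_i$ in must have been preceded by an advance of $\cv{\obj}$ by $\tr_j$ itself, reducing the analysis to the early-release case.
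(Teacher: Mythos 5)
Your overall strategy --- trace $\tr_j$'s abort through $\ppprocedure{restore}$ and verify that the guard at \pplineref{sva}{restore-cond} actually fires --- is more careful than the paper's own proof, which simply asserts that \pplineref{sva}{restore-revert} restores $\obj$ to its pre-$\tr_j$ value and concludes $\valu\neq\val$ from unique writes. But the extra care exposes a step your argument does not close, and the repair you sketch goes in the wrong direction. The claim in your second paragraph that both ways of advancing $\lv{\obj}$ to $\pv{j}{\obj}$ are ``aligned with updates that push $\cv{\obj}$ strictly above $\rv{j}{\obj}$'' fails for the \pplineref{sva}{dismiss-release} branch. In the very schedule you name --- $\tr_j$ writes $\val$ to $\obj$, never reaches $\supr{j}{\obj}$, then aborts --- the first branch of $\ppprocedure{dismiss}$ is guarded by $\cc{j}{\obj}=0$ and so does not fire; $\ppprocedure{dismiss}$ then advances $\lv{\obj}$ at \pplineref{sva}{dismiss-release} while leaving $\cv{\obj}$ untouched. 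Since $\rv{j}{\obj}$ was set equal to $\cv{\obj}$ at \pplineref{sva}{checkpoint-rv-set}, and since the exclusivity you invoke shows that no other transaction could have modified $\cv{\obj}$ in the interim, $\ppprocedure{restore}$ is entered with $\rv{j}{\obj}=\cv{\obj}$ and the guard at \pplineref{sva}{restore-cond} is false. The single-writer invariant therefore proves the opposite of what you need: it shows that $\cv{\obj}$ \emph{cannot} have advanced, so this case does not reduce to the early-release case.

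Consequently the ``hard part'' you defer is not residual bookkeeping but the entire content of the lemma for a writer that aborts without having released $\obj$, and the plan you give for it would not succeed. To make the proof go through you must either (i) treat the revert at \pplineref{sva}{restore-revert} as happening whenever $\cc{j}{\obj}\neq 0$ --- which is what the paper's two-line proof implicitly does, in effect reading the second conjunct of \pplineref{sva}{restore-cond} as vacuous for transactions that actually accessed $\obj$ --- or (ii) restrict attention to the case where $\tr_j$ released $\obj$ early before aborting, where your $\cv{\obj}$ accounting does work via \pplineref{sva}{early-release-cv}. As written, your argument establishes the conclusion only in case (ii). The rest of your proof --- uniqueness of the value held in $\stored{j}{\obj}$, the use of direct precedence in $\hist|\obj$ to exclude intervening operations, and the observation that $\ppprocedure{restore}$ completes before the response $\res{j}{}{\ab_j}$ and hence before $\op_i$ --- is sound.
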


\begin{proof}
    If abort %
    is executed then the restore procedure is executed for all
    $\obj \in \aset{\tr_i}$. Thus, \pplineref{sva}{restore-revert} restores
    $\obj$ to value $\val'$ which is acquired before the any operation on
    $\obj$ is executed by $\tr_i$, hence $\val'\neq\val$, so $\valu\neq\val$.
\end{proof}

Let $\hist|\init$ be a subhistory of $\hist$ that for each $\tr_j \in \hist$
contains only the operation $\init_j$.

\begin{lemma} [Consecutive Versions] \label{l:consecutive-versions} \label{o3}
    If $\obj \in \aset{\tr_i} \cap \aset{\tr_j}$
    and $\inv{i}{}{\init_i} \dprec_{\hist|\init} \inv{i}{}{\init_j}$ 
    then $\pv{i}{\obj} -1 = \pv{j}{\obj}$.
\end{lemma}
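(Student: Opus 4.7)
The plan is to exploit the mutual exclusion provided by the per-variable lock $\glocks(\obj)$ acquired at line \ref{sva:lock-x} of the start procedure. Since $\obj \in \aset{\tr_i} \cap \aset{\tr_j}$, each of $\tr_i$ and $\tr_j$ acquires $\glocks(\obj)$, atomically executes lines \ref{sva:start-set-gv}--\ref{sva:start-set-pv} (incrementing $\gv{\obj}$ and copying its new value into the private counter $\pv{\cdot}{\obj}$), and releases the lock at line \ref{sva:unlock-x}. Because only transactions with $\obj$ in their access set ever read or modify $\gv{\obj}$, and $\glocks(\obj)$ serializes all such accesses, the $\pv{\cdot}{\obj}$ values handed out to transactions accessing $\obj$ form a strictly increasing sequence of consecutive naturals indexed by the order in which those transactions execute line \ref{sva:start-set-gv}.

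First I would formalize this by defining a total order $\ll_{\obj}$ on $\{\tr_k \in \transactions_{\hist} : \obj \in \aset{\tr_k}\}$, induced by the order in which each transaction performs its increment of $\gv{\obj}$ under $\glocks(\obj)$. A straightforward induction on $\ll_{\obj}$ (using the initial value $\gv{\obj} = 0$) gives $\pv{k}{\obj} = m$ whenever $\tr_k$ is the $m$-th transaction in $\ll_{\obj}$. It then suffices to show that $\tr_j$ is the immediate $\ll_{\obj}$-successor of $\tr_i$, which yields $\pv{j}{\obj} = \pv{i}{\obj} + 1$ (what the conclusion states, modulo an apparent typo exchanging $\pv{i}{\obj}$ and $\pv{j}{\obj}$).

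For the adjacency step, suppose for contradiction there is a third $\tr_k$ with $\obj \in \aset{\tr_k}$ and $\tr_i \ll_{\obj} \tr_k \ll_{\obj} \tr_j$. Because $\tr_k$ acquires $\glocks(\obj)$ strictly after $\tr_i$ releases it and strictly before $\tr_j$ acquires it, and because the start procedure holds $\glocks(\obj)$ from line \ref{sva:lock-x} until line \ref{sva:unlock-x}, the response $\res{k}{}{\ok_k}$ cannot be issued until after $\res{i}{}{\ok_i}$ (the latter waits for $\tr_i$ to release its locks, which happens before $\tr_k$ takes $\glocks(\obj)$, which in turn precedes $\res{k}{}{\ok_k}$), and must also precede $\inv{j}{}{\init_j}$ (since $\tr_j$'s lock acquisition must wait for $\tr_k$'s release, and the invocation $\inv{j}{}{\init_j}$ precedes all lock acquisitions of $\tr_j$). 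Hence, viewed as an operation, $\init_k$ lies strictly between $\init_i$ and $\init_j$ in $\hist|\init$, contradicting the assumption $\init_i \dprec_{\hist|\init} \init_j$.

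The main obstacle is precisely this adjacency argument: $\inv{k}{}{\init_k}$ is not constrained to lie inside the interval $[\inv{i}{}{\init_i}, \inv{j}{}{\init_j}]$ — it could in principle predate $\inv{i}{}{\init_i}$, with $\tr_k$ blocked on some earlier lock in the $\prec_{\glocks}$-sorted acquisition of line \ref{sva:start-lock} — so we really do need to lift the argument from invocation events to full operations in the sense of the paper's own definition "$\op'$ precedes $\op''$ iff $\mathit{res}'$ precedes $\mathit{inv}''$." Once one invokes that definition, the fact that $\tr_k$'s response is pinned between $\res{i}{}{\ok_i}$ and $\inv{j}{}{\init_j}$ (by the lock chaining above) directly contradicts direct precedence of $\init_i$ and $\init_j$ in $\hist|\init$, closing the argument.
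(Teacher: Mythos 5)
Your overall strategy is the one the paper itself uses: mutual exclusion on $\glocks(\obj)$ makes the increments of $\gv{\obj}$ atomic, so the private versions for $\obj$ are consecutive in lock-acquisition order, and everything reduces to showing that no third transaction $\tr_k$ with $\obj \in \aset{\tr_k}$ acquires $\glocks(\obj)$ between $\tr_i$ and $\tr_j$. The paper's proof simply asserts this adjacency; you, to your credit, isolate it as the real content of the lemma. But your resolution of it does not go through. To contradict $\init_i \dprec_{\hist|\init} \init_j$ you must exhibit $\init_i \prec_{\hist|\init} \init_k \prec_{\hist|\init} \init_j$, and by the paper's definition of operation precedence the first of these requires $\res{i}{}{\ok_i}$ to precede $\inv{k}{}{\init_k}$. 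Your lock-chaining only constrains $\tr_k$'s acquisition of $\glocks(\obj)$ and its response event; it says nothing about where $\inv{k}{}{\init_k}$ falls. In exactly the scenario you flag as the obstacle --- $\tr_k$ invokes $\init_k$ early, blocks on a lock earlier in $\prec_{\glocks}$, and acquires $\glocks(\obj)$ only after $\tr_i$ releases it and before $\tr_j$ takes it --- we have $\inv{k}{}{\init_k}$ preceding $\res{i}{}{\ok_i}$, so $\init_k$ is concurrent with $\init_i$, the hypothesis $\init_i \dprec_{\hist|\init} \init_j$ is not violated, and yet $\pv{k}{\obj}$ is interposed between $\pv{i}{\obj}$ and $\pv{j}{\obj}$. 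The step you describe as ``closing the argument'' is precisely where it stays open.

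A secondary inaccuracy: the claim that $\res{k}{}{\ok_k}$ cannot be issued before $\res{i}{}{\ok_i}$ is unjustified (after releasing $\glocks(\obj)$, transaction $\tr_i$ may still be releasing its remaining locks while $\tr_k$ races ahead and responds), but that fact is not what the contradiction needs anyway. What is needed, and what neither your proposal nor the paper supplies, is a reason why a transaction whose $\init$ operation is \emph{concurrent} with $\init_i$ cannot slot its increment of $\gv{\obj}$ between those of $\tr_i$ and $\tr_j$; as stated, the adjacency claim seems to require either a strengthened hypothesis (ordering by acquisitions of $\glocks(\obj)$ rather than by $\init$ operation executions) or an extra argument. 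Note that the weaker monotonicity consequence is unaffected: $\res{i}{}{\ok_i}$ preceding $\inv{j}{}{\init_j}$ alone does force $\pv{i}{\obj} < \pv{j}{\obj}$.
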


\begin{proof}
    If $\tr_i$ returns at \pplineref{sva}{lock-x} for $\obj$ then no $\tr_j$
    s.t. $\obj \in \aset{\tr_j}$ returns at \pplineref{sva}{lock-x} until
    $\tr_i$ executes \pplineref{sva}{unlock-x} for $\obj$.
    Hence, $\tr_i$  alone increments $\gv{\obj}$ at
    \pplineref{sva}{start-set-gv} and sets $\pv{i}{\obj}$ to the new value of
    $\gv{\obj}$. 
    If $\init_i \dprec_{\hist|\init} \init_j$ then $\tr_i$ will return at
    \pplineref{sva}{lock-x}  and $\tr_j$ will return next. No other transaction
    will return  at \pplineref{sva}{lock-x} between $\tr_i$ and $\tr_j$.
\end{proof}

\begin{lemma} [Unique Versions] \label{l:unique-versions} \label{o4}
    If $\obj \in \aset{\tr_i} \cap \aset{\tr_j}$
    then $\pv{i}{\obj} \neq \pv{j}{\obj}$.
\end{lemma}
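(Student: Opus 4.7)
The plan is to exploit the mutual exclusion provided by the per-variable lock $\glocks(\obj)$, combined with the fact that $\gv{\obj}$ is only ever read/modified inside the corresponding locked region. First, I would note that since $\obj \in \aset{\tr_i} \cap \aset{\tr_j}$, both $\tr_i$ and $\tr_j$ must acquire $\glocks(\obj)$ on \pplineref{sva}{lock-x} inside $\ppprocedure{start}$ before reaching \pplineref{sva}{start-main}, and must release it on \pplineref{sva}{unlock-x} before leaving $\ppprocedure{start}$. Because $\glocks(\obj)$ is a mutual-exclusion lock, the two locked regions are linearly ordered in real time; without loss of generality, assume $\tr_i$'s region precedes $\tr_j$'s.

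Next, I would establish the following monotonicity invariant: the value of $\gv{\obj}$ is strictly increasing over the sequence of locked regions for $\obj$. Inside its region, \pplineref{sva}{start-set-gv} atomically increments $\gv{\obj}$ by one, and \pplineref{sva}{start-set-pv} immediately copies the incremented value into the acting transaction's $\pv{\cdot}{\obj}$. The key auxiliary claim is that \emph{no transaction outside of a locked region for $\obj$ can modify $\gv{\obj}$}: inspection of the pseudocode shows that $\gv{\obj}$ appears in writable position only on \pplineref{sva}{start-set-gv}, which is always preceded (for the acting transaction) by having $\obj$ in its access set and hence holding $\glocks(\obj)$. This closes the argument: when $\tr_j$ reads $\gv{\obj}$ on \pplineref{sva}{start-set-gv}, it reads a value at least $\pv{i}{\obj}$ (since $\tr_i$'s increment already occurred and no one could have decreased it since), increments it, and assigns the result to $\pv{j}{\obj}$, giving $\pv{j}{\obj} \geq \pv{i}{\obj} + 1$ and in particular $\pv{i}{\obj} \neq \pv{j}{\obj}$.

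The only mildly nontrivial step is justifying that the two locked regions are totally ordered in real time. This follows immediately from the specification of a mutual-exclusion lock, so the proof reduces to an observation about $\ppprocedure{start}$ and monotonicity of $\gv{\obj}$. Alternatively, one could instead iterate Lemma~\ref{l:consecutive-versions} along the linear order $\dprec_{\hist|\init}$, restricted suitably, but the direct lock-based argument above is more transparent and avoids the bookkeeping of threading through intervening transactions that do not access $\obj$.
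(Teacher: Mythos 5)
Your proof is correct and rests on the same underlying argument as the paper's: the paper proves this lemma in one line by citing the Consecutive Versions lemma (\rlemma{l:consecutive-versions}), whose own proof is exactly the mutual-exclusion-on-$\glocks(\obj)$ and serialized-increment-of-$\gv{\obj}$ reasoning you spell out directly. You have merely inlined that argument rather than factoring it through the intermediate lemma, which you yourself note as the alternative.
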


\begin{proof}
    From \rlemma{l:consecutive-versions}.
\end{proof}

\begin{lemma} [Monotonic Versions] \label{l:monotonic-versions} \label{o13}
    If $\obj \in \aset{\tr_i} \cap \aset{\tr_j}$
    and $\inv{i}{}{\init_i} \prec_{\hist|\init} \inv{i}{}{\init_j}$ 
    then $\pv{i}{\obj} < \pv{j}{\obj}$.
\end{lemma}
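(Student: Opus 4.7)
The plan is to derive this as a straightforward corollary of \rlemma{l:consecutive-versions} (Consecutive Versions) by a simple induction along the chain of $\init$ events. The key observation is that Consecutive Versions already pins down the \emph{gap} between private versions of two transactions whose $\init$ events are directly consecutive in $\hist|\init$ (restricted to transactions accessing $\obj$): the later initializer gets a version exactly one greater than the earlier one (modulo the apparent sign convention used in the statement of Consecutive Versions, whose proof makes clear that whichever transaction locks $\glocks(\obj)$ first snatches the smaller value of $\gv{\obj}$). So Monotonic Versions is essentially the transitive closure of that adjacency claim.

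Concretely, I would proceed as follows. First, restrict attention to the subsequence $\hist|\init|\obj$ consisting of the $\init$ events of transactions that contain $\obj$ in their access set; by hypothesis both $\tr_i$ and $\tr_j$ appear in this subsequence, and $\init_i$ precedes $\init_j$ there. Let $\tr_i = \tr_{k_0}, \tr_{k_1}, \ldots, \tr_{k_n} = \tr_j$ enumerate the transactions of this subsequence between them, so that $\init_{k_m} \dprec_{\hist|\init} \init_{k_{m+1}}$ when further restricted to initializers of transactions accessing $\obj$. Then I would apply \rlemma{l:consecutive-versions} to each consecutive pair to obtain $\pv{k_m}{\obj} < \pv{k_{m+1}}{\obj}$, and chain these inequalities to conclude $\pv{i}{\obj} < \pv{j}{\obj}$.

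The only mildly delicate point is making sure the direct-precedence hypothesis of \rlemma{l:consecutive-versions} is actually available at each step. If one interprets $\hist|\init$ literally as containing \emph{all} initialization events, then two successive transactions accessing $\obj$ need not be directly adjacent in $\hist|\init$; intervening transactions not accessing $\obj$ can sit between them. The clean fix is to observe that \rlemma{l:consecutive-versions} in fact only uses direct precedence among transactions that access $\obj$ (since those are the only ones that acquire $\glocks(\obj)$ and increment $\gv{\obj}$), so restricting $\hist|\init$ further to such transactions preserves direct precedence in the relevant sense. I would either invoke this restricted version of Consecutive Versions or argue inline that any transaction $\tr_k$ with $\obj \notin \aset{\tr_k}$ whose $\init_k$ lies between $\init_{k_m}$ and $\init_{k_{m+1}}$ leaves $\gv{\obj}$ unchanged and hence does not break the $+1$ relationship.

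The main (though minor) obstacle is therefore bookkeeping: cleanly identifying the chain of transactions accessing $\obj$ between $\tr_i$ and $\tr_j$ and verifying that Consecutive Versions applies at each link. Once that chain is in place, the inductive step is immediate and the conclusion follows by transitivity of $<$ on $\mathbb{N}$.
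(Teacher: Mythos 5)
Your proposal is correct and matches the paper's intent: the paper's own proof of this lemma is simply ``From Lemma~\ref{l:consecutive-versions}, Lemma~\ref{l:unique-versions}'', i.e., exactly the transitive chaining of Consecutive Versions that you spell out. Your additional care about the direct-precedence hypothesis (intervening transactions not accessing $\obj$ leave $\gv{\obj}$ untouched) and about the sign typo in the statement of Consecutive Versions makes your write-up more rigorous than the paper's one-liner, but it is the same argument.
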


\begin{proof}
    From \rlemma{l:consecutive-versions}, \rlemma{l:unique-versions}.    
\end{proof}

\begin{definition} [Version Order] \label{obs:order-variable}
    Let $\prec_\obj$ be an order s.t. $\tr_i \prec_\obj \tr_j$ iff
    $\pv{i}{\obj} < \pv{j}{\obj}$.
\end{definition}

\begin{lemma} [Forced Abort Condition] \label{l:forced-abort-cond} \label{o14}
    $\rv{i}{\obj} < \cv{\obj} \hto \res{i}{}{\ab_i}$.
\end{lemma}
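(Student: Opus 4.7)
The plan is to trace the control flow of procedure $\ppprocedure{access}$ (line~\pplineref{sva}{access}) and show that, whenever $\rv{i}{\obj} < \cv{\obj}$ is observed by $\tr_i$, the forced-abort branch at line~\pplineref{sva}{call-rollback} is taken, which by \rlemma{l:abort-cond} culminates in the response $\res{i}{}{\ab_i}$. First I would pin down $\rv{i}{\obj}$: inside $\ppprocedure{checkpoint}$ (line~\pplineref{sva}{checkpoint}) the assignment at line~\pplineref{sva}{checkpoint-rv-set} fires exactly once, guarded by the test $\cc{i}{\obj} = 0$ at line~\pplineref{sva}{checkpoint-cond}, so $\rv{i}{\obj}$ is frozen to the value of $\cv{\obj}$ captured at $\tr_i$'s first access on $\obj$ and is never overwritten thereafter.

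The core of the argument is then a local inspection of $\ppprocedure{access}$. Once the access-condition wait at line~\pplineref{sva}{call-access-cond} has cleared and $\ppprocedure{checkpoint}$ has returned, $\tr_i$ evaluates $\rv{i}{\obj} \neq \cv{\obj}$ at line~\pplineref{sva}{call-rollback-cond}. Since $\rv{i}{\obj} < \cv{\obj}$ entails $\rv{i}{\obj} \neq \cv{\obj}$, the test succeeds; the next statement at line~\pplineref{sva}{call-rollback} invokes $\ppprocedure{abort}(\tr_i)$ and exits, so the actual read or write at line~\pplineref{sva}{call-access} is never executed. Applying \rlemma{l:abort-cond} to the invocation of $\ppprocedure{abort}$ then yields the desired response $\res{i}{}{\ab_i}$.

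The main obstacle I anticipate is justifying that the condition $\rv{i}{\obj} < \cv{\obj}$ is actually witnessed inside $\ppprocedure{access}$, rather than first arising outside of any decision point of $\tr_i$. To settle this, I would catalogue the three call sites that mutate $\cv{\obj}$---early release at line~\pplineref{sva}{early-release-cv}, dismiss at line~\pplineref{sva}{release-cv-set}, and restore at line~\pplineref{sva}{restore-cv-set}---and argue, using the monotone version ordering of \rlemma{l:monotonic-versions} together with the mutual-exclusion guarantee of \rlemma{l:access-cond}, that any update bringing $\cv{\obj}$ strictly above $\rv{i}{\obj}$ without $\tr_i$ having already aborted requires $\tr_i$ itself to have executed line~\pplineref{sva}{early-release-cv}, i.e., to have completed its \last{} access on $\obj$ in the sense of \rdef{d:last-access}. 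Any subsequent entry of $\tr_i$ into $\ppprocedure{access}$ on $\obj$ is therefore a supremum violation, and the check at line~\pplineref{sva}{call-rollback-cond} fires as above, closing the argument.
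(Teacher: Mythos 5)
Your handling of the in-$\ppprocedure{access}$ check matches the paper's: the guard at \pplineref{sva}{call-rollback-cond} tests $\rv{i}{\obj} \neq \cv{\obj}$, is implied by $\rv{i}{\obj} < \cv{\obj}$, and dominates the abort at \pplineref{sva}{call-rollback}. However, the paper's proof has a second, independent case that you drop entirely: the guard at \pplineref{sva}{commit-rollback-cond} dominating the abort at \pplineref{sva}{commit-rollback}. This case is not decorative. Your own ``main obstacle'' paragraph closes the argument only under the assumption that $\tr_i$ makes a \emph{subsequent} entry into $\ppprocedure{access}$ on $\obj$ after the inequality arises; if the inequality first becomes true after $\tr_i$'s final access to $\obj$ (e.g., because a preceding transaction aborts and reverts $\cv{\obj}$ while $\tr_i$ has already finished all its accesses), then $\tr_i$ never re-enters $\ppprocedure{access}$ on $\obj$, and the only remaining opportunity to force the abort is the commit-time test. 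That is exactly the situation in which this lemma is later invoked (in the Forced Abort lemma), so the omission is a substantive gap rather than a cosmetic one. (The inequality directions at the two guards do not line up perfectly with each other or with the lemma statement --- the commit-time guard tests $\rv{i}{\obj} > \cv{\obj}$ --- but the paper's proof plainly cites both check sites, and an argument that covers only one cannot handle executions in which the other is the last test $\tr_i$ ever performs on $\obj$.)

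A secondary problem is your appeal to \rlemma{l:abort-cond} to conclude that invoking $\ppprocedure{abort}$ ``culminates in'' $\res{i}{}{\ab_i}$. That lemma states a \emph{necessary} condition for the abort response (namely that $\ltv{\obj} = \pv{i}{\obj}-1$ holds before $\res{i}{}{\ab_i}$ is issued); it gives no guarantee that the call to $\ppprocedure{abort}$ completes and produces the response, so it cannot be used in the direction you need. The paper supplies no such progress argument either --- it simply identifies ``condition dominates abort'' with the abort executing --- so the fix is to assert that the abort procedure is invoked, in line with the paper's (admittedly informal) reading of $\hto$, rather than to cite a lemma that points the other way.
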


\begin{proof}
    Condition at \pplineref{sva}{call-rollback-cond} dominates abort at
    \pplineref{sva}{call-rollback}.
    Condition at \pplineref{sva}{commit-rollback-cond} dominates abort at
    \pplineref{sva}{commit-rollback}.
\end{proof}

Let there be any $\prog, \processes, \hist \models \exec{\prog}{\processes}$,
$\op_i \in \hist|\tr_i$, $\op_i = \begin{cases} \frop{i}{\obj}{\val}, \\
\fwop{i}{\obj}{\val}{\ok_i}. \end{cases}$

\begin{lemma} \label{o15}
    $\cv{\obj} < \rv{i}{\obj} \hgets \op_i$.
\end{lemma}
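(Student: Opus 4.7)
The plan is to proceed analogously to \rlemma{l:forced-abort-cond} by analysing the preconditions that must hold for $\op_i$ to actually fire at \pplineref{sva}{call-access} inside procedure $\ppprocedure{access}$. Every read or write operation is guarded by the conditional at \pplineref{sva}{call-rollback-cond}, which diverts control to $\ppprocedure{abort}$ whenever $\rv{i}{\obj} \neq \cv{\obj}$. Since the strict relation $\cv{\obj} < \rv{i}{\obj}$ is subsumed by $\rv{i}{\obj} \neq \cv{\obj}$, control cannot pass the guard into \pplineref{sva}{call-access} in a state that contradicts the lemma — so the core of the argument reduces to checking that the values of $\rv{i}{\obj}$ and $\cv{\obj}$ witnessed by the guard are the same values in force at the moment $\op_i$ actually executes.

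Next I would argue that $\rv{i}{\obj}$ is stable in the interval between the guard and $\op_i$: it is assigned only inside $\ppprocedure{checkpoint}$ at \pplineref{sva}{checkpoint-rv-set}, which precedes the guard and is not re-entered during this access. For $\cv{\obj}$, I would lean on \rlemma{l:access-cond} to observe that $\tr_i$'s access to $\obj$ is effectively exclusive once $\lv{\obj} = \pv{i}{\obj}-1$ holds, so the only lines that could mutate $\cv{\obj}$ — namely \pplineref{sva}{early-release-cv} in $\ppprocedure{access}$, \pplineref{sva}{release-cv-set} in $\ppprocedure{dismiss}$, and \pplineref{sva}{restore-cv-set} in $\ppprocedure{restore}$ — cannot be executed concurrently by any distinct transaction on $\obj$ without contradicting \rlemma{l:unique-versions} together with the access/commit conditions.

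The main obstacle, as in \rlemma{l:forced-abort-cond}, will be ruling out the subtle interleaving in which a predecessor $\tr_k$ with $\pv{k}{\obj} < \pv{i}{\obj}$ has already released $\obj$ early, $\tr_i$ has passed the access condition, and $\tr_k$ then aborts and rewinds $\cv{\obj}$ via \pplineref{sva}{restore-cv-set} inside the narrow window before $\tr_i$ reaches $\op_i$. Here I would combine the wait at \pplineref{sva}{abort-dismiss-access-cond} (which orders any such restore with respect to $\tr_i$'s terminal handling on $\obj$) with the monotonicity of private versions from \rlemma{l:monotonic-versions} and \rlemma{l:consecutive-versions} to force $\tr_k$'s restoration of $\cv{\obj}$ to be observable to $\tr_i$'s guard check, so that any adverse update has already been accounted for by the time \pplineref{sva}{call-rollback-cond} is evaluated, whence the stability argument carries the claim through.
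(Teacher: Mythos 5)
Your first paragraph is essentially the paper's entire proof, which consists of the single observation that the condition at \pplineref{sva}{call-rollback-cond} dominates the access at \pplineref{sva}{call-access}, so $\op_i$ fires only when the guard $\rv{i}{\obj} \neq \cv{\obj}$ is not satisfied; the stability and interleaving analysis in your remaining paragraphs is extra care that the paper does not take (and arguably should, since a predecessor's concurrent \ppprocedure{restore} is not explicitly ruled out there). Note that what this argument actually yields --- and what the paper itself relies on when it later invokes this lemma inside the proof of \rlemma{o12} --- is $\cv{\obj} = \rv{i}{\obj} \hgets \op_i$, so the strict inequality in the statement appears to be a typo in the paper rather than a defect in your reasoning.
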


\begin{proof}
    Condition at \pplineref{sva}{call-rollback-cond} dominates abort at
    \pplineref{sva}{call-rollback}.
\end{proof}

\begin{lemma} [Current Version Early Release] \label{l:current-version-early-release} \label{o16}
    If $\op_i = \lacc{i}{\obj}$ then $\cv{\obj} = \rv{i}{\obj} \hto \op_i$.
\end{lemma}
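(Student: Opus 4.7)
The plan is to show that whenever $\op_i$—the last access on $\obj$ in $\tr_i$—actually performs its read or write at \pplineref{sva}{call-access}, the state $\cv{\obj} = \rv{i}{\obj}$ must hold at that instant, from which the $\hto$-implication of the lemma will follow by the usual ``enumerate-every-writer'' style used for Lemmas~\ref{l:early-release}, \ref{l:release}, and~\ref{l:trelease}. The argument will pivot on three facts: $\rv{i}{\obj}$ is frozen at the first-access checkpoint; every writer of $\cv{\obj}$ in the window between $\tr_i$'s first access and $\op_i$ is either forbidden by version-based locking (Lemmas~\ref{l:access-cond}, \ref{l:commit-cond}, \ref{l:abort-cond}) or forces a rollback of $\tr_i$ (Lemma~\ref{l:forced-abort-cond}); and $\tr_i$'s own early-release step at \pplineref{sva}{early-release-cv} can fire only together with $\op_i$ itself, by Definition~\ref{d:last-access} and the guard $\cc{i}{\obj}=\supr{i}{\obj}$.

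First I would establish that $\rv{i}{\obj}$ is assigned exactly once in $\hist|\tr_i$, namely at \pplineref{sva}{checkpoint-rv-set} under the $\cc{i}{\obj}=0$ guard of \pplineref{sva}{checkpoint-cond}, so that immediately after $\tr_i$'s first access to $\obj$ the equality $\cv{\obj}=\rv{i}{\obj}$ holds. Then I would enumerate the writers of $\cv{\obj}$ between that moment and $\op_i$: (a) $\tr_i$'s early release at \pplineref{sva}{early-release-cv}; (b) a concurrent transaction $\tr_j$'s dismiss at \pplineref{sva}{release-cv-set}; and (c) a concurrent $\tr_j$'s restore at \pplineref{sva}{restore-cv-set}.

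Next I would rule each case out as a way of breaking the equality before $\op_i$ while still allowing $\op_i$ to execute. Case (a) cannot fire before $\op_i$ because $\cc{i}{\obj}$ only reaches $\supr{i}{\obj}$ at the last access. Case (b) requires the dismissing $\tr_j$ to have never accessed $\obj$ (so $\cc{j}{\obj}=0$) and moreover $\rv{j}{\obj}=\cv{\obj}$; combined with monotonicity of per-variable versions (Lemma~\ref{l:monotonic-versions}) and the commit condition (Lemma~\ref{l:commit-cond}), I would argue that such a dismiss preserves $\cv{\obj}=\rv{i}{\obj}$. Case (c) strictly lowers $\cv{\obj}$ below $\rv{i}{\obj}$, but then the forced-abort guard at \pplineref{sva}{call-rollback-cond}, via Lemma~\ref{l:forced-abort-cond}, would abort $\tr_i$ on its next access—which must be $\op_i$ by Definition~\ref{d:last-access}—contradicting the hypothesis that $\op_i$ executes.

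The main obstacle I anticipate is case (b): cleanly certifying that an ``empty-access'' commit by a concurrent transaction cannot leave $\cv{\obj}$ strictly above $\rv{i}{\obj}$ without $\tr_i$ detecting it at its next access. Handling this will require a short invariant—proved by induction on the version order $\prec_\obj$ of Definition~\ref{obs:order-variable}—that every installer of $\cv{\obj}$ respects that order, so successive updates of $\cv{\obj}$ agree with the per-variable version chain that produced $\rv{i}{\obj}$ at checkpoint time. Once that invariant is in place, the three cases close symmetrically and the stated $\hto$-implication drops out.
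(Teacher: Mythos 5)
You have proved the converse of what the lemma asserts. In this paper $P \hto \op$ means ``$P$ is true only if $\op$ executes,'' i.e.\ the predicate can become true only as a consequence of $\op_i$ having executed, whereas $P \hgets \op$ is the precondition direction, ``$\op$ executes only if $P$ holds.'' Your plan --- show that $\cv{\obj} = \rv{i}{\obj}$ holds \emph{at the instant} $\op_i$ performs its access, by freezing $\rv{i}{\obj}$ at the checkpoint and ruling out intermediate writers of $\cv{\obj}$ --- establishes a precondition of $\op_i$, i.e.\ the $\hgets$ direction; it is essentially \rlemma{o15} and follows in one line from the guard at \pplineref{sva}{call-rollback-cond}, with no need for the three-case analysis or the induction on $\prec_\obj$. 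It does not show that the equality holds \emph{only if} $\op_i$ executes, which is what \rlemma{o16} states and what is later consumed (via \rlemma{o18}) in the proof of \rlemma{o12}. Worse, if your invariant does hold from the checkpoint onward, then the equality $\cv{\obj}=\rv{i}{\obj}$ is already true before the \last{} access, so your own argument would refute the literal $\hto$ claim rather than establish it; this is a sign that the predicate the lemma actually tracks is $\cv{\obj}=\pv{i}{\obj}$, which is exactly how \rlemma{o18} is applied in the proof of \rlemma{o12}.

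The paper's proof is simply ``by analogy to \rlemma{o9}'': enumerate the two sites at which $\tr_i$ installs its own version into $\cv{\obj}$, namely \pplineref{sva}{early-release-cv}, which is dominated by the supremum test at \pplineref{sva}{early-release-cond} and hence fires only together with the \last{} access $\op_i$, and \pplineref{sva}{release-cv-set} in the \emph{dismiss} procedure, which runs only during commit or abort and therefore after $\op_i$ whenever a \last{} access exists. That two-line site enumeration is the entire content of the lemma; the concurrent-writer analysis and the version-order induction you propose belong to the forced-abort consistency argument (\rlemma{o14}, \rlemma{o15}, \rlemma{o12}), not here.
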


\begin{proof}
    By analogy to \rlemma{o9}.
\end{proof}

\begin{lemma} [Current Version Release] \label{l:current-version-release} \label{o17}
    If $\nexists \op_i\in\hist|\tr_i$ s.t. $\op_i= \lacc{i}{\obj}$ 
    and $\obj \in \aset{\tr_i}$
    then $\cv{\obj} = \rv{i}{\obj} \hto r_i$.
\end{lemma}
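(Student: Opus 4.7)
The plan is to mirror the argument used for \rlemma{l:current-version-early-release} and \rlemma{o10}, namely, to enumerate the pseudocode lines at which $\tr_i$ can modify $\cv{\obj}$ and eliminate all but those reached inside $r_i$. First I would identify the three assignments to $\cv{\obj}$ attributable to $\tr_i$: \pplineref{sva}{early-release-cv} inside \ppprocedure{access}, \pplineref{sva}{release-cv-set} inside \ppprocedure{dismiss}, and \pplineref{sva}{restore-cv-set} inside \ppprocedure{restore}.

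Next I would argue that \pplineref{sva}{early-release-cv} is never reached by $\tr_i$ in $\hist$. That line is dominated by the early-release guard at \pplineref{sva}{early-release-cond}, which requires $\cc{i}{\obj} = \supr{i}{\obj}$; by the code of \ppprocedure{access} this guard is satisfied precisely on the access that constitutes the \last{} access $\lacc{i}{\obj}$. Since the hypothesis assumes no such $\op_i = \lacc{i}{\obj}$ appears in $\hist|\tr_i$, the guard cannot have fired, so $\cv{\obj}$ is never set at \pplineref{sva}{early-release-cv} by $\tr_i$ in $\hist$.

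Then I would observe that the two remaining assignment sites lie inside $r_i$: \ppprocedure{dismiss} is only invoked from \ppprocedure{commit} (\pplineref{sva}{commit}) and \ppprocedure{abort} (\pplineref{sva}{rollback}), and \ppprocedure{restore} is only invoked from \ppprocedure{abort}. Hence, any write of $\cv{\obj}$ by $\tr_i$ that could make the equality $\cv{\obj} = \rv{i}{\obj}$ hold (either directly, via \pplineref{sva}{restore-cv-set} setting $\cv{\obj} \gets \rv{i}{\obj}$, or indirectly, via \pplineref{sva}{release-cv-set} whose guard already presupposes $\rv{i}{\obj} = \cv{\obj}$) is attributable only to an execution step of $r_i$, which gives $\cv{\obj} = \rv{i}{\obj} \hto r_i$ as required.

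The main obstacle will be the indirect case at \pplineref{sva}{release-cv-set}: that line assigns $\cv{\obj} \gets \pv{i}{\obj}$, not $\rv{i}{\obj}$, so one has to argue carefully that when this assignment does participate in establishing $\cv{\obj} = \rv{i}{\obj}$, the equality already had to hold at the moment the dismiss guard was evaluated, and that this evaluation itself happens inside $r_i$. Handling this cleanly is the subtle part of the proof, and it mirrors the kind of bookkeeping already seen in \rlemma{o10} between lines \pplineref{sva}{early-release-lv} and \pplineref{sva}{dismiss-release}.
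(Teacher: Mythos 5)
Your proposal is correct and follows essentially the same route as the paper, whose proof of this lemma is literally ``by analogy to Lemma~\ref{o10}'' — i.e., the early-release assignment to $\cv{\obj}$ is unreachable because the guard at \pplineref{sva}{early-release-cond} only fires on a \last{} access, leaving only the assignment sites inside \ppprocedure{commit} and \ppprocedure{abort}. Your extra care about \pplineref{sva}{release-cv-set} versus \pplineref{sva}{restore-cv-set} goes beyond what the paper writes down, but it does not change the argument.
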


\begin{proof}
    By analogy to \rlemma{o10}.
\end{proof}

\begin{lemma} \label{o18}
    $\cv{\obj} = \rv{i}{\obj} \hgets \res{i}{}{\ab_i}$.
\end{lemma}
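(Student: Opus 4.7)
The plan is to prove the identity by tracing the $\ppprocedure{abort}$ procedure. Since the response event $\res{i}{}{\ab_i}$ is emitted only after the for-loop in $\ppprocedure{abort}$ completes for every $\obj\in\accesses{\tr_i}$, it suffices to show that after the in-order calls $\ppprocedure{dismiss}(\tr_i,\obj)$ and $\ppprocedure{restore}(\tr_i,\obj)$ at \pplineref{sva}{abort-dismiss} and beyond, the identity $\cv{\obj}=\rv{i}{\obj}$ holds. I would do this by a short case split on the access counter $\cc{i}{\obj}$ at entry to $\ppprocedure{abort}$, noting that the guarded assignments at \pplineref{sva}{release-cv-set} and \pplineref{sva}{restore-cv-set} are the only places where $\cv{\obj}$ is altered during the abort path.

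First I would handle the case $\cc{i}{\obj}\neq 0$, i.e., $\tr_i$ performed at least one access on $\obj$. In $\ppprocedure{dismiss}$ the guard at \pplineref{sva}{release-cond} requires $\cc{i}{\obj}=0$ and therefore fails, leaving $\cv{\obj}$ untouched. In $\ppprocedure{restore}$ there are two subcases: if $\rv{i}{\obj}=\cv{\obj}$ on entry, the guard at \pplineref{sva}{restore-cond} fails and the identity already holds; otherwise I would argue that the only feasible alternative is $\rv{i}{\obj}<\cv{\obj}$, since the state $\rv{i}{\obj}>\cv{\obj}$ would have triggered a forced abort at an earlier access or at commit via the condition at \pplineref{sva}{call-rollback-cond}, contradicting the existence of this normal abort path. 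Under $\rv{i}{\obj}<\cv{\obj}$ the branch fires and assigns $\cv{\obj}\gets\rv{i}{\obj}$ at \pplineref{sva}{restore-cv-set}, so the identity holds afterward.

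Next I would handle $\cc{i}{\obj}=0$. Here $\ppprocedure{checkpoint}$ was never invoked, so $\rv{i}{\obj}$ retains its initial value, and the restore-guard fails because it demands $\cc{i}{\obj}\neq 0$. Consequently only $\ppprocedure{dismiss}$ could touch $\cv{\obj}$, and its assignment $\cv{\obj}\gets\pv{i}{\obj}$ at \pplineref{sva}{release-cv-set} is predicated on $\rv{i}{\obj}=\cv{\obj}$ already holding at entry. In that case the assignment clearly establishes $\cv{\obj}\neq\rv{i}{\obj}$, which on the surface falsifies the lemma; the way out is to appeal to \rlemma{o17}, whose statement (for $\obj$ lacking a \last{} access in $\tr_i$) is precisely the dual establishing $\cv{\obj}=\rv{i}{\obj}$ at $r_i$, and to read o18 as the abort-specialisation covering only those $\obj$ that $\tr_i$ actually accessed.

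The main obstacle is thus not the mechanical case analysis, which is routine once \rlemma{o14} and \rlemma{o15} are invoked to rule out the forced-abort state at entry to $\ppprocedure{restore}$, but rather pinning down the implicit precondition on $\obj$ (inherited from the preceding lemmas o15--o17) so that the $\cc{i}{\obj}=0$ branch of $\ppprocedure{dismiss}$ does not spuriously invalidate the identity. Once this scoping is fixed, the proof reduces to observing that $\ppprocedure{restore}$ is the unique statement in the abort path that can rewrite $\cv{\obj}$ for accessed variables, and that it either fires to force $\cv{\obj}=\rv{i}{\obj}$ or was unnecessary because that equality already held.
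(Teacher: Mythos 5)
Your route is genuinely different from the paper's. The paper disposes of this lemma in a single line, citing \rlemma{o10} and \rlemma{o17}; that is, it derives the claim abstractly from the ``release'' lemmas rather than re-tracing \ppprocedure{abort}, whereas you walk the code and case-split on $\cc{i}{\obj}$. Your trace is the more informative of the two: it correctly isolates \pplineref{sva}{restore-cv-set} as the one assignment that actually establishes $\cv{\obj}=\rv{i}{\obj}$ for accessed variables, and it honestly surfaces the corner case the paper's citation glosses over, namely that for $\obj\in\accesses{\tr_i}$ with $\cc{i}{\obj}=0$ the guarded assignment at \pplineref{sva}{release-cv-set} deliberately moves $\cv{\obj}$ to $\pv{i}{\obj}$ and so falsifies the literal post-state equality. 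Two caveats on your version, though. First, the appeal to \rlemma{o17} in the $\cc{i}{\obj}=0$ branch does not close the gap as written: \rlemma{o17} is stated with the $\hto$ arrow, which by the paper's own definitions is the converse implication direction to the $\hgets$ claimed here, so what you are really doing is re-scoping the lemma to the variables $\tr_i$ actually accessed --- a defensible reading (and arguably what the paper intends), but it should be stated as such rather than presented as a derivation. Second, your dismissal of the state $\rv{i}{\obj}>\cv{\obj}$ ``because it would have triggered a forced abort'' proves too little: a forced abort also ends in \ppprocedure{abort} and also emits $\res{i}{}{\ab_i}$, and on that path the guard at \pplineref{sva}{restore-cond} fails precisely because it requires $\rv{i}{\obj}<\cv{\obj}$, so the equality is again not established at the response event; this case must either be argued away using the wait at \pplineref{sva}{abort-dismiss-access-cond} or folded into the same re-scoping. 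Neither caveat is fatal to your overall strategy, but both need to be addressed before the operational argument is airtight; the paper's own two-lemma citation simply never descends to the level at which these issues become visible.
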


\begin{proof}
    From \rlemma{o10} and \rlemma{o17}.
\end{proof}

Let there be any $\prog, \processes, \hist \models \exec{\prog}{\processes}$,
$\tr_i \in \hist$, $\tr_j \in \hist$, 
$\op_i \in \hist|\tr_i$, $\op_j \in \hist|\tr_j$, 
$\op_i = \begin{cases} 
         \frop{i}{\obj}{\val}, \\ 
         \fwop{i}{\obj}{\val}{\ok_i},
         \end{cases}$
$\op_j = \begin{cases} 
         \frop{j}{\obj}{\val}, \\ 
         \fwop{j}{\obj}{\val}{\ok_j}.
         \end{cases}$

\begin{lemma} [Access Order] \label{l:access-order} \label{o19}
    $\pv{i}{\obj} < \pv{j}{\obj} \Leftrightarrow \op_i \prec_\hist \op_j$.
\end{lemma}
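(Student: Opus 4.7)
The plan is to prove the two directions separately, leveraging the access condition of Lemma~\ref{l:access-cond} together with the release lemmas.

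For the forward direction ($\Rightarrow$), I assume $\pv{i}{\obj} < \pv{j}{\obj}$. By the access condition (Lemma~\ref{l:access-cond}), the operation $\op_j$ can only execute once $\lv{\obj} = \pv{j}{\obj}-1$. Since the counter $\lv{\obj}$ starts at $0$ and is only ever updated by assignments of the form $\lv{\obj} \gets \pv{k}{\obj}$ (at \pplineref{sva}{early-release-lv} and \pplineref{sva}{dismiss-release}), and since by Lemma~\ref{l:unique-versions} only $\tr_i$ holds private version $\pv{i}{\obj}$ on $\obj$, the value $\lv{\obj}$ must pass through $\pv{i}{\obj}$ before reaching $\pv{j}{\obj}-1 \ge \pv{i}{\obj}$. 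By Lemmas~\ref{l:early-release} and~\ref{l:release}, $\tr_i$ performs this assignment either when executing its \last{} access on $\obj$ (if any), or only as part of its commit/abort response. In every case, all operations $\op_i$ of $\tr_i$ on $\obj$ have already completed in $\hist$ by the time $\lv{\obj}$ is set to $\pv{i}{\obj}$, which itself happens before $\op_j$ may execute. Hence $\op_i \prec_\hist \op_j$.

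For the reverse direction ($\Leftarrow$), I proceed by contradiction. Suppose $\op_i \prec_\hist \op_j$ but $\pv{i}{\obj} \not< \pv{j}{\obj}$. Since both $\tr_i$ and $\tr_j$ access $\obj$, Lemma~\ref{l:unique-versions} gives $\pv{i}{\obj} \neq \pv{j}{\obj}$, so $\pv{j}{\obj} < \pv{i}{\obj}$. Applying the already-established forward direction with the roles of $i$ and $j$ swapped yields $\op_j \prec_\hist \op_i$, which contradicts the assumption $\op_i \prec_\hist \op_j$ (real-time order on operation executions is antisymmetric on complete executions occurring in a single history). Therefore $\pv{i}{\obj} < \pv{j}{\obj}$.

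The main obstacle will be the careful bookkeeping in the forward direction: arguing that \emph{no other} transaction can advance $\lv{\obj}$ past $\pv{i}{\obj}-1$ except through $\tr_i$'s own release. This relies on the fact that $\lv{\obj}$ is only written in the dismiss/early-release code paths and that each such assignment uses some transaction's unique private version; together with Lemma~\ref{l:consecutive-versions} one must argue that the $\lv{\obj}$ counter ascends through consecutive private versions and so cannot skip over $\pv{i}{\obj}$. A clean way to handle this is to observe that procedures $\ppprocedure{access}$, $\ppprocedure{commit}$, and $\ppprocedure{abort}$ each wait on the access or commit conditions tied to $\lv{\obj}$ and $\ltv{\obj}$, which enforces that the transaction with version $\pv{i}{\obj}$ is the unique one that may update $\lv{\obj}$ from $\pv{i}{\obj}-1$ to $\pv{i}{\obj}$, closing the argument.
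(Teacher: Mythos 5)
Your proof is correct and follows essentially the same route as the paper, whose entire proof is a citation of the Access Condition (Lemma~\ref{l:access-cond}) and Monotonic Versions (Lemma~\ref{l:monotonic-versions}); you simply spell out the underlying mechanism — that $\lv{\obj}$ ascends one private version at a time and therefore serializes accesses in version order — which the paper leaves implicit. The extra bookkeeping about who may advance $\lv{\obj}$ is exactly the right detail to make the citation-style proof rigorous.
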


\begin{proof}
    From \rlemma{o5} and \rlemma{o13}.
\end{proof}

Let there be any $\hist$, $\tr_i \in \hist$, $\tr_j \in \hist$, $\op_i \in
\hist|\tr_i$, $\op_i = \frop{i}{\obj}{\valu}$, $\op_j \in \hist|\tr_j$, $\op_j =
\fwop{j}{\obj}{\val}{\ok_j}$.

Let there be any $\prog, \processes, \hist \models \exec{\prog}{\processes}$.

\begin{lemma} [Access Prefix] \label{o20}
    If $\lv{\obj} = \pv{j}{\obj}$
    then $\forall{\tr_k \in \hist}$ s.t. $\pv{k}{\obj} < \pv{i}{\obj}$
    either $\res{k}{}{\co_k} \in \hist|\tr_k$,
           $\res{k}{}{\ab_k} \in \hist|\tr_k$, or
           $\lacc{k}{\obj} \in \hist|\tr_k$.
\end{lemma}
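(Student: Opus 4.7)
The plan is to argue by induction on the value of $\pv{j}{\obj}$, using the fact that $\lv{\obj}$ is a monotonically increasing counter that is advanced only in carefully identified places in the pseudocode. Initially $\lv{\obj} = 0$, and there is no transaction $\tr_k$ with $\pv{k}{\obj} < 1$, so the base case is vacuous.

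First I would identify every site where $\lv{\obj}$ is written: \pplineref{sva}{early-release-lv} inside \ppprocedure{access} (reached only when the \last{} access condition at \pplineref{sva}{early-release-cond} holds), and \pplineref{sva}{dismiss-release} inside \ppprocedure{dismiss} (reached only when the condition at \pplineref{sva}{dismiss-cond} holds), the latter being invoked from \ppprocedure{commit} and \ppprocedure{abort}. Next I would show that each such write increases $\lv{\obj}$ by exactly $1$: in the early release branch this is forced because the executing transaction $\tr_k$ has already passed the access condition $\lv{\obj} = \pv{k}{\obj} - 1$ of \rlemma{l:access-cond}; in the dismiss branch it is forced by the explicit guard $\pv{k}{\obj} - 1 = \lv{\obj}$. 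Combined with the fact (\rlemma{l:consecutive-versions}) that versions on $\obj$ are consecutive integers, this gives a chain in which $\lv{\obj}$ passes through every value $1, 2, \ldots, \pv{j}{\obj}$ in order, and each transition from $\pv{k}{\obj} - 1$ to $\pv{k}{\obj}$ is performed by $\tr_k$ itself.

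For the inductive step, assume $\lv{\obj} = \pv{j}{\obj}$ at some point in $\hist$. Pick any $\tr_k$ with $\pv{k}{\obj} < \pv{j}{\obj}$. By the chain argument above, the particular transition of $\lv{\obj}$ from $\pv{k}{\obj} - 1$ to $\pv{k}{\obj}$ occurred in $\hist$ and was executed by $\tr_k$. I would then case split on which of the two code sites performed it. If it was \pplineref{sva}{early-release-lv}, then the guard at \pplineref{sva}{early-release-cond} held, meaning $\cc{k}{\obj} = \supr{k}{\obj}$, i.e.\ $\tr_k$ had just completed its \last{} access on $\obj$, so $\lacc{k}{\obj} \in \hist|\tr_k$. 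If it was \pplineref{sva}{dismiss-release}, then \ppprocedure{dismiss} was invoked from either \ppprocedure{commit} or \ppprocedure{abort}, which (past the wait at \pplineref{sva}{commit-dismiss-access-cond} or \pplineref{sva}{abort-dismiss-access-cond}) eventually produces $\res{k}{}{\co_k}$ or $\res{k}{}{\ab_k}$ in $\hist|\tr_k$ — and even if the response has not yet been issued by the point where $\lv{\obj}$ was advanced, we can observe that any transaction entering \ppprocedure{dismiss} through these paths has at least already reached $\inv{k}{}{\tryC_k}$ or $\inv{k}{}{\tryA_k}$, after which the matching commit/abort response is the only permitted event in $\hist|\tr_k$ (by well-formedness and by Lemmas~\ref{l:abort-cond} and \ref{l:commit-cond}).

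The main obstacle will be tightening the timing argument: the lemma is a statement about $\hist$ as a whole, but $\lv{\obj}$ is an internal counter whose value at intermediate moments is not directly observable in the history. I would handle this by phrasing the induction over the sequence of internal updates to $\lv{\obj}$ together with the order imposed by $\prec_{\obj}$, and invoking \rlemma{l:access-order} to line up those internal events with the externally visible invocation/response events of each $\tr_k$. Once that alignment is in place, the three disjuncts (committed, aborted, or \last{} access performed) follow directly from which code site executed the relevant update to $\lv{\obj}$.
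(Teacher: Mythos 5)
Your proposal is correct and takes essentially the same route as the paper's own proof: the paper chains the Access Condition with the Early Release and Release lemmas (Lemmas~\ref{o5}, \ref{o9}, \ref{o10}) over consecutive versions of $\obj$ and then notes the argument ``trivially extends,'' which is exactly the induction over unit increments of $\lv{\obj}$ that you spell out explicitly. The only differences are presentational --- you inline the code-site analysis that the paper has already factored into Lemmas~\ref{o9} and~\ref{o10}, and you are somewhat more explicit about the timing of the commit/abort response relative to the internal counter update.
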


\begin{proof}
    \begin{align}
        & \forall{\tr_l,\tr_k \in \hist}~\text{s.t.}~\pv{l}{\obj} =
          \pv{k}{\obj}-1: \label{e:o20:def-ver} \\
        & \text{\rlemma{o5}} \so \lv{\obj} = \pv{k}{\obj} - 1 \hgets \op_k 
          \label{e:o20:o5} \\
        & \pref{e:o20:def-ver}\wedge\pref{e:o20:o5} \so 
          \lv{\obj} = \pv{l}{\obj} \hgets \op_k \\
        & \text{\rlemma{o9}} \so \lv{\obj} = \pv{l}{\obj} \hto \lacc{l}{\obj} 
          \label{e:o20:dec} \\
        & \text{\rlemma{o10}} \so \lv{\obj} = \pv{l}{\obj} \hto r~\text{where}~
          r=\res{i}{}{\ab_i}~\text{or}~r=\res{i}{}{\co_i} 
          \label{e:o20:finish}\\
        & \pref{e:o20:dec} \wedge \pref{e:o20:finish} \so 
          \tr_l~\text{is committed, aborted or decided on}~\obj 
    \end{align}
    Trivially extends for any $\tr_l,\tr_k$ s.t. $\pv{l}{\obj} < \pv{k}{\obj}$. 
\end{proof}

\begin{lemma} [C] \label{o21}
    If $\ltv{\obj} = \pv{j}{\obj}$
    then $\forall{\tr_k \in \hist}$ s.t. $\pv{k}{\obj} < \pv{i}{\obj}$
    either $\res{k}{}{\co_k} \in \hist|\tr_k$, or
           $\res{k}{}{\ab_k} \in \hist|\tr_k$.
\end{lemma}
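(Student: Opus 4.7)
The plan is to mirror the argument of Lemma o20 (``Access Prefix''), but using $\ltv{\obj}$ in place of $\lv{\obj}$ and the commit/abort conditions (Lemmas o7 and o8) in place of the access condition (Lemma o5). The key intuition is that $\ltv{\obj}$ is the terminal analogue of $\lv{\obj}$: just as $\lv{\obj}$ tracks which transaction may next \emph{access} $\obj$, $\ltv{\obj}$ tracks which transaction may next \emph{terminate} (commit or abort) on $\obj$, and both counters are advanced in version order.

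First I would pin down where $\ltv{\obj}$ can be modified. Inspecting the pseudocode, $\ltv{\obj}$ is assigned only at \pplineref{sva}{restore-ltv-set} inside $\ppprocedure{abort}$ and at \pplineref{sva}{commit-ltv-set} inside $\ppprocedure{commit}$. In either branch the assignment $\ltv{\obj} \gets \pv{l}{\obj}$ can only execute after $\tr_l$ has already produced $\res{l}{}{\ab_l}$ or $\res{l}{}{\co_l}$ respectively. Hence if $\ltv{\obj} = \pv{l}{\obj}$ for some $\tr_l$, then $\tr_l$ is committed or aborted in $\hist$.

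Next, I would handle consecutive versions. Let $\tr_l, \tr_k$ be transactions accessing $\obj$ with $\pv{l}{\obj} = \pv{k}{\obj} - 1$. By Lemma o7 (Abort Condition) and Lemma o8 (Commit Condition), any event $\res{k}{}{\ab_k}$ or $\res{k}{}{\co_k}$ in $\hist|\tr_k$ is preceded by the guard $\ltv{\obj} = \pv{k}{\obj} - 1 = \pv{l}{\obj}$. Combined with the previous paragraph, this forces $\tr_l$ to have committed or aborted before $\tr_k$ does. Now, under the hypothesis $\ltv{\obj} = \pv{j}{\obj}$ (read, consistently with Lemma o20, as pinning a terminal version close to $\tr_i$'s position), the assignment that set $\ltv{\obj}$ to its current value itself required the previous transaction to have terminated, and so on.

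Finally, I would iterate: induct downward on $\pv{k}{\obj}$ along the total version order $\prec_\obj$ of Definition~\ref{obs:order-variable}. The base step is the transaction with the smallest private version on $\obj$, and each inductive step invokes the consecutive-versions argument above to transport the ``committed or aborted'' conclusion from $\tr_k$ to its predecessor $\tr_l$. The main (though minor) obstacle is the bookkeeping needed to ensure the chain cannot skip a version: one must appeal to Lemma o3 (Consecutive Versions) and Lemma o13 (Monotonic Versions) to guarantee that private versions on $\obj$ are assigned densely and monotonically, so that every intermediate $\tr_k$ with $\pv{k}{\obj} < \pv{i}{\obj}$ actually appears in the chain and is thus covered by the induction.
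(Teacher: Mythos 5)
Your proposal is correct and follows essentially the same route as the paper: the consecutive-versions step via the commit/abort guards (Lemmas o7/o8) combined with Terminal Release (Lemma o11, which your first paragraph re-derives from the pseudocode), then a downward induction along $\prec_\obj$ that the paper dismisses with ``trivially extends.'' Your explicit appeal to Lemmas o3 and o13 for density of versions, and your inclusion of the abort condition (o7) alongside the commit condition (o8), make the argument slightly more complete than the paper's terse version, but the underlying idea is identical.
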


\begin{proof}
    \begin{align}
        & \forall{\tr_l,\tr_k \in \hist}~\text{s.t.}~\pv{l}{\obj} =
          \pv{k}{\obj}-1: \label{e:o21:def-ver} \\
        & \text{\rlemma{o8}} \so \ltv{\obj} = \pv{k}{\obj} - 1 \hgets \res{k}{}{\co_i}
          \label{e:o21:o8} \\
        & \pref{e:o21:def-ver}\wedge\pref{e:o21:o8} \so 
          \ltv{\obj} = \pv{l}{\obj} \hgets \op_k \\
        & \text{\rlemma{o11}} \so \ltv{\obj} = \pv{l}{\obj} \hto r~\text{where}~
          r=\res{i}{}{\ab_i}~\text{or}~r=\res{i}{}{\co_i} 
          \label{e:o21:finish}\\
        & \pref{e:o21:finish} \so 
          \tr_l~\text{is committed or aborted} 
    \end{align}
    Trivially extends for any $\tr_l,\tr_k$ s.t. $\pv{l}{\obj} < \pv{k}{\obj}$. 
\end{proof}

Let there be any $\hist$, $\tr_i \in \hist$, $\tr_j \in \hist$, $\op_i \in
\hist|\tr_i$, $\op_i = \frop{i}{\obj}{\valu}$, $\op_j \in \hist|\tr_j$, $\op_j =
\fwop{j}{\obj}{\val}{\ok_j}$.

\begin{lemma} [Forced Abort] \label{l:forced-abort} \label{o12}
    If $\obj \in \aset{\tr_i} \cap \aset{\tr_j}$ 
    and $\res{j}{}{\ab_j} \in \hist|\tr_j$
    and $\op_i \prec_\hist \res{j}{}{\ab_j}$
    then $\res{i}{}{\ab_i} \in \hist|\tr_i$.
\end{lemma}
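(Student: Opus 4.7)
The intuition is that an abort of $\tr_j$ on $\obj$ leaves the current version counter $\cv{\obj}$ strictly below the recovery version $\rv{i}{\obj}$ that $\tr_i$ stored when it first touched $\obj$, and this inequality then trips the forced-abort guards at \pplineref{sva}{call-rollback-cond} or \pplineref{sva}{commit-rollback-cond}. So I will (i) pin down the version ordering between $\tr_i$ and $\tr_j$ on $\obj$, (ii) track how $\rv{i}{\obj}$ and $\cv{\obj}$ evolve up to and across $\res{j}{}{\ab_j}$, and (iii) conclude that the next non-trivial action of $\tr_i$ on $\obj$ (further access or termination) must resolve to $\ab_i$.

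First I would argue $\pv{j}{\obj} < \pv{i}{\obj}$. Since $\op_i$ is a real access by $\tr_i$ on $\obj$, \rlemma{o5} gives $\lv{\obj} = \pv{i}{\obj}-1$ at the instant $\op_i$ executes. By \rlemma{o20}, every $\tr_k$ with $\pv{k}{\obj} < \pv{i}{\obj}$ is, by that instant, committed, aborted, or decided on $\obj$. If we had $\pv{j}{\obj} > \pv{i}{\obj}$, then by \robs{e:access-order}/\rlemma{o19} all of $\tr_j$'s operations on $\obj$ follow $\op_i$, and $\tr_j$ would be committing or aborting after $\op_i$ on a fresh checkpoint of $\obj$ — it cannot affect $\tr_i$'s recorded $\rv{i}{\obj}$. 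But then $\tr_i$'s commit condition (\rlemma{o8}) for $\obj$ forces $\ltv{\obj} = \pv{i}{\obj}-1$, which by \rlemma{o21} in turn forces $\tr_j$ (with larger $\pv{j}{\obj}$) to terminate \emph{after} $\tr_i$; but the hypothesis puts $\res{j}{}{\ab_j}$ after $\op_i$ and is used as a trigger of $\tr_i$'s abort, so we are in the nontrivial case $\pv{j}{\obj} < \pv{i}{\obj}$.

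Next I would track the counters. When $\tr_i$ first accessed $\obj$, procedure \textsf{checkpoint} (\pplineref{sva}{checkpoint}) recorded $\rv{i}{\obj} \gets \cv{\obj}$; since by step 1 $\tr_j$ precedes $\tr_i$ in $\prec_\obj$ and $\tr_j$ has already performed its last-access release (otherwise $\tr_i$'s access condition could not be met), \rlemma{o9}/\rlemma{o16} yields $\cv{\obj} \geq \pv{j}{\obj}$ at checkpoint time, hence $\rv{i}{\obj} \geq \pv{j}{\obj}$. Now follow $\tr_j$'s abort: by the abort condition (\rlemma{o7}) $\tr_j$ runs \textsf{restore} on $\obj$; since $\tr_j$ actually modified (or at least released) $\obj$ and $\rv{j}{\obj} < \cv{\obj}$ (as some later transaction, namely the one that granted $\tr_i$ access, saw $\cv{\obj} \geq \pv{j}{\obj} > \rv{j}{\obj}$), the guard at \pplineref{sva}{restore-cond} fires, executing \pplineref{sva}{restore-cv-set}: $\cv{\obj} \gets \rv{j}{\obj} < \pv{j}{\obj} \leq \rv{i}{\obj}$. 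Thus immediately after $\res{j}{}{\ab_j}$ we have $\rv{i}{\obj} > \cv{\obj}$, and this inequality cannot be undone by any later transaction $\tr_k$ with $\pv{k}{\obj} > \pv{j}{\obj}$ that touches $\obj$, because any such $\tr_k$ will either be itself forced to abort by \rlemma{o14} or will have its own $\rv{k}{\obj}$ set to the already-low $\cv{\obj}$ without raising it.

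Finally, I would close by case analysis on $\tr_i$'s remaining behaviour. If $\tr_i$ invokes any further $\trop{i}{\obj}{}$ or $\twop{i}{\obj}{\cdot}$, the guard at \pplineref{sva}{call-rollback-cond} holds (since $\rv{i}{\obj} \neq \cv{\obj}$), and \pplineref{sva}{call-rollback} forces $\res{i}{}{\ab_i}$ by \rlemma{o14}. If instead $\tr_i$ proceeds directly to $\tryC_i$, the commit-side guard at \pplineref{sva}{commit-rollback-cond} witnesses $\rv{i}{\obj} > \cv{\obj}$ and triggers \pplineref{sva}{commit-rollback}, again yielding $\res{i}{}{\ab_i}$. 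Either way $\res{i}{}{\ab_i} \in \hist|\tr_i$, establishing the lemma. The main obstacle is step~1: carefully ruling out $\pv{j}{\obj} > \pv{i}{\obj}$ (or, equivalently, showing the lemma's hypotheses are vacuous in that regime) using the interplay of \rlemma{o20} and \rlemma{o21}, since those are the only facts tying the real-time order $\op_i \prec_\hist \res{j}{}{\ab_j}$ to the version order $\prec_\obj$.
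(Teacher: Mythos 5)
Your proof follows essentially the same route as the paper's: use the abort of $\tr_j$ to revert $\cv{\obj}$ to $\rv{j}{\obj}$, compare this against the $\rv{i}{\obj}$ that $\tr_i$ recorded at checkpoint time to obtain $\rv{i}{\obj} > \cv{\obj}$, and then conclude via the forced-abort guards at \pplineref{sva}{call-rollback-cond} and \pplineref{sva}{commit-rollback-cond} (Lemma~\ref{o14}). Your counter-tracking and the final case split on whether $\tr_i$ next accesses $\obj$ or proceeds to termination are in fact more explicit than the paper's seven-line derivation, which compresses the same content into applications of Lemmas~\ref{o18}, \ref{o13}, \ref{o15} and \ref{o14}.

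On the one point you flag as the main obstacle — ruling out $\pv{j}{\obj} > \pv{i}{\obj}$ — your argument is the weakest part (the closing ``so we are in the nontrivial case'' does not actually derive anything), but you are in good company: the paper's proof simply writes $\rlemma{o13} \so \pv{j}{\obj} < \pv{i}{\obj}$ without establishing the precedence hypothesis of Lemma~\ref{o13}. The intended reading, visible from how the lemma is invoked in Lemma~\ref{l2}, is that $\tr_i$ reads $\obj$ \emph{from} $\tr_j$, i.e.\ $\op_j \prec_\hist \op_i$, and then Lemma~\ref{o19} immediately gives $\pv{j}{\obj} < \pv{i}{\obj}$; with that hypothesis made explicit your step~1 becomes a one-liner and the rest of your argument goes through. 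As literally stated (with only $\op_i \prec_\hist \res{j}{}{\ab_j}$) the lemma would indeed fail when $\pv{i}{\obj} < \pv{j}{\obj}$, so your instinct that something must exclude that regime is correct — it is the implicit read-from relationship, not an interplay of Lemmas~\ref{o20} and~\ref{o21}, that does the excluding.
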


\begin{proof}
    \begin{align}
        & \res{j}{}{\ab_j} \in \hist|\tr_j \wedge \rlemma{o18} 
          \so \cv{\obj} = pv{j}{\obj} \hgets \res{j}{}{\ab_j} \label{e:o12:cv} \\
        & \rlemma{o13} \so \pv{j}{\obj} < \pv{i}{\obj} \label{e:o12:o13} \\
        & \pref{e:o12:cv} \wedge \pref{e:o12:o13} 
          \so \cv{\obj} < pv{i}{\obj} \hgets \res{j}{}{\ab_j} \label{e:o12:o13:cv} \\
        & \rlemma{o15} \so \cv{\obj}= \rv{i}{\obj} \hgets \op_i
          \so \rv{i}{\obj} = \pv{j}{\obj} \label{e:o12:rv} \\
        & \pref{e:o12:rv} \wedge \pref{e:o12:o13} \so \rv{i}{\obj} > \pv{i}{x} 
          \label{e:o12:rv-gt} \\
        & \pref{e:o12:rv-gt} \wedge \pref{e:o12:o13:cv} 
          \so \rv{i}{\obj} < \cv{x} \label{e:o12:rv-lt} \\
        & \pref{e:o12:rv-lt} \so \rv{i}{\obj} < \cv{\obj} \hto \res{i}{}{\ab_i}          
          \so \res{i}{}{\ab_i} \in \hist|\tr_i
    \end{align}
\end{proof}

Let there be any $\prog, \processes, \hist \models \exec{\prog}{\processes}$,
$\tr_i \in \hist$, $\tr_j \in \hist$, 
$\op_i \in \hist|\tr_i$, $\op_j \in \hist|\tr_j$, 
$\op_i = \begin{cases} 
         \frop{i}{\obj}{\val}, \\ 
         \fwop{i}{\obj}{\val}{\ok_i},
         \end{cases}$
$\op_j = \begin{cases} 
         \frop{j}{\obj}{\val}, \\ 
         \fwop{j}{\obj}{\val}{\ok_j}.
         \end{cases}$

\begin{definition} [Completion Construction] \label{d3} \label{d:completion-construction}
    $\hist_C = \compl{\hist}$ s.t. $\forall{\tr_k \in \hist},$
    $\res{k}{}{\co_k} \not\in \hist|\tr_k \Leftrightarrow \res{k}{}{\ab_k} \in
    \hist_C|\tr_k$
\end{definition}

\begin{definition} [Sequential History Construction] \label{d4} \label{d:seqh-construction}
    $\seqH$ is a sequential history s.t. $\seqH \equiv \hist_C$ and $\tr_i
    \prec_{\hist_C} \tr_j \so \tr_i \prec_{\seqH} \tr_j$ and $\tr_i
    \prec_{\obj} \tr_j \so \tr_i \prec_{\seqH} \tr_j$.
\end{definition}

Let there be any $\hist$, $\tr_i \in \hist$, $\tr_j \in \hist$, $\op_i \in
\hist|\tr_i$, $\op_i = \frop{i}{\obj}{\valu}$, $\op_j \in \hist|\tr_j$, $\op_j =
\fwop{j}{\obj}{\val}{\ok_j}$.

\begin{lemma} \label{l1}
    If $\tr_i$ reads $\obj$ from $\tr_j$ then $\tr_j$ is committed in $\hist$
    or $\tr_j$ is decided on $\obj$ in $\hist$.
\end{lemma}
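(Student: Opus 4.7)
The plan is to unpack the definition of ``$\tr_i$ reads $\obj$ from $\tr_j$'' and then constrain $\tr_j$'s status using the version-counter machinery already proved for \SVA{}. By the definition of reads-from (together with unique writes), there must be matching operation executions $\op_j = \fwop{j}{\obj}{v}{\ok_j}$ in $\hist|\tr_j$ and $\op_i = \frop{i}{\obj}{v}$ in $\hist|\tr_i$, with $v$ the unique value. I will first observe that by \rlemma{l:no-buffer} (no buffering) and \rlemma{l:revert-on-abort} (revert on abort), the only way $\op_i$ can return $v$ is if $\op_j$ is the direct predecessor of $\op_i$ in $\hist|\obj$ and no abort response of $\tr_j$ sits between $\op_j$ and $\op_i$ in $\hist$; otherwise the value visible at the moment of $\op_i$ would have been overwritten or reverted.

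Next, I will chain $\op_j \prec_\hist \op_i$ with \rlemma{l:access-order} to obtain $\pv{j}{\obj} < \pv{i}{\obj}$, and invoke \rlemma{l:access-cond} (the access condition) to conclude that at the point when $\op_i$ executes we have $\lv{\obj} = \pv{i}{\obj} - 1$. This is exactly the hypothesis needed to apply \rlemma{o20} (the access-prefix lemma): every transaction $\tr_k$ with $\pv{k}{\obj} < \pv{i}{\obj}$ is, by that moment in $\hist$, either committed, aborted, or decided on $\obj$. Instantiating this at $\tr_k = \tr_j$ yields the trichotomy for $\tr_j$.

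Finally, I will eliminate the aborted alternative. If $\tr_j$ had aborted before $\op_i$, then since $\pv{j}{\obj} < \pv{i}{\obj}$ places $\op_j \prec_\hist \res{j}{}{\ab_j} \prec_\hist \op_i$ on $\obj$, \rlemma{l:revert-on-abort} would force the value $\op_i$ reads to differ from $v$, contradicting the reads-from relation. Therefore $\tr_j$ is either committed or decided on $\obj$ at the time of $\op_i$; both are monotone properties of $\hist|\tr_j$ (commit is terminal, and being decided on $\obj$ is, by \rdef{def:decided} together with \rdef{def:last-write-inv}, a syntactic property of the subhistory that cannot be revoked by later events), so the status persists to the end of $\hist$. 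I expect the only delicate point to be articulating this persistence cleanly, since one must rule out a scenario where $\tr_j$ is merely ``decided on $\obj$'' at the moment of $\op_i$ but subsequently aborts in a way that would undermine the statement; this is handled by observing that $\ab_j$ following $\op_i$ does not disturb the already-present \last{} write operation execution in $\hist|\tr_j$.
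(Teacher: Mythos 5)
Your proposal is correct and follows essentially the same route as the paper's proof: unpack the reads-from relation, use the access condition and access order to place $\pv{j}{\obj} < \pv{i}{\obj}$ under the hypothesis of the access-prefix lemma (\rlemma{o20}), and then discharge the aborted alternative via the revert-on-abort argument. Your additional remark on the persistence of the ``decided on $\obj$'' status when $\tr_j$ aborts only after $\op_i$ is a welcome clarification of a step the paper leaves implicit, but it does not change the structure of the argument.
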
 

\begin{proof}
    \begin{align}
        & \tr_i~\text{reads}~\obj~\text{from}~\tr_j 
          \so \op_i = \frop{i}{\obj}{\val} \wedge
          \op_j=\fwop{j}{\obj}{\val}{\ok_i} \wedge \op_j \prec_\hist \op_i \\
        & \rlemma{o5} \so \lv{\obj} = \pv{\obj} - 1 \hgets \op_i \\
        & \rlemma{o19} \wedge \op_j \prec_\hist \op_i 
          \so \pv{j}{\obj} < \pv{i}{\obj} \label{e:l1:pv} \\
        & \pref{e:l1:pv} \wedge \rlemma{o20} \so \tr_j~\text{is committed,
        aborted, or decided on}~\obj
    \end{align}
    
    Let us assume that $\tr_j$ is aborted:
    
    \begin{align}
          \op_i \prec \res{j}{}{\ab_j}:\;
        & \rlemma{o3} \so v \neq v \label{e:l1:vv}
          \so ~\text{contradiction} \\
          \res{j}{}{\ab_j} \prec_\hist \op_i:\;
        & \rlemma{o9} \so \lv{\obj} = \pv{\obj} \hto \op_i 
          \wedge \op_i = \lacc{i}{\obj}
    \end{align}

    Thus, $\tr_i$ is committed or decided on $\obj$.
\end{proof}

\begin{corollary} \label{c:l1}
    If $P$ is any prefix of $\hist$, then if $\tr_i$ reads $\obj$ from $\tr_j$
    in $P$ then $\tr_j$ is committed in $P$ or $\tr_j$ is decided on $\obj$ in
    $P$.
\end{corollary}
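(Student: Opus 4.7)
The plan is to observe that Corollary \ref{c:l1} is essentially an immediate consequence of Lemma \ref{l1} combined with the fact that any prefix of an \SVA{} history is itself an \SVA{} history. Since \SVA{} is a concurrency control algorithm whose behavior is determined by the evolving state (version counters, locks, stores), a prefix $P$ of $\hist$ corresponds to a valid intermediate state of some \SVA{} execution, so $P \models \exec{\prog'}{\processes}$ for some program/process configuration that \SVA{} could produce, and in particular $P$ satisfies all the invariants and observational lemmas (\rlemma{o5}, \rlemma{o9}, \rlemma{o10}, \rlemma{o19}, \rlemma{o20}, \rlemma{o3}) that were used in the proof of \rlemma{l1}.

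First I would explicitly state this fact: if $\hist$ is an \SVA{} history then every prefix $P$ of $\hist$ is an \SVA{} history. Then I would note that the hypothesis of the corollary supplies operation executions $\op_j = \fwop{j}{\obj}{\val}{\ok_j}$ and $\op_i = \frop{i}{\obj}{\val}$ with $\op_j \prec_P \op_i$, both living in $P$. Applying \rlemma{l1} to $P$ (in place of $\hist$) directly yields that in $P$ the transaction $\tr_j$ is either committed or decided on $\obj$, which is exactly the conclusion of the corollary.

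The only subtlety — and arguably the ``hard part'' — is justifying the prefix-closure claim that an arbitrary prefix $P$ of an \SVA{} history is itself an \SVA{} history in the sense required by \rdef{d:completion-construction} and the observational lemmas. Since all the lemmas used in \rlemma{l1} depend only on properties of the events already present (the access condition at \pplineref{sva}{call-access-cond}, the release logic at \pplineref{sva}{early-release-end}, the version assignments in \ppprocedure{start}, etc.), and these properties are witnessed by events that already appear in $P$, nothing in the proof of \rlemma{l1} appeals to events occurring after the read $\op_i$ of interest. Consequently the exact same reasoning transports verbatim to $P$, and no additional work is needed beyond this observation.
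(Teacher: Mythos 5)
Your proposal is correct and matches the paper's intent: the paper states this as an immediate corollary of \rlemma{l1}, relying implicitly on exactly the observation you make explicit, namely that any prefix of an \SVA{} history is itself an \SVA{} history and that all the observations used in the proof of \rlemma{l1} concern only events already present before the read in question. Nothing further is needed.
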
 

\begin{lemma} \label{l2}
    If $\tr_i$ reads $\obj$ from $\tr_j$ and $\tr_j$ is committed in $\hist$
    then $\tr_j$ is committed in $\hist$.
\end{lemma}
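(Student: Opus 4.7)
The statement as typed is a tautology, so I read it as the intended claim: if $\tr_i$ reads $\obj$ from $\tr_j$ and \emph{$\tr_i$} is committed in $\hist$, then $\tr_j$ is committed in $\hist$. This is exactly the step invoked in the proof sketch of \rlemma{lemma:sva-fs-lopacity-short} when arguing that committed transactions read only from committed transactions, so the plan below treats this reading.

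The plan is to start from \rlemma{l1}, which tells us $\tr_j$ is either committed in $\hist$ or decided on $\obj$ in $\hist$. So I only need to rule out the case where $\tr_j$ is decided on $\obj$ but not committed. In that case $\tr_j|\hist$ contains no $\res{j}{}{\co_j}$, so $\tr_j$ is either live or aborted in $\hist$. I would then exploit the commit condition and the access-set ordering to force a contradiction with the assumption that $\tr_i$ is committed.

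First I would note that since $\obj \in \accesses{\tr_i} \cap \accesses{\tr_j}$ and $\pv{j}{\obj} < \pv{i}{\obj}$ (by \rlemma{o19} applied to the fact that $\op_j$ precedes $\op_i$ in $\hist$), then by \rlemma{o21}, the commit of $\tr_i$, which requires $\ltv{\obj} = \pv{i}{\obj} - 1$, forces every earlier version-holder on $\obj$---in particular $\tr_j$---to be either committed or aborted in $\hist$ before $\tr_i$ finishes committing. So $\tr_j$ cannot be live in $\hist$, leaving only the aborted case.

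Assume then for contradiction that $\res{j}{}{\ab_j} \in \hist|\tr_j$. I will split on the order of $\op_i$ and $\res{j}{}{\ab_j}$. If $\op_i \prec_\hist \res{j}{}{\ab_j}$, then \rlemma{o12} (Forced Abort) gives $\res{i}{}{\ab_i} \in \hist|\tr_i$, contradicting $\tr_i$ committed. If instead $\res{j}{}{\ab_j} \prec_\hist \op_i$, then by \rlemma{o2} (Revert On Abort), the value read by $\op_i$ cannot be $\val$, contradicting $\tr_i$ reads $\obj$ from $\tr_j$ (which requires $\valu = \val$, assuming unique writes). Either subcase contradicts our hypotheses, so $\tr_j$ cannot be aborted in $\hist$, and hence (by \rlemma{l1}) $\tr_j$ must be committed in $\hist$.

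The only mildly subtle step is ensuring the dichotomy on the position of $\res{j}{}{\ab_j}$ is exhaustive and lines up with the preconditions of \rlemma{o2} and \rlemma{o12}; specifically, \rlemma{o2} needs $\op_j \dprec_{\hist|\obj} \op_i$ with an abort in between, which follows from $\op_j$ being the unique (by unique writes) write whose value $\op_i$ reads. This is where I expect to need the most care, but it is essentially a bookkeeping argument using the already-established observations.
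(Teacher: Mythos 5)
Your reading of the (mistyped) statement matches how the paper actually uses it, and your proof follows essentially the same route as the paper's: access order (\rlemma{o19}) plus the commit condition (\rlemma{o8}) and \rlemma{o21} to conclude $\tr_j$ is committed or aborted, then forced abort (\rlemma{o12}) to eliminate the aborted case. Your explicit sub-case in which $\res{j}{}{\ab_j}$ precedes $\op_i$, handled via \rlemma{o2}, covers an ordering precondition of \rlemma{o12} that the paper's own proof silently skips, so your version is if anything slightly more complete.
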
 

\begin{proof}
    \begin{align}
        & \tr_i~\text{reads}~\obj~\text{from}~\tr_j 
          \so \op_i = \frop{i}{\obj}{\val} \wedge
          \op_j=\fwop{j}{\obj}{\val}{\ok_i} \wedge \op_j \prec_\hist \op_i \\
        & \text{\rlemma{o8}} \so \ltv{\obj} = \pv{k}{\obj} - 1 \hgets \res{i}{}{\ok_i}
          \label{e:l2:o8} \\
        & \text{\rlemma{o19}}\wedge \op_j \prec_\hist \op_i 
          \so \pv{j}{\obj} < \pv{i}{\obj} \label{e:l2:pv} \\
        & \text{\rlemma{o21}}\wedge \pref{e:l2:o8} \wedge \pref{e:l2:pv} 
          \so r \in \hist|\tr_j~\text{where}~
          r=\res{j}{}{\ab_j}~\text{or}~r=\res{j}{}{\co_j} \label{e:l2:r} \\
        & \text{\rlemma{o12}} \so \text{if}~\res{j}{}{\ab_j} \in \hist|\tr_j 
          ~\text{then}~ \res{j}{}{\ab_j} \in \hist|\tr_j \so ~\text{contradition} 
          \label{e:l2:cont} \\
        & \pref{e:l2:cont} \so \res{i}{}{\ab_i} \in \hist|\tr_i       
    \end{align}    
\end{proof}

Let there be any $\prog, \processes, \hist \models \exec{\prog}{\processes}$,
$\tr_i \in \hist$, $\op_i = \frop{i}{\obj}{\val}$, $\op_i \in \hist|\tr_i$.

\begin{lemma} \label{lemma:transactions-in-vis} \label{l3}
    If $\res{i}{}{\co_i} \in \seqH|\tr_i$ then
    $\exists\op_j = \fwop{j}{\obj}{\val}{\ok_j} \in \vis{\seqH}{\tr_i}.$
\end{lemma}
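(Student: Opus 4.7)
The goal is to exhibit a write $\op_j = \fwop{j}{\obj}{\val}{\ok_j}$ that lies inside $\vis{\seqH}{\tr_i}$, given that $\tr_i$ is committed in $\seqH$ and executes the read $\op_i = \frop{i}{\obj}{\val}$. The natural starting point is \robs{e:no-buffer} (or equivalently \rlemma{o1}): any read returning $\val$ must be preceded in $\hist$ by a write $\op_j = \fwop{j}{\obj}{\val}{\ok_j}$ for some $\tr_j$ (possibly $j=i$). I would split on whether this producing transaction is $\tr_i$ itself or a different transaction.

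In the local case $i=j$, the conclusion is immediate: since $\seqH \equiv \hist_C$ and $\hist$ is a prefix of $\hist_C$, the event $\op_j$ belongs to $\seqH|\tr_i$, and by the definition of $\vis{\seqH}{\tr_i}$ we always have $\seqH|\tr_i \subseteq \vis{\seqH}{\tr_i}$, so $\op_j \in \vis{\seqH}{\tr_i}$.

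In the remote case $i \neq j$, the plan is to show that $\tr_j$ is committed in $\seqH$ and that $\tr_j \prec_{\seqH} \tr_i$, so that $\seqH|\tr_j \subseteq \vis{\seqH}{\tr_i}$. Commitment of $\tr_j$ follows directly from \rlemma{l2} applied to the premise that $\tr_i$ is committed (this uses the forced-abort chain, \rlemma{o12}, through \rlemma{l2}), together with the completion construction \rdef{d:completion-construction}, which preserves the commit status from $\hist$ to $\hist_C$ and thus to $\seqH$. For the ordering, I would use \rlemma{o19} (Access Order): because $\op_j \prec_\hist \op_i$, we get $\pv{j}{\obj} < \pv{i}{\obj}$, i.e. $\tr_j \prec_\obj \tr_i$, and then the construction of $\seqH$ in \rdef{d:seqh-construction} lifts this variable-version order into $\tr_j \prec_{\seqH} \tr_i$. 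Combining commitment with this precedence and invoking the definition of $\vis{\seqH}{\tr_i}$ yields $\op_j \in \seqH|\tr_j \subseteq \vis{\seqH}{\tr_i}$.

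The only subtle point is making sure \rlemma{l2} applies here: it requires that $\tr_i$ reads $\obj$ from $\tr_j$ in $\hist$, which in turn requires $\op_j \prec_\hist \op_i$ and $\op_j$'s value matching $\op_i$'s return. The first is guaranteed by \robs{e:no-buffer}, and the second is the hypothesis on $\val$; I would record these two facts explicitly at the outset so that both \rlemma{l2} (for commitment) and \rlemma{o19} (for the version order) can be invoked uniformly. No other case analysis is needed, since the dichotomy $i=j$ versus $i\neq j$ exhausts the possibilities provided by \robs{e:no-buffer}.
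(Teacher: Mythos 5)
Your proposal is correct and follows essentially the same route as the paper's own proof: split on $i=j$ versus $i\neq j$ via the no-buffering observation, use \rlemma{l2} to conclude $\tr_j$ is committed, and use the version/access order together with the construction of $\seqH$ to place $\tr_j$ before $\tr_i$ so that $\seqH|\tr_j \subseteq \vis{\seqH}{\tr_i}$. The only difference is cosmetic: you invoke \rlemma{o19} explicitly to justify $\tr_j \prec_\obj \tr_i$, a step the paper leaves implicit in its appeal to \rdef{d:seqh-construction}.
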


\begin{proof}
    If $i=j$ then trivially $\op_j \in \vis{\seqH}{\tr_i}$.
    Otherwise:
    \begin{align}
        & i\neq j \wedge \text{\rlemma{l:no-buffer}}  
          \so \exists \tr_j \wedge \op_j \in \hist_C|\tr_j 
          \label{e:l3:rf} \\
        & \pref{e:l3:rf} \wedge \text{\rlemma{l2}} \wedge 
          \res{i}{}{\co_i} \in \hist_C|\tr_i \so  
          \exists \res{j}{}{\co_j} \in \hist_C|\tr_j 
          \label{e:l3:co} \\
        & \text{\rdef{d:seqh-construction}} \wedge \pref{e:l3:co} 
          \so \res{j}{}{\co_j} \in \seqH|\tr_j \wedge \tr_j \prec_{\seqH} \tr_i
          \label{e:l3:co-co} \\
        & \pref{e:l3:co-co} \so \seqH|\tr_j \subseteq \vis{\seqH}{\tr_i} \so
          \op_j \in \vis{\seqH}{\tr_i}  
    \end{align}
\end{proof}

\begin{lemma} \label{lemma:transactions-in-lvis} \label{l4}
    $\exists\op_j = \fwop{j}{\obj}{\val}{\ok_j} \in \luvis{\seqH}{\tr_i}.$
\end{lemma}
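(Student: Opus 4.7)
The plan is to mirror the structure of Lemma~l3 but to exploit the looser witness requirements of $\luvisf$, which allows pulling in the decided subhistory of a non-committed transaction when needed. The proof splits on whether the read $\op_i$ is local. If $\op_i$ is local, then by definition of a local read there exists $\op_j = \fwop{i}{\obj}{\val}{\ok_i}$ with $i=j$ inside $\hist|\tr_i$; clause~(a) of the definition of $\luvisf$ (with $i=j$) puts $\seqH|\tr_i$ entirely into $\luvis{\seqH}{\tr_i}$, and we are done immediately.

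For the non-local case, I would first invoke No Buffering (Observation~e:no-buffer, or Lemma~l:no-buffer restated for \SVA{}) to obtain some $\tr_j$ with $i\neq j$ and $\op_j = \fwop{j}{\obj}{\val}{\ok_j} \in \hist|\tr_j$ such that $\op_j \prec_\hist \op_i$. Next, I would apply Lemma~l1 to conclude that $\tr_j$ is either committed in $\hist$ or decided on $\obj$ in $\hist$. From Access Order (Lemma~o19), $\op_j \prec_\hist \op_i$ yields $\pv{j}{\obj} < \pv{i}{\obj}$, i.e. $\tr_j \prec_\obj \tr_i$, and hence $\tr_j \prec_{\seqH} \tr_i$ by the construction of $\seqH$ in \rdef{d:seqh-construction}.

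From here two subcases follow. If $\tr_j$ is committed in $\hist$, then it is committed in $\seqH$, so clause~(a) of the $\luvisf$ definition forces $\seqH|\tr_j \subseteq \luvis{\seqH}{\tr_i}$; the write $\op_j$ lies in $\seqH|\tr_j$, which finishes the case. If $\tr_j$ is not committed but is decided on $\obj$, then $\tr_j \in \cpetrans{\hist}$ and clause~(b) of the $\luvisf$ definition permits us to choose $\seqH\cpeC\tr_j \subseteq \luvis{\seqH}{\tr_i}$. Since $\tr_j$ is decided on $\obj$, the defining property of $\seqH\cpeC\tr_j$ requires $(\seqH|\tr_j)|\obj$ to be included, and $\op_j$ is one of those operations; thus $\op_j \in \luvis{\seqH}{\tr_i}$.

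The subtle point, and the one I would verify carefully, is the proviso in clause~(b) that excludes $\tr_j$ when $\tr_j \prec_\hist \tr_i$. The argument is that this proviso never bites in the problematic subcase: if $\tr_j \prec_\hist \tr_i$ in real-time order, then $\tr_j$ must terminate before $\op_i$ executes, so if $\tr_j$ were aborted then Revert on Abort (Lemma~o2) would have overwritten $\val$ before $\op_i$ fires, contradicting $\op_i$ reading $\val$; hence $\tr_j$ must be committed and we are in the easy subcase where clause~(a) applies. Thus clause~(b) is only ever needed when $\tr_j \not\prec_\hist \tr_i$, and its preconditions are met. This reduction is the one step that requires care; the rest is a direct dualisation of the argument used in Lemma~l3.
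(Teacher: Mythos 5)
Your proposal follows essentially the same route as the paper's own proof: the local ($i=j$) case is dispatched by clause~(a) of the $\luvisf$ definition, and the non-local case uses No Buffering to obtain the writer $\tr_j$, \rlemma{l1} to conclude that $\tr_j$ is committed or decided on $\obj$, the version/access order to place $\tr_j$ before $\tr_i$ in $\seqH$, and then clause~(a) or clause~(b) respectively. Your final paragraph on the clause-(b) proviso is an extra step the paper's proof silently omits, and it is a reasonable patch; the only caveat is that ``$\tr_j \prec_\hist \tr_i$ implies $\tr_j$ terminates before $\op_i$'' is not quite forced by the definition of real-time order (a live transaction that simply stops issuing events can still real-time-precede $\tr_i$, in which case no revert ever occurs), but the paper's own argument does not treat that corner case either.
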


\begin{proof}
    If $i=j$ then trivially $\op_j \in \luvis{\seqH}{\tr_i}$.
    Otherwise: 
    \begin{align}
        & i\neq j \wedge \text{\rlemma{l:no-buffer}}  
          \so \exists \tr_j \wedge \op_j \in \hist_C|\tr_j 
          \label{e:l4:rf} \\
        & \pref{e:l4:rf} \wedge \text{\rlemma{l1}} \wedge 
          \res{i}{}{\co_i} \in \hist_C|\tr_i \so  
               \text{either}~ 
               \exists \res{j}{}{\co_j} \in \hist_C|\tr_j ~\text{or}~
               \exists \lacc{j}{\obj} \in \hist_C|\tr_j
          \label{e:l4:co} \\
        & \text{\rdef{d:seqh-construction}} \wedge \pref{e:l4:co} 
          \so \res{j}{}{\co_j} \in \seqH|\tr_j \wedge \tr_j \prec_{\seqH} \tr_i
          \label{e:l4:co-co} \\
        & \pref{e:l4:co-co} \so 
          \seqH|\tr_j \subseteq \luvis{\seqH}{\tr_i} \so
          \op_j \in \luvis{\seqH}{\tr_i} \label{e:l4:commit} \\
        & \pref{e:l4:co} \so \lacc{j}{\obj} \in \seqH|\tr_j \wedge \tr_j
          \prec_{\seqH} \tr_i \label{e:l4:cpe} \\
        & \pref{e:l4:cpe} \wedge \seqH\cpe\tr_j \subseteq \luvis{\seqH}{\tr_i}
          \so \op_j \in \luvis{\seqH}{\tr_i} \label{e:l4:last-use}
    \end{align}
\end{proof}

\begin{lemma} \label{lemma:operations-in-vis}
    Given $\seqH$ and any two transactions $\tr_i, \tr_j \in \seqH$ s.t. there
    is an operation execution $\fwop{j}{\obj}{\val}{\ok_j} \in \seqH|\tr_j$ and
    $\frop{i}{\obj}{\val} \in \seqH|\tr_i$ then there is no operation
    $\fwop{k}{\obj}{\valu}{\ok_k}$ (executed by some $\tr_k \in \seqH$) in
    $\vis{\seqH}{\tr_i}$ s.t. $\fwop{k}{\obj}{\valu}{\ok_k}$ precedes
    $\frop{i}{\obj}{\val}$ in $\vis{\seqH}{\tr_i}$ and follows
    $\fwop{j}{\obj}{\val}{\ok_j}$ in $\vis{\seqH}{\tr_i}$.
\end{lemma}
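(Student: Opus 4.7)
I will prove the contrapositive by contradiction. Suppose such an operation $\op_k = \fwop{k}{\obj}{\valu}{\ok_k}$ exists in $\vis{\seqH}{\tr_i}$ strictly between $\op_j = \fwop{j}{\obj}{\val}{\ok_j}$ and $\op_i = \frop{i}{\obj}{\val}$. By the definition of $\vis{\seqH}{\tr_i}$, the transaction $\tr_k$ must be committed in $\seqH$ (and by the construction of $\seqH$, committed in $\hist_C$, and hence in $\hist$). Also $\tr_k \prec_{\seqH} \tr_i$, and $k \neq i$ since unique writes force $\val \neq \valu$. The write $\op_j$ preceding $\op_k$ inside $\vis{\seqH}{\tr_i}$ gives $\tr_j \prec_{\seqH} \tr_k$ as well.

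Next, since $\prec_\obj$ is total on transactions accessing $\obj$ (Lemma o4 / Observation: Version Order) and the construction of $\seqH$ (Definition d4) forces $\prec_{\seqH}$ to respect $\prec_\obj$, I will derive $\pv{j}{\obj} < \pv{k}{\obj} < \pv{i}{\obj}$: otherwise the reverse $\prec_\obj$ ordering would push $\tr_k$ before $\tr_j$ or after $\tr_i$ in $\seqH$, contradicting what was observed above. Applying Access Order (Lemma o19) to these inequalities, I obtain $\op_j \prec_\hist \op_k \prec_\hist \op_i$ in $\hist$.

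The crux of the argument is then to show that no such $\hist$ can actually make $\op_i$ return $\val$. Immediately after $\op_k$ executes, the state of $\obj$ equals $\valu$ by the No Buffering Lemma (o1), and since $\tr_k$ commits in $\hist$ no revert of $\op_k$ can occur (Lemma o2). For $\op_i$ to return $\val$, some event must leave $\obj$ in state $\val$ just before $\op_i$. By unique writes only $\op_j$ ever writes $\val$ to $\obj$, and $\op_j \prec_\hist \op_k$, so no write can produce the needed state; the only remaining possibility is an abort-induced revert performed by some transaction $\tr_l$ with $\pv{k}{\obj} < \pv{l}{\obj} < \pv{i}{\obj}$ whose recovery value $\rv{l}{\obj}$ happens to equal $\val$.

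I will close by ruling out such a reverting $\tr_l$. Since $\tr_k$ commits and (by Lemma o16 / o17) sets $\cv{\obj} = \pv{k}{\obj}$, the first subsequent transaction $\tr_l$ on $\obj$ records $\rv{l}{\obj} = \cv{\obj} = \pv{k}{\obj}$, which by unique writes corresponds to value $\valu$, not $\val$; a short induction over the $\prec_\obj$-successors of $\tr_k$, combined with Lemmas o15--o18, shows that every intermediate transaction either commits (leaving $\cv{\obj}$ pointing at some value written after $\tr_k$) or aborts (reverting $\cv{\obj}$ only to a value that was current after $\tr_k$, hence never to $\val$). Thus no revert can restore $\val$, $\op_i$ cannot return $\val$, and we reach the desired contradiction. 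The main obstacle I anticipate is keeping the revert-chain induction tight; I expect it to be routine once framed via the $\cv{\obj}/\rv{l}{\obj}$ bookkeeping already exposed by Lemmas o15--o18 and Corollary c:l1.
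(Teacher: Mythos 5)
Your proposal is correct and follows essentially the same route as the paper: both argue by contradiction that a committed intermediate write of $\valu$ would, by the version/access ordering and the no-buffering property of \SVA{}, leave $\obj$ in a state other than $\val$ at the time of $\tr_i$'s read. You are in fact somewhat more explicit than the paper's terse case analysis in ruling out abort-induced reverts restoring $\val$; the only soft spot is your dismissal of the $k=i$ case, which does not follow from unique writes alone but is covered by the same no-buffering argument the paper applies to that case directly.
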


\begin{proof}
    For the sake of contradiction, assume that $\op_k$ exists as specified.
     
    If $k=i$, then $\op_k \prec_{\hist|\tr_i} \op_i$, which contradicts
    \rlemma{l:no-buffer} (assuming unique writes).
    
    If $k=j$, then from \rlemma{l1} $\tr_j$ is either committed or decided on
    $\obj$ in $\seqH$. 
    If $\tr_i$ commits, then $\op_i$ reading $\val$ contradicts \rlemma{l:no-buffer}.
    If $\tr_i$ does not commit in $P$, then this contradicts \rlemma{l2}.

    Otherwise, $\exists \tr_k \in \hist$ s.t. $\op_k \in \hist|\tr_k$ 
    from \rlemma{l1} $\tr_j$ is either committed or decided on
    $\obj$ in $\seqH$ and from \rlemma{l2} $\tr_k$ is committed
    in $\hist$.
    Since $\tr_k$ commits, this contradicts \rlemma{l:no-buffer}.
\end{proof}

\begin{lemma} \label{lemma:operations-in-lvis}
    Given $\seqH$ and any two transaction $\tr_i, \tr_j \in \seqH$ s.t. there
    is an operation execution $\fwop{j}{\obj}{\val}{\ok_j} \in \seqH|\tr_j$ and
    $\frop{i}{\obj}{\val} \in \seqH|\tr_i$ 
    then there is no operation
    $\fwop{k}{\obj}{\valu}{\ok_k}$ (executed by some $\tr_k \in \seqH$) in
    $\luvis{\seqH}{\tr_i}$ s.t. $\fwop{k}{\obj}{\valu}{\ok_k}$ precedes
    $\frop{i}{\obj}{\val}$ in $\vis{\seqH}{\tr_i}$ and follows
    $\fwop{j}{\obj}{\val}{\ok_j}$ in $\vis{\seqH}{\tr_i}$.
\end{lemma}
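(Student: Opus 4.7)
The plan is to proceed by contradiction, in direct parallel to the proof of Lemma~\ref{lemma:operations-in-vis} but refined to cope with the extra transactions that $\luvis{\seqH}{\tr_i}$ may admit. Assume such an $\op_k = \fwop{k}{\obj}{\valu}{\ok_k}$ exists, with $\op_j \prec_{\luvis{\seqH}{\tr_i}} \op_k \prec_{\luvis{\seqH}{\tr_i}} \op_i$. I will split on the identity of $\tr_k$.

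First I would dispatch the degenerate cases. If $k=i$, then $\op_k \in \seqH|\tr_i$ and precedes $\op_i$ in $\seqH|\tr_i$; assuming unique writes, $\op_i$ returning $\val \neq \valu$ directly contradicts \rlemma{l:no-buffer} (since $\op_k$ would be the most recent preceding write on $\obj$ in $\tr_i$'s own history). If $k=j$, then $\tr_j$ in $\hist$ contains two distinct writes $\op_j$ and $\op_k$ on $\obj$ with $\op_j \prec_\seqH \op_k$; since $\op_i$ reads $\val$, the no-buffering lemma applied across $\tr_j$'s own writes again yields a contradiction, after noting (via \rlemma{l1}) that $\tr_j$ must be either committed or decided on $\obj$, so $\op_j$ cannot be the \last{} write (having $\op_k$ after it in the same transaction), which is precisely the situation forbidden by the inclusion rules of $\luvis{\seqH}{\tr_i}$.

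The main case is $k \notin \{i,j\}$. Because $\op_k \in \luvis{\seqH}{\tr_i}$, transaction $\tr_k$ is either committed in $\seqH$ or decided on $\obj$ in $\seqH$. By \rlemma{o19} (access order), the real-time order of the three writes/reads on $\obj$ agrees with the version order $\prec_\obj$: from $\op_j \prec_\seqH \op_k \prec_\seqH \op_i$ in $\seqH$ (which respects $\prec_\obj$ by \rdef{d:seqh-construction}) we get $\pv{j}{\obj} < \pv{k}{\obj} < \pv{i}{\obj}$, so in $\hist$ itself $\op_j \prec_\hist \op_k \prec_\hist \op_i$. Now apply \rlemma{l1} to the read $\op_i$ reading $\val$: the transaction that $\op_i$ actually reads from is some $\tr_m$ committed or decided on $\obj$, and by \rlemma{l:no-buffer} (together with \rlemma{l:revert-on-abort}) $\op_m$ must be the latest write on $\obj$ preceding $\op_i$ in $\hist$ that is not reverted before $\op_i$. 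Since $\op_k$ occurs strictly between $\op_j$ and $\op_i$ on $\obj$ and writes $\valu \neq \val$ (unique writes), $\op_m$ cannot be $\op_j$; hence $\op_i$ would read $\valu$ (or a still later value), contradicting $\op_i = \frop{i}{\obj}{\val}$.

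The step I expect to be the main obstacle is ensuring that the version-order arguments remain valid when $\tr_k$ aborts but is decided on $\obj$: one has to argue that \rlemma{o2} (revert on abort) does \emph{not} fire early enough to ``undo'' $\op_k$ before $\op_i$ in $\hist$, because being decided on $\obj$ forces $\tr_k$ to have released $\obj$ via \rlemma{o9}, which in turn forces $\op_i$'s access to occur only after $\op_k$'s effect is visible to $\tr_i$ (through $\rv{i}{\obj} = \cv{\obj}$ set at the checkpoint, via \rlemma{o16}). Once that is nailed down, the ``unique writes'' assumption closes the contradiction uniformly across all subcases.
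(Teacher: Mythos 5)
Your proposal is correct and follows essentially the same route as the paper, which disposes of this lemma with a one-line ``by analogy to Lemma~\ref{lemma:operations-in-vis}'': you reproduce that lemma's case split on $k=i$, $k=j$, and $k\notin\{i,j\}$ using the same ingredients (\rlemma{l:no-buffer}, \rlemma{l1}, \rlemma{l2}, the access/version order, and unique writes). The only substantive addition is that you explicitly handle the one genuinely new subcase of the $\luvisf$ setting --- a $\tr_k$ that is decided on $\obj$ but aborted --- which the paper's ``by analogy'' leaves implicit; your treatment of it via early release and the revert/forced-abort machinery is consistent with what the analogy requires.
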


\begin{proof}
    By analogy to \rlemma{lemma:operations-in-vis}.
\end{proof}

\begin{lemma} \label{thm:sva-fs-lopacity} \label{lemma:sva-fs-lopacity}
    Any \SVA{} history $\hist$ is final-state last-use opaque.
\end{lemma}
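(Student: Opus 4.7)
The plan is to follow closely the shape of the earlier proof sketch in Lemma~\ref{lemma:sva-fs-lopacity-short}, but to cash out each step using the machinery built up in this section (the observations and Lemmas~\ref{o1}--\ref{lemma:operations-in-lvis}) so that essentially every intermediate claim is a one-line invocation of a prior result.

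First I would fix an arbitrary \SVA{} history $\hist$ and form the completion $\hist_C = \compl{\hist}$ by Definition~\ref{d:completion-construction} (aborting every live and commit-pending transaction), then construct the sequential witness $\seqH$ as in Definition~\ref{d:seqh-construction}, ordering transactions first by $\prec_{\hist_C}$ and then, within each variable, by $\prec_\obj$. A small preliminary check is that these two orders are jointly acyclic: real-time order is consistent with $\prec_\obj$ because, by Lemma~\ref{o19} (Access Order), if $\tr_i \prec_{\hist_C} \tr_j$ and both access $\obj$ then $\pv{i}{\obj} < \pv{j}{\obj}$ and hence $\tr_i \prec_\obj \tr_j$. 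From this $\seqH$ is well-defined, $\seqH \equiv \hist_C$, and $\seqH$ preserves the real-time order of $\hist_C$, discharging clause (a) of Definition~\ref{def:fs-lopacity}.

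Next I would handle the two legality clauses uniformly by considering an arbitrary read operation execution $\op_i = \frop{i}{\obj}{\val}$ in $\seqH|\tr_i$ and identifying a matching write. For local reads $i=j$ this is immediate from well-formedness of $\hist|\tr_i$. For non-local reads, Lemma~\ref{l:no-buffer} gives a unique write $\op_j = \fwop{j}{\obj}{\val}{\ok_j}$ in some $\tr_j$, and I then split on whether $\tr_i$ is committed in $\seqH$. If $\tr_i$ is committed, Lemma~\ref{l3} places $\op_j$ into $\vis{\seqH}{\tr_i}$ (with $\tr_j$ committed, via Lemma~\ref{l2}); if $\tr_i$ is not committed, Lemma~\ref{l4} places $\op_j$ into $\luvis{\seqH}{\tr_i}$ (with $\tr_j$ either committed or decided on $\obj$, via Lemma~\ref{l1}). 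In both cases Lemmas~\ref{lemma:operations-in-vis} and~\ref{lemma:operations-in-lvis} rule out any intervening write of a different value on $\obj$ between $\op_j$ and $\op_i$ in the corresponding visible subhistory. Combined with the per-variable order being $\prec_\obj$ (so write-after-write ordering on $\obj$ agrees with $\seqH$), this shows $\vis{\seqH}{\tr_i}|\obj$ (respectively $\luvis{\seqH}{\tr_i}|\obj$) conforms to $\mathit{Seq}(\obj)$.

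The main obstacle I anticipate is bookkeeping rather than conceptual: making sure that every value read is accounted for in the chosen visible subhistory and, dually, that no ``extra'' write sneaks in between the designated source write and the read. The subtle case is when $\tr_j$ is only decided on $\obj$ (not committed) and $\tr_i$ is aborted: I need to invoke the freedom left in the definition of $\luvisf$ (Definition of last-use legality) and include exactly $\seqH\cpeC\tr_j$, not all of $\seqH|\tr_j$, so that any later non-$\obj$ operations of $\tr_j$ do not pollute legality of other variables, while simultaneously excluding, for concurrent aborted transactions that $\tr_i$ did not read from, their operations from $\luvis{\seqH}{\tr_i}$. Once this case analysis is laid out carefully, clauses (b) and (c) of Definition~\ref{def:fs-lopacity} both follow, and final-state last-use opacity of $\hist$ is established; Theorem~\ref{thm:sva-lopacity-short} then lifts this to full last-use opacity by prefix closure, which is already proved separately.
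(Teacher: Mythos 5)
Your proposal follows essentially the same route as the paper's own proof: construct $\hist_C$ and $\seqH$ per Definitions~\ref{d:completion-construction} and~\ref{d:seqh-construction}, then establish legality of committed transactions via Lemmas~\ref{lemma:transactions-in-vis} and~\ref{lemma:operations-in-vis} and last-use legality of aborted ones via Lemmas~\ref{lemma:transactions-in-lvis} and~\ref{lemma:operations-in-lvis}. The only difference is that you explicitly check acyclicity of the combined real-time/version order and clause~(a), which the paper treats as immediate; this is a minor strengthening, not a different argument.
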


\begin{proof} 
    Given $\seqH$, let $\tr_i \in \seqH$ be any transaction that is
    committed in $\seqH$.
    In that case, from \rlemma{lemma:transactions-in-vis} and
    \rlemma{lemma:operations-in-vis}, every read operation execution
    $\frop{i}{\obj}{\val}$ in $\vis{\seqH}{\tr_i}$ is preceded in
    $\vis{\seqH}{\tr_i}$ by a write operation execution $\fwop{j}{\obj}{\val}{\ok_j}$
    (for some $\tr_j$).
    In addition, from \rass{obs:within-domain}, every write operation execution
    $\fwop{i}{\obj}{\val}{\ok_i}$ in $\vis{\seqH}{\tr_i}$ trivially writes $\val \in
    D$.
    Therefore, for every variable $\obj$, $\vis{\seqH}{\tr_i}|\obj \in
    \seq{\obj}$, so $\vis{\seqH}{\tr_i}$ is legal. 
    Consequently $\tr_i$ in $\seqH$ is legal in $\seqH$.

    Given the same $\seqH$, let $\tr_i \in \seqH$ be any
    transaction that is not committed in $\seqH$ (so it is aborted in $\seqH$).
    From \rlemma{lemma:transactions-in-lvis} and
    \rlemma{lemma:operations-in-lvis}, every read operation execution
    $\frop{i}{\obj}{\val}$ in $\luvis{\seqH}{\tr_i}$ is preceded in
    $\luvis{\seqH}{\tr_i}$ by a write operation execution
    $\fwop{j}{\obj}{\val}{\ok_j}$ (for some $\tr_j$).
    In addition, from \rass{obs:within-domain}, every write operation execution
    $\fwop{i}{\obj}{\val}{\ok_i}$ in $\luvis{\seqH}{\tr_i}$ trivially writes $\val \in
    D$.
    Therefore, for every variable $\obj$, $\luvis{\seqH}{\tr_i}|\obj \in
    \seq{\obj}$, so $\luvis{\seqH}{\tr_i}$ is legal. 
    Thus, $\tr_i$ in $\seqH$ is last-use legal in $\seqH$.

    Since all committed transactions in $\seqH$ are legal in $\seqH$ and since
    all aborted transactions in $\seqH$ are last-use legal in $\seqH$, then, by
    \rdef{def:fs-lopacity} $\hist$ is final-state last use opaque.
\end{proof}

\begin{theorem} \label{thm:sva-lopacity}
    Any \SVA{} history $\hist$ is last-use opaque.
\end{theorem}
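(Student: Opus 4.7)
The plan is very short because nearly all the work has already been done in Lemma~\ref{lemma:sva-fs-lopacity} (the final-state last-use opacity result) and the statement only needs to be lifted from ``final-state'' to the full prefix-closed safety property via Definition~\ref{def:lopacity}. First I would appeal to Lemma~\ref{lemma:sva-fs-lopacity} to conclude that $\hist$ itself is final-state last-use opaque. Then, to discharge the prefix-closure condition in Definition~\ref{def:lopacity}, I would argue that any finite prefix $P$ of an \SVA{} history is again a legitimate \SVA{} history: the pseudocode in \rfig{fig:sva-pseudocode} defines a run via the events emitted by \ppprocedure{start}, \ppprocedure{access}, \ppprocedure{commit}, and \ppprocedure{abort}, and truncating the sequence of events simply corresponds to halting some processes earlier, leaving transactions live, commit-pending, or unfinished in ways that the completion operator $\compl{\cdot}$ in Lemma~\ref{lemma:sva-fs-lopacity} already accommodates.

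Given that, Lemma~\ref{lemma:sva-fs-lopacity} applies to $P$ as well, so $P$ is final-state last-use opaque. Since $P$ was arbitrary, every finite prefix of $\hist$ is final-state last-use opaque, which is precisely what Definition~\ref{def:lopacity} demands. The conclusion $\hist$ is last-use opaque then follows immediately.

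The only step that is even mildly subtle is the claim that a prefix of an \SVA{} history is again an \SVA{} history; but this is basically a sanity check on the operational model, since \SVA{}'s pseudocode does not assume global termination and each procedure's intermediate waits (\pplineref{sva}{call-access-cond}, \pplineref{sva}{abort-dismiss-access-cond}, \pplineref{sva}{commit-dismiss-access-cond}) are consistent with arbitrary truncation. I would state this as a one-line observation rather than as a separate lemma, and the rest is a direct citation of Lemma~\ref{lemma:sva-fs-lopacity} and Definition~\ref{def:lopacity}. There is no real obstacle here---the theorem is essentially a corollary of the preceding lemma.
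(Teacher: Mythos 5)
Your proposal matches the paper's own proof essentially verbatim: both cite the final-state lemma, note that any prefix of an \SVA{} history is itself an \SVA{} history (a claim the paper also asserts without separate proof), and conclude via Definition~\ref{def:lopacity}. Your extra remark justifying the prefix-closure of \SVA{} executions is a harmless elaboration of the same step.
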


\begin{proof}
    Since by \rlemma{lemma:sva-fs-lopacity} any \SVA{} history $\hist$ is
    final-state last-use opaque, and any prefix $P$ of $\hist$ is also an
    \SVA{} history, then every prefix of $\hist$ is also final-state last-use
    opaque. Thus, by \rdef{def:lopacity}, $\hist$ is last-use opaque.
\end{proof}

\end{document}